%% file: arxiv.tex
\documentclass[10pt,a4paper]{article}

\usepackage[T1]{fontenc}
\usepackage[utf8]{inputenc}
\usepackage{lmodern}

\usepackage{authblk}
\usepackage{geometry}
\usepackage{amsmath,amssymb,amsthm,mathtools}
\usepackage{amsfonts}
\usepackage{graphicx}
\usepackage{xcolor}
\usepackage{array,booktabs,colortbl,tabularx,adjustbox}
\usepackage{multirow,eepic}
\usepackage{algorithm}
\usepackage{algpseudocode}
\usepackage{xspace}
\usepackage{tikz}
\usetikzlibrary{positioning,automata,arrows,calc,arrows.meta,backgrounds}
\usepackage[hidelinks]{hyperref}
\hypersetup{
  pdftitle={A Minimum Exact Disjunctive Basis for Two-Robot Solvability Profiles},
  pdfauthor={Tsukasa Ninomiya, Shota Naito, Naoki Kitamura, Yuichi Sudo, Koichi Wada}
}
\usepackage{caption}

\definecolor{imposs}{RGB}{200,40,40}
\definecolor{newcell}{RGB}{0,90,180}
\definecolor{lipicsGray}{gray}{0.45}

\newtheorem{theorem}{Theorem}[section]
\newtheorem{lemma}[theorem]{Lemma}
\newtheorem{corollary}[theorem]{Corollary}
\newtheorem{proposition}[theorem]{Proposition}

\theoremstyle{definition}
\newtheorem{definition}[theorem]{Definition}

\theoremstyle{remark}
\newtheorem{remark}[theorem]{Remark}

\theoremstyle{plain}

\makeatletter
\def\senbun#1(#2)#3({\@senbun(#2)(}
\def\@senbun(#1,#2)(#3,#4){%
   \@tempdima#1\p@ \advance\@tempdima#3\p@
   \divide\@tempdima\tw@
   \@tempdimb#2\p@ \advance\@tempdimb#4\p@
   \divide\@tempdimb\tw@
   \edef\@senbun@temp{\noexpand\qbezier(#1,#2)%
      (\strip@pt\@tempdima,\strip@pt\@tempdimb)(#3,#4)}%
  \@senbun@temp}
\makeatother
\newcommand{\ASY}{\textnormal{\textsc{Asynch}}\xspace}
\newcommand{\FSY}{\textnormal{\textsc{Fsynch}}\xspace}
\newcommand{\SSY}{\textnormal{\textsc{Ssynch}}\xspace}
\newcommand{\RSY}{\textnormal{\textsc{Rsynch}}\xspace}

\newcommand{\LU}{\ensuremath{\mathcal{LUMI}}\xspace}
\newcommand{\FS}{\ensuremath{\mathcal{FST\!A}}\xspace}
\newcommand{\FC}{\ensuremath{\mathcal{FCOM}}\xspace}
\newcommand{\OB}{\ensuremath{\mathcal{OBLOT}}\xspace}
\newcommand{\Look}{\ensuremath{\mathit{Look}}\xspace}
\newcommand{\Compute}{\ensuremath{\mathit{Compute}}\xspace}
\newcommand{\Move}{\ensuremath{\mathit{Move}}\xspace}
\newcommand{\LCM}{\ensuremath{\mathit{LCM}}\xspace}

\newcommand{\clipout}[1]{\clip (-100,-100) rectangle (1700,1000) #1;}

\newcommand{\Sol}{\mathsf{Sol}}
\newcommand{\Prof}{\mathsf{Prof}}
\newcommand{\Real}{\mathsf{Real}}
\newcommand{\Up}{\operatorname{Up}}
\newcommand{\Ant}{\operatorname{Ant}}
\newcommand{\Min}{\operatorname{Min}}
\newcommand{\Max}{\operatorname{Max}}
\newcommand{\width}{\operatorname{width}}

\newcommand{\N}{{\rm I\kern-.22em N}} 
\newcommand{\Z}{{\sf Z\kern-.42em Z}} 
\newcommand{\R}{{\rm I\kern-.22em R}}

\newcommand{\LK}{{\mathit{Look}}}
\newcommand{\CP}{{\mathit{Comp}}}
\newcommand{\M}{{\mathit{Move}}}

\newcounter{Codeline}

\newcommand{\lcolor}{\mathit{color}\xspace}

\newcommand{\des}{\mathit{des}\xspace}

\newcommand{\Stop}{\texttt{Stop}}

\newcommand{\MoveSet}[1]{\mathrm{MoveSet}(#1)}
\newcommand{\moves}[1]{\mathrm{moves}(#1)}
\newcommand{\Stopped}[1]{\mathrm{Stopped}(#1)}

\newcommand{\Dyad}{\mathbb{Z}[1/2]}

\usetikzlibrary{backgrounds}
\usepackage{graphicx}
\usepackage{tikz}
\usetikzlibrary{automata,positioning,arrows.meta,calc}
\newcommand{\upa}{\mathord{\uparrow}}

\newcommand{\VSS}{\textnormal{\text{VSS}}\xspace}
\newcommand{\dis}[2]{\mathit{dis}(#1,#2)}

\newcommand{\SRDV}{\textnormal{\text{S-RDV}}\xspace}

\newcommand{\DMSD}{\textnormal{\text{DMSD}}\xspace}
\newcommand{\Sab}{\ensuremath{\Sigma_{ab}}}

\newcommand{\CFSM}{\textnormal{\text{CFSM}}\xspace}

\title{A Minimum Exact Disjunctive Basis for Two-Robot Solvability Profiles}
\date{}

\author[1]{Tsukasa Ninomiya}
\author[1]{Shota Naito}
\author[2]{Naoki Kitamura}
\author[1]{Yuichi Sudo}
\author[1]{Koichi Wada}

\affil[1]{Hosei University, Tokyo, Japan}
\affil[2]{The University of Osaka, Osaka, Japan}

\begin{document}
 
\maketitle

\begin{abstract}
We completely characterize the solvability profiles of tasks for two
autonomous mobile robots in the plane under variable disorientation,
chirality, and rigid movements. The four standard capability models
and the schedulers considered here have a known quotient order
$\mathcal Q$ of thirteen computational classes. A task's
\emph{solvability profile} is the set of classes in which it is
solvable, and is necessarily an upset of $\mathcal Q$. We prove that
every upset is realizable and construct a minimum exact disjunctive
basis of thirteen concrete tasks.

The basis contains one task for each principal upset. For every
antichain of $\mathcal Q$, the disjunction of the corresponding tasks
has profile exactly equal to the union of their profiles. This
exactness is not implied by the individual profiles: an algorithm
may satisfy different disjuncts in different executions. We establish
it by constructing, for every algorithm in each maximal class outside
the target union, a single admissible fair execution violating all
selected predicates simultaneously. Consequently, the realizable
profiles form the entire finite distributive lattice $\Up(\mathcal Q)$:
there are exactly $54$ profiles, counted by the antichain polynomial
$1+13z+24z^2+16z^3$, and each has a canonical disjunctive representation
using at most three principal tasks.

We also identify six tasks whose solvability signatures recover every
inclusion and incomparability in $\mathcal Q$, and prove that six is
minimum even among arbitrary separating tasks. Their profiles are
precisely the principal upsets generated by the join-irreducible
elements of $\mathcal Q$. In contrast, any task family generating
every profile by exact disjunctions must contain all thirteen
principal profiles. Thus our exact disjunctive basis has minimum
size and is unique at the profile level.
\end{abstract}

\section{Introduction}
\label{sec:introduction}

\subsection{Computational landscapes of mobile robots} 
The \Look--\Compute--\Move (\LCM) model is a standard abstraction for
studying distributed computation by autonomous mobile robots. Robots
are anonymous, identical points in the Euclidean plane, each repeatedly
observing the configuration, computing a destination, and moving toward
it. The classical model, $\OB$, provides neither persistent memory nor
explicit communication~\cite{SY}. The luminous model, $\LU$, equips
robots with persistent visible lights, providing both
resources~\cite{DFPSY}. Two intermediate models isolate their roles:
$\FS$ robots observe only their own lights and have finite private
memory, whereas $\FC$ robots observe only other robots' lights and have
finite communication without private memory~\cite{FSW19}.

The activation scheduler supplies a second computational resource.
Under full synchrony, all robots execute a cycle in every round;
under semi-synchrony, an arbitrary nonempty subset is activated.
Restricted synchrony limits consecutive activations, while asynchrony
permits arbitrary finite delays between phases. Partially atomic
variants distinguish delayed computation from observation during
movement. Cross-model studies have characterized how these timing
assumptions interact with memory and communication, refining the
synchronous landscape and extending it to asynchronous and
energy-constrained settings~\cite{FSW19,BuFKPSW21,FSSW23,BuFKPSW22}.

\noindent\textbf{Related computational-landscape studies:}
The computational-landscape viewpoint has been developed along several
orthogonal dimensions, including synchrony, memory, communication and
energy constraints~\cite{DFPSY,FSW19,BuFKPSW21,BuFKPSW22,FSSW23}, and
more recently beyond the \FSY--\SSY--\ASY axis: sequential schedulers
activating one robot per round~\cite{SequentialTCS,SequentialMFCS},
robots on graphs~\cite{GraphLandscape}, and limited-visibility robots,
where visibility itself is a computational
resource~\cite{VisibilityLandscape}.

\noindent\textbf{Why two robots are structurally exceptional:} 
Two robots form the smallest nontrivial population, but their
computational landscape is not obtained merely by restricting
separators designed for larger populations. Their initial geometry
is a single segment, with no third robot to provide an asymmetric
neighborhood or a richer geometric reference. Some separations
therefore require new witnesses, while some distinctions between
internal capabilities disappear.

Kitamura, Sudo, and Wada~\cite{KSW-2-robot} established the complete
pairwise computational landscape for two robots under variable
disorientation, chirality, and rigid movements. In particular, they
proved the full-synchrony collapse
$\FS^F\equiv\FC^F\equiv\LU^F$, including its color- and time-preserving
simulation, and introduced Dyadic-Multiple Stopped-Distance (DMSD)
to separate atomic from unrestricted asynchrony. We use these known
results as ingredients. 


\noindent\textbf{From a Venn diagram to a profile lattice:} 
Naito, Ninomiya, and Wada~\cite{NNW-SSS2025} developed this global perspective
through a Venn-diagram representation of solvable task sets.
A problem's position records all models that can solve it, exposing
interactions that pairwise comparisons do not capture. Their study
populated additional regions of the general landscape but left other
regions unresolved. We complete this regional classification for the
two-robot specialization, after identifying the settings that become
computationally equivalent for two robots.

Let $\mathcal Q$ be the resulting thirteen-class order, with $X\le Y$
when every task solvable in $X$ is solvable in $Y$. The
\emph{solvability profile} of a task $P$ is
\(
  \Prof(P)=\{X\in\mathcal Q:P\in\Sol(X)\}.
\)
Every profile is an upset of $\mathcal Q$. Thus the finite distributive
lattice $\Up(\mathcal Q)$ describes all membership patterns consistent
with the model order. A Venn region with pattern $U$ consists of tasks
whose profile is exactly $U$; the region is populated only if such a
task exists. Counting abstract upsets therefore does not establish
that all regions are populated. The central question is whether every
upset can be realized by a concrete two-robot task.

A task is \emph{principal at $X$} if its profile is the principal upset
$\upa X=\{Y\in\mathcal Q:X\le Y\}$.
Every upset is the union of the principal upsets generated
by its minimal elements, suggesting a task basis indexed by model
classes. However, disjoining task specifications need not merely take
the union of their profiles. An algorithm can potentially satisfy
different disjuncts in different executions without solving any fixed
disjunct. Consequently, constructing one principal task per class is
not enough: exact realization by disjunction requires a separate
common-failure argument.

\subsection{Contributions} 
Under variable disorientation, chirality, and rigid movements, we
construct a minimum exact disjunctive basis for the complete
solvability-profile landscape of two robots. All basis tasks share
the initial-instance space of two distinct positions. Our results
combine an optimal principal representation of the known model order
with a complete realization theorem for its profile lattice.

\noindent\textbf{An optimal principal representation of the model order:} 
We identify six tasks
\(
  \mathcal B_6=
  \{\mathrm{EPG}^{*},\mathrm{SRO},\mathrm{MCv},
    \CFSM,\VSS,\DMSD\}
\)
whose solvability signatures recover every inclusion and
incomparability among the thirteen classes. SRO is inherited from
earlier landscape studies~\cite{FSW19}, MCv from the asynchronous
landscape~\cite{KSW-2-robot}, and \DMSD from the complete two-robot
characterization~\cite{KSW-2-robot}. The other three are new or revised
here: $\mathrm{EPG}^{*}$ (Perpetual Expansion Preserving Gravity)
replaces Center of Gravity Expansion~\cite{FSW19}, whose coordinatewise
floor refers to the adversary's local coordinate systems and therefore
does not define a predicate on trajectories under variable
disorientation; \CFSM is a collision-free refinement of Single
Move~\cite{KSW-2-robot}; and \VSS (Visit Swap and Stop) is introduced for
the present construction. We prove that six tasks are necessary,
even when separating tasks need not be principal. Moreover,
$\mathcal Q$ is a lattice with six nonleast join-irreducibles, and
these tasks realize exactly the principal profiles generated by them.
Those profiles form the unique minimum principal separating family.

\noindent\textbf{A principal task for every class:} 
We extend $\mathcal B_6$ to a family
$\mathcal B=\{B_X:X\in\mathcal Q\}$ of thirteen tasks satisfying
$\Prof(B_X)=\upa X$ for every $X\in\mathcal Q$.
Three tasks come from earlier landscape studies (SRO, MCv and Cv),
three from SSS~2025 (HET, ZCC and VEC)~\cite{NNW-SSS2025}, and \DMSD
from the complete two-robot characterization~\cite{KSW-2-robot}.
The remaining six are new or revised here:
EPG and EPG$^{*}$, the trajectory-only revisions of
Center of Gravity Expansion and its perpetual variant;
CFSM, a collision-free refinement of Single Move;
OSPO, a strengthened version of Oscillating Points~\cite{DFPSY}
requiring exactly one nonzero Move per robot per phase;
S-RDV (Swap Rendezvous); and VSS.

\noindent\textbf{Exact disjunctions via common-failure executions:} 
For every antichain $A$ of $\mathcal Q$, we prove
\(
  \Prof\left(\bigvee_{X\in A}B_X\right)
  =\bigcup_{X\in A}\upa X.
\)
The substantive step is the reverse inclusion. For each maximal class
outside the target union and each algorithm in that class, we
construct a single admissible fair execution violating all selected
task predicates. Separate counterexamples to the individual tasks
would not suffice. We organize these \emph{common-failure executions}
into families sharing expansion, rotation, or exchange obstructions,
combining compatible finite violations with appropriate fair
continuations. The resulting exactness is a property of the chosen
tasks and their composition, not a consequence of the model order
alone.

\noindent\textbf{Complete realization and minimum basis size:} 
Every upset $U$ is realized by the canonical disjunction
$B_U=\bigvee_{X\in\Min(U)}B_X$, so
$\mathrm{Real}(\mathcal Q)=\Up(\mathcal Q)$. The antichain polynomial
is $1+13z+24z^2+16z^3$. Hence there are exactly $53$ nonempty profiles:
$13$ principal profiles, $24$ with two minimal classes, and $16$ with
three. Including the empty profile gives $54$. Every nonempty profile
has an exact canonical representation using at most three principal
tasks, and three are sometimes necessary. Any task family supplying
exact disjunctive representations of every profile must contain all
thirteen principal profiles. Our basis therefore has minimum size
and is unique at the profile level among minimum exact generating
families.


%
%

\section{Models, Tasks and Solvability Profiles}
\subsection{Robots and Internal Capabilities}
\label{subsec:robots-capabilities}

We consider exactly two anonymous, identical, and homogeneous robots
$R=\{a,b\}$ moving in the Euclidean plane $\mathbb R^2$.
The names $a$ and $b$ are used only for analysis; for $r\in R$,
$\bar r$ denotes the other robot. The position of $r$ at time
$t\ge0$ is denoted by $r(t)$.

Each robot has a local coordinate system centred at its current position.
Two attributes of such a system may differ between the robots, namely the
\emph{orientation} of its axes and its \emph{unit distance}, and both may
also change from one activation to the next; only the handedness is fixed,
since the robots share chirality. Thus a robot's local view is obtained from
the global frame by a translation, a rotation, and a positive scaling, with
the rotation angle and the scaling factor chosen independently for each robot
and each activation (\emph{variable disorientation}, VD). In particular, no
robot can rely on a common direction, on a common unit of length, or on the
persistence of either across its own activations.
When activated, a robot executes a \Look--\Compute--\Move (\LCM) cycle:
it takes an instantaneous snapshot in its local coordinates, computes
a destination by a deterministic common algorithm, and moves
continuously along the corresponding segment to that destination.
Movements are rigid. We assume no general multiplicity-detection
capability.

We consider four standard capability models.
In $\OB$, robots are oblivious and have no explicit communication.
In $\LU$, each robot has a persistent finite-valued visible light,
providing both memory and communication. The intermediate models
$\FS$ and $\FC$ retain only one of these capabilities:
an $\FS$ robot observes only its own light, which acts as finite
private memory, whereas an $\FC$ robot observes only the peer's light,
providing finite communication but no readable private state.
Multiple finite lights, when convenient, are identified with one
light whose color set is their Cartesian product.

For the two-robot setting, let $p_r$ be the position of $\bar r$
in the current local coordinates of $r$, and let $c_r,c_{\bar r}$
be their current colors. The information available at a Look is
\[
\begin{aligned}
  \operatorname{Obs}_{\OB}(r)&=p_r, &
  \operatorname{Obs}_{\FS}(r)&=(p_r,c_r),\\
  \operatorname{Obs}_{\FC}(r)&=(p_r,c_{\bar r}), &
  \operatorname{Obs}_{\LU}(r)&=(p_r,c_r,c_{\bar r}).
\end{aligned}
\]
These definitions also apply when $p_r=0$: in the fixed two-robot
setting the robots can therefore detect coincidence, and an externally
visible peer light remains observable there. This does not amount to
general multiplicity detection. In $\FC$, the robot never reads its
own light; at Compute it may either keep its current color or replace
it by a color determined solely by its observation.\footnote{All
previously established inclusions, equivalences, and separations used
in this paper remain valid for exactly two robots under this observation
convention and the initial-instance and movement assumptions adopted
here~\cite{KSW-2-robot}.}

\subsection{Schedulers and Executions}
\label{subsec:schedulers-events}

All schedulers are fair: each robot is activated infinitely often.
In synchronous executions, time is divided into rounds and each round
activates a nonempty subset of the two robots. Under \FSY{} both robots
are activated in every round, whereas \SSY{} allows any nonempty
activation set. Under \RSY{}, an execution is either fully synchronous
throughout, or has a finite fully synchronous prefix followed by
alternating singleton activations.

Under \ASY{}, the \Look, \Compute, and \Move{} phases of different
robots may overlap arbitrarily, subject to their local phase order,
finite operation durations, and fairness. Movements are continuous
and rigid.

We also use partially atomic subclasses of \ASY{}, distinguished by
which phase of one robot is protected from a \Look{} of the other:
$LC$-atomic protects the Look--Compute interval,
$M$-atomic protects the Move interval, and $CM$-atomic protects the
Compute--Move interval. The strongest variant, \LCM-atomic \ASY{},
protects the whole cycle and is equivalent to \SSY{}.
Their formal timing definitions are given in
Appendix~\ref{app:model-conventions-schduler}.

For the models considered here,
\(
  S \;\ge\; A_{LC} \;\ge\; A,
  \qquad
  S \;\ge\; A_{CM} \;\ge\; A_M \;\ge\; A .
\)
Moreover, $A_M$ and $A_{CM}$ are equivalent for \OB{} and \FS{},
and also for \FC{}~\cite{FSSW23}; hence $A_{CM}$ is omitted from
the quotient landscape.

For compactness, we write
\(
  F,\ R,\ S,\ A,\ A_{LC},\ A_M
\)
for \FSY{}, \RSY{}, \SSY{}, \ASY{}, $LC$-atomic \ASY{}, and
$M$-atomic \ASY{}, respectively.

Throughout the paper, unless explicitly stated otherwise, we assume
variable disorientation, chirality, and rigid movements.
\subsection{Tasks, Solvability, and Profiles}
\label{subsec:problems-relationships}

Let
\(
  \mathcal M=\{\LU,\FC,\FS,\OB\},
  \;
  \mathcal S=\{F,R,S,A,A_{LC},A_M\}.
\)

\paragraph*{Tasks and executions.}
A task $P=(\mathcal I_P,\Phi_P)$ consists of a set of admissible
initial geometric configurations and a temporal predicate on the
robots' geometric motion execution, independent of internal colors,
states, and local coordinate systems.
All basis tasks use the common initial-instance space
\(
  \mathcal I_\star
  =\{(p_a,p_b)\in(\mathbb R^2)^2:p_a\ne p_b\},
\)
and any auxiliary geometric objects are determined from the initial
positions.

We use $\Move_r(t)$ to mean that $r$ is inside a nonzero Move at time
$t$, that is, that the Move phase containing $t$ has a destination
distinct from the position from which it starts. Throughout such a phase
the robot is in motion at every instant: it travels along the segment
without ever pausing, and, movements being rigid, it is not halted before
the destination either. A robot is therefore at rest at time $t$ precisely
when $t$ lies in no nonzero Move; we write $\Stop_r(t)$ for this, so that
$\Stop_r(t)\iff\lnot\Move_r(t)$, and a Move whose destination coincides
with the current position leaves $r$ at rest throughout.
We write $\moves{r}$ for the number of its nonzero Moves and
$\Stopped{r}$ for eventual permanent stationarity.
Their formal definitions, together with the phase-time conventions
used in the appendix proofs, are given in
Appendix~\ref{app:model-conventions-predicate}.

For a deterministic algorithm $\mathcal A$ in model $M$ and scheduler
$K$, let $\operatorname{Exec}_{M,K}(\mathcal A,I)$ be the set of all
admissible fair, time-divergent executions from $I$, including all
permitted choices of schedules, local coordinate systems, phase
timings, and movement speeds. Both robots start in the designated
initial state or color, with no pending cycle.

An algorithm $\mathcal A\in\operatorname{Alg}(M)$ solves
$P=(\mathcal I_P,\Phi_P)$ under $(M,K)$ if
\(
  \forall I\in\mathcal I_P\;
  \forall\rho\in\operatorname{Exec}_{M,K}(\mathcal A,I):
  \pi_{\mathrm{geom}}(\rho)\models\Phi_P,
\)
where $\pi_{\mathrm{geom}}(\rho)$ denotes the geometric
motion execution obtained from $\rho$ by forgetting
internal states, colors, and local coordinate systems.
The task $P$ is solvable under $(M,K)$ if such an
algorithm exists:
\(
  P\in\Sol(M,K)
  \iff
  \exists\mathcal A\in\operatorname{Alg}(M):
  \mathcal A\text{ solves }P\text{ under }(M,K).
\)


We abbreviate $\pi_{\mathrm{geom}}(\rho)\models\Phi_P$ by
$\rho\models\Phi_P$, and write $\Phi_i$ for $\Phi_{P_i}$ when the
indexing family of tasks is clear from the context.

For a setting $\mu=(M,K)$, also written $M^K$, we write
$\Sol(\mu)=\Sol(M,K)$.

\subsection{Computational Order, Solvability Profiles and Disjunction of Tasks }
\label{subsec:computational-profiles}

For settings $\mu,\nu$, define
\(
  \mu\ge\nu
  \iff
  \Sol(\mu)\supseteq\Sol(\nu),
  \;
  \mu\equiv\nu
  \iff
  \Sol(\mu)=\Sol(\nu).
\)
Strict comparison and incomparability have their usual meanings.

Let $\mathfrak M=\mathcal M\times\mathcal S$ be the set of settings,
and let
\(
  \mathcal Q=\mathfrak M/{\equiv}
\)
be the quotient poset of computational equivalence classes, ordered by
\(
  X\le Y
  \iff
  \Sol(X)\subseteq\Sol(Y).
\)

For $S\subseteq\mathcal Q$, let $\Min(S)$ and $\Max(S)$ denote its
minimal and maximal elements, and let $\Ant(\mathcal Q)$ be the family
of antichains.

For a task $P$, its \emph{solvability profile} is
\(
  \Prof(P)=\{X\in\mathcal Q:P\in\Sol(X)\}.
\)
Every solvability profile is an upset of $\mathcal Q$.
For $X\in\mathcal Q$, write
\(
  \upa X=\{Y\in\mathcal Q:X\le Y\}.
\)
A task is \emph{principal at $X$} if
$\Prof(P)=\upa X$.

A family
\(
  \mathcal B=\{B_X:X\in\mathcal Q\}
\)
is a \emph{principal task basis} if
$\Prof(B_X)=\upa X$ for every $X$.
It is an \emph{exact disjunctive basis} if, for every
$A\in\Ant(\mathcal Q)$,
\(
  \Prof\left(\bigvee_{X\in A}B_X\right)
  =
  \bigcup_{X\in A}\upa X.
\)

\begin{proposition}
\label{prop:profile-is-upset}
For every task $P$, $\Prof(P)$ is an upset of $\mathcal Q$.
\end{proposition}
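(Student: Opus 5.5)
The claim follows directly from the definition of the order on $\mathcal Q$; no robot-specific argument is needed. The plan is to unfold the definitions and check the upset condition. Take $X\in\Prof(P)$ and $Y\in\mathcal Q$ with $X\le Y$; we must show $Y\in\Prof(P)$.

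First, we make the notation well defined on classes. The classes of $\mathcal Q$ are equivalence classes of settings under $\equiv$, and $\Sol$ is constant on each class by definition of $\equiv$. So $\Sol(X)$ is unambiguous for $X\in\mathcal Q$: it is $\Sol(\mu)$ for any representative $\mu\in X$. Likewise $X\le Y$ means $\Sol(X)\subseteq\Sol(Y)$, independent of representatives. Reading $P\in\Sol(X)$ for a class therefore raises no issue.

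The argument is then a single chain. From $X\in\Prof(P)$ we get $P\in\Sol(X)$. From $X\le Y$ we get $\Sol(X)\subseteq\Sol(Y)$. Hence $P\in\Sol(Y)$, that is, $Y\in\Prof(P)$. So $\Prof(P)$ is closed upward, and it is an upset, possibly empty.

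There is no genuine obstacle. The only point needing care is the well-definedness of $\Sol$ on equivalence classes, which is immediate from the definition of $\equiv$. Note that no property of tasks, schedulers or executions is used, so the statement holds for arbitrary tasks $P$, not only the basis tasks.
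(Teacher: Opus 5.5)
Your proof is correct and is exactly the paper's argument: from $P\in\Sol(X)$ and $\Sol(X)\subseteq\Sol(Y)$ you conclude $P\in\Sol(Y)$, i.e.\ $Y\in\Prof(P)$. Your remark that $\Sol$ is well defined on equivalence classes is a harmless clarification that the paper leaves implicit.
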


The family of abstract profiles is therefore the finite distributive
lattice
\(
\Up(\mathcal Q)
\)
under union and intersection.  We write
\(
  \Real(\mathcal Q)
  =\{\Prof(P):P\text{ is a two-robot task}\}
  \subseteq\Up(\mathcal Q).
\)

\begin{proposition}
\label{prop:canonical-upset-decomposition}
For every $U\in\Up(\mathcal Q)$,
\(
  U=\bigcup_{X\in\Min(U)}\upa X,
\)
and $\Min(U)$ is the unique antichain giving an irredundant
representation of $U$ by principal upsets.
Consequently, antichains and upsets are in bijection.
\end{proposition}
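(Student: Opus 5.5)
This is a purely order-theoretic statement about the finite poset $\mathcal Q$, so we argue directly from the definitions of upset, $\Min$, and $\upa$. Finiteness of $\mathcal Q$ is what we use: it has thirteen elements as a quotient of the finite set $\mathfrak M=\mathcal M\times\mathcal S$.

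\emph{Decomposition.} Fix $U\in\Up(\mathcal Q)$.
\begin{itemize}
\item For $\supseteq$: if $X\in\Min(U)\subseteq U$ and $X\le Y$, then $Y\in U$ because $U$ is an upset. Hence $\upa X\subseteq U$.
\item For $\subseteq$: take $Y\in U$. The set $\{Z\in U:Z\le Y\}$ is nonempty (it contains $Y$) and finite, so it has a minimal element $X$. This $X$ is minimal in all of $U$: any $Z'\in U$ with $Z'\le X$ also satisfies $Z'\le Y$ by transitivity, so minimality within the subset forces $Z'=X$. Thus $Y\in\upa X$ with $X\in\Min(U)$.
\end{itemize}
Finally, $\Min(U)$ is an antichain, since two comparable distinct minimal elements would contradict minimality of the larger one.

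\emph{Irredundancy and uniqueness.} The representation by $\Min(U)$ is irredundant. If some $X\in\Min(U)$ could be dropped, then $X\in\upa X'$ for another $X'\in\Min(U)$. That gives $X'\le X$ with $X'\ne X$, contradicting either the minimality of $X$ or the antichain property.

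Now let $A$ be any antichain with $U=\bigcup_{X\in A}\upa X$.
\begin{itemize}
\item $A\subseteq\Min(U)$: for $X\in A$ we have $X\in U$. If $Z\in U$ and $Z\le X$, then $Z\in\upa X'$ for some $X'\in A$, so $X'\le Z\le X$. The antichain property gives $X'=X$, hence $Z=X$, so $X$ is minimal in $U$.
\item $\Min(U)\subseteq A$: each $X\in\Min(U)$ lies in some $\upa X'$ with $X'\in A\subseteq U$. Then $X'\le X$, and minimality of $X$ gives $X=X'\in A$.
\end{itemize}
So $A=\Min(U)$. The same argument shows that any irredundant family of principal upsets with union $U$ has generators forming an antichain, because a comparable pair $X'<X$ would make $\upa X$ redundant. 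Such a family therefore also equals $\Min(U)$.

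\emph{Bijection.} Define $U\mapsto\Min(U)$ and $A\mapsto\bigcup_{X\in A}\upa X$; the latter is an upset as a union of upsets. The decomposition step shows the composite $U\mapsto\Min(U)\mapsto U$ is the identity. The uniqueness step, applied to $U=\bigcup_{X\in A}\upa X$, shows $\Min(U)=A$, so the other composite is also the identity. Under this correspondence the empty antichain matches the empty upset.

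None of these steps is a real obstacle. The only point needing care is the existence of minimal elements below each $Y\in U$, which is supplied by finiteness of $\mathcal Q$.
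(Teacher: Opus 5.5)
Your proof is correct and follows essentially the same route as the paper: finiteness gives a minimal element of $U$ below each of its members, and uniqueness comes from showing that an antichain generating $U$ consists of minimal elements of $U$ and contains all of them. You go a little further by showing that any irredundant family of principal upsets must already be an antichain, and by checking both composites of the bijection explicitly; the paper leaves the bijection implicit and uses the irredundance point only later, in the proof of Theorem~\ref{thm:profile-realization}.
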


\begin{lemma}[Boundary certificate]
\label{lem:boundary-certificate}
Let $P$ be a task and $U\in\Up(\mathcal Q)$. Then $\Prof(P)=U$
if and only if
\[
\begin{aligned}
  P&\in\Sol(X)
    &&\text{for every }X\in\Min(U),\\
  P&\notin\Sol(Y)
    &&\text{for every }Y\in\Max(\mathcal Q\setminus U).
\end{aligned}
\]
\end{lemma}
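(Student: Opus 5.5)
The argument is pure order theory on the finite poset $\mathcal Q$. It uses only Proposition~\ref{prop:profile-is-upset}, which says $\Prof(P)$ is an upset, together with the finiteness of $\mathcal Q$. Throughout, write $V=\Prof(P)$. We prove the two directions of the equivalence separately.

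\emph{Necessity.} Assume $V=U$.
\begin{itemize}
\item Every $X\in\Min(U)$ lies in $U=V$, so $P\in\Sol(X)$.
\item Every $Y\in\Max(\mathcal Q\setminus U)$ lies outside $V$, so $P\notin\Sol(Y)$.
\end{itemize}
Both conditions hold immediately.

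\emph{Sufficiency.} Assume both conditions; we show $U\subseteq V$ and $V\subseteq U$.

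For $U\subseteq V$, take $Z\in U$. Since $\mathcal Q$ is finite, some $X\in\Min(U)$ satisfies $X\le Z$; equivalently, Proposition~\ref{prop:canonical-upset-decomposition} gives $U=\bigcup_{X\in\Min(U)}\upa X$. The first condition gives $X\in V$. Because $V$ is an upset, $Z\in V$.

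For $V\subseteq U$, we argue by contradiction. Suppose $Z\in V\setminus U$. Then $Z\in\mathcal Q\setminus U$, and finiteness gives some $Y\in\Max(\mathcal Q\setminus U)$ with $Z\le Y$. Since $V$ is an upset containing $Z$, we get $Y\in V$, that is, $P\in\Sol(Y)$. This contradicts the second condition. Hence $V=U$.

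The only point needing care is the existence of a minimal element below, or a maximal element above, each element of a subset of $\mathcal Q$; this holds because $\mathcal Q$ is finite. The degenerate cases need no separate treatment. If $U=\emptyset$, the first condition is vacuous, and the argument shows $V=\emptyset$, since every element of $\mathcal Q$ lies below a maximal one. If $U=\mathcal Q$, the second condition is vacuous, and the argument shows $V=\mathcal Q$.
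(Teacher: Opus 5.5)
Your proposal is correct and matches the paper's proof essentially line for line: necessity is immediate, and sufficiency uses finiteness of $\mathcal Q$ to place each $Z\in U$ above an element of $\Min(U)$ and each $Z\notin U$ below an element of $\Max(\mathcal Q\setminus U)$, then applies upward closure of $\Prof(P)$ (Proposition~\ref{prop:profile-is-upset}). Your contradiction phrasing of the second inclusion and your remarks on the degenerate cases $U=\varnothing$ and $U=\mathcal Q$ are harmless additions.
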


The antichain polynomial is
\(
  \mathsf A_{\mathcal Q}(z)
  =\sum_{A\in\Ant(\mathcal Q)}z^{|A|},
\)
so $\mathsf A_{\mathcal Q}(1)=|\Up(\mathcal Q)|$.


For tasks $P_i=(\mathcal I_\star,\Phi_i)$ over the common initial
space, define
\(
  \bigvee_{i\in I}P_i
  =
  \left(\mathcal I_\star,\bigvee_{i\in I}\Phi_i\right).
\)

\begin{proposition}
\label{prop:disjunction-easy-direction}
For every finite family,
\(
  \bigcup_{i\in I}\Prof(P_i)
  \subseteq
  \Prof\left(\bigvee_{i\in I}P_i\right).
\)
\end{proposition}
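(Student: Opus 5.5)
The inclusion follows directly from the definition of solvability, because $\Phi_i$ implies $\bigvee_{j\in I}\Phi_j$ pointwise on geometric executions. We only need to fix a class in the left-hand side and exhibit a witnessing algorithm for the disjunction.

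Let $X\in\bigcup_{i\in I}\Prof(P_i)$. Then there is some index $i_0\in I$ with $X\in\Prof(P_{i_0})$, that is, $P_{i_0}\in\Sol(X)$. Choose a representative setting $\mu=(M,K)$ of the class $X$; this choice is harmless, since $\Sol$ is constant on $\equiv$-classes by the definition of $\mathcal Q$. By the definition of solvability there is an algorithm $\mathcal A\in\operatorname{Alg}(M)$ such that $\rho\models\Phi_{i_0}$ for every $I_0\in\mathcal I_\star$ and every $\rho\in\operatorname{Exec}_{M,K}(\mathcal A,I_0)$.

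We now show that this same $\mathcal A$ solves $\bigvee_{i\in I}P_i=(\mathcal I_\star,\bigvee_{i\in I}\Phi_i)$ under $(M,K)$. The disjunction has the same initial-instance space $\mathcal I_\star$, so the quantification over initial configurations and over admissible fair executions is unchanged. For each such $\rho$, we have $\pi_{\mathrm{geom}}(\rho)\models\Phi_{i_0}$, and hence $\pi_{\mathrm{geom}}(\rho)\models\bigvee_{i\in I}\Phi_i$ by disjunction introduction. Therefore $\bigvee_{i\in I}P_i\in\Sol(\mu)=\Sol(X)$, that is, $X\in\Prof\bigl(\bigvee_{i\in I}P_i\bigr)$.

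There is no real obstacle here. The only points to check are that all $P_i$ share the common initial space $\mathcal I_\star$, which is what makes the disjunction well defined, and that the class $X$ is handled through a representative setting. The argument also makes clear why the reverse inclusion is hard. Solvability places the quantifier $\forall\rho$ inside the disjunction rather than outside it, so one algorithm may satisfy different disjuncts in different executions without solving any single $P_i$.
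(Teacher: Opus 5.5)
Your proposal is correct and follows the paper's own argument: the algorithm solving some $P_{i_0}$ satisfies $\Phi_{i_0}$, hence $\bigvee_{i\in I}\Phi_i$, on every admissible execution from $\mathcal I_\star$. Your remarks about working through a representative setting of the class and about the shared initial space $\mathcal I_\star$ are accurate details that the paper leaves implicit.
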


The disjunction is \emph{exact} when equality holds.
The reverse inclusion is not automatic; Section~\ref{sec:exact-disjunctions}
establishes it for the antichain-indexed disjunctions of our basis.
Proofs of the elementary results above are given in
Appendix~\ref{app:order-preliminaries}.

\section{The Two-Robot Quotient and Minimum Separation}\label{sec:quotient}
\subsection{Basic Inclusions and Known Equivalences}

\begin{proposition}
\label{prop:basic-inclusions}
For every scheduler $K$ and model $M$,
\[
  \LU^K\ge\FS^K\ge\OB^K,\qquad
  \LU^K\ge\FC^K\ge\OB^K,
\]
and
\(
  M^F\ge M^R\ge M^S,\;
  M^S\ge M^{A_{LC}}\ge M^A,\;
  M^S\ge M^{A_{CM}}\ge M^{A_M}\ge M^A.
\)
\end{proposition}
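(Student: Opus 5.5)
The plan is to prove every inequality by a simulation: given an algorithm $\mathcal A$ solving a task $P$ in the weaker setting, exhibit an algorithm in the stronger setting that solves $P$. Because $\Sol$ is defined by quantifying over $\operatorname{Exec}$, there are two kinds of simulation. When only the model changes, we transform the algorithm so that geometric executions are preserved. When only the scheduler changes, we keep the same algorithm and show that the admissible executions form a subset. Since every task predicate depends only on $\pi_{\mathrm{geom}}(\rho)$, it suffices in both cases to show that each geometric execution of the new pair (algorithm, setting) is a geometric execution of $\mathcal A$ in the old setting.

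For the model inequalities we fix $K$ and embed the observation functions. The key observation is that $\operatorname{Obs}_{\OB}(r)=p_r$ is a projection of both $\operatorname{Obs}_{\FS}(r)$ and $\operatorname{Obs}_{\FC}(r)$, and that each of these is in turn a projection of $\operatorname{Obs}_{\LU}(r)$.
\begin{itemize}
\item For $\FS^K\ge\OB^K$ and $\FC^K\ge\OB^K$: let the robot keep a single constant color and compute its destination from $p_r$ alone.
\item For $\LU^K\ge\FS^K$: the robot ignores $c_{\bar r}$ and updates its own light exactly as the $\FS$ algorithm would.
\item For $\LU^K\ge\FC^K$: the robot ignores $c_r$ and applies the $\FC$ rule. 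Here the rule is "keep the current color, or replace it by a color determined by the observation", and this is expressible in $\LU$ as a function of $(p_r,c_r,c_{\bar r})$.
\end{itemize}
Schedules, frames, timings and speeds are unchanged. A step-by-step induction over the execution then shows that positions and the relevant lights coincide, so the geometric executions are identical.

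For the scheduler inequalities we fix $M$ and the algorithm, and show that $\operatorname{Exec}_{M,K'}(\mathcal A,I)\subseteq\operatorname{Exec}_{M,K}(\mathcal A,I)$ whenever we want $M^{K'}\ge M^K$.
\begin{itemize}
\item $F\subseteq R\subseteq S$ holds by the definitions in Section~\ref{subsec:schedulers-events}. A fully synchronous execution is an $\RSY$ execution, and every $\RSY$ activation set is nonempty.
\item $S\ge A_{LC}$ and $S\ge A_{CM}$: each is an atomicity constraint weaker than $\LCM$-atomic $\ASY$. That setting is equivalent to $\SSY$ as stated, so its executions are included in theirs up to the stated equivalence.
\item $A_{LC}\ge A$, $A_{CM}\ge A_M$ and $A_M\ge A$: compare the protected intervals from Appendix~\ref{app:model-conventions-schduler}. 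Protecting a larger interval only removes executions. In particular, $CM$-protection implies $M$-protection because the Move interval is contained in the Compute--Move interval.
\end{itemize}
Every inclusion of execution sets immediately gives the reverse inclusion of solvable sets.

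The main obstacle is the step $S\ge A_{LC}$ and $S\ge A_{CM}$. Literally, an $\SSY$ execution is a round-based object, not an $\ASY$ timing. We therefore need to check that every $\SSY$ execution can be realized as a $\LCM$-atomic asynchronous timing: phases are scheduled within disjoint round intervals, and simultaneously activated robots Look at the same instant. We also need that timing to satisfy fairness and time-divergence. Rather than rederiving this, I would invoke the stated equivalence of $\LCM$-atomic $\ASY$ with $\SSY$ and the formal definitions in the appendix. A secondary check is the $p_r=0$ convention: the simulations must remain valid when the robots coincide. They do, because the projections of the observation functions are defined there too.
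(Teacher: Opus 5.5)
Your proposal is correct and follows essentially the same route as the paper. The cross-model inclusions come from letting the more capable robot run the weaker algorithm while ignoring the extra observations. The scheduler inclusions come from the fact that a more restrictive scheduler admits only a subset of the executions, so an algorithm correct under the weaker scheduler remains correct under the stronger one. Your added details are all sound and make explicit what the paper's short proof leaves implicit: that the $\FC$ ``keep the current color'' option is expressible in $\LU$; the interval containments such as $[t_B,t_E]\subseteq[t_C,t_E]\subseteq(t_L,t_E]$; and the use of the stated equivalence of $\LCM$-atomic $\ASY$ with $\SSY$ to bridge round-based and timed executions.
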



\begin{proposition}
\label{prop:atomicity-equivalences}
\(
  \FS^A\equiv\FS^{A_{LC}},\quad
  \OB^A\equiv\OB^{A_{LC}},\quad\\
  \FS^{A_M}\equiv\FS^{A_{CM}},\quad
  \OB^{A_M}\equiv\OB^{A_{CM}}.
\)
\end{proposition}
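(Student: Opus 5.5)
By Proposition~\ref{prop:basic-inclusions} we already have $M^{A_{LC}}\ge M^A$ and $M^{A_{CM}}\ge M^{A_M}$ for every $M$. It therefore suffices to prove the two reverse inclusions for $M\in\{\FS,\OB\}$:
\[
  \Sol(M^{A})\supseteq\Sol(M^{A_{LC}}), \qquad \Sol(M^{A_M})\supseteq\Sol(M^{A_{CM}}).
\]
In each case I plan a direct simulation argument: take any algorithm $\mathcal A$ solving a task under the stronger scheduler, and show that \emph{the same} algorithm solves it under the weaker one. To do this, I map every admissible fair execution of the weaker scheduler to an admissible execution of the stronger one with the same geometric projection $\pi_{\mathrm{geom}}$. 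Since task predicates depend only on geometry, this suffices.

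\textbf{First equivalence, $A$ versus $A_{LC}$.} The key observation is that in $\FS$ and $\OB$ a robot never observes the other robot's internal state. What a Look sees is therefore only the peer's position, and for $\FS$ also the robot's own light. The only thing the $LC$-atomicity constraint forbids is that the peer performs a Look strictly inside $r$'s Look--Compute interval. During that interval $r$ is not moving, so its position is constant. Moreover, $r$'s own light change at Compute is invisible to $\bar r$ in these models.

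Given an arbitrary $A$-execution, I will retime the phases. Each Compute of $r$ is shifted back so that it coincides, as an instant, with $r$'s Look. The resulting Look--Compute interval then has length zero, and no Look of $\bar r$ can fall strictly inside it. The Move start times are kept unchanged, and the intervening idle time is absorbed into the Compute-to-Move wait, which is permitted in \ASY.

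I must then check three things:
\begin{itemize}
  \item Every snapshot is unchanged. This holds because positions are unchanged, and lights are either absent ($\OB$) or read only by their owner ($\FS$).
  \item Every computed destination and light update is unchanged.
  \item Fairness and time-divergence are preserved.
\end{itemize}
The retimed execution is an $A_{LC}$-execution with the same trajectories, so the geometric predicate transfers.

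\textbf{Second equivalence, $A_M$ versus $A_{CM}$.} Here the issue is Looks of $\bar r$ that fall in $r$'s Compute-to-Move-start interval, which $A_M$ permits and $A_{CM}$ forbids. During that interval $r$ is stationary at its Look position, and in $\FS$ or $\OB$ its pending light change is not visible to $\bar r$. I therefore again collapse Compute onto the beginning of Move: Compute is delayed to the instant the Move starts. The only effect is to postpone the unobservable state update, so every snapshot is unchanged. Any Look of $\bar r$ that occurred in the former Compute--Move gap now lies in $r$'s Look--Compute interval, which $A_{CM}$ allows. The existing $M$-atomicity already protects the Move itself. Hence the retimed execution is $CM$-atomic and geometrically identical to the original.

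\textbf{Main obstacle.} The delicate step is verifying the retiming against the formal phase-time conventions of Appendix~\ref{app:model-conventions-schduler}. Two points need care. First, one must confirm that a zero-length or instantaneous phase, or a Compute coinciding with a Look or Move boundary, is admissible, so that collapsing intervals does not leave the model. Second, one must handle the boundary cases where a Look of $\bar r$ occurs exactly at an endpoint of a protected interval.

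If instantaneous phases are not allowed, I would instead shrink the relevant intervals to an arbitrarily small $\varepsilon>0$. This requires choosing $\varepsilon$ per cycle so that each interval avoids the countably many Look instants of $\bar r$ that do not accumulate in finite time. Time-divergence then guarantees that such choices exist.

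Finally, I would remark that these equivalences are exactly those of~\cite{FSSW23}, and that the argument needs only the two-robot observation convention of Section~\ref{subsec:robots-capabilities}.
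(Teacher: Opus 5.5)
Your proposal is correct and is the paper's argument made explicit. The paper only asserts that the equivalences follow from the fact that $\FS$ and $\OB$ robots never see the peer's light, together with the definitions of the atomic classes. Your retiming of each Compute inside the stationary window between Look and Move start is exactly what justifies that assertion.

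One point about your fallback: the paper's conventions require $t_L(r,i)<t_C(r,i)<t_B(r,i)$ strictly, so the instantaneous collapse is not admissible and your $\varepsilon$-shift variant is the version that applies. That is, place $t_C$ before the first peer Look after $t_L$ in the first case, and after every peer Look preceding $t_B$ in the second. Such a choice exists because, by time-divergence, the peer's Looks do not accumulate in finite time.
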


We also use the known equivalences
\(
  \LU^R\equiv\LU^A\equiv\FC^R
  \;\text{\cite{DFPSY,FSW19,KSW-2-robot}},
\)
and
\(
  \FC^A\equiv\FC^{A_M},
  \;
  \FC^{A_{LC}}\equiv\FC^S
  \;\text{\cite{KSW-2-robot}}.
\)
The elementary proofs of the two propositions are deferred to
Appendix~\ref{app:basic-inclusion}.

The following two-robot full-synchrony collapse follows from the
stronger color- and round-preserving simulation of
\cite{KSW-2-robot}, recalled in Appendix~\ref{app:fsynch-no-cost-simulation}.


%

\begin{theorem}[Computational collapse under full synchrony]
\label{thm:equ-FCOMF-by-FSTAF}
For exactly two robots under the assumptions above,
\(
  \LU^F\equiv\FC^F\equiv\FS^F.
\)
\end{theorem}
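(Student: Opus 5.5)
By Proposition~\ref{prop:basic-inclusions} we already have $\LU^F\ge\FC^F$ and $\LU^F\ge\FS^F$. It therefore suffices to prove $\FS^F\ge\LU^F$ and $\FC^F\ge\LU^F$. The plan is to give, for each of these, a color- and round-preserving simulation of an arbitrary $\LU^F$ algorithm $\mathcal A$ with color set $C$. Since both robots are activated in every round under \FSY{} and all phases are synchronized, the geometric trajectories of the simulation will coincide round by round with those of $\mathcal A$. Consequently every task predicate $\Phi_P$, which depends only on $\pi_{\mathrm{geom}}$, is preserved.

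\textbf{$\FS^F$ simulates $\LU^F$.} The key observation is that under \FSY{} with two robots, the whole execution is a deterministic function of the initial configuration and the adversary's coordinate choices, and both robots update in lockstep. I would give each $\FS$ robot a private light storing the pair $(c_r,\hat c_{\bar r})$, where $\hat c_{\bar r}$ is its estimate of the peer's current color. Initially both colors are the designated initial color, so the estimate is correct. In each round, robot $r$ evaluates $\mathcal A$ on $(p_r,c_r,\hat c_{\bar r})$ to obtain its own new color and destination. It also needs the peer's new color, which depends on the peer's observation $(p_{\bar r},c_{\bar r},c_r)$. The vector $p_{\bar r}$ is $-p_r$ expressed in the peer's frame, which is unknown because of variable disorientation. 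The main obstacle is therefore to show that $\mathcal A$'s color output is invariant under the peer's rotation and scaling. I would try to avoid this by a preliminary normalization: replace $\mathcal A$ by an equivalent $\LU^F$ algorithm in which each robot first canonicalizes its view. Since the peer's frame is an arbitrary chirality-preserving similarity, $\mathcal A$ must behave correctly for all of them. Its color choice can then be made a function of $(c_r,c_{\bar r})$ together with a frame-invariant reading of $p_r$, namely only the information $p_r=0$ versus $p_r\neq0$ under VD. Computing the color from the canonical frame in which the peer sits at $(1,0)$ makes the peer's new color predictable from $r$'s own data, because $r$ knows the symmetric view up to similarity. After this normalization the update of $\hat c_{\bar r}$ is exact, and induction on rounds gives identical trajectories.

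\textbf{$\FC^F$ simulates $\LU^F$.} Dually, an $\FC$ robot sees the peer's light but not its own. I would have each robot display the pair $(c_r,c_{\bar r})$ on its light, encoding its own color together with the last color it saw on the peer. When $r$ looks, it reads the peer's light $(c_{\bar r},c_r)$ and so recovers its own color $c_r$ from the second component. It then has the full $\LU$ observation, computes $\mathcal A$, and sets its light to (new $c_r$, observed $c_{\bar r}$ updated). The issue is again that the second component must be the peer's new color. I would resolve this by the same canonical-frame prediction used above, or more simply by storing the observed old color and recomputing one round later. Because each robot always sees the other's light, the correct value propagates in lockstep, which yields a one-round-lagged but still exact reconstruction.

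Finally, both simulations are round-preserving and reproduce exactly the same destinations as $\mathcal A$. Hence $\rho\models\Phi_P$ transfers for every execution, so $\Sol(\LU^F)\subseteq\Sol(\FS^F)$ and $\Sol(\LU^F)\subseteq\Sol(\FC^F)$. This is precisely the simulation of~\cite{KSW-2-robot} recalled in Appendix~\ref{app:fsynch-no-cost-simulation}, and I would defer the detailed frame-invariance argument there.
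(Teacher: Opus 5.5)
Your proposal is correct, and its core is the paper's argument. At every Look the simulator normalizes by the orientation-preserving similarity $N_p$ that sends the peer to a fixed vector. Every simulator execution is then an admissible execution of $\mathcal A$ under the virtual frames $N_p\circ\psi$. Both robots also receive geometrically identical inputs ($e$, or $0$ at coincidence), and this is what makes the unseen color computable. Calling the canonicalized algorithm ``equivalent'' to $\mathcal A$ overstates it: it is not behaviorally equivalent, but each of its executions is one of $\mathcal A$'s, which is all you need. Your justification via arbitrary chirality-preserving frames supports exactly that.

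Where you differ is the bookkeeping. You carry the pair $(c_r,\hat c_{\bar r})\in C\times C$ and predict the peer's update from the symmetric input. The paper instead observes that both robots start with the same designated color, so their normalized inputs are literally equal, $(e,c_j,c_j)$ or $(0,c_j,c_j)$. The two colors therefore stay equal forever, and the one color a robot can see (its own in $\FS^F$, the peer's in $\FC^F$) can be fed to $\mathcal A$ in both color slots. In your scheme $\hat c_{\bar r}$ is thus always equal to $c_r$, and the second component is redundant.

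Both routes give $\LU^F\equiv\FC^F\equiv\FS^F$. Yours squares the palette, however, so it is not ``precisely'' the cited simulation and does not give its palette-preserving form (Theorem~\ref{thm:fsynch-no-cost-simulation}).

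Drop the ``one-round-lagged'' alternative for $\FC^F$. With a pair-valued light, a robot recomputing its previous transition would also need the previous peer color and the previous coincidence bit, and neither is displayed. The canonical-frame prediction is the option that actually works.
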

The proof is deferred to Appendix~\ref{app:fsynch-no-cost-simulation}.

\subsection{The Thirteen Classes and Their Signatures}

Combining the basic capability and scheduler inclusions with the
known equivalences and the two-robot full-synchrony collapse above yields
the thirteen computational classes considered below.


To distinguish these classes, we use the six principal tasks\\
\(
  \mathcal B_6=
  \{\mathrm{EPG}^{*},\mathrm{SRO},\mathrm{MCv},
    \CFSM,\VSS,\DMSD\}.
\)
Three tasks deserve explicit specification here.
$\mathrm{EPG}^{*}$ and $\VSS$ are introduced in this paper, while
$\CFSM$ is a collision-free refinement of Single Move.
The remaining separators SRO~\cite{FSW19}, MCv~\cite{KSW-2-robot}, and DMSD~\cite{KSW-2-robot} are taken from earlier
landscape results; their specifications and references are recalled in
Appendix~\ref{app:geometric-tasks}.

\noindent\textbf{Perpetual Expansion Preserving Gravity
($\mathrm{EPG}^{*}$):} 
Let
\(
  c_0=\frac{a(0)+b(0)}2,
\)
and let $C_i=(a_i,b_i)$ be the sequence of jointly stationary
configurations with consecutive repetitions suppressed.
The task requires this sequence to be infinite and, for every $i\ge0$,
\(
  \frac{a_{i+1}+b_{i+1}}2=c_0,
  \;
  b_{i+1}-a_{i+1}=2(b_i-a_i).
\)

\noindent\textbf{Collision-Free Single Move ($\CFSM$):} 
CFSM strengthens Single Move~\cite{FSSW23}: each robot must perform
exactly one nonzero Move, and the robots must never collide. Thus
\(
  \moves{a}=\moves{b}=1,
  \;
  a(t)\ne b(t)\;\text{for all }t\ge0.
\)

\noindent\textbf{Visit Swap and Stop ($\VSS$):} 
Let $a_0=a(0)$ and $b_0=b(0)$, and set
\(
  A_0=2a_0-b_0,\; B_0=2b_0-a_0 .
\)
VSS requires the prescribed exchange phase to reach one of
\(
  (B_0,b_0),\; (a_0,A_0),\; (B_0,A_0),
\)
after which the robots must eventually remain permanently stationary
at one of the four designated configurations determined by
$A_0,a_0,b_0,B_0$. Every Move ends in $\{A_0,a_0,b_0,B_0\}$; the complete
move-count and terminal conditions are given in Appendix~\ref{app:vss}.

Table~\ref{tab:six-separator-signatures} records, for each class,
which of these six tasks are solvable. Thus each row is a six-bit
solvability signature of that class. Since the thirteen signatures
are distinct, $\mathcal B_6$ separates all thirteen computational
classes.

\begin{lemma}[Bounds for the six separators]
\label{lem:six-bounds}
Under the assumptions of this paper, the following hold.
\renewcommand{\labelenumi}{\textcolor{lipicsGray}{\sffamily\bfseries\upshape(\theenumi)}}
\begin{enumerate}
\item\label{lem:six-cge-star-bounds}
  $\mathrm{EPG}^{*}\in\Sol(\OB^F)$ and
  $\mathrm{EPG}^{*}\notin\Sol(\LU^R)$.
\item\label{lem:six-sro-bounds}
  $\mathrm{SRO}\in\Sol(\OB^R)$ and
  $\mathrm{SRO}\notin\Sol(\FC^S)\cup\Sol(\FS^S)$.
\item\label{lem:six-mcv-bounds}
  $\mathrm{MCv}\in\Sol(\OB^S)$ and
  $\mathrm{MCv}\notin\Sol(\FC^A)\cup\Sol(\FS^{A_M})$.
\item\label{lem:six-cfsm-bounds}
  $\CFSM\in\Sol(\FS^A)$ and
  $\CFSM\notin\Sol(\OB^F)\cup\Sol(\FC^S)$.
\item\label{lem:six-vss-bounds}
  $\VSS\in\Sol(\FC^A)$ and
  $\VSS\notin\Sol(\OB^F)\cup\Sol(\FS^R)$.
\item\label{lem:six-dmsd-bounds}
  $\DMSD\in\Sol(\OB^{A_M})$ and
  $\DMSD\notin\Sol(\FS^A)$.
\end{enumerate}
\end{lemma}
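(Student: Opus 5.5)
The lemma consists of twelve claims, six positive and six negative, and I would prove it item by item. Each positive claim gets an explicit algorithm in the weakest stated setting; Proposition~\ref{prop:basic-inclusions} and the known equivalences then carry it upward. Each negative claim gets an adversarial execution in the strongest stated setting, and Proposition~\ref{prop:basic-inclusions} carries it downward. For SRO, MCv and DMSD the six bounds are inherited: the positive and negative results of \cite{FSW19,KSW-2-robot}, recalled in Appendix~\ref{app:geometric-tasks}, remain valid under our observation convention (see the footnote on the two-robot setting). The substantive work is therefore items (\ref{lem:six-cge-star-bounds}), (\ref{lem:six-cfsm-bounds}) and (\ref{lem:six-vss-bounds}).

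\textbf{Positive directions.} For $\mathrm{EPG}^{*}\in\Sol(\OB^F)$, each robot moves to $p+(p-q)/2\cdot$ outward, i.e.\ away from its peer by half the current distance. Under $F$ both robots move rigidly in the same round, so the midpoint is preserved, the separation doubles, and the jointly stationary configurations are exactly the required ones. This rule is expressible in local coordinates because it is invariant under rotation and scaling.

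For $\CFSM\in\Sol(\FS^A)$, each robot uses one private bit. On its first activation it moves a small fraction, say $1/4$, of the observed distance toward its peer, and sets the bit. Afterwards it stays put. Collision is impossible under asynchrony: every destination lies strictly within the quarter of the segment nearest the mover, as observed at Look. The other robot can only shrink the gap by at most a quarter of its own observed distance, and distances observed are never smaller than the current gap in a way that allows overlap. I would check this with a short invariant on the two quarter-segments.

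For $\VSS\in\Sol(\FC^A)$, I would give a color protocol that sequences the exchange. A robot's visible light announces its phase. A robot moves to its reflection point $2p-q$ only after seeing the peer's light confirm that the peer is stationary at its designated point, and it then changes color. This reaches $(B_0,A_0)$ and then stops. Every Move lands in $\{A_0,a_0,b_0,B_0\}$, and asynchrony is handled because each robot reads the peer's committed color, not its position.

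\textbf{Negative directions.} For $\mathrm{EPG}^{*}\notin\Sol(\LU^R)$, I would use the \RSY suffix of alternating singleton activations. When one robot moves alone, the midpoint can be preserved only if the new configuration is a reflection. But each doubling step requires both endpoints to change. A single mover sharing the old position of the peer cannot realize $b_{i+1}-a_{i+1}=2(b_i-a_i)$ with a fixed midpoint. The adversary simply switches to alternation after the FSYNCH prefix.

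For $\CFSM\notin\Sol(\OB^F)$, the two robots are symmetric under the half-turn, with the adversary giving mirrored frames. An oblivious robot that moved once must stay afterwards, but its view after the move is of the same type it had before. I would scale frames so that views coincide, forcing a second move or none. Alternatively, symmetric destinations cross at the midpoint, producing a collision. For $\CFSM\notin\Sol(\FC^S)$, activate only one robot repeatedly: without readable own state, it cannot know it already moved, since the peer's color is unchanged.

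For $\VSS\notin\Sol(\OB^F)$, symmetry plus obliviousness makes the swap phase and the stop phase indistinguishable by view. For $\VSS\notin\Sol(\FS^R)$, the robots need to learn the peer's progress, and private memory does not supply it. Under \RSY the adversary can make a moved robot's view identical to an initial one.

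The main obstacle will be the negative bounds for \VSS and \CFSM. I will try the indistinguishability-by-scaling argument first, carefully matching scaled local views across configurations.
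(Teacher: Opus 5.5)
Your $\mathrm{EPG}^{*}$ argument follows the paper. Your $\CFSM$ algorithm, one move of a quarter of the observed distance \emph{toward} the peer, is a valid alternative to the paper's move \emph{away} by $D/2$: with $a_0=0$, $b_0=1$ one checks that $a(t)\in[0,\tfrac14]$ and $b(t)\in[\tfrac34,1]$ at all times, even under \ASY.

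\textbf{The step that fails is $\VSS\in\Sol(\FC^A)$.} With $p$ the mover and $q$ the peer, as in your $\mathrm{EPG}^{*}$ rule, $2p-q$ sends $a$ to $2a_0-b_0=A_0$. But by Lemma~\ref{lem:p1_first_move}, the first completed nonzero $\M$ of $a$ in any solution must land at $B_0=2q-p$, i.e.\ jump over the peer.

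More seriously, no rule of the kind you describe can guarantee that the execution ``reaches $(B_0,A_0)$ and then stops''. An $\FC$ robot cannot read its own light. So if $a$ jumps to $B_0$ while $b$ stays inactive, its next observation equals its first one up to similarity: peer at distance $d$, peer light still initial and hence still ``confirming'' that $b$ rests at its designated point. It therefore jumps back to $a_0$. Use coordinates $A_0=0$, $a_0=d$, $b_0=2d$, $B_0=3d$, and compare the runs with one and with two such solo jumps. They have the same lights, no pending operation, and configurations differing by the half-turn $x\mapsto 4d-x$ about $b_0$. The adversary can therefore make all later observations coincide. If one run ends at $(3d,0)$, the other ends at $(d,4d)$ with $4d\notin P$.

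This is exactly why $\VSS$ admits all four terminal configurations $\gamma_0,\dots,\gamma_3$, and your proof must use that freedom. The paper's Algorithm~\ref{alg:VSS_FC_A} jumps over the peer exactly when the peer's light is still initial, and otherwise stays. The initial light certifies that the peer has not yet performed its first $\CP$, so it is at rest at its initial position. That is what neutralizes observations during movement; reading a ``committed color instead of the position'' does not, since the destination still uses the observed position. Each robot then fires the jump only at the finitely many $\LK$s before the peer's first $\CP$, and every endpoint lies in $P$. The earliest landing forms $\gamma_1$ or $\gamma_2$, or $\gamma_3$ when the peer is already travelling outward. The parities of the jumps decide which $\gamma_i$ is final.

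\textbf{The negative $\VSS$ bounds also need this forced first move, not bare indistinguishability.} In $\FS^R$, use alternating singleton activations from the start. The robots share a state sequence until the first nonzero $\M$, which Lemma~\ref{lem:p1_first_move} forces to be $a$'s jump to $B_0$, with normalized displacement $(2,0)$. The robot that is fooled is not the one that moved, since its private state records the move. It is the peer $b$: it holds the state $a$ had and sees $a$ at distance $d$, so it replays $(2,0)$ and lands at $4d\notin P$ (Lemma~\ref{lem:p1_replay}). In $\OB^F$, round~1 is forced to be $(2,0)$, producing $(B_0,A_0)$ at distance $3d$. Round~2 is indistinguishable from round~1 and sends the robots to $-3d$ and $6d$.

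\textbf{$\CFSM\notin\Sol(\FC^S)$.} Activating one robot from the start fails when the rule for the initial peer color is zero. First run fully synchronous rounds, which keep the colors equal and the configuration fixed, up to the first color with a nonzero rule. Then activate $a$ alone twice, counting a landing on $b$ as a collision.

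\textbf{SRO, MCv and DMSD.} The paper re-proves these bounds under its own formulations: SRO is read off the joint-stop sequence, and the DMSD impossibility is extended to $\FS$ and to rotating displacements. Calling them inherited requires checking that the cited results cover those formulations.
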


\begin{proof}
Items~(1)--(6) are proved in
Appendices~\ref{app:epgs}, \ref{app:sro}, \ref{app:mcv},
\ref{app:cfs}, \ref{app:vss} and~\ref{app:dmsd}, respectively.
\end{proof}

\begin{theorem}[The thirteen-class order from six signatures]
\label{thm:six-signature-landscape}
Write $\sigma(\mu)=\{P\in\mathcal B_6: P\in\Sol(\mu)\}$. The settings
under consideration form exactly thirteen computational equivalence
classes, represented by the rows of
Table~\ref{tab:six-separator-signatures}, and for any two
representatives $\mu,\nu$,
\[
  \mu\ge\nu
  \quad\Longleftrightarrow\quad
  \sigma(\mu)\supseteq\sigma(\nu).
\]
The certificate is given in Appendix~\ref{app:signature-order-certificate}.
\end{theorem}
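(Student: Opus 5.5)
The plan splits the statement into two directions and proves each from results already available.

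\emph{Soundness of the signature test.} If $\mu\ge\nu$, then $\Sol(\mu)\supseteq\Sol(\nu)$ by definition, so $\sigma(\mu)\supseteq\sigma(\nu)$ immediately. Hence distinct signatures force distinct classes, and the thirteen distinct rows of Table~\ref{tab:six-separator-signatures} give at least thirteen classes.

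\emph{Upper bound on the number of classes.} There are $24$ settings $M^K$ with $M\in\mathcal M$ and $K\in\mathcal S$. I will collapse them to thirteen using:
\begin{itemize}
\item Proposition~\ref{prop:atomicity-equivalences};
\item the cited equivalences $\LU^R\equiv\LU^A\equiv\FC^R$, $\FC^A\equiv\FC^{A_M}$ and $\FC^{A_{LC}}\equiv\FC^S$;
\item Theorem~\ref{thm:equ-FCOMF-by-FSTAF}.
\end{itemize}
Sandwiching with Proposition~\ref{prop:basic-inclusions} then forces further equalities. For example, $\LU^R\le\LU^{S},\LU^{A_{LC}},\LU^{A_M}\le$ (via $\LU^S\le\LU^R$) and $\ge\LU^A\equiv\LU^R$, so all $\LU^K$ with $K\ne F$ coincide. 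Similarly, $\FC^R\ge\FC^S\ge\FC^{A_M}$ and $\FC^{S}\le \LU^S\equiv\FC^R$ handle the remaining $\FC$ classes. I would tabulate the list explicitly and check that every setting is equivalent to one row representative, giving exactly thirteen classes.

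\emph{Converse: $\sigma(\mu)\supseteq\sigma(\nu)\Rightarrow\mu\ge\nu$.} This is the main work. First, each row of the table must be justified. Positive entries follow from the membership half of Lemma~\ref{lem:six-bounds}, pushed upward by Proposition~\ref{prop:basic-inclusions} and the equivalences. Negative entries follow from the impossibility half, pushed downward. For instance, $\VSS\notin\Sol(\FS^R)$ gives $\VSS\notin\Sol(\FS^K),\Sol(\OB^K)$ for all $K\ne F$, and together with $\VSS\notin\Sol(\OB^F)$ this settles the whole $\OB$ row. Second, for each pair with signature inclusion $\sigma(\mu)\supseteq\sigma(\nu)$, I must exhibit $\mu\ge\nu$ through a chain of inclusions from Proposition~\ref{prop:basic-inclusions}, the equivalences, and the collapse. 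Since the positive information is only six bits, the real content is that the Hasse diagram of the known inclusions already realizes every signature inclusion. I will enumerate the cover relations among the thirteen representatives, compute the transitive closure, and compare it with the $13\times13$ signature-inclusion matrix. The comparisons that are not inclusions are exactly the signature incomparabilities, which are separated by the first direction.

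\emph{Expected obstacle.} The hardest step is showing that every signature inclusion is backed by an actual model inclusion. A pair such as $\FC^F$ versus $\LU^R$ has to be handled using the collapse $\FC^F\equiv\LU^F\ge\LU^R$. Cross-model pairs, such as $\FS^{A}$ versus $\FC^{A}$, must turn out to be incomparable in the signatures; otherwise a missing inclusion would break the theorem. If a signature inclusion appears with no derivable chain, I would first look for a cited equivalence that closes it. Failing that, the table row would be incorrect, since no further simulation tools are available. I would therefore perform this check mechanically as the certificate in Appendix~\ref{app:signature-order-certificate}.
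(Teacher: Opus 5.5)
Your proposal is correct and follows the paper's certificate. You collapse the settings to the thirteen rows via the cited equivalences and the full-synchrony collapse, separate those rows by their distinct signatures using the trivial forward direction, and back every signature inclusion by known model inclusions. The paper does your final check more economically: it certifies only the eighteen covering pairs of the signature-inclusion order, using the equivalences only for $\LU^R\equiv\FC^R\ge\FC^S$ and $\FC^A\equiv\FC^{A_M}\ge\OB^{A_M}$, and then applies transitivity rather than comparing the full transitive closure with the whole signature matrix.
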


\begin{theorem}[Six principal separators]
\label{thm:six-principal-separators}
The tasks in \( \mathcal B_6\)
are principal, with
\[
\begin{aligned}
  \Prof(\mathrm{EPG}^{*})&=\upa\OB^F,
  &\Prof(\mathrm{SRO})&=\upa\OB^R,\\
  \Prof(\mathrm{MCv})&=\upa\OB^S,
  &\Prof(\CFSM)&=\upa\FS^A,\\
  \Prof(\VSS)&=\upa\FC^A,
  &\Prof(\DMSD)&=\upa\OB^{A_M}.
\end{aligned}
\]
\end{theorem}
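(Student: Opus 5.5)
The plan is to apply the Boundary certificate (Lemma~\ref{lem:boundary-certificate}) to each task $P\in\mathcal B_6$, with the target upset $U=\upa X_P$ for the claimed generator $X_P$. Since $\Min(\upa X_P)=\{X_P\}$, the positive half of the certificate reduces to $P\in\Sol(X_P)$. This is exactly the positive clause of each item of Lemma~\ref{lem:six-bounds}: $\mathrm{EPG}^*\in\Sol(\OB^F)$, $\mathrm{SRO}\in\Sol(\OB^R)$, $\mathrm{MCv}\in\Sol(\OB^S)$, $\CFSM\in\Sol(\FS^A)$, $\VSS\in\Sol(\FC^A)$, and $\DMSD\in\Sol(\OB^{A_M})$. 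Upward closure (Proposition~\ref{prop:profile-is-upset}) then gives $\upa X_P\subseteq\Prof(P)$.

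For the negative half, I will determine $\Max(\mathcal Q\setminus\upa X_P)$ for each of the six generators, using the thirteen-class order certified in Theorem~\ref{thm:six-signature-landscape} and the rows of Table~\ref{tab:six-separator-signatures}. The expected outcome is:
\begin{itemize}
\item for $\OB^F$, the only maximal class below the complement is $\LU^R$ (the class $\equiv\LU^A\equiv\FC^R$);
\item for $\OB^R$, the maximal classes are $\FC^S$ and $\FS^S$;
\item for $\OB^S$, they are $\FC^A$ and $\FS^{A_M}$;
\item for $\FS^A$, they are $\OB^F$ and $\FC^S$;
\item for $\FC^A$, they are $\OB^F$ and $\FS^R$;
\item for $\OB^{A_M}$, the only one is $\FS^A$.
\end{itemize}
These are precisely the classes appearing in the negative clauses of Lemma~\ref{lem:six-bounds}, so each negative condition of the certificate is supplied by the corresponding item. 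Downward closure of non-membership then finishes each case.

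The main obstacle is verifying that these lists really are the maximal elements of the complements, i.e.\ that no other class lies outside $\upa X_P$ without lying below a listed class. I will do this combinatorially rather than ad hoc. By Theorem~\ref{thm:six-signature-landscape}, $Y\ge X_P$ iff $\sigma(Y)\supseteq\sigma(X_P)$. Under the claimed profiles, $P\in\sigma(Y)$ iff $Y\ge X_P$, and $\sigma(X_P)$ should be exactly the set of separators $P'$ with $X_{P'}\le X_P$. So I will check, row by row in the table, that the column of $P$ is $1$ precisely on $\upa X_P$. Each column check is a consistency statement between Lemma~\ref{lem:six-bounds}, the basic inclusions (Proposition~\ref{prop:basic-inclusions}), and the known equivalences including Theorem~\ref{thm:equ-FCOMF-by-FSTAF}.

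Concretely, for a class $Y\notin\upa X_P$, I will exhibit one of the listed maximal classes $Z$ with $Y\le Z$, using only the scheduler chain and the capability inclusions. For example, for $P=\VSS$, any setting not above $\FC^A$ is either an $\OB$ or $\FS$ setting, hence below $\FS^F\equiv\FS^R$-type classes or $\OB^F$, or it is an $\FC$ setting with a weaker scheduler, which is impossible since $\FC^A$ is the bottom of the $\FC$ chain. Once all thirteen rows are checked for all six columns, the theorem follows.
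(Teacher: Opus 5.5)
Your proposal is correct and is essentially the paper's own argument: you apply the boundary certificate to each task using the positive and negative clauses of Lemma~\ref{lem:six-bounds}, and your six lists of maximal unsolvable classes coincide with the $\Max(\mathcal Q\setminus\upa X)$ entries of Table~\ref{tab:PP13}. One slip in your \VSS{} illustration: $\FS^F$ is not equivalent to $\FS^R$ but, by Theorem~\ref{thm:equ-FCOMF-by-FSTAF}, to $\LU^F\in\upa\FC^A$, so the $\FS$ settings outside $\upa\FC^A$ are only those whose scheduler is not $F$ (all below $\FS^R$); when you argue at the level of settings, it is this collapse, and not the scheduler and capability inclusions alone, that removes $\FS^F$ (and, for \CFSM{}, $\FC^F$) from the complement.
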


\begin{proof}
For each task, the positive and maximal negative bounds stated above,
together with upward closure of solvability, determine exactly the
corresponding principal upset.
\end{proof}

\subsection{Minimum Separation}

\begin{theorem}
\label{thm:min-separators}
Any family of tasks separating all thirteen classes of
$\mathcal Q$ contains at least six tasks. Hence
$\mathcal B_6$ is minimum.
\end{theorem}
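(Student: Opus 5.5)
The plan is to prove a lower bound that holds for arbitrary tasks, not only principal ones. The key point is that separating a pair $X<Y$ in both directions (recovering the order, not just distinguishing classes) already forces a lot of structure. First I fix what ``separating'' means: a family $\mathcal F$ separates $\mathcal Q$ if the signature map $X\mapsto\sigma_{\mathcal F}(X)=\{P\in\mathcal F:X\in\Prof(P)\}$ recovers every inclusion and incomparability. That is, $X\le Y\iff\sigma_{\mathcal F}(X)\subseteq\sigma_{\mathcal F}(Y)$. By Proposition~\ref{prop:profile-is-upset}, each $\Prof(P)$ is an upset. So $\mathcal F$ separates exactly when the upsets $\{\Prof(P):P\in\mathcal F\}$ realize $\mathcal Q$ as a subposet of the Boolean lattice $2^{\mathcal F}$. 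This is an order embedding into a product of $|\mathcal F|$ two-element chains. The minimum size of such a family is therefore at least the $2$-dimension of $\mathcal Q$. More concretely, it is at least the number of distinct ``critical pairs'' that must each be cut by some upset.

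The core combinatorial step uses the structure given by Theorem~\ref{thm:six-signature-landscape}: $\mathcal Q$ is a lattice with six nonleast join-irreducibles $J=\{\OB^F,\OB^R,\OB^S,\FS^A,\FC^A,\OB^{A_M}\}$. For each $j\in J$ let $j_*$ be its unique lower cover. Consider the set $D_j=\{X\in\mathcal Q: j\not\le X\}$; this set contains $j_*$. Since $j\not\le j_*$, some $P\in\mathcal F$ must have $j\in\Prof(P)$ and $j_*\notin\Prof(P)$. I would then show that one task cannot serve two distinct join-irreducibles $j\ne j'$. Suppose $P$ witnesses both. Then $j,j'\in\Prof(P)$ and $j_*,j'_*\notin\Prof(P)$. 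The plan is to exhibit, for each pair $j\neq j'$, a third element $Z$ that yields a contradiction. The natural candidate is $Z=j\wedge j'$ or $j\vee j'_*$: since $\Prof(P)$ is an upset, $j\vee j'_*\in\Prof(P)$. I would check that this forces some element below $j'$ but not below $j'_*$ to lie in an upset missing $j'_*$, contradicting the requirement. This is the step where the specific shape of $\mathcal Q$ matters. If the general lattice argument fails, the fallback is to replace it with a direct check on the fifteen pairs of $J$ using Table~\ref{tab:six-separator-signatures}. For each pair, I would produce a specific incomparability or strict inclusion that one upset cannot witness simultaneously with the other.

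A cleaner alternative I would try first is a pure counting/width argument. I would exhibit inside $\mathcal Q$ a ``standard example''-type configuration, or six pairwise independent critical pairs $(j_*\text{-side},\,j)$. These pairs must satisfy the following: for any upset $U$, at most one pair has $j\in U$ and the corresponding $m_j\notin U$. Here $m_j$ is chosen as the maximal element of $\mathcal Q\setminus\upa j$, when that set has a single maximum. Then $\upa j\subseteq U$ would hold exactly when $m_j\notin U$ and $j\in U$, since $U$ is an upset. The independence condition amounts to requiring, for $j\ne j'$, that $j\le m_{j'}$ or $j'\le m_j$. I would read this off the signature table. This gives $|\mathcal F|\ge 6$.

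The main obstacle is therefore the verification that the six critical pairs are mutually incompatible for a single upset. This is a finite check on the thirteen-element poset, but it must be done for arbitrary upsets, not principal ones. I expect some pairs, such as $\OB^F$ versus $\OB^R$ along the scheduler chain, to be immediate because they lie on a chain. The pairs mixing capability and scheduler axes, for example $\FS^A$ versus $\FC^A$, will need the explicit table rows. Once the bound $|\mathcal F|\ge6$ is established, minimality of $\mathcal B_6$ follows from Theorem~\ref{thm:six-signature-landscape}, which shows that $\mathcal B_6$ does separate.
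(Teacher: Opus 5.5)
\paragraph{Gap in the first route.} The key claim is false: a single upset can witness the lower-cover pairs of two different join-irreducibles. Take $U=\upa\OB^F\cup\upa\FS^A$, which by Theorem~\ref{thm:exact-disjunctive-basis} is the profile of $\mathrm{EPG}^{*}\vee\CFSM$. It contains $\OB^F$ and $\FS^A$. It contains neither $\OB^R$, the lower cover of $\OB^F$, nor $\OB^A$, the lower cover of $\FS^A$; note $\FS^A\not\le\OB^R$ by the $\CFSM$ column of Table~\ref{tab:six-separator-signatures}. The same happens for $\{\OB^F,\FC^A\}$ and other pairs. So neither a lattice-theoretic choice of $Z$ nor the fallback check over the fifteen pairs can give the contradiction you need: the pairs $(j_*,j)$ are simply not pairwise incompatible.

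\paragraph{Gap in the second route.} Your criterion is valid. If $b_i\not\le a_i$ for each $i$, and for $i\ne k$ either $b_i\le a_k$ or $b_k\le a_i$, then no upset contains $b_i,b_k$ while omitting $a_i,a_k$. But the pairs you propose do not exist. By Table~\ref{tab:PP13}, $\mathcal Q\setminus\upa X$ has two maximal elements for each $X\in\{\OB^R,\OB^S,\FS^A,\FC^A\}$. You leave the actual choice of pairs open, and that choice is the entire proof. A choice indexed by the join-irreducibles does exist:
\[
(\LU^R,\OB^F),\ (\FS^S,\OB^R),\ (\FS^{A_M},\OB^S),\ (\OB^A,\FS^A),\ (\FS^R,\FC^A),\ (\FS^A,\OB^{A_M}).
\]
However, five of these six pairs are incomparable. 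A cut in the prescribed direction is then forced only under the order-recovering notion of separation you imposed at the outset. The theorem, with ``separating'' meaning distinct signatures as in the paper, assumes less, so this would prove a weaker statement.

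\paragraph{The missing idea.} Use comparable pairs on a long chain; this is exactly the paper's proof. On
\[
\OB^A<\OB^{A_M}<\OB^S<\OB^R<\FS^R<\LU^R<\LU^F
\]
every profile is an up-interval. So each task makes at most one cut, and $k$ tasks give at most $k+1$ distinct signatures on the chain, forcing $k\ge 6$. In your terms, the consecutive pairs $(c_{i-1},c_i)$ satisfy $c_i\le c_{k-1}$ for $i<k$, so your criterion holds trivially. Because each pair is comparable, any upset distinguishing $c_{i-1}$ from $c_i$ must contain $c_i$ and omit $c_{i-1}$. Mere distinctness of signatures therefore suffices, and neither the lattice structure nor the join-irreducibles are needed.
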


\begin{proof}
The quotient contains the chain
\[
  \OB^A<\OB^{A_M}<\OB^S<\OB^R<\FS^R<\LU^R<\LU^F.
\]
Along a chain, the solvability set of each task is an upper interval
and hence can create at most one cut. Distinguishing the seven
elements therefore requires at least six tasks.
Table~\ref{tab:six-separator-signatures} shows that $\mathcal B_6$ achieves this
bound.
\end{proof}

\subsection{The Minimum Principal Separator Family}

Among principal profiles, the minimum separating family is unique:
its generators are precisely the join-irreducible elements of
$\mathcal Q$.

\begin{proposition}
\label{prop:unique-principal-separators}
The unique minimum family of principal upsets separating all elements
of $\mathcal Q$ is
\(
 \{
 \upa\OB^F,\,
 \upa\OB^R,\,
 \upa\OB^S,\,
 \upa\FC^A,\,
 \upa\FS^A,\,
 \upa\OB^{A_M}
 \}.
\)

Equivalently, the profiles of the six tasks in $\mathcal B_6$
form the unique minimum principal separating family.
\end{proposition}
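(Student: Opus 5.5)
The plan is to combine a simple covering argument, which shows these six principal upsets are \emph{necessary}, with the separation already established in Section~\ref{sec:quotient}, which shows they are \emph{sufficient}.

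\textbf{Sufficiency.} By Theorem~\ref{thm:six-principal-separators}, the six listed upsets are exactly the profiles of the tasks in $\mathcal B_6$. By Theorem~\ref{thm:six-signature-landscape} and the distinctness of the rows of Table~\ref{tab:six-separator-signatures}, these profiles separate all thirteen classes. So the listed family is a separating family of principal upsets of size six. Theorem~\ref{thm:min-separators} shows that no separating family, principal or not, has fewer than six members. Hence the listed family is minimum.

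\textbf{Necessity, via unique lower covers.} Suppose $X\in\mathcal Q$ has a unique lower cover $X_*$ in $\mathcal Q$. I claim that for every $Z\ne X$ we have $X_*\in\upa Z \iff X\in\upa Z$.
\begin{itemize}
\item If $Z\le X_*$, then $Z\le X$.
\item Conversely, if $Z\le X$ and $Z\ne X$, then some maximal chain from $Z$ up to $X$ ends in a lower cover of $X$. That cover must be $X_*$, so $Z\le X_*$.
\end{itemize}
Thus the only principal upset that separates $X$ from $X_*$ is $\upa X$ itself. Consequently every separating family of principal upsets must contain $\upa X$ for every such $X$.

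It then remains to check from the Hasse diagram that each of $\OB^F,\OB^R,\OB^S,\FC^A,\FS^A,\OB^{A_M}$ has exactly one lower cover in $\mathcal Q$. The Hasse diagram is read off from the signature order of Theorem~\ref{thm:six-signature-landscape}: cover relations correspond to signature inclusions with no intermediate signature. For example, the chain used in Theorem~\ref{thm:min-separators} shows $\OB^{A_M}$ covers $\OB^A$, $\OB^S$ covers $\OB^{A_M}$, and $\OB^R$ covers $\OB^S$. For each of the six classes, I will confirm directly in the table that no other class lies immediately below it. Equivalently, each of these classes has a signature whose generators below it consist of itself together with everything below its lower cover, which is the join-irreducible property in the lattice $\mathcal Q$.

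\textbf{Conclusion.} Any separating family of principal upsets therefore contains the six listed upsets. Any minimum such family has size six by Theorem~\ref{thm:min-separators}, so it coincides with the listed family. This gives uniqueness. The ``equivalently'' clause is immediate from Theorem~\ref{thm:six-principal-separators}.

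The main obstacle I expect is the bookkeeping step of verifying unique lower covers. In particular, for $\OB^F$, $\FC^A$ and $\FS^A$ one must make sure that no two incomparable classes both sit immediately below them. This requires carefully enumerating the covers among the thirteen signatures rather than relying only on the single chain used in Theorem~\ref{thm:min-separators}.
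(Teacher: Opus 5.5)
Your proof is correct. The necessity half is the paper's own: for a class $j$ with a unique lower cover $j_*$, only $\upa j$ can distinguish $j$ from $j_*$. The unique lower covers are read directly off Table~\ref{tab:signature-cover-certificate}. The down-set of each of the six classes is a chain; for example, only $\OB^R,\OB^S,\OB^{A_M},\OB^A$ lie below $\OB^F$, and only $\OB^{A_M},\OB^A$ lie below $\FC^A$. So the bookkeeping you flag as the main obstacle is immediate.

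You differ from the paper on sufficiency. The paper first proves that $\mathcal Q$ is a lattice, using the meet/join table and explicit join decompositions of the six remaining nonleast classes. It then applies join-density of join-irreducibles in a finite lattice: if $x\nleq y$, some $j\in J(\mathcal Q)$ has $j\le x$ and $j\nleq y$. You instead note that the six upsets are the profiles of $\mathcal B_6$ and that the thirteen signatures are distinct.

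Your route needs no lattice structure; both halves work in any finite poset. You also need not decide which classes are join-reducible, since "every principal separating family contains these six" together with "these six separate" already pins the family down. For the same reason your appeal to Theorem~\ref{thm:min-separators} is redundant: necessity alone forces at least six members.

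The paper's route gives the structural explanation that the family is $\{\upa j: j\in J(\mathcal Q)\}$, valid in any finite lattice. It also yields the stronger fact that the family detects every non-inclusion $x\nleq y$, so it is an order embedding and not merely separating. In your approach that stronger fact comes separately from Theorem~\ref{thm:six-signature-landscape}.
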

The lattice-theoretic characterization of $\mathcal Q$, including
its join-irreducible elements and width, is given in
Appendix~\ref{app:lattice-structure}.

\begin{table}[t]
  \centering
  \small
  \setlength{\tabcolsep}{5pt}
  \renewcommand{\arraystretch}{1.2}
  \caption{Six-task solvability signatures of the thirteen
computational classes. Each row gives the six-bit signature with
respect to $\mathcal B_6$. An entry $1$ indicates solvability and $0$ indicates impossibility.
   Each representative denotes its entire computational equivalence
    class.}
  \label{tab:six-separator-signatures}
  \begin{tabular}{@{}l*{6}{c}@{}}
    \toprule
    \textbf{Representative}
      & $\mathrm{EPG}^{*}$
      & $\mathrm{SRO}$
      & $\mathrm{MCv}$
      & $\CFSM$
      & $\VSS$
      & $\DMSD$ \\
    \midrule
    $\mathrm{LUMI}^{F}$
      & 1 & 1 & 1 & 1 & 1 & 1 \\
    $\mathrm{OBLOT}^{F}$
      & 1 & 1 & 1 & 0 & 0 & 1 \\
    $\mathrm{LUMI}^{R}$
      & 0 & 1 & 1 & 1 & 1 & 1 \\
    $\mathrm{FSTA}^{R}$
      & 0 & 1 & 1 & 1 & 0 & 1 \\
    $\mathrm{OBLOT}^{R}$
      & 0 & 1 & 1 & 0 & 0 & 1 \\
    $\mathrm{FCOM}^{S}$
      & 0 & 0 & 1 & 0 & 1 & 1 \\
    $\mathrm{FCOM}^{A}$
      & 0 & 0 & 0 & 0 & 1 & 1 \\
    $\mathrm{FSTA}^{S}$
      & 0 & 0 & 1 & 1 & 0 & 1 \\
    $\mathrm{FSTA}^{A_{M}}$
      & 0 & 0 & 0 & 1 & 0 & 1 \\
    $\mathrm{FSTA}^{A}$
      & 0 & 0 & 0 & 1 & 0 & 0 \\
    $\mathrm{OBLOT}^{S}$
      & 0 & 0 & 1 & 0 & 0 & 1 \\
    $\mathrm{OBLOT}^{A_{M}}$
      & 0 & 0 & 0 & 0 & 0 & 1 \\
    $\mathrm{OBLOT}^{A}$
      & 0 & 0 & 0 & 0 & 0 & 0 \\
    \bottomrule
  \end{tabular}
\end{table}

\section{A Principal Task Basis for the Thirteen Classes}
\label{sec:Principal-task-basis}

Section~\ref{sec:quotient} showed that six principal tasks suffice,
and are necessary, to separate the thirteen computational classes.
We now extend this family to one task $B_X$ for every
$X\in\mathcal Q$ and prove that each $B_X$ is principal at $X$:
\(
  (\Prof(B_X)=\upa X ).
\)
Thus the thirteen tasks realize all principal upsets of the quotient.
The required solvability and impossibility bounds are summarized in
Table~\ref{tab:PP13}; full task specifications and proofs are deferred
to the appendices.

\input{pp13_four_columns}

\subsection{Principal Tasks and Boundary Certificates}
\label{subsec:principal-boundary-certificates}

For $X\in\mathcal Q$, let
\(
  \mathcal N_X=\Max(\mathcal Q\setminus\upa X).
\)
By Lemma~\ref{lem:boundary-certificate}, a task $B_X$ is principal at $X$
if and only if $B_X\in\Sol(X)$ and $B_X\notin\Sol(Y)$
for every $Y\in\mathcal N_X$.
For principal profiles these negative bounds may be witnessed
separately; simultaneous common-failure executions are needed only
for disjunctions and are treated in
Section~\ref{sec:exact-disjunctions}.

\subsection{The Principal Task Basis}
\label{subsec:basis-proof-obligations}

Write $\mathcal X_6=\{\OB^F,\OB^R,\OB^S,\FS^A,\FC^A,\OB^{A_M}\}$ for the
generators of $\mathcal B_6$ and
$\mathcal X_7=\mathcal Q\setminus\mathcal X_6$.
Table~\ref{tab:PP13} gives the complete assignment and the corresponding
boundary certificates.

\paragraph*{Boundary results for the seven additional tasks.}
All thirteen tasks use $\mathcal I_\star$.
Their specifications and boundary proofs are collected in
Appendices~\ref{app:geometric-tasks} and~\ref{app:new-tasks},
with references in Table~\ref{tab:PP13}.
For $\mathrm{Cv}$, the entry ``Trivial'' refers only to the
empty negative boundary of the least class $\OB^A$;
the Go-to-midpoint algorithm is proved correct in
Appendix~\ref{app:cv}.

\noindent\textbf{Expansion Preserving Gravity ($\mathrm{EPG}$):} 
EPG is the one-shot version of $\mathrm{EPG}^{*}$: preserving the
initial center of gravity at joint stops, the robots double the
initial segment once and then stop forever.
Its formal specification and bounds are given in
Appendix~\ref{app:epg}.

\noindent\textbf{Oscillating Points with One Move per Robot per Phase
($\mathrm{OSPO}$):}
OSPO is a strengthened version of Oscillating Points~\cite{DFPSY}.
Starting from distinct positions, the robots must alternate
approach and retreat forever without ever colliding.
The first approach phase starts at time~$0$, and every phase
is delimited by joint stops.
Within each phase, the distance is monotone and changes strictly
between its endpoints, and each robot completes exactly one
nonzero $\M$.
The phase boundaries must tend to infinity.
The full specification and boundary proofs are given in
Appendix~\ref{app:osp}.

\begin{lemma}[Bounds for $\mathrm{OSPO}$]
\label{lem:ospo-bounds}
Under the assumptions of this paper,
\(
  \mathrm{OSPO}\in\Sol(\FS^R),
\)
whereas
\(
  \mathrm{OSPO}
  \notin
  \Sol(\OB^F)
  \cup
  \Sol(\FS^S)
  \cup
  \Sol(\FC^S).
\)
\end{lemma}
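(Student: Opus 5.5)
The statement has one positive part and three negative parts, and I would prove them separately.

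\emph{Positive bound $\mathrm{OSPO}\in\Sol(\FS^R)$.} I would give an explicit $\FS$ algorithm with a private light storing the current phase parity (approach/retreat), plus if needed one bit recording whether the robot has already moved in the current phase. Each robot moves toward or away from the midpoint of the observed segment, by a fixed fraction of the current distance: for example, to the point at distance $1/4$ of the observed separation toward the peer when approaching, and the symmetric point away from it when retreating. This is scale- and rotation-invariant, so it is well defined under variable disorientation.

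Under $R$, the execution is either fully synchronous throughout or a fully synchronous prefix followed by alternating singleton activations.
\begin{itemize}
\item In a fully synchronous round both robots move once and then jointly stop, so one round is one phase.
\item In the alternating regime, a phase consists of two consecutive singleton rounds, each robot moving once. The robot flips its parity only after its own move, which is why the stored bit is needed.
\end{itemize}
The delicate point is the transition round from $F$ to alternation, and the fact that a robot cannot tell which regime it is in. So I would design the move rule so that the same local action works in both regimes. Each robot moves by a fixed fraction, the phase is closed after both have moved once, and the monotonicity and strictness of the distance inside a phase then follow. Non-collision holds because the fraction is below $1/2$. The phase boundaries diverge because rounds are discrete and time-divergent.

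\emph{Negative bounds.}
\begin{itemize}
\item For $\OB^F$: the two robots start in symmetric positions and, being oblivious under $F$ with symmetric coordinate systems chosen by the adversary (same chirality, opposite orientation), they compute the same relative move in each round. The next configuration is then a function only of the current segment. An oblivious robot cannot distinguish whether it is in an approach or a retreat phase, since the observation is just $p_r$ up to similarity and hence carries no information. So the algorithm acts uniformly on every observation, and after one phase it must again approach from a similar configuration. It therefore either always approaches or always retreats, contradicting alternation.
\item For $\FS^S$: reuse the technique for Oscillating Points, or the SRO impossibility of Lemma~\ref{lem:six-bounds}(2). The adversary activates one robot at a time. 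Without seeing the peer's state, a robot cannot know whether the peer has already made its single move of the phase. The scheduler arranges that some robot makes two nonzero moves within a phase, or makes a move that breaks the distance monotonicity. Concretely, after $a$ moves (phase-approach), keep activating $a$ while $b$ sleeps. Its light has changed but the geometry has not switched phase, so either $a$ moves again in the same phase (a violation) or it stays idle forever under continued solo activations. The latter can be forced, with fairness maintained by eventually letting $b$ act, to produce a second $a$-move before the joint stop.
\item For $\FC^S$: symmetrically, a robot cannot remember whether it itself has moved, and the peer's light cannot encode this before the peer is activated. I would activate $a$ twice consecutively from identical observations (same position pair after similarity, same peer color). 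Then $a$ makes the same decision twice, so it either never moves or moves twice in the phase. This is a standard indistinguishability argument, as in $\CFSM\notin\Sol(\FC^S)$.
\end{itemize}

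\emph{Main obstacle.} The hardest part is the $\FS^S$ case. Its solo-activation indistinguishability must be carried out carefully, because the robot's private state does change after its move. The argument has to use the fact that the peer's position has not changed, so the robot cannot certify that the phase ended. I would model it on the existing $\mathrm{SRO}\notin\Sol(\FS^S)$ argument.
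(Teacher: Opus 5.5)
Your $\FS^R$ algorithm and your $\OB^F$ argument are essentially the paper's. The algorithm keeps one parity bit, moves a quarter of the observed distance toward or away from the peer, and flips the bit after each own move; the extra ``already moved'' bit is unnecessary. The transition check the paper makes, that after an $\ell$-round synchronous prefix the first tail pair has the type opposite to round $\ell$, is automatic with this flip rule. For $\OB^F$, one similarity is applied every round, so phase-boundary distances are geometric.

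The genuine gap is in the $\FS^S$ case. Your first branch is correct: $a$ moves twice while $b$ sleeps, giving $m_a=2$, $m_b=0$. In the second branch you claim that a second Move of $a$ ``can be forced'' by eventually letting $b$ act. It cannot. Under normalized frames the next state and displacement of an $\FS$ robot depend only on its own state, because $a$ never sees $b$'s color and normalization erases $b$'s motion (Lemma~\ref{lem:silent}). Suppose the orbit of $a$'s state contains no further state with $\lambda\neq0$, as for an algorithm that moves once and then outputs zero for every observation. Then no schedule and no choice of frames makes $a$ move again. Moreover, if $a$ did move again only after $b$'s Move, the phase could legitimately have closed in between, so no move-count violation would follow.

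This branch must instead be closed by a liveness argument. Suppose the normalized state sequence $s_0,s_1,\dots$ contains at most one state with nonzero displacement. Then in a fair execution with normalized frames (the paper takes a fully synchronous one) each robot makes at most one nonzero Move in total. But $\Phi_{\mathrm{OSPO}}$ forces $m_r(t_j)=j$ for every $j$. A collision, or the case with no nonzero output at all, fails trivially. This is exactly the dichotomy of the $\mathrm{SRO}\notin\Sol(\FS^S)$ proof you cite, so following that proof literally would repair your argument.

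There is a smaller omission in the $\FC^S$ case. Two solo cycles of $a$ started from the initial configuration only yield ``never moves or moves twice''. ``Never moves'' is not a violation, since an algorithm may first wait for the peer's light to change before moving. You need the fully synchronous zero-move prefix, along which the two colors stay equal, to reach the first common color $\gamma$ with nonzero output. If no such color exists, that prefix itself gives a motionless fair execution. Only from that point do two solo cycles of $a$ against the frozen peer color $\gamma$ give either a collision or $m_a=2$, $m_b=0$. The $\CFSM$ proof you invoke contains this prefix, but your own sketch does not.
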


\begin{proof}
The solvability algorithm in $\FS^R$ and the three impossibility
arguments are given in Appendix~\ref{app:osp}.
\end{proof}
\noindent\textbf{Swap--Rendezvous ($\mathrm{S\text{-}RDV}$):} 
The two robots must achieve the common exchange first phase of
Appendix~\ref{app:shared-exchange-phase}, namely form one of the three
configurations in which their order on the initial line has been
reversed, and must then gather at a single point and stay there, never
leaving the segment $[A,B]$ determined by the initial positions.
Its exchange phase is shared with $\VSS$; their subsequent
requirements differ. The full specification is
Definition~\ref{def:s-rdv} in Appendix~\ref{app:s-rdv}.


\begin{lemma}[Bounds for $\mathrm{S\text{-}RDV}$]
\label{lem:s-rdv-bounds}
Under the assumptions of this paper,
\(
  \mathrm{S\text{-}RDV}\in\Sol(\FC^S),
\)
whereas
\(
  \mathrm{S\text{-}RDV}
  \notin
  \Sol(\OB^F)
  \cup
  \Sol(\FS^R)
  \cup
  \Sol(\FC^A).
\)
\end{lemma}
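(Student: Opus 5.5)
The statement has one positive bound and three negative bounds, and I would prove them separately. S-RDV is then principal at $\FC^S$ by Lemma~\ref{lem:boundary-certificate}, provided $\mathcal N_{\FC^S}=\{\OB^F,\FS^R,\FC^A\}$.

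\textbf{Positive bound: an $\FC^S$ algorithm.} I would reuse the exchange-phase protocol behind $\VSS\in\Sol(\FC^A)$ from Appendix~\ref{app:vss}. There each robot's light announces its progress, so the peer can tell whether the other has already jumped to its reflected point ($B_0$ or $A_0$). Robots never read their own light in $\FC$, so a robot must infer its own progress from the peer's light together with the observed geometry; the protocol therefore encodes the phase redundantly in the distance, which is $d$, $2d$ or $3d$ for the admissible configurations. Once one of $(B_0,b_0)$, $(a_0,A_0)$, $(B_0,A_0)$ is reached and certified by the lights, I would switch to a rendezvous subroutine inside $[A_0,B_0]$.

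Rendezvous is impossible for oblivious robots under $\SSY$, but it is solvable in $\FC^S$ by the standard light-based algorithm: move to the midpoint only when the peer's color indicates that it is stationary, and otherwise use the colors to break the symmetry of alternation. All moves stay on the segment between the current positions, so the robots never leave $[A_0,B_0]$. The part to check here is that semi-synchronous single activations during the exchange cannot produce an inadmissible intermediate configuration. I would handle this by case analysis on the activation sets, as in the VSS proof.

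\textbf{Negative bounds.} I would derive two of them from known separations and build one directly.
\begin{itemize}
\item $\OB^F$: oblivious robots in full synchrony with variable disorientation and symmetric initial views move symmetrically, and I would argue that this symmetry is preserved. The exchange requires an asymmetric outcome such as $(B_0,b_0)$, or else a symmetric swap to $(B_0,A_0)$ followed by rendezvous. An oblivious robot at $(B_0,A_0)$ sees the same snapshot, up to similarity, as at the start, so it cannot tell whether the exchange has already occurred. It would therefore either never stop or repeat the exchange, which contradicts both gathering and the segment constraint. I would make this precise in the same way as the $\VSS\notin\Sol(\OB^F)$ argument.
\item $\FS^R$: I would reuse the VSS argument that $\FS^R$ cannot achieve the exchange phase. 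The adversary keeps the execution fully synchronous, so the robots' private states stay equal and the same symmetry/indistinguishability argument applies.
\item $\FC^A$: this is the bound that distinguishes S-RDV from VSS, and I expect it to be the main obstacle. The adversary lets the exchange finish and then attacks the gathering phase. In $\ASY$, a robot can act on a stale snapshot of a moving peer, and by the $\FC^A\equiv\FC^{A_M}$ equivalence the adversary is not protected by Move atomicity. I would adapt the classical proof that rendezvous is unsolvable in $\FC^A$-type settings (cf.\ \cite{FSW19,KSW-2-robot}): take the first cycle in which some robot computes a nonzero move toward gathering. The adversary delays that robot's Move until the other robot has itself looked and started moving, and, using variable disorientation, chooses speeds so that they cross without meeting. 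This yields a bivalent-style infinite execution that never gathers, or that forces an exit from $[A_0,B_0]$.
\end{itemize}

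The subtle point in the $\FC^A$ construction is that the peer's light cannot certify that the other robot is stationary, since it may have changed its color at Compute and then moved. I would handle this by a potential argument on the distance: show that the adversary can preserve a nonzero distance forever while remaining fair.
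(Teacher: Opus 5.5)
Your $\OB^F$ argument matches the paper's. The $\FS^R$ bound, however, fails as you state it.

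\textbf{The $\FS^R$ bound.} No fully synchronous adversary can refute $\SRDV$ for $\FS$ robots, because $\SRDV\in\Sol(\FC^S)\subseteq\Sol(\LU^F)=\Sol(\FS^F)$ by Theorem~\ref{thm:equ-FCOMF-by-FSTAF}. Here is an explicit $\FS$ rule that succeeds in every fully synchronous execution:
in state $0$ move to distance $2D$ toward the peer and enter state $1$; in state $1$ move to the midpoint and enter state $2$; in state $2$ stay.
It forms $(B,A)$ with counts $(1,1)$ in round~$1$ and gathers at $\tfrac32 d$ in round~$2$.

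The indistinguishability you borrow from the $\OB^F$ case rests on obliviousness, and a private state destroys it. Also, the $\VSS$ argument you cite is not fully synchronous.

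The paper uses the other branch of $\RSY$: alternating singleton activations with an empty synchronous prefix. Each round activates one robot, and an $\FS$ robot never sees its peer's state. So, under normalized frames, the two robots pass through the same private-state sequence until the first nonzero $\M$, which $a$ makes alone. Lemma~\ref{lem:p1_first_move} forces that $\M$ to be the jump to $B=3d$, i.e.\ $(\lambda,\theta)=(2,0)$. Then $b$, in the same state and with an isomorphic normalized input, replays the displacement from $2d$ and lands at $4d\notin[A,B]$ (Lemma~\ref{lem:p1_replay}).

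\textbf{The positive bound.} The phase cannot be encoded in the distance. Under variable disorientation the adversary chooses the scale of every snapshot, so a robot detects only coincidence. In particular $d$ and $3d$ look identical; these are the only admissible distances, and $2d$ never occurs. This is exactly what the $\OB^F$ bound exploits.

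The encoding is also unnecessary: in the paper's algorithm a robot never needs its own progress.
\begin{itemize}
\item Rule~$0$ (peer light $0$: jump by $2D$ toward the peer) only toggles the robot between its initial point and its outer point, both in $P$.
\item The end of the first round already witnesses $T_1$.
\item Peer lights $1,2,3$ drive a midpoint/stay/go-to-peer rendezvous, frozen by the test $D=0$.
\end{itemize}

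\textbf{The $\FC^A$ bound.} Your adversary sketch is not yet an argument: ``preserve a nonzero distance forever'' restates the goal. Your appeal to $\FC^A\equiv\FC^{A_M}$ also cuts the other way. The equivalence implies that the impossibility must already hold under $A_M$, where no Look falls inside the peer's $\M$, so observing a moving peer cannot be the essential ingredient.

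No new adversary is needed. Every distinct-position rendezvous instance is an $\SRDV$ instance, since $A$ and $B$ are not landmarks, and every execution satisfying $\SRDV$ ends in stable rendezvous. Hence an $\FC^A$ algorithm for $\SRDV$ would solve stable rendezvous in $\FC^A$ under variable disorientation. That contradicts \cite{MSW-rendezvous}, which covers any palette and any algorithm; this is the result you should invoke rather than an unspecified adaptation.
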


\begin{proof}
The solvability algorithm in $\FC^S$ and the three impossibility
arguments are given in
Appendix~\ref{app:s-rdv-bounds}.
\end{proof}

\subsection{Principal-Basis Theorem}

\begin{theorem}[A principal task basis for the thirteen classes]
\label{thm:principal-task-basis}
Let $\mathcal B=\{B_X:X\in\mathcal Q\}$ be the family specified in
Table~\ref{tab:PP13}. Under the model assumptions and observation
convention of this paper, every task $B_X$ has initial-instance space
$\mathcal I_\star$ and satisfies
\(
  \Prof(B_X)=\upa X.
\)
Consequently, $\mathcal B$ is a principal task basis consisting of
thirteen tasks and extending the six-task separating family
$\mathcal B_6$.
\end{theorem}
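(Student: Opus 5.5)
The proof reduces, class by class, to Lemma~\ref{lem:boundary-certificate}. For each $X\in\mathcal Q$ I will check two things about $B_X$. First, $B_X\in\Sol(X)$. Second, $B_X\notin\Sol(Y)$ for every $Y\in\mathcal N_X=\Max(\mathcal Q\setminus\upa X)$. Upward closure of solvability (Proposition~\ref{prop:profile-is-upset}) then gives $\Prof(B_X)=\upa X$.

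For this I need the order $\mathcal Q$ explicitly. Theorem~\ref{thm:six-signature-landscape} provides it through the six-bit signatures of Table~\ref{tab:six-separator-signatures}: $\mu\ge\nu$ iff signature containment holds. From this I can compute $\upa X$ and $\mathcal N_X$ mechanically for all thirteen rows. The statement that every $B_X$ has instance space $\mathcal I_\star$ is read off the task specifications and needs no argument.

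\textbf{The six generators in $\mathcal X_6$.} Here nothing new is required. Theorem~\ref{thm:six-principal-separators} already gives $\Prof(B_X)=\upa X$ for $\mathrm{EPG}^{*},\mathrm{SRO},\mathrm{MCv},\CFSM,\VSS,\DMSD$. This also shows that $\mathcal B$ extends $\mathcal B_6$.

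\textbf{The seven classes in $\mathcal X_7$.} These are $\LU^F,\LU^R,\FS^R,\FC^S,\FS^S,\FS^{A_M},\OB^A$. For each one I will compute $\mathcal N_X$ from the signatures and match it against the impossibility bounds listed in Table~\ref{tab:PP13}. Two examples show the pattern.
\begin{itemize}
\item For $\FS^R$ (task OSPO), the classes not above $\FS^R$ have maximal elements $\OB^F$, $\FS^S$, $\FC^S$. To see this, note that the signature $\{$SRO, MCv, CFSM, DMSD$\}$ must be contained, and every class failing this lies below one of those three. Lemma~\ref{lem:ospo-bounds} then supplies exactly the required positive and negative bounds.
\item For $\FC^S$ (task S-RDV), the analogous computation gives $\mathcal N_{\FC^S}=\{\OB^F,\FS^R,\FC^A\}$. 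This matches Lemma~\ref{lem:s-rdv-bounds}.
\end{itemize}
The remaining cases use the appendix lemmas for the other tasks:
\begin{itemize}
\item EPG for $\LU^F$, whose single maximal non-element I expect to be $\LU^R$ or the pair $\{\LU^R,\OB^F\}$;
\item HET, ZCC and VEC for $\LU^R$, $\FS^S$ and $\FS^{A_M}$ (the assignment to be taken from Table~\ref{tab:PP13});
\item Cv for $\OB^A$, where $\mathcal N_{\OB^A}=\emptyset$, so only Go-to-midpoint correctness in $\OB^A$ is needed.
\end{itemize}
Since upsets are closed upward and the bounds are checked at maxima of the complement, no other incomparable class needs to be examined separately.

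\textbf{Main obstacle.} The lattice bookkeeping is finite and routine. The real difficulty is that the cited bounds from \cite{FSW19,KSW-2-robot,NNW-SSS2025} were proved under possibly different conventions: other observation rules at coincidence, other instance spaces, or more than two robots. Each must be rechecked for exactly two robots under variable disorientation, chirality, rigid moves and instance space $\mathcal I_\star$.

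The most delicate case will be the inherited or modified tasks whose negative bound sits near the collapse of Theorem~\ref{thm:equ-FCOMF-by-FSTAF}, for example EPG versus $\LU^R$. There the impossibility has to survive restriction to two robots, and I would first try to reuse the $\mathrm{EPG}^{*}$ argument in Appendix~\ref{app:epgs}, adapted to the one-shot stopping condition.
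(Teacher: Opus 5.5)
Your proposal is correct and follows the paper's proof. The $\mathcal X_6$ cases are exactly Theorem~\ref{thm:six-principal-separators}. The $\mathcal X_7$ cases are Lemma~\ref{lem:boundary-certificate} applied with the bounds referenced in Table~\ref{tab:PP13}, and $\mathrm{EPG}\notin\Sol(\LU^R)$ comes, as you suggest, from the solo-drift obstruction of Lemma~\ref{lem:solo-drift}(2) shared with $\mathrm{EPG}^{*}$.

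One correction to your hedge: since $\upa\LU^F=\{\LU^F\}$, $\mathcal N_{\LU^F}$ is exactly the pair $\{\OB^F,\LU^R\}$ of lower covers of the top. You therefore also need $\mathrm{EPG}\notin\Sol(\OB^F)$ (Lemma~\ref{lem:epg_not_ob_f}), which relies on the one-shot stopping condition and cannot be inherited from $\mathrm{EPG}^{*}$, because $\mathrm{EPG}^{*}$ is solvable in $\OB^F$.
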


\begin{proof}
If $X\in\mathcal X_6$, the identity $\Prof(B_X)=\upa X$ is exactly
one of the six statements of
Theorem~\ref{thm:six-principal-separators}.

If $X\in\mathcal X_7$, the task-specific results referenced in
Table~\ref{tab:PP13}, including
Corollary~\ref{cor:epg_profile}, establish solvability at $X$ and
impossibility at every class in $\mathcal N_X$. Applying
Lemma~\ref{lem:boundary-certificate} gives
$\Prof(B_X)=\upa X$. This also covers $X=\OB^A$, where
$\mathcal N_X=\varnothing$.

Since $\mathcal Q=\mathcal X_6\mathbin{\dot\cup}\mathcal X_7$ and
all tasks use $\mathcal I_\star$, the conclusion follows.
\end{proof}

The six-task family determines the computational order, whereas the
thirteen-task basis realizes every principal upset. These are distinct
roles: the minimality of the former does not make the seven additional
principal profiles redundant. Nor do the individual identities above
establish exactness of task disjunctions. Section~\ref{sec:representative-common-failure} presents representative common-failure constructions,
and Section~\ref{sec:exact-disjunctions} establishes the required property for every
antichain-indexed family from $\mathcal B$, yielding
\(
  \Prof\left(\bigvee_{X\in A}B_X\right)
  =\bigcup_{X\in A}\upa X
  \;\text{for every }A\in\Ant(\mathcal Q).
\)

\section{Representative Common-Failure Construction}
\label{sec:representative-common-failure}

Section~\ref{sec:Principal-task-basis} established each principal profile
individually. Refuting a disjunction requires one admissible fair execution
on which all selected predicates fail. We fix an arbitrary deterministic
algorithm $\mathcal A$ and an initial instance $\iota\in\mathcal I_\star$;
all predicates refer to this same instance, and execution blocks preserve
the reached states and pending operations without reset.

For the $\FS$ constructions below, \emph{normalization} means presenting
the peer at $e=(1,0)$ at every Look with distinct positions, using the
rotations and rescalings permitted by VD. The destination in these
normalized coordinates and the next private state then depend only on
the current private state. Finite safety violations persist under
extension; liveness violations require a suitable fair infinite continuation.

\subsection{Disjunctions with Perpetual Expansion}
\label{subsec:cf-expansion}

We group the pairs $\mathrm{EPG}^{*}\vee P$ with
\(
  P\in\mathcal C_{\mathrm{exp}}
  :=\{\mathrm{HET},\mathrm{OSPO},\SRDV,\mathrm{ZCC},\VSS,\mathrm{VEC},\CFSM\}.
\)
Let $C_0,C_1,\ldots$ be the joint-stop configurations in chronological
order with consecutive repetitions removed, and put
$c_0=\tfrac12(a(0)+b(0))$: $\mathrm{EPG}^{*}$ asks
this sequence to be infinite and every step of it to keep the center of gravity
at $c_0$ while doubling the segment. Keeping the center of gravity is exactly
what rules out a solo expansion, and this is the component shared by all seven
pairs.

\begin{lemma}
\label{lem:solo-drift}
Let $\rho$ be an admissible execution with joint-stop sequence $(C_i)_{i\ge0}$.
\begin{enumerate}
\item If one of the two robots completes no nonzero $\M$ between the joint stop
  realizing $C_i$ and the one realizing $C_{i+1}$, then
  \(
    \tfrac12(a_{i+1}+b_{i+1})\neq\tfrac12(a_i+b_i).
  \)
\item Let $K\in\{R,S,A_{LC},A_M,A\}$. For every algorithm there is a fair
  execution in which the $\LCM$ cycles of the two robots are pairwise disjoint
  in time. In such an execution either no nonzero $\M$ ever occurs, in which
  case $(C_i)$ has the single term $C_0$ and the perpetuity condition fails, or
  $\tfrac12(a_1+b_1)\neq c_0$ and the expansion condition fails at $i=0$.
  In both cases $\Phi_{\mathrm{EPG}^{*}}$ fails, and so does
  $\Phi_{\mathrm{EPG}}$.
\end{enumerate}
\end{lemma}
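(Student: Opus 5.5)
The proof has two parts, handled in order. Part~1 is a purely geometric statement about a single step of the joint-stop sequence. Part~2 builds one sequential schedule that every scheduler in the list admits, and then applies Part~1 to the first step of that schedule.

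\emph{Part 1.} Fix the joint stop at time $t_i$ realizing $C_i$ and the one at time $t_{i+1}>t_i$ realizing $C_{i+1}$. Suppose robot $r$ completes no nonzero $\M$ in $[t_i,t_{i+1}]$. First I will show that $r$ does not move at all in this interval.
\begin{itemize}
\item At $t_i$ both robots are at rest, so no nonzero $\M$ of $r$ is in progress.
\item At $t_{i+1}$ both robots are again at rest. By the convention $\Stop_r(t)\iff\lnot\Move_r(t)$ and rigidity, any nonzero $\M$ of $r$ that started after $t_i$ must have ended before $t_{i+1}$, that is, it would have been completed. There is none.
\item Hence $r$ is inside no nonzero $\M$ throughout $[t_i,t_{i+1}]$, so $r(t_{i+1})=r(t_i)$.
\end{itemize}
Because consecutive repetitions are suppressed, $C_{i+1}\ne C_i$. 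The coordinate of $r$ is unchanged, so the coordinate of $\bar r$ must differ. Therefore $\tfrac12(a_{i+1}+b_{i+1})-\tfrac12(a_i+b_i)=\tfrac12(\bar r(t_{i+1})-\bar r(t_i))\neq0$.

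\emph{Part 2.} I will use the \emph{sequential} schedule, the same for every algorithm and every model in $\mathcal M$.
\begin{itemize}
\item In round $2k$, only $a$ performs a complete $\LCM$ cycle, with unit phase durations. In round $2k+1$, only $b$ does.
\item This schedule is fair and time-divergent.
\item Under $S$ it is an ordinary semi-synchronous execution.
\item Under $R$ it is the allowed form with an empty fully synchronous prefix followed by alternating singleton activations. I will check that the definition of \RSY{} permits an empty prefix; if it does not, I will argue that $R$ still admits the schedule.
\item Its cycles are pairwise disjoint and each is atomic, so it is also admissible under $A_{LC}$, $A_M$ and $A$, since $\LCM$-atomic executions satisfy every weaker atomicity constraint.
\item The adversary may additionally fix arbitrary VD frames; the frames play no role in the argument.
\end{itemize}

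Now split into two cases.
\begin{itemize}
\item \emph{No nonzero $\M$ ever occurs.} Then the configuration stays $C_0$ forever, so $(C_i)$ consists of $C_0$ alone. The perpetuity requirement of $\mathrm{EPG}^{*}$ fails, and the doubling required by EPG never happens.
\item \emph{Some nonzero $\M$ occurs.} Let the first one be performed by $r$ in a cycle ending at time $t$. Every instant between cycles is a joint stop. Before this cycle the configuration is $C_0$, and at $t$ it differs from $C_0$, so it is $C_1$. The peer $\bar r$ completed no nonzero $\M$ in between. Part~1 then gives $\tfrac12(a_1+b_1)\ne c_0$, which violates the $\mathrm{EPG}^{*}$ condition at $i=0$.
\end{itemize}

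For EPG, I will check against its specification in Appendix~\ref{app:epg} that it likewise requires the first changed joint-stop configuration to keep the centre of gravity at $c_0$ (and to have the doubled segment). Given that, the same argument refutes $\Phi_{\mathrm{EPG}}$.

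\emph{Main obstacle.} I expect the delicate points to be definitional rather than computational. The first is the admissibility of the single sequential schedule under each of $R,S,A_{LC},A_M,A$, in particular the form of \RSY{} discussed above. The second is the exact phase-time conventions of Appendix~\ref{app:model-conventions-predicate}. Those conventions are needed to guarantee two things: that the instants between disjoint cycles count as joint stops, and that a nonzero $\M$ cannot straddle a joint stop. Part~1 depends on both.
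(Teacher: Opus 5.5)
Your proposal is correct and follows the paper's proof essentially step for step. Part~1 is the same midpoint computation; you additionally justify, from the joint-stop property at both endpoints, that the robot completing no nonzero $\M$ stays put, which the paper takes for granted. Part~2 uses the same alternating sequential schedule, which is admissible under $R$ because the fully synchronous prefix of \RSY{} may be empty (Lemma~\ref{lem:altA} allows $p=0$), and your remaining hedge also resolves as expected, since Definition~\ref{def:epg} requires $\tfrac12(a_1+b_1)=c_0$ at the first step, so the same argument refutes $\Phi_{\mathrm{EPG}}$.
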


For (1), only one endpoint changes between two distinct joint-stop
configurations, so their midpoints differ. For (2), alternate complete
cycles of $a$ and $b$, assigning each cycle unit duration. The execution
is fair and time-divergent, belongs to $R$ and $S$, and respects every
atomicity constraint because cycles do not overlap. Either no robot
moves and progress fails, or the first nonzero $\M$ changes the midpoint
by (1). Formal timing details are in Appendix~\ref{app:solo-drift}.

\paragraph*{Combining the obstruction with another failure.}
A solo first nonzero $\M$ suffices to violate $\mathrm{EPG}^{*}$; later
cycles need not remain serialized. To refute $\mathrm{EPG}^{*}\vee P$,
we must also violate $P$ on that same execution, from the state actually
reached. An admissible fair tail preserves finite safety violations,
whereas a liveness failure requires a specifically constructed fair
continuation. The next proposition supplies a common execution for
both exchange tasks.

\begin{proposition}[Perpetual expansion and exchange tasks]
\label{prop:cf-expansion-exchange}
For every deterministic algorithm in $\FS^R$, there is an admissible
fair execution that violates $\mathrm{EPG}^{*}$,
$\mathrm{S\text{-}RDV}$, and $\VSS$ simultaneously.
In particular, for each
$E\in\{\mathrm{S\text{-}RDV},\VSS\}$,
\[
  \mathrm{EPG}^{*}\vee E\notin\Sol(\FS^R).
\]
\end{proposition}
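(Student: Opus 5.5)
We fix an arbitrary deterministic algorithm $\mathcal A$ in $\FS^R$ and an instance $\iota=(a_0,b_0)\in\mathcal I_\star$. The execution will be an \RSY{} execution with an empty (or at most trivial) fully synchronous prefix followed by alternating singleton activations. By Lemma~\ref{lem:solo-drift}(2), such a serialized execution already violates $\mathrm{EPG}^{*}$, whatever continuation we choose, because either no nonzero $\M$ ever occurs or the first one is solo and shifts the midpoint. So the work is to choose the normalization of local frames along this alternating schedule so that S-RDV and VSS also fail. Both tasks share the exchange phase: before anything else, the robots must reach one of $(B_0,b_0)$, $(a_0,A_0)$, $(B_0,A_0)$.

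\textbf{Reducing to private-state dynamics.} We normalize every Look so that the peer appears at $e=(1,0)$, as long as positions are distinct. Under $\FS$ the normalized destination and the next private state then depend only on the current private state. Each robot is therefore a deterministic finite automaton whose outputs are points $\delta(q)\in\mathbb R^2$ in units of the current separation, measured toward the peer. Under alternating singleton activations, the pair of private states evolves deterministically. The adversary keeps one freedom: the chirality-preserving rotation is fixed by the normalization, but we may still pick, among symmetric options, which robot goes first. I plan to exploit the symmetry of the two robots. Both start in the same state, and the first solo move of $a$ maps the collinear configuration to one whose shape is determined by $\delta(q_0)$.

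\textbf{Killing the exchange phase.} The key invariant I would aim for is that, in the alternating solo execution, the robots never \emph{simultaneously} occupy a reversed configuration in which both endpoints lie in $\{A_0,B_0,a_0,b_0\}$ with order reversed. Reaching $(B_0,A_0)$ requires both robots to move. Reaching $(B_0,b_0)$ means $a$ jumps over $b$ to distance exactly $|b_0-a_0|$ beyond it.

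I would argue by a case analysis on the first nonzero solo move, say by $a$ in state $q$ with normalized destination $\delta$:
\begin{itemize}
\item If $\delta=2e$, then $a$ lands at $B_0$ and $(B_0,b_0)$ is reached. In that case we instead schedule $b$ first, symmetrically, and let $a$ move second from a \emph{different} relative geometry. Since $b$ has already moved, $a$ must aim from its new frame, which is scaled relative to the initial one, so the absolute landing points no longer match the designated grid.
\item More robustly, we can use the freedom to start with a finite fully synchronous prefix, in which both robots move identically and symmetrically. This moves the configuration off the designated points before any serialization. Exact targets such as $A_0,B_0$ are defined from the initial instance, so a symmetric \FSY{} round that changes the scale by a factor $\neq 1,2,3$ (as determined by the algorithm) makes later solo moves land on scaled grids.
\end{itemize}
Once the exchange phase is missed forever, both S-RDV and VSS fail. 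This is a liveness failure: they require eventually reaching the exchange configuration. Alternatively, if the exchange is reached, we show that the subsequent behaviour (gathering for S-RDV, stopping at designated points for VSS) cannot both be met. In fact we only need each one to fail, and the algorithm's single deterministic trajectory can satisfy at most one of these incompatible terminal requirements. So we fall back to the impossibility arguments of Lemma~\ref{lem:s-rdv-bounds} and Lemma~\ref{lem:six-vss-bounds}, both of which are in $\FS^R$, applied from the reached state.

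\textbf{Main obstacle.} The hard step is making the separate $\FS^R$ impossibility witnesses for S-RDV and VSS coincide with a single serialized (or prefix-then-serialized) execution. The natural approach is to inspect those witnesses. I expect they are themselves alternating-singleton executions with normalized frames, in which an $\FS$ robot cannot learn whether its peer has moved. If so, one of them can be adopted verbatim, and Lemma~\ref{lem:solo-drift}(2) gives the $\mathrm{EPG}^{*}$ failure for free.

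I would try first to prove that the VSS counterexample in $\FS^R$ already serializes cycles and also refutes S-RDV. The reason is that S-RDV's gathering needs the peer to learn that the swap completed, which requires communication absent in $\FS$, exactly as VSS's stop coordination does. The disjunction claim then follows, since one execution falsifies every disjunct.
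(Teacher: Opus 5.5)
Your setup matches the paper's: alternating singleton activations with an empty synchronous prefix, normalized frames so that both robots run the same private-state sequence until the first nonzero $\M$, and Lemma~\ref{lem:solo-drift} for $\mathrm{EPG}^{*}$. The exchange-task part, however, has a genuine gap. The invariant you aim for (never reaching a reversed configuration) is false, and it is not what is needed.

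If the first nonzero displacement is $2e$, then $a$ lands at $B_0=3d$, and $(B_0,b_0)$ with counts $(1,0)$ is a valid exchange time $T_1$. Scheduling $b$ first changes nothing: by symmetry it produces $(a_0,A_0)$ with counts $(0,1)$. The failure comes \emph{after} $T_1$. The separation is still $d$; nothing is rescaled, the peer is merely on the other side. Since $b$ holds the state $a$ had, its normalized input is identical, so it also applies $2e$ and lands at $2d+2d=4d$. This violates the containment in $[A,B]$ that \SRDV imposes throughout the execution, and the endpoint set $\{0,d,2d,3d\}$ that \VSS imposes on every Move. Both are safety violations, not a liveness failure of the exchange phase.

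For every other first landing you also owe an argument that you do not give. From then on $m_a\ge1$, so $(a_0,A_0)$ is excluded. Both $(B_0,b_0)$ and $(B_0,A_0)$ would need $a$ at $B_0$ with $m_a(T_1)=1$. That requires a second completed Move, because $B_0$ is an endpoint of $[A,B]$ and cannot be interior to an admissible movement (Lemma~\ref{lem:p1_first_move}).

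Your fallbacks do not close this gap. A synchronous prefix does not let the adversary choose the scale factor. Under rule $2e$, one synchronous round has $c_2=3$ and lands exactly on $(B_0,A_0)$ with counts $(1,1)$, satisfying the exchange phase rather than avoiding it. Next, ``the trajectory can satisfy at most one of the terminal requirements'' only shows that one of \SRDV, \VSS fails. The separate $\FS^R$ impossibility lemmas produce witnesses from the initial configuration with the designated initial states, not from an arbitrary reached state. So invoking them ``from the reached state'' is not justified, and two separate witnesses would not give the single common-failure execution the proposition asks for.

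Your closing guess, that the two $\FS^R$ witnesses coincide on one serialized execution, is correct. The reason, though, is the replay at $4d$ in the exchange phase (Lemma~\ref{lem:p1_replay}), not terminal coordination requiring communication. That replay, together with the first-move uniqueness above, is the argument you would actually have to write out.
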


\begin{proof}
Use coordinates on the initial line with $a_0=d$, $b_0=2d$, $d>0$,
so the outer points are $0,3d$ and $c_0=3d/2$. Activate $a,b$
alternately, completing each cycle before the next one starts; this is
an admissible fair $R$ execution with an empty fully synchronous prefix.
Under normalized observations, the two robots traverse the same private
state sequence until the first nonzero $\M$.

If no such $\M$ occurs, all three tasks fail their progress requirements.
Otherwise $a$ performs the first nonzero $\M$ alone, changing the center
of gravity at the next joint stop and violating $\mathrm{EPG}^{*}$.
The shared exchange phase requires one of $(3d,2d),(d,0),(3d,0)$ with
cumulative nonzero-move counts $(1,0),(0,1),(1,1)$, respectively.
Thus this first solo $\M$ of $a$ must land at $3d$: any other landing
would require another nonzero $\M$ of $a$ to reach an admissible
exchange configuration, exceeding its allowed count.

If $a$ lands at $3d$, then $b$, at its next activation, has the same
private state as $a$ had before that $\M$ and receives the same normalized
input. Hence it performs the same normalized displacement, which in its
current physical frame is twice its current distance to the peer,
and therefore lands at $2d+2d=4d$. This lies outside the segment $[0,3d]$ required
by $\SRDV$ and outside the allowed endpoint set $\{0,d,2d,3d\}$ of $\VSS$.
In either case both exchange tasks fail irreversibly. Continue with fair
alternating activations; all three violations persist.
\end{proof}

The remaining companions in $\mathcal C_{\mathrm{exp}}$ use the same solo
expansion obstruction, with task- and class-specific prefixes or fair tails.
These applicability and continuation arguments are recorded in
Appendix~\ref{app:cf-expansion-family}. The complete boundary coverage and the
resulting profile equalities are assembled in
Section~\ref{sec:exact-disjunctions}; the obstruction alone does not supply
those companion-task arguments.

\subsection{Shrinking Rotation and Exchange Tasks}
\label{subsec:cf-sro-exchange}

Let $\mathcal E=\{\SRDV,\VSS\}$. The two tasks share the exchange phase, so a construction that defeats the
shared phase defeats both at once.

\begin{proposition}[Shrinking rotation and exchange tasks]
\label{prop:cf-sro-exchange-family}
For every deterministic algorithm in $\FS^S$ there is an admissible fair
execution that violates $\mathrm{SRO}$ and both tasks in $\mathcal E$
simultaneously. Consequently
$\mathrm{SRO}\vee E\notin\Sol(\FS^S)$ for every $E\in\mathcal E$.
\end{proposition}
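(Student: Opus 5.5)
We mirror the proof of Proposition~\ref{prop:cf-expansion-exchange}, with the serialized alternating schedule replaced by a semi-synchronous schedule. We fix an arbitrary deterministic algorithm $\mathcal A$ in $\FS^S$ and an instance on the initial line with $a_0=d$, $b_0=2d$, so that $A_0=0$, $B_0=3d$. We normalize every Look, presenting the peer at $e=(1,0)$. While positions are distinct, each robot's normalized destination and next state then depend only on its private state.

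\textbf{First phase: the prefix up to the first nonzero Move.} We activate both robots in every round, that is, fully synchronously. By symmetry and normalization, both robots traverse the same private-state sequence. If no nonzero $\M$ ever occurs, the exchange phase never completes, so both tasks in $\mathcal E$ fail. SRO's progress requirement also fails, since the configuration never changes and SRO requires perpetual rotation with shrinking. Otherwise, let $s$ be the common state at the first round whose computed normalized displacement $v\neq 0$.

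\textbf{Second phase: case analysis on $v$.} We choose, at that round, which robots the $\SSY$ scheduler activates.
\begin{itemize}
\item If $v$ is not collinear with $e$, or has a component leaving the set $\{A_0,a_0,b_0,B_0\}$, we activate $a$ alone. It lands off the allowed point set, which is a finite safety violation of both exchange tasks.
\item If the solo landing is the exchange point $3d$ (that is, $v=2e$), we do not activate $a$ alone. Instead we activate both robots. By central symmetry they land at $4d$ and $-d$, outside $[0,3d]$. This is again a safety violation of both $\mathcal E$-tasks.
\end{itemize}
More generally, among \{solo activation, joint activation\} we pick the one whose resulting configuration is not one of the admissible exchange configurations $(B_0,b_0),(a_0,A_0),(B_0,A_0)$ with the correct move counts. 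A joint move produces a configuration symmetric about $c_0$. A solo move keeps one endpoint fixed. These two kinds of outcome cannot both be admissible with the prescribed counts, except possibly for the joint symmetric landing at $(B_0,A_0)$, which needs $v=e$. In that remaining case ($v=e$) we activate $a$ alone instead. Then $a$ lands at $b_0$, a collision, and $a$ has used its one move without reaching $B_0$. Either way, both exchange tasks are violated irreversibly, with the exact counting taken from Appendix~\ref{app:vss} and Definition~\ref{def:s-rdv}.

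\textbf{Third phase: defeating SRO on the same execution.} This is the main obstacle, because SRO's failure is a liveness failure and so needs a fair infinite continuation from the state actually reached. We rely on the impossibility argument of Lemma~\ref{lem:six-bounds}(2) for $\FS^S$ (Appendix~\ref{app:sro}). That argument proceeds by adversarial activation and normalization to prevent a perpetually shrinking rotation. We plan to check that it applies from an arbitrary reachable configuration and state pair, not only from the designated initial one. Since the exchange-task violations are safety violations, they persist under any tail. Hence we append the SRO-defeating schedule from Appendix~\ref{app:sro}, made fair by alternating activations where needed. If that argument is instance-dependent, we will instead make SRO fail already in the prefix. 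For example, if SRO forbids collisions or requires a strict distance decrease at the first move, our branch choices above can be biased toward the expanding or colliding option, which also violates SRO. This yields a single admissible fair execution violating all three predicates, and hence $\mathrm{SRO}\vee E\notin\Sol(\FS^S)$.
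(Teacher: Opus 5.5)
Your second phase breaks down in exactly the case that matters, $v=2e$, i.e.\ normalized rule $(\lambda,\theta)=(2,0)$. A simultaneous activation from $\iota$ does not send the robots to $4d$ and $-d$. Both Look at the initial configuration, so $a$ goes to $d+2d=3d$ and $b$ goes to $2d-2d=0$. That is $(B_0,A_0)$ with counts $(1,1)$, an \emph{admissible} exchange configuration. Your remark is also off: the joint landing at $(B_0,A_0)$ requires $v=2e$, not $v=e$; for $v=e$ the joint move only swaps the robots to $(b_0,a_0)$. A solo move of $a$ with $v=2e$ gives $(B_0,b_0)$ with counts $(1,0)$, which is admissible as well. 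So at $v=2e$ both of your candidate first steps comply with the exchange phase, and your ``pick the non-admissible option'' rule has nothing to pick. Concretely, for the $\FS$ algorithm ``jump $2D$ toward the peer once, then stay'', your fully synchronous schedule reaches $(B_0,A_0)$ and stops there forever, which satisfies $\VSS$.

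The missing idea is the replay obstruction (Lemma~\ref{lem:p1_replay}). Activate $a$ alone, so that it lands at $3d$, and then activate $b$ alone. After your synchronous prefix, $b$ is in the same private state $s$. An $\FS$ robot cannot see its peer's state, and normalization presents the peer at distance $d$ exactly as before. Hence $b$ repeats the displacement and lands at $4d\notin[A,B]\cup P$, which defeats both exchange tasks; these two solo rounds are admissible under $\SSY$. The same solo move of $a$ gives $d_1=-d_0$, which satisfies neither $\mathbf{C1}$ nor $\mathbf{C2}$, so $\mathrm{SRO}$ fails there as well, with no liveness argument needed.

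Your third phase is only a plan. It can be made to work, because the argument of Lemma~\ref{lem:sro-not-fs-s} uses only autonomous state orbits under normalized frames and solo activations. It therefore restarts from any reached configuration with distinct positions and no pending operation, provided you treat the two robots' possibly different orbits separately. You need to say this explicitly. The paper avoids the issue by handling $\mathrm{SRO}$ inline when the rule is not $2I$: either the first move already violates it, or a second nonzero output of $a$ triggers Lemma~\ref{lem:cf-sro-two-moves}, or $b$'s first nonzero move is followed by zero outputs forever, violating perpetuity.
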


\begin{proof}[Proof sketch]
We first recall the geometric obstruction. SRO permits a clockwise
$90^\circ$ rotation of the segment, or a clockwise $45^\circ$ rotation
with scale $1/\sqrt2$. With the peer fixed at $(0,0)$ and the moving
robot initially at $(-1,0)$, the four possible two-step choices place
the moving robot at one of
\(
  (1,0),\; (1/2,1/2),\; (0,1/2).
\)
All lie outside the square with the initial segment as diagonal,
contrary to SRO's containment condition. Thus two consecutive solo
nonzero moves defeat SRO (Lemma~\ref{lem:cf-sro-two-moves}).

Use the same initial positions and normalized observations as in
Proposition~\ref{prop:cf-expansion-exchange}. If the initial state orbit
contains no nonzero output, a fair alternating execution never moves
and violates all three tasks. Otherwise activate $a$ alone until its
first nonzero $\M$; $S$ permits any finite number of solo activations.
If $a$ lands at $3d$, the segment reverses without changing length,
violating SRO. Activating $b$ through its preceding zero-output states
then reproduces the move to $4d$, violating both exchange tasks.

For any other landing, both exchange tasks have already failed.
Unless SRO also fails immediately, the robots remain distinct.
If $a$ has another nonzero output, execute it with $b$ fixed and apply
the two-move obstruction. Otherwise let $b$ perform its corresponding
first nonzero $\M$. Either this violates SRO, or both robots now follow
the same zero-output state suffix. Fair alternating activations then
leave the configuration stationary forever, violating SRO's perpetuity.
In the finite-violation cases, append a fair alternating suffix.
Appendix~\ref{app:cf-sro-family} gives the complete argument.
\end{proof}

This covers the entire maximal negative boundary of $\mathrm{SRO}\vee\VSS$ and
the $\FS^S$ boundary of $\mathrm{SRO}\vee\SRDV$; the latter pair also has the
boundary class $\FC^A$, treated in Appendix~\ref{app:cf-sro-family}.






\section{Exact Disjunctions via Common-Failure Executions}
\label{sec:exact-disjunctions}

Section~\ref{sec:Principal-task-basis} established the individual
principal profiles, and Section~\ref{sec:representative-common-failure}
exhibited representative executions violating several task predicates
simultaneously. We now combine those constructions with the remaining
boundary cases in the appendix to prove exactness for every
antichain-indexed disjunction. The argument separates the execution
constructions from the order-theoretic step that identifies the
resulting profiles.

\subsection{Reduction to Maximal Unsolvable Classes}
\label{subsec:exactness-boundary-reduction}

For every $A\in\Ant(\mathcal Q)$, write
\(
  B_A=\bigvee_{X\in A}B_X,
  \;
  U_A=\bigcup_{X\in A}\upa X,
  \;
  \mathcal N_A=\Max(\mathcal Q\setminus U_A).
\)
Principality immediately gives $U_A\subseteq\Prof(B_A)$: a class in
$U_A$ can solve one of the disjuncts. The converse does not follow
from the individual impossibility results. It requires one execution
on which all the selected predicates fail.

\begin{lemma}[Exactness from boundary common failure]
\label{lem:boundary-common-failure}
Let $A\in\Ant(\mathcal Q)$. Suppose that, for every $Y\in\mathcal N_A$
and every deterministic algorithm $\mathcal A$ in $Y$, there are an
initial instance $\iota\in\mathcal I_\star$ and an admissible fair
execution $\rho$ of $\mathcal A$ from $\iota$ such that
\(
  \rho\not\models\Phi_{B_X}
  \;\text{for every }X\in A,
\)
where $\Phi_{B_X}$ denotes the temporal predicate of the principal
task $B_X$ of Table~\ref{tab:PP13}.
We then say that $\{B_X:X\in A\}$ has the \emph{common-failure
property} in $Y$. Under this hypothesis, $\Prof(B_A)=U_A$.
\end{lemma}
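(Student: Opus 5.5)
The plan is a two-inclusion argument that combines Proposition~\ref{prop:disjunction-easy-direction}, Theorem~\ref{thm:principal-task-basis}, and the boundary reduction of Lemma~\ref{lem:boundary-certificate}.

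\emph{Easy inclusion.} By Theorem~\ref{thm:principal-task-basis}, $\Prof(B_X)=\upa X$ for every $X\in A$. Proposition~\ref{prop:disjunction-easy-direction} then gives
\[
  U_A=\bigcup_{X\in A}\upa X=\bigcup_{X\in A}\Prof(B_X)\subseteq\Prof(B_A).
\]
All $B_X$ share the instance space $\mathcal I_\star$, so the disjunction is well defined.

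\emph{Reverse inclusion.} We show $\Prof(B_A)\subseteq U_A$. Since $U_A$ is a union of upsets, it is an upset, and by Proposition~\ref{prop:canonical-upset-decomposition} we have $\Min(U_A)=A$ because $A$ is an antichain. We could invoke Lemma~\ref{lem:boundary-certificate} with $U=U_A$; its positive half is the easy inclusion, and its negative half asks that $B_A\notin\Sol(Y)$ for every $Y\in\mathcal N_A$. It is simpler, however, to argue directly.

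Let $Z\in\mathcal Q\setminus U_A$. The set $\mathcal Q\setminus U_A$ is finite and nonempty, so there is $Y\in\mathcal N_A$ with $Z\le Y$. Suppose $B_A\in\Sol(Z)$. Then $B_A\in\Sol(Y)$ by Proposition~\ref{prop:profile-is-upset}, so some deterministic algorithm $\mathcal A$ of $Y$ solves $B_A$. Apply the hypothesis to $\mathcal A$: it yields $\iota\in\mathcal I_\star$ and an admissible fair execution $\rho$ with $\rho\not\models\Phi_{B_X}$ for all $X\in A$. Hence $\rho\not\models\bigvee_{X\in A}\Phi_{B_X}=\Phi_{B_A}$, so $\mathcal A$ does not solve $B_A$ from $\iota$, a contradiction. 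Therefore $Z\notin\Prof(B_A)$, and $\Prof(B_A)\subseteq U_A$.

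One edge case needs care: if $U_A=\mathcal Q$ then $\mathcal N_A=\varnothing$, the reverse inclusion is vacuous, and the argument still goes through.

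\emph{Main obstacle.} The only delicate point is making sure the execution $\rho$ supplied by the hypothesis belongs to $\operatorname{Exec}_{Y}(\mathcal A,\iota)$, that is, that it is admissible, fair, and time-divergent for a representative setting of $Y$. Solvability is defined by quantifying over exactly that set. Because solvability is a class invariant, it does not matter which representative setting of $Y$ we use.
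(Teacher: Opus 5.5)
Your proof is correct and is essentially the paper's argument. The easy inclusion comes from principality of each $B_X$, which you route through Proposition~\ref{prop:disjunction-easy-direction}. The reverse inclusion lifts any $Z\notin U_A$ to some $Y\in\mathcal N_A$, uses upward closure to get $B_A\in\Sol(Y)$, and then contradicts the common-failure execution. Your remark that the supplied execution must actually lie in $\operatorname{Exec}$, and in particular be time-divergent, is a point the paper leaves implicit.
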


\begin{proof}
If $Z\in U_A$, then $Z$ solves some $B_X$ with $X\in A$, and the same
algorithm solves $B_A$.

Conversely, the assumed executions show that $B_A\notin\Sol(Y)$
for every $Y\in\mathcal N_A$. Every $Z\notin U_A$ lies below some
such $Y$, by finiteness of $\mathcal Q$. If $B_A$ were solvable in
$Z$, it would also be solvable in $Y$, a contradiction. Hence
$\Prof(B_A)\subseteq U_A$.
\end{proof}

The witness instance and execution may depend on $A$, $Y$, and
$\mathcal A$, but not on the choice of a disjunct within $A$.
Different boundary classes may use different executions. Thus, the
remaining proof obligations are indexed by pairs $(A,Y)$ with
$Y\in\mathcal N_A$, rather than by all classes outside $U_A$.

\subsection{Coverage of Two- and Three-Element Antichains}
\label{subsec:common-failure-coverage}

The thirteen-class order has twenty-four two-element antichains and
sixteen three-element antichains, with no larger antichains.
Table~\ref{tab:cf-family-coverage} groups them by the construction
families introduced in
Section~\ref{sec:representative-common-failure}. For compactness, set
\[
\begin{aligned}
  \mathcal E
    &=\{\mathrm{S\text{-}RDV},\VSS\},\\
  \mathcal P_3
    &=\{\mathrm{ZCC},\mathrm{VEC},\CFSM\},
  &\mathcal P_4&=\{\mathrm{OSPO}\}\cup\mathcal P_3.
\end{aligned}
\]
We retain $\mathcal C_{\mathrm{exp}}$ from
Section~\ref{subsec:cf-expansion}. Each displayed task family determines
an antichain through the principal generators in
Table~\ref{tab:PP13}. The groups are disjoint and exhaustive.

\begin{table}[t]
  \centering
  \small
  \setlength{\tabcolsep}{4pt}
  \renewcommand{\arraystretch}{1.15}
  \caption{Coverage of nonprincipal profiles by construction family.
    The last column counts the boundary obligations $(A,Y)$, namely
    $\sum_A|\mathcal N_A|$ within the corresponding row.
    Full boundaries and construction references are listed in
    Tables~\ref{tab:PP13-join2} and~\ref{tab:PP13-join3}.}
  \label{tab:cf-family-coverage}
  \begin{tabular}{@{}clrr@{}}
    \toprule
    $|A|$ & Disjunction family & No. of $A$ & No. of $(A,Y)$ \\
    \midrule
    2 & $\mathrm{EPG}^{*}\vee P$, $P\in\mathcal C_{\mathrm{exp}}$
      & 7 & 16 \\
    2 & $\mathrm{SRO}\vee P$, $P\in\mathcal E\cup\mathcal P_3$
      & 5 & 8 \\
    2 & $E\vee P$, $E\in\mathcal E$, $P\in\mathcal P_4$
      & 8 & 18 \\
    2 & $\mathrm{MCv}\vee P$,
        $P\in\{\VSS,\mathrm{VEC},\CFSM\}$
      & 3 & 4 \\
    2 & $\DMSD\vee\CFSM$
      & 1 & 1 \\
    \cmidrule(lr){2-4}
      & Two-element total & 24 & 47 \\
    \midrule
    3 & $\mathrm{EPG}^{*}\vee E\vee P$,
        $E\in\mathcal E$, $P\in\mathcal P_4$
      & 8 & 18 \\
    3 & $\mathrm{SRO}\vee E\vee P$,
        $E\in\mathcal E$, $P\in\mathcal P_3$
      & 6 & 13 \\
    3 & $\mathrm{MCv}\vee\VSS\vee P$,
        $P\in\{\mathrm{VEC},\CFSM\}$
      & 2 & 3 \\
    \cmidrule(lr){2-4}
      & Three-element total & 16 & 34 \\
    \bottomrule
  \end{tabular}
\end{table}

\paragraph*{Coverage.}
For the expansion pairs, Lemma~\ref{lem:solo-drift} supplies the solo-expansion
obstruction, combined with a compatible prefix or fair tail for the companion
task; Proposition~\ref{prop:cf-expansion-exchange} discharges the two $\FS^R$
cases directly. For shrinking rotation,
Proposition~\ref{prop:cf-sro-exchange-family} discharges the two $\FS^S$ cases.
The remaining pairs use the exchange obstruction of
Appendix~\ref{app:shared-exchange-phase}, delayed movements against $\mathrm{MCv}$,
or an observation during movement against $\DMSD$, as recorded row by row in
Table~\ref{tab:PP13-join2}. For three-element antichains, a two-task
construction is reused wherever one exists and the third predicate is checked on
that same execution, with every earlier violation preserved; the cases whose
maximal boundary belongs to no constituent pair, namely the $\OB^S$ and
$\OB^{A_M}$ cases, get their own constructions. Table~\ref{tab:PP13-join3}
indexes all thirty-four.

\begin{proposition}[Common failure at all nonprincipal boundaries]
\label{prop:all-common-failure-boundaries}
For every $A\in\Ant(\mathcal Q)$ with $2\le |A|\le3$ and every
$Y\in\mathcal N_A$, the family $\{B_X:X\in A\}$ has the common-failure
property in $Y$.
\end{proposition}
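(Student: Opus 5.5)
The proof will be a case-by-case verification organized by the construction families of Table~\ref{tab:cf-family-coverage}. It covers the $47$ two-element obligations $(A,Y)$ and the $34$ three-element ones. First, I will compute $\mathcal N_A$ for every antichain from the signature order of Theorem~\ref{thm:six-signature-landscape}, using the principal generators of Table~\ref{tab:PP13}. This fixes the list of pairs $(A,Y)$. For each pair, I fix an arbitrary deterministic algorithm $\mathcal A$ in $Y$ and the standard instance $a_0=d$, $b_0=2d$. I then build one execution, using normalization of observations (in $\FS$) or the analogous peer-color normalization (in $\FC$). Throughout, I use two facts. A finite safety violation persists under any admissible fair tail. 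A liveness violation needs a tail that is fair and keeps the configuration non-progressing, typically fair alternating activations from a state whose orbit produces only zero outputs.

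\textbf{Two-element antichains.} I proceed family by family.
\begin{itemize}
\item \emph{$\mathrm{EPG}^{*}\vee P$.} The first nonzero Move is forced to be solo by Lemma~\ref{lem:solo-drift}(2), which already kills $\mathrm{EPG}^{*}$ (and EPG). It remains to extend this prefix so that $P$ also fails. The $\FS^R$ cases with $P\in\mathcal E$ are Proposition~\ref{prop:cf-expansion-exchange}. For $\CFSM$, ZCC, VEC, HET and OSPO, I will reuse the known single-task impossibility argument at $Y$. I will check that its adversary can be started with serialized cycles, which is allowed since $R$, $S$ and all atomic $A$ variants contain non-overlapping schedules. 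The $\OB^F$ boundary cases are the exception, since serialization is unavailable there. For them I will use the symmetric fully synchronous execution: oblivious robots with mirror-symmetric views move symmetrically, so the only allowed move is the center-preserving expansion. In that case $\mathrm{EPG}^{*}$ is satisfied only if every step doubles the segment, and that behaviour must then contradict the companion. For example, it contradicts $\CFSM$'s single move and the stopping requirement of VSS.
\item \emph{$\mathrm{SRO}\vee P$.} The $\FS^S$ cases with $P\in\mathcal E$ are Proposition~\ref{prop:cf-sro-exchange-family}. For $P\in\mathcal P_3$ and the $\FC^A$ boundary, I combine the two-solo-moves obstruction (Lemma~\ref{lem:cf-sro-two-moves}) with the companion's own adversary.
\item \emph{$E\vee P$.} The shared exchange phase of Appendix~\ref{app:shared-exchange-phase} lets one construction defeat both $\mathcal E$ tasks. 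The construction is the "same state, same normalized move lands at $4d$" trick, or, in $\FC$ and $\OB$, its color-blind analogue. I then check that the resulting prefix is also compatible with the failure of OSPO, ZCC, VEC or $\CFSM$.
\item \emph{$\mathrm{MCv}\vee P$ and $\DMSD\vee\CFSM$.} Here I use delayed Moves in $A$ or $A_M$ for MCv, and a Look during the peer's Move for DMSD. Each is grafted onto the companion's failing execution.
\end{itemize}

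\textbf{Three-element antichains.} For each $(A,Y)$, I look for a two-element sub-antichain $A'\subset A$ with $Y\in\mathcal N_{A'}$. If one exists, I take the execution already built for $(A',Y)$ and verify that the third predicate fails on that same run. Typically this holds because the run contains a solo move, a point outside $\{A_0,a_0,b_0,B_0\}$, or a collision, each of which is a safety violation that persists. The residual cases are those where $Y\in\{\OB^S,\OB^{A_M}\}$ is maximal only for the triple. For these I give direct constructions. In $\OB$ the robots' decisions depend only on the normalized vector. Hence either the algorithm never moves, which violates every liveness requirement, or the output is a fixed similarity-invariant displacement. A semi-synchronous or $M$-atomic schedule then produces a solo move, defeating $\mathrm{EPG}^{*}$ or SRO, and an overshoot, defeating $\mathcal E$, in the same run.

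\textbf{Main obstacle.} I expect the hardest part to be the liveness companions (OSPO, VSS stopping, MCv convergence). There the tail must simultaneously stay fair, preserve earlier safety violations, and prevent the companion from being satisfied later. A cheap stationary tail is not always available, because the algorithm may keep moving. My first approach is to branch on whether the reached private-state orbit contains a nonzero output. If it does, I replay the two-move or overshoot obstruction to obtain a safety violation instead. If it does not, I use the stationary fair tail, with the dichotomy chosen uniformly across the selected predicates.
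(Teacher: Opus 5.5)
Your organization matches the paper's: two-task constructions grouped by family, and triples that reuse a two-task execution except at $\OB^S$ and $\OB^{A_M}$. However, two steps fail.

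\textbf{The exchange obstruction in $\FC$.} There is no ``color-blind analogue'' of the replay trick in $\FC$. An $\FC$ robot reads its peer's light, and the first mover may reset that light at its Compute, before it moves. Indeed $\VSS\in\Sol(\FC^A)$ by Algorithm~\ref{alg:VSS_FC_A}, so the shared exchange phase is realizable in $\FC^A$. This breaks the thirteen $\FC^A$ obligations of disjunctions containing $\SRDV$: ES, RS, OS, SZ, SX, SC and EOS, ESZ, ESX, ESC, RSZ, RSX, RSC. (Your expansion bullet does not even list the $\FC^A$ case of $\mathrm{EPG}^{*}\vee\SRDV$.)

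When the first nonzero rule is the jump $(2,0)$, the only clause of $\SRDV$ that can fail is the terminal rendezvous. The paper refutes it with the external impossibility of stable rendezvous in $\FC^A$ (Lemma~\ref{lem:srdv_not_fc_a}). That result speaks only of executions from a designated common color with no pending cycle, so the prefix must end in such a state. The paper engineers this as follows:
\begin{itemize}
\item both robots Look and Compute in the first nonzero round, and only $b$'s Move is delayed;
\item hence $a$'s landing at $B$ is a joint stop that already breaks $\mathrm{EPG}^{*}$, SRO or VEC, or $a$'s jump over $b$ breaks OSPO, ZCC or $\CFSM$;
\item the robots then sit at $(B,A)$ with a common color $\gamma'$;
\item a counterexecution of $\mathcal A$ re-initialized at $\gamma'$ is transported there by an orientation-preserving similarity.
\end{itemize}
For rules other than $(2,0)$, Lemma~\ref{lem:p1_first_move} disposes of $\SRDV$ directly.

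\textbf{The $\OB$ case of the expansion family.} Your $\OB$ exception is misplaced and its logic is inverted. Since $\OB^F\in\upa\OB^F\subseteq U_A$ whenever $\mathrm{EPG}^{*}\in A$, no obligation has $Y=\OB^F$. Indeed $\mathrm{EPG}^{*}\vee\CFSM$ is solvable there by Algorithm~\ref{alg:epgs}. Your argument shows the companion failing on a run where $\mathrm{EPG}^{*}$ \emph{holds}, which refutes nothing.

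The actual $\OB$ obligations sit at $\OB^R$: EO, EZ, EX, EC and the eight expansion triples. There serialization is available, but the companion impossibilities you want to reuse (Lemmas~\ref{lem:osp_not_ob_f} and~\ref{lem:zcc_not_ob_f}) are fully synchronous. The missing step is a split on the single $\OB$ rule:
\begin{itemize}
\item if $(\lambda,\theta)\neq(\tfrac12,\pi)$, the synchronous run has $\sigma_2\neq2$ and kills both predicates;
\item if $(\lambda,\theta)=(\tfrac12,\pi)$, switch to the alternating run, where solo drift kills $\mathrm{EPG}^{*}$ and the factor $c_1=\tfrac32$ kills OSPO (no decreasing phase) and ZCC (divergence).
\end{itemize}
For VEC and $\CFSM$ the alternating run alone suffices.

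Likewise, your residual constructions mention only $\mathrm{EPG}^{*}$/SRO and $\mathcal E$, but the $\OB^{A_M}$ triples contain neither $\mathrm{EPG}^{*}$ nor SRO. MCv needs its own cases there: the swap when $c_1=0$, an overtaking delayed Move when $0<c_1<1$, and non-convergence under lockstep rounds when $c_1\ge1$. Also, ZCC at $\OB^S$ needs the continuation property of Lemma~\ref{lem:zcc_not_fc_s}.
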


\begin{proof}
The five two-task rows of Table~\ref{tab:cf-family-coverage} exhaust the
twenty-four two-element antichains and the three remaining rows exhaust the
sixteen three-element ones. Table~\ref{tab:PP13-join2} and
Table~\ref{tab:PP13-join3} name, for each listed $Y\in\mathcal N_A$, the
construction that violates every predicate of $A$ on one admissible fair
execution.
\end{proof}

\subsection{Exact Disjunctive Basis}
\label{subsec:exact-disjunctive-basis}

\begin{theorem}[An exact disjunctive basis]
\label{thm:exact-disjunctive-basis}
The principal task basis $\mathcal B=\{B_X:X\in\mathcal Q\}$ is an
exact disjunctive basis. For every $A\in\Ant(\mathcal Q)$,
\(
  \Prof\left(\bigvee_{X\in A}B_X\right)
  =\bigcup_{X\in A}\upa X.
\)
\end{theorem}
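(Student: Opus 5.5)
The plan is to split the antichains of $\mathcal Q$ by size and, in each case, reduce exactness to results already established. There are no antichains with more than three elements, so the cases $|A|\in\{0,1,2,3\}$ cover everything.

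\emph{Small antichains.} For $A=\varnothing$ the disjunction $\bigvee_{X\in\varnothing}B_X$ has the predicate $\bigvee\varnothing\equiv\bot$. This predicate is satisfied by no execution. Every class admits at least one admissible fair execution of any algorithm, for instance under alternating or fully synchronous activation. Hence the empty disjunction is solvable nowhere, and its profile is $\varnothing=U_\varnothing$. For $A=\{X\}$ the disjunction is just $B_X$, and Theorem~\ref{thm:principal-task-basis} gives $\Prof(B_X)=\upa X=U_A$ directly.

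\emph{Antichains of size two or three.} Here I would use Lemma~\ref{lem:boundary-common-failure}. It reduces $\Prof(B_A)=U_A$ to the common-failure property of $\{B_X:X\in A\}$ in every $Y\in\mathcal N_A$. Proposition~\ref{prop:all-common-failure-boundaries} supplies exactly this property for all such $A$ and all $Y\in\mathcal N_A$. The inclusion $U_A\subseteq\Prof(B_A)$ is Proposition~\ref{prop:disjunction-easy-direction} combined with principality, and it is also built into the lemma. Since the counts of two- and three-element antichains (twenty-four and sixteen) are stated as exhaustive via Table~\ref{tab:cf-family-coverage}, this covers every antichain.

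\emph{Main obstacle.} The only substantive point is completeness of the case split. I would need to check that $\mathcal Q$ has width three, citing the lattice-structure appendix, and that the table rows really enumerate all antichains of sizes two and three. I would also confirm that the witness executions in each row are admissible under the scheduler $Y$ itself, not merely under some stronger scheduler. Everything else is bookkeeping.
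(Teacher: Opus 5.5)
Your proposal is correct and matches the paper's proof step for step. The empty antichain is handled by the empty-disjunction convention; your added remark that admissible fair executions always exist makes explicit why $\mathsf{false}$ is unsolvable everywhere. Singletons follow from Theorem~\ref{thm:principal-task-basis}, sizes two and three from Proposition~\ref{prop:all-common-failure-boundaries} combined with Lemma~\ref{lem:boundary-common-failure}, and exhaustiveness of the case split from Lemma~\ref{lem:width-three}.
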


\begin{proof}
For $A=\varnothing$, both sides are empty by the convention for the
empty disjunction. For $|A|=1$, the claim is
Theorem~\ref{thm:principal-task-basis}. For $2\le |A|\le3$, apply
Proposition~\ref{prop:all-common-failure-boundaries} and
Lemma~\ref{lem:boundary-common-failure}. By
Lemma~\ref{lem:width-three} these cases exhaust all antichains.
\end{proof}


\section{Complete Realization and a Minimum Exact Disjunctive Basis}
\label{sec:complete-realization}


\subsection{Complete Realization and Minimality}
\label{subsec:complete-realization-canonical-form}
\label{subsec:minimum-exact-disjunctive-basis}

\begin{theorem}[Realization and canonical disjunctive normal form]
\label{thm:profile-realization}
For every nonempty upset $U\in\Up(\mathcal Q)$, define
\(
  B_U\coloneqq\bigvee_{X\in\Min(U)}B_X.
\)
Then $\Prof(B_U)=U$, and $\Min(U)$ is the unique irredundant set of
principal generators of $U$; the expression above is therefore the
canonical disjunctive normal form of $U$ at the profile level. With
$B_\varnothing=(\mathcal I_\star,\mathsf{false})$ the same identity
holds for $U=\varnothing$.
\end{theorem}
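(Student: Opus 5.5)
The plan is to reduce the statement to Theorem~\ref{thm:exact-disjunctive-basis} together with the order-theoretic facts of Proposition~\ref{prop:canonical-upset-decomposition}. Fix a nonempty upset $U\in\Up(\mathcal Q)$ and set $A\coloneqq\Min(U)$.

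\emph{Step 1: $A$ is an admissible index set.} Because $\mathcal Q$ is finite and $U\neq\varnothing$, the set $\Min(U)$ is nonempty. By definition its elements are pairwise incomparable, so $A\in\Ant(\mathcal Q)$.

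\emph{Step 2: the profile of $B_U$.} Proposition~\ref{prop:canonical-upset-decomposition} gives $U=\bigcup_{X\in A}\upa X$. Since $B_U=\bigvee_{X\in A}B_X$ is exactly the antichain-indexed disjunction covered by Theorem~\ref{thm:exact-disjunctive-basis}, we get
\[
  \Prof(B_U)=\bigcup_{X\in A}\upa X=U .
\]
All the substantive work, namely the common-failure executions at every boundary pair $(A,Y)$ with $Y\in\mathcal N_A$, is already contained in that theorem through Proposition~\ref{prop:all-common-failure-boundaries} and Lemma~\ref{lem:boundary-common-failure}. For $|A|=1$ the identity is Theorem~\ref{thm:principal-task-basis}.

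\emph{Step 3: uniqueness of the generators.} Suppose $U=\bigcup_{X\in G}\upa X$ for some $G\subseteq\mathcal Q$, with the union irredundant. Each $m\in\Min(U)$ lies in some $\upa X$ with $X\in G\subseteq U$. Then $X\le m$, and minimality of $m$ forces $X=m$, so $\Min(U)\subseteq G$. Since $\Min(U)$ already generates $U$, irredundancy gives $G=\Min(U)$. This is the uniqueness clause of Proposition~\ref{prop:canonical-upset-decomposition}. Because the basis is principal and the map $A\mapsto U_A$ is injective on antichains, the disjunctive expression is canonical at the profile level.

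\emph{Step 4: the empty profile.} For $U=\varnothing$, the task $B_\varnothing=(\mathcal I_\star,\mathsf{false})$ is solvable in no class. Every class has at least one admissible fair execution from any instance in the nonempty set $\mathcal I_\star$, for example alternating activations, and no execution satisfies $\mathsf{false}$. So $\Prof(B_\varnothing)=\varnothing$. This matches the empty-disjunction convention already used in the proof of Theorem~\ref{thm:exact-disjunctive-basis}.

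There is no real obstacle here, since the statement is a corollary of earlier results. The only point that needs care is Step 2: exactness must be invoked for the specific antichain $\Min(U)$, and not merely the easy inclusion of Proposition~\ref{prop:disjunction-easy-direction}.
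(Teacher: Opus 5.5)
Your proposal is correct and follows the paper's proof essentially verbatim: you apply Theorem~\ref{thm:exact-disjunctive-basis} to the antichain $\Min(U)$ together with Proposition~\ref{prop:canonical-upset-decomposition}, and then show by minimality that any irredundant generating set must equal $\Min(U)$. One small slip in Step 4: alternating singleton activations are not admissible under $F$, so for $\OB^F$ and $\LU^F$ you should cite a fully synchronous execution instead, which does not affect the conclusion $\Prof(B_\varnothing)=\varnothing$.
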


\begin{proof}
$\Min(U)$ is an antichain, so
Theorem~\ref{thm:exact-disjunctive-basis} and
Proposition~\ref{prop:canonical-upset-decomposition} give
$\Prof(B_U)=\bigcup_{X\in\Min(U)}\upa X=U$. If
$U=\bigcup_{X\in S}\upa X$ irredundantly, then every $Y\in\Min(U)$
lies in some $\upa X$ with $X\in S\subseteq U$, and minimality
forces $X=Y$; any further generator lies above a minimal element and is
redundant. Hence $S=\Min(U)$.
\end{proof}

Theorem~\ref{thm:profile-realization} realizes every upset by a task over
$\mathcal I_\star$; together with upward closure, this gives
%
\begin{equation}
  \Real(\mathcal Q)=\Up(\mathcal Q).
  \label{eq:complete-profile-realization}
\end{equation}
Every task profile thus has the canonical representative
$B_{\Prof(P)}$. This is equality of solvability profiles, not of task
specifications.

The matching lower bound is a property of the profile lattice alone and
does not assume that a competing family consists of principal tasks.

\begin{theorem}[Minimality of the principal-upset basis]
\label{thm:minimum-profile-basis}
Let $\mathcal G\subseteq\Up(\mathcal Q)$ be a family whose unions
generate every element of $\Up(\mathcal Q)$. Then
$\upa X\in\mathcal G$ for every $X\in\mathcal Q$. Consequently
$\{\upa X:X\in\mathcal Q\}$ is the unique least union-generating
family of profiles, of cardinality thirteen.
\end{theorem}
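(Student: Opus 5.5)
The key fact is that a principal upset cannot be written as a union of other upsets without using itself. Fix $X\in\mathcal Q$. Since $\mathcal G$ generates every element of $\Up(\mathcal Q)$ by unions, there is a subfamily $\mathcal G_X\subseteq\mathcal G$ with $\upa X=\bigcup_{G\in\mathcal G_X}G$. We may take unions over nonempty subfamilies; the empty upset plays no role here, since $\upa X\ni X$ is nonempty. Because $X\in\upa X$, some $G_0\in\mathcal G_X$ contains $X$. Since $G_0$ is an upset, $\upa X\subseteq G_0$. Since $G_0$ is one of the sets in the union, $G_0\subseteq\upa X$. Hence $G_0=\upa X$, and so $\upa X\in\mathcal G$. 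The same argument says that each principal upset is join-irreducible, in the union sense, in the lattice $\Up(\mathcal Q)$. This holds for every $X$, so $\{\upa X:X\in\mathcal Q\}\subseteq\mathcal G$.

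Next, the family $\mathcal P=\{\upa X:X\in\mathcal Q\}$ is itself union-generating. This is Proposition~\ref{prop:canonical-upset-decomposition}: every $U$ equals $\bigcup_{X\in\Min(U)}\upa X$, and $\varnothing$ is the empty union. By the first step, $\mathcal P\subseteq\mathcal G$ for every generating $\mathcal G$. So $\mathcal P$ is the least union-generating family under inclusion. In particular, any generating family of minimum cardinality equals $\mathcal P$, which gives uniqueness.

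Finally, the cardinality is thirteen because $X\mapsto\upa X$ is injective on the poset $\mathcal Q$: $\upa X=\upa Y$ gives $X\le Y$ and $Y\le X$, hence $X=Y$ by antisymmetry of the quotient order. Combined with Theorem~\ref{thm:six-signature-landscape}, which gives exactly thirteen classes, we get $|\mathcal P|=13$.

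There is no real obstacle here. The only point to watch is the convention for the empty upset, namely whether $\varnothing$ needs to be in $\mathcal G$ or arises as an empty union. I would fix that convention explicitly so that "generate every element" is unambiguous. The argument does not use the tasks at all; it is purely order-theoretic, as the remark before the theorem says.
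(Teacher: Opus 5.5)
Your proof is correct and is essentially the paper's own argument: the member of $\mathcal G_X$ containing $X$ contains $\upa X$ by upward closure and is contained in it by the union, so it equals $\upa X$, and the converse follows from the canonical decomposition of Proposition~\ref{prop:canonical-upset-decomposition} together with distinctness of the principal upsets. Your remarks on the empty union and on injectivity of $X\mapsto\upa X$ via antisymmetry only spell out points the paper leaves implicit.
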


\begin{proof}
Write $\upa X=\bigcup_{G\in\mathcal G_X}G$ with
$\mathcal G_X\subseteq\mathcal G$. Some $G\in\mathcal G_X$ contains $X$,
and being an upset it contains $\upa X$; the displayed equality
gives the reverse inclusion, so $G=\upa X$. Conversely the
principal upsets generate every upset by canonical decomposition, and
they are distinct.
\end{proof}

\begin{corollary}[Minimum size and uniqueness at the profile level]
\label{cor:minimum-exact-task-basis}
Any task family $\mathcal D$ supplying an exact
representation\footnote{A finite disjunction from $\mathcal D$ is an
\emph{exact representation} of $U$ when the selected subfamily
$\mathcal D_U\subseteq\mathcal D$ satisfies
$\Prof(\bigvee_{P\in\mathcal D_U}P)
 =\bigcup_{P\in\mathcal D_U}\Prof(P)=U$; the empty disjunction is
allowed, and only the representations used to generate the profiles
need be exact.} of every upset of $\mathcal Q$ contains at least
thirteen tasks. The family $\mathcal B$ attains this bound, and if
$|\mathcal D|=13$ its task profiles are exactly
$\{\upa X:X\in\mathcal Q\}$, each occurring once.
\end{corollary}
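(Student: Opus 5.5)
The plan is to reduce the corollary to Theorem~\ref{thm:minimum-profile-basis}. Let $\mathcal D$ be a task family that supplies an exact representation of every upset, and set $\mathcal G=\{\Prof(P):P\in\mathcal D\}\subseteq\Up(\mathcal Q)$; membership in $\Up(\mathcal Q)$ is Proposition~\ref{prop:profile-is-upset}.

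\emph{Step 1: $\mathcal G$ union-generates $\Up(\mathcal Q)$.} Take any $U\in\Up(\mathcal Q)$. By hypothesis there is a subfamily $\mathcal D_U\subseteq\mathcal D$ with $U=\bigcup_{P\in\mathcal D_U}\Prof(P)$, so $U$ is a union of members of $\mathcal G$. For $U=\varnothing$ the empty subfamily is used. Only the second equality in the footnote's definition of exact representation is needed here; the exactness of the disjunction itself plays no role in the lower bound.

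\emph{Step 2: cardinality bound.} Theorem~\ref{thm:minimum-profile-basis} gives $\upa X\in\mathcal G$ for every $X\in\mathcal Q$. The thirteen principal upsets are pairwise distinct, because $X\le Y\le X$ forces $X=Y$ in the quotient poset. So for each $X$ we may choose a task $P_X\in\mathcal D$ with $\Prof(P_X)=\upa X$. Distinct $X$ give tasks with distinct profiles, hence distinct tasks. Therefore $|\mathcal D|\ge|\mathcal G|\ge 13$.

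\emph{Step 3: $\mathcal B$ attains the bound.} By Theorem~\ref{thm:profile-realization} (resting on Theorem~\ref{thm:exact-disjunctive-basis}), for each nonempty $U$ the subfamily $\mathcal B_U=\{B_X:X\in\Min(U)\}$ satisfies
$\Prof(\bigvee_{X\in\Min(U)}B_X)=U=\bigcup_{X\in\Min(U)}\upa X=\bigcup_{P\in\mathcal B_U}\Prof(P)$.
The last step uses Theorem~\ref{thm:principal-task-basis} and Proposition~\ref{prop:canonical-upset-decomposition}. The empty upset is handled by the empty disjunction, with the convention $B_\varnothing=(\mathcal I_\star,\mathsf{false})$. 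Thus $\mathcal B$ supplies exact representations of every upset, and it has exactly thirteen members.

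\emph{Step 4: uniqueness when $|\mathcal D|=13$.} The thirteen tasks $P_X$ from Step 2 are distinct elements of $\mathcal D$, so they exhaust it. Hence the multiset of profiles of $\mathcal D$ is exactly $\{\upa X:X\in\mathcal Q\}$, with each principal upset occurring once.

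There is no substantial obstacle. The only point needing care is the bookkeeping in Step 2 and Step 4: several tasks could share one profile, so we count tasks through the injective choice $X\mapsto P_X$ rather than through $|\mathcal G|$ alone. We also make explicit the convention for the empty disjunction, which is what realizes $U=\varnothing$.
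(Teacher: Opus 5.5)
Your proposal is correct and follows the paper's proof. Both pass to the profile family $\{\Prof(P):P\in\mathcal D\}$, invoke Theorem~\ref{thm:minimum-profile-basis} to force all thirteen principal upsets to occur, and use Theorem~\ref{thm:profile-realization} for attainment; your injective choice $X\mapsto P_X$ just makes explicit the task counting and the $|\mathcal D|=13$ uniqueness clause that the paper leaves implicit.
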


\begin{proof}
The profiles $\{\Prof(P):P\in\mathcal D\}$ generate every upset by
union, so Theorem~\ref{thm:minimum-profile-basis} forces all thirteen
principal upsets to occur; $\mathcal B$ attains the bound by
Theorem~\ref{thm:profile-realization}.
\end{proof}

Uniqueness here concerns the profiles, not the predicates realizing
them: a task may be replaced by another with the same profile, though
this need not preserve exactness of disjunctions. Nor does the lower
bound claim minimality among families whose disjunctions may have
profiles strictly larger than the union of their components.

\subsection{Structural Consequences}
\label{subsec:complete-realization-consequences}

By the antichain--upset correspondence and
Theorem~\ref{thm:profile-realization}, the antichain polynomial
$1+13z+24z^{2}+16z^{3}$ counts the realizable profiles: thirteen
principal, twenty-four with two minimal classes, sixteen with three,
and the empty one, so
$|\Real(\mathcal Q)|=|\Up(\mathcal Q)|=54$. Every nonempty profile
therefore has an exact canonical expression using at most three tasks
from $\mathcal B$, and three are sometimes necessary: for the
three-element antichain $A_0=\{\OB^F,\FS^R,\FC^S\}$ the upset
$U_0=\bigcup_{X\in A_0}\upa X$ has exactly these three minimal
elements and canonical representative
\(
  B_{U_0}=\mathrm{EPG}^{*}\vee\mathrm{OSPO}\vee\SRDV,
\)
and no union of two principal profiles equals $U_0$ by uniqueness of
the irredundant decomposition. This bounds the number of
\emph{principal} terms, not the number of arbitrary tasks needed to
realize a given profile.
Figure~\ref{fig:complete-profile-venn} displays the $53$ nonempty profiles,
labelled by their canonical principal-task representations.

\input{STACS-Venn-Diagram}


\section{Conclusion}
We completely characterize the two-robot solvability-profile
landscape: all $54$ upsets of $\mathcal Q$ are realizable.
Minimum separation requires six principal tasks, whereas minimum
exact disjunctive generation requires thirteen.
%

\section*{Acknowledgment}
This work was supported by JSPS KAKENHI Grant Numbers
JP20KK0232, JP25K15057, 
JP25K03078, 
JP25K03079, 
JP26K02865, 
and by JST FOREST Program JPMJFR226U.

\newpage
\bibliographystyle{alpha}
\bibliography{referencesorg}

\clearpage
\appendix
%
%
\noindent
{\Large
{\bf Appendix}}
\section{Model Conventions and Properties of the Lattice}

\subsection{Formal Definition of Schedulers}\label{app:model-conventions-schduler}
\input{STACS-scheduler-old}

\subsection{Formal Definition of Temporal Predicates}\label{app:model-conventions-predicate}

A colored configuration at time $t$ is
\(
  \mathcal C(t)
  =\bigl((a(t),\mathrm{color}_a(t)),
         (b(t),\mathrm{color}_b(t))\bigr),
\)
where the ordering is used only for analysis.
Its geometric and color components are $(a(t),b(t))$ and
$(\mathrm{color}_a(t),\mathrm{color}_b(t))$, respectively.
In $\OB$, the color set is a singleton.

For a fixed execution, we use the following predicates,
suppressing its name from the notation.
For each robot $r\in\{a,b\}$, its \emph{moving-time set} and
\emph{moving predicate} are
\[
  \MoveSet{r}
  :=\bigcup_{\substack{i\ge1:\\
       r(t_B(r,i))\ne r(t_E(r,i))}}
       \bigl(t_B(r,i),t_E(r,i)\bigr),
  \qquad
  \Move_r(t)
  \quad:\Longleftrightarrow\quad
  t\in\MoveSet{r},
\]
the union being taken over the Move operations of $r$ whose starting
and ending positions differ. A robot is stationary exactly when it is
not inside such a Move:
\[
  \Stop_r(t)
  \quad:\Longleftrightarrow\quad
  t\notin\MoveSet{r}.
\]
Since $\MoveSet{r}$ is a union of open intervals, the endpoints of
nonzero Move intervals are included among the stopping times.

Writing the open set $\MoveSet{r}$ as the disjoint union
of its maximal open intervals, we define the number of
nonzero moves by
\[
  \MoveSet{r}
  =\bigsqcup_{k=1}^{N_r}I_r^{(k)},
  \;
  \moves{r}:=N_r\in\mathbb Z_{\ge0}\cup\{\infty\},
\]
and, for $t\ge0$, the number of nonzero Moves of $r$ \emph{completed}
by time $t$ by
\[
  m_r(t):=\bigl|\{\,k\le N_r:\ \sup I_r^{(k)}\le t\,\}\bigr| .
\]
Thus $m_r$ is non-decreasing with $m_r(0)=0$, and
$\moves{r}=\lim_{t\to\infty}m_r(t)$.
The strict separation between successive Move operations
ensures that each such interval corresponds to one nonzero
Move operation.
Permanent stopping is expressed by
\[
  \Stopped{r}
  \quad:\Longleftrightarrow\quad
  \exists T\ge0\ \forall t\ge T:\ \Stop_r(t).
\]

Finally, the \emph{joint stops} of the execution are the times at which
neither robot is moving:
\[
  \Sab
  :=\{\,t\ge0:\ \Stop_a(t)\wedge\Stop_b(t)\,\}
  =[0,\infty)\setminus\bigl(\MoveSet{a}\cup\MoveSet{b}\bigr).
\]
This set is closed in $[0,\infty)$ and contains $0$, since no cycle is
pending at time~$0$. The geometric configuration $(a(t),b(t))$ is
constant on each maximal subinterval of $\Sab$. Listing these
configurations in chronological order and suppressing consecutive
repetitions gives the \emph{joint-stop sequence} $C_0,C_1,\ldots$
of the execution, with $C_i=(a_i,b_i)$ and $d_i=b_i-a_i$.

\subsection{Proofs of Results in Section~\ref{subsec:computational-profiles}}\label{app:order-preliminaries}
\begin{proof}[Proof of Proposition~\ref{prop:profile-is-upset}]
Let $X\in\Prof(P)$ and $X\preceq Y$. Then
\[
  P\in\Sol(X)\subseteq\Sol(Y),
\]
and hence $Y\in\Prof(P)$.
\end{proof}

%

\begin{proof}[Proof of Proposition~\ref{prop:canonical-upset-decomposition}]
Since $\mathcal{Q}$ is finite, every element of $U$ lies
above a minimal element of $U$.
Thus $U=\bigcup_{X\in\Min(U)}\upa X$.
Conversely, let $A$ be an antichain with
$U=\bigcup_{X\in A}\upa X$. Every element of $A$ is minimal
in $U$, since $A$ is an antichain, and every minimal element of $U$
lies above, hence equals, some element of $A$. Thus $\Min(U)=A$.
The representation is irredundant because each
$X\in\Min(U)$ belongs to no other principal upset
in the union.
\end{proof}

\begin{proof}[Proof of Lemma~\ref{lem:boundary-certificate}]
Necessity is immediate. For sufficiency, every $Z\in U$ lies above
some $X\in\Min(U)$, since $\mathcal Q$ is finite. Thus $P\in\Sol(Z)$
by Proposition~\ref{prop:profile-is-upset}. Every $Z\notin U$ lies
below some $Y\in\Max(\mathcal Q\setminus U)$. If $P$ were solvable in
$Z$, it would also be solvable in $Y$, since $Z\le Y$, contradicting
the assumed impossibility at $Y$. Hence $\Prof(P)=U$.
\end{proof}
\begin{proof}[Proof of Proposition~\ref{prop:disjunction-easy-direction}]
If a model class solves $P_j$, the same algorithm always satisfies
$\Phi_j$ and therefore always satisfies $\bigvee_{i\in I}\Phi_i$.
\end{proof}

%
\section{Model Comparison}
\label{app:model-conparison}

\subsection{Basic inclusions}\label{app:basic-inclusion}

\begin{proof}[Proof of Proposition~\ref{prop:basic-inclusions}]
The cross-model inclusions follow by ignoring capabilities:
a $\LU$ robot can execute an $\FS$ or $\FC$ algorithm,
and either intermediate model can execute an $\OB$ algorithm.

The scheduler inclusions follow by restricting admissible
executions. If every schedule and timing choice permitted
by $K_1$ is also permitted by $K_2$, then any algorithm
correct under $K_2$ remains correct under $K_1$.
Thus,
\(
  \Sol(M,K_2)\subseteq\Sol(M,K_1).
\)
Applying this observation to the scheduler restrictions
in the model definitions proves the stated inclusions.
\end{proof}

\begin{proof}[Proof of Proposition~\ref{prop:atomicity-equivalences}]
These equivalences follow directly from the observation capabilities
of $\FS$ and $\OB$ and from the definitions of the corresponding
atomic scheduler classes.
\end{proof}

\subsection{Correctness of the Color- and Round-Preserving Simulation}
\label{app:fsynch-no-cost-simulation}


The following color- and round-preserving full-synchrony simulation
for two robots was established in~\cite{KSW-2-robot}.

\begin{theorem}[\cite{KSW-2-robot}]
\label{thm:fsynch-no-cost-simulation}
\label{lemma:no-cost-sim}
Let $\mathcal A$ be a deterministic two-robot $\LU^F$ algorithm with
a palette $C$ of $k$ colors and a designated initial color $c_0$.
There are algorithms $\mathcal A_S$ in $\FS^F$ and $\mathcal A_C$ in
$\FC^F$, both using the palette $C$, such that, for every initial
geometric configuration $I$ and every execution $\rho$ of either
simulator from $I$, there is an admissible execution $\widehat\rho$
of $\mathcal A$ from $I$ with
\(
  \pi_{\mathrm{geom}}(\rho)
  =\pi_{\mathrm{geom}}(\widehat\rho).
\)
The matching execution has the same round boundaries and movement
timings. Each simulator round implements exactly one round of
$\mathcal A$, without auxiliary colors or initialization rounds.
\end{theorem}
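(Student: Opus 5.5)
The plan is to exploit the symmetry that full synchrony forces on two robots when every Look is normalized, which is exactly the normalization already used in Section~\ref{sec:representative-common-failure}. Since variable disorientation lets the adversary choose any rotation and positive scaling at each activation, an execution $\widehat\rho$ of $\mathcal A$ may be built in which, whenever the robots are distinct, each robot's local frame presents the peer at $e=(1,0)$; when they coincide, both observe $p_r=0$. I will define both simulators to internally adopt this normalized frame. Each robot computes, from its actual observation $p_r$, the rotation-and-scaling map $T_r$ sending $p_r$ to $e$ (the identity if $p_r=0$). It then evaluates $\mathcal A$ on the normalized observation, with its own and peer colour both set to the colour it can see, and moves to $T_r^{-1}$ of the computed destination. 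The simulated execution $\widehat\rho$ is the $\LU^F$ execution of $\mathcal A$ with frames $T_r^{-1}\circ(\text{actual frame})$, together with the same round boundaries and movement speeds. This is admissible precisely because VD imposes no constraint on these per-activation frames, and the chirality constraint is respected because $T_r$ is orientation-preserving.

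The key invariant, proved by induction on rounds, is that in $\widehat\rho$ the two robots always carry the same colour. Initially both carry $c_0$. Suppose that at the start of a round both colours equal $c$. Under \FSY{} both robots Look at the same instant, and both normalized observations are $e$ (or both are $0$). Their full $\LU$ inputs $(e,c,c)$ therefore coincide, so the deterministic algorithm $\mathcal A$ outputs the same new colour for both. Consequently the $\FS$ robot, which sees only $c_r$, knows $c_{\bar r}=c_r$. Symmetrically, the $\FC$ robot, which sees only $c_{\bar r}$, knows $c_r=c_{\bar r}$, and at Compute it can replace its own colour by the value $\mathcal A$ prescribes, since that value is a function of its observation alone. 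Each simulator round therefore performs exactly one $\mathcal A$-step with palette $C$ and no auxiliary rounds. The destinations agree with those of $\widehat\rho$ after the frame change, so $\pi_{\mathrm{geom}}(\rho)=\pi_{\mathrm{geom}}(\widehat\rho)$, with identical timings.

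The step I expect to need the most care is verifying admissibility of $\widehat\rho$ rather than the symmetry itself. This involves three checks. First, the normalization must remain a legal VD frame in the coincident case $p_r=0$, where the rotation is arbitrary and both observations equal $0$ with colours equal. Second, rigid movement and the \FSY{} timing (simultaneous Looks in every round) must be exactly what makes the two inputs identical; this argument would fail under \SSY{}, consistent with the landscape. Third, since the theorem quantifies over every execution $\rho$ of the simulator under arbitrary adversarial frames, I must confirm that each simulator's output is frame-independent. It is, because the simulator composes with $T_r^{-1}$, so any adversarial frame for $\rho$ is absorbed into the normalized frame of $\widehat\rho$. With these points checked, the induction closes. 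Theorem~\ref{thm:equ-FCOMF-by-FSTAF} then follows by combining this simulation with Proposition~\ref{prop:basic-inclusions}.
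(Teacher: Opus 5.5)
Your proposal is correct and is essentially the paper's own proof: you normalize each Look by the orientation-preserving similarity sending the peer to $e$ (the identity at coincidence) and run $\mathcal A$ on $(e,c,c)$ or $(0,c,c)$ with the one visible color. You then show by induction over rounds that, in the matching $\LU^F$ execution with virtual frames $N_{p}\circ\psi$, both robots receive identical inputs and so keep a common color. Two small things to fix. First, if $\psi_r$ denotes the actual global-to-local map, the virtual frame is $T_r\circ\psi_r$ (equivalently $\psi_r^{-1}\circ T_r^{-1}$ read from local to global), not $T_r^{-1}\circ\psi_r$. Second, your induction hypothesis should state explicitly, as the paper's does, that positions and colors in $\rho$ and $\widehat\rho$ coincide at each round boundary.
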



\begin{proof}
Write
\[
  \mathcal A(p,c_r,c_{\bar r})=(q,c').
\]
Fix $e=(0,1)$. For $p\neq0$, let $N_p$ be the
orientation-preserving similarity mapping $p$ to $e$,
and let $N_0=\mathrm{id}$.

\paragraph*{Construction.}
Both simulators start with $c_0$. In each round, a robot with peer position $p$ reads the available color $c$: its own color in $\FS^F$, or its peer's color in $\FC^F$.
It computes
\[
  (q,c')=\mathcal A(N_{p}(p),c,c),
\]
sets its state or external light to $c'$, and moves to $N_{p}^{-1}(q)$ in its actual local coordinates.
This uses only permitted observations and the palette $C$; $N_p$ is computed from the current snapshot and
 is not stored between rounds.

 \paragraph*{Correctness.}
Fix an arbitrary execution $\rho$ of either simulator.
We construct a matching execution $\widehat{\rho}$ of the
$\LU^F$ algorithm $\mathcal A$ by induction over rounds.

At the beginning of round $j$, we maintain the following invariant:
the corresponding robots in $\rho$ and $\widehat{\rho}$ occupy the
same positions, and the two robots have a common color $c_j$.
The invariant holds initially by the designated common initial color.

Let $\psi_{r,j}$ be the map from global coordinates to robot $r$'s
actual local coordinates at its Look in round $j$, and let
$p_{r,j}$ be the observed peer position in these coordinates.
In the corresponding execution of $\mathcal A$, choose the virtual
local coordinate system
\[
  \widehat{\psi}_{r,j}
  =
  N_{p_{r,j}}\circ\psi_{r,j}.
\]
Because $N_{p_{r,j}}$ is an orientation-preserving similarity,
$\widehat{\psi}_{r,j}$ is admissible under variable disorientation
and preserves chirality.

By the induction hypothesis, the geometric configuration and the
two colors at the beginning of the round coincide in the two
executions. Hence, under the virtual coordinate system,
each source robot receives exactly the normalized input used by
the simulator. In particular, when the robots are at distinct
positions, both receive
\[
  (e,c_j,c_j),
\]
whereas at coincidence both receive
\[
  (0,c_j,c_j).
\]
Therefore, determinism of $\mathcal A$ implies that both corresponding
computations produce the same next color $c_{j+1}$.

Let $q_j$ be the destination computed by $\mathcal A$ in the
normalized coordinates. The global destination of robot $r$ in
$\widehat{\rho}$ is
\[
  \widehat{\psi}_{r,j}^{-1}(q_j)
  =
  \psi_{r,j}^{-1}
  \bigl(
    N_{p_{r,j}}^{-1}(q_j)
  \bigr),
\]
which is exactly the destination used by the simulator.

We now choose the same round boundaries, phase timings, and progress
along each rigid movement segment in $\widehat{\rho}$ as in $\rho$.
Thus the two executions have identical geometric trajectories
throughout round $j$, and the common-color invariant is preserved
for round $j+1$.

Induction over all rounds therefore yields an admissible infinite
execution $\widehat{\rho}$ of $\mathcal A$ satisfying
\[
  \pi_{\mathrm{geom}}(\rho)
  =
  \pi_{\mathrm{geom}}(\widehat{\rho}).
\]
Moreover, each simulator round implements exactly one round of
$\mathcal A$, and no auxiliary colors or initialization rounds are
introduced. Hence the simulator uses the same palette and preserves
the round and movement timings claimed in the theorem.
\end{proof}

\paragraph*{Scope of the construction.}
The correspondence constructed above maps every simulator execution
to an admissible execution of the original $\LU^F$ algorithm with
the same geometric trajectory. The virtual local coordinate systems
of the source execution need not coincide with those of the simulator.
The construction also relies on the designated common initial color;
it does not address arbitrary initial color assignments.

\begin{proof}[Proof of Theorem~\ref{thm:equ-FCOMF-by-FSTAF}]
The quantitative simulation theorem (Theorem~\ref{thm:fsynch-no-cost-simulation})
 shows that every
$\LU^F$ algorithm can be simulated in both $\FS^F$ and $\FC^F$,
preserving the geometric execution.
Hence
\[
  \Sol(\LU^F)\subseteq\Sol(\FS^F)
  \cap\Sol(\FC^F).
\]
The reverse inclusions follow immediately because $\FS$ and $\FC$
are submodels of $\LU$.
\end{proof}

\section{Definitions and Boundary Proofs of the Geometric Basis Tasks}\label{app:geometric-tasks}
\input{Definitions-Boundary}

\section{The New and Revised Tasks}\label{app:new-tasks}
\input{New-Tasks}

\section{Auxiliary Execution Constructions}\label{app:auxiliary-executions}
\input{Auxiliary-Execution-Constructions}

\section{Common-Failure Executions for Two-Task Disjunction}
\label{app:cf-two-task-disjunctions}
\input{CFE-Two-Task}

\section{Common-Failure Executions for Three-Task Disjunctions}\label{app:cf-three-task-disjunctions}
\input{CFE-Three-Task}

\section{Order-Theoretic Certificates and Enumeration}\label{app:lattice-structure}

\subsection{A Certificate for the Signature Order}
\label{app:signature-order-certificate}

Let $\mathcal R$ be the thirteen representative settings in
Table~\ref{tab:six-separator-signatures}. Define
\[
\begin{aligned}
  \mathcal C_\sigma(\mu)=\{\nu\in\mathcal R:\;&
    \sigma(\nu)\subsetneq\sigma(\mu),\\
  &\nexists\xi\in\mathcal R:
    \sigma(\nu)\subsetneq\sigma(\xi)\subsetneq\sigma(\mu)\}.
\end{aligned}
\]
Thus $\nu\in\mathcal C_\sigma(\mu)$ means that the signature of
$\nu$ is immediately below the signature of $\mu$ among the thirteen
signatures. This definition uses only the six-column table, not an
assumed computational comparison between its rows.

\begin{table}[t]
  \centering
  \small
  \setlength{\tabcolsep}{7pt}
  \renewcommand{\arraystretch}{1.15}
  \caption{The eighteen covering pairs of the signature-inclusion
    order. Each pair $\mu,\nu$ in a row satisfies $\mu\ge\nu$
    by previously established inclusions and equivalences.}
  \label{tab:signature-cover-certificate}
  \begin{tabular}{@{}ll@{}}
    \toprule
    Representative $\mu$ & $\mathcal C_\sigma(\mu)$ \\
    \midrule
    $\LU^F$       & $\{\OB^F,\LU^R\}$ \\
    $\OB^F$       & $\{\OB^R\}$ \\
    $\LU^R$       & $\{\FS^R,\FC^S\}$ \\
    $\FS^R$       & $\{\OB^R,\FS^S\}$ \\
    $\OB^R$       & $\{\OB^S\}$ \\
    $\FC^S$       & $\{\FC^A,\OB^S\}$ \\
    $\FC^A$       & $\{\OB^{A_M}\}$ \\
    $\FS^S$       & $\{\FS^{A_M},\OB^S\}$ \\
    $\FS^{A_M}$   & $\{\FS^A,\OB^{A_M}\}$ \\
    $\FS^A$       & $\{\OB^A\}$ \\
    $\OB^S$       & $\{\OB^{A_M}\}$ \\
    $\OB^{A_M}$   & $\{\OB^A\}$ \\
    $\OB^A$       & $\varnothing$ \\
    \bottomrule
  \end{tabular}
\end{table}

Reading the thirteen binary rows gives
Table~\ref{tab:signature-cover-certificate}. Every displayed
comparison follows from Proposition~\ref{prop:basic-inclusions}
and transitivity, except for the following two comparisons, which
also use the previously established equivalences:
\[
  \LU^R\equiv\FC^R\ge\FC^S,
  \qquad
  \FC^A\equiv\FC^{A_M}\ge\OB^{A_M}.
\]
No strictness or principality is assumed in these derivations.

For any proper inclusion $\sigma(\mu)\supsetneq\sigma(\nu)$,
finiteness supplies a chain from $\sigma(\mu)$ to $\sigma(\nu)$
whose successive inclusions are covering relations among the
thirteen signatures. The table certifies the corresponding
computational inclusion at each step. Transitivity therefore gives
$\mu\ge\nu$. Equality of two signatures occurs only for the same
row. This proves the reverse implication used in
Theorem~\ref{thm:six-signature-landscape} without relying on a
previously established complete comparison table.


\subsection{Meet, Join, and Join-Irreducibility}
\label{app:model-lattice-certificate}

This subsection certifies the lattice claims made after the
computational order has been established by
Theorem~\ref{thm:six-signature-landscape}. It is not used to prove
that theorem: the earlier signature-order certificate relies only
on the covering comparisons and the task-specific boundary results.

\begin{lemma}[The computational lattice and its join-irreducibles]
\label{lem:computational-lattice}
The poset $\mathcal Q$ is a finite lattice with least element $\OB^A$
and greatest element $\LU^F$. Its nonleast join-irreducible elements
are exactly
\[
  J(\mathcal Q)=\{\OB^F,\OB^R,\OB^S,\FS^A,\FC^A,\OB^{A_M}\}.
\]
\end{lemma}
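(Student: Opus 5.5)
The plan is to use Theorem~\ref{thm:six-signature-landscape}: it makes $\mathcal Q$ order-isomorphic to the family of thirteen six-bit signatures in Table~\ref{tab:six-separator-signatures}, ordered by inclusion. Every claim of the lemma is then a finite combinatorial check on that table, together with its Hasse diagram, which is the cover relation already recorded in Table~\ref{tab:signature-cover-certificate}. No further task-specific argument should be needed.

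\textbf{Bounds.} The row $\LU^F$ has the all-ones signature, so it contains every other signature and $\LU^F$ is the greatest element. The row $\OB^A$ has the all-zeros signature, so it is contained in every other signature and $\OB^A$ is the least element.

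\textbf{Lattice property.} A finite poset with a least element in which every pair has a join is a lattice, since the meet of $x,y$ is then the join of their (nonempty, finite) set of common lower bounds. So it suffices to show that every incomparable pair $\mu,\nu$ has a least upper bound. For such a pair I will list the rows whose signature contains $\sigma(\mu)\cup\sigma(\nu)$ and check that this list has a unique minimal row under inclusion.

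This is the step I expect to be the real work, because the signature family is not closed under union. For example, $\sigma(\OB^F)\cup\sigma(\LU^R)$ is all ones, which is realized, but other unions are not realized, and I must confirm that a unique minimal realized superset exists in every case. I would organise the check by the cover diagram rather than by brute force over all pairs. The diagram consists of:
\begin{itemize}
\item a spine $\OB^A<\OB^{A_M}<\OB^S<\OB^R<\OB^F<\LU^F$;
\item the $\FS$-chain $\FS^A<\FS^{A_M}<\FS^S<\FS^R<\LU^R<\LU^F$;
\item the $\FC$-branch $\OB^{A_M}<\FC^A<\FC^S<\LU^R$.
\end{itemize}
Every incomparable pair consists of elements on different branches. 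For such a pair, the upward paths from the two elements meet first at a single node, namely one of $\FS^{A_M}$, $\FS^S$, $\FS^R$, $\FC^S$, $\LU^R$, or $\LU^F$. I will verify from the signatures that this node lies below every other common upper bound.

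\textbf{Join-irreducibles.} In a finite lattice, a nonleast element is join-irreducible exactly when it has a unique lower cover. Reading Table~\ref{tab:signature-cover-certificate}, the elements with exactly one lower cover are $\OB^F,\OB^R,\OB^S,\FS^A,\FC^A,\OB^{A_M}$. Every other nonleast element has two lower covers:
\begin{itemize}
\item $\LU^F$ covers $\OB^F$ and $\LU^R$;
\item $\LU^R$ covers $\FS^R$ and $\FC^S$;
\item $\FS^R$ covers $\OB^R$ and $\FS^S$;
\item $\FC^S$ covers $\FC^A$ and $\OB^S$;
\item $\FS^S$ covers $\FS^{A_M}$ and $\OB^S$;
\item $\FS^{A_M}$ covers $\FS^A$ and $\OB^{A_M}$.
\end{itemize}
An element with two lower covers is the join of those covers, so it is join-reducible. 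This yields exactly the set $J(\mathcal Q)$ stated in the lemma.

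As a consistency check, this set coincides with the generators of $\mathcal B_6$ in Theorem~\ref{thm:six-principal-separators}. This is expected: $\sigma(\mu)$ records which join-irreducibles lie below $\mu$, and this also explains why signature inclusion faithfully reproduces the order.
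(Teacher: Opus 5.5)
Your proof is correct, but it runs dual to the paper's. The paper never searches for least upper bounds. Instead it checks that the thirteen signatures are closed under intersection: for each of the $24$ incomparable pairs, $\sigma(X)\cap\sigma(Y)$ is the signature of the class in the meet column of Table~\ref{tab:model-lattice-operations}. The signatures therefore form a closure system, so meets are intersections and joins come for free as $\operatorname{cl}(\sigma(X)\cup\sigma(Y))$.

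You establish joins directly, by collecting the rows that contain $\sigma(\mu)\cup\sigma(\nu)$ and checking that there is a unique minimal one. You then recover meets abstractly, as joins of common lower bounds. Both are finite checks, and yours succeeds: the minimal supersets are exactly the entries of the join column of that table.

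The paper's check is mechanically lighter. An intersection is computed and looked up in one step, whereas your check must range over all supersets of a union that is usually not realized, which is the difficulty you rightly anticipate. The paper's route also gives the concrete identity $\sigma(X\wedge Y)=\sigma(X)\cap\sigma(Y)$, which it uses afterwards to contrast order embedding with lattice embedding. Your route gives the join table directly but describes meets only abstractly.

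The join-irreducible part coincides with the paper's: one lower cover versus two, read off Table~\ref{tab:signature-cover-certificate}. Your reducibility witnesses are the two lower covers, while the paper sometimes uses other joinands, e.g.\ $\FS^S=\FS^A\vee\OB^S$; this is immaterial.

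One correction to your description of the Hasse diagram: the three chains you list are not the whole cover relation. They omit the following cross covers:
\begin{itemize}
\item $\FS^A$ covers $\OB^A$;
\item $\FS^{A_M}$ covers $\OB^{A_M}$;
\item $\FS^S$ and $\FC^S$ both cover $\OB^S$;
\item $\FS^R$ covers $\OB^R$.
\end{itemize}
These cross covers are exactly what make $\FS^{A_M},\FS^S,\FS^R,\FC^S$ appear as joins. Tracing upward paths only along your three chains would give, for instance, $\FC^A\vee\OB^S=\LU^F$ instead of $\FC^S$. You list these covers later and do the final verification on signatures, so this is an expository slip, not a gap.
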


Let $\mathcal S_\sigma=\{\sigma(X):X\in\mathcal Q\}$. The family contains
$\varnothing=\sigma(\OB^A)$ and
$\mathcal B_6=\sigma(\LU^F)$. For every incomparable pair, the
intersection of its signatures is the signature of the class in
the meet column of Table~\ref{tab:model-lattice-operations}.
For a comparable pair $X\le Y$, the intersection is $\sigma(X)$.
These checks exhaust all pairs, so $\mathcal S_\sigma$ is closed under
intersection.

For any $T\subseteq\mathcal B_6$, define
\[
  \operatorname{cl}(T)
  =\bigcap\{S\in\mathcal S_\sigma:T\subseteq S\}.
\]
The family being intersected is nonempty because it contains
$\mathcal B_6$. Finiteness and intersection closure give
$\operatorname{cl}(T)\in\mathcal S_\sigma$. It is the least signature
containing $T$. Therefore the lattice operations are characterized by
\begin{equation}
\begin{aligned}
  \sigma(X\wedge_{\mathcal Q}Y)
     &=\sigma(X)\cap\sigma(Y),\\
  \sigma(X\vee_{\mathcal Q}Y)
     &=\operatorname{cl}\bigl(\sigma(X)\cup\sigma(Y)\bigr).
\end{aligned}
\label{eq:signature-lattice-operations}
\end{equation}
The join column of the table is obtained by this closure operation.
For comparable $X\le Y$, one has $X\wedge_{\mathcal Q}Y=X$ and
$X\vee_{\mathcal Q}Y=Y$, so those pairs need not be listed.

\begin{table}[t]
\centering
\small
\setlength{\tabcolsep}{7pt}
\renewcommand{\arraystretch}{1.1}
\caption{Meet and join for all twenty-four incomparable pairs.
The order and signatures are those already established in the main text.}
\label{tab:model-lattice-operations}
\begin{tabular}{@{}llll@{}}
\toprule
$X$ & $Y$ & $X\wedge_{\mathcal Q}Y$ & $X\vee_{\mathcal Q}Y$\\
\midrule
$\OB^F$ & $\LU^R$ & $\OB^R$ & $\LU^F$ \\
$\OB^F$ & $\FS^R$ & $\OB^R$ & $\LU^F$ \\
$\OB^F$ & $\FC^S$ & $\OB^S$ & $\LU^F$ \\
$\OB^F$ & $\FC^A$ & $\OB^{A_M}$ & $\LU^F$ \\
$\OB^F$ & $\FS^S$ & $\OB^S$ & $\LU^F$ \\
$\OB^F$ & $\FS^{A_M}$ & $\OB^{A_M}$ & $\LU^F$ \\
$\OB^F$ & $\FS^A$ & $\OB^A$ & $\LU^F$ \\
$\FS^R$ & $\FC^S$ & $\OB^S$ & $\LU^R$ \\
$\FS^R$ & $\FC^A$ & $\OB^{A_M}$ & $\LU^R$ \\
$\OB^R$ & $\FC^S$ & $\OB^S$ & $\LU^R$ \\
$\OB^R$ & $\FC^A$ & $\OB^{A_M}$ & $\LU^R$ \\
$\OB^R$ & $\FS^S$ & $\OB^S$ & $\FS^R$ \\
$\OB^R$ & $\FS^{A_M}$ & $\OB^{A_M}$ & $\FS^R$ \\
$\OB^R$ & $\FS^A$ & $\OB^A$ & $\FS^R$ \\
$\FC^S$ & $\FS^S$ & $\OB^S$ & $\LU^R$ \\
$\FC^S$ & $\FS^{A_M}$ & $\OB^{A_M}$ & $\LU^R$ \\
$\FC^S$ & $\FS^A$ & $\OB^A$ & $\LU^R$ \\
$\FC^A$ & $\FS^S$ & $\OB^{A_M}$ & $\LU^R$ \\
$\FC^A$ & $\FS^{A_M}$ & $\OB^{A_M}$ & $\LU^R$ \\
$\FC^A$ & $\FS^A$ & $\OB^A$ & $\LU^R$ \\
$\FC^A$ & $\OB^S$ & $\OB^{A_M}$ & $\FC^S$ \\
$\FS^{A_M}$ & $\OB^S$ & $\OB^{A_M}$ & $\FS^S$ \\
$\FS^A$ & $\OB^S$ & $\OB^A$ & $\FS^S$ \\
$\FS^A$ & $\OB^{A_M}$ & $\OB^A$ & $\FS^{A_M}$ \\
\bottomrule
\end{tabular}
\end{table}

\paragraph*{Identifying the join-irreducibles.}
In any finite lattice, a nonleast element $j$ with a unique lower
cover $j_*$ is join-irreducible. Indeed, every $x<j$ lies below
$j_*$, and hence the join of two elements strictly below $j$ is
also at most $j_*$. Conversely, if $j$ has two distinct lower
covers $u,v$, they are incomparable. Their join lies below $j$,
but cannot lie strictly below $j$ without contradicting that $u$
and $v$ are lower covers. Thus $j=u\vee v$ is not join-irreducible.

The covering relations in
Table~\ref{tab:signature-cover-certificate} give the six elements
and lower covers in Table~\ref{tab:model-join-irreducibles}.
For explicit certificates excluding every other nonleast element,
Table~\ref{tab:model-lattice-operations} gives
\[
\begin{aligned}
  \FS^{A_M}&=\FS^A\vee_{\mathcal Q}\OB^{A_M},\\
  \FS^S&=\FS^A\vee_{\mathcal Q}\OB^S,\\
  \FS^R&=\FS^A\vee_{\mathcal Q}\OB^R,\\
  \FC^S&=\FC^A\vee_{\mathcal Q}\OB^S,\\
  \LU^R&=\FS^R\vee_{\mathcal Q}\FC^S,\\
  \LU^F&=\OB^F\vee_{\mathcal Q}\LU^R.
\end{aligned}
\]
In each equality, both joinands are strictly smaller than the
left-hand side. The least element $\OB^A$ is excluded from
$J(\mathcal Q)$ by definition. This proves the complete list in
Lemma~\ref{lem:computational-lattice}.

\begin{table}[t]
\centering
\small
\setlength{\tabcolsep}{9pt}
\renewcommand{\arraystretch}{1.15}
\caption{Join-irreducible model classes, their unique lower covers,
and the principal tasks already established at those classes.}
\label{tab:model-join-irreducibles}
\begin{tabular}{@{}lll@{}}
\toprule
$j$ & $j_*$ & Task with profile $\upa j$\\
\midrule
$\OB^F$ & $\OB^R$ & $\mathrm{EPG}^{*}$ \\
$\OB^R$ & $\OB^S$ & $\mathrm{SRO}$ \\
$\OB^S$ & $\OB^{A_M}$ & $\mathrm{MCv}$ \\
$\FS^A$ & $\OB^A$ & $\CFSM$ \\
$\FC^A$ & $\OB^{A_M}$ & $\VSS$ \\
$\OB^{A_M}$ & $\OB^A$ & $\DMSD$ \\
\bottomrule
\end{tabular}
\end{table}

\paragraph*{Join-irreducible generation.}
For completeness, in any finite lattice $L$ with least element $0$,
\begin{equation}
  x=\bigvee\{j\in J(L):j\le x\},
  \label{eq:finite-lattice-join-generation}
\end{equation}
where the empty join is $0$. To prove this, induct on the number of
elements at most $x$. For $x=0$ it is immediate, and for
join-irreducible $x$ the right-hand side includes $x$. Otherwise
$x=u\vee v$ for two strictly smaller elements. The induction
hypothesis expresses each of them as a join of join-irreducibles
at most $x$, proving the claim.
Consequently, if $x\nleq y$, some $j\in J(L)$ satisfies
$j\le x$ and $j\nleq y$. This is the sufficiency argument in
Proposition~\ref{prop:unique-principal-separators}.

For necessity, let $j\in J(Q)$ and let $j_*$ be its unique
lower cover. A principal upset $\uparrow x$ distinguishes
$j$ from $j_*$ only if $x\le j$ and $x\not\le j_*$.
Since every $x<j$ lies below $j_*$, this forces $x=j$.
Thus every principal separating family contains
$\uparrow j$ for every $j\in J(Q)$, proving uniqueness.

\paragraph*{Order embedding versus lattice embedding.}
The six-task signatures preserve meet, but their joins are not
in general set unions. For example,
\[
  \FS^A\vee_{\mathcal Q}\FC^A=\LU^R,
\]
whose signature properly contains
$\sigma(\FS^A)\cup\sigma(\FC^A)$.
No distributivity of the model lattice is asserted. In fact,
\[
\begin{aligned}
  \OB^R\wedge_{\mathcal Q}
       (\FS^A\vee_{\mathcal Q}\FC^A)&=\OB^R,\\
  (\OB^R\wedge_{\mathcal Q}\FS^A)
       \vee_{\mathcal Q}(\OB^R\wedge_{\mathcal Q}\FC^A)
       &=\OB^{A_M}.
\end{aligned}
\]
This does not affect distributivity of the upset lattice
$\Up(\mathcal Q)$, whose operations are union and intersection.

\subsection{Antichain and Maximal-Complement Enumeration}
\label{app:antichain-enumeration}

All counts in this subsection are determined solely by the finite
order of the thirteen rows. They certify the number of abstract
profiles and boundary obligations, not the existence of algorithms
or of common-failure executions.

\begin{proposition}[Antichains and maximal-complement counts]
\label{prop:antichain-boundary-enumeration}
The antichain polynomial of $\mathcal Q$ is
\[
  \mathsf A_{\mathcal Q}(z)
  =\sum_{A\in\Ant(\mathcal Q)}z^{|A|}
  =1+13z+24z^{2}+16z^{3},
\]
so $\Up(\mathcal Q)$ has $54$ elements, of which $53$ are nonempty.
For an antichain $A$ put
$U_A=\bigcup_{X\in A}\upa X$ and
$\mathcal N_A=\Max(\mathcal Q\setminus U_A)$. Then
\[
  \sum_{\substack{A\in\Ant(\mathcal Q)\\|A|=2}}|\mathcal N_A|=47,
  \qquad
  \sum_{\substack{A\in\Ant(\mathcal Q)\\|A|=3}}|\mathcal N_A|=34.
\]
\end{proposition}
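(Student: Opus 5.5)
The plan is to treat this as a finite order computation on the thirteen-element poset $\mathcal Q$. The order is already fixed by Theorem~\ref{thm:six-signature-landscape}: $X\le Y$ exactly when $\sigma(X)\subseteq\sigma(Y)$. I will work from the eighteen covering pairs of Table~\ref{tab:signature-cover-certificate}, taking their reflexive–transitive closure to get every principal upset $\upa X$ and every principal downset. No algorithmic or execution-level input is needed; everything below is counting.

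\textbf{Constant and linear coefficients.} The constant term $1$ is the empty antichain and the linear coefficient $13$ counts the singletons. For two-element antichains I count incomparable pairs. The comparable pairs can be listed from the closure, giving $\binom{13}{2}$ minus that number. As a cross-check, Table~\ref{tab:model-lattice-operations} already lists exactly $24$ incomparable pairs. I will verify that list is complete by checking each $X$: its incomparable partners are $\mathcal Q\setminus(\upa X\cup{\downarrow}X)$. Summing these sizes over $X$ should give $48=2\cdot 24$.

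\textbf{Cubic coefficient and no larger antichains.} First I bound the width by a Dilworth-type cover of $\mathcal Q$ by three chains:
\[
\OB^A<\OB^{A_M}<\OB^S<\OB^R<\OB^F<\LU^F,\quad
\FS^A<\FS^{A_M}<\FS^S<\FS^R<\LU^R,\quad
\FC^A<\FC^S .
\]
Each link here is a cover from Table~\ref{tab:signature-cover-certificate}, and the three chains partition all thirteen classes. So no antichain has four elements, and the polynomial stops at degree three. Three-element antichains are exactly the triangles in the incomparability graph on the $24$ edges. Since each such antichain meets each chain exactly once, I can enumerate triples $(x,y,z)$ with one element from each chain and keep the pairwise incomparable ones. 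There are at most $6\cdot5\cdot2=60$ candidates, and I expect $16$ to survive; for instance, the $A_0=\{\OB^F,\FS^R,\FC^S\}$ of Section~\ref{subsec:complete-realization-consequences} is one of them. The total $1+13+24+16=54$ then follows, and Proposition~\ref{prop:canonical-upset-decomposition} gives $|\Up(\mathcal Q)|=54$.

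\textbf{Boundary sums.} For each of the $40$ antichains $A$ with $|A|\ge 2$, I compute the downset $\mathcal Q\setminus U_A=\bigcap_{X\in A}(\mathcal Q\setminus\upa X)$ and take its maximal elements. Because the set is a downset, it is enough to test each candidate against its upper covers from Table~\ref{tab:signature-cover-certificate}. I will group these computations by the families of Table~\ref{tab:cf-family-coverage}. The row totals $16,8,18,4,1$ should give $47$, and the totals $18,13,3$ should give $34$; these row totals double as a consistency check. The main obstacle is purely bookkeeping: getting all $81$ maximal-complement entries right without a slip. To guard against that, I will cross-check each $\mathcal N_A$ by confirming that every element outside $U_A$ lies below some listed element, and that the listed elements are pairwise incomparable.
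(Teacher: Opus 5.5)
Your proposal is correct and follows essentially the paper's argument. A three-chain cover bounds the width, the $2$- and $3$-element antichains are enumerated directly from the signature order, and each $\mathcal N_A$ is obtained by testing upper covers against the upset $U_A$; carried out, this gives $24$, $16$, $47$ and $34$. The only differences are cosmetic: you place $\LU^R$ on the $\FS$ chain rather than the $\FC$ chain, and you enumerate triples one element per chain instead of by smallest row index as in Table~\ref{tab:antichain-enumeration-certificate}.
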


\begin{lemma}[Width of the computational lattice]
\label{lem:width-three}
The width of $\mathcal Q$ is three.
\end{lemma}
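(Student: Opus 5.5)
The plan is to prove a lower bound and an upper bound, using only the order on the thirteen classes as fixed by Theorem~\ref{thm:six-signature-landscape}, i.e.\ by the cover relations of Table~\ref{tab:signature-cover-certificate}.

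\textbf{Lower bound.} I would exhibit one three-element antichain. The natural choice is $\{\OB^F,\FS^R,\FC^S\}$, already used in Section~\ref{subsec:complete-realization-consequences}. Incomparability can be read off the six-bit signatures of Table~\ref{tab:six-separator-signatures}:
\begin{itemize}
\item $\OB^F$ has $\mathrm{EPG}^{*}$, while $\FS^R$ and $\FC^S$ lack it.
\item $\FS^R$ has $\mathrm{SRO}$ and $\CFSM$, neither of which $\FC^S$ or $\OB^F$ has in full.
\item $\FC^S$ has $\VSS$, which neither $\OB^F$ nor $\FS^R$ has.
\end{itemize}
By Theorem~\ref{thm:six-signature-landscape}, pairwise non-inclusion of signatures is exactly pairwise incomparability, so this triple is an antichain.

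\textbf{Upper bound.} I would cover $\mathcal Q$ by three chains and invoke the trivial direction of Dilworth: an antichain meets each chain in at most one element, so it has at most three elements. From the cover table I would take
\begin{align*}
C_1&=\OB^A<\OB^{A_M}<\OB^S<\OB^R<\OB^F<\LU^F,\\
C_2&=\FS^A<\FS^{A_M}<\FS^S<\FS^R<\LU^R,\\
C_3&=\FC^A<\FC^S.
\end{align*}
Every link is a cover listed in Table~\ref{tab:signature-cover-certificate} or a composition of such covers: $\LU^F$ covers $\OB^F$, $\FS^R$ covers $\FS^S$, and so on. Together the three chains contain $6+5+2=13$ elements, and they are pairwise disjoint. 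Hence they partition $\mathcal Q$, and the width is at most three.

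\textbf{Main obstacle.} The only real work is checking that the chain partition is correct: every element appears exactly once, and each displayed comparison is genuinely in the order rather than merely assumed. That is bookkeeping against Table~\ref{tab:signature-cover-certificate}.

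As a consistency check, the coefficient $16z^3$ and the absence of a $z^4$ term in Proposition~\ref{prop:antichain-boundary-enumeration} agree with this result. I would not rely on that check, though, since that proposition is presumably proved by the same enumeration.
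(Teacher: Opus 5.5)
Your proof is correct and follows the paper's argument: you partition $\mathcal Q$ into three chains for the upper bound and exhibit the antichain $\{\OB^F,\FS^R,\FC^S\}$ for the lower bound. The only difference is that you place $\LU^R$ on the $\FS$ chain, via the cover $\FS^R<\LU^R$, whereas the paper places it on the $\FC$ chain; this is immaterial, since both comparisons are covers in Table~\ref{tab:signature-cover-certificate}.
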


\begin{proof}
The thirteen classes are partitioned into the three chains
\[
\begin{aligned}
  C_1&:\quad \OB^A<\OB^{A_M}<\OB^S<\OB^R<\OB^F<\LU^F,\\
  C_2&:\quad \FS^A<\FS^{A_M}<\FS^S<\FS^R,\\
  C_3&:\quad \FC^A<\FC^S<\LU^R.
\end{aligned}
\]
Each $C_i$ is a chain in the order established by
Theorem~\ref{thm:six-signature-landscape}, and the three cover all
thirteen classes. An antichain meets each chain at most once, so its
size is at most three. Conversely $\{\OB^F,\FS^R,\FC^S\}$ is a
three-element antichain, since no two of its signatures are comparable.
Hence $\width(\mathcal Q)=3$.
\end{proof}

\begin{lemma}[Deletion recurrence for the antichain polynomial]
\label{lem:antichain-deletion-recurrence}
For a finite poset $P$ and $x\in P$, let
$\mathsf A_P(z)=\sum_{A\in\Ant(P)}z^{|A|}$. Then
\[
  \mathsf A_P(z)
  =\mathsf A_{P\setminus\{x\}}(z)
   +z\,\mathsf A_{P\setminus(\downarrow_Px\cup\upa_Px)}(z),
  \qquad \mathsf A_{\varnothing}(z)=1,
\]
where deletions carry the induced order.
\end{lemma}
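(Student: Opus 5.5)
The plan is to sort the antichains of $P$ according to whether they contain $x$. Antichains of $P$ that avoid $x$ are exactly the antichains of the induced subposet $P\setminus\{x\}$. Since the order on $P\setminus\{x\}$ is the restriction of that of $P$, comparability between two elements is the same in both posets. These antichains therefore contribute exactly $\mathsf A_{P\setminus\{x\}}(z)$, with sizes unchanged.

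Next, consider antichains $A$ containing $x$. The map $A\mapsto A\setminus\{x\}$ should be a bijection from this family onto $\Ant\bigl(P\setminus(\downarrow_Px\cup\upa_Px)\bigr)$. Every element of $A\setminus\{x\}$ is distinct from $x$ and incomparable with it, so it lies in the set $P':=P\setminus(\downarrow_Px\cup\upa_Px)$ of elements incomparable with $x$. It is still an antichain in the induced order. Conversely, for any antichain $A'$ of $P'$, the set $A'\cup\{x\}$ is an antichain of $P$: elements of $A'$ are pairwise incomparable because the induced order agrees with that of $P$, and each is incomparable with $x$ by definition of $P'$. The two maps are mutually inverse, and they change the cardinality by exactly one. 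This family therefore contributes $z\,\mathsf A_{P'}(z)$.

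Adding the two disjoint contributions gives the recurrence. The base case $\mathsf A_\varnothing(z)=1$ records that the empty poset has only the empty antichain. Note that $\downarrow_Px\cup\upa_Px$ contains $x$ itself, so $x\notin P'$ and the recursion strictly decreases size. Nothing here is a real obstacle. The only points needing care are that deletion uses the induced order, so antichain status is preserved in both directions, and that the convention $\downarrow_Px\ni x$ makes the second subposet exclude $x$.
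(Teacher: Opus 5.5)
Your proof is correct and is essentially the paper's argument: split the antichains by whether they contain $x$, identify those avoiding $x$ with the antichains of $P\setminus\{x\}$, and pass between those containing $x$ and the antichains of the subposet of elements incomparable with $x$ by removing or adjoining $x$, which accounts for the factor $z$. Your explicit remarks on the induced order and on $x\in\downarrow_P x$ are details the paper leaves implicit.
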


\begin{proof}
Antichains not containing $x$ are precisely those of
$P\setminus\{x\}$. If an antichain contains $x$, its other elements
must be incomparable with $x$, and form an antichain of
$P\setminus(\downarrow_Px\cup\upa_Px)$.
Conversely, adjoining $x$ to any such antichain is valid and increases
its size by one, accounting for the factor $z$.
\end{proof}

Number the classes $q_1,\ldots,q_{13}$ in the row order of
Table~\ref{tab:six-separator-signatures}, and let $S_i=\sigma(q_i)$.
Thus
\[
\begin{aligned}
(q_1,\ldots,q_{13})=(&\LU^F,\OB^F,\LU^R,\FS^R,\OB^R,
                    \FC^S,\FC^A,\\
                  &\FS^S,\FS^{A_M},\FS^A,\OB^S,\OB^{A_M},\OB^A).
\end{aligned}
\]
For $k\in\{2,3\}$, let $a_k(i)$ be the number of $k$-element
antichains whose smallest row index is $i$, and let $b_k(i)$ be the
sum of $|\mathcal N_A|$ over those antichains.
The term ``smallest'' here refers to the numerical row index, not
to a least element in the computational order.

Here is an explicit counting rule using the signature table. Put
\[
  I_i=\{j>i:S_i\nsubseteq S_j\text{ and }S_j\nsubseteq S_i\}.
\]
Then
\[
  a_2(i)=|I_i|,\qquad
  a_3(i)=\bigl|\{(j,k):j,k\in I_i,\ j<k,\ q_j\parallel q_k\}\bigr|.
\]
Thus the incomparable-pair list in
Table~\ref{tab:model-lattice-operations} also lists exactly the
possible pairs used by this counting rule. Alternatively, the
polynomial follows by applying
Lemma~\ref{lem:antichain-deletion-recurrence} to the same order.

For the boundary counts, $Y$ belongs to $\mathcal N_A$ exactly when
\begin{align*}
 &\sigma(X)\nsubseteq\sigma(Y)\quad\text{for every }X\in A,\\
 &\text{for every }Z>Y\text{ there is }X\in A
      \text{ with }\sigma(X)\subseteq\sigma(Z).
\end{align*}
The first condition puts $Y$ outside $U_A$, and the second makes it
maximal there. In the second condition it suffices to test the upper
covers of $Y$, because $U_A$ is an upset and $\mathcal Q$ is finite.
Applying these rules gives Table~\ref{tab:antichain-enumeration-certificate}.

\begin{table}[t]
\centering
\small
\setlength{\tabcolsep}{7pt}
\renewcommand{\arraystretch}{1.15}
\caption{Finite enumeration certificate, grouped by the smallest
row index. The omitted row indices have zero in all four columns.}
\label{tab:antichain-enumeration-certificate}
\begin{tabular}{@{}clrrrr@{}}
\toprule
$i$ & $q_i$ & $a_2(i)$ & $a_3(i)$ & $b_2(i)$ & $b_3(i)$\\
\midrule
2 & $\OB^F$ & 7 & 8 & 16 & 18 \\
4 & $\FS^R$ & 2 & 0 & 5 & 0 \\
5 & $\OB^R$ & 5 & 6 & 8 & 13 \\
6 & $\FC^S$ & 3 & 0 & 8 & 0 \\
7 & $\FC^A$ & 4 & 2 & 6 & 3 \\
9 & $\FS^{A_M}$ & 1 & 0 & 2 & 0 \\
10 & $\FS^A$ & 2 & 0 & 2 & 0 \\
\midrule
 & Total & 24 & 16 & 47 & 34\\
\bottomrule
\end{tabular}
\end{table}

\begin{proof}[Proof of Proposition~\ref{prop:antichain-boundary-enumeration}]
There is one empty antichain and there are thirteen singleton
antichains. Lemma~\ref{lem:width-three} excludes antichains of size
four or more, and the columns $a_2$ and $a_3$ of
Table~\ref{tab:antichain-enumeration-certificate} give $24$ and $16$.
Thus $\mathsf A_{\mathcal Q}(z)=1+13z+24z^2+16z^3$, and by the
antichain--upset bijection of
Proposition~\ref{prop:canonical-upset-decomposition} the number of
upsets is $1+13+24+16=54$. The columns $b_2$ and $b_3$ give the totals
$47$ and $34$ for the maximal-complement obligations.
\end{proof}


\end{document}

%% file: pp13_four_columns.tex
\begin{table}[!htbp]
\centering
\small
  \footnotesize
  \setlength{\tabcolsep}{4pt}
  \renewcommand{\arraystretch}{1.0}
\caption{Thirteen principal tasks and their boundary proof obligations.
The third column lists the maximal classes outside $\uparrow X$. 
}
\label{tab:PP13}
\begin{tabular}{@{}>{\raggedright\arraybackslash}p{\dimexpr.29\linewidth-1.5\tabcolsep\relax}>{\raggedright\arraybackslash}p{\dimexpr.26\linewidth-1.5\tabcolsep\relax}>{\raggedright\arraybackslash}p{\dimexpr.31\linewidth-1.5\tabcolsep\relax}>{\raggedright\arraybackslash}p{\dimexpr.14\linewidth-1.5\tabcolsep\relax}@{}}
\toprule
\textbf{Minimal profile class $X$}
& \textbf{Principal task $B_X$}
& $\Max(\mathcal Q\setminus\uparrow X)$
& \textbf{Proof reference} \\
\midrule
$\LU^F$ & $\text{EPG}$ & $\{\OB^F,\allowbreak \LU^R\}$ & Cor.~\ref{cor:epg_profile} \\
\addlinespace[2pt]
$\OB^F$ & $\mathrm{EPG}^{*}$ & $\{\LU^R\}$ & Thm.~\ref{thm:six-principal-separators}
\\
\addlinespace[2pt]
$\LU^R$ & $\text{HET}$ & $\{\OB^F,\allowbreak \FS^R,\allowbreak \FC^S\}$ & \cite{NNW-SSS2025} \\
\addlinespace[2pt]
$\FS^R$ & $\text{OSPO}$ & $\{\OB^F,\allowbreak \FC^S,\allowbreak \FS^S\}$ & Lem.~\ref{lem:ospo-bounds} \\
\addlinespace[2pt]
$\OB^R$ & $\text{SRO}$ & $\{\FC^S,\allowbreak \FS^S\}$ & Thm.~\ref{thm:six-principal-separators}
\\
\addlinespace[2pt]
$\FC^S$ & $\text{\SRDV}$ & $\{\OB^F,\allowbreak \FS^R,\allowbreak \FC^A\}$ & Lem.~\ref{lem:s-rdv-bounds} \\
\addlinespace[2pt]
$\FS^S$ & $\text{ZCC}$ & $\{\OB^F,\allowbreak \FC^S,\allowbreak \FS^{A_M}\}$ & \cite{NNW-SSS2025},App.~\ref{app:cv} \\
\addlinespace[2pt]
$\OB^S$ & $\text{MCv}$ & $\{\FC^A,\allowbreak \FS^{A_M}\}$ & Thm.~\ref{thm:six-principal-separators}
\\
\addlinespace[2pt]
$\FC^A$ & $\text{\VSS}$ & $\{\OB^F,\allowbreak \FS^R\}$ & Thm.~\ref{thm:six-principal-separators} \\
\addlinespace[2pt]
$\FS^{A_M}$ & $\text{VEC}$ & $\{\OB^F,\allowbreak \FC^S,\allowbreak \FS^A\}$ & \cite{NNW-SSS2025} \\
\addlinespace[2pt]
$\OB^{A_M}$ & $\text{\DMSD}$ & $\{\FS^A\}$ & Thm.~\ref{thm:six-principal-separators} \\
\addlinespace[2pt]
$\FS^A$ & $\text{\CFSM}$ & $\{\OB^F,\allowbreak \FC^S\}$ & Thm.~\ref{thm:six-principal-separators} \\
\addlinespace[2pt]
$\OB^A$ & $\text{Cv}$ & $\varnothing$ & Trivial, App.~\ref{app:cv} \\
\bottomrule
\end{tabular}
\end{table}

%% file: STACS-Venn-Diagram.tex
\begin{figure}[!htbp]
  \centering
  \definecolor{letRed}{RGB}{192,0,0}
  \definecolor{letOrg}{RGB}{170,70,0}
  \definecolor{ellRed}{RGB}{255,107,107}
  \definecolor{ellGreen}{RGB}{29,185,100}
  \definecolor{ellBlue}{RGB}{84,155,215}
  \definecolor{ellPurple}{RGB}{155,40,145}
  \def\eRa{(799,450)  ellipse [x radius=764, y radius=440]}
  \def\eRb{(830,449)  ellipse [x radius=690, y radius=421]}
  \def\eGa{(768,449)  ellipse [x radius=563, y radius=399]}
  \def\eGb{(748,450)  ellipse [x radius=446, y radius=383]}
  \def\eGc{(721,450)  ellipse [x radius=338, y radius=365]}
  \def\eGd{(701,450)  ellipse [x radius=238, y radius=344]}
  \def\eBa{(1013,451) ellipse [x radius=473, y radius=364]}
  \def\eBb{(1099.66,808.84)
    arc[start angle=72.15, end angle=-195.31, x radius=377.3, y radius=225.6]
    arc[start angle=146.17, end angle=79.44, x radius=473, y radius=364] -- cycle}
  \def\ePa{(765,449)  ellipse [x radius=686, y radius=247]}
  \def\ePb{(769,450)  ellipse [x radius=529, y radius=200]}
  \def\ePc{(901,449)  ellipse [x radius=263, y radius=156]}
  \def\ePd{(913,450)  ellipse [x radius=113, y radius=52]}
  \def\ePe{(884,449)  ellipse [x radius=41,  y radius=29]}

  \newcommand{\fillall}{\fill[dotfill] (-100,-100) rectangle (1700,1000);}

  \begin{tikzpicture}[
      x=0.009cm, y=-0.009cm, line width=0.8pt,
      even odd rule,  
      let/.style={font=\sffamily\bfseries\small, text=letRed},   
      slet/.style={font=\sffamily\scriptsize, text=letRed},      
      pt/.style={font=\small},                                   
      Opt/.style={font=\small, text=letOrg},                     
      spt/.style={font=\tiny},                                   
      Ospt/.style={font=\tiny, text=letOrg},                     
      nm/.style={font=\small},                                   
       dotfill/.style={fill=black!20},                            
      ac2fill/.style={fill={rgb,255:red,255;green,236;blue,150}}, 
      ac3fill/.style={fill={rgb,255:red,170;green,215;blue,255}}, 
    ]

    \begin{scope}\clip \eRa; \clipout{\eRb}\clipout{\eGa}\clipout{\eBa}\clipout{\ePa} \fillall\end{scope}
    \begin{scope}\clip \eRa; \clip \eRb; \clipout{\eGa}\clipout{\eBa}\clipout{\ePa} \fillall\end{scope}
    \begin{scope}\clip \eRa; \clip \eRb; \clip \eGa; \clipout{\eGb}\clipout{\eBa}\clipout{\ePa} \fillall\end{scope}
    \begin{scope}\clip \eRa; \clip \eRb; \clip \eGb; \clipout{\eGc}\clipout{\eBa}\clipout{\ePa} \fillall\end{scope}
    \begin{scope}\clip \eRa; \clip \eRb; \clip \eGc; \clipout{\eGd}\clipout{\eBa}\clipout{\ePa} \fillall\end{scope}
    \begin{scope}\clip \eRa; \clip \eRb; \clip \eGd; \clipout{\eBa}\clipout{\ePa} \fillall\end{scope}
    \begin{scope}\clip \eRa; \clip \ePa; \clipout{\eRb}\clipout{\eGa}\clipout{\eBa}\clipout{\ePb} \fillall\end{scope}
    \begin{scope}\clip \eRa; \clip \eRb; \clip \eGa; \clip \ePb; \clipout{\eGb}\clipout{\eBa}\clipout{\ePc} \fillall\end{scope}
    \begin{scope}\clip \eGd; \clip \eBb; \clip \ePe; \fillall\end{scope}
    \begin{scope}\clip \eRa; \clip \eRb; \clip \eGc; \clip \eBb; \clip \ePd; \clipout{\eGd}\clipout{\ePe} \fillall\end{scope}
    \begin{scope}\clip \eRa; \clip \eRb; \clip \eGb; \clip \eBa; \clip \ePc; \clipout{\eGc}\clipout{\eBb}\clipout{\ePd} \fillall\end{scope}
    \begin{scope}\clip \eRa; \clip \eRb; \clip \eBa; \clipout{\eGa}\clipout{\eBb}\clipout{\ePa} \fillall\end{scope}
    \begin{scope}\clip \eRa; \clip \eRb; \clip \eBb; \clipout{\eGa}\clipout{\ePa} \fillall\end{scope}

    \begin{scope}\clip \eRb; \clip \ePa; \clipout{\eGa}\clipout{\eBa}\clipout{\ePb} \fill[ac2fill] (-100,-100) rectangle (1700,1000);\end{scope}
    \begin{scope}\clip \eRb; \clip \eGa; \clip \ePa; \clipout{\eGb}\clipout{\eBa}\clipout{\ePb} \fill[ac2fill] (-100,-100) rectangle (1700,1000);\end{scope}
    \begin{scope}\clip \eRb; \clip \eBa; \clip \ePa; \clipout{\eGa}\clipout{\eBb}\clipout{\ePb} \fill[ac2fill] (-100,-100) rectangle (1700,1000);\end{scope}
    \begin{scope}\clip \eRb; \clip \eGb; \clip \ePa; \clipout{\eGc}\clipout{\eBa}\clipout{\ePb} \fill[ac2fill] (-100,-100) rectangle (1700,1000);\end{scope}
    \begin{scope}\clip \eRb; \clip \eBb; \clip \ePa; \clipout{\eGa}\clipout{\ePb} \fill[ac2fill] (-100,-100) rectangle (1700,1000);\end{scope}
    \begin{scope}\clip \eRb; \clip \eGc; \clip \ePa; \clipout{\eGd}\clipout{\eBa}\clipout{\ePb} \fill[ac2fill] (-100,-100) rectangle (1700,1000);\end{scope}
    \begin{scope}\clip \eRb; \clip \eGd; \clip \ePa; \clipout{\eBa}\clipout{\ePb} \fill[ac2fill] (-100,-100) rectangle (1700,1000);\end{scope}
    \begin{scope}\clip \eRb; \clip \eGa; \clip \eBa; \clipout{\eGb}\clipout{\eBb}\clipout{\ePa} \fill[ac2fill] (-100,-100) rectangle (1700,1000);\end{scope}
    \begin{scope}\clip \eRb; \clip \eGa; \clip \eBb; \clipout{\eGb}\clipout{\ePa} \fill[ac2fill] (-100,-100) rectangle (1700,1000);\end{scope}
    \begin{scope}\clip \eRb; \clip \eGa; \clip \eBa; \clip \ePb; \clipout{\eGb}\clipout{\eBb}\clipout{\ePc} \fill[ac2fill] (-100,-100) rectangle (1700,1000);\end{scope}
    \begin{scope}\clip \eRb; \clip \eGb; \clip \ePb; \clipout{\eGc}\clipout{\eBa}\clipout{\ePc} \fill[ac2fill] (-100,-100) rectangle (1700,1000);\end{scope}
    \begin{scope}\clip \eRb; \clip \eGa; \clip \eBb; \clip \ePb; \clipout{\eGb}\clipout{\ePc} \fill[ac2fill] (-100,-100) rectangle (1700,1000);\end{scope}
    \begin{scope}\clip \eRb; \clip \eGc; \clip \ePb; \clipout{\eGd}\clipout{\eBa}\clipout{\ePc} \fill[ac2fill] (-100,-100) rectangle (1700,1000);\end{scope}
    \begin{scope}\clip \eRb; \clip \eGd; \clip \ePb; \clipout{\eBa}\clipout{\ePc} \fill[ac2fill] (-100,-100) rectangle (1700,1000);\end{scope}
    \begin{scope}\clip \eRb; \clip \eGb; \clip \eBa; \clipout{\eGc}\clipout{\eBb}\clipout{\ePa} \fill[ac2fill] (-100,-100) rectangle (1700,1000);\end{scope}
    \begin{scope}\clip \eRb; \clip \eGc; \clip \eBa; \clipout{\eGd}\clipout{\eBb}\clipout{\ePa} \fill[ac2fill] (-100,-100) rectangle (1700,1000);\end{scope}
    \begin{scope}\clip \eRb; \clip \eGd; \clip \eBa; \clipout{\eBb}\clipout{\ePa} \fill[ac2fill] (-100,-100) rectangle (1700,1000);\end{scope}
    \begin{scope}\clip \eRb; \clip \eGb; \clip \eBb; \clipout{\eGc}\clipout{\ePa} \fill[ac2fill] (-100,-100) rectangle (1700,1000);\end{scope}
    \begin{scope}\clip \eRb; \clip \eGb; \clip \eBb; \clip \ePc; \clipout{\eGc}\clipout{\ePd} \fill[ac2fill] (-100,-100) rectangle (1700,1000);\end{scope}
    \begin{scope}\clip \eRb; \clip \eGc; \clip \eBa; \clip \ePc; \clipout{\eGd}\clipout{\eBb}\clipout{\ePd} \fill[ac2fill] (-100,-100) rectangle (1700,1000);\end{scope}
    \begin{scope}\clip \eRb; \clip \eGd; \clip \eBa; \clip \ePc; \clipout{\eBb}\clipout{\ePd} \fill[ac2fill] (-100,-100) rectangle (1700,1000);\end{scope}
    \begin{scope}\clip \eRb; \clip \eGc; \clip \eBb; \clipout{\eGd}\clipout{\ePa} \fill[ac2fill] (-100,-100) rectangle (1700,1000);\end{scope}
    \begin{scope}\clip \eRb; \clip \eGd; \clip \eBb; \clipout{\ePa} \fill[ac2fill] (-100,-100) rectangle (1700,1000);\end{scope}
    \begin{scope}\clip \eRb; \clip \eGd; \clip \eBb; \clip \ePd; \clipout{\ePe} \fill[ac2fill] (-100,-100) rectangle (1700,1000);\end{scope}

    \begin{scope}\clip \eRb; \clip \eGa; \clip \eBa; \clip \ePa; \clipout{\eGb}\clipout{\eBb}\clipout{\ePb} \fill[ac3fill] (-100,-100) rectangle (1700,1000);\end{scope}
    \begin{scope}\clip \eRb; \clip \eGa; \clip \eBb; \clip \ePa; \clipout{\eGb}\clipout{\ePb} \fill[ac3fill] (-100,-100) rectangle (1700,1000);\end{scope}
    \begin{scope}\clip \eRb; \clip \eGb; \clip \eBa; \clip \ePa; \clipout{\eGc}\clipout{\eBb}\clipout{\ePb} \fill[ac3fill] (-100,-100) rectangle (1700,1000);\end{scope}
    \begin{scope}\clip \eRb; \clip \eGc; \clip \eBa; \clip \ePa; \clipout{\eGd}\clipout{\eBb}\clipout{\ePb} \fill[ac3fill] (-100,-100) rectangle (1700,1000);\end{scope}
    \begin{scope}\clip \eRb; \clip \eGd; \clip \eBa; \clip \ePa; \clipout{\eBb}\clipout{\ePb} \fill[ac3fill] (-100,-100) rectangle (1700,1000);\end{scope}
    \begin{scope}\clip \eRb; \clip \eGb; \clip \eBb; \clip \ePa; \clipout{\eGc}\clipout{\ePb} \fill[ac3fill] (-100,-100) rectangle (1700,1000);\end{scope}
    \begin{scope}\clip \eRb; \clip \eGc; \clip \eBb; \clip \ePa; \clipout{\eGd}\clipout{\ePb} \fill[ac3fill] (-100,-100) rectangle (1700,1000);\end{scope}
    \begin{scope}\clip \eRb; \clip \eGd; \clip \eBb; \clip \ePa; \clipout{\ePb} \fill[ac3fill] (-100,-100) rectangle (1700,1000);\end{scope}
    \begin{scope}\clip \eRb; \clip \eGb; \clip \eBa; \clip \ePb; \clipout{\eGc}\clipout{\eBb}\clipout{\ePc} \fill[ac3fill] (-100,-100) rectangle (1700,1000);\end{scope}
    \begin{scope}\clip \eRb; \clip \eGc; \clip \eBa; \clip \ePb; \clipout{\eGd}\clipout{\eBb}\clipout{\ePc} \fill[ac3fill] (-100,-100) rectangle (1700,1000);\end{scope}
    \begin{scope}\clip \eRb; \clip \eGd; \clip \eBa; \clip \ePb; \clipout{\eBb}\clipout{\ePc} \fill[ac3fill] (-100,-100) rectangle (1700,1000);\end{scope}
    \begin{scope}\clip \eRb; \clip \eGb; \clip \eBb; \clip \ePb; \clipout{\eGc}\clipout{\ePc} \fill[ac3fill] (-100,-100) rectangle (1700,1000);\end{scope}
    \begin{scope}\clip \eRb; \clip \eGc; \clip \eBb; \clip \ePb; \clipout{\eGd}\clipout{\ePc} \fill[ac3fill] (-100,-100) rectangle (1700,1000);\end{scope}
    \begin{scope}\clip \eRb; \clip \eGd; \clip \eBb; \clip \ePb; \clipout{\ePc} \fill[ac3fill] (-100,-100) rectangle (1700,1000);\end{scope}
    \begin{scope}\clip \eRb; \clip \eGc; \clip \eBb; \clip \ePc; \clipout{\eGd}\clipout{\ePd} \fill[ac3fill] (-100,-100) rectangle (1700,1000);\end{scope}
    \begin{scope}\clip \eRb; \clip \eGd; \clip \eBb; \clip \ePc; \clipout{\ePd} \fill[ac3fill] (-100,-100) rectangle (1700,1000);\end{scope}

    \draw[ellRed] \eRa;
    \draw[ellRed] \eRb;
    \draw[ellGreen] \eGa;
    \draw[ellGreen] \eGb;
    \draw[ellGreen] \eGc;
    \draw[ellGreen] \eGd;
    \draw[ellBlue] \eBa;
    \draw[ellBlue] \eBb;
    \draw[ellPurple] \ePa;
    \draw[ellPurple] \ePb;
    \draw[ellPurple] \ePc;
    \draw[ellPurple] \ePd;
    \draw[ellPurple] \ePe;

    \node[spt, text=ellRed] at (650,0) {{\boldmath$\LU^{F}$}};
    \node[spt, text=ellRed] at (700,20) {{\boldmath$\LU^{R}$}};

    \node[spt, text=ellBlue] at (1050,70) {{\boldmath$\FC^{S}$}};
    \node[spt, text=ellBlue] at (1000,350) {{\boldmath$\FC^{A}$}};

    \node[spt, text=ellGreen] at (550,130) {{\boldmath$\FS^{A}$}};
    \node[spt, text=ellGreen] at (475,155) {{\boldmath$\FS^{A_{M}}$}};
    \node[spt, text=ellGreen] at (370,185) {{\boldmath$\FS^{S}$}};
    \node[spt, text=ellGreen] at (265,210) {{\boldmath$\FS^{R}$}};

    \node[spt, text=ellPurple] at (40,380)  {{\boldmath$\OB^{F}$}};
    \node[spt, text=ellPurple] at (190,400) {{\boldmath$\OB^{R}$}};
    \node[spt, text=ellPurple] at (570,440) {{\boldmath$\OB^{S}$}};
    \node[spt, text=ellPurple] at (730,470) {{\boldmath$\OB^{A_{M}}$}};
    \node[spt, text=ellPurple] at (890,410) {{\boldmath$\OB^{A}$}};

    \node[Opt] at (150,300)  {\hyperref[app:epg]{\boldmath$\cdot\ EPG$}};
    \node[pt] at (280,260)  {\hyperref[app:het]{\boldmath$\cdot\ HET$}};
    \node[Opt] at (370,230)  {\hyperref[app:osp]{\boldmath$\cdot\ OSPO$}};
    \node[pt] at (490,200)  {\hyperref[app:zcc]{\boldmath$\cdot\ ZCC$}};
    \node[pt] at (575,180)  {\hyperref[app:vec]{\boldmath$\cdot\ VEC$}};
    \node[Opt] at (683,150)  {\hyperref[app:cfs]{\boldmath$\cdot\ CFSM$}};
    \node[Opt] at (180,500)  {\hyperref[app:epgs]{\boldmath$\cdot\ EPG*$}};
    \node[pt] at (330,450)  {\hyperref[app:sro]{\boldmath$\cdot\ SRO$}};
    \node[pt] at (910,450)  {\hyperref[app:cv]{\boldmath$\cdot\ Cv$}};
    \node[Opt] at (1050,450) {\hyperref[app:dmsd]{\boldmath$\cdot\ DMSD$}};
    \node[pt] at (1130,350) {\hyperref[app:mcv]{\boldmath$\cdot\ MCv$}};
    \node[Opt] at (1300,680) {\hyperref[app:vss]{\boldmath$\cdot\ VSS$}};
    \node[Opt] at (1350,220) {\hyperref[app:s-rdv]{\boldmath$\cdot\ S\textbf{-}RDV$}};

    \node[spt] at (290,660) {\hyperref[app:het]{\boldmath$\cdot\ HET$}};
    \node[Ospt] at (370,680) {\hyperref[app:osp]{\boldmath$\cdot\ OSPO$}};
    \node[spt] at (470,700) {\hyperref[app:zcc]{\boldmath$\cdot\ ZCC$}};
    \node[spt] at (580,730) {\hyperref[app:vec]{\boldmath$\cdot\ VEC$}};
    \node[Ospt] at (670,740) {\hyperref[app:cfs]{\boldmath$\cdot\ CFSM$}};
    
    \node[let]  at (175,440)  {\hyperref[ssub:f1-1]{EH}};
    \node[let]  at (285,320)  {\hyperref[ssub:f1-2]{EO}};
    \node[slet] at (285,580)  {\hyperref[ssub:f1-2]{EO}};
    \node[let]  at (1390,440) {\hyperref[ssub:f1-3]{ES}};
    \node[let]  at (390,276)  {\hyperref[ssub:f1-4]{EZ}};
    \node[slet] at (390,622)  {\hyperref[ssub:f1-4]{EZ}};
    \node[slet]  at (1330,565) {\hyperref[ssub:f1-5]{EV}};
    \node[let]  at (465,255)  {\hyperref[ssub:f1-6]{EX}};
    \node[slet] at (470,645)  {\hyperref[ssub:f1-6]{EX}};
    \node[let]  at (550,245)  {\hyperref[ssub:f1-7]{EC}};
    \node[slet] at (550,655)  {\hyperref[ssub:f1-7]{EC}};
    
    \node[let]  at (1230,390) {\hyperref[ssub:f2-1]{RS}};
    \node[let]  at (345,416)  {\hyperref[ssub:f2-2]{RZ}};
    \node[let]  at (1238,480) {\hyperref[ssub:f2-3]{RV}};
    \node[let]  at (425,400)  {\hyperref[ssub:f2-4]{RX}};
    \node[let]  at (510,380)  {\hyperref[ssub:f2-5]{RC}};

    \node[let]  at (1120,180)  {\hyperref[ssub:f3-1]{OS}};
    \node[let]  at (1110,725)  {\hyperref[ssub:f3-2]{OV}};
    \node[let]  at (1000,180)  {\hyperref[ssub:f3-3]{SZ}};
    \node[let]  at (890,180)  {\hyperref[ssub:f3-4]{SX}};
    \node[let]  at (782,180)  {\hyperref[ssub:f3-5]{SC}};
    \node[let]  at (1000,725) {\hyperref[ssub:f3-6]{ZV}};
    \node[let]  at (885,725)  {\hyperref[ssub:f3-7]{VX}};
    \node[let]  at (770,725)  {\hyperref[ssub:f3-8]{VC}};

    \node[let]  at (1100,500) {\hyperref[ssub:f4-1]{MV}};
    \node[let]  at (970,325)  {\hyperref[ssub:f4-2]{MX}};
    \node[let]  at (790,354)  {\hyperref[ssub:f4-3]{MC}};

    \node[slet] at (820,450)  {\hyperref[ssub:f5]{DC}};

    \node[let]  at (1220,300) {\hyperref[ssub:g1-1]{EOS}};
    \node[let]  at (1230,590) {\hyperref[ssub:g1-2]{EOV}};
    \node[let]  at (1090,265) {\hyperref[ssub:g1-3]{ESZ}};
    \node[let]  at (940,235)  {\hyperref[ssub:g1-4]{ESX}};
    \node[let]  at (770,225)  {\hyperref[ssub:g1-5]{ESC}};    
    \node[let]  at (1080,640) {\hyperref[ssub:g1-6]{EZV}};
    \node[let]  at (945,662)  {\hyperref[ssub:g1-7]{EVX}};
    \node[let]  at (785,670)  {\hyperref[ssub:g1-8]{EVC}};
    
    \node[let]  at (1110,322) {\hyperref[ssub:g2-1]{RSZ}};
    \node[slet] at (940,280)  {\hyperref[ssub:g2-2]{RSX}};
    \node[let]  at (645,315)  {\hyperref[ssub:g2-3]{RSC}};
    \node[let]  at (1135,565) {\hyperref[ssub:g2-4]{RZV}};
    \node[slet] at (960,620)  {\hyperref[ssub:g2-5]{RVX}};
    \node[let]  at (730,610)  {\hyperref[ssub:g2-6]{RVC}};
    
    \node[let]  at (992,542)  {\hyperref[ssub:g3-1]{MVX}};
    \node[let]  at (800,530)  {\hyperref[ssub:g3-2]{MVC}};
    \node[font=\small] at (150,860) {$\textbf{VD, Chirality, Rigidity}$};
  \end{tikzpicture}
  \caption{%
  Complete solvability-profile diagram for the thirteen classes.
Colors distinguish principal, two-minimal, and three-minimal
profiles; labels give their canonical principal-task
representations (e.g., EO and EOS).
The task codes are
$E=\mathrm{EPG}^{*}$,
$R=\mathrm{SRO}$,
$M=\mathrm{MCv}$,
$C=\mathrm{CFSM}$,
$V=\mathrm{VSS}$,
$D=\mathrm{DMSD}$,
$G=\mathrm{EPG}$,
$H=\mathrm{HET}$,
$O=\mathrm{OSPO}$,
$S=\mathrm{S\text{-}RDV}$,
$Z=\mathrm{ZCC}$,
$X=\mathrm{VEC}$, and
$Q=\mathrm{Cv}$.
}
  \label{fig:complete-profile-venn} 
\end{figure}

%% file: STACS-scheduler-old.tex
We define the schedulers considered in this paper for the two-robot
system $R=\{a,b\}$.  All schedulers are assumed to be fair: each robot
performs infinitely many \LCM{} cycles.

For a robot $r\in R$ and $i\ge 1$, let
$t_L(r,i)$, $t_C(r,i)$, $t_B(r,i)$, and $t_E(r,i)$ denote the times at
which $r$ performs its $i$-th \Look, \Compute, begins its $i$-th \Move, and
ends its $i$-th \Move, respectively.  We assume
\(
t_L(r,i) < t_C(r,i) < t_B(r,i) < t_E(r,i) < t_L(r,i+1).
\)
The \Look{} and \Compute{} operations are instantaneous, while every
\Move{} operation has finite duration.  Movements are continuous and rigid:
if $r$ computes destination $p_E$ from its current position $p_B$, then it
moves along the segment $[p_B,p_E]$ during
$[t_B(r,i),t_E(r,i)]$ and reaches $p_E$ at time $t_E(r,i)$.

A synchronous execution is described by an infinite activation sequence
$E_1,E_2,\ldots$, where each
\(
E_j\in\bigl\{\{a\},\{b\},\{a,b\}\bigr\}
\)
is the set of robots activated in round $j$.  Robots in $E_j$ perform one
complete \LCM{} cycle synchronously, while robots not in $E_j$ remain
inactive.

The fully synchronous scheduler \FSY{} is the case where
$E_j=\{a,b\}$ for all $j$.  The semi-synchronous scheduler \SSY{} allows
any nonempty activation set in each round, subject only to fairness.  The
restricted synchronous scheduler \RSY{} consists of \FSY or an optional finite
prefix of fully synchronous rounds followed by alternating singleton
activations:
\(
\{a\},\{b\},\{a\},\{b\},\ldots
\quad\text{or}\quad
\{b\},\{a\},\{b\},\{a\},\ldots .
\)

In the asynchronous scheduler \ASY{}, there are no rounds.  The adversary
chooses the times $t_L,t_C,t_B,t_E$ and the duration of each \Move{}
operation arbitrarily, subject only to the ordering constraint above,
finite duration of operations, and fairness.

We also consider atomic subclasses of \ASY{}.  Since there are only two
robots, these subclasses can be defined by forbidding one robot from
performing a \Look{} operation during certain intervals of the other
robot's cycle.  For all $r\in R$ and all $i,j\ge 1$, we define:

\begin{itemize}
\item \textbf{$LC$-atomic \ASY}: 
\(
t_L(\overline r,j)\notin (t_L(r,i),t_C(r,i)].
\)

\item \textbf{$M$-atomic \ASY}: 
\(
t_L(\overline r,j)\notin [t_B(r,i),t_E(r,i)] .
\)

\item \textbf{$CM$-atomic \ASY}: 
\(
t_L(\overline r,j)\notin [t_C(r,i),t_E(r,i)] .
\)

\item \textbf{\LCM-atomic \ASY}: 
\(
t_L(\overline r,j)\notin (t_L(r,i),t_E(r,i)] .
\)
\end{itemize}

The \LCM-atomic \ASY{} scheduler is equivalent to \SSY{}.  In the
\OB{} and \FS{} models, $M$-atomic \ASY{} and $CM$-atomic \ASY{} are
equivalent because robots cannot observe the other robot's color.  The
same equivalence is known for \FC{} robots~\cite{FSSW23}.  Hence we focus
on $M$-atomic \ASY{} and omit $CM$-atomic \ASY{} from the landscape.


%% file: Definitions-Boundary.tex
\label{app:known-separator-details}
\label{app:additional-principal-tasks}

This appendix collects the formal specifications of the geometric basis
tasks and the boundary proofs cited in
Section~\ref{subsec:basis-proof-obligations}.

\subsection{Shrinking Rotation (SRO)}
\label{app:sro}

This subsection records the formal specification of $\mathrm{SRO}$ and
proves Lemma~\ref{lem:six-bounds}\,(\ref{lem:six-sro-bounds}).

\begin{definition}[Shrinking Rotation]
\label{def:sro}
Two robots $a,b$ start from distinct positions. Let $\Sab$ be the set
of times at which neither robot is inside a nonzero $\M$ operation,
list the configurations at the times of $\Sab$ in chronological order,
and remove consecutive repetitions; call the resulting sequence
$C_0,C_1,C_2,\dots$ with $C_i=(a_i,b_i)$. Put $d_i:=b_i-a_i$, let
$Q_i$ be the square with diagonal $a_ib_i$, interior included, and set
$Q_{-1}:=\mathbb R^2$. Let $R_\phi$ denote the counterclockwise
rotation by $\phi$. The first condition (perpetuity) requires the
sequence $(C_i)$ to be infinite. The second condition (rotation)
requires that for every $i$ the vector $d_{i+1}$ be the clockwise
$90^\circ$ rotation of $d_i$ ($\mathbf{C1}$), or its clockwise
$45^\circ$ rotation scaled by $1/\sqrt2$ ($\mathbf{C2}$), and that
$a_{i+1},b_{i+1}\in Q_{i-1}$ ($\mathbf{C3}$).
\begin{align*}
\Phi_{\mathrm{SRO}}\equiv\big[\ &\{\,(C_i)_{i\ge0}\ \text{is infinite}\,\}\\
 &\text{and}\ \{\,\forall i\ge0:\
 \bigl(d_{i+1}=R_{-90^\circ}d_i\ \text{or}\
 d_{i+1}=\tfrac{1}{\sqrt2}R_{-45^\circ}d_i\bigr)\\
 &\qquad\wedge\bigl(a_{i+1},b_{i+1}\in Q_{i-1}\bigr)\,\}\ \big]
\end{align*}
\end{definition}

Here $\Sab$ is the set of joint stops of Definition~\ref{def:six-dmsd}.
The reference~\cite{FSW19} does not make the sequence $(C_i)$ explicit.
We define $(C_i)$ as the configurations at the times at which neither
robot is inside a nonzero $\M$, so that the intermediate configurations
in which one robot has landed while the other is still in flight are
excluded. With this convention each term of $(C_i)$ has both robots at
their initial positions or at the landing points of their last completed
nonzero $\M$ operations, and the sequence does not depend on how the
adversary distributes the movement speed inside a $\M$ operation.
In the proofs below we write $c_1(\gamma)$ and $c_2(\gamma)$ for the
solo and simultaneous contraction ratios of Lemma~\ref{lem:ratios};
these are not to be confused with the configurations $C_1,C_2$ of the
sequence $(C_i)$ or with the conditions $\mathbf{C1},\mathbf{C2}$.

\subsubsection*{Impossibility}

\begin{lemma}[{\cite[Lemma~23]{FSW19}}]
\label{lem:sro-not-fc-s}
$\mathrm{SRO}\notin\Sol(\FC^S)$.
\end{lemma}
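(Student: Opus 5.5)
The plan is to let the semi-synchronous adversary activate the robots in solo blocks. In an $\FC$ robot, the destination (in normalized coordinates) and the new light value depend only on the peer's position and the peer's light. Under variable disorientation, each Look can present the peer at $e=(1,0)$ whenever the robots are distinct. So if $r$ is activated repeatedly while $\bar r$ stays inactive, $\bar r$ neither moves nor changes its light, and every Look of $r$ returns the same observation $(e,c_{\bar r})$. By determinism, $r$ then computes the same normalized displacement at every activation of the block. Its own light may change, but in $\FC$ it never reads it, so this does not affect $r$'s behaviour in the block.

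Fix an algorithm and an initial instance, and build the schedule $\{a\},\{a\},\{b\},\{b\},\{a\},\{a\},\dots$, that is, blocks of two solo activations alternating between the robots. In every block the active robot performs twice the same normalized move. If that move is zero, the configuration is unchanged and only the active robot's light may change. This in turn changes the input the other robot sees in its next block. If a block produces a nonzero move, I will stop the case analysis there and show SRO fails. If no block ever produces a nonzero move, the schedule is fair, the configuration never changes, and $(C_i)$ has the single term $C_0$, so perpetuity fails.

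For the nonzero case, each completed solo move ends at a joint stop, because the peer is stationary. So the first solo move gives $C_{i+1}$ and the second gives $C_{i+2}$, unless a collision or an illegal step occurs earlier. A collision makes $d_{i+1}=0$, which is not a rotation of $d_i\ne0$, so SRO fails immediately. If the first move already violates $\mathbf{C1}$/$\mathbf{C2}$/$\mathbf{C3}$, we are done. Otherwise the first move is a legal step, and the second, being the same normalized move, is another legal-shaped step with the peer fixed. Here I apply the two-move obstruction (Lemma~\ref{lem:cf-sro-two-moves}). If needed, I verify it directly: put the fixed peer at the origin and the mover at $(-1,0)$. The four combinations of $\mathbf{C1}$/$\mathbf{C2}$ moves place the mover at $(1,0)$, $(1/2,1/2)$ or $(0,1/2)$, and all of these lie outside the square $Q_i$ with the original segment as diagonal. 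Hence $\mathbf{C3}$ fails at step $i+1$. After the violation, which is a finite safety violation that persists, I continue with the fair alternating block schedule.

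The main obstacle is the geometric bookkeeping in this last step. I need to check that the relevant containment square is indeed the one built from the configuration before the first solo move, which holds because $\mathbf{C3}$ refers to $Q_{i-1}$ relative to $C_{i+1}$. I also need to check that the four cases are exhaustive, including the case where the first nonzero move happens at the very start, when $Q_{-1}=\mathbb R^2$. In that case the first move is unconstrained by $\mathbf{C3}$ but the second is constrained by $Q_0$, so the same two-move argument applies.
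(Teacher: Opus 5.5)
Your proof is correct and follows the paper's argument. A pair of solo activations with the peer's light frozen forces a single normalized rule. Then either the first nonzero move collides, giving $d_1=0$, which fits neither $\mathbf{C1}$ nor $\mathbf{C2}$, or the two consecutive solo moves are defeated against $Q_0$ by the obstruction of Lemma~\ref{lem:cf-sro-two-moves}.

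The only difference is how you reach the first nonzero rule while the configuration is still $C_0$. You use alternating two-activation solo blocks, each self-contained because an $\FC$ robot never reads its own light. The paper instead runs a fully synchronous prefix, with both colors kept equal by Lemma~\ref{lem:rigidA2}, and then replaces the first round with a nonzero rule by a solo interval of $a$.
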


\begin{proof}
Run a fully synchronous prefix, which is admissible under $\SSY$. By
Lemma~\ref{lem:rigidA2} the two robots always carry the same color. If
$\lambda_\gamma=0$ for every color $\gamma$ occurring in that common
sequence, this fair execution contains no nonzero $\M$, the
configuration never changes, the sequence $(C_i)$ reduces to $C_0$
alone, and the first condition fails. Otherwise let $\gamma$ be the
first color with $\lambda_\gamma>0$ and truncate the prefix at the
round in which both robots carry $\gamma$; the configuration is still
the initial one.

Now activate $a$ alone twice in a row (Lemma~\ref{lem:solo}; under
$\SSY$ the robot that is not activated has no pending operation), and
write $C_i$ for the configuration at the beginning of this interval. The
visible state of an $\FC$ robot is the color of its peer, and $b$ is
not activated, so that color is constant; by
Lemma~\ref{lem:single-rule}(b) the displacement rule of $a$ is the
single pair $(\lambda_\gamma,\theta_\gamma)$, and $\lambda_\gamma>0$
makes every activation a nonzero $\M$. If $c_1(\gamma)=0$ and $a$
reaches the position of $b$, then $d=0$ in that configuration and
neither $\mathbf{C1}$ nor $\mathbf{C2}$ can hold, so the second
condition fails. Otherwise $d>0$ is preserved. For the first move to
satisfy the second condition it must satisfy the rotation condition,
and with the peer at rest only two displacements do so, one of type
$\mathbf{C1}$ and one of type $\mathbf{C2}$
(Lemma~\ref{lem:cf-sro-two-moves}). The rule being single, the second
move is of the same type, and among the four computations of that
lemma the two relevant ones both give $a_{i+2}\notin Q_i$, so
$\mathbf{C3}$ fails.
\end{proof}

\begin{lemma}[{\cite[Lemma~23]{FSW19}}]
\label{lem:sro-not-fs-s}
$\mathrm{SRO}\notin\Sol(\FS^S)$.
\end{lemma}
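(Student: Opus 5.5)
We must prove $\mathrm{SRO}\notin\Sol(\FS^S)$. The plan is to fix an arbitrary deterministic $\FS^S$ algorithm and an initial instance $a_0\ne b_0$, and use variable disorientation to normalize observations as in Section~\ref{sec:representative-common-failure}. At every Look with distinct positions the peer is presented at $e=(1,0)$. The computed normalized destination and the next private state then depend only on the current private state. Both robots start in the same designated state, so each robot follows the same deterministic state orbit $s_0,s_1,\dots$, advancing one step per activation, and in state $s_k$ it performs a fixed normalized displacement $\delta_k$. Its physical displacement is $\delta_k$ rotated and scaled into the current segment frame.

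\textbf{Case 1:} every $\delta_k$ along the orbit is zero. A fair alternating $\SSY$ execution then never produces a nonzero $\M$. The sequence $(C_i)$ is just $C_0$, and perpetuity fails.

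\textbf{Case 2:} some $\delta_k\neq0$; let $k_1$ be the first such index.
\begin{enumerate}
\item Activate $a$ alone $k_1+1$ times; $\SSY$ permits any finite solo run. The first nonzero $\M$ of $a$ occurs with $b$ at rest, so it produces a joint-stop configuration $C_1$. If $a$ lands on $b$, then $d_1=0$ and the rotation condition fails.
\item Otherwise the first step must be of type $\mathbf{C1}$ or $\mathbf{C2}$. With the peer fixed, each type determines a unique landing point of $a$.
\item If the orbit of $a$ contains a further nonzero output, continue activating $a$ alone until its second nonzero $\M$. This gives two consecutive solo nonzero moves with $b$ fixed. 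The four-case computation of Lemma~\ref{lem:cf-sro-two-moves} shows that the landing point lies outside $Q_{i}$ (the paper lists $(1,0),(1/2,1/2),(0,1/2)$ from the normalized start $(-1,0)$), so $\mathbf{C3}$ fails. Intermediate zero moves do not create new joint-stop configurations, so the two solo moves are consecutive in $(C_i)$.
\item If $a$ has no further nonzero output, activate $b$ alone through the same state prefix. Then $b$ performs its displacement $\delta_{k_1}$ relative to the new segment. Either this step violates the rotation condition or containment, or both robots now sit in the same zero-output suffix of the orbit. In the latter case a fair alternating continuation keeps the configuration stationary forever, and perpetuity fails.
\end{enumerate}

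To finish, append a fair alternating suffix after any finite violation. Finite safety violations of $\Phi_{\mathrm{SRO}}$ persist, and the suffix keeps the execution admissible and fair.

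The main obstacle is the geometric verification that two consecutive solo rotation-conforming moves with a stationary peer always exit $Q_{i}$, in all four type combinations. A second difficulty is that $\delta_{k_1}$ is expressed relative to a segment that has already rotated after $a$'s move. I would lean on the paper's Lemma~\ref{lem:cf-sro-two-moves} for the first, and handle the second by a short case check. That check uses the fact that a move of $b$ satisfying $\mathbf{C1}$ or $\mathbf{C2}$ is again uniquely determined, so any agreement with SRO forces a specific $\delta_{k_1}$, which one checks against $a$'s already forced value.
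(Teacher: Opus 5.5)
Your argument is correct, but it splits the cases differently from the paper's proof of this lemma. It is exactly the $\mathrm{SRO}$ part of the construction in the paper's proof of Proposition~\ref{prop:cf-sro-exchange-family}:
\begin{itemize}
\item the first nonzero output is executed by $a$ alone;
\item then either a second solo nonzero $\M$ follows (Lemma~\ref{lem:cf-sro-two-moves}), or $b$ replays the first move and a stationary fair suffix follows.
\end{itemize}
The proof of Lemma~\ref{lem:sro-not-fs-s} instead splits on the eventually periodic orbit (preperiod $p$, period $q$, Lemma~\ref{lem:pumpB2}). If the periodic part contains a nonzero rule, $p+2q$ solo activations of $a$ give two nonzero $\M$ operations with $b$ at rest. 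If it does not, each robot performs at most $p$ nonzero $\M$ operations under \emph{any} schedule, so $(C_i)$ is finite. That split handles your ``exactly one nonzero output'' case without examining $b$'s replay, and without caring how many nonzero rules the preperiod holds. Your split costs the extra item~4, but it gives one explicit execution in which $a$ moves first and $b$ then reproduces the move. That is what the paper needs when the same execution must also defeat $\SRDV$ and $\VSS$.

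The ``second difficulty'' you flag is not real, and the proposed case check would not yield a contradiction. Your item~4 is already a complete dichotomy: either $b$'s move violates $\Phi_{\mathrm{SRO}}$, or both robots sit in the zero-output suffix and perpetuity fails. Since $\mathbf{C1}$ and $\mathbf{C2}$ are invariant under exchanging the robots and under similarities, the conforming normalized displacement is the same for both robots. So $b$'s replay automatically has the type of $a$'s move, and comparing it with $a$'s forced value finds agreement. Use the coordinates of Lemma~\ref{lem:cf-sro-two-moves}, with the peer at $(0,0)$ and $a$ at $(-1,0)$:
\begin{itemize}
\item if $a$'s move is of type $\mathbf{C1}$, the replay lands at $(-1,1)\notin Q_0$ and violates $\mathbf{C3}$;
\item if it is of type $\mathbf{C2}$, the replay lands at the centre of $Q_0$ and satisfies every condition, exactly as the $\OB^R$ algorithm does under alternation.
\end{itemize}
In the $\mathbf{C2}$ branch, perpetuity is the only failure, so that fallback is essential rather than a backup.
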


\begin{proof}
Under $\SSY$ the robot that is not activated has no pending operation,
so the hypotheses of Lemma~\ref{lem:solo} hold at every round
boundary. The visible state of an $\FS$ robot is its own color, so in
the normalized frame of Lemma~\ref{lem:norm}(1) the color sequence of
each robot is $c_0,g(c_0),g^2(c_0),\dots$ independently of the
schedule, and it is eventually periodic with preperiod $p$ and period
$q$ (Lemma~\ref{lem:pumpB2}). The displacement of each activation is
likewise determined by the color through
$(\lambda_\gamma,\theta_\gamma)$.

\emph{The periodic part contains a color with $\lambda_\gamma>0$.}
Activating $a$ alone $p+2q$ times (Lemma~\ref{lem:solo}) makes $a$
perform at least two nonzero $\M$ operations while $b$ stays at rest;
write $C_i$ for the configuration at the beginning of this interval.
Zero $\M$ operations do not advance the sequence $(C_i)$, so these
correspond to two consecutive terms of $(C_i)$. If at some point
$c_1(\gamma)=0$ and $a$ reaches the position of $b$, then $d=0$ in
that configuration and neither $\mathbf{C1}$ nor $\mathbf{C2}$ can
hold, so the second condition fails. Otherwise $d>0$ is preserved and
the same robot performs two consecutive nonzero $\M$ operations with
its peer at rest, so by Lemma~\ref{lem:cf-sro-two-moves} the rotation
condition and $\mathbf{C3}$ cannot both be satisfied and the second
condition fails.

Suppose now that the periodic part contains no color with
$\lambda_\gamma>0$.
After the preperiod the color sequence stays in
the periodic part, where $\lambda_\gamma=0$, so each robot performs at
most $p$ nonzero $\M$ operations whatever the schedule. Hence the
configuration changes at only finitely many times, the sequence
$(C_i)$ is finite, and the first condition fails.
\end{proof}

\subsubsection*{Solvability}

\begin{lemma}[{\cite[Lemma~2]{BuFKPSW22}}]
\label{lem:sro-in-ob-r}
$\mathrm{SRO}\in\Sol(\OB^R)$.
\end{lemma}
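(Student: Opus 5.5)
We give an explicit $\OB^R$ algorithm and verify $\Phi_{\mathrm{SRO}}$ on every admissible execution. Under $\RSY$ an execution is either fully synchronous throughout, or a finite fully synchronous prefix followed by strictly alternating singleton activations. So every analysis step is either a joint move or a solo move whose peer is at rest; in $\RSY$ a singleton activation is a complete cycle, so the inactive robot has no pending operation.

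\textbf{The algorithm.} An oblivious robot sees only the peer position $p$, given in a rotated and rescaled frame with fixed chirality. We choose one destination rule $f(p)=\alpha p$, with $\alpha\in\mathbb C$ acting as a rotation-scaling. Because $f$ is equivariant under orientation-preserving similarities, it is well defined under variable disorientation.

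\textbf{Computing both move types.} A solo move of $a$ changes $d=b-a$ to $(1-\alpha)d$. A joint move of both robots changes $d$ to $(1-2\alpha)d$, since $b$ moves by $-\alpha d$ relative to itself. These are the solo and simultaneous contraction ratios $c_1,c_2$ of Lemma~\ref{lem:ratios}. We then pick $\alpha$ so that
\[
  1-2\alpha=e^{-i\pi/2}
  \quad\text{and}\quad
  1-\alpha=\tfrac1{\sqrt2}e^{-i\pi/4}.
\]
With $\alpha=(1+i)/2$ we get $1-2\alpha=-i$, which is a clockwise $90^\circ$ rotation. We also get $1-\alpha=(1-i)/2=\tfrac1{\sqrt2}e^{-i\pi/4}$, which is a clockwise $45^\circ$ rotation with scale $1/\sqrt2$. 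Both coincide with $\mathbf{C1}$ and $\mathbf{C2}$, and this coincidence is what makes the fixed $\alpha$ work.

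\textbf{Perpetuity and rotation.} Every move is nonzero because $d\neq0$ is preserved ($|1-\alpha|,|1-2\alpha|\neq0$). Each joint stop therefore yields a new configuration, so $(C_i)$ is infinite. Each synchronous round or singleton activation is exactly one step $C_i\to C_{i+1}$. Since rigid moves finish within the round, the rotation condition holds at every step.

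\textbf{Containment $\mathbf{C3}$.} We must show $a_{i+1},b_{i+1}\in Q_{i-1}$ for every $i$. This is the main obstacle, and it is where $\RSY$ rather than $\SSY$ matters. Two consecutive solo moves by the same robot violate containment (the computation of Lemma~\ref{lem:cf-sro-two-moves}). $\RSY$ forbids this: after the prefix, solo moves alternate between $a$ and $b$. The proof therefore reduces to a finite case check on two-step windows $C_{i-1}\to C_i\to C_{i+1}$. After normalizing $C_{i-1}$ to $a=0$, $b=1$, the window is one of: joint then joint; joint then solo; solo $a$ then solo $b$; and the mirror cases. The window "solo then joint" cannot occur, because the synchronous rounds form a prefix.

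For each admissible window we compute $a_{i+1},b_{i+1}$ explicitly. For example, joint then joint gives the configuration rotated by $180^\circ$ about the midpoint, whose endpoints are vertices of the diagonal square. Each endpoint is then checked to lie in the closed square with diagonal $[0,1]$, namely the square with vertices $0,1,(1\pm i)/2$. The alternating-solo case is where I expect the check to be tight. There the points land on vertices or edges of that square, so the closedness of $Q_{i-1}$ (interior included) is essential.

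The case $i=0$ is trivial because $Q_{-1}=\mathbb R^2$. By similarity invariance of the whole setup, this normalized check covers all $i$, which completes the argument.
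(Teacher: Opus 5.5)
Your proposal is correct and follows the paper's proof. Your rule $f(p)=\tfrac{1+i}{2}\,p$ is exactly the paper's ``rotate your own position $90^\circ$ clockwise about the midpoint,'' and the paper likewise checks $\mathbf{C1}$ and $\mathbf{C2}$ per round type, then $\mathbf{C3}$ on the three admissible \RSY{} windows (joint--joint, joint--solo, alternating solo--solo). One small correction to your expectation: in the alternating window the first mover lands on a vertex of $Q_{i-1}$ and the second lands at its centre, not on an edge. In every window, therefore, the only boundary points where closedness of $Q_{i-1}$ matters are vertices.
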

\begin{proof}
Each robot moves to the point obtained by rotating its own position
clockwise by $90^\circ$ about the midpoint of the two robots. If both
robots are activated simultaneously, the segment is rotated by
$90^\circ$ and $\mathbf{C1}$ holds; if only one robot is activated,
the segment is scaled by $1/\sqrt2$ and rotated by $45^\circ$, and
$\mathbf{C2}$ holds.

An $\RSY$ execution is either fully synchronous throughout,
which is the case treated in~\cite[Lemma~22]{FSW19}, or a finite
fully synchronous prefix followed by alternating singleton
activations (Lemma~\ref{lem:altA}).
Fully synchronous rounds preserve the square.
If such a round is followed by a singleton round,
both new positions lie in the same square.
For two consecutive alternating singleton rounds, consider the
square immediately before the first round: the first mover
reaches one of its vertices, and the second reaches its centre.
Thus $a_{i+1},b_{i+1}\in Q_{i-1}$ for every $i\ge1$;
the case $i=0$ is vacuous since $Q_{-1}=\mathbb{R}^2$.
Hence $\mathbf{C3}$ holds as well.

The destination is determined by the midpoint and the rotation
alone and requires no unit of length, so it is computable
in each robot's own local frame.
\end{proof}

Lemma~\ref{lem:sro-not-fc-s}, Lemma~\ref{lem:sro-not-fs-s} and
Lemma~\ref{lem:sro-in-ob-r} together prove Lemma~\ref{lem:six-bounds}\,(\ref{lem:six-sro-bounds}).

\subsection{Monotone Convergence (MCv)}
\label{app:mcv}

This subsection records the formal specification of $\mathrm{MCv}$ and
proves Lemma~\ref{lem:six-bounds}\,(\ref{lem:six-mcv-bounds}).

\begin{definition}[Monotone Convergence]
\label{def:mcv}
Two robots $a,b$ start from distinct positions, and we write
$d(t):=\|a(t)-b(t)\|$. The first condition (convergence) requires the
two robots to converge to a common point $l$. The second condition
(monotonicity) requires their distance never to increase.
\[
  \Phi_{\mathrm{MCv}}
  \equiv
  \bigl[\ \{\,\exists l\in\mathbb R^2:\
    \lim_{t\to\infty}a(t)=\lim_{t\to\infty}b(t)=l\,\}
  \ \text{and}\
    \{\,\forall t,t'\ge0:\ t\le t'\Rightarrow d(t')\le d(t)\,\}\ \bigr]
\]
\end{definition}

Monotonicity is required at every instant of the continuous execution,
not only at the joint stops. This is Monotone Line Convergence
of~\cite{FSSW23} with the two conditions confining the robots to the
segment joining their initial positions and forbidding overtaking
removed.

\subsubsection*{Impossibility}

\begin{lemma}[{\cite[Theorem~8]{KSW-2-robot}}, {\cite[Lemma~13]{FSSW23}}]
\label{lem:mcv_not_fc_a}
$\mathrm{MCv}\notin\Sol(\FC^{A})$.
\end{lemma}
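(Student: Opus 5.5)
The plan is to use the fact that, in $\FC$, a robot's color is fixed at its \Compute{} and is visible to the peer, while its destination is committed before the peer reacts. Under \ASY{} the adversary can therefore freeze one robot with a stale pending move while the other robot's behavior is governed by a single displacement rule. Fix a deterministic $\FC$ algorithm and an initial instance with the robots at distance $1$.

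\textbf{Synchronous prefix.} I would begin with the fully synchronous prefix of the proof of Lemma~\ref{lem:sro-not-fc-s}, which is also admissible under \ASY. By Lemma~\ref{lem:rigidA2} both robots carry a common color, and in normalized coordinates they perform the same displacement $(\lambda_\gamma,\theta_\gamma)$ in every round. If no color of this common sequence has $\lambda_\gamma>0$, the configuration never changes, and convergence fails on a fair execution. Otherwise, stop at the first round whose common color $\gamma$ has $\lambda_\gamma>0$. Let $a$ perform its \Look{} and \Compute{} there, producing a pending nonzero move to $D$ with $L:=\|D-a\|=\lambda_\gamma\cdot\|a-b\|>0$. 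From then on $a$'s visible light is constant until its next \Compute.

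\textbf{Delaying $a$.} Next, delay $a$'s \Move{} and activate $b$ alone for finitely many complete cycles. Because $b$ sees a constant peer color and $a$ is at rest, Lemma~\ref{lem:single-rule}(b) gives $b$ a single displacement rule $(\lambda',\theta')$. This produces a fixed ratio $c$ with $d\mapsto c\,d$ per solo activation (the solo contraction ratio $c_1$ of Lemma~\ref{lem:ratios}). I would split into three cases.
\begin{itemize}
\item If $c>1$ (strictly, not merely if a nonzero move occurs), one solo cycle already increases the distance, and monotonicity fails irreversibly.
\item If $c<1$, iterate until the current distance $\delta$ satisfies $\delta<L/2$; if $c=0$ this happens immediately, at coincidence. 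Then release $a$'s move. Along the segment from $a$ to $D$, the distance to the now-stationary $b$ is at most $\delta$ at the start and at least $L-\delta>\delta$ at the end, so it strictly increases at some instant. Monotonicity again fails irreversibly.
\item If $c=1$, the distance is preserved. Here I let $a$ finish its move and repeat the whole construction from the reached state.
\end{itemize}

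\textbf{Fair continuation and the main obstacle.} In the two violating cases I would append any fair continuation, for instance alternating complete cycles; the safety violation persists. The main obstacle is the case $c=1$, together with making the iterated construction fair and time-divergent. My first attempt would interleave rounds so that each repetition also activates $a$; this gives fairness, and unit phase durations give time divergence. I would then argue that whenever no repetition ever falls into the $c<1$ or $c>1$ case, the distance at joint stops is bounded below by a positive constant, which refutes convergence. The delicate part is bounding the effect of $a$'s own released move in the $c=1$ case, since it too could shrink the distance. If that move shrinks the distance, I would instead delay $b$ symmetrically, using that after the prefix both robots hold the same normalized rule, so $b$'s released move overshoots in the same way.
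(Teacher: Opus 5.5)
There is a genuine gap, and it lies exactly in the case $c=1$ that you flag as the main obstacle. It cannot be repaired within your scheduling scheme.

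You let $a$ execute its \Compute{} before suspending it. In $\FC$ the light is updated at \Compute{}, so when $b$ is activated, $a$'s light may already have changed to some $\gamma'$. The single rule $(\lambda',\theta')$ supplied by Lemma~\ref{lem:single-rule}(b) is then the one attached to $\gamma'$, not to $\gamma$, and nothing forces $\lambda'>0$. An algorithm that raises a ``busy'' light at \Compute{} and makes a robot seeing a busy peer stay put gives $\lambda'=0$, i.e.\ your case $c=1$ with no motion at all. Your plan to iterate and bound the distance from below cannot succeed there, for a structural reason. Every segment you use is admissible under $LC$-atomic \ASY{}: fully synchronous rounds, a \Compute--\Move{} delay of one robot containing complete cycles of the other, alternating complete cycles, and the symmetric variant. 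But $\FC^{A_{LC}}\equiv\FC^S\ge\OB^S$, and $\mathrm{MCv}$ is solvable there (Lemma~\ref{lem:mcv_in_ob_s}). So some $\FC$ algorithm satisfies $\Phi_{\mathrm{MCv}}$ on every execution you can build. Separately, for $\lambda'>0$ your claim that $c=1$ preserves the distance is wrong in the relevant sense. The move is then a nontrivial chord of the circle about the stationary peer, so the distance dips and returns during the move. That already violates monotonicity, since $\Phi_{\mathrm{MCv}}$ constrains every instant.

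The missing idea is to use the freedom that \ASY{} has and $A_{LC}$ lacks: suspend the delayed robot between its \Look{} and its \Compute{} (Lemma~\ref{lem:suspend}). In the paper's proof, $b$ performs its \Look{} in the first round with a nonzero output, and its \Compute{} is postponed, so its light remains $\gamma$. Then $a$, which does not read its own light, applies the same nonzero rule $(\lambda_\gamma,\theta_\gamma)$ in each of its solo cycles. Meanwhile $b$'s eventual displacement $v=\lambda_\gamma R_{\theta_\gamma}(w_{\mathrm{old}})\neq0$ is fixed by its saved snapshot. With $c=c_1(\gamma)$ now the ratio of a genuinely nonzero rule, the three cases close immediately and without iteration:
\begin{itemize}
\item for $c>1$, and for $c=1$ by the chord argument, a single move of $a$ breaks monotonicity;
\item for $0\le c<1$, you shrink to $\|w_k\|<\|v\|/2$ and release $b$'s stale move. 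This is exactly your overtaking estimate with the roles of the robots exchanged.
\end{itemize}
A fair tail is then appended after the finite violation.
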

\begin{proof}
Let $\mathcal A$ be any deterministic algorithm of $\FC$ under $\ASY$.
Run fully synchronous cycles with normalized frames.
By Lemma~\ref{lem:rigidA2}, the two robots carry the same color
at every round boundary.
If every output along this execution is zero, the resulting
fair, time-divergent execution keeps the robots at their
distinct initial positions, violating convergence.
 
Otherwise, stop the zero-Move prefix just before the first
round with a nonzero output.
Let $\gamma$ be the common color at this point, so that
$\lambda_\gamma>0$, and put $c:=c_1(\gamma)$.
 
Let $b$ perform its $\LK$, and delay its $\CP$.
This finite delay is permitted under $\ASY$ by Lemma~\ref{lem:suspend}.
The saved snapshot determines the nonzero displacement
\[
  v=\lambda_\gamma R_{\theta_\gamma}(w_{\mathrm{old}}),
\]
where $w_{\mathrm{old}}$ is the vector from $b$ to $a$
at that $\LK$.
Although $b$ has not yet performed its $\CP$, determinism
ensures that it will compute this displacement from the
saved snapshot when its $\CP$ is eventually performed.
 
Throughout the delay, both the position and the visible
color $\gamma$ of $b$ remain unchanged.
Since an $\FC$ robot does not read its own color, we may
therefore activate $a$ for any prescribed finite number
of complete cycles, using normalized frames and the same
nonzero displacement rule
$(\lambda_\gamma,\theta_\gamma)$ in each cycle.
 
If $c>1$, a single such Move increases the distance
between its starting and ending times, violating monotonicity.
If $c=1$, the starting and ending positions of this nonzero
Move are distinct points at the same distance from the
stationary robot $b$.
The midpoint of the movement segment is strictly closer
to $b$ than either endpoint.
Thus the distance increases between the time $a$ passes
that midpoint and the completion of its Move, again
violating monotonicity.
 
It remains to consider $0\le c<1$.
Choose a finite $k\ge1$ such that, after $k$ complete
cycles of $a$,
\[
  \|w_k\|=c^k\|w_{\mathrm{old}}\|<\frac{\|v\|}{2},
\]
where $w_k$ is the vector from $b$ to $a$ at that time.
For $c=0$, take $k=1$; otherwise every cycle starts
with distinct robot positions, so the normalized rule
can be applied repeatedly.
 
Keep $a$ stationary and let $b$ complete its $\CP$
using the saved snapshot, update its color, and perform
the resulting Move with displacement $v$.
The distance immediately before that Move is $\|w_k\|$,
whereas at its completion it satisfies
\[
  \|w_k-v\|
  \ge \|v\|-\|w_k\|
  > \|w_k\|.
\]
Hence monotonicity fails.
 
In every finite-violation case, complete any pending cycle
and append an arbitrary admissible fair, time-divergent
continuation.
The violation persists, so no such algorithm solves
$\mathrm{MCv}$.
\end{proof}

\begin{lemma}[{\cite[Theorem~8]{KSW-2-robot}}, {\cite[Lemma~20]{FSSW23}}]
\label{lem:mcv_not_fs_am}
$\mathrm{MCv}\notin\Sol(\FS^{A_M})$.
\end{lemma}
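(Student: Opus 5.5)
The plan is to follow the template of Lemma~\ref{lem:mcv_not_fc_a}, with the sides swapped because an $\FS$ robot reads only its own color. Fix a deterministic $\FS$ algorithm under $M$-atomic $\ASY$ and an initial instance with distinct positions, and work in normalized frames (Lemma~\ref{lem:norm}). In these frames each robot's private color sequence is $c_0,g(c_0),g^2(c_0),\dots$, independent of the schedule. Each color $\gamma$ determines a displacement rule $(\lambda_\gamma,\theta_\gamma)$ and a contraction ratio $c_1(\gamma)$.

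\textbf{Step 1: the trivial case.} Suppose no color of the orbit has $\lambda_\gamma>0$. Then, whatever the schedule, neither robot ever moves. A fair alternating execution keeps the robots at their distinct initial positions, so convergence fails.

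\textbf{Step 2: set up a stale snapshot.} Otherwise, let $\gamma$ be the first color of the orbit with $\lambda_\gamma>0$. Run both robots synchronously through the zero-output prefix; the configuration is unchanged and both robots now hold $\gamma$. Let $b$ perform its $\LK$ and then delay its $\CP$ and $\M$ (Lemma~\ref{lem:suspend}). This is permitted under $M$-atomicity, which only forbids a Look by $a$ while $b$ is inside its Move interval. The saved snapshot fixes a nonzero displacement $v$ of length $\lambda_\gamma\|w_{\mathrm{old}}\|$.

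\textbf{Step 3: case analysis while $b$ waits.} Activate $a$ for complete cycles, with $b$ stationary and $a$ following its own color orbit from $\gamma$.
\begin{itemize}
\item If some nonzero move of $a$ has ratio $c_1\ge1$, monotonicity fails exactly as in Lemma~\ref{lem:mcv_not_fc_a}. For $c_1>1$ the distance grows directly; for $c_1=1$ the distance dips at the midpoint of the move and then rises.
\item If, from $\gamma$ on, the orbit eventually has no nonzero outputs, let $b$ execute $v$ only after $a$ has made all of its finitely many moves. Then $b$ also follows its own copy of the orbit, so both robots perform finitely many moves. Either monotonicity fails along the way, or the robots freeze at some configuration; if that configuration is distinct, convergence fails. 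Gathering exactly is the delicate subcase; it is handled by re-applying the argument to the second nonzero color instead of the first.
\item The main case is that the periodic part of the orbit contains a color with $\lambda>0$ and every nonzero move of $a$ contracts ($c_1<1$). Then $k$ complete cycles of $a$ make $\|w_k\|<\|v\|/2$. Now let $b$ compute and move by $v$, with $a$ idle: the distance jumps from $\|w_k\|$ to at least $\|v\|-\|w_k\|>\|w_k\|$, violating monotonicity. No Look occurs during $b$'s Move, so $M$-atomicity is respected.
\end{itemize}

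\textbf{Step 4: finish.} In every finite-violation case, complete the pending cycles and append a fair, time-divergent suffix. The violation persists, so no algorithm solves $\mathrm{MCv}$ under $\FS^{A_M}$.

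The main obstacle is Step 3. Unlike $\FC$, robot $a$'s rule changes with its own color, so the clean ``single rule'' argument is unavailable. The repeated-contraction bound must therefore be taken over the eventually periodic orbit, using $p+kq$ activations with the period's product ratio strictly below $1$. The gathering subcase (some $c_1=0$) also needs care: if $a$ lands on $b$, the next normalized observation is $p=0$, and that has to be handled explicitly.
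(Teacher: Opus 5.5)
\textbf{Gap: the eventually-zero case.} Your first bullet and your main case are sound. In the main case the gathering worry is harmless: if $a$ lands on $b$ while $b$ still holds the nonzero pending displacement $v$, then $\|w\|=0<\|v\|/2$, and releasing $b$ at once breaks monotonicity. The genuine gap is your second bullet, where from $\gamma$ on the orbit eventually has only zero outputs. The execution you prescribe there (release $v$ only after $a$ has made all its moves) can end in exact rendezvous with non-increasing distance, and that satisfies $\Phi_{\mathrm{MCv}}$.

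Take the algorithm whose color $0$ has $(\lambda,\theta)=(\tfrac12,0)$ (move halfway to the peer, then switch to color $1$) and whose color $1$ never moves. With $a_0=0$ and $b_0=1$:
\begin{enumerate}
\item robot $b$ saves $v=-\tfrac12$;
\item robot $a$ moves to $\tfrac12$ and freezes;
\item the stale move of $b$ then lands exactly at $\tfrac12$.
\end{enumerate}
The distance is non-increasing, reaches $0$ and stays there, and both robots converge. Your fallback of re-applying the argument to the second nonzero color does not exist here, because this orbit has only one nonzero color. Yet the algorithm is not a solution: alternating complete cycles freeze the robots at distance $\tfrac14$. So your case analysis does not cover this situation.

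\textbf{Missing idea: a lockstep phase before any stale snapshot.} You should not commit to a stale snapshot at the first nonzero color. The paper first advances both robots in lockstep through the transient and one period of the common normalized color orbit, one complete singleton cycle of $a$ and then one of $b$. The two robots therefore always share a color and stay at distinct positions. The special rules are handled as follows.
\begin{itemize}
\item A rule with $c_1=0$, i.e.\ $(\lambda,\theta)=(1,0)$, is handled by letting both robots Look and Compute on the same snapshot and then move simultaneously. They exchange positions and the distance passes through $0$, which $A_M$ admits and which violates monotonicity.
\item A rule with $c_1\ge1$, or any move that increases the distance, is an immediate violation.
\end{itemize}
If the period then contains only zero rules, the robots freeze apart and convergence fails. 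Otherwise the period product $\Gamma$ of solo ratios lies in $(0,1)$. Only then is the delayed Move set up, at a periodic color with a nonzero rule, with $a$ pumped through whole periods until $\|w\|<\|v\|/2$. That final step is your main case. What your plan lacks is the lockstep phase, which ensures the eventually-zero case ends with the robots apart rather than gathered.
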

 
\begin{proof}
We prove the following stronger statement: from any configuration
with distinct robot positions, a common internal color, and no
pending operation, there is an admissible fair, time-divergent
continuation violating $\Phi_{\mathrm{MCv}}$.

Use normalized frames whenever the robots are distinct.
At a configuration with common color $s$, write
\[
  z_s=\lambda_s e^{\mathrm{i}\theta_s},
  \qquad c_s=|1-z_s|.
\]
We advance the common color by activating $a$ and then $b$,
one complete cycle at a time, except in the special case
specified below.

If $z_s=0$, both cycles are zero moves and both robots reach
color $g(s)$.

If $c_s=0$, then $z_s=1$. Instead, let both robots Look and
Compute before either moves, and let them execute their moves
simultaneously with equal fractional progress. They exchange
positions, so their distance becomes zero and then positive
again. Monotonicity fails.

If $z_s\ne0$ and $c_s\ge1$, a solo move of $a$ already violates
monotonicity. For $c_s>1$ its landing distance is larger than
its starting distance. For $c_s=1$, the endpoints are distinct
points at the same distance from the stationary peer; the
interior of the movement segment contains points strictly
closer to that peer, so the distance increases before landing.

It remains to consider $0<c_s<1$. Execute the two singleton
cycles. If either move increases the distance at any time,
monotonicity has already failed. Otherwise both robots finish
at distinct positions with common color $g(s)$.

Continue in this way through the transient and one period of
the common normalized color orbit. Unless a violation has
already occurred, every nonzero rule in the period has
$0<c_s<1$. If every rule in the period is zero, an infinite
fair continuation leaves the robots at distinct fixed
positions, so convergence fails.

Otherwise, the product $\Gamma$ of the solo distance ratios
over one period satisfies $0<\Gamma<1$. Advance both robots
to a common periodic color with a nonzero rule. Let $b$ Look
and Compute a nonzero displacement $v$, and delay its Move
until after a prescribed finite number of complete cycles
of $a$. This is admissible under $A_M$ by Corollary~E.6.
During the delay, $b$ remains fixed, while the private color
of $a$ follows its periodic orbit. Consequently, after
sufficiently many periods the vector $w$ from $b$ to $a$
satisfies $\|w\|<\|v\|/2$.

Keep $a$ stationary and let $b$ complete its pending Move.
The distance increases, since
\[
  \|w-v\|\ge\|v\|-\|w\|>\|w\|.
\]
After every finite violation, complete any pending cycle
and append an arbitrary fair, time-divergent continuation.
The construction uses neither the designated initial color
nor the original positions, provided that its starting
configuration satisfies the three stated conditions.
\end{proof}

\subsubsection*{Solvability}

\begin{lemma}[{\cite[Theorem~8]{KSW-2-robot}}, {\cite[Lemma~4]{FSSW23}}]
\label{lem:mcv_in_ob_s}
$\mathrm{MCv}\in\Sol(\OB^{S})$.
\end{lemma}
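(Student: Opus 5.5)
The algorithm I would use is the standard Go-to-half-distance rule: an $\OB$ robot that observes its peer at $p\neq 0$ moves to $p/2$, the midpoint of the segment joining the two robots, and a robot that observes $p=0$ stays put. This destination is invariant under rotation and scaling, so it is well defined under variable disorientation. The proof then checks monotonicity at every instant and convergence under every fair $\SSY$ schedule.

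\emph{Monotonicity within a round.} Under $\SSY$ each round is atomic, and the robot that is not activated has no pending operation. So I would analyse one round starting from a configuration with distance $d>0$.

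If only one robot is activated, it moves along the segment toward its peer, which is at rest. The distance then decreases linearly from $d$ to $d/2$, at every instant of the rigid move.

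If both robots are activated, both compute the same midpoint $m$ and move toward it along opposite halves of one line. The adversary may choose different speeds, so I would parametrize the progress of $a$ and $b$ by $\alpha(t),\beta(t)\in[0,1]$, each non-decreasing. The distance is then $d(1-\alpha/2-\beta/2)$, which is non-increasing and ends at $0$. After such a round the robots coincide, they stay at $0$ forever, and both conditions hold trivially.

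\emph{Convergence.} If a fully synchronous round ever occurs, gathering is reached in finite time. Otherwise every round is a singleton, and each one halves the distance. Fairness gives infinitely many rounds, so $d\to0$.

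To get a common limit point $l$, I would bound the total path length. Each move has length $d_k/2$, where $d_k$ is the current distance, and the $d_k$ form a geometric sequence $d_0 2^{-k}$. The sum of all displacements therefore converges, so each trajectory is Cauchy and has a limit. Since the distance tends to $0$, the two limits coincide. Time-divergence ensures that the limits as $t\to\infty$ are the limits along rounds; between rounds the robots are at rest.

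\emph{Main obstacle.} The delicate point is the simultaneous round with asynchronous speeds. I have to verify that no interleaving of progress within a rigid move can temporarily increase the distance. This holds because both robots move toward each other along a common line and never overshoot the midpoint, so the collinear parametrization above settles it.
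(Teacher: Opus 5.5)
Your proposal is correct and follows the paper's proof: it uses the same Go-to-midpoint rule and the same split into singleton and simultaneous rounds. The differences are minor: you get the common limit from the summable displacements $d_0 2^{-k-1}$ (Cauchy trajectories), where the paper uses the nested segments $[a(t),b(t)]$ with diameters tending to $0$, and you check monotonicity inside a round through the progress parameters $\alpha,\beta$, which the paper passes over by appeal to round atomicity (a small gain in rigor, since $\Phi_{\mathrm{MCv}}$ requires monotonicity at every instant).
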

 
\begin{proof}
Each robot moves to the midpoint of the two observed positions. Under
$\SSY$ every round is atomic, so it suffices to follow the
configuration at the round boundaries. If only one robot is activated
in a round, it lands at the midpoint and the distance is halved; if
both are activated, they land on the same midpoint and the distance
becomes $0$. In either case the distance never increases, so the
monotonicity condition holds, and $d$ is at least halved in every
round, so $d(t)\to0$.
 
Moreover each landing point lies on the segment joining the two current
positions, so the segments $[a(t),b(t)]$ form a nested family of
compact sets whose diameters tend to $0$. Their intersection is a
single point $l$, and both robots converge to it, which is the
convergence condition.
\end{proof}
 
Lemma~\ref{lem:mcv_not_fc_a}, Lemma~\ref{lem:mcv_not_fs_am} and
Lemma~\ref{lem:mcv_in_ob_s} together prove Lemma~\ref{lem:six-bounds}\,(\ref{lem:six-mcv-bounds}).

\subsection{Hexagonal Edge Traversal (HET)}
\label{app:het}

$\mathrm{HET}$~\cite{NNW-SSS2025} places the two robots at opposite
vertices of a regular hexagon and requires each of them to move to an
adjacent vertex along an edge and back, alternating left and right
forever.

\begin{definition}[Hexagonal Edge Traversal]
\label{def:het}
Two robots $a,b$ start from distinct positions. Let $v_0,\dots,v_5$ be
the vertices of the regular hexagon having $a(0)$ and $b(0)$ as
opposite vertices, labelled counterclockwise from $v_0=a(0)$ and
$v_3=b(0)$, the indices being read modulo $6$. The cycles of the two
robots are
\[
  w_a:=(v_0,v_{-1},v_0,v_1),
  \qquad
  w_b:=(v_3,v_2,v_3,v_4).
\]
The first condition (perpetuity) requires each robot to perform
infinitely many nonzero $\M$ operations. The second condition (edge
traversal) requires each robot $r$, at every time at which it is not
moving, to stand at the vertex of its own cycle indexed by the number
$m_r(t)$ of nonzero $\M$ operations it has performed so far.
\begin{align*}
\Phi_{\mathrm{HET}}\equiv\big[\
  &\{\,\forall r\in\{a,b\}:\ \moves{r}=\infty\,\}\\
  &\text{and}\
  \{\,\forall r\in\{a,b\},\ \forall t\ge0:\
     \lnot\operatorname{Move}_r(t)\Rightarrow
     r(t)=w_r\bigl(m_r(t)\bmod4\bigr)\,\}\ \big]
\end{align*}
\end{definition}

The reference~\cite{NNW-SSS2025} states the same requirement by
exhibiting the time intervals spent at each vertex and the monotone
progress along each edge. Since movements here are rigid and follow the
segment joining the two endpoints, we have rewritten it by prescribing
the sequence of landing points of the nonzero $\M$ operations. The
hexagon is uniquely determined by $a(0)$ and $b(0)$, and the two cycles
are carried onto each other by the rotation by $\pi$ about the
midpoint.

\subsubsection*{Impossibility}

\begin{lemma}[{\cite[Lemma~8]{NNW-SSS2025}}]
\label{lem:het_not_ob_f}
$\mathrm{HET}\notin\Sol(\OB^F)$.
\end{lemma}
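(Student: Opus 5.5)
We need to show that $\mathrm{HET}\notin\Sol(\OB^F)$. The plan is to fix an arbitrary deterministic $\OB$ algorithm $\mathcal A$ and run the fully synchronous execution in which each robot, at every Look, is presented with its peer at $e=(1,0)$. Both robots use the same rotation-scaling (same handedness) relative to the segment, so the execution is symmetric under the point reflection through the midpoint. Obliviousness makes the normalized output a single fixed vector $q=\mathcal A(e)$, used in every round. The argument then works geometrically in the frame where $a$ sits at the origin and $b$ at distance $1$ along the positive axis.

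First I dispose of the trivial case. If $q=0$, no robot ever moves, so perpetuity fails.

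Otherwise every round is a nonzero Move for both robots, and by the central symmetry the midpoint stays fixed. The configuration after a round is obtained from the current one by the fixed similarity $z\mapsto (1-2q)z$, written in complex coordinates about the midpoint and measured relative to the segment.

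HET forces the first landing of $a$ to be $v_{-1}$ or $v_1$; say it is $v_{\pm1}$. Then $q$ is determined: the hexagon has side length $|b-a|/2$, and $v_{\pm1}=a+\tfrac12 e^{\mp i\pi/3}(b-a)$. By symmetry, $b$ lands correspondingly at $v_2$ or $v_4$, so after round one the segment has been rotated by $\mp60^\circ$ and scaled by $\sqrt3/2$. The second round applies the same similarity to the new segment. HET, however, requires $a$ to return to $v_0$, which would mean rotating back by $\pm60^\circ$ and rescaling by $2/\sqrt3$.

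The heart of the proof is that one fixed normalized displacement $q$ cannot produce both the move $v_0\to v_{\pm1}$ and the return $v_{\pm1}\to v_0$. I expect this to be a short complex-number computation. Writing the first displacement as $q(b-a)$, the required return displacement is $-q(b-a)$. The algorithm, however, produces $q(b_1-a_1)$ with $b_1-a_1=(1-2q)(b-a)$. Equality would force $1-2q=-1$, i.e.\ $q=1$. But $q=1$ means $a$ lands on $b$'s old position, which is not $v_{\pm1}$, a contradiction. Hence the edge-traversal condition fails at the joint stop after round two. Because this is a finite safety violation, it persists in the infinite fair \FSY execution, so no algorithm solves HET.

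The step that needs care is confirming that VD permits identical normalized frames for both robots, which is exactly the normalization used in the proof of Theorem~\ref{thm:fsynch-no-cost-simulation}. It must also be checked that HET's landing requirement applies at the joint stops between rounds, which holds because both robots stop simultaneously under \FSY.
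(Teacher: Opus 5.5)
Your proof is correct and is essentially the paper's argument: a fully synchronous execution with normalized frames and one fixed rule $q=\lambda e^{\mathrm i\theta}$, which cannot realize both the step $v_0\to v_{-1}$ and the return $v_{-1}\to v_0$. Your identity $q(1-2q)=-q\Rightarrow q=1$ is just the algebraic form of the paper's remark that the two rounds demand the pairs $(\tfrac12,+60^\circ)$ and $(\tfrac12,-60^\circ)$.

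There is one slip, which your decisive computation does not use. With $q=\tfrac12e^{\pm\mathrm i\pi/3}$ one has $|1-2q|=1$, so after round one the robots sit at the opposite vertices $v_{-1},v_2$ at unchanged distance. The factor $\sqrt3/2$ is the solo ratio $c_1$, not the synchronous ratio. Also, HET prescribes $v_{-1}$ specifically rather than $v_{\pm1}$, though your argument covers both.
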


\begin{proof}
Let $\mathcal A$ be any $\OB$ algorithm and consider the fully
synchronous execution of Lemma~\ref{lem:rigidA1}, in which the two
robots apply one and the same pair $(\lambda,\theta)$ in every round.
If $\lambda=0$ then $\moves{a}=0$ and the perpetuity condition fails.

Assume $\lambda>0$. If either robot ever leaves its prescribed cycle,
the edge-traversal condition fails. Otherwise, normalizing the
circumradius of the hexagon to $1$, the configuration at the beginning
of round~$1$ is $(v_0,v_3)$ and at the beginning of round~$2$ it is
$(v_{-1},v_2)$. In round~$1$ the robot $a$ sees its peer at
$v_3-v_0$, of length $2$, and must perform the displacement
$v_{-1}-v_0$, of length $1$, so the required pair is
$(\lambda,\theta)=(1/2,+60^\circ)$. In round~$2$ it sees its peer at
$v_2-v_{-1}$, again of length $2$, and must perform the displacement
$v_0-v_{-1}$, of length $1$, so the required pair is
$(1/2,-60^\circ)$; the same two pairs are required of $b$ by the
symmetry of the two cycles. The two rounds therefore demand opposite
rotations relative to the peer, which a single pair $(\lambda,\theta)$
cannot deliver, so the edge-traversal condition fails in one of the two
rounds.
\end{proof}

\begin{lemma}[{\cite[Lemma~7]{NNW-SSS2025}}]
\label{lem:het_not_fs_r}
$\mathrm{HET}\notin\Sol(\FS^R)$.
\end{lemma}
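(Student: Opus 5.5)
We must show that no $\FS$ algorithm under $\RSY$ solves HET. The plan is a normalization argument along an alternating singleton schedule, exploiting that in $\FS$ the private state sequence does not depend on the peer. Fix a deterministic $\FS$ algorithm $\mathcal A$ and the initial instance $a(0)=v_0$, $b(0)=v_3$, with circumradius normalized to $1$. We use the $\RSY$ execution with an empty fully synchronous prefix followed by alternating singleton activations $\{a\},\{b\},\{a\},\ldots$, each cycle completed before the next begins; this is admissible and fair by Lemma~\ref{lem:altA}. We present every Look in normalized coordinates, with the peer at $e$ whenever the robots are distinct. In these coordinates the destination and next private state depend only on the current private state. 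Both robots start in the same state and are activated alternately, so they traverse the same private state sequence $s_0,s_1,\ldots$ with the same normalized displacement rules $(\lambda_{s_k},\theta_{s_k})$. Concretely, $a$'s $k$-th activation and $b$'s $k$-th activation use the same rule.

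If no state in the orbit has $\lambda>0$, no nonzero $\M$ ever occurs and perpetuity fails. Otherwise, let $s_k$ be the first state with $\lambda_{s_k}>0$. The $k$-th activation of $a$ is then its first nonzero $\M$, performed while $b$ rests at $v_3$ and $a$ is at $v_0$. By the edge-traversal condition, $a$ must land at $w_a(1)=v_{-1}$, so the rule is $(1/2,+60^\circ)$ relative to the peer direction; otherwise HET fails irreversibly. Next, $b$ performs its $k$-th activation with the same state $s_k$ and the same normalized rule. It now sees $a$ at $v_{-1}$, not at $v_0$. Its landing point is therefore obtained by rotating $+60^\circ$ and halving the vector $v_{-1}-v_3$, and I will check that this point is not $v_2=w_b(1)$. (The vector $v_{-1}-v_3$ has length $\sqrt3$, so the displacement has length $\sqrt3/2\ne1$; hence $b$ cannot reach the adjacent vertex $v_2$.) Thus the edge-traversal condition fails for $b$ at its first nonzero $\M$. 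This is a finite safety violation, so it persists under the fair alternating continuation.

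The main obstacle is checking that the solo-activation asymmetry really breaks the required displacement. Because $a$ has moved, $b$'s observed peer vector differs from the one $a$ used, and the step that needs care is verifying that no single rule $(\lambda,\theta)$ is correct for both views. Showing the displacement length is wrong is enough, since a length mismatch alone excludes $v_2$. A second obstacle is the degenerate case where $a$ lands on $b$. Edge traversal already forbids this, so it causes no trouble.
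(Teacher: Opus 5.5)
Your proposal is correct and follows the paper's proof essentially step for step. Both use the alternating singleton $\RSY$ schedule with normalized frames, which keeps both robots on the same private state sequence; the first nonzero $\M$ of $a$ must then use $\lambda=\tfrac12$ (rule $(\tfrac12,+60^\circ)$) to land at $v_{-1}$, and $b$ replays it on the peer vector $v_{-1}-v_3$ of length $\sqrt3$, giving a displacement of length $\sqrt3/2\neq1$ that cannot reach $v_2$ (your indexing of which activation uses $s_k$ is off by one, which is harmless).
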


\begin{proof}
The adversary chooses the alternating singleton activation
$\{a\},\{b\},\{a\},\dots$ with an empty fully synchronous prefix
(Lemma~\ref{lem:altA}).

\emph{First stage.} The configuration is unchanged until the first
nonzero $\M$. The hexagon and the two cycles of
Definition~\ref{def:het} are carried onto each other by the rotation by
$\pi$ about the midpoint combined with exchanging $a$ and $b$, and this
rotation preserves chirality. The local input of an $\FS$ robot is the
pair consisting of the relative position of the peer and its own state,
and does not contain the state of the peer, so in the normalized frame
of Lemma~\ref{lem:norm}(1) both local inputs have the form
$(e,\text{own state})$ and are isomorphic. The two robots start from the
same initial state $s_0$, so $a$ in round $2k-1$ and $b$ in round $2k$
both pass from $s_{k-1}$ to $s_k$. If no state occurring in
$s_0,s_1,\dots$ produces a nonzero $\M$, then $\moves{a}=0$ and the
perpetuity condition fails. Let $j$ be the least index that produces a
nonzero $\M$; then $a$ performs the first nonzero $\M$ in round $2j+1$,
from state $s_j$.

\emph{Second stage.} If the landing point of $a$ is not $v_{-1}$, the
edge-traversal condition $a(t)=w_a(1)=v_{-1}$ fails at the time
$m_a=1$. So assume $a$ lands at $v_{-1}$. Writing $R$ for the
circumradius of the hexagon, the edge length is $R$ and the diagonal
length is $2R$, so the constant of Lemma~\ref{lem:norm}(1) is
\[
  \lambda_{s_j}=\frac{\|v_{-1}-v_0\|}{\|v_3-v_0\|}=\frac12 .
\]

\emph{Third stage.} The robot $b$, activated in round $2j+2$, has not
yet moved and is in state $s_j$. The robots have no common unit of
length, so in the normalized frame the local input of $b$ is the same
$(e,s_j)$ as that of $a$ in the second stage: the change of the
observed distance from $2R$ to $\|v_{-1}-v_3\|=\sqrt3\,R$ does not
appear in the local input. Hence $b$ applies the same
$\lambda_{s_j}=\tfrac12$ to its own peer vector $v_{-1}-v_3$, and the
length of its displacement is $\tfrac{\sqrt3}{2}R$. The landing point
required of $b$ by the edge-traversal condition is $w_b(1)=v_2$, with
$\|v_2-v_3\|=R$. Since $\tfrac{\sqrt3}{2}R\neq R$, the robot $b$ cannot
land at $v_2$ and the edge-traversal condition fails.
\end{proof}

\begin{lemma}[{\cite[Lemma~9]{NNW-SSS2025}}]
\label{lem:het_not_fc_s}
$\mathrm{HET}\notin\Sol(\FC^S)$.
\end{lemma}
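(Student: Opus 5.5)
We plan to follow the template of Lemma~\ref{lem:sro-not-fc-s}: a fully synchronous prefix that keeps the two robots symmetric, followed by two consecutive solo activations of $a$. The point is that an $\FC$ robot whose peer is not activated sees a constant peer color, so all its displacements use a single normalized rule. Fix an arbitrary $\FC$ algorithm under $\SSY$, the instance $(a(0),b(0))=(v_0,v_3)$, and normalized frames as in Lemma~\ref{lem:norm}(1).

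\emph{Step 1 (symmetric prefix).} Run fully synchronous rounds, which $\SSY$ admits. By Lemma~\ref{lem:rigidA2} the robots carry a common color $\gamma_k$ at every round boundary. Their normalized inputs coincide, since the configuration is centrally symmetric and each robot sees the common peer color. If $\lambda_{\gamma_k}=0$ for every $k$, the robots never move, so this fair execution has $\moves{a}=0$ and perpetuity fails. Otherwise let $\gamma$ be the first color with $\lambda_\gamma>0$. Truncate the prefix at the round boundary where both robots carry $\gamma$; since every earlier rule is zero, the configuration is still $(v_0,v_3)$.

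\emph{Step 2 (two solo moves of $a$).} Activate $a$ alone in two consecutive rounds, as Lemma~\ref{lem:solo} permits. Robot $b$ stays at $v_3$ and keeps color $\gamma$. Because $a$ reads only $b$'s color, Lemma~\ref{lem:single-rule}(b) shows that both activations of $a$ use the same pair $(\lambda_\gamma,\theta_\gamma)$, and both are nonzero Moves.

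If the first landing point is not $w_a(1)=v_{-1}$, the edge-traversal condition already fails. Otherwise, taking the circumradius to be $R$, we have $\lambda_\gamma=\|v_{-1}-v_0\|/\|v_3-v_0\|=\tfrac12$. The second Move then has length $\tfrac12\|v_3-v_{-1}\|=\tfrac{\sqrt3}{2}R$, whereas reaching the required $w_a(2)=v_0$ needs length $\|v_0-v_{-1}\|=R$. Hence the edge-traversal condition fails at the end of the second Move.

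\emph{Step 3 (continuation).} The edge-traversal condition is a safety requirement, so its violation persists. Append fair alternating (or fully synchronous) rounds to obtain an admissible fair, time-divergent $\SSY$ execution violating $\Phi_{\mathrm{HET}}$.

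The main obstacle is Step~2: we must justify that the second solo displacement of $a$ uses the same rule even though $a$'s own light may have changed. This holds precisely because an $\FC$ robot never reads its own color. We must also check that the length mismatch $\tfrac{\sqrt3}{2}R\ne R$ excludes $v_0$ for every admissible angle $\theta_\gamma$, which it does since the mismatch concerns only the length of the displacement.
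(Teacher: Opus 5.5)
Your proof is correct and follows the paper's argument: the same symmetric fully synchronous prefix truncated at the first color $\gamma$ with $\lambda_\gamma>0$, followed by consecutive solo activations of $a$ under the single rule $(\lambda_\gamma,\theta_\gamma)$ given by Lemma~\ref{lem:single-rule}(b). The only difference is the final check. The paper observes that $v_0\to v_{-1}$ approaches the stationary $b$ while $v_{-1}\to v_0$ retreats from it, which a single rule (monotone landing distances) cannot produce; you instead fix $\lambda_\gamma=\tfrac12$ from the first step and exhibit the length mismatch $\tfrac{\sqrt3}{2}R\neq R$ at the second. Both versions are valid.
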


\begin{proof}
Run a fully synchronous prefix, which is admissible under $\SSY$. By
Lemma~\ref{lem:rigidA2} the two robots always carry the same color. If
$\lambda_\gamma=0$ for every color $\gamma$ occurring in that common
sequence, this fair execution contains no nonzero $\M$, so
$\moves{a}=0$ and the perpetuity condition fails. Otherwise let
$\gamma$ be the first color with $\lambda_\gamma>0$ and truncate the
prefix at the round in which both robots carry $\gamma$; no nonzero
$\M$ has occurred, so the configuration is still the initial one.

Now activate $a$ alone repeatedly (Lemma~\ref{lem:solo}; under $\SSY$
the robot that is not activated has no pending operation). The visible
state of an $\FC$ robot is the color of its peer, and $b$ is not
activated, so that color stays $\gamma$; by
Lemma~\ref{lem:single-rule}(b) the displacement rule of $a$ is the
single pair $(\lambda_\gamma,\theta_\gamma)$, and $\lambda_\gamma>0$
makes every activation a nonzero $\M$. The cycle
$w_a=(v_0,v_{-1},v_0,v_1)$ requires the two successive steps
$v_0\to v_{-1}$ and $v_{-1}\to v_0$. With $b$ standing still at $v_3$,
the first of them approaches the peer, since
$\|v_{-1}-v_3\|=\sqrt3\,R<2R=\|v_0-v_3\|$, and the second retreats from
it. By Lemma~\ref{lem:single-rule} a single pair
$(\lambda_\gamma,\theta_\gamma)$ cannot realize both, so the
edge-traversal condition fails.
\end{proof}

\subsubsection*{Solvability}

\begin{lemma}[{\cite[Lemma~10]{NNW-SSS2025}}]
\label{lem:het_in_lu_r}
$\mathrm{HET}\in\Sol(\LU^R)$.
\end{lemma}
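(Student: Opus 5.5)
We must exhibit an $\LU^R$ algorithm solving HET, i.e. $\mathrm{HET}\in\Sol(\LU^R)$. By the known equivalence $\LU^R\equiv\LU^A$ we may instead design for whichever is more convenient. We choose to work directly under $\RSY$, where every execution is either fully synchronous or a finite fully synchronous prefix followed by strictly alternating singleton rounds (Lemma~\ref{lem:altA}).

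The difficulty is that each robot must know, from its observation, which vertex of its own cycle comes next. It must also know the rotation sense relative to the peer, and do so without a common unit or orientation. Lights resolve this. Each robot keeps a phase counter $k\in\{0,1,2,3\}$ in its light, recording how many nonzero Moves it has completed modulo $4$. It also keeps a flag recording whether it has already moved in the current ``macro step''. Because the peer's light is visible, a robot knows the peer's phase as well.

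The geometric computation is local: the hexagon is fixed by the initial pair $v_0,v_3$. From any reachable configuration $(w_a(i),w_b(j))$ with known phases $i,j$, the robot can reconstruct the hexagon's center and circumradius. It does this by inverting the known relative placement of the two vertices $w_a(i)$ and $w_b(j)$, which is an orientation-preserving similarity invariant and hence computable in any local frame. It then computes its next vertex $w_r(i+1)$. Chirality guarantees that ``counterclockwise'' is common, so the left/right alternation is unambiguous.

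The protocol keeps the phase difference of the two robots at most one. A robot moves and increments its phase only when its phase is not ahead of the peer's. In fully synchronous rounds both robots are at equal phase and both move. In alternating rounds, the robot that is behind or equal moves, and the robot that is ahead waits. Alternating activation then lets both robots advance infinitely often, which gives perpetuity. The invariant ``each stationary robot stands at $w_r(m_r\bmod 4)$'' holds because every nonzero Move goes exactly to the next cycle vertex and zero Moves do not change $m_r$.

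The main obstacle is the transition from the fully synchronous prefix to alternation. A robot cannot know which regime it is in, so a rule that waits when the phases are equal could deadlock, or a rule that moves could let one robot get two steps ahead. We resolve this with the rule "move iff own phase $\le$ peer phase". Under synchrony, equal phases make both robots move together. Under alternation, the moving robot becomes one step ahead. The other robot then sees itself behind and moves, restoring equality. The ahead robot never moves twice in a row, so the difference stays at most one. The reconstruction step must handle all $16$ phase pairs with difference at most one, and we check that the two occupied vertices are always distinct, so the observation $p_r\ne0$ determines the hexagon.
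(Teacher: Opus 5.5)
Your proposal is correct and is essentially the paper's proof: a four-valued light records each robot's position on its own cycle, and the peer's light reveals which vertex the peer occupies. The observed peer vector, together with chirality, then fixes the hexagon up to an orientation-preserving similarity, so the next vertex is computable in the local frame without a unit of length. The only real difference is your ``not ahead'' waiting rule (read cyclically modulo $4$), and it is superfluous: under \RSY{} the activated robot is never ahead, and the two robots always lie in the disjoint vertex sets $\{v_{-1},v_0,v_1\}$ and $\{v_2,v_3,v_4\}$, so the reconstruction works for all sixteen phase pairs — which is why the paper simply moves at every activation.
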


\begin{proof}
Each robot records its position in its own cycle with a light of four
colors. Since a $\LU$ robot sees both lights, it learns from the color
of its peer which vertex the peer currently occupies, so even under the
alternating singleton activations of an $\RSY$ execution it can
reconstruct the hexagon from the observed relative position and move to
the next vertex of its cycle. The destination is determined by the
relative position of the peer and by chirality alone and requires no
unit of length, so it is computable in each robot's own local frame.
\end{proof}

Lemma~\ref{lem:het_not_ob_f}, Lemma~\ref{lem:het_not_fs_r},
Lemma~\ref{lem:het_not_fc_s} and Lemma~\ref{lem:het_in_lu_r} supply the
boundary certificate of Table~\ref{tab:PP13} for $\mathrm{HET}$, which
by Lemma~\ref{lem:boundary-certificate} gives
$\Prof(\mathrm{HET})=\upa\LU^R$ as asserted in
Theorem~\ref{thm:principal-task-basis}.

\subsection{Zig-Zag Collision-Free Convergence (ZCC)}
\label{app:zcc}

$\mathrm{ZCC}$~\cite{NNW-SSS2025} requires the two robots to converge
to a single point without overtaking each other, while each of them
perpetually repeats a zig-zag motion that alternately increases and
decreases its distance from its own initial position.

\begin{definition}[Zig-Zag Collision-Free Convergence]
\label{def:zcc}
Two robots $a,b$ start from distinct positions. The first condition
(convergence) requires the two robots to converge to a point $l$. The
second condition (collision freedom) requires each robot never to go,
as seen from its own initial position, farther than its peer. The third
condition (zig-zag) requires, with $f_r(t):=\|r(t)-r(0)\|$, that for
each robot $r$ there be times $0=t_{-1}<t_0<t_1<\cdots$ at which $f_r$
takes maxima $f_r(t_{2i})$ and minima $f_r(t_{2i+1})$ alternately, the
minima increasing.
\begin{align*}
\Phi_{\mathrm{ZCC}}\equiv{}
 &\bigl[\,\{\,\exists l\in\mathbb R^2:\
   \lim_{t\to\infty}a(t)=\lim_{t\to\infty}b(t)=l\,\}\\
 &\quad\text{and}\ \{\,\forall r\in\{a,b\},\ \forall t\ge0:\
     f_r(t)\le\|\bar r(t)-r(0)\|\,\}\\
 &\quad\text{and}\ \{\,\forall r\in\{a,b\},\
   \exists\,0=t_{-1}<t_0<t_1<\cdots,\\
 &\quad\phantom{\text{and}\ \{\,}\forall i\ge0:\
   f_r(t_{2i-1})<f_r(t_{2i+1})<f_r(t_{2i})\,\}\,\bigr]
\end{align*}
\end{definition}

There are two differences from~\cite{NNW-SSS2025}. The quantification
in the first condition is $\forall\epsilon\ge0$ there, but including
$\epsilon=0$ would require the two robots to coincide at $l$ and stay
there from some time on, which is incompatible with the third
condition; we therefore use $\forall\epsilon>0$. Also, that paper
imposes the third condition on one robot only, whereas we impose it on both.
Along a single Move the position of $r$ traverses a segment, so
$f_r$ is a convex function of the movement progress and hence
non-increasing and then non-decreasing in time; outside the Move
intervals $f_r$ is constant. Each nonzero Move therefore contributes
at most one turning point of $f_r$, and since executions are
time-divergent, only finitely many Moves occur in any bounded time
interval. Consequently the times $t_i$ of the zig-zag condition
satisfy $t_i\to\infty$.

The three conditions differ in nature. Collision freedom is a safety
condition: it is violated as soon as
$f_r(t)>\|\bar r(t)-r(0)\|$ at a single time $t$, and no extension can
repair this. The zig-zag condition is a liveness condition: beyond any
$T$ the sequence $(t_i)$ still provides a strictly decreasing step
$f_r(t_{2i})>f_r(t_{2i+1})$ and a strictly increasing one
$f_r(t_{2i+1})<f_r(t_{2i+2})$. Hence, if for some $T$ the restriction
of $f_r$ to $[T,\infty)$ is monotone for one of the two robots, then
$\Phi_{\mathrm{ZCC}}$ fails. In particular, the monotonicity of $f_r$
along a finite solo activation interval does not by itself establish
failure.

\subsubsection*{Impossibility}

\begin{lemma}
\label{lem:zcc_not_ob_f}
$\mathrm{ZCC}\notin\Sol(\OB^F)$.
\end{lemma}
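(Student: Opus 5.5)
We will follow the template of Lemma~\ref{lem:het_not_ob_f}. Fix an arbitrary $\OB$ algorithm and consider the fully synchronous execution of Lemma~\ref{lem:rigidA1}. In every round both robots observe the peer normalized at $e$, so both apply one and the same pair $(\lambda,\theta)$: each robot $r$ moves by $\lambda R_\theta(\bar r-r)$. Write $d=b-a$ before a round and $z=\lambda e^{\mathrm i\theta}$. After the round the difference becomes $(1-2z)d$, and the midpoint $c_0$ is preserved, because the two displacements are opposite. Hence the configuration at the end of round $k$ is
\[
  a_k=c_0-\tfrac12(1-2z)^k d_0,\qquad b_k=c_0+\tfrac12(1-2z)^k d_0 .
\]
In particular $f_a(t)=\|a(t)-a(0)\|$ at round boundaries equals $\tfrac12\|d_0\|\,|1-(1-2z)^k|$. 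This reduces the proof to a case analysis on the single complex number $w=1-2z$.

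\emph{Case analysis.}
\begin{enumerate}
\item If $z=0$, nothing ever moves. The robots stay distinct, so convergence fails.
\item If $|w|\ge1$ and $z\ne0$, the distance $|w|^k\|d_0\|$ does not tend to $0$, so convergence fails. The subcase $w=0$ belongs to the next item.
\item If $w=0$, the robots meet at $c_0$ after one round. From then on every observation has $p_r=0$, and the rule at coincidence is a fixed output. If that output is zero, the robots stay gathered forever. Then $f_a$ is eventually constant, and the zig-zag condition fails by the remark after Definition~\ref{def:zcc} (monotone on $[T,\infty)$). If the output is nonzero, both robots receive identical input, so they may move. Two subcases arise.
\begin{enumerate}
\item If the adversary chooses equal frames, they move together and stay coincident. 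Then $f_a$ changes each round, and we must check that the zig-zag condition eventually fails or convergence fails. A constant nonzero displacement makes the gathered pair drift, so it does not converge.
\item Otherwise they separate and we obtain a new instance. The cleaner route is to choose frames making both robots move by the same global vector, which is drift and violates convergence.
\end{enumerate}
\item If $0<|w|<1$, the robots converge to $c_0$, and $f_a(t_k)\to\tfrac12\|d_0\|$. We must show that collision freedom or zig-zag fails. Collision freedom requires $\|a(t)-a(0)\|\le\|b(t)-a(0)\|$. This fails as soon as $a$ gets past the midpoint line toward $b$'s side, i.e. as soon as $\operatorname{Re}$ of $w^k$ becomes negative for some $k$, and also during a Move crossing that line. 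So collision freedom forces $\operatorname{Re}(w^k)\ge0$ for all $k$. Since $\arg w^k=k\arg w$ and these angles are dense or periodic unless $\arg w=0$, we obtain $\arg w=0$, i.e. $w\in(0,1)$ real. In that case $a$ moves monotonically along the segment toward $c_0$, so $f_a$ is monotone nondecreasing along rigid straight moves on a line. This violates the zig-zag condition.
\end{enumerate}

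The main obstacle is the rotational subcase: verifying that every $w$ with $\arg w\ne0$ produces some $k$ with $\operatorname{Re}(w^k)<0$. We will argue this directly. If $\arg w\in(0,\pi)$, repeated addition of $\arg w$ modulo $2\pi$ must land in the open arc $(\pi/2,3\pi/2)$, which has length $\pi$, exceeding the step size. Safety violations persist, and all the constructed executions are fair fully synchronous ones, which completes the refutation.
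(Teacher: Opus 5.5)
Your argument is correct and is essentially the paper's. You use the same fully synchronous execution and the same reduction to the single multiplier $w=1-2\lambda e^{\mathrm i\theta}$; in the rotating case you use the identity $\|a_k-a(0)\|^2-\|b_k-a(0)\|^2=-\|d_0\|^2\operatorname{Re}(w^k)$ to break collision freedom, and for $w\in(0,1)$ you use monotonicity of $f_a$ to break the zig-zag condition.

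The only difference is how you handle $\lambda=0$, $|w|\ge1$ and the drifting-coincidence case: you refute them through the convergence conjunct, by fixing one frame so the gathered pair translates by a constant nonzero vector. That choice is yours as the adversary, so the ``subcase'' discussion in item~3 is unnecessary. The paper never uses convergence and refutes these same cases via collision freedom or a monotone tail of $f_a$.
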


\begin{proof}
An $\OB^F$ robot applies the same rule in every round, so it cannot
switch between the two directions that the zig-zag condition demands.
Consider the fully synchronous execution of Lemma~\ref{lem:rigidA1},
let $m$ be the midpoint and $v:=a(0)-m$. The difference vector is
mapped in every round by the same similarity $I-2\lambda R_\theta$, of
ratio $c_2$ and rotation angle $\varphi$, and the midpoint is invariant
(Lemma~\ref{lem:ratios}(ii)), so $a(k)-m=c_2^{\,k}R_\varphi^{\,k}v$ and
$b(k)-m=-(a(k)-m)$.

If $\lambda=0$ the robots never move and $f_r\equiv0$.

If $c_2=0$ the two robots coincide at $m$ at the end of round~$1$, and
by Lemma~\ref{lem:keep-coincident} the coincidence is maintained. While
they coincide the observed peer position is $0$. If the local output is
then zero, the robots never move again and $f_r$ is constant from
round~$1$ on. Otherwise the adversary chooses their common global
displacement (Lemma~\ref{lem:norm}(2)); choosing in every round one and
the same direction $w$ with $\langle w,v\rangle<0$ makes
$f_a(t)=\|a(t)-a(0)\|$ strictly increasing from round~$1$ on.

In both cases the restriction of $f_a$ to a tail is monotone, so the
zig-zag condition fails by the criterion stated after
Definition~\ref{def:zcc}.

Assume $\lambda\neq0$ and $c_2\neq0$. If
$\varphi\not\equiv0\pmod{2\pi}$, then by Lemma~\ref{lem:ratios}(iii)
applied to the nonzero angle $\varphi$ there is $k\ge1$ with
$\cos(k\varphi)\le-\tfrac12<0$, whence
$\langle v,a(k)-m\rangle=c_2^{\,k}\cos(k\varphi)\|v\|^2<0$. Since
$b(k)-m=-(a(k)-m)$ and $a(0)=m+v$,
\[
  \|a(k)-a(0)\|^2-\|b(k)-a(0)\|^2
  =-4\,\langle v,a(k)-m\rangle>0 ,
\]
so at the end of round $k$ we have $f_a>\|b(k)-a(0)\|$ and the
collision-freedom condition fails. This is a safety violation at a
finite time.

If $\varphi\equiv0$, then $R_\varphi=I$ and $a(k)=m+c_2^{\,k}v$, so all
the positions of $a$ lie on the ray from $m$ through $a(0)$ and advance
monotonically along it, towards $m$ when $c_2<1$ and away from $a(0)$
when $c_2>1$; the value $c_2=1$ would force $\lambda=0$. Since
movements are rigid along segments, $f_a(t)=|\,c_2^{\,\cdot}-1\,|\,\|v\|$
interpolates monotonically within each round, so $f_a$ is monotone on
the whole execution and the zig-zag condition fails by the same
criterion. This is a liveness violation, witnessed by the infinite
fully synchronous execution itself.
\end{proof}

\begin{lemma}[{\cite[Lemma~26]{NNW-SSS2025}}]
\label{lem:zcc_not_fc_s}
$\mathrm{ZCC}\notin\Sol(\FC^S)$.
\end{lemma}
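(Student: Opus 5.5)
The plan mirrors the proofs of Lemma~\ref{lem:sro-not-fc-s} and Lemma~\ref{lem:het_not_fc_s}: a fully synchronous prefix that keeps the colors equal, followed by a solo phase in which $a$ is frozen into a single displacement rule.

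\emph{Prefix.} Fix an arbitrary deterministic $\FC$ algorithm, and use normalized frames and a fully synchronous prefix, which is admissible under $\SSY$. By Lemma~\ref{lem:rigidA2} both robots carry the same color at every round boundary, and by symmetry the midpoint stays fixed. If every color in this common sequence has $\lambda_\gamma=0$, the fair execution never moves, so convergence fails because the robots stay at distinct points. Otherwise I truncate the prefix just before the first round whose common color $\gamma$ has $\lambda_\gamma>0$. At that point the configuration is still the initial one.

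\emph{Solo phase.} I then activate $a$ alone repeatedly, which Lemma~\ref{lem:solo} permits. Since $b$ is idle, its visible color stays $\gamma$. By Lemma~\ref{lem:single-rule}(b), every activation of $a$ therefore applies the single nonzero similarity rule $(\lambda_\gamma,\theta_\gamma)$ relative to the stationary peer. Hence $a(k)-b(0)=(1-z)^k\,(a(0)-b(0))$ with $z=\lambda_\gamma e^{\mathrm i\theta_\gamma}$, so $f_a$ along this phase is governed by a fixed spiral or ray. The key fact is that a robot with one fixed rule and a fixed peer cannot alternate between increasing and decreasing its distance from $a(0)$ in the pattern that the zig-zag requires. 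I split on $c=|1-z|$.

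\begin{itemize}
\item \emph{Case $c\ge1$, $z\ne 0$.} The positions of $a$ never approach $b$. I expect to extract a collision-freedom violation: some iterate $a(k)$ is farther from $a(0)$ than $b(0)$ is, as in the rotation argument of Lemma~\ref{lem:zcc_not_ob_f} via Lemma~\ref{lem:ratios}(iii). When the rotation angle is $0$, the motion is monotone along a ray away from $b$, which already violates collision freedom at the first step.
\item \emph{Case $c<1$, angle $\ne0$.} The iterates spiral around $b(0)$. Lemma~\ref{lem:ratios}(iii) again supplies a $k$ with $\cos(k\varphi)\le-\tfrac12$. Using $b$ fixed at $b(0)$, I check that then $\|a(k)-a(0)\|>\|b(0)-a(0)\|$, which is a safety violation.
\item \emph{Case $c<1$, angle $0$.} The robot $a$ moves monotonically toward $b$ on a segment. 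This is the delicate case, because monotonicity over a finite solo interval does not by itself refute zig-zag, as remarked after Definition~\ref{def:zcc}.
\end{itemize}

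The main obstacle is this last case, and the difficulty is that the schedule must stay fair. My plan is to produce a liveness violation by making the solo behaviour recur forever. I would interleave an activation of $b$, or a synchronous round, only when needed, and exploit that $b$'s own rule is determined by $a$'s color, which also stays fixed while $a$ is frozen under the same color argument. Concretely, I would alternate long solo blocks in which the moving robot contracts toward its peer along the line. Every move then stays on the initial line and brings each robot monotonically closer to the other's side, so $f_a$ and $f_b$ are eventually monotone and zig-zag fails.

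If the colors change once $b$ is activated, I would instead retreat to the color-orbit argument of Lemma~\ref{lem:mcv_not_fs_am}. That is, advance through the finite transient and period of the common orbit, and in each color either obtain a safety violation by the first two cases or maintain monotone collinear contraction. In all finite-violation cases, I finish by completing pending cycles and appending a fair continuation.
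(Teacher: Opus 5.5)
\textbf{Overall.} Your fallback plan is essentially the paper's proof. Run alternating singleton activations under normalized frames. At each activation, look at $\sigma=1-\lambda e^{\mathrm{i}\theta}$ for the active robot $r$.
If $\sigma\in(0,1]$, the robot stays put or moves toward its peer without reaching it. The robots then remain on $[a_0,b_0]$ in their original order, and both $f_r$ are non-decreasing.
At the first $\sigma\notin(0,1]$, freeze the peer. This fixes the mover's visible state and hence its rule, and iterating breaks collision freedom.
If that never happens, the alternating execution itself is the fair witness that the zig-zag condition fails.
The synchronous prefix, the ``common orbit'' and the appeal to Lemma~\ref{lem:mcv_not_fs_am} are not needed. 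In $\FC$ the two colours are coupled, each robot's new colour depending on its peer's, and no periodicity is required because the dichotomy is applied step by step. As written, however, the proposal has real holes.

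\textbf{Missing case $\sigma=0$.} The case $(\lambda,\theta)=(1,0)$ is not treated, neither at the first nonzero rule nor at any later step of the alternation. There the mover lands on its peer. Your angle-based cases do not apply, and your ``monotone collinear contraction'' cannot continue, because at a coincidence the normalized rule no longer determines the displacement.
The missing idea is the coincidence continuation. Keep alternating. If all outputs remain zero, the positions are constant and the zig-zag condition fails. Otherwise, by Lemma~\ref{lem:norm}(2), the adversary points the first nonzero displacement away from the mover's own initial position. Then $f_r$ exceeds $\|\bar r-r(0)\|$ while the peer is still at the coincidence point, violating collision freedom.

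\textbf{Collision-freedom computations only from $\iota$.} The step-by-step dichotomy needs them from an arbitrary reached configuration, and this is exactly what the order invariant supplies. Suppose $r(0),r,\bar r$ are collinear in this order, with $L=\|\bar r-r(0)\|\ge h=\|\bar r-r\|>0$. Then $\|r_j-r(0)\|^2=L^2+c^{2j}h^2-2Lhc^j\cos(j\psi)$, which exceeds $L^2$ as soon as $\cos(j\psi)\le0$. For $\psi\equiv0$ and $c>1$ one needs $c^jh>2L$.

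\textbf{A false claim.} You state that for angle $0$ and $c\ge1$ collision freedom fails already at the first step. This is false for $1<c\le2$: from $\iota$ the first move gives $f_a=(c-1)d\le d$. The violation appears only once $c^k>2$.
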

 
\begin{proof}
We prove a continuation property that will also be used in the
common-failure arguments. Fix the original initial positions
$a_0,b_0$. Suppose that a finite prefix ends with no pending
operation and with the robots lying in $[a_0,b_0]$ in their
original order. Their current colors need not be equal.
There is a fair, time-divergent continuation violating
$\Phi_{\mathrm{ZCC}}$, with $a_0,b_0$ kept fixed.
 
First consider a configuration in which the robots coincide.
Activate them alternately, one complete cycle at a time.
If every output is zero, both positions remain constant forever,
so the zig-zag condition fails by the criterion stated after
Definition~\ref{def:zcc}. Otherwise, at the first nonzero
output, choose the local frame so that the moving robot increases
its distance from its own original initial position; this is
possible by Lemma~\ref{lem:norm}(2).
Its peer remains at the common position, so the collision-freedom
inequality is violated. Append an arbitrary fair tail.
We call this construction the \emph{coincidence continuation}.
 
Now suppose that the robots are distinct. Use alternating
singleton activations as the default schedule, with normalized
frames at every $\LK$. For the currently activated robot $r$, put
\[
  \sigma=1-\lambda e^{\mathrm{i}\theta}.
\]
If $\sigma\in(0,1]$, let the cycle complete and continue with the
default schedule. Such a cycle either leaves $r$ stationary or
moves it towards its peer without reaching it. Thus both robots
remain in the initial segment in their original order.
 
At the first output with $\sigma\notin(0,1]$, freeze the peer
and repeatedly activate $r$. Its peer's color is fixed, so the
same normalized rule is used until a coincidence occurs.
If $\sigma=0$, the first cycle reaches the peer and the
coincidence continuation above applies.
 
Assume $\sigma\ne0$, and write
$\sigma=c e^{\mathrm{i}\psi}$.
At the start of this repetition, let $\bar r$ denote the peer and
\[
  L=\|\bar r-r(0)\|,\qquad h=\|\bar r-r\|.
\]
The maintained order gives $L\ge h>0$. After $j$ repetitions,
\[
  \|r_j-r(0)\|^2
  =L^2+c^{2j}h^2-2Lc^j h\cos(j\psi).
\]
If $\psi\not\equiv0\pmod{2\pi}$, some finite $j$ satisfies
$\cos(j\psi)\le0$ (Lemma~\ref{lem:ratios}(iii)), so this quantity
exceeds $L^2$.
If $\psi\equiv0$, then necessarily $c>1$, and for sufficiently
large finite $j$ we again have $\|r_j-r(0)\|>L$.
In either case the stationary peer remains at distance $L$
from $r(0)$, so the collision-freedom condition fails.
Append an arbitrary fair tail.
 
Finally, if no output outside $(0,1]$ ever occurs, the default
alternating execution is fair and each
$f_r(t)=\|r(t)-r(0)\|$ is non-decreasing throughout.
The zig-zag condition therefore fails.
 
All infinite executions can be scheduled to be time-divergent,
and every exceptional repetition above is finite.
Applying this continuation property to the initial configuration
proves the claim.
\end{proof}
 
\begin{lemma}[{\cite[Lemmas~24--25]{NNW-SSS2025}}]
\label{lem:zcc_not_fs_am}
$\mathrm{ZCC}\notin\Sol(\FS^{A_M})$.
\end{lemma}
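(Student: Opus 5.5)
We must show ZCC is not solvable in $\FS^{A_M}$. The plan mirrors the MCv argument of Lemma~\ref{lem:mcv_not_fs_am}: exploit a delayed Move under $M$-atomic asynchrony combined with the schedule-independent private color orbit of $\FS$ robots in normalized frames.

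\textbf{Setup and normalization.} Fix a deterministic $\FS$ algorithm and an initial instance. In normalized frames (Lemma~\ref{lem:norm}(1)) each robot's color sequence is $c_0,g(c_0),g^2(c_0),\dots$, eventually periodic, and each color $s$ fixes a displacement rule $z_s=\lambda_s e^{\mathrm i\theta_s}$ applied to the peer vector. The default schedule activates $a$ then $b$, one complete cycle each, so the two robots pass through the same color sequence. At each step we test the current rule.

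\textbf{Case analysis.} If some rule with the peer at rest sends the mover away from its own initial position beyond the peer, or through a coincidence, collision freedom fails. This is a finite safety violation, and we append a fair tail. In particular, if $\sigma_s=1-z_s$ is not in $(0,1]$, we may freeze the peer. Under $\SSY$ this would be immediate. Under $A_M$, however, the peer cannot keep cycling with a frozen color, because the mover's own color advances. So we reduce instead to the periodic part of the orbit and use the product of the solo maps over one period, as in the ZCC $\FC^S$ continuation. Its rotation angle $\psi$ and ratio $c$ play the role of the single rule there. Via Lemma~\ref{lem:ratios}(iii), if $\psi\not\equiv0$ or $c>1$, iterating periods of solo cycles of $a$ with $b$ at rest eventually gives $\|a-a(0)\|>\|b-a(0)\|$, violating collision freedom.

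The remaining case is that every rule on the period has $\sigma\in(0,1]$, i.e.\ a pure approach along the segment. Then under the default alternating schedule each $f_r$ is non-decreasing, so the zig-zag condition fails, unless no rule in the period moves at all. In that case convergence fails, since the robots stay distinct forever.

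\textbf{Main obstacle.} The hard step is the case where the period product approaches (so $c<1$ with $\psi\equiv0$) but individual rules inside the period are not pure approaches. Here the zig-zag might genuinely be realized under the default schedule. For this case I would use the $A_M$ delay, justified by Corollary~E.6:
\begin{itemize}
\item[] let $b$ Look and Compute a nonzero displacement $v$ at a suitable periodic color and postpone its Move;
\item[] run $a$ through enough complete periods that, by the contraction $c<1$, it approaches $b$ to within $\|v\|/2$;
\item[] release $b$'s Move, choosing the color at which $b$ Looked so that $v$ has a component away from $b(0)$.
\end{itemize}
With $a$ now near $b$, $b$ overshoots past $a$ as seen from $b(0)$, which violates collision freedom. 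The subtle point is guaranteeing that some periodic rule yields a displacement direction producing this overshoot. I would first try to show that a rule whose $\sigma\notin(0,1]$ gives such a direction relative to the stored peer vector. After any finite violation, the remaining pending cycles are completed and followed by a fair time-divergent tail.
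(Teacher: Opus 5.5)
The gap is the case you flag as the main obstacle. It is not a residual case; it is the whole content of the lemma. Every step before your delay uses only solo intervals and alternating singleton activations, which $\SSY$ also admits. Since $\mathrm{ZCC}\in\Sol(\FS^{S})$ (Lemma~\ref{lem:zcc_in_fs_s}), Algorithm~\ref{alg:ZCC_FS_S} must slip through all those cases, and it does: its rules are an approach with $\sigma=3/4$ and a retreat with $\sigma=9/8$, with period product $27/32\in(0,1)$. For the same reason, your remark that freezing the peer would be immediate under $\SSY$ is wrong: an $\FS$ mover's own color advances under every scheduler.

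Your tentative fix, delaying a Move whose rule has $\sigma\notin(0,1]$, fails on exactly this algorithm. The only such rule is the retreat, which sends $b$ away from $a$, back toward $b(0)$. Releasing it after $a$ has closed in keeps $x_a<x_b$ and leaves both collision-freedom inequalities intact. The Move that must be delayed is an \emph{inward} one, and you need an argument that one exists.

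The paper supplies that argument. Let $\Gamma$ be the product of the solo ratios over the common color period. If the alternating execution $E$ does not already violate $\Phi_{\mathrm{ZCC}}$, it satisfies collision freedom and convergence. Adding the two squared collision-freedom inequalities gives $\langle a(t)-b(t),\,b_0-a_0\rangle\le0$, that is, $x_a\le x_b$ for $x_r=\langle r,e\rangle$ with $e$ the unit vector from $a_0$ to $b_0$. If no Move of $a$ had a positive $e$-component and no Move of $b$ a negative one, $x_a$ would be non-increasing and $x_b$ non-decreasing, which is incompatible with convergence. So $E$ contains an inward Move. Postpone the first one after its Compute (Corollary~\ref{cor:stale}). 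Then run the other robot through full periods until it is closer than $|\langle v,e\rangle|/2$ to the frozen robot, which is possible since $\Gamma<1$. Releasing the Move gives $x_a-x_b>0$, a safety violation.

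For the lemma as stated, starting from the initial configuration, a shorter choice also works. Activate $b$ alone through its zero-output states and delay its first nonzero Move while it still stands at $b(0)$. Then any $v\neq0$ overshoots: once $a$ is within $\|v\|$ of $b(0)$, at the landing $f_b=\|v\|>\|a-b(0)\|$. The paper needs the projection argument because it proves a continuation statement from arbitrary reached states, which later common-failure lemmas reuse.

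Two smaller repairs are also needed.
\begin{enumerate}
\item A landing on the peer does not violate collision freedom, since its inequality is non-strict. Coincidences must go through the coincidence continuation of Lemma~\ref{lem:zcc_not_fc_s}, which chooses the escape direction via Lemma~\ref{lem:norm}(2).
\item Your case split omits a period product equal to $1$ with nontrivial rules. The paper covers it: for $\Gamma\ge1$, alternating activations multiply the distance by $\Gamma^2\ge1$ per double period, so convergence fails.
\end{enumerate}
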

 
\begin{proof}
We prove a continuation statement with the original initial
positions $a_0,b_0$ kept fixed. Suppose that no operation is
pending and that either the robots coincide, or their current
normalized color orbits have the same eventual cycle.
There is an admissible fair, time-divergent continuation
violating $\Phi_{\mathrm{ZCC}}$.
 
If a safety violation has already occurred, append any fair
tail. If the robots coincide, use the coincidence continuation
from the proof of Lemma~\ref{lem:zcc_not_fc_s}. Hence assume that
they are distinct and that safety has not yet failed.
 
If a state with solo ratio $c_1=0$ is reachable along the
normalized orbit of either robot, activate that robot alone
until it first lands on its stationary peer. All preceding
$\LK$s see distinct positions. At the resulting coincidence
neither robot has a pending operation, so the same coincidence
continuation applies.
 
We may therefore assume that all future solo ratios are
positive. Let $\Gamma>0$ be their product over the common
eventual color cycle. Consider the fair execution $E$ of
alternating singleton activations under normalized frames.
 
If $\Gamma\ge1$, then, after both color sequences have entered
their periodic parts, every block of one full period of each
robot multiplies the distance by $\Gamma^2\ge1$.
The distance thus has a positive subsequence bounded away
from zero, and convergence fails.
 
Assume $0<\Gamma<1$. If $E$ already violates
$\Phi_{\mathrm{ZCC}}$, use $E$. Otherwise $E$ satisfies
both safety and convergence.
Put
\[
  e=\frac{b_0-a_0}{\|b_0-a_0\|},
  \qquad x_r(t)=\langle r(t),e\rangle .
\]
Adding the squared safety inequalities gives
\[
  2\langle a(t)-b(t),b_0-a_0\rangle\le0,
\]
hence $x_a(t)\le x_b(t)$ at every time.
 
The execution $E$ must contain a $\M$ of $a$ with positive
displacement along $e$, or a $\M$ of $b$ with negative
displacement along $e$. Indeed, otherwise $x_a$ would be
non-increasing and $x_b$ non-decreasing. Convergence would
then require both projections to be equal and constant.
Equality of the projections, together with safety, forces
both squared safety inequalities to be equalities.
During a singleton $\M$, the moving robot would consequently
have to remain on the intersection of two fixed circles
centered at $a_0$ and $b_0$. This finite set admits no
nonconstant continuous movement. Thus the distinct robots
could not converge, a contradiction.
Fix the first such $\M$ in $E$.
 
Suppose first that the selected $\M$ is a $\M$ of $b$
with displacement $v$ satisfying $\langle v,e\rangle<0$.
Follow $E$ through its $\LK$ and $\CP$, but delay the start
of its $\M$. While $b$ stays fixed, activate $a$ through any
remaining transient part of its color orbit and then for
sufficiently many complete periods. Since $\Gamma<1$, the vector
$w$ from $b$ to $a$ can be made to satisfy
\[
  \|w\|<-\langle v,e\rangle/2.
\]
If safety fails during this finite interval, we are done.
Otherwise keep $a$ stationary and release the pending $\M$
of $b$. At its completion,
\[
  x_a-x_b=\langle w-v,e\rangle>0,
\]
contradicting the necessary projection order.
 
If the selected $\M$ is a $\M$ of $a$ with
$\langle v,e\rangle>0$, delay that $\M$ and instead activate $b$
through any remaining transient part of its color orbit and then
for sufficiently many complete periods, bringing it arbitrarily
close to the fixed position of $a$. Choose the resulting
vector $w$ from $a$ to $b$ with
$\|w\|<\langle v,e\rangle/2$. Releasing the $\M$ of $a$ gives
\[
  x_a-x_b=\langle v-w,e\rangle>0,
\]
again violating safety.
 
All delays are finite and occur before the delayed $\M$
starts, as permitted by $A_M$ and Corollary~\ref{cor:stale}.
No moving robot is stopped inside its $\M$.
After a finite violation, complete any pending operation
and append an arbitrary fair, time-divergent tail.
The initial configuration satisfies the continuation
hypotheses, which proves the stated impossibility.
\end{proof}

\subsubsection*{Solvability}

\begin{algorithm}
\caption{$\mathrm{ZCC}$ for $\FS$ under $\SSY$ (robot $r$)}
\label{alg:ZCC_FS_S}
\textbf{Model and scheduler:} $\FS$, $\SSY$. \\
\textbf{State:} $r.\mathit{status}\in\{\mathrm{A},\mathrm{B}\}$,
initially $\mathrm{A}$. \\
\textbf{$\LK$:} observe the peer $\overline r$ and obtain the distance
$D=\|\overline r-r\|$.\\
\textbf{$\CP$:}
\begin{algorithmic}[1]
\If{$r.\mathit{status}=\mathrm{A}$}
  \State $r.\des\gets$ the point at distance $D/4$ from $r$ in the direction of $\overline r$
  \Comment{approach}
  \State $r.\mathit{status}\gets\mathrm{B}$
\Else
  \State $r.\des\gets$ the point at distance $D/8$ from $r$ in the direction away from $\overline r$
  \Comment{retreat}
  \State $r.\mathit{status}\gets\mathrm{A}$
\EndIf
\end{algorithmic}
\textbf{$\M$:} move to $r.\des$.
\end{algorithm}

\begin{lemma}
\label{lem:zcc_in_fs_s}
$\mathrm{ZCC}\in\Sol(\FS^{S})$, as witnessed by
Algorithm~\ref{alg:ZCC_FS_S}.
\end{lemma}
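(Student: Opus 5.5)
The plan is to reduce everything to a one-dimensional recurrence on the line through $a_0=a(0)$ and $b_0=b(0)$ and then check the three conjuncts of $\Phi_{\mathrm{ZCC}}$ separately. Every destination computed by Algorithm~\ref{alg:ZCC_FS_S} lies on the line through the two current positions. So by induction both robots stay on the line $\ell$ through $a_0,b_0$. Put $x_r(t)=\langle r(t),e\rangle$ with $e=(b_0-a_0)/\|b_0-a_0\|$ and $D(t)=x_b(t)-x_a(t)$. Under $\SSY$ it suffices to follow round boundaries, with interpolation inside a round being linear because movements are rigid. Writing $\alpha_r=1/4$ for status $\mathrm A$ and $\alpha_r=-1/8$ for status $\mathrm B$, a round with activation set $E$ gives $D'=D\bigl(1-\sum_{r\in E}\alpha_r\bigr)$. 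The five possible factors are $3/4$, $9/8$, $1/2$, $5/4$ and $7/8$. All are positive, and $D$ interpolates linearly between $D$ and $D'$, so $x_a<x_b$ at every time: the order is preserved and the robots never meet.

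\emph{Convergence.} I would use a weighted potential $\Phi=D\cdot w(s_a)w(s_b)$ with $w(\mathrm A)=1$ and $w(\mathrm B)=6/5$. Each robot flips its status at each of its own activations. One checks that in every round $\Phi$ is multiplied by one of $9/10$, $15/16$, $18/25$, $125/144$ or $7/8$, hence by at most $15/16$. Thus $D\le\Phi\le(15/16)^k\Phi_0$ after $k$ rounds, so $D\to0$. Every single Move has length at most $D/4$, so the total path length is bounded by a geometric series. Hence each $x_r$ is Cauchy, and both robots converge to a common limit $l$ because $D\to0$.

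\emph{Zig-zag and collision freedom.} These require a positional invariant, and this is the step I expect to be the main obstacle. A retreat can be taken while the gap has grown through the peer's retreats, so a robot might be pushed behind its own initial position. My first attempt is a margin invariant: a robot in status $\mathrm A$ satisfies $x_a\ge a_0$, and a robot in status $\mathrm B$ satisfies $x_a-a_0\ge\kappa D$ for a suitable constant $\kappa\ge1/8$, and symmetrically for $b$. I would choose $\kappa$ by checking all five round types, using the potential to bound how much $D$ can grow between a robot's approach and its subsequent retreat. If a uniform $\kappa$ fails, I would instead carry the margin relative to $\Phi$ rather than $D$.

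Given the invariant, both robots stay inside $[a_0,b_0]$ in their original order, so $f_r(t)=\|r(t)-r(0)\|\le\|\bar r(t)-r(0)\|$ holds at all times, including mid-move, since both sides are monotone along each rigid segment. For the zig-zag condition, each robot is activated infinitely often by fairness and alternates approach (increasing $f_r$) with retreat (decreasing $f_r$ while staying on its side). The turning points therefore give alternating maxima and minima. The minima strictly increase because the retreat $D/8$ taken at a $\mathrm B$-activation is smaller than the margin banked by the preceding approach, which is exactly the content of the invariant. Time-divergence then supplies $t_i\to\infty$, as noted after Definition~\ref{def:zcc}.
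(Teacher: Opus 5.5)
\textbf{Verdict: genuine gap.} Your outline follows the paper's proof almost step for step: the reduction to the line through $a(0),b(0)$, the five positive round factors for the order, the weighted potential with $w(\mathrm A)=1$, $w(\mathrm B)=6/5$ and factor $15/16$ for convergence, and a positional invariant feeding collision freedom and the zig-zag condition. The gap is that this positional invariant, the only nontrivial part of the lemma, is never established, and the form you propose does not work as stated.

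\begin{itemize}
\item \emph{Not inductive.} A condition $x_a-a_0\ge\kappa D$ in status $\mathrm B$ is not preserved round by round. While $a$ waits in status $\mathrm B$, a solo retreat of $b$ multiplies $D$ by $9/8$ with $x_a$ fixed, so a per-round check of the invariant as stated cannot close the induction. Your fallback of measuring the margin against $\Phi$ would repair this, since $\Phi$ is non-increasing.
\item \emph{Wrong anchor.} Any invariant anchored at $a_0$ only yields $x_a\ge a_0$ after a retreat. It does not give the strict increase $f_a(t_{2i+1})>f_a(t_{2i-1})$ of the minima. Your final sentence asserts a comparison with the \emph{preceding approach}, but that is not ``exactly the content of the invariant.''
\end{itemize}

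What is missing is a comparison of two consecutive Moves of the same robot. Let $D_1$ be the distance at the $\LK$ of an approach of $a$, and $D_2$ the distance at the $\LK$ of its next retreat. It suffices to show $D_2<2D_1$, since then each approach--retreat pair has net displacement $D_1/4-D_2/8>0$ towards $b$. Either of two arguments gives this:
\begin{itemize}
\item \emph{The paper's bound.} After the approach round the distance is at most $\frac78D_1$. While $a$ is idle, $b$ alternates approach and retreat, so $D$ grows by a factor of at most $\frac98$. Hence $D_2\le\frac{63}{64}D_1$.
\item \emph{Your own potential.} The weights are at least $1$, so $D\le\Phi$; $\Phi$ is non-increasing across rounds; and $w_a=1$ at the approach. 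Hence $D_2\le\Phi(t_2)\le\Phi(t_1)\le\frac65D_1$.
\end{itemize}
From $D_2<2D_1$, the positions of $a$ after successive retreats form a strictly increasing sequence starting at $a_0$. Every approach moves $a$ forward, and every retreat ends above the previous minimum. Hence $x_a\ge a_0$ at all times, mid-move included, and the minima strictly increase. This single inequality replaces the margin invariant and supplies everything your collision-freedom and zig-zag paragraphs use.

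A minor point: within a round of \SSY, the two robots need not progress at the same rate, so $D$ does not interpolate linearly between $D$ and $D'$. In an approach/retreat round, $D$ can dip to $\frac34D$ before ending at $\frac78D$. The conclusion $x_a<x_b$ still holds. The distance during the round is $D(1-\alpha_au_a-\alpha_bu_b)$ with progress fractions $u_a,u_b\in[0,1]$, which is affine in $(u_a,u_b)$ and positive at the four corners. The paper instead notes that the two movement segments of a round are disjoint.
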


\begin{proof}
The algorithm of~\cite[Lemma~27]{NNW-SSS2025}, which approaches by
$D/2$ and retreats by $D/4$, makes the two robots coincide at the
midpoint when they are activated simultaneously, after which every $\M$
is zero and the zig-zag condition fails. We therefore approach by $D/4$
and retreat by $D/8$. Every movement takes place on the line through
$a(0)$ and $b(0)$, so we put a coordinate $x$ on that line with
$x_a(0)=0$ and $x_b(0)=1$, and write $D:=x_b-x_a$.

\emph{(i) The order is preserved.} Under $\SSY$ each round is atomic,
so the destinations of a round are computed from the configuration at
its beginning. The factor by which $D$ changes in one round is $1/2$
when both approach, $3/4$ when only one approaches, $7/8$ when one
approaches and the other retreats, $9/8$ when only one retreats, and
$5/4$ when both retreat; all of these are positive. Moreover the two
movement segments of a round are disjoint, since $a$ never goes beyond
$x_a+D/4$ while $b$ never goes below $x_b-D/4=x_a+3D/4$. Hence
$x_a(t)<x_b(t)$ at every time, movements included.

\emph{(ii) No robot leaves the outer side of its own initial position.}
Let $D_1$ be the distance at the $\LK$ of some approach of $a$ and
$D_2$ the distance at the $\LK$ of its next retreat. After the approach
round the distance is at most $7D_1/8$, and while $a$ is not activated
the robot $b$ alternates approach and retreat, so the distance
increases by a factor of at most $9/8$. Hence
$D_2\le\tfrac{63}{64}D_1$, and the retreat amount $D_2/8$ is smaller
than the preceding approach amount $D_1/4$. Therefore the position of
$a$ after each retreat is larger than its position before the preceding
approach, and inductively $x_a(t)\ge0$. Symmetrically $x_b(t)\le1$.

\emph{(iii) The collision-freedom condition.} By~(i) and~(ii) we have
$0\le x_a(t)<x_b(t)\le1$, so
$f_a(t)=x_a(t)<x_b(t)=\|b(t)-a(0)\|$, and symmetrically for $b$.

\emph{(iv) The convergence condition.} Give weight $1$ to the state
$\mathrm{A}$ and weight $6/5$ to the state $\mathrm{B}$, and put
$H:=D\cdot w_a\cdot w_b$. The factor by which $H$ changes in one round
is $\tfrac34\cdot\tfrac65=\tfrac9{10}$ when only one approaches,
$\tfrac98\cdot\tfrac56=\tfrac{15}{16}$ when only one retreats,
$\tfrac12\cdot\tfrac{36}{25}=\tfrac{18}{25}$ when both approach,
$\tfrac54\cdot\tfrac{25}{36}=\tfrac{125}{144}$ when both retreat, and
$\tfrac78$ when one approaches and the other retreats; all of these are
at most $\tfrac{15}{16}$. At least one robot is activated in every
round, and $w_a,w_b\in[1,6/5]$, so $D\to0$ geometrically. Each
displacement is at most $D/4$, so the total length travelled by each
robot is finite and both converge to one and the same point $l$.

\emph{(v) The zig-zag condition.} By~(i) we have $D>0$, so every
activation produces a nonzero $\M$, and by fairness each robot performs
infinitely many approaches and retreats alternately. By~(ii) we have
$x_a\ge0$, so an approach of $a$ increases $f_a$ and a retreat
decreases it. Let $t_{k-1}$ be the landing time of the $k$-th $\M$ of
$a$, with $t_{-1}:=0$. Then $[t_{2i-1},t_{2i}]$ is an approach and
$[t_{2i},t_{2i+1}]$ a retreat, and each $\M$ progresses monotonically
towards its destination, so $f_a$ takes the maxima $f_a(t_{2i})$ and
the minima $f_a(t_{2i+1})$ alternately. By~(ii) the position after each
retreat is larger than the position after the previous retreat, so
$f_a(t_{2i+1})>f_a(t_{2i-1})$. The same holds for $b$.

Finally the displacement is a fixed ratio of the observed distance, so
it requires no unit of length, and the state is private, so the
algorithm is admissible in $\FS$.
\end{proof}

Lemma~\ref{lem:zcc_not_ob_f}, Lemma~\ref{lem:zcc_not_fc_s},
Lemma~\ref{lem:zcc_not_fs_am} and Lemma~\ref{lem:zcc_in_fs_s} supply
the boundary certificate of Table~\ref{tab:PP13} for $\mathrm{ZCC}$,
which by Lemma~\ref{lem:boundary-certificate} gives
$\Prof(\mathrm{ZCC})=\upa\FS^{S}$ as asserted in
Theorem~\ref{thm:principal-task-basis}.

\subsection{Vertical Configuration (VEC)}
\label{app:vec}

$\mathrm{VEC}$~\cite{NNW-SSS2025} requires the two robots to transform
the segment joining them by the same rotations as in $\mathrm{SRO}$
and, once a configuration orthogonal to the initial segment has been
reached, to stop there forever.

\begin{definition}[Vertical Configuration]
\label{def:vec}
Two robots $a,b$ start from distinct positions. The sequence of
configurations $C_i=(a_i,b_i)$, the vectors $d_i:=b_i-a_i$, the squares
$Q_i$ with $Q_{-1}:=\mathbb R^2$, and the rotations $R_\phi$ are
defined as in Definition~\ref{def:sro}. A vector $u$ is \emph{orthogonal
to $d_0$}, written $u\perp d_0$, if $u\neq0$ and $\langle u,d_0\rangle=0$.
The first condition (attainment) requires $d_i\perp d_0$ for some $i$.
The second condition (transition) requires that for every $i$ the
vector $d_{i+1}$ be equal to $d_i$, or its clockwise $90^\circ$
rotation, or its clockwise $45^\circ$ rotation scaled by $2^{\mp1/2}$,
and that $a_{i+1},b_{i+1}\in Q_{i-1}$. The third condition (stopping)
requires that if at a time at which both robots are at rest the segment
is orthogonal to $d_0$, then neither robot ever moves again.
\begin{align*}
\Phi_{\mathrm{VEC}}\equiv\big[\
 &\{\,\exists i\ge0:\ d_i\perp d_0\,\}\\
 &\text{and}\ \Bigl\{\,\forall i\ge0:\
   d_{i+1}\in\{d_i,\ R_{-90^\circ}d_i,\
     2^{-1/2}R_{-45^\circ}d_i,\ 2^{1/2}R_{-45^\circ}d_i\}\\
 &\qquad\wedge\bigl(a_{i+1},b_{i+1}\in Q_{i-1}\bigr)\,\Bigr\}\\
 &\text{and}\ \{\,\forall t\in\Sab:\
   b(t)-a(t)\perp d_0\Rightarrow
   \forall t'\ge t:\ a(t')=a(t)\ \wedge\ b(t')=b(t)\,\}\ \big]
\end{align*}
\end{definition}

\subsubsection*{Impossibility}

\begin{lemma}
\label{lem:vec_not_ob_f}
$\mathrm{VEC}\notin\Sol(\OB^F)$.
\end{lemma}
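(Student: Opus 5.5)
We use the fully synchronous execution of Lemma~\ref{lem:rigidA1}. Fix an arbitrary $\OB$ algorithm and normalize every $\LK$ so that the peer appears at $e$. An oblivious robot then receives the same local input in every round in which the robots are distinct. So both robots apply one and the same pair $(\lambda,\theta)$ in every such round. By Lemma~\ref{lem:ratios}(ii), each such round keeps the midpoint fixed and maps the difference vector by multiplication with the fixed complex number
\[
  z=1-2\lambda e^{\mathrm{i}\theta}.
\]
Each round ends with a joint stop. The plan is to show that, whatever $z$ is, this single fair execution violates one of the three conditions of Definition~\ref{def:vec}. We split into cases on $z$.

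\emph{Degenerate cases.} If $\lambda=0$, no robot ever moves, so the joint-stop sequence is $C_0$ alone. Since $d_0\not\perp d_0$, attainment fails. If $z=0$, the robots coincide after round~$1$, so $d_1=0$. The allowed set $\{d_0,R_{-90^\circ}d_0,2^{\mp1/2}R_{-45^\circ}d_0\}$ contains only nonzero vectors, so the transition condition fails at $i=0$. This is a safety violation, and the fair synchronous continuation preserves it.

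\emph{Nondegenerate cases.} Assume $\lambda\neq0$ and $z\neq0$. Then $z\neq1$, so every round changes the configuration. The joint-stop sequence is therefore exactly the sequence of round-end configurations, with $d_k=z^k d_0$. The transition condition at $i=0$ forces
\[
  z\in\{-\mathrm{i},\ 2^{-1/2}e^{-\mathrm{i}\pi/4},\ 2^{1/2}e^{-\mathrm{i}\pi/4}\};
\]
any other value already violates it. In each of these three cases some finite $k\in\{1,2\}$ gives $z^k\in\mathbb R_{>0}\cdot(-\mathrm{i})$. Hence $d_k\perp d_0$ holds at the joint stop ending round $k$. The stopping condition then forbids any later movement.

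The robots are still distinct at that point, so in round $k+1$ they again observe the normalized input $e$ and apply the same $(\lambda,\theta)$ with $\lambda>0$. This is a nonzero $\M$, so the stopping condition fails at a finite time. This violation persists in the fair, time-divergent synchronous execution.

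\emph{Where care is needed.} The main point to check is that the joint-stop sequence coincides with the round-boundary configurations. This requires the fact that $z\neq1$, so no consecutive repetitions are suppressed. It also requires that the rigid simultaneous moves produce no extra joint stops inside a round; this holds because both robots move throughout the Move interval. The containment condition $a_{i+1},b_{i+1}\in Q_{i-1}$ need not be examined, because the stopping violation suffices in every case that survives the transition check.
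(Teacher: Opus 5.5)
Your proof is correct and follows essentially the paper's argument: the same fully synchronous execution of Lemma~\ref{lem:rigidA1}, with $\lambda=0$ defeating attainment, $c_2=0$ defeating the transition condition, and otherwise perpetual nonzero motion defeating the stopping condition. The one difference is that you first use the transition condition at $i=0$ to narrow $z$ to three values and exhibit an orthogonal round $k\in\{1,2\}$. The paper skips this computation with a simple dichotomy: either some round ends orthogonal to $d_0$, and then the stopping condition fails, or none does, and then attainment fails.
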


\begin{proof}
Once an $\OB^F$ robot moves it keeps moving in every round and cannot
stop when the orthogonal configuration is reached. Consider the fully
synchronous execution of Lemma~\ref{lem:rigidA1}. If $\lambda=0$ the
robots do not move, the sequence reduces to $C_0$, and $d_0\not\perp
d_0$, so the attainment condition fails. If $c_2=0$ the two robots
coincide in round~$1$ and $d_1=0$, while all the transitions permitted
by the second condition are nonzero, so that condition fails. If
$c_2>0$ both robots perform a nonzero $\M$ in every round, so even if
$d_k\perp d_0$ holds at the end of some round they move again in the
next one and the stopping condition fails; and if no such round exists
the attainment condition fails.
\end{proof}

\begin{lemma}[{\cite[Lemma~21]{NNW-SSS2025}}]
\label{lem:vec_not_fc_s}
$\mathrm{VEC}\notin\Sol(\FC^S)$.
\end{lemma}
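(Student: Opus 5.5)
The plan is to adapt the argument of Lemma~\ref{lem:sro-not-fc-s} for $\mathrm{SRO}$. Any $\FC$ robot activated alone, with its peer frozen, is bound to a single displacement rule. We then show that two consecutive solo nonzero Moves violate one of the three conditions of $\Phi_{\mathrm{VEC}}$.

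\emph{Step 1 (synchronous prefix).} Fix an algorithm in $\FC^S$. I will run a fully synchronous prefix with normalized frames. By Lemma~\ref{lem:rigidA2}, the two robots always carry a common color. If $\lambda_\gamma=0$ for every color $\gamma$ in this common sequence, the fair execution never moves. Then $(C_i)$ is just $C_0$, and since $d_0\not\perp d_0$, attainment fails. Otherwise I truncate the prefix at the first round in which the common color $\gamma$ has $\lambda_\gamma>0$. No nonzero Move has yet occurred, so the configuration is still the initial one.

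\emph{Step 2 (solo activations).} Next I activate $a$ alone repeatedly; this is admissible under $\SSY$ by Lemma~\ref{lem:solo}. The color visible to $a$ is that of $b$, which stays $\gamma$. By Lemma~\ref{lem:single-rule}(b), every activation of $a$ therefore uses the same pair $(\lambda_\gamma,\theta_\gamma)$ and is a nonzero Move. With $b$ at rest, each activation maps $d_i$ to $d_{i+1}=\sigma d_i$ for one fixed complex factor $\sigma$. If $\sigma=0$, then $a$ lands on $b$ and $d_{i+1}=0$. No permitted transition produces $0$, so the transition condition fails. The option $d_{i+1}=d_i$ is excluded because $a$ moves while $b$ stays fixed. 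Hence, for the transition condition to hold at the first step, $\sigma$ must be one of $R_{-90^\circ}$, $2^{-1/2}R_{-45^\circ}$ or $2^{1/2}R_{-45^\circ}$. By the single-rule property, the second solo Move then uses the same $\sigma$.

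\emph{Step 3 (case analysis, the main step).} I normalize the stationary $b$ at the origin and $a$ at $(-1,0)$ at the start of the solo phase, with current index $i$, and treat the three values of $\sigma$ separately.

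For $\sigma=R_{-90^\circ}$, the first solo Move makes $d_{i+1}$ orthogonal to $d_i$. If $d_i$ is parallel to $d_0$, this is a stopping configuration reached at a joint stop, and the second solo Move violates the stopping condition. In general I will argue directly with $\mathbf{C3}$: after two $90^\circ$ steps $a$ sits at $(1,0)$, which lies outside $Q_i$. This is the computation of Lemma~\ref{lem:cf-sro-two-moves}.

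For $\sigma=2^{-1/2}R_{-45^\circ}$, the two-step position is again one of those computed in Lemma~\ref{lem:cf-sro-two-moves}, and it lies outside $Q_i$.

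For $\sigma=2^{1/2}R_{-45^\circ}$, which is the new case compared with $\mathrm{SRO}$, I will compute directly. After two steps $|d|=2|d_i|$. Here $Q_i$ is the square with diagonal of length $|d_i|$ containing $b$ at a vertex. Every point of $Q_i$ is therefore within $|d_i|$ of $b$, whereas $a_{i+2}$ is at distance $2|d_i|$ from $b$. So $\mathbf{C3}$ fails.

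In each case the stopping and transition conditions are settled by checking the first two solo Moves. I expect this bookkeeping to be the delicate part: making sure that every $\sigma$ consistent with a legal first transition forces a violation on the second Move, whether of $\mathbf{C3}$ or of the stopping clause.

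\emph{Step 4 (fair tail).} Every violation above is a finite safety violation. I will complete any pending cycle and append fair alternating activations. The violation persists in the resulting admissible fair execution, so no algorithm solves $\mathrm{VEC}$ in $\FC^S$.
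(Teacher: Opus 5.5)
Your proof is correct and shares the paper's skeleton: a synchronous prefix via Lemma~\ref{lem:rigidA2}, then a solo interval of $a$ in which Lemma~\ref{lem:single-rule}(b) fixes one factor $\sigma$. You defeat $\Phi_{\mathrm{VEC}}$ through a different clause, however.

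The paper never uses containment. A fixed permitted $\sigma$ rotates $d$ by the same angle at every activation. So an orthogonal joint-stop configuration is reached after one move for $R_{-90^\circ}$, or after two moves for $2^{\mp1/2}R_{-45^\circ}$ unless the transition condition has already failed. The next activation is again a nonzero Move, which violates the stopping clause. This needs up to three solo activations and no computation.

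You instead show that the second solo Move already violates $\mathbf{C3}$, because $a_2\notin Q_0$ in all three cases; the first Move is unconstrained since $Q_{-1}=\mathbb R^2$. This needs only two solo Moves and reuses the computation of Lemma~\ref{lem:cf-sro-two-moves}. The price is the extra argument for $2^{1/2}R_{-45^\circ}$, and that argument is sound: $Q_0$ lies within distance $\|d_0\|$ of the stationary $b$, while $\|a_2-b\|=2\|d_0\|$.

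A side benefit of your route is that it refutes the transition condition on its own, independently of the stopping clause. Conversely, the paper's route would survive if $\mathbf{C3}$ were dropped. Your hedge in the $R_{-90^\circ}$ case about $d_i\parallel d_0$ is unnecessary, since the solo phase starts at $C_0$.
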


\begin{proof}
Run a fully synchronous prefix, which is admissible under $\SSY$. By
Lemma~\ref{lem:rigidA2} the two robots always carry the same color. If
$\lambda_\gamma=0$ for every color $\gamma$ occurring in that sequence,
this fair execution contains no nonzero $\M$, the sequence reduces to
$C_0$, and $d_0\not\perp d_0$, so the attainment condition fails.
Otherwise let $\gamma$ be the first color with $\lambda_\gamma>0$ and
truncate the prefix at the round in which both robots carry $\gamma$;
the configuration is still the initial one.

Now activate $a$ alone repeatedly (Lemma~\ref{lem:solo}; under $\SSY$
the robot that is not activated has no pending operation). The color of
$b$ stays $\gamma$, so by Lemma~\ref{lem:single-rule}(b) the robot $a$
applies the same pair $(\lambda_\gamma,\theta_\gamma)$ at every
activation. With the peer at rest, a single nonzero $\M$ can satisfy
the transition condition only by making $d_{i+1}$ equal to
$R_{-90^\circ}d_i$ or to $2^{\mp1/2}R_{-45^\circ}d_i$, and each of
these applies the same rotation to $d$ at every activation. Hence $a$
reaches a configuration with $d_i\perp d_0$ in at most two moves, and
since $\lambda_\gamma>0$ it moves again at the very next activation, so
the stopping condition fails.
\end{proof}

\begin{lemma}[{\cite[Lemma~22]{NNW-SSS2025}}]
\label{lem:vec_not_fs_a}
$\mathrm{VEC}\notin\Sol(\FS^A)$.
\end{lemma}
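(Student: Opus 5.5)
The plan is to exploit the one capability that $\FS^A$ lacks relative to $\FS^{A_M}$: the adversary may let a robot \textbf{Look} while its peer is in the middle of a nonzero Move. The aim is a single admissible fair execution in which the first nonzero transition of the joint-stop sequence $(C_i)$ is none of the four transitions allowed by the second condition of Definition~\ref{def:vec}. That is a finite safety violation, so any fair continuation preserves it.

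\textbf{Step 1: normalization and the zero-move prefix.} At every Look with distinct positions, present the peer at $e=(1,0)$, as in Lemma~\ref{lem:norm}(1). An $\FS$ robot does not observe its peer's state, so its normalized output and next state depend only on its own state. Hence, under any schedule, each robot traverses the same deterministic sequence $s_0,s_1,\dots$, and $s_k$ carries a displacement rule $z_k=\lambda_{s_k}e^{\mathrm i\theta_{s_k}}$ relative to the vector towards the peer.
\begin{itemize}
\item If every $z_k=0$, run a fully synchronous execution (admissible under $\ASY$). Then $(C_i)=(C_0)$ and $d_0\not\perp d_0$, so the attainment condition fails.
\item Otherwise let $j$ be the least index with $z_j\neq0$. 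Run $j$ fully synchronous zero-move rounds. Both robots are now in state $s_j$, at their initial positions $a_0,b_0$, with no pending operation.
\end{itemize}

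\textbf{Step 2: the overlapping schedule.} Write $z=z_j$ and $w=b_0-a_0$.
\begin{enumerate}
\item Let $a$ Look and Compute. It moves towards $a_0+zw$.
\item Fix a parameter $\tau\in(0,1)$. Let $b$ Look exactly when $a$ is at $a_\tau=a_0+\tau zw$, which is legal since $\ASY$ is not $M$-atomic.
\item Let $b$ Compute and start its Move before $a$ lands.
\end{enumerate}
Then $b$ heads to $b_0+z(a_\tau-b_0)$. Because the two nonzero Moves overlap, there is no joint stop strictly between $C_0$ and the configuration reached when both have landed. That configuration is therefore $C_1$ (if it differs from $C_0$), and
\[
  d_1(\tau)=b_0+z(a_\tau-b_0)-a_0-zw=(1-2z)w+\tau z^2 w .
\]
This is an affine function of $\tau$ with nonzero slope $z^2w$. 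Its values for $\tau\in(0,1)$ therefore form an infinite set.

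\textbf{Step 3: choosing a bad $\tau$.} The second condition allows only the finitely many vectors $\{d_0,R_{-90^\circ}d_0,2^{\mp1/2}R_{-45^\circ}d_0\}$. Choose $\tau$ so that $d_1(\tau)$ is none of these; this excludes finitely many values of $\tau$.
\begin{itemize}
\item Then $d_1(\tau)\neq d_0$, so $C_1\neq C_0$ is a genuine new term of $(C_i)$.
\item If $d_1(\tau)=0$, the condition also fails, since every allowed transition is nonzero.
\end{itemize}
Either way the transition condition fails at $i=0$. Complete both pending cycles and append an arbitrary fair, time-divergent tail, for example alternating complete cycles with unit durations. The violation concerns a finite prefix of $(C_i)$ and persists.

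\textbf{Main obstacle.} The delicate step is Step 2's claim that no intermediate joint stop occurs. I must make sure that $b$'s Look, Compute and Move start all fall strictly inside $a$'s Move interval, so that $\Sab$ contains no time between $C_0$ and the common landing. I would handle this explicitly with the phase-time conventions of Appendix~\ref{app:model-conventions-schduler}: give $a$'s Move a long duration, and place $t_L(b),t_C(b),t_B(b)$ inside $(t_B(a),t_E(a))$. It is also worth checking that the argument needs no property of $z$ beyond $z\neq0$. In particular, the cases where $b$'s stale destination coincides with $a$'s landing point are absorbed by the choice of $\tau$, or produce $d_1=0$, which is itself a violation.
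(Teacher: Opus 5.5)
Your proposal is correct and follows essentially the paper's own argument: a zero-move synchronous prefix up to the first state with a nonzero rule, then a solo nonzero Move of $a$ that $b$ observes mid-flight, with the two Moves overlapping so that both landings form $C_1$, and finally a $d_1$ that is affine and nonconstant in the observation parameter and so can be made to avoid the four admissible successors. The paper phrases this via $\FS^A\equiv\FS^{A_{LC}}$ and Corollary~\ref{cor:continuum}, and the only small addition you need is to also exclude the at most one $\tau$ with $a_\tau=b_0$ (possible when $z$ is real and $z>1$), since there the normalized formula for $b$'s destination does not apply.
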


\begin{proof}
By Proposition~\ref{prop:atomicity-equivalences} we have
$\FS^A\equiv\FS^{A_{LC}}$, so it suffices to exhibit an admissible
$A_{LC}$ execution. Run a fully synchronous prefix, which is admissible
under $A_{LC}$. By Lemma~\ref{lem:rigidA2} the two robots always carry
the same state. If $\lambda_s=0$ for every state $s$ occurring in that
common sequence, this fair execution contains no nonzero $\M$, the
sequence reduces to $C_0$, and $d_0\not\perp d_0$, so the attainment
condition fails. Otherwise let $s$ be the first state with
$\lambda_s>0$ and truncate the prefix at the round in which both robots
carry $s$; no nonzero $\M$ has occurred, so the configuration is still
the initial one.

Activate $a$ alone. It performs a nonzero $\M$ from $p_B:=a_0$ to some
$p_E\neq p_B$. The robot $b$ has finished its cycle and has no pending
operation, so the hypotheses of Lemma~\ref{lem:mid} are met and, for
each $\mu\in(0,1)$, there is an execution in which $b$ observes $a$ at
$p_\mu=(1-\mu)p_B+\mu p_E$. Let the $\M$ of $b$ begin before the $\M$
of $a$ ends; then the two $\M$ intervals overlap, so no time of $\Sab$
occurs between the two landings and the configuration after both
landings is $C_1$. In state $s$ the robot $b$ lands, by
Lemma~\ref{lem:norm}(1), at $b_0+\lambda_sR_{\theta_s}(p_\mu-b_0)$, so
\[
  d_1(\mu)=b_0+\lambda_sR_{\theta_s}(p_\mu-b_0)-p_E
\]
is an affine function of $\mu$ whose slope
$\lambda_sR_{\theta_s}(p_E-p_B)$ is nonzero because $\lambda_s>0$.
Hence a suitable coordinate $h(\mu)$ of $d_1$ is continuous and
non-constant on $(0,1)$. For $i=0$ we have $Q_{-1}=\mathbb R^2$, so the
transition condition only requires
$d_1\in\{d_0,R_{-90^\circ}d_0,2^{-1/2}R_{-45^\circ}d_0,
2^{1/2}R_{-45^\circ}d_0\}$, and the corresponding values of $h$ are
only four. By Corollary~\ref{cor:continuum} there is a $\mu$ avoiding
those four values, and for it the transition condition fails.
\end{proof}

\subsubsection*{Solvability}

\begin{algorithm}
\caption{$\mathrm{VEC}$ for $\FS$ under $A_M$ (robot $r$)}
\label{alg:VEC_FS_AM}
\textbf{Model and scheduler:} $\FS$, $A_M$. \\
\textbf{State:} $r.\mathit{status}\in\{1,2\}$, initially $1$. \\
\textbf{$\LK$:} observe the peer $\overline r$.\\
\textbf{$\CP$:}
\begin{algorithmic}[1]
\If{$r.\mathit{status}=1$}
  \State $r.\des\gets$ the image of $r$ under the clockwise $90^\circ$
  rotation about the midpoint of $r$ and $\overline r$
  \State $r.\mathit{status}\gets2$
\Else
  \State $r.\des\gets$ the current position of $r$
  \Comment{no further movement}
\EndIf
\end{algorithmic}
\textbf{$\M$:} move to $r.\des$.
\end{algorithm}

\begin{lemma}[{\cite[Lemma~23]{NNW-SSS2025}}]
\label{lem:vec_in_fs_am}
$\mathrm{VEC}\in\Sol(\FS^{A_M})$, as witnessed by
Algorithm~\ref{alg:VEC_FS_AM}.
\end{lemma}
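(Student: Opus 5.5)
We must show that Algorithm~\ref{alg:VEC_FS_AM} satisfies $\Phi_{\mathrm{VEC}}$ in every admissible fair $A_M$ execution from $\mathcal I_\star$. Each robot performs exactly one possibly nonzero $\M$: its first cycle. In that cycle it rotates its own position by $90^\circ$ clockwise about the midpoint it observed at its $\LK$. Every later cycle is a zero $\M$.

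The plan is to enumerate the possible observations. For each robot, the $\LK$ of its first cycle sees the peer either at the peer's initial position or at the peer's landing point after the peer's first $\M$. Under $M$-atomicity no $\LK$ falls inside a $\M$ interval, so the peer is never seen mid-flight. This leaves only finitely many cases.

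\emph{Case (a): both first $\LK$s see the initial configuration $(a_0,b_0)$.} Both robots rotate about the common midpoint $m_0$. The final configuration is the $90^\circ$ clockwise rotation of the segment, so $d=R_{-90^\circ}d_0\perp d_0$.

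The intermediate joint-stop configurations then need checking. If the two $\M$ intervals overlap, the joint-stop sequence is simply $C_0,C_1$ and satisfies $\mathbf{C1}$. If they are disjoint, we get $C_0$, then a configuration with one robot moved, then $C_2$. The one-robot-moved configuration has $d_1=2^{-1/2}R_{-45^\circ}d_0$, since moving one endpoint to its $90^\circ$ image about the midpoint gives a vector of length $|d_0|/\sqrt2$ rotated $45^\circ$. From there, $d_2=R_{-90^\circ}d_0=2^{1/2}R_{-45^\circ}d_1$. This is exactly why the $2^{1/2}$ transition is permitted. For the square condition, $a_2,b_2\in Q_0$ holds because the rotated segment is the other diagonal of $Q_0$. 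Finally, $Q_{-1}=\mathbb R^2$ handles $i=0$.

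\emph{Case (b): the second robot's $\LK$ occurs after the first robot has landed.} Then the second robot sees $(p,\bar r_0)$ with $p$ the rotated image. It rotates about the midpoint of $p$ and its own position, which is the $\mathbf{C2}$-type solo step applied twice, exactly as in the alternating-singleton analysis of Lemma~\ref{lem:sro-in-ob-r}. The first step gives $d_1=2^{-1/2}R_{-45^\circ}d_0$, and the second gives $d_2=2^{-1/2}R_{-45^\circ}d_1=\tfrac12R_{-90^\circ}d_0\perp d_0$. The square condition follows as in Lemma~\ref{lem:sro-in-ob-r}: the first mover reaches a vertex of $Q_0$ and the second reaches its centre, so $a_2,b_2\in Q_0$.

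Within Case (b), one must also handle the second robot's $\LK$ occurring after the first robot's $\LK$ but before its $\M$ begins. This falls back into Case (a), because the observed position is still the initial one. Any $\LK$ during the first robot's $\M$ is excluded by atomicity.

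\emph{Attainment and stopping.} In every case, attainment holds once both robots have moved. Fairness guarantees both first cycles occur, and they are nonzero because the robots start at distinct positions. For stopping, we argue that an intermediate joint stop is never orthogonal to $d_0$: $2^{-1/2}R_{-45^\circ}d_0$ is not orthogonal to $d_0$. The final configuration is orthogonal, and after it both robots are in state $2$ and never move again.

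\emph{Other obligations.} The algorithm uses no unit of length and relies on shared chirality, so it is implementable under variable disorientation. Only private state is used, so it is admissible in $\FS$.

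The main obstacle is making the case analysis of the second robot's observation exhaustive under $A_M$ timing. In particular, one must check that overlapping $\M$ intervals produce no extra joint stops, and that the $2^{1/2}R_{-45^\circ}$ transition together with $\mathbf{C3}$ is verified against $Q_{i-1}$ rather than $Q_i$.
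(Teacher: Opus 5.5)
Your proof is correct and follows essentially the same route as the paper. Under $M$-atomicity every observed configuration is a joint stop, and you split on whether the second robot's $\LK$ precedes or follows the first landing, with the overlapping versus disjoint subcase when both robots observe the initial configuration. This yields either $d_1=R_{-90^\circ}d_0$, or $2^{-1/2}R_{-45^\circ}$ followed by $2^{1/2}R_{-45^\circ}$, or two $2^{-1/2}R_{-45^\circ}$ steps, with final positions at the two remaining vertices of $Q_0$ or at a vertex and the centre of $Q_0$, exactly as in the paper.
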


\begin{proof}
Since $A_M$ forbids a $\LK$ of one robot during a $\M$ of the other,
every configuration observed at a $\LK$ is one of the terms of the
sequence $(C_i)$.

Suppose first that the $\LK$ of $b$ precedes the landing of $a$. Then
both robots take as destinations the two remaining vertices of the
square $Q_0$. If there is a time at which both robots are at rest
between the two landings, the intermediate configuration has
$d_1=2^{-1/2}R_{-45^\circ}d_0$, and the next one has
$d_2=R_{-90^\circ}d_0=2^{1/2}R_{-45^\circ}d_1$; if there is no such
time, the two landings produce a single term with
$d_1=R_{-90^\circ}d_0$.

Suppose instead that the $\LK$ of $b$ follows the landing of $a$. Then
$d_1=2^{-1/2}R_{-45^\circ}d_0$, and $b$ moves to a vertex of $Q_1$,
giving $d_2=2^{-1/2}R_{-45^\circ}d_1$.

In every case, each transition has one of the permitted forms.
In the first case, the final positions are the two remaining
vertices of $Q_0$; in the second case, they are a vertex
and the centre of $Q_0$.
Thus, when an intermediate configuration exists, both positions
of $C_2$ belong to $Q_0$, as required.
The containment condition for the first transition is vacuous
since $Q_{-1}=\mathbb{R}^2$.
The final
segment is orthogonal to $d_0$, which is the attainment condition, and
both robots are then in state $2$ and never move again; no earlier
segment is orthogonal to $d_0$, since the only intermediate one is at
$45^\circ$. Hence the stopping condition holds as well.
\end{proof}

Lemma~\ref{lem:vec_not_ob_f}, Lemma~\ref{lem:vec_not_fc_s},
Lemma~\ref{lem:vec_not_fs_a} and Lemma~\ref{lem:vec_in_fs_am} supply
the boundary certificate of Table~\ref{tab:PP13} for $\mathrm{VEC}$,
which by Lemma~\ref{lem:boundary-certificate} gives
$\Prof(\mathrm{VEC})=\upa\FS^{A_M}$ as asserted in
Theorem~\ref{thm:principal-task-basis}.

\subsection{Dyadic-Multiple Stopped-Distance (DMSD)}
\label{app:dmsd}

\begin{definition}[Dyadic-Multiple Stopped-Distance]
\label{def:six-dmsd}
Let $D_0=\|a(0)-b(0)\|>0$, and let
\(
  \mathbb Z[1/2]
  =\{m/2^k\mid m\in\mathbb Z,\ k\in\mathbb Z_{\ge0}\}
\)
be the set of dyadic rationals. Using the moving predicate of
Section~\ref{subsec:problems-relationships}, define the set of \emph{joint stops}
\(
  \Sigma_{ab}
  =\{t\ge0\mid
       \lnot\operatorname{Move}_a(t)\land\lnot\operatorname{Move}_b(t)\},
\)
the set of times at which neither robot is inside a nonzero $\M$
operation. Every execution must satisfy
\begin{align}
  &\forall t\in\Sigma_{ab}:\quad
      \frac{\|a(t)-b(t)\|}{D_0}\in\mathbb Z[1/2],
      \tag{D1}\label{eq:dmsd-dyadic}\\
  &\exists t\in\Sigma_{ab}:\quad
      \|a(t)-b(t)\|<D_0.
      \tag{D2}\label{eq:dmsd-progress}
\end{align}
\end{definition}

\subsubsection*{Impossibility}

\begin{lemma}
\label{lem:dmsd_not_fs_a}
$\mathrm{DMSD}\notin\Sol(\FS^{A})$.
\end{lemma}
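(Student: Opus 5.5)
We plan a single admissible $\ASY$ execution in which $b$ takes its \Look{} while $a$ is in the middle of a \Move, so that the distance at a later joint stop depends continuously and non-trivially on where $a$ was observed. Since only countably many values are dyadic, some choice of observation point breaks~(D1). This uses exactly the resource that $A_M$ forbids, which matches the positive bound $\DMSD\in\Sol(\OB^{A_M})$.

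\emph{Step 1: synchronous prefix.} Fix an arbitrary deterministic $\FS$ algorithm and an instance $(a_0,b_0)\in\mathcal I_\star$. Run fully synchronous rounds with normalized frames; this is admissible under $\ASY$. By Lemma~\ref{lem:rigidA2} both robots carry the same private state in every round, and by Lemma~\ref{lem:norm}(1) each state $s$ determines a displacement rule $(\lambda_s,\theta_s)$.
\begin{itemize}
\item If $\lambda_s=0$ for every state in this common orbit, the fair execution never moves. Hence $\|a(t)-b(t)\|=D_0$ at every joint stop, and~(D2) fails.
\item Otherwise, let $s$ be the first state with $\lambda_s>0$, and truncate the prefix at the round boundary where both robots are in state $s$. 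The configuration is still $(a_0,b_0)$, and no operation is pending.
\end{itemize}

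\emph{Step 2: look during movement.} Activate $a$. It performs a nonzero \Move{} from $p_B=a_0$ to some $p_E\ne p_B$. Since $b$ has no pending operation, Lemma~\ref{lem:mid} gives, for every $\mu\in(0,1)$, an admissible execution in which $b$ performs its \Look{} while $a$ is at $p_\mu=(1-\mu)p_B+\mu p_E$.
\begin{itemize}
\item We let $b$ delay its \Compute{} and \Move{} until after $a$ has landed.
\item We then let $b$ complete its \Move{} while $a$ is idle, and only afterwards let $a$ take its next \Look.
\end{itemize}
Because $b$ is in state $s$, it lands at $b_1(\mu)=b_0+\lambda_sR_{\theta_s}(p_\mu-b_0)$. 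Immediately after this landing, both robots are at rest, so that instant is a joint stop, with distance
\[
  g(\mu)=\bigl\|p_E-b_0-\lambda_sR_{\theta_s}(p_\mu-b_0)\bigr\|.
\]

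\emph{Step 3: non-dyadic distance.} The vector inside the norm is affine in $\mu$, and its slope $-\lambda_sR_{\theta_s}(p_E-p_B)$ is nonzero. Therefore $g(\mu)^2$ is a strictly convex quadratic in $\mu$. In particular $g$ is continuous and non-constant on $(0,1)$, so its image contains an interval. Since $D_0\,\mathbb Z[1/2]$ is countable, Corollary~\ref{cor:continuum} yields a $\mu$ with $g(\mu)/D_0\notin\mathbb Z[1/2]$. For that $\mu$, (D1) fails at the joint stop after $b$ lands. This is a safety violation, so it persists if we append any admissible fair, time-divergent tail, for example alternating complete cycles.

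\emph{Main obstacle.} The step needing most care is the timing bookkeeping. We must check that the chosen \Look{} of $b$ inside $a$'s \Move{} is legal under $\ASY$, which is the content of Lemma~\ref{lem:mid}. We must also check that a genuine joint stop exists after $b$'s landing, before any further nonzero \Move. Scheduling $b$'s \Move{} entirely after $a$ lands, and keeping $a$ idle until $b$ lands, secures both.
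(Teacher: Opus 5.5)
Your proposal is correct and follows the paper's proof essentially step for step: a fully synchronous prefix up to the first common state with $\lambda_s>0$, then a solo nonzero Move observed mid-flight via Lemma~\ref{lem:mid}, then Corollary~\ref{cor:continuum} applied to the non-constant distance at the joint stop after both robots land. The paper differs only cosmetically, swapping the roles of $a$ and $b$ and phrasing the execution as $A_{LC}$-admissible; the one detail worth adding, which the paper also leaves implicit here, is to exclude the at most one $\mu$ with $p_\mu=b_0$, where Lemma~\ref{lem:norm}(1) does not supply your landing formula.
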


\begin{proof}
The argument extends~\cite[Lemma~6]{KSW-2-robot} to a general
displacement $(\lambda,\theta)$ including a rotation, and to the
internal states of $\FS$. By
Proposition~\ref{prop:atomicity-equivalences} we have
$\FS^{A}\equiv\FS^{A_{LC}}$, so it suffices to exhibit, for an
arbitrary $\FS$ algorithm $\mathcal{A}$, an admissible $A_{LC}$
execution that violates $\mathrm{DMSD}$.

In the normalized frame of Lemma~\ref{lem:norm}(1), a robot in state
$s$ has displacement $\lambda_s R_{\theta_s}(w)$, where $w$ is the
vector to its peer, and its next state is determined by $s$ alone. The
local coordinate system may be chosen anew at each activation, and we
supply this frame independently at every activation below. Normalize
$a(0)=0$, $b(0)=e_1$ and $D_0=1$.

\emph{First stage.} Run a fully synchronous prefix, which is admissible
under $A_{LC}$. By Lemma~\ref{lem:rigidA2} the two robots always carry
the same state; let $s_0,s_1,\dots$ be that common state sequence. If
$\lambda_s=0$ for every state $s$ occurring in the sequence, this fair
execution contains no nonzero $\M$, so $a$ and $b$ stay at their
initial positions and $\|a(t)-b(t)\|=D_0$ for every $t\in\Sab$, which
violates (D2). Otherwise let $s$ be the first state with
$\lambda_s>0$ and truncate the prefix at the round in which both robots
carry $s$; write $\lambda:=\lambda_s>0$ and $R:=R_{\theta_s}$. No
nonzero $\M$ has occurred, so the robots are at their initial positions
and the ratio at every earlier joint stop is $1\in\Dyad$.

\emph{Second stage.} Activate $b$ alone. It performs a nonzero $\M$ to
\[
  p_E=e_1-\lambda Re_1\neq b(0),
\]
and for $\mu\in(0,1)$ we put
$p_\mu:=(1-\mu)b(0)+\mu p_E=e_1-\mu\lambda Re_1$. The robot $a$ has
finished its cycle and has no pending operation, so the hypotheses of
Lemma~\ref{lem:mid} are met and there is an execution in which $a$
performs a $\LK$ at a time when $b$ is at $p_\mu$. In state $s$ the
robot $a$ then lands at
\[
  \lambda Rp_\mu=\lambda Re_1-\mu\lambda^2R^2e_1 .
\]
Let both $\M$ operations complete, delay the next $\LK$ of each robot
by a finite amount, and append a fair tail. Throughout that delay
neither robot is inside a nonzero $\M$, so the whole interval is
contained in $\Sab$, and $m_b\ge1$ there.

\emph{Third stage.} On that interval the distance equals
\[
  h(\mu)=\|c+\mu w\|,\qquad
  c:=e_1-2\lambda Re_1,\qquad
  w:=\lambda^2R^2e_1\neq0 .
\]
Indeed $\|c+\mu w\|^2$ is a quadratic polynomial in $\mu$ with leading
coefficient $\|w\|^2>0$, so $h$ is continuous and non-constant on
$(0,1)$. By Corollary~\ref{cor:continuum} there is a $\mu$ with
$h(\mu)/D_0\notin\Dyad$, and the corresponding execution violates (D1).
Since (D1) is of the form $\forall t\in\Sab$, a single violating time
suffices, and the joint stops produced in the first two stages need not
be inspected.
\end{proof}

\begin{corollary}[{\cite[Lemma~6]{KSW-2-robot}}]
\label{cor:dmsd_not_ob_a}
$\mathrm{DMSD}\notin\Sol(\OB^{A})$.
\end{corollary}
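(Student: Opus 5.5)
The corollary follows from Lemma~\ref{lem:dmsd_not_fs_a} by monotonicity in the capability model. By Proposition~\ref{prop:basic-inclusions}, $\FS^A\ge\OB^A$, i.e.\ $\Sol(\OB^A)\subseteq\Sol(\FS^A)$: an $\OB$ algorithm is an $\FS$ algorithm with a single private state. Its executions under \ASY{} are exactly the admissible executions of that one-state $\FS$ algorithm, so it solves a task under $\OB^A$ if and only if it solves it under $\FS^A$. Hence if $\DMSD\in\Sol(\OB^A)$, then $\DMSD\in\Sol(\FS^A)$, contradicting Lemma~\ref{lem:dmsd_not_fs_a}. The contrapositive gives the claim.

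As a sanity check, and to match the citation of~\cite[Lemma~6]{KSW-2-robot}, one can also specialize the proof of Lemma~\ref{lem:dmsd_not_fs_a} directly. With one state, the displacement rule $(\lambda,\theta)$ is a single fixed pair.

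\emph{Zero displacement.} If $\lambda=0$, the fully synchronous execution never moves, so (D2) fails.

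\emph{Nonzero displacement.} If $\lambda>0$, the first stage is empty. Activate $b$ alone and let $a$ Look while $b$ is at $p_\mu$, as allowed by Lemma~\ref{lem:mid}. This timing is admissible in $A_{LC}$, and hence in $A$ by Proposition~\ref{prop:basic-inclusions}. The resulting joint-stop distance $h(\mu)=\|c+\mu w\|$ is continuous and non-constant. By Corollary~\ref{cor:continuum}, some $\mu$ makes $h(\mu)/D_0$ non-dyadic, so (D1) fails.

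There is essentially no obstacle. The only point to check is that the inclusion $\OB^A\le\FS^A$ holds under the paper's observation convention, which Proposition~\ref{prop:basic-inclusions} already states.
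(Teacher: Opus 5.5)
Your proposal is correct and matches the paper's proof: the paper also obtains the corollary from Lemma~\ref{lem:dmsd_not_fs_a} using $\OB^A\le\FS^A$ and upward closure of solvability (Proposition~\ref{prop:profile-is-upset}), exactly as in your first paragraph. Your direct specialization is a harmless redundant check. The only detail it leaves implicit is excluding the at most one $\mu$ with $p_\mu=a(0)$, which Corollary~\ref{cor:continuum} absorbs.
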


\begin{proof}
In $\mathcal Q$ the class $\OB^{A}$ lies below $\FS^{A}$, so this
follows from Lemma~\ref{lem:dmsd_not_fs_a} and
Proposition~\ref{prop:profile-is-upset}.
\end{proof}

\subsubsection*{Solvability}

\begin{lemma}[{\cite[Lemma~5]{KSW-2-robot}}]
\label{lem:dmsd_in_ob_am}
$\mathrm{DMSD}\in\Sol(\OB^{A_M})$, as witnessed by the Go-to-midpoint
algorithm.
\end{lemma}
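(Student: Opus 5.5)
The Go-to-midpoint algorithm moves each robot, at every activation, to the midpoint of the two observed positions. Correctness under $\OB^{A_M}$ follows from a joint-stop invariant: at every time in $\Sab$ the ratio $\|a(t)-b(t)\|/D_0$ lies in $\Dyad$. I would prove this by an induction over the nonzero $\M$ operations, in chronological order of their landings. The key fact is that, under $M$-atomic \ASY{}, a $\LK$ of one robot never falls inside a $\M$ interval of the other. Hence every snapshot is taken while the peer is at rest, at its initial position or at the landing point of its last completed nonzero $\M$. All motion takes place on the line through $a(0)$ and $b(0)$, so I would fix a coordinate with $a(0)=0$ and $b(0)=D_0=1$. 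The inductive claim is that every resting position, meaning any position a robot occupies when not inside a nonzero $\M$, has a dyadic coordinate.

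The induction step runs as follows. When robot $r$ performs a $\LK$, its own position is a resting position, since $r$ is not moving at its own $\LK$. The peer's position is also a resting position, by $M$-atomicity. By the inductive hypothesis both coordinates are dyadic, so the computed midpoint, which is the landing point, is dyadic as well. The snapshot may be stale, because the peer may move between $r$'s $\LK$ and $r$'s $\M$. This does not matter: the landing point is fixed at $\CP$ from the two dyadic snapshot coordinates. The only thing to check is that a stale snapshot never records the peer mid-move, and the $A_M$ constraint $t_L(\bar r,j)\notin[t_B(r,i),t_E(r,i)]$ gives exactly that. At any joint stop both robots sit at resting positions, so their distance is a difference of dyadics and (D1) holds.

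For (D2) I would use the first nonzero $\M$. If no nonzero $\M$ ever occurs, then every $\LK$ sees the initial configuration. The robots are distinct there, so every computed midpoint differs from the current position, and the first cycle of either robot is already nonzero; this case cannot arise. Consider the first nonzero $\M$, say by $a$ (the case of $b$ is symmetric). Its snapshot is the initial configuration, so its destination is $1/2$. Either $b$ is at rest throughout, giving a joint stop with distance $1/2<1$, or $b$ is also moving. In the latter case $b$'s $\LK$ preceded $a$'s $\M$, so $b$ also saw the initial configuration and moves to $1/2$. After both landings there is a joint stop with distance $0<1$. The same holds whenever movements overlap: both landing points are midpoints of earlier dyadic configurations lying in $[0,1]$.

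The step I expect to be the main obstacle is making this (D2) argument fully precise under arbitrary overlaps. Time-divergence and finiteness of each $\M$ do guarantee that a joint stop eventually occurs, but a continuing chain of overlapping moves could in principle postpone it. My fallback is a stronger invariant: the resting distance is non-increasing and is at least halved by each robot's activation. I would prove this by showing that every landing lies within the segment spanned by the two latest resting positions. From it I would conclude that any joint stop after the first nonzero landing has distance at most $1/2$. A joint stop after that landing exists because each robot's subsequent cycles begin with a $\LK$, during which neither robot is inside a $\M$ that the other can observe. In particular the configuration at the peer's next $\LK$ after $a$ lands is such a joint stop, provided $b$ is itself at rest then, which holds at its own $\LK$.
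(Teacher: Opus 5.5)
Your (D1) argument is the paper's: under $A_M$ every snapshot shows the peer at a resting position, so induction over landings keeps all resting coordinates dyadic, and stale destinations are harmless. The gap is in (D2).

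Your case split on the first nonzero $\M$ is fine up to the sentence ``after both landings there is a joint stop with distance $0$'', which you leave unproved. The fallback you propose to secure it is false. Under $A_M$ the resting distance is not non-increasing, an activation need not halve it, and a landing need not lie in the segment spanned by the two latest resting positions, precisely because snapshots may be stale. Take $a(0)=0$ and $b(0)=1$:
\begin{itemize}
\item let $a$ perform its $\LK$ and compute the destination $\tfrac12$, but postpone its $\M$;
\item meanwhile $b$ completes three cycles, landing at $\tfrac12,\tfrac14,\tfrac18$; each $\LK$ of $b$ sees $a$ at rest at $0$, which $A_M$ allows;
\item then $a$ executes its pending $\M$.
\end{itemize}
The joint-stop distances are $1,\tfrac12,\tfrac14,\tfrac18,\tfrac38$, and the landing point $\tfrac12$ lies outside $[0,\tfrac18]$. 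This is the same delayed-move effect that keeps $\mathrm{MCv}$ out of $\Sol(\OB^{A_M})$, so no argument resting on monotone distance can work here.

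The missing ingredient is the fact you invoke only in your last sentence: under $A_M$ every $\LK$ time lies in $\Sab$. The looking robot is at rest, since $t_E(r,k-1)<t_L(r,k)<t_B(r,k)$, and the peer cannot be inside a $\M$. In your second case this shows that the robot landing first cannot perform a $\LK$, hence cannot start another $\M$, before the other lands. So the later landing time is a joint stop with both robots at $\tfrac12$, which closes your argument.

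The paper avoids the case split altogether. Let $a$ be the robot whose $\LK$ comes first, so it lands at the midpoint $m$ of the initial segment, and take $t=t_L(a,2)$. Then $t\in\Sab$, and $a(t)=m$ because $a$ cannot move between its landing and its next $\LK$. Also $b(t)\in[a(0),b(0)]$, because every landing point is a midpoint of two points of that segment. Hence $\|a(t)-b(t)\|\le D_0/2$.

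The bound comes from containment in the initial segment together with $a$ being frozen at $m$, not from any monotonicity. At the peer's next $\LK$, which is what you use instead, $a$ may already have moved again. Bounding the distance there would require tracking $a$'s further moves, which the choice $t=t_L(a,2)$ sidesteps.
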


\begin{proof}
Each robot moves to the midpoint of its own position and the observed
position of its peer. By
Proposition~\ref{prop:atomicity-equivalences} we have
$\OB^{A_{CM}}\equiv\OB^{A_M}$, so it is enough to argue under $A_M$.

\emph{Observed positions.} Under $A_M$ a $\LK$ of one robot never falls
inside the closed interval $[t_B,t_E]$ of a $\M$ operation of the
other, so every observed position is an initial position or the landing
point of a completed nonzero $\M$. Moreover a robot is itself at rest
at its own $\LK$, since $t_E(r,k-1)<t_L(r,k)<t_B(r,k)$. Consequently
every $\LK$ time belongs to $\Sab$.

\emph{Condition (D1).} We claim that at every time the position of each
robot is of the form $(1-\alpha)a(0)+\alpha b(0)$ with
$\alpha\in\Dyad$, whenever that robot is not inside a $\M$ operation.
This holds initially with $\alpha=0$ and $\alpha=1$. A robot lands at
the midpoint of its own position and an observed position, and by the
previous paragraph both are of the stated form; the midpoint of
$(1-\alpha)a(0)+\alpha b(0)$ and $(1-\beta)a(0)+\beta b(0)$ is the
point with parameter $(\alpha+\beta)/2\in\Dyad$. Since the destination
is computed from the snapshot taken at the $\LK$, a delayed $\M$ does
not change the landing point (Corollary~\ref{cor:stale}), so the claim
is preserved even when a snapshot is stale. Hence for $t\in\Sab$ we get
$\|a(t)-b(t)\|=|\beta-\alpha|\,D_0$ with $\beta-\alpha\in\Dyad$, which
is (D1).

\emph{Condition (D2).} Every landing point is a midpoint of two points
of the segment $[a(0),b(0)]$, so by induction both robots always lie in
that segment. By fairness some robot, say $a$, is activated first; at
its $\LK$ the two robots are still at their initial positions, so the
observed distance is $D_0>0$ and $a$ performs a nonzero $\M$, landing
at the midpoint $m$ of $a(0)$ and $b(0)$. Let $t:=t_L(a,2)$ be the next
$\LK$ of $a$, which exists by fairness and satisfies
$t_E(a,1)<t<t_B(a,2)$, so $a(t)=m$; by the first paragraph
$t\in\Sab$. Since $b(t)$ lies in $[a(0),b(0)]$ and $m$ is the midpoint
of that segment,
\[
  \|a(t)-b(t)\|\le \tfrac{D_0}{2}<D_0 ,
\]
which is (D2).
\end{proof}

Lemma~\ref{lem:dmsd_not_fs_a} and Lemma~\ref{lem:dmsd_in_ob_am}
together prove Lemma~\ref{lem:six-bounds}\,(\ref{lem:six-dmsd-bounds}).

\subsection{Convergence (Cv)}
\label{app:cv}

$\mathrm{Cv}$ requires nothing but that the two robots converge to a
single point. It occupies the row of the least class of $\mathcal Q$,
so its profile is the whole of $\mathcal Q$.

\begin{definition}[Convergence]
\label{def:cv}
Two robots $a,b$ start from distinct positions and converge to a point
$l$.
\[
  \Phi_{\mathrm{Cv}}
  \equiv
  \{\,\exists l\in\mathbb R^2:\
    \lim_{t\to\infty}a(t)=\lim_{t\to\infty}b(t)=l\,\}
\]
\end{definition}

Since $\OB^A$ is the least element of $\mathcal Q$, we have
$\mathcal N_{\OB^A}=\Max(\mathcal Q\setminus\upa\OB^A)=\varnothing$,
so the negative half of the condition in
Lemma~\ref{lem:boundary-certificate}, applied with
$U=\upa\OB^A=\mathcal Q$, is vacuous for $\mathrm{Cv}$.
This is what the entry ``Trivial'' of Table~\ref{tab:PP13} records. It
concerns only the impossibility side: the positive half still has to be
established, and we do so here.

\subsubsection*{Solvability}

\begin{algorithm}
\caption{$\mathrm{Cv}$ for $\OB$ under $\ASY$ (robot $r$)}
\label{alg:Cv_OB_A}
\textbf{Model and scheduler:} $\OB$, $\ASY$. \\
\textbf{$\LK$:} observe the peer $\overline r$ at the local coordinate
$v$.\\
\textbf{$\CP$:}
\begin{algorithmic}[1]
\State $r.\des\gets v/2$
\Comment{the midpoint of the two observed positions}
\end{algorithmic}
\textbf{$\M$:} move to $r.\des$.
\end{algorithm}

\begin{lemma}[{\cite{CP}}]
\label{lem:cv_in_ob_a}
$\mathrm{Cv}\in\Sol(\OB^{A})$, as witnessed by
Algorithm~\ref{alg:Cv_OB_A}.
\end{lemma}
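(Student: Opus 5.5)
We prove that Algorithm~\ref{alg:Cv_OB_A} (Go-to-midpoint) makes the two robots converge to a common point under $\ASY$. Fix global coordinates on the line through $a(0)$ and $b(0)$. The first step is an invariant: at every time both robots lie in the segment $[a(0),b(0)]$. This holds because a landing point is the midpoint of the mover's position at its $\LK$ and the peer position observed at that $\LK$. Both of these lie in the segment by induction over the chronologically ordered $\LK$ events; stale snapshots are harmless because the segment is convex and fixed. Consequently all motion is one-dimensional, and we may work with scalar positions $x_a(t),x_b(t)$.

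The second step is a contraction estimate. Let $I(t)$ be the convex hull of the current positions together with all pending destinations. The plan is to show that $I(t)$ is non-increasing under inclusion. A new destination is a midpoint of two points of $I(t)$ (the mover's own position and the observed peer position), so it stays inside the hull. Positions moving along segments toward destinations inside $I$ also stay inside.

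For strict shrinkage, consider any time window in which both robots complete a full cycle whose $\LK$ occurs after that window started. We claim the diameter of the hull drops by a factor of at most $3/4$ over such a window. The argument is this: after $r$ completes a cycle that began at a $\LK$ inside the window, $r$ and its destination lie in the half of the previous hull nearer the observed peer point. A short case analysis on the order of the two robots' $\LK$s then bounds the spread.

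The main obstacle is exactly this case analysis under arbitrary asynchrony, where a robot may observe its peer mid-move. I would handle it with the classical potential of Cohen--Peleg~\cite{CP}: track the interval spanned by positions and pending targets, and show that each robot's landing lands within $[\min,\max]$ of the hull at its $\LK$ time, at distance at least one quarter of the diameter from the far end.

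By fairness, each robot performs infinitely many cycles, so infinitely many disjoint windows occur. The diameter therefore tends to $0$, and the nested compact intervals $I(t)$ shrink to a single point $l$. Both robots, always lying in $I(t)$, converge to $l$. This gives $\Phi_{\mathrm{Cv}}$.

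Finally, the destination $v/2$ requires no unit or orientation, so it is computable in each local frame under variable disorientation. This completes the positive half of the boundary certificate for $\mathrm{Cv}$.
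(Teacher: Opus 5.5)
Your proposal follows the paper's proof: motion is confined to the segment $[a(0),b(0)]$, the hull of current positions and pending destinations is non-increasing (your ``all pending destinations'' must include cycles still awaiting $\CP$, as the paper stresses; this is harmless because the target is fixed by the $\LK$ snapshot), the midpoint rule is frame-independent, and convergence then comes from the Cohen--Peleg theorem~\cite{CP} for the gravitational algorithm under \ASY, which the paper invokes directly exactly where you sketch a $3/4$ contraction per window. That bound does hold for two robots, but your stated justification for it (the landing point lies in the half of the previous hull nearer the observed peer) is false: with hull $[0,1]$, the mover at $0$ and the peer seen at $0.6$, the landing is $0.3$, in the other half. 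So that step genuinely rests on~\cite{CP}, or on the $\LK$-order case analysis you mention but do not carry out, just as it does in the paper.
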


\begin{proof}
Algorithm~\ref{alg:Cv_OB_A} is the two-robot case of the gravitational
algorithm of~\cite{CP}: each robot moves to the center of
gravity of the configuration it has observed, which for two robots is
the midpoint of the two observed positions. The destination is
determined by the midpoint alone, so it requires neither a unit of
length nor chirality, and it is computed from the current snapshot
alone, so the algorithm is admissible in $\OB$.

All the motion takes place on the line through $a(0)$ and $b(0)$, since
the midpoint of two points of that line lies on it. Let $H(t)$ be the
convex hull, an interval of that line, of the current positions of the
two robots together with 
the destinations determined by all Looks whose cycles
have not yet completed at time $t$, including cycles
awaiting Compute
We claim that
$H$ is non-increasing. Indeed, the destination determined by a \Look is the
midpoint of two positions occupied at that time, both of which belong
to $H$ at that time, so the destination belongs to $H$ as well; and
every movement is a rigid motion along a segment whose two endpoints
belong to $H$, so it never leaves $H$. In particular
$H(t)\subseteq H(0)=[a(0),b(0)]$ for every $t$, which is the
containment hypothesis of~\cite{CP}.

Executions are fair, so each robot performs infinitely many $\LCM$
cycles, and the adversary's freedom in $\ASY$ is exactly the one
allowed there: the destination carried into a $\M$ may have been
computed from a stale snapshot, and the two robots may be in flight
simultaneously. Under these hypotheses the gravitational algorithm
converges~\cite{CP}, so there is a point $l$ with
$a(t),b(t)\to l$, which is $\Phi_{\mathrm{Cv}}$.
\end{proof}

Lemma~\ref{lem:cv_in_ob_a} is the positive half of the boundary
certificate for $\mathrm{Cv}$; the negative half is empty, as noted
above. By Lemma~\ref{lem:boundary-certificate} we obtain
$\Prof(\mathrm{Cv})=\upa\OB^{A}=\mathcal Q$, as asserted in
Theorem~\ref{thm:principal-task-basis}.

%% file: New-Tasks.tex
\subsection{Expansion Preserving Gravity (EPG)}
\label{app:epg}

This subsection and Appendix~\ref{app:epgs} read their requirements off the
same sequence of configurations, the one already used for $\mathrm{SRO}$ in
Definition~\ref{def:sro} and for $\mathrm{VEC}$ in Definition~\ref{def:vec}.
Let
\[
  \Sab:=\{\,t\ge0 \mid \forall r\in\{a,b\}:\lnot\M_r(t)\,\}
\]
be the set of \emph{joint stops}, the times at which neither robot is inside a
nonzero $\M$ operation. List the configurations taken at the times of $\Sab$ in
chronological order, remove consecutive repetitions, and write the resulting
sequence as $C_0,C_1,C_2,\dots$ with $C_i=(a_i,b_i)$; thus
$C_0=\bigl(a(0),b(0)\bigr)$. Put
\[
  c_0:=\tfrac12\bigl(a(0)+b(0)\bigr),
  \qquad
  d_i:=b_i-a_i .
\]
Since $\MoveSet{r}$ collects only the open intervals of the nonzero $\M$
operations of $r$, at every $t\in\Sab$ each robot stands either at its initial
position or at the landing point of its last completed nonzero $\M$; in
particular $0\in\Sab$ and the sequence is well defined.

\begin{remark}[Preserving the center of gravity forces simultaneity]
\label{rem:epg-forces-simultaneity}
Suppose $\tfrac12(a_{i+1}+b_{i+1})=c_0=\tfrac12(a_i+b_i)$. Then
$a_{i+1}+b_{i+1}=a_i+b_i$, so $b_{i+1}=b_i$ would force $a_{i+1}=a_i$ and hence
$C_{i+1}=C_i$. Consecutive terms of the sequence being distinct, every step of
the sequence therefore moves \emph{both} robots. Equivalently, a nonzero $\M$
performed by one robot while the other is stationary across the surrounding
joint stops displaces the center of gravity by half that displacement.
\end{remark}

\begin{definition}[Expansion Preserving Gravity]
\label{def:epg}
Two robots $a,b$ start from distinct positions. The first condition (one
expansion) requires the joint-stop sequence to consist of exactly the two terms
$C_0$ and $C_1$. The second condition (expansion preserving gravity) requires
the step from $C_0$ to $C_1$ to keep the center of gravity at $c_0$ and to
double the segment. The third condition (stopping) requires both robots to
remain stationary from some time on.
\begin{align*}
  \Phi_{\mathrm{EPG}}\equiv\bigl[\ 
    &\{\,(C_i)_{i\ge0}\ \text{has exactly the two terms}\ C_0,C_1\,\}\\
    &\text{and}\ \bigl\{\,\tfrac12(a_1+b_1)=c_0\ \text{and}\ d_1=2d_0\,\bigr\}\\
    &\text{and}\ \{\,\forall r\in\{a,b\}:\Stopped{r}\,\}\ \bigr]
\end{align*}
\end{definition}

The two equalities determine $C_1$ outright: $a_1=c_0-d_0$ and $b_1=c_0+d_0$,
that is, each robot ends at twice its initial distance from $c_0$ 
each robot ends at twice its initial distance from $c_0$. 
The predicate mentions only the trajectory and $\gamma(0)$: no
floor function, no reference to the local coordinate systems, and no reference
to activation counts, since the terms of $(C_i)$ are determined by the
configurations at the joint stops and not by how many cycles produced them.

\begin{lemma}[Bounds for $\mathrm{EPG}$]
\label{lem:epg-bounds}
Under the assumptions of this paper,
\[
  \mathrm{EPG}\in\Sol(\LU^F),
  \qquad
  \mathrm{EPG}\notin\Sol(\OB^F)\cup\Sol(\LU^R).
\]
\end{lemma}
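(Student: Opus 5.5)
The proof splits into three parts: one algorithm for the positive bound and two impossibility arguments.

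\emph{Solvability in $\LU^F$.} By Theorem~\ref{thm:equ-FCOMF-by-FSTAF}, it suffices to give an $\FS^F$ algorithm. Each robot starts in state $0$. In state $0$ it moves to the point at distance $\tfrac12\|\bar r-r\|$ from itself, directed away from the peer, and switches to state $1$. In state $1$ it moves to its own position, a zero Move, and stays in state $1$. The destination is a fixed multiple of the observed vector, so it needs no unit or orientation. Under $F$ both robots execute the first round simultaneously. Their Move intervals may overlap or not, but no joint stop with a changed configuration occurs until both have landed. This needs a check: a robot that lands early while the other is still moving lies outside $\Sab$, because the other robot is inside a nonzero Move. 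The configuration at the next joint stop is therefore $(c_0-d_0,\,c_0+d_0)$, which is exactly $C_1$. After that, all Moves are zero, so $(C_i)=(C_0,C_1)$ and both robots are $\Stopped{}$.

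\emph{Impossibility in $\OB^F$.} Use the fully synchronous execution of Lemma~\ref{lem:rigidA1}, in which both robots apply the same pair $(\lambda,\theta)$ in every round. If $\lambda=0$, there is no $C_1$. If $\lambda>0$ and the robots are distinct after round~1, then round~2 again produces a nonzero Move, which creates a third term or violates stopping. The only way the configuration could stay put is $d_1=0$. But a doubling step gives $d_1=2d_0\neq0$, so round~2 moves both robots to a new configuration and the sequence has at least three terms, violating the first condition. The one case needing care is when the round-2 Move lands back on the same configuration, so that no new term appears. The similarity $I-2\lambda R_\theta$ equals $-2I$ in this case, so it strictly changes the configuration.

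\emph{Impossibility in $\LU^R$.} Apply Lemma~\ref{lem:solo-drift}(2) with $K=R$: the fair execution with pairwise disjoint cycles either never moves, so $C_1$ is missing, or its first nonzero Move is solo, so the center of gravity shifts and $\tfrac12(a_1+b_1)\neq c_0$. That lemma already records that $\Phi_{\mathrm{EPG}}$ fails in both cases.

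The main obstacle is the joint-stop bookkeeping in the positive part. Rigid moves with possibly different durations under $F$ must not create an intermediate term of $(C_i)$. This relies on $\Sab$ excluding any time at which either robot is inside a nonzero Move.
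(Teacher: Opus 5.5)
Your proposal is correct and follows the paper's proof in all three parts: the same two-state $\FS^F$ algorithm transferred to $\LU^F$ by Theorem~\ref{thm:equ-FCOMF-by-FSTAF}; a fully synchronous $\OB^F$ execution whose round-$2$ inputs replay those of round~$1$, so that a doubling step forces a third joint-stop term; and Lemma~\ref{lem:solo-drift}(2) with $K=R$. There are two harmless slips to fix. First, the doubling step corresponds to $I-2\lambda R_\theta=2I$, i.e.\ $(\lambda,\theta)=(\tfrac12,\pi)$, not $-2I$, and the ``landing back'' case you guard against cannot occur, since any nonzero Move already changes the ordered configuration. Second, drop ``may overlap or not'': your joint-stop check relies on both Moves starting at the same instant under $F$, and with disjoint Move intervals the intermediate joint stop $(a_1,b_0)$ would shift the centre of gravity.
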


\begin{proof}
The positive statement is Lemma~\ref{lem:epg_in_lu_f}, and the two
negative ones are Lemmas~\ref{lem:epg_not_ob_f}
and~\ref{lem:epg_not_lu_r} below.
\end{proof}

\subsubsection*{Solvability}

\begin{algorithm}[t]
\caption{$\mathrm{EPG}$ for $\FS$ under \FSY{} (robot $r$)}
\label{alg:epg}
\textbf{Model and scheduler:} $\FS$ with $C=\{0,1\}$ and initial color $0$,
\FSY.\\
\textbf{$\LK$:} observe the peer $\bar r$ at the local coordinate $v$, and read
$s:=r.\lcolor$.\\
\textbf{$\CP$:}
\begin{algorithmic}[1]
  \If{$s=0$}
    \State $r.\des \gets -v/2$ \Comment{away from the observed midpoint, by the observed distance}
    \State $r.\lcolor \gets 1$
  \Else
    \State $r.\des \gets 0$
  \EndIf
\end{algorithmic}
\textbf{$\M$:} move to $r.\des$.
\end{algorithm}

\begin{lemma}
\label{lem:epg_in_lu_f}
$\mathrm{EPG}\in\Sol(\LU^{F})$, as witnessed by Algorithm~\ref{alg:epg} in
$\FS^{F}$.
\end{lemma}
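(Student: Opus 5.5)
The plan is to verify directly that Algorithm~\ref{alg:epg} is correct in $\FS^F$, and then lift the result to $\LU^F$ via the collapse of Theorem~\ref{thm:equ-FCOMF-by-FSTAF}. Alternatively, we can simply note that an $\LU$ robot can run the $\FS$ algorithm by ignoring the peer's light, as in Proposition~\ref{prop:basic-inclusions}. Either way it suffices to show that every admissible fully synchronous execution of Algorithm~\ref{alg:epg} from an arbitrary $I\in\mathcal I_\star$ satisfies $\Phi_{\mathrm{EPG}}$.

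\textbf{Round 1.} Under \FSY\ both robots Look at the initial configuration, both in colour $0$. Robot $a$ sees $b$ at local coordinate $v_a$. Its destination $-v_a/2$ is, in global terms, $a(0)-\tfrac12(b(0)-a(0))=c_0-d_0$. The point here is that this is the point obtained by moving away from the peer by half the observed distance; it uses only the observed vector, so it is invariant under the rotation and scaling allowed by variable disorientation and needs no common unit. Symmetrically, $b$ goes to $c_0+d_0$. Both robots switch to colour $1$.

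\textbf{Joint stops.} Both Moves are nonzero because $d_0\neq0$. The two Move intervals lie within round $1$ and may have different timings. So I must check that any joint stop inside round $1$, occurring before both robots have landed, is either the initial configuration or the final one. A joint stop with one robot landed and the other still moving is impossible, since that robot is inside a nonzero Move. If one robot's Move has not yet started while the other is in flight, the time is again not a joint stop. Hence the only configurations at joint stops are $(a(0),b(0))$ and $(c_0-d_0,\,c_0+d_0)$. This needs the round structure: both Moves lie between the round's common Look instant and its end. I expect this timing bookkeeping, settled against the formal definitions in Appendix~\ref{app:model-conventions-predicate}, to be the only delicate point.

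\textbf{Later rounds.} From round $2$ on, both robots have colour $1$, so each computes destination $0$, a zero Move. Thus $\MoveSet{a}$ and $\MoveSet{b}$ consist of exactly one interval each, every time after round $1$ lies in $\Sab$, and the configuration stays at $C_1$ forever. The sequence is therefore exactly $C_0,C_1$. We have $\tfrac12(a_1+b_1)=c_0$ and $d_1=2d_0$, and $\Stopped{a}$ and $\Stopped{b}$ hold with $T$ equal to the end of round $1$. Finally, fairness and time-divergence of the execution are supplied by the scheduler, and the predicate involves only the trajectory, so the choice of local frames in later rounds is irrelevant. This establishes $\Phi_{\mathrm{EPG}}$ and hence the lemma.
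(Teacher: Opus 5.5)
Your overall route is the paper's: verify Algorithm~\ref{alg:epg} directly in $\FS^F$ and transfer to $\LU^F$. The paper does the transfer via Theorem~\ref{thm:equ-FCOMF-by-FSTAF}; the trivial inclusion $\Sol(\FS^F)\subseteq\Sol(\LU^F)$ also suffices. Your round-1 computation and your treatment of later rounds are correct.

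The gap is in the joint-stop step. You flagged it as the delicate point, but resolved it incorrectly. You allow the two Move intervals of round~1 to have different timings, and then examine only two situations: one robot landed while the other is still moving, and one not yet started while the other is in flight. You omit the third situation: one robot has already landed and the other has not yet begun its Move. That instant is a joint stop, since each robot is outside its nonzero Move. Because $\MoveSet{r}$ is a union of open intervals, even touching intervals with $t_E(a,1)=t_B(b,1)$ produce one. Its configuration $(c_0-d_0,\,b_0)$ has midpoint $c_0-\tfrac14 d_0\neq c_0$. The joint-stop sequence would then have three terms, and the gravity condition would fail at $i=0$. This is exactly the solo-drift obstruction of Lemma~\ref{lem:solo-drift}(1), which is what makes EPG unsolvable in $\LU^R$. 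The fact you invoke, that both Moves lie between the round's common Look and its end, does not exclude this timing. So with the argument as written, the conclusion would be false.

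What you need is that under $\FSY$ the robots activated in a round perform their cycles synchronously, with identical phase times (Appendix~\ref{app:model-conventions-schduler}). Then both nonzero Moves of round~1 occupy one common interval $(t_B,t_E)$, and within round~1 the set $\MoveSet{a}\cup\MoveSet{b}$ is that single open interval. The joint stops of the round therefore carry only $C_0$ (up to $t_B$) and $C_1$ (from $t_E$ on). This is precisely the property the paper uses: ``under $\FSY$ the two Move operations occupy the same round'', and likewise in Lemma~\ref{lem:epgs_in_ob_f}. With this replacement for your case analysis, the rest of your proof goes through.
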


\begin{proof}
Both robots start with color $0$, so in round~$1$ both execute the first
branch. A robot at $p$ observing its peer at the local coordinate $v$ lands at
$p+L^{-1}(-v/2)=p-\tfrac12(p_{\bar r}-p_{r})$, that is, it moves away from the
peer by half the observed distance; the displacement is determined by the
snapshot alone and needs neither a unit of length nor chirality, so
Algorithm~\ref{alg:epg} is admissible in $\FS$. The two displacements are
$\tfrac12(a_0-b_0)$ and $\tfrac12(b_0-a_0)$, opposite and of equal norm, so
after round~$1$
\[
  \tfrac12(a_1+b_1)=\tfrac12(a_0+b_0)=c_0,
  \qquad
  d_1=2d_0 ,
\]
which is the second condition. Both displacements are nonzero because
$d_0\neq0$, and under \FSY{} the two $\M$ operations occupy the same round, so
no time of $\Sab$ lies strictly between the two boundaries of round~$1$ and
$C_1=(a_1,b_1)$ is the term of the sequence following $C_0$.

Both robots end round~$1$ with color $1$ and thereafter execute the second
branch, so every later $\M$ is zero. The configuration never changes again, the
joint-stop sequence consists of exactly $C_0$ and $C_1$, and both robots are
permanently stationary. Hence $\Phi_{\mathrm{EPG}}$ holds, and $\FS^F$ is the class $\LU^F$ by Theorem~\ref{thm:equ-FCOMF-by-FSTAF}.
\end{proof}

\subsubsection*{Impossibility}

\begin{lemma}
\label{lem:epg_not_ob_f}
$\mathrm{EPG}\notin\Sol(\OB^{F})$.
\end{lemma}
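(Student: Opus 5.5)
The plan is to use the fully synchronous execution of Lemma~\ref{lem:rigidA1}. Fix an arbitrary $\OB$ algorithm $\mathcal A$, activate both robots in every round, and supply normalized frames at every \Look. Because an $\OB$ robot has no state, both robots apply one and the same pair $(\lambda,\theta)$ in every round. Each robot's displacement is therefore $\lambda R_\theta$ applied to the vector towards its peer. The two displacements are opposite, so the midpoint $c_0$ is invariant and $d$ is mapped each round by the fixed similarity $I-2\lambda R_\theta$, with ratio $c_2$ (Lemma~\ref{lem:ratios}(ii)). Under \FSY{} both \Move{} operations occupy the same round. Hence, as in the proof of Lemma~\ref{lem:epg_in_lu_f}, the joint-stop sequence is the sequence of round-boundary configurations with consecutive repetitions removed.

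The argument then splits into cases on $(\lambda,\theta)$.
\begin{itemize}
\item If $\lambda=0$, no nonzero \Move{} ever occurs. The sequence $(C_i)$ is just $C_0$, so the first condition of Definition~\ref{def:epg} (exactly two terms) fails.
\item If $\lambda>0$ and $C_1$ is not the required configuration, the second condition fails. Here $C_1$ is the configuration after round~$1$, which differs from $C_0$ whenever $I-2\lambda R_\theta\neq I$; since $\lambda>0$ this always holds. The required configuration is $a_1=c_0-d_0$, $b_1=c_0+d_0$, that is, $I-2\lambda R_\theta=2I$.
\item The remaining case is $\lambda>0$ with $I-2\lambda R_\theta=2I$, forcing $\lambda=\tfrac12$ and $\theta=\pi$. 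In round~$2$ the robots are again at distinct positions, since $d_1=2d_0\ne0$. The same rule applies, giving $d_2=4d_0\neq d_1$. So a third term $C_2\ne C_1$ appears, the one-expansion condition fails, and the stopping condition fails too, because every round produces a nonzero move.
\end{itemize}
In every case the execution is fair and time-divergent, so $\Phi_{\mathrm{EPG}}$ fails and no $\OB$ algorithm solves EPG under \FSY.

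The only delicate point is the collision case. If $c_2=0$, the robots coincide after round~$1$, and the rule at coincidence is then a separate output, as discussed in Lemma~\ref{lem:keep-coincident}. This case is harmless: $c_2=0$ already gives $d_1=0\neq2d_0$, so the second condition fails at round~$1$ regardless of what happens afterwards. I expect no real obstacle. The main care point is justifying that round boundaries are exactly the joint-stop terms, which should be cited from the \FSY{} argument of Lemma~\ref{lem:epg_in_lu_f} rather than reproved.
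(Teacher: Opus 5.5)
Your proposal is correct and follows essentially the paper's argument: in a fully synchronous execution, the round-1 local input is replayed in round 2 (possible because $d_1=2d_0\neq0$), so any algorithm that achieves the doubling step must move again and create a third joint-stop term. The paper phrases this with an abstract output function $f(v)$ and arbitrary round-1 frames rather than your normalized pair $(\lambda,\theta)$ with the explicit identification $(\lambda,\theta)=(\tfrac12,\pi)$, but the two versions are equivalent.
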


\begin{proof}
Let $\mathcal A$ be any deterministic algorithm of $\OB$ under \FSY. An $\OB$
robot has no color and no memory, so its output at an activation is a function
of the observed local position $v$ of the peer alone; write it $f(v)$. The
adversary may give a robot a frame in which the peer appears at any prescribed
nonzero local position (Lemma~\ref{lem:norm}(1)). Fix frames for round~$1$ and
let $v_a,v_b$ be the resulting local inputs.

If $C_1$, the configuration at the end of round~$1$, violates
$\tfrac12(a_1+b_1)=c_0$ or $d_1=2d_0$, the second condition fails, provided
$C_1\neq C_0$. If $C_1=C_0$, both $f(v_a)$ and $f(v_b)$ are zero; repeating the
same frames in every round leaves the configuration unchanged forever, the
joint-stop sequence has the single term $C_0$, and the first condition fails.

So assume $C_1\neq C_0$ and the second condition holds at $i=0$. Then
$d_1=2d_0\neq0$, so in round~$2$ the adversary may choose a frame for $a$ with
$L'_a(b_1-a_1)=v_a$ and one for $b$ with $L'_b(a_1-b_1)=v_b$. Both robots
receive the local input they received in round~$1$, and by determinism they
output $f(v_a)$ and $f(v_b)$ again. At least one of these is nonzero, since
$C_1\neq C_0$, so the configuration at the end of round~$2$ differs from $C_1$
and contributes a third term to the joint-stop sequence. The first condition
fails.
\end{proof}

\begin{lemma}
\label{lem:epg_not_lu_r}
$\mathrm{EPG}\notin\Sol(\LU^{R})$.
\end{lemma}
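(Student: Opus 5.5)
The plan is to refute every deterministic $\LU^R$ algorithm with one admissible $\RSY$ execution. The execution is the alternating singleton schedule $\{a\},\{b\},\{a\},\dots$ with an empty fully synchronous prefix, which Lemma~\ref{lem:altA} certifies as admissible. Under this schedule the cycles of the two robots are pairwise disjoint in time. This is exactly the setting of Lemma~\ref{lem:solo-drift}(2). The claim for $\mathrm{EPG}$ is recorded there as a consequence, so the proof amounts to verifying that the argument of that lemma applies to $\LU$ robots under $\RSY$.

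The steps, in order, are as follows.
\begin{enumerate}
\item Fix an arbitrary $\LU$ algorithm $\mathcal A$ and an initial instance. Run the alternating schedule with unit-duration cycles, each cycle completed before the next one starts. This execution is fair and time-divergent, and it belongs to $R$.
\item Suppose no nonzero $\M$ ever occurs. Then the configuration is constant and the joint-stop sequence is just $C_0$, so the first condition of Definition~\ref{def:epg} (exactly two terms) fails.
\item Otherwise, consider the first nonzero $\M$. It is performed by a single robot $r$, while $\bar r$ is at rest before, during and after that cycle, because cycles do not overlap. The move ends at a joint stop.
\item The next joint-stop configuration $C_1$ therefore differs from $C_0$ in exactly one endpoint. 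By Lemma~\ref{lem:solo-drift}(1), equivalently Remark~\ref{rem:epg-forces-simultaneity}, we get $\tfrac12(a_1+b_1)\ne c_0$, so the second condition of Definition~\ref{def:epg} fails at $i=0$.
\item Both violations concern only $C_0$ and $C_1$, so they are finite-prefix violations. They persist under the fair alternating continuation whatever the algorithm does later.
\end{enumerate}

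No property of lights is used anywhere: the argument never inspects colors. It relies only on temporal separation of the two robots' moves, so it holds verbatim for $\LU$.

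The one point to check carefully is the joint-stop bookkeeping. We must confirm that zero Moves create no new terms in the joint-stop sequence, and that the landing of $r$'s first nonzero $\M$ really yields a term $C_1$ distinct from $C_0$ with $\bar r$ unmoved. Both follow from the definitions of $\MoveSet{r}$ and $\Sab$ in Appendix~\ref{app:model-conventions-predicate}. A nonzero move is an open interval whose endpoint lies in $\Sab$, and at that endpoint $\bar r$ is still at its initial position because its own cycles contain only zero Moves up to that time. Once this is verified, the remaining steps are immediate.
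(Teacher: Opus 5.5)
Your proposal is correct and matches the paper, whose proof is just an application of Lemma~\ref{lem:solo-drift}(2) with $K=R$; you have re-derived that lemma's argument for the alternating $\RSY$ schedule. One imprecision: the no-move case is not a finite-prefix violation, since it is witnessed only by the full infinite alternating execution, but that is exactly the execution you construct, so nothing needs to change.
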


\begin{proof}
Apply Lemma~\ref{lem:solo-drift}(2) with $K=R$.
\end{proof}

\begin{corollary}
\label{cor:epg_profile}
$\Prof(\mathrm{EPG})=\upa\LU^{F}$.
\end{corollary}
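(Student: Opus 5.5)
The statement is Corollary~\ref{cor:epg_profile}, $\Prof(\mathrm{EPG})=\upa\LU^F$. I would obtain it directly from Lemma~\ref{lem:boundary-certificate} with $U=\upa\LU^F$. Since $\LU^F$ is the greatest element of $\mathcal Q$, we have $\upa\LU^F=\{\LU^F\}$. Hence $\Min(U)=\{\LU^F\}$ and $\Max(\mathcal Q\setminus U)$ is the set of maximal elements of $\mathcal Q\setminus\{\LU^F\}$. By the cover table (Table~\ref{tab:signature-cover-certificate}), the lower covers of $\LU^F$ are exactly $\OB^F$ and $\LU^R$. This matches the entry of Table~\ref{tab:PP13}.

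The next step checks that these two classes really are the maximal elements of $\mathcal Q\setminus\{\LU^F\}$. In a finite poset with a greatest element, the maximal elements of the complement of the top are exactly the lower covers of the top. I would justify that the cover list is complete through Theorem~\ref{thm:six-signature-landscape}: any class $Z\ne\LU^F$ has signature properly contained in $\mathcal B_6$. It therefore lies below a signature that is covered by the full one. Among the thirteen rows, only $\sigma(\OB^F)$ and $\sigma(\LU^R)$ are immediately below $\sigma(\LU^F)$.

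Finally I would invoke Lemma~\ref{lem:epg-bounds}. Its positive part $\mathrm{EPG}\in\Sol(\LU^F)$ comes from Lemma~\ref{lem:epg_in_lu_f}, which uses the $\FS^F$ algorithm together with the collapse in Theorem~\ref{thm:equ-FCOMF-by-FSTAF}. Its negative parts $\mathrm{EPG}\notin\Sol(\OB^F)$ and $\mathrm{EPG}\notin\Sol(\LU^R)$ come from Lemmas~\ref{lem:epg_not_ob_f} and~\ref{lem:epg_not_lu_r}. Together these satisfy both halves of the boundary certificate, which gives $\Prof(\mathrm{EPG})=\{\LU^F\}=\upa\LU^F$.

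There is no real obstacle, since the substantive work is in the cited lemmas. The only point needing care is identifying $\Max(\mathcal Q\setminus\upa\LU^F)$ correctly. For this I would rely on the signature-order certificate rather than on an unproved covering claim.
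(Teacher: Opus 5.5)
Your proposal is correct and follows the paper's proof. Both apply Lemma~\ref{lem:boundary-certificate}, with Lemma~\ref{lem:epg_in_lu_f} on the positive side and Lemmas~\ref{lem:epg_not_ob_f} and~\ref{lem:epg_not_lu_r} at the two maximal classes $\OB^F,\LU^R$ outside $\upa\LU^F$; the only difference is that the paper reads $\Max(\mathcal Q\setminus\upa\LU^F)=\{\OB^F,\LU^R\}$ directly off Table~\ref{tab:PP13}, whereas you derive it from the lower covers of the top element via Theorem~\ref{thm:six-signature-landscape}, which is a sound and slightly more explicit check.
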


\begin{proof}
Table~\ref{tab:PP13} gives
$\Max(\mathcal Q\setminus\upa\LU^{F})=\{\OB^{F},\LU^{R}\}$. Apply
Lemma~\ref{lem:boundary-certificate} with
Lemma~\ref{lem:epg_in_lu_f} on the positive side and
Lemmas~\ref{lem:epg_not_ob_f} and~\ref{lem:epg_not_lu_r} on the negative side.
\end{proof}

\subsection{Perpetual Expansion Preserving Gravity (\texorpdfstring{EPG$^{*}$}{EPG*})}
\label{app:epgs}

\begin{definition}[Perpetual Expansion Preserving Gravity]
\label{def:epgs}
Two robots $a,b$ start from distinct positions, and $\Sab$, the sequence
$(C_i)_{i\ge0}$, $c_0$ and $d_i$ are as in Appendix~\ref{app:epg}. The first
condition (perpetuity) requires the sequence to be infinite. The second
condition (expansion preserving gravity) requires every step of the sequence to
keep the center of gravity at $c_0$ and to double the segment.
\begin{align*}
  \Phi_{\mathrm{EPG}^{*}}\equiv\bigl[\ 
    &\{\,(C_i)_{i\ge0}\ \text{is infinite}\,\}\\
    &\text{and}\ \Bigl\{\,\forall i\ge0:\ 
       \tfrac12(a_{i+1}+b_{i+1})=c_0\ \text{and}\ d_{i+1}=2d_i\,\Bigr\}\ \bigr]
\end{align*}
\end{definition}

The two conditions give $C_i=\bigl(c_0-2^{i-1}d_0,\ c_0+2^{i-1}d_0\bigr)$ for
$i\ge1$, so the configuration at the $i$-th joint stop of the sequence is
determined by $\gamma(0)$ alone. $\mathrm{EPG}$ is the one-shot version:
Definition~\ref{def:epg} keeps the second condition at $i=0$ and replaces
perpetuity by a single expansion followed by permanent stopping.

\begin{remark}[Relation to $\mathrm{CGE}$ and $\mathrm{CGE}^{*}$]
\label{rem:epg-vs-cge}
Center of Gravity Expansion and its perpetual variant
\cite{BuFKPSW22,FSW19} prescribe each landing point by a coordinatewise floor,
which makes the expansion irregular and moves the center of gravity. That
formulation cannot be carried over to the present model. Evaluated on a lattice
common to the two robots, as in the global formula of \cite{BuFKPSW22}, the
destination is not computable under variable disorientation, so the profile
would be empty; evaluated on each robot's own lattice, the predicate refers to
the local frames chosen by the adversary and is therefore no longer a predicate
on the trajectory, contrary to
Section~\ref{subsec:problems-relationships}. We keep the exact doubling, drop
the floor, and recover the same boundary by requiring the center of gravity to
be preserved at every joint stop. By
Remark~\ref{rem:epg-forces-simultaneity} that requirement rules out a solo
expansion, which is the role the floor played in \cite{FSW19}.
\end{remark}

\subsubsection*{Solvability}

\begin{algorithm}[t]
\caption{$\mathrm{EPG}^{*}$ for $\OB$ under \FSY{} (robot $r$)}
\label{alg:epgs}
\textbf{Model and scheduler:} $\OB$, \FSY.\\
\textbf{$\LK$:} observe the peer $\bar r$ at the local coordinate $v$.\\
\textbf{$\CP$:}
\begin{algorithmic}[1]
  \State $r.\des \gets -v/2$ \Comment{away from the observed midpoint, by the observed distance}
\end{algorithmic}
\textbf{$\M$:} move to $r.\des$.
\end{algorithm}

\begin{lemma}
\label{lem:epgs_in_ob_f}
$\mathrm{EPG}^{*}\in\Sol(\OB^{F})$, as witnessed by Algorithm~\ref{alg:epgs}.
\end{lemma}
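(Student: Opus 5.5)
We prove the lemma by a round-by-round analysis of Algorithm~\ref{alg:epgs} under \FSY{}. The key observation is that the algorithm maps every configuration to its homothetic image of ratio $2$ about the current midpoint. We then check that the joint-stop sequence consists exactly of these images.

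\textbf{Step 1 (one round).} Fix a round $j\ge1$ and let the configuration at its start be $(a_{j-1},b_{j-1})$ with $a_{j-1}\neq b_{j-1}$. Robot $a$ observes its peer at local coordinate $v=L_a(b_{j-1}-a_{j-1})$ and moves to $-v/2$. In global coordinates this is the landing point
\[
  a_{j-1}-\tfrac12(b_{j-1}-a_{j-1}).
\]
Symmetrically, $b$ lands at $b_{j-1}+\tfrac12(b_{j-1}-a_{j-1})$. The destination is a fixed multiple of the observed vector, so it needs neither a common unit of length nor a common orientation. It therefore behaves identically under every frame the VD adversary may choose, and chirality is not even used. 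The two displacements are opposite and of equal norm, so the midpoint is unchanged and $d_j=2d_{j-1}$.

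\textbf{Step 2 (induction).} By induction, every round starts with distinct robots, since $\|d_j\|=2^j\|d_0\|>0$, and the midpoint stays at $c_0$. Hence every round makes both robots perform a nonzero $\M$.

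\textbf{Step 3 (joint stops, the main obstacle).} We must identify the joint-stop sequence $(C_i)$ of Appendix~\ref{app:epg} with the round-boundary configurations. Under \FSY{} both robots' $\M$ operations lie within the same round. The adversary may desynchronize their ending times, so one robot may land while the other is still moving.

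At a time when exactly one robot has stopped, the other is still inside a nonzero $\M$. Such a time is therefore not in $\Sab$ by definition. The only times in $\Sab$ during a round are those when neither robot is moving: before both start and after both end. This is the same argument used in Lemma~\ref{lem:epg_in_lu_f}.

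Moreover, the configuration is constant between the end of round $j$ and the first motion of round $j+1$. Hence the maximal subintervals of $\Sab$ carry exactly the configurations $(a_j,b_j)$, with consecutive ones distinct by Step 2. So $C_i=(a_i,b_i)$ for all $i$.

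\textbf{Step 4 (conclusion).} Fairness, together with time-divergence, gives infinitely many rounds, so the sequence is infinite. Step 2 then gives both conditions of Definition~\ref{def:epgs} for every $i$, which proves the lemma.
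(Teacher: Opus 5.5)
Your proof is correct and follows essentially the same route as the paper: an induction over rounds showing that each \FSY{} round doubles the segment about the fixed midpoint $c_0$, followed by the observation that no joint stop falls strictly inside a round, so the joint-stop sequence is exactly the sequence of round-boundary configurations. The only difference is that you allow the two Moves of a round to end at different times, which relies implicitly on their common start under synchronous activation, whereas the paper simply takes the two Move intervals to coincide; both readings give the same conclusion.
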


\begin{proof}
The destination is computed from the current snapshot alone and requires
neither a unit of length nor chirality, so the algorithm is admissible in
$\OB$. We show by induction on $k\ge0$ that at the start of round $k+1$ the two
robots are stationary with no pending operation in the configuration
\[
  \Gamma_k:=\bigl(c_0+2^{k}(a_0-c_0),\ c_0+2^{k}(b_0-c_0)\bigr).
\]
This holds for $k=0$. Assume it for $k$ and let $\delta:=2^{k}d_0\neq0$ be the
distance. Under \FSY{} both robots are activated in round $k+1$ and compute
from the same configuration; each moves away from the peer by $\delta/2$, so
the two displacements are opposite and of equal norm. The midpoint is therefore
unchanged and the difference vector is doubled, which is $\Gamma_{k+1}$, and
both $\M$ operations are nonzero since $\delta\neq0$.

The two $\M$ operations of a round occupy the same interval, so no time of
$\Sab$ lies strictly between consecutive round boundaries at which a Move
occurs, and the joint-stop sequence is exactly $(\Gamma_k)_{k\ge0}$. It is
infinite, consecutive terms are distinct because $\delta\neq0$, and every step
satisfies $\tfrac12(a_{k+1}+b_{k+1})=c_0$ and $d_{k+1}=2d_k$. Hence
$\Phi_{\mathrm{EPG}^{*}}$ holds.
\end{proof}

\subsubsection*{Impossibility}

\begin{lemma}
\label{lem:epgs_not_lu_r}
$\mathrm{EPG}^{*}\notin\Sol(\LU^{R})$.
\end{lemma}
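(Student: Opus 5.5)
The statement $\mathrm{EPG}^{*}\notin\Sol(\LU^{R})$ is exactly what Lemma~\ref{lem:solo-drift}(2) yields with $K=R$. I would prove it directly along the lines of that lemma, so the argument does not depend on anything beyond the definitions of the joint-stop sequence and of $\RSY$.

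Fix an arbitrary deterministic $\LU$ algorithm $\mathcal A$ and any initial instance in $\mathcal I_\star$. I will use the $\RSY$ execution with an empty fully synchronous prefix, that is, alternating singleton activations $\{a\},\{b\},\{a\},\dots$. Each singleton round is one complete $\LCM$ cycle of unit duration, and the next cycle starts only after the previous one has finished. This execution is admissible for $R$, fair, and time-divergent; the adversary may pick the local frames arbitrarily, for instance normalized ones.

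The argument then splits into two cases.

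\emph{Case 1: no nonzero $\M$ ever occurs.} The robots stay at their distinct initial positions forever. Hence $\Sab=[0,\infty)$, the joint-stop sequence is the single term $C_0$, and the perpetuity condition of Definition~\ref{def:epgs} fails.

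\emph{Case 2: some nonzero $\M$ occurs.} Consider the first one, performed by robot $r$ in some round. Its peer $\bar r$ is not activated in that round and has no pending operation, so $\bar r$ stays at its initial position throughout. The times just before and just after this $\M$ are joint stops; the configuration at the earlier one is $C_0$, and the configuration at the later one is $C_1$, which differs from $C_0$ only in the position of $r$. By Lemma~\ref{lem:solo-drift}(1), or directly by Remark~\ref{rem:epg-forces-simultaneity}, $\tfrac12(a_1+b_1)=c_0+\tfrac12(r_1-r_0)\neq c_0$. So the expansion condition fails at $i=0$.

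The only point needing care is that $C_1$ really is the configuration immediately after this solo move, with no joint stop between the two robots landing. This holds because cycles are serialized: the open interval of the nonzero $\M$ is the only non-joint-stop time up to its end, so the next term after $C_0$ is read at its endpoint. In both cases $\Phi_{\mathrm{EPG}^{*}}$ fails on an admissible fair execution, so no $\LU^R$ algorithm solves $\mathrm{EPG}^{*}$.
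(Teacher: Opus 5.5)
Your proof is correct and matches the paper's, which simply applies Lemma~\ref{lem:solo-drift}(2) with $K=R$. That lemma's proof uses the same alternating singleton $\RSY$ execution with an empty fully synchronous prefix and the same two cases: either no nonzero $\M$ occurs and perpetuity fails, or the first nonzero $\M$ is solo, shifts the center of gravity, and breaks the expansion condition at $i=0$.
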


\begin{proof}
Apply Lemma~\ref{lem:solo-drift}(2) with $K=R$.
\end{proof}

Lemma~\ref{lem:epgs_not_lu_r} is the negative half of
Lemma~\ref{lem:six-bounds}\,(\ref{lem:six-cge-star-bounds}), whose positive half is
Lemma~\ref{lem:epgs_in_ob_f}. Since
$\Max(\mathcal Q\setminus\upa\OB^{F})=\{\LU^{R}\}$,
Lemma~\ref{lem:boundary-certificate} gives
$\Prof(\mathrm{EPG}^{*})=\upa\OB^{F}$, as already recorded in
Theorem~\ref{thm:six-principal-separators}.

\subsection{Oscillating Points with One Move per Robot per Phase (OSPO)}
\label{app:osp}

We introduce $\mathrm{OSPO}$ (Oscillating Points with One Move
per Robot per Phase), a strengthened version of
$\mathrm{OSP}$~\cite{DFPSY}.
The first approach phase starts at time~0, all phase boundaries
are joint stops, and each robot completes exactly one nonzero
Move in each phase.


\begin{definition}[Oscillating Points with One Move
per Robot per Phase]
\label{def:osp}
Two robots $a,b$ start from distinct positions, and $d(t)$ is their
distance as in Definition~\ref{def:mcv}.
For $r\in\{a,b\}$, let $m_r(t)$ denote the number of nonzero
Move operations completed by $r$ by time $t$.

The task $\mathrm{OSPO}$ requires the robots to remain at distinct
positions throughout the execution and to alternate approach and
retreat forever, starting with approach at time $0$.
Each phase is delimited by joint stops, and each robot completes
exactly one nonzero Move during that phase.
More precisely, there is a strictly increasing sequence
$(t_j)_{j\ge0}$ of joint stops, with $t_0=0$ and $t_j\to\infty$,
such that on each interval $[t_{2i},t_{2i+1}]$ the distance is
monotonically non-increasing and decreases overall, and on each
interval $[t_{2i+1},t_{2i+2}]$ it is monotonically non-decreasing
and increases overall.
The move-count condition is
$m_r(t_{j+1})-m_r(t_j)=1$ for every $j\ge0$ and $r\in\{a,b\}$.

\begin{align*}
\Phi_{\mathrm{OSPO}}\equiv{}
  &\{\,\forall t\ge0:\ d(t)>0\,\}\\
  &{}\land\exists\,(t_j)_{j\ge0}\subseteq\Sab:\ \Bigl[
    t_0=0,\quad t_j\to\infty,\\
  &\qquad \forall j\ge0:\ t_j<t_{j+1},\\
  &\qquad \forall j\ge0\ \forall r\in\{a,b\}:
       m_r(t_{j+1})-m_r(t_j)=1,\\
  &\qquad \forall i\ge0:\ \Bigl(
       d(t_{2i+1})<d(t_{2i})\\
  &\qquad\qquad {}\land
       \forall h,h'\in[t_{2i},t_{2i+1}]:
       \bigl(h<h'\Rightarrow d(h')\le d(h)\bigr)\\
  &\qquad\qquad {}\land d(t_{2i+2})>d(t_{2i+1})\\
  &\qquad\qquad {}\land
       \forall h,h'\in[t_{2i+1},t_{2i+2}]:
       \bigl(h<h'\Rightarrow d(h')\ge d(h)\bigr)
       \Bigr)\Bigr].
\end{align*}
\end{definition}

The specification $\Phi_{\mathrm{OSPO}}$ combines safety constraints
with a requirement of infinite progress.
Collision freedom is violated as soon as $d(t)=0$ for some $t$.
The move-count requirement also admits finite witnesses of failure:
since $t_0=0$ and each phase contains exactly one completed nonzero
Move by each robot, every satisfying execution must have
$m_a(t_j)=m_b(t_j)=j$ and hence
$|m_a(t)-m_b(t)|\le1$ for every $t\ge0$.
Thus, if one robot completes two nonzero Moves before the other
completes its first, no extension can satisfy $\Phi_{\mathrm{OSPO}}$.
After either a collision or such a move-count violation, any
admissible fair continuation witnesses failure.
By contrast, monotonicity of $d$ along a finite solo activation
interval does not by itself establish failure: such an interval
may lie within one phase, subject to the move-count requirement.
Requiring phase boundaries to be joint stops also prevents a
decrease and subsequent increase within a single simultaneous
Move from being counted as two phases, since there is no
intervening joint stop.

For infinite executions, the divergence $t_j\to\infty$ provides
another sufficient criterion for failure.
If, for some $T$, the restriction of $d$ to
$\Sab\cap[T,\infty)$ is monotone (non-increasing or non-decreasing),
then $\Phi_{\mathrm{OSPO}}$ fails.
Indeed, any witnessing sequence would eventually have both
a strictly decreasing phase and a strictly increasing phase
with endpoints in $\Sab\cap[T,\infty)$, contradicting that
monotonicity.
This includes the case where $d$ is eventually constant
at joint stops.

\subsubsection*{Impossibility}

\begin{lemma}
\label{lem:osp_not_ob_f}
$\mathrm{OSPO}\notin\Sol(\OB^F)$.
\end{lemma}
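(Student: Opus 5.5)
I will use the same device as in the proofs of Lemma~\ref{lem:vec_not_ob_f} and Lemma~\ref{lem:epg_not_ob_f}. Fix an arbitrary deterministic $\OB$ algorithm and run the fully synchronous execution of Lemma~\ref{lem:rigidA1}. In every round both robots receive the normalized input $e$ and apply the same pair $(\lambda,\theta)$. The difference vector is therefore multiplied each round by the fixed similarity $I-2\lambda R_\theta$, whose ratio is $c_2$ (Lemma~\ref{lem:ratios}(ii)). The joint stops of this execution are exactly the round boundaries together with the idle intervals, since both Moves of a round occupy the same interval. Hence the distance at successive joint stops is $d_0c_2^{\,k}$.

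I will split into cases on $\lambda$ and $c_2$.
\begin{itemize}
\item If $\lambda=0$, no robot ever moves. Then $m_a(t_1)-m_a(t_0)=0\neq1$, so the move-count condition fails; equivalently, $d$ is constant on $\Sab$ and the divergence criterion stated after Definition~\ref{def:osp} applies.
\item If $c_2=0$, the robots coincide at the end of round~$1$. Then $d(t)=0$ for some $t$, and collision freedom fails as a safety violation.
\item If $\lambda>0$ and $c_2>0$, the sequence $(d_0c_2^{\,k})_k$ is monotone: non-increasing if $c_2\le1$ and increasing if $c_2>1$. So the restriction of $d$ to $\Sab$ is monotone on the whole execution, because $d$ is constant on each idle interval. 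The criterion after Definition~\ref{def:osp} then rules out any witnessing sequence $(t_j)$, since one would need both a strictly decreasing and a strictly increasing phase with endpoints in $\Sab$.
\end{itemize}

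The one point needing care is that the criterion is applied to $\Sab$ rather than to all times. Within a single simultaneous Move the distance may dip and then rise, for example when $\theta\neq0$ and the segments rotate. This does not matter, because phase endpoints must be joint stops. The only joint stops are round boundaries and the closed intervals between Moves, on which $d$ takes the values $d_0c_2^{\,k}$.

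I also need to confirm that the execution is admissible and fair in $\OB^F$. This is immediate, since \FSY{} activates both robots every round and round durations can be taken to be one time unit, which gives time divergence.

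I expect the main obstacle to be justifying rigorously that $\Sab$ consists only of round-boundary intervals. This requires the two Moves of each round to be nonzero simultaneously whenever $\lambda>0$, so that no joint stop falls strictly inside a round. Because the robots are distinct, $\lambda>0$ makes both displacements nonzero, and \FSY{} lets me give them identical timing.
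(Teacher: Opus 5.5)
Your proposal is correct and is essentially the paper's proof. It uses the fully synchronous execution of Lemma~\ref{lem:rigidA1}, obtains a collision as a safety violation when $c_2=0$, and otherwise uses the monotonicity of the joint-stop distances $d_0c_2^{\,k}$ on $\Sab$ to exclude alternating phases via the criterion after Definition~\ref{def:osp}. The only difference is bookkeeping: the paper splits on whether a collision ever occurs, including a mid-round crossing, whereas your split on $\lambda$ and $c_2$ lets the monotonicity argument absorb such crossings and the case $\lambda=0$.
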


\begin{proof}
Fix any deterministic $\OB$ algorithm and take the fully synchronous
execution of Lemma~\ref{lem:rigidA1}, with normalized local frames at
every $\LK$ while the robots are distinct and with time-divergent
round timings.

If the two robots ever occupy the same point, the collision-freedom
condition of $\Phi_{\mathrm{OSPO}}$ fails at a finite time. This covers
$c_2=0$, where they coincide at the end of round~$1$, and also the
case in which they meet while travelling inside a round.

Otherwise $d(t)>0$ throughout. By Lemma~\ref{lem:rigidA1} every joint
stop of the execution carries one of the round-boundary distances
$d(k)=c_2^{\,k}d(0)$, so the restriction of $d$ to $\Sab$ is monotone.
The turning times required by $\Phi_{\mathrm{OSPO}}$ are joint stops,
so by the criterion stated after Definition~\ref{def:osp} their
distances cannot alternate between strict decrease and strict increase.
This admissible fair execution violates $\Phi_{\mathrm{OSPO}}$.
\end{proof}

\begin{lemma}
\label{lem:osp_not_fc_s}
$\mathrm{OSPO}\notin\Sol(\FC^S)$.
\end{lemma}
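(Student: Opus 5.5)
The plan follows the template of Lemmas~\ref{lem:sro-not-fc-s}, \ref{lem:het_not_fc_s} and~\ref{lem:vec_not_fc_s}: a fully synchronous prefix, followed by solo activations of one robot while the other is frozen.

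\textbf{Prefix.} Fix an arbitrary deterministic $\FC$ algorithm and start with a fully synchronous prefix, which is admissible under $\SSY$, using normalized frames. By Lemma~\ref{lem:rigidA2} the two robots carry a common color at every round boundary.

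If every color $\gamma$ in this common sequence has $\lambda_\gamma=0$, the execution never moves. Then $d$ is constant on $\Sab$, and the monotonicity criterion stated after Definition~\ref{def:osp} gives failure; equivalently, $m_a(t_1)-m_a(t_0)=1$ can never hold.

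Otherwise let $\gamma$ be the first color with $\lambda_\gamma>0$, and truncate the prefix at the boundary where both robots carry $\gamma$. At that point the configuration is still the initial one, $m_a=m_b=0$, and neither robot has a pending operation.

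\textbf{Freezing $b$.} Now activate $a$ alone twice; this uses Lemma~\ref{lem:solo}. Since $b$ is not activated, its light stays $\gamma$. An $\FC$ robot reads only its peer's light, so by Lemma~\ref{lem:single-rule}(b) both activations of $a$ apply the same rule $(\lambda_\gamma,\theta_\gamma)$, and $\lambda_\gamma>0$ makes each of them a nonzero $\M$.

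There are two cases. If the first move already makes $a$ land on $b$ (solo ratio $c_1(\gamma)=0$), then $d=0$ and collision freedom fails. Otherwise the robots stay distinct and $a$ completes a second nonzero $\M$ while $m_b=0$. This is a finite move-count violation, because every satisfying execution needs $|m_a(t)-m_b(t)|\le1$, as noted after Definition~\ref{def:osp}. Either way the failure cannot be repaired by any extension, so we complete pending cycles and append a fair, time-divergent alternating tail.

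\textbf{Main obstacle.} The delicate step is making sure the frozen-peer solo moves really use a single rule. This needs three things: $b$ has no pending operation, which holds under $\SSY$ at round boundaries; $a$'s frame normalization applies, which needs distinct positions and is handled by the collision case; and the color $a$ writes to its own light does not feed back into its input, which is where $\FC$ (unlike $\FS$) is used. If Lemma~\ref{lem:single-rule} covers exactly this situation, the move-count violation is immediate and no geometric computation is needed.
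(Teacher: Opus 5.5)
Your proposal is correct and matches the paper's proof essentially step for step: a zero-output fully synchronous prefix (the never-moving case failing outright), then two solo cycles of $a$ with $b$ frozen at color $\gamma$, so that $a$ applies the same nonzero rule twice and either collides with $b$ or reaches $m_a=2$, $m_b=0$, which no extension can repair. One harmless imprecision: when $\theta_\gamma=0$ and $\lambda_\gamma>1$ the first Move passes through $b$ without landing on it, so ``the robots stay distinct'' is not literally true there; but that is itself a collision, and the distinctness at the second Look, which is all that normalization needs, still holds.
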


\begin{proof}
Fix any deterministic $\FC$ algorithm.
Use normalized local frames and activate both robots synchronously
as long as their computed displacements are zero.
Their colors remain equal during this prefix.

If no nonzero displacement ever occurs, continuing synchronously
gives an admissible fair execution with no movement, which violates
$\mathrm{OSPO}$.
Otherwise, let $\gamma$ be the common color immediately before
the first round that would produce nonzero displacements.
Instead of activating both robots in that round, activate $a$
alone for two complete consecutive cycles, keeping $b$ inactive.

If a collision occurs, $\mathrm{OSPO}$ already fails.
Otherwise, $b$ remains at its initial position with color $\gamma$.
At both Looks of $a$, normalization presents the same relative
position and the same peer color.
Since an $\FC$ robot cannot read its own color, it computes
the same nonzero normalized displacement in both cycles.
At the completion time $T$ of the second Move,
\[
  m_a(T)=2,\qquad m_b(T)=0.
\]

In any execution satisfying $\mathrm{OSPO}$, the first phase
starts at $t_0=0$ and ends at a joint stop $t_1$ with
$m_a(t_1)=m_b(t_1)=1$.
Since $b$ has not moved by time $T$, such a $t_1$ would have
to satisfy $t_1>T$; but then $m_a(t_1)\ge2$, a contradiction.
Thus the finite prefix already makes $\mathrm{OSPO}$ impossible.
Append any admissible fair, time-divergent $\SSY$ continuation.
\end{proof}

\begin{lemma}
\label{lem:osp_not_fs_s}
$\mathrm{OSPO}\notin\Sol(\FS^S)$.
\end{lemma}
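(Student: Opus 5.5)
We prove $\mathrm{OSPO}\notin\Sol(\FS^S)$ using the normalization technique already applied in Lemma~\ref{lem:sro-not-fs-s}. Fix a deterministic $\FS$ algorithm. Whenever the robots are distinct, present the peer at $e$ at every $\LK$. Then each robot's private state sequence is $s_0,g(s_0),g^2(s_0),\dots$ regardless of the schedule. The state $s$ fixes a normalized displacement $z_s=\lambda_s e^{\mathrm i\theta_s}$, and hence a solo distance ratio $c_1(s)=|1-z_s|$.

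We activate the robots alternately, one complete cycle at a time. Each such cycle is an atomic $\SSY$ round, so each robot passes through the same state orbit. If no state in the orbit has $\lambda_s>0$, this fair execution never moves; $d$ is constant on $\Sab$, and $\mathrm{OSPO}$ fails by the monotonicity criterion stated after Definition~\ref{def:osp}.

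Otherwise let $j$ be the least index with $\lambda_{s_j}>0$. Using the freedom of $\SSY$, we activate $a$ alone $j+1$ times while $b$ stays inactive. Then $a$ performs its first nonzero $\M$ from state $s_j$ while $b$ is still at rest with $m_b=0$.

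\textbf{Case 1: $a$ has a further nonzero output.} Keep $b$ frozen and activate $a$ until its second nonzero $\M$ completes at some time $T$. A coincidence along the way already violates collision freedom. Otherwise $m_a(T)=2$ and $m_b(T)=0$, which is the finite move-count violation used in Lemma~\ref{lem:osp_not_fc_s}: no joint stop $t_1$ can have $m_a(t_1)=m_b(t_1)=1$.

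\textbf{Case 2: every state after $s_j$ in $a$'s orbit is zero-output.} Then $a$ makes exactly one nonzero move in the whole execution. The first phase must be an approach, so either this move fails to decrease $d$, or it decreases $d$ with ratio $c_1(s_j)<1$. Next activate $b$ alone through $s_0,\dots,s_j$. Its normalized input equals $a$'s, so it applies the same ratio and $d$ shrinks again. Both robots now follow zero-output suffixes, so under a fair alternating tail $d$ is eventually constant on $\Sab$. The monotonicity criterion again rules out the required retreat phase $t_2$. Concretely, $m_a$ stays $1$ forever, so no $t_2$ with $m_a(t_2)=2$ exists. In every case we append a fair time-divergent alternating tail, and each violation persists.

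\textbf{Main obstacle.} The delicate step is the bookkeeping when the first nonzero outputs of the two robots come from the same state $s_j$. We must check that no clever alternation can make both robots approach and then retreat exactly once each. The key observation is that, along the shared orbit, a robot's second nonzero move must come from a later state $s_{j'}$, and we can serve that move solo before $b$ moves. So either the shared orbit has at least two nonzero outputs, giving Case 1, or it has exactly one, giving Case 2 and failure of perpetuity. The remaining care is that a coincidence during $b$'s move is itself a violation of collision freedom, so no separate continuation argument is needed for it.
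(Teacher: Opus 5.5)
Your proposal is correct and follows essentially the paper's argument: you split on whether the normalized state orbit contains at least two nonzero outputs, in which case solo activations of $a$ force $m_a=2$ while $m_b=0$, or at most one, in which case each robot moves at most once and the unbounded move counts required by the phases fail. The paper uses a fully synchronous execution in the second case instead of your solo-then-alternating schedule. Your detour through the solo distance ratios there is unnecessary, since the move-count argument alone (together with collision freedom whenever the robots coincide) already settles that case.
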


\begin{proof}
Fix any deterministic $\FS$ algorithm and use normalized
local frames whenever the robots are distinct.
Under these observations, a robot's next private state and
normalized displacement depend only on its current private state.
Let
\[
  s_0,s_1,s_2,\ldots
\]
be the resulting state sequence from the designated initial state,
and let $\lambda_j\ge0$ be the displacement magnitude associated
with state $s_j$.

Suppose first that this sequence contains at least two nonzero
displacements. Activate $a$ alone until it completes its second
nonzero Move, while keeping $b$ inactive.
This takes finitely many cycles.
If a collision occurs, the collision-freedom condition fails.
Otherwise, at the completion time $T$,
\[
  m_a(T)=2,\qquad m_b(T)=0.
\]
No endpoint $t_1$ of the first phase can then satisfy
$m_a(t_1)=m_b(t_1)=1$: it cannot occur by time $T$ because
$b$ has not moved, and it cannot occur later because
$a$ has already moved twice.
The violation persists under any admissible fair,
time-divergent continuation.

Suppose instead that the state sequence contains at most one
nonzero displacement.
Choose an infinite fully synchronous execution with normalized
frames. If the robots collide, $\mathrm{OSPO}$ fails.
Otherwise, each robot follows the state sequence above and
completes at most one nonzero Move.
This is incompatible with infinitely many phases, each requiring
one nonzero Move from each robot.
Hence this fair, time-divergent execution also violates
$\mathrm{OSPO}$.
\end{proof}

\subsubsection*{Solvability}

\begin{lemma}
\label{lem:osp_in_fs_r}
$\mathrm{OSPO}\in\Sol(\FS^R)$.
\end{lemma}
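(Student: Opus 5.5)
The plan is to exhibit an explicit two-state $\FS$ algorithm, the oscillating analogue of Algorithm~\ref{alg:ZCC_FS_S}, and verify $\Phi_{\mathrm{OSPO}}$ on every $\RSY$ execution. Each robot $r$ keeps a private $r.\mathit{status}\in\{\mathrm{A},\mathrm{R}\}$, initially $\mathrm{A}$. At a \Look{} it observes its peer $\bar r$ at distance $D$, and then:
\begin{itemize}
\item in status $\mathrm{A}$ it moves $D/4$ towards $\bar r$ and switches to $\mathrm{R}$;
\item in status $\mathrm{R}$ it moves $D/4$ away from $\bar r$ and switches to $\mathrm{A}$.
\end{itemize}
The displacement is a fixed multiple of the observed vector, so it needs no unit of length. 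It is therefore admissible under variable disorientation, and it uses only the robot's own light, so it is an $\FS$ algorithm. All motion stays on the initial line, so each configuration is a pair of coordinates $x_a<x_b$.

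The key structural step is to classify $\RSY$ executions in the form used in the proof of Lemma~\ref{lem:sro-in-ob-r}. An execution is either fully synchronous throughout, or a finite fully synchronous prefix followed by alternating singletons $\{r\},\{\bar r\},\{r\},\dots$. In every synchronous round both robots toggle status, so at every round boundary of the prefix they carry a common status.

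In the alternating part I would group the rounds into pairs $(\{r\},\{\bar r\})$. The first robot's cycle leaves the peer's status untouched, so at the start of each pair both robots again hold the same status, and both toggle within the pair.

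I then define the phase boundaries $t_j$:
\begin{itemize}
\item in the prefix, every synchronous round boundary;
\item in the alternating part, the boundary after every second singleton round.
\end{itemize}
All of these lie in $\Sab$ because the rounds are atomic, $t_0=0$, and $t_j\to\infty$ by time divergence. Every phase contains exactly one cycle of each robot, in the same status. Since $D>0$ is maintained (shown next), each such cycle is a nonzero Move, which gives $m_r(t_{j+1})-m_r(t_j)=1$. The status sequence $\mathrm{A},\mathrm{R},\mathrm{A},\dots$ makes the phases alternate, starting with approach.

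It remains to check the distance along each phase kind. The phase kinds are:
\begin{itemize}
\item \emph{Synchronous approach.} Robot $a$ travels within $[x_a,x_a+D/4]$ and $b$ within $[x_b-D/4,x_b]$. Both coordinates move monotonically inward and never meet, so $d=x_b-x_a$ is non-increasing and positive whatever the adversary's speeds, ending at $D/2$.
\item \emph{Synchronous retreat.} Both move monotonically outward and $d$ doubles.
\item \emph{Singleton approach pair.} One robot approaches its stationary peer, which gives factor $3/4$; then the other does the same. Each stage is monotone, and no robot reaches its peer, so $d>0$.
\item \emph{Singleton retreat pair.} By symmetry $d$ is monotone non-decreasing.
\end{itemize}
In every case the change across the phase is strict, and collision freedom holds at all times, including inside Moves.

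The main obstacle I expect is the hybrid execution, where the phase grouping must not straddle the switch from the synchronous prefix to the alternating tail. The common-status invariant at the switch is exactly what guarantees this. A second subtle point is monotonicity of $d$ during a synchronous round under arbitrary adversarial speeds; the disjoint-interval argument from part~(i) of the proof of Lemma~\ref{lem:zcc_in_fs_s} handles it.
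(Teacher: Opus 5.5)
Your proposal is correct and follows the paper's proof essentially verbatim: it uses the same two-state algorithm (move $D/4$ towards or away from the peer and toggle), the same decomposition of an \RSY{} execution into single synchronous rounds and pairs of alternating singleton rounds via the common-state invariant, and the same per-phase monotonicity checks, with your disjoint-interval argument equivalent to the paper's $u_a+u_b$ parametrization. One harmless slip: a synchronous retreat round multiplies the distance by $3/2$, not $2$ (ending at $3D/2$), which does not affect the argument since only a strict increase is needed.
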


\begin{proof}
\emph{The algorithm.}
Each robot keeps two private states, \texttt{approach} and
\texttt{retreat}, initially \texttt{approach}. Upon an activation, let
$p$ be its own position and $q$ the observed position of its peer. In
state \texttt{approach} its destination is $p+\tfrac14(q-p)$, in state
\texttt{retreat} it is $p-\tfrac14(q-p)$; in both cases it then
switches to the other state. The destination is a fixed fraction of
the observed displacement, so it depends neither on the orientation
nor on the unit distance of the local frame and is admissible under
variable disorientation. The two robots are always at distinct
positions, as shown below, so $q\neq p$ and every $\M$ is nonzero;
hence every activation increases $m_r$ by exactly one.

\emph{One round.}
Let $D>0$ be the distance at the start of a round, and for each robot
$r$ let $u_r\in[0,1]$ be the fraction of its movement segment covered
so far, set to $0$ before its $\M$ begins and to $1$ after its
completion; each $u_r$ is continuous and non-decreasing, and
$u_a+u_b$ runs from $0$ to the number of robots activated in the
round. If both robots are activated
and share their state, the distance during the round is
\[
  D\Bigl(1-\frac{u_a+u_b}{4}\Bigr)\ \text{(approach)},
  \qquad
  D\Bigl(1+\frac{u_a+u_b}{4}\Bigr)\ \text{(retreat)} ,
\]
since both displacements are collinear with the segment. Here
$u_a+u_b\in[0,2]$, so the approach factor lies in $[\tfrac12,1]$: the
distance is monotone in the required direction, stays positive, and
ends at $D/2$ or $3D/2$. If a single robot is activated, the same
expressions hold with $u_a+u_b\in[0,1]$, the approach factor lies in
$[\tfrac34,1]$, and the final distance is $3D/4$ or $5D/4$.

\emph{Phases.}
If the execution is fully synchronous forever, both robots always
share their state and successive rounds alternate between approach and
retreat; take each round as one phase.

Otherwise let $\ell\ge0$ be the length of the fully synchronous prefix,
after which the activations are alternating singletons
(Lemma~\ref{lem:altA}). Both robots share their state throughout the
prefix, and round $k\le\ell$ is an approach round exactly when $k$ is
odd; at the end of the prefix their common state is \texttt{approach}
exactly when $\ell$ is even. Partition the singleton tail into
consecutive pairs of rounds. In a pair the two robots are activated
once each; the second one has not yet switched, so both use the state
held in common at the start of the pair, and at the end of the pair
both hold the switched state. Take each such pair as one phase.
Consequently the first tail phase is an approach phase exactly when
$\ell$ is even, which for $\ell\ge1$ is the type opposite to round
$\ell$; the tail phases alternate, so the phase types alternate across
the whole execution, beginning with approach.

\emph{Within a phase.}
A phase of the first kind is a single round, treated above. A phase of
the second kind is two singleton rounds of the same type. Each of them
is monotone in the same direction and they agree at the round
boundary, so the distance is monotone on the whole phase; starting
from $D$ it ends at $\tfrac{9}{16}D$ in an approach phase and at
$\tfrac{25}{16}D$ in a retreat phase, so it also changes strictly.
In every case the distance stays positive, so $d(t)>0$ for all $t\ge0$.

\emph{The times $t_j$.}
Take $t_0=0$ and let $t_{j+1}$ be the end of the $(j+1)$-st phase.
Each phase ends at a round boundary, where neither robot is inside a
Move, so $t_j\in\Sab$ for every $j$; the same holds at $t_0=0$, where
no cycle is pending. Each phase contains exactly one nonzero $\M$ of
each robot, so
\(
  m_r(t_{j+1})-m_r(t_j)=1
\)
for every $j\ge0$ and $r\in\{a,b\}$. The phases exhaust the rounds of
the execution and each contains at least one round, so $t_j\to\infty$
by time divergence. Together with the monotonicity, strictness and
positivity established above, and with the first phase being an
approach phase, all conditions of Definition~\ref{def:osp} hold.
\end{proof}

Lemma~\ref{lem:osp_not_ob_f}, Lemma~\ref{lem:osp_not_fc_s},
Lemma~\ref{lem:osp_not_fs_s} and Lemma~\ref{lem:osp_in_fs_r} supply the
boundary certificate of Table~\ref{tab:PP13} for $\mathrm{OSPO}$, which
by Lemma~\ref{lem:boundary-certificate} gives
$\Prof(\mathrm{OSPO})=\upa\FS^R$ as asserted in
Theorem~\ref{thm:principal-task-basis}.

\subsection{Collision-Free Single Move (CFSM)}
\label{app:cfs}

\begin{definition}[Collision-Free Single Move]
\label{def:six-cfsm}
Two robots $a$ and $b$, initially at distinct positions, must each
perform exactly one nonzero Move and remain stationary before that
Move begins and after it ends. They must never collide, including
during their movements. With $\operatorname{moves}(r)$ denoting the
number of nonzero Moves of robot $r$, the specification is
\[
  \Phi_{\CFSM}
  \equiv
  \bigl(\operatorname{moves}(a)=1\bigr)
  \land
  \bigl(\operatorname{moves}(b)=1\bigr)
  \land
  \bigl(\forall t\ge0:\ a(t)\ne b(t)\bigr).
\]
\end{definition}

\subsubsection*{Impossibility}

\begin{lemma}
\label{lem:cfsm_not_ob_f}
$\CFSM \notin\Sol(\OB^{F})$.
\end{lemma}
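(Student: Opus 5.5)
The plan is to argue as in Lemma~\ref{lem:epg_not_ob_f}: under \FSY{} an oblivious robot that makes one nonzero Move is forced to make a second one. Fix an arbitrary deterministic $\OB$ algorithm $\mathcal A$ and an initial instance $(a_0,b_0)\in\mathcal I_\star$. Consider the fully synchronous execution in which, at every \Look{} at which the robots are distinct, the adversary uses variable disorientation to choose a local frame in which the peer appears at $e=(1,0)$ (Lemma~\ref{lem:norm}(1)). An $\OB$ robot has no color, so its computed destination depends only on the observed peer position. Hence every such activation yields the same local output $q:=\mathcal A(e)$. Round timings are chosen so that each \Move{} ends strictly before the next \Look{}, which leaves a positive gap between rounds, and so that the execution is time-divergent.

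\emph{Case $q=0$.} Every \Move{} is a zero move, so the robots stay at their distinct initial positions forever. Then $\moves{a}=0$, and $\Phi_{\CFSM}$ fails.

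\emph{Case $q\neq0$.} In round~$1$ both robots are activated and each performs a nonzero \Move. If $a(t)=b(t)$ at some time in this round, whether during the movement or at its end, the collision-freedom conjunct fails, and any fair \FSY{} continuation keeps the violation. Otherwise the robots are distinct at the start of round~$2$. The adversary again normalizes the peer to $e$, so both robots again output $q\neq0$ and each performs a second nonzero \Move.

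It remains to check that these are two nonzero Moves in the sense of the definition of $\moves{r}$. The two Move intervals of a robot are separated by a positive-length interval during which it is stationary (its \Look{}, \Compute{} and the inter-round gap). They are therefore distinct maximal open intervals of $\MoveSet{r}$, giving $\moves{a}\ge2$, which violates $\moves{a}=1$. Continuing fully synchronously with normalized frames yields an admissible fair, time-divergent execution that violates $\Phi_{\CFSM}$.

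The only delicate step is this Move-counting technicality: the two consecutive nonzero Moves of a robot must not merge into a single interval of $\MoveSet{r}$. The strict ordering $t_E(r,i)<t_L(r,i+1)$ in the scheduler definitions guarantees the separation.
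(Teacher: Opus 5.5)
Your proof is correct and follows essentially the paper's route. Both arguments use the fully synchronous execution with normalized frames, in which an $\OB$ robot applies one fixed rule every round, and split into three cases: no movement, a collision, or a second nonzero Move. The paper phrases the split as $\lambda=0$, $c_2=0$, $c_2>0$ and concludes $\moves{a}=\infty$. You stop after two Moves and dispose of any collision in round~1 directly, which also covers mid-move crossings.
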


\begin{proof}
Let $\mathcal{A}$ be any $\OB$ algorithm and consider the fully
synchronous execution of Lemma~\ref{lem:rigidA1}. Since an $\OB$ robot
has no visible state and the normalized local input of each robot is
the same in every round, the two robots apply one and the same pair
$(\lambda,\theta)$ in every round, and the configuration is updated by
the simultaneous contraction ratio $c_2=c_2(\lambda,\theta)$ of
Lemma~\ref{lem:ratios}(ii).

If $\lambda=0$, no robot ever performs a nonzero $\M$, so
$\moves{a}=\moves{b}=0$ and the single-move condition fails. If
$c_2=0$, that is $(\lambda,\theta)=(1/2,0)$, both robots land on the
common midpoint in round~$1$, so $a(t)=b(t)$ at that time and the
collision-free condition fails. If $c_2>0$, the distance stays
positive, hence $\lambda>0$ makes every round a nonzero $\M$ for both
robots and $\moves{a}=\moves{b}=\infty$, so the single-move condition
fails again. The execution is fair in all three cases.
\end{proof}

\begin{lemma}
\label{lem:cfsm_not_fc_s}
$\CFSM \notin\Sol(\FC^{S})$.
\end{lemma}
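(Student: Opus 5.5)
The plan follows the pattern of Lemmas~\ref{lem:sro-not-fc-s}, \ref{lem:het_not_fc_s} and~\ref{lem:osp_not_fc_s}: a synchronous zero-move prefix with equal colors, followed by a long solo run of one robot while the peer's color is frozen. Fix an arbitrary deterministic $\FC$ algorithm under $\SSY$. Use normalized local frames at every $\LK$ while the robots are distinct, and start with fully synchronous rounds. By Lemma~\ref{lem:rigidA2} both robots carry the same color at every round boundary, and with normalized inputs they compute the same normalized displacement.

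If no color in this common sequence yields a nonzero displacement, continue synchronously forever. This execution is fair and time-divergent, and $\moves{a}=0$, so $\Phi_{\CFSM}$ fails. Otherwise, let $\gamma$ be the common color at the first round whose displacement would be nonzero, and truncate the prefix just before that round. The configuration is still the initial one, and no operation is pending.

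Now activate $a$ alone for two consecutive complete cycles, which $\SSY$ permits (Lemma~\ref{lem:solo}). While $b$ is inactive its visible color stays $\gamma$. An $\FC$ robot does not read its own color, so by Lemma~\ref{lem:single-rule}(b) robot $a$ applies the same normalized rule $(\lambda_\gamma,\theta_\gamma)$ with $\lambda_\gamma>0$ at both activations. There are two cases.
\begin{itemize}
\item If the first Move lands $a$ on $b$ (solo ratio $c_1(\gamma)=0$), then $a(t)=b(t)$ at that time and collision freedom fails.
\item Otherwise the robots remain distinct at the second $\LK$, and normalization presents the same input again. The second displacement is then nonzero, so $\moves{a}\ge2$. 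If $a$ passes through $b$'s position during either Move, collision freedom fails anyway.
\end{itemize}
In both cases the violation concerns a finite prefix and cannot be repaired by any extension: $\moves{a}$ can only grow, and a collision is a safety violation. Finally, append an arbitrary admissible fair, time-divergent $\SSY$ continuation, for instance alternating singleton rounds.

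The only delicate step is the justification that the second solo activation of $a$ reproduces the nonzero output. This needs two facts. First, $a$ is distinct from $b$ at its second $\LK$, which is exactly the case split above. Second, the $\FC$ input $(p_a,c_b)$ is unchanged up to normalization. Since $b$ has neither moved nor recolored, $c_b=\gamma$ still holds, and the private color written by $a$ is invisible to $a$ itself. With these two facts the argument goes through uniformly, and no geometric computation is required.
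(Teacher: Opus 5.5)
Your proposal is correct and follows essentially the same route as the paper's proof. Both run a fully synchronous zero-move prefix with a common color (Lemma~\ref{lem:rigidA2}), conclude $\moves{a}=0$ if no nonzero output ever appears, and otherwise activate $a$ alone under the frozen peer color $\gamma$, where Lemma~\ref{lem:single-rule}(b) gives the single rule $(\lambda_\gamma,\theta_\gamma)$; this yields either a landing on $b$ (collision) or a second nonzero Move with $\moves{a}\ge 2$.
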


\begin{proof}
Let $\mathcal{A}$ be any deterministic $\FC$ algorithm.

\emph{First stage.} Run a fully synchronous prefix, which is admissible
under $\SSY$. By Lemma~\ref{lem:rigidA2} the two robots always carry
the same color, and we write $\gamma_0,\gamma_1,\dots$ for that common
color sequence. If $\lambda_\gamma=0$ for every color $\gamma$
occurring in the sequence, then this fair execution contains no nonzero
$\M$ and $\moves{a}=0$, so the single-move condition fails. Otherwise
let $\gamma$ be the first color of the sequence with
$\lambda_\gamma>0$ and truncate the prefix at the round in which both
robots carry $\gamma$. Nothing has moved so far, so the robots are
still at their initial positions and the distance is $d>0$.

\emph{Second stage.} From that configuration, activate $a$ alone
repeatedly (Lemma~\ref{lem:solo}; under $\SSY$ the robot that is not
activated has no pending operation, and the activations of $b$ are
placed far enough in the future to keep the execution fair). The robot
$b$ is never activated, so its color remains $\gamma$, and the visible
state of $a$, which is the color of its peer, is constant along the
interval. By Lemma~\ref{lem:single-rule}(b) the robot $a$ therefore
applies the same pair $(\lambda_\gamma,\theta_\gamma)$ at every
activation. Since $\lambda_\gamma>0$ and the distance is positive, the
first activation is a nonzero $\M$. If it lands on $b(0)$, the two
robots occupy the same point and the collision-free condition fails.
Otherwise the distance remains positive, so the second activation is a
nonzero $\M$ as well and $\moves{a}\ge2$, which violates the
single-move condition.
\end{proof}

\subsubsection*{Solvability}

Algorithm~\ref{alg:CFSM_FS_A} uses one private bit to make each robot
move exactly once. The move is directed \emph{away} from the observed
peer, so the two trajectories remain on opposite sides of the initial
segment, even under $\ASY$.

\begin{algorithm}
\caption{$\CFSM$ for $\FS$ under $\ASY$ (robot $r$)}
\label{alg:CFSM_FS_A}
\textbf{Model and scheduler:} $\FS$, $\ASY$. \\
\textbf{State:} $r.\mathit{status}\in\{0,1\}$, initially $0$. \\
\textbf{$\LK$:} observe the peer $\overline r$ and obtain the distance
$D=\|\overline r-r\|$.\\
\textbf{$\CP$:}
\begin{algorithmic}[1]
\If{$r.\mathit{status}=0$}
  \State $r.\des\gets$ the point at distance $D/2$ from $r$ in the direction away from $\overline r$
  \State $r.\mathit{status}\gets1$
\Else
  \State $r.\des\gets$ the current position of $r$
  \Comment{no further movement}
\EndIf
\end{algorithmic}
\textbf{$\M$:} move to $r.\des$.
\end{algorithm}

\begin{lemma}
\label{lem:cfsm_in_fs_a}
$\CFSM \in \Sol(\FS^{A})$, as witnessed by
Algorithm~\ref{alg:CFSM_FS_A}.
\end{lemma}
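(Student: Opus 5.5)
We will verify directly that every admissible fair execution of Algorithm~\ref{alg:CFSM_FS_A} under $\ASY$ satisfies the three conjuncts of $\Phi_{\CFSM}$.

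\textbf{Move count.} Each robot starts in status $0$. Its first Compute happens at a Look where the peer is at some point $q\ne r$, so $D>0$ and the computed destination is nonzero. We establish $q\ne r$ at that Look in the collision step below. That cycle sets status to $1$. Every later cycle has status $1$ and computes the current position as its destination, which is a zero Move. Hence $\moves{a}=\moves{b}=1$, and by fairness each robot does perform its first cycle. The one-bit private state is admissible in $\FS$. The destination depends only on the direction to the peer and on half the observed distance, so it needs neither a unit of length nor orientation, and variable disorientation is respected.

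\textbf{Collision freedom: geometric invariant.} Fix global coordinates with $a(0)=0$ and $b(0)=e_1$, and let $H_a=\{x:\langle x,e_1\rangle<1/2\}$ and $H_b$ be the complementary open half-planes. The claim is that at all times $a(t)\in\overline{H_a}$ with $\langle a(t),e_1\rangle\le 0$, and symmetrically $\langle b(t),e_1\rangle\ge 1$. This forces $a(t)\ne b(t)$.

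The main obstacle is the asynchronous stale snapshot: $a$ may Look while $b$ is mid-Move or already displaced, and then move at a later time. So we show that the observed peer position always lies in the opposite closed half-plane. Specifically, any point observed by $a$ is a position of $b$, hence has $e_1$-coordinate $\ge 1$. Before $a$'s single nonzero Move, $a$ sits at $0$.

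\textbf{Collision freedom: the inductive step.} The displacement of $a$ is $-\tfrac12(q-0)$, where $q$ is the observed position of $b$. Its $e_1$-component is $-\tfrac12\langle q,e_1\rangle\le -\tfrac12<0$. So $a$'s whole rigid segment stays in $\{\langle x,e_1\rangle\le 0\}$, because both endpoints do and the half-plane is convex. The symmetric argument for $b$ uses $\langle q',e_1\rangle\le 0$ for any observed position $q'$ of $a$, which gives displacement $-\tfrac12(q'-b(0))$ with $e_1$-component $\tfrac12(1-\langle q',e_1\rangle)\ge \tfrac12$.

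\textbf{Closing the induction.} The invariants for $a$ and $b$ support each other, so we prove them jointly by induction over the two nonzero Moves ordered by their Compute times. Before either robot moves, both invariants hold trivially. When a robot Looks, the peer satisfies its invariant at that time by the induction hypothesis, since it is only ever at positions covered by the invariant. Hence the computed destination keeps the mover in its half-plane.

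Since the $e_1$-coordinates of the two robots are always separated by at least $1$, we get $a(t)\ne b(t)$ for all $t$, including during movements. In particular $D>0$ at every Look, which closes the gap in the move-count argument. Time-divergence plays no role, and all three conjuncts hold.
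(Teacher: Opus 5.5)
Your proof is correct and follows the paper's argument. The paper likewise proves the separating invariants $a(t)\ge a(0)$ and $b(t)\le b(0)$, uses the fact that every observed peer position, stale or not, lies on the far side so that each robot moves outward, and obtains $\moves{r}=1$ from the status bit. Your version differs in two small ways: you work with the projection $\langle\cdot,e_1\rangle$ in the plane, which avoids the paper's tacit assumption that both robots stay on the initial line, and you make the non-circularity explicit by ordering the two nonzero Moves, since the robot that computes first necessarily observes its peer at the peer's initial position.
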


\begin{proof}
Put a coordinate on the line through $a(0)$ and $b(0)$ so that
$b(0)<a(0)$. We claim the invariants
\[
  a(t)\ge a(0)\quad\text{and}\quad b(t)\le b(0)\qquad(\forall t\ge0),
\]
which give $b(t)\le b(0)<a(0)\le a(t)$ and hence the collision-free
condition at every instant, during movements included.

Each robot moves at most once, namely at the activation at which its
status is still $0$; at that moment it stands at its initial position.
Let $p$ be the position of $b$ recorded by the $\LK$ of $a$. If $b$ has
not yet moved then $p=b(0)$; if $b$ is moving or has already landed
then $p\le b(0)$ by the invariant for $b$. In both cases $p<a(0)$, so
$D=a(0)-p>0$ and $a$ moves to
\[
  a(0)+\tfrac12\bigl(a(0)-p\bigr)>a(0),
\]
and it stays there because all its later $\M$ operations are zero. The
argument for $b$ is symmetric. The two inductions use only the fact,
valid from time $0$ on, that neither trajectory has ever crossed the
initial position of the other, so they are not circular.

For the single-move condition, fairness guarantees that each robot is
activated at least once. At its first activation the observed distance
is positive, so its displacement $D/2$ is nonzero; every later
activation produces a zero $\M$. Hence $\moves{r}=1$ for both robots.
\end{proof}

Lemma~\ref{lem:cfsm_not_ob_f}, Lemma~\ref{lem:cfsm_not_fc_s} and
Lemma~\ref{lem:cfsm_in_fs_a} together prove Lemma~\ref{lem:six-bounds}\,(\ref{lem:six-cfsm-bounds}).

\subsection{The Shared Exchange Phase of S-RDV and VSS}
\label{app:shared-exchange-phase}

\SRDV (Definition~\ref{def:s-rdv}) and \VSS
(Definition~\ref{def:six-vss}) differ in their terminal requirements,
one demanding rendezvous and the other permanent stopping, but they
impose the same initial instance and the same exchange phase. This
subsection isolates that common part and proves the four statements
that Appendices~\ref{app:s-rdv}, \ref{app:vss},
\ref{app:cf-two-task-disjunctions} and
\ref{app:cf-three-task-disjunctions} use.

\subsubsection*{The common setting}

Let $a_0=a(0)$ and $b_0=b(0)$ be the distinct initial positions, let
$d=\|b_0-a_0\|>0$, and let
\[
  A=2a_0-b_0,\qquad B=2b_0-a_0 ,
\]
so that $A,a_0,b_0,B$ occur in this order at equal spacing $d$ on the
initial line. Throughout this appendix we place a coordinate on that
line with
\[
  A=0,\qquad a_0=d,\qquad b_0=2d,\qquad B=3d,
\]
and write $P:=\{0,d,2d,3d\}$ for the set of designated points. The
exchange configurations are
\[
  C_1=(B,b_0),\qquad C_2=(a_0,A),\qquad C_3=(B,A).
\]
In the notation of Definition~\ref{def:six-vss} these are
$A=A_0$, $B=B_0$ and $C_i=\gamma_i$ for $i=1,2,3$, while
$\gamma_0=(a_0,b_0)$ is the initial configuration. We write $m_r(T)$ for
the number of nonzero $\M$ operations performed by $r$ up to time $T$.

\begin{figure}[t]
\centering
\begin{tikzpicture}[x=1.1cm,y=0.75cm]
  \foreach \k/\lab in {0/$A$,1/$a_0$,2/$b_0$,3/$B$}{
    \node[above] at (\k,0.15) {\lab};
  }
  \foreach \r in {0,1,2,3}{
    \draw (0,-\r) -- (3,-\r);
    \foreach \k in {0,1,2,3}{\fill (\k,-\r) circle (1.2pt);}
  }
  \node[left] at (-0.35,0)  {$\gamma_0$};
  \node[left] at (-0.35,-1) {$C_1$};
  \node[left] at (-0.35,-2) {$C_2$};
  \node[left] at (-0.35,-3) {$C_3$};
  \foreach \ya/\yb in {0/0}{}
  \node[below] at (1,-0.25) {$a$}; \node[below] at (2,-0.25) {$b$};
  \node[below] at (3,-1.25) {$a$}; \node[below] at (2,-1.25) {$b$};
  \node[below] at (1,-2.25) {$a$}; \node[below] at (0,-2.25) {$b$};
  \node[below] at (3,-3.25) {$a$}; \node[below] at (0,-3.25) {$b$};
\end{tikzpicture}
\caption{The initial configuration $\gamma_0$ and the three exchange
configurations $C_1$--$C_3$ admitted in the first phase.}
\label{fig:srdv}
\end{figure}

Both tasks require the following, which we refer to collectively as
condition $(\mathrm{P1})$.

\begin{itemize}
\item[$(\mathrm{P1}\text{-}a)$] In the initial configuration the points
  $A,a_0,b_0,B$ lie on a line in this order at equal spacing $d$.
\item[$(\mathrm{P1}\text{-}b)$] Every nonzero $\M$ ends at a point of
  $P$: before $T_1$ for \SRDV, and throughout the execution for \VSS.
\item[$(\mathrm{P1}\text{-}c)$] At some time $T_1$ one of $C_1,C_2,C_3$
  is formed, with the corresponding cumulative counts
  $\bigl(m_a(T_1),m_b(T_1)\bigr)$ equal to $(1,0)$, $(0,1)$ or $(1,1)$.
\end{itemize}

\subsubsection*{The first nonzero Move}

\begin{lemma}[A nonzero Move must occur]
\label{lem:p1_no_null}
In every execution, starting from the initial configuration, of a
deterministic algorithm that solves a task containing
$(\mathrm{P1})$, at least one robot performs a nonzero $\M$.
\end{lemma}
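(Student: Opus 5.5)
The statement reads almost directly off condition $(\mathrm{P1}\text{-}c)$, so I plan a short contradiction argument. Fix a deterministic algorithm $\mathcal A$ that solves a task whose predicate contains $(\mathrm{P1})$. Fix also an execution $\rho$ of $\mathcal A$ from the initial configuration $\gamma_0=(a_0,b_0)$, admissible and fair under the scheduler in question. Suppose, for contradiction, that neither robot performs a nonzero $\M$ in $\rho$. Then $\MoveSet{a}=\MoveSet{b}=\varnothing$, so by the definitions of Appendix~\ref{app:model-conventions-predicate} we have $m_a(t)=m_b(t)=0$ for every $t\ge0$.

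Two short observations finish the argument. First, a robot's position changes only inside nonzero Moves; a zero Move has destination equal to its start, and movements are rigid along that segment. Hence $a(t)=a_0=d$ and $b(t)=b_0=2d$ for all $t$, so the configuration is constantly $\gamma_0$. Second, $(\mathrm{P1}\text{-}c)$ requires a time $T_1$ at which one of $C_1,C_2,C_3$ is formed. These configurations differ from $\gamma_0$: each of them places $a$ at $3d$ or places $b$ at $0$, and they are distinct from $\gamma_0$ because $d>0$.

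It is also worth noting that even if position were ignored, the required counts $(1,0)$, $(0,1)$ and $(1,1)$ are all nonzero and so contradict $m_a(T_1)=m_b(T_1)=0$. Either observation shows that $\rho\not\models\Phi$. This contradicts the assumption that $\mathcal A$ solves the task, since solving means that every execution in $\operatorname{Exec}_{M,K}(\mathcal A,\iota)$ satisfies the predicate.

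The only point needing care, and the nearest thing to an obstacle, is making sure the claim is read as quantified over every execution, and that such an execution exists at all. For existence I will take any fair, time-divergent schedule, for instance alternating complete cycles of unit duration, which is admissible under every scheduler in $\mathcal S$. If the intended reading is instead existential, the lemma follows a fortiori, since by the argument above every execution contains a nonzero Move.
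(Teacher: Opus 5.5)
Your argument is correct and is essentially the paper's: with no nonzero Move you get $m_a\equiv m_b\equiv 0$, while each clause of $(\mathrm{P1}\text{-}c)$ requires $m_a(T_1)+m_b(T_1)\ge 1$. Your positional observation, that the configuration stays at $\gamma_0$ and this differs from each of $C_1,C_2,C_3$, is an equally valid but redundant second route.

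One small slip is in the aside. The lemma is universally quantified, so no existence argument is needed. In any case, alternating singleton cycles are not admissible under $F$; there you would use the fully synchronous schedule instead.
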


\begin{proof}
If neither robot ever performs a nonzero $\M$, then $m_a(T)=m_b(T)=0$
for every $T$. Each of the three clauses of $(\mathrm{P1}\text{-}c)$
requires $m_a(T_1)+m_b(T_1)\ge1$, so no valid $T_1$ exists, a
contradiction.
\end{proof}

\begin{lemma}[Uniqueness of the first Move]
\label{lem:p1_first_move}
In every execution satisfying $\SRDV$ or $\VSS$, each nonzero
$\M$ with earliest completion time lands at $B$ if performed
by $a$, and at $A$ if performed by $b$.
If the Look associated with that $\M$ observes the initial
geometric configuration, its normalized displacement is
$(\lambda,\theta)=(2,0)$.
\end{lemma}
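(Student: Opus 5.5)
Fix an execution satisfying $\SRDV$ or $\VSS$, and consider a nonzero $\M$ whose completion time is earliest. Say it is performed by $a$, so $m_a$ jumps to $1$ at its completion time $T$; the case of $b$ is symmetric under the point reflection exchanging $A\leftrightarrow B$ and $a_0\leftrightarrow b_0$. The plan is to use $(\mathrm{P1}\text{-}b)$ and $(\mathrm{P1}\text{-}c)$ together with the move-count bounds they impose.

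First, $(\mathrm{P1}\text{-}c)$ forces $m_a(T_1)\le1$ and $m_b(T_1)\le1$. We need $T\le T_1$: since each $C_i$ differs from $\gamma_0$, some nonzero $\M$ must complete by $T_1$, and $T$ is the earliest completion. Hence the $\M$ of $a$ ending at $T$ is the only nonzero $\M$ of $a$ completed by $T_1$.

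Second, by $(\mathrm{P1}\text{-}b)$, which applies because $T\le T_1$, this $\M$ lands in $P$. Since $a$ makes no further nonzero $\M$ before $T_1$, its position at $T_1$ equals this landing point. Condition $(\mathrm{P1}\text{-}c)$ allows $a$ to have moved (i.e.\ $m_a(T_1)=1$) only in $C_1$ or $C_3$, and in both of these $a$ is at $B$. So the landing point is $B$.

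One subtlety is that $a$ could still be in flight at $T_1$ for a second $\M$. This does not matter, because $m_a$ counts completed Moves and $C_i$ is a formed configuration. If needed, we would note that a configuration "formed" at $T_1$ is read at a time when $a$ is not inside a nonzero $\M$, since otherwise $a(T_1)$ would lie strictly inside a segment. The remaining point is to rule out simultaneous earliest completions by both robots. In that case both land at their prescribed points by the same argument, so the claim still holds for each.

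For the normalized displacement, suppose the Look preceding this $\M$ of $a$ sees the initial configuration, with $a$ at $a_0=d$ and $b$ at $b_0=2d$. In the frame of Lemma~\ref{lem:norm}(1) the peer vector is $w=b_0-a_0$, and the displacement is $B-a_0=2d$ along $w$. Since movements are rigid, the landing point is exactly $a_0+\lambda R_\theta w$. Setting this equal to $B=a_0+2w$ gives $\lambda R_\theta w=2w$, hence $\lambda=2$ and $\theta=0$. The main obstacle I expect is the careful bookkeeping of "formed at $T_1$" versus Moves in progress, and of the ordering $T\le T_1$; everything else is direct.
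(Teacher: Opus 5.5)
\textbf{Gap: the in-flight case at $T_1$.} The step ``its position at $T_1$ equals this landing point'' does not follow from what you have. From $m_a(T_1)=1$ you only know that $a$ has \emph{completed} no second nonzero $\M$ by $T_1$. It may still be inside a second nonzero $\M$ at $T_1$. Neither the \SRDV definition nor the \VSS definition, nor $(\mathrm{P1}\text{-}c)$, requires the robots to be at rest when the exchange configuration is formed. Formation at $T_1$ is a purely positional condition on $(a(T_1),b(T_1))$, together with the move counts.

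Your dismissal (``$C_i$ is a formed configuration'') therefore assumes a requirement that is not in the specification. Your fallback (``otherwise $a(T_1)$ would lie strictly inside a segment'') gives no contradiction by itself. For example, a second $\M$ of $a$ starting at $p=2d\neq B$ and aimed at $4d$ passes through $B=3d$ in its interior. At that instant it forms $C_1=(B,b_0)$ with counts $(1,0)$, so all of $(\mathrm{P1}\text{-}c)$ holds.

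\textbf{The missing idea.} What rules this out is that every movement is contained in $[A,B]$.
\begin{itemize}
\item For \SRDV this is condition~1 of Definition~\ref{def:s-rdv}.
\item For \VSS it follows from rigid straight-line motion: the second $\M$ starts at $p\in P$ and must end in $P\subseteq[A,B]$.
\end{itemize}
Since $B$ is an extreme point of $[A,B]$, it can lie on a nonzero movement segment contained in $[A,B]$ only as an endpoint. So after leaving $p\neq B$, the robot $a$ can be at $B$ only at the completion of a further nonzero $\M$, which forces $m_a(T_1)\ge2$, a contradiction. This containment-plus-extremality argument is the core of the paper's proof, and your proposal does not supply it.

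\textbf{A smaller flaw.} Your justification of $T\le T_1$ (``each $C_i$ differs from $\gamma_0$, so some nonzero $\M$ must complete by $T_1$'') is also invalid. A robot can be displaced from its initial position while still in flight, without any completed $\M$. Use the move-count clause of $(\mathrm{P1}\text{-}c)$ instead: it gives $m_a(T_1)+m_b(T_1)\ge1$ directly, hence $T_1\ge T$.

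Your handling of simultaneous earliest completions and your derivation of $(\lambda,\theta)=(2,0)$ from $B-a_0=2(b_0-a_0)$ are fine and match the paper.
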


\begin{proof}
Such a $\M$ exists by Lemma~\ref{lem:p1_no_null}.
Suppose that $a$ completes one at the earliest completion
time $t^\ast$, and let $p$ be its landing point.
Then $m_a(t^\ast)=1$; the peer may complete its first
nonzero $\M$ simultaneously.

Any valid exchange time $T_1$ satisfies $T_1\ge t^\ast$,
since both completed-move counts are zero before $t^\ast$.
As $m_a(T_1)\ge1$, the witness cannot be $C_2$.
The remaining possibilities, $C_1$ and $C_3$, both require
\[
  a(T_1)=B,\qquad m_a(T_1)=1.
\]

Suppose that $p\ne B$.
Then $T_1>t^\ast$, and reaching $B$ requires another
nonzero $\M$ of $a$.
In $\SRDV$, every movement remains in $[A,B]$.
In $\VSS$, every movement is a straight segment with both
endpoints in $P\subseteq[A,B]$.
In either case, the exterior endpoint $B$ cannot be reached
in the interior of a nonzero movement.
Thus reaching $B$ requires completing another nonzero $\M$,
giving $m_a(T_1)\ge2$, a contradiction.
Therefore $p=B$.
The argument for a first completion by $b$ is symmetric
and gives the landing point $A$.

Now assume that the Look associated with this $\M$ observes
the initial geometric configuration.
For a Move of $a$, the observed positions are $a_0,b_0$, and
\[
  B-a_0=2(b_0-a_0).
\]
Hence the displacement is twice the observed distance
in the direction of the peer, so its normalized form is
$(\lambda,\theta)=(2,0)$.
For a Move of $b$, the same conclusion follows from
$A-b_0=2(a_0-b_0)$.
\end{proof}

The proof uses the fact that every Move is contained in $[A,B]$.
For $\SRDV$, this is an explicit requirement.
For $\VSS$, it follows from straight-line motion and the
requirement that every Move ends in $P\subseteq[A,B]$,
since the initial positions also belong to $P$.


\subsubsection*{Replaying the first Move}

The next lemma is the obstruction used whenever an exchange task is
refuted in a model without peer-state information. It says that in
$\FS$ and $\OB$ the required first Move is necessarily imitated by the
peer, which carries the pair out of the admissible region.

\begin{lemma}[The peer replays the first Move]
\label{lem:p1_replay}
Let $M\in\{\FS,\OB\}$ and consider an execution that starts from the
initial configuration, consists of alternating singleton activations
(Lemma~\ref{lem:altA}), and is given normalized frames
(Lemma~\ref{lem:norm}(1)) up to and including the round following the
first nonzero $\M$. If that first nonzero $\M$ complies with
Lemma~\ref{lem:p1_first_move}, then the robot activated in the next
round receives an isomorphic local input, applies the same displacement,
and lands at $4d$ if it is $b$, and at $-d$ if it is $a$. In particular
the landing point lies outside $P$ and outside the segment $[A,B]$.
\end{lemma}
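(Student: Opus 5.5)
The plan is to reduce the claim to the determinism of the normalized $\FS$/$\OB$ rule together with the symmetry of the initial configuration. I first fix coordinates $A=0$, $a_0=d$, $b_0=2d$, $B=3d$ and the alternating singleton schedule $\{a\},\{b\},\{a\},\dots$ with normalized frames, in which the peer is presented at $e$ at every Look. Under Lemma~\ref{lem:norm}(1), a robot's next private state and normalized output then depend only on its current private state; in $\OB$ the state set is a singleton. Both robots start in the same designated state, so before the first nonzero $\M$ the robot activated in round $2k-1$ and the one activated in round $2k$ pass through the same states. I will show this by induction on rounds while the configuration stays the initial one: each zero $\M$ leaves positions unchanged, and the input $(e,s)$ is the same for both robots.

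Next I consider the first nonzero $\M$. It occurs from some state $s_j$. I treat the case where the mover is $a$ (round $2j+1$); the case where it is $b$ is symmetric. Its Look saw the initial configuration, so by Lemma~\ref{lem:p1_first_move} its normalized displacement is $(\lambda,\theta)=(2,0)$ and it lands at $B=3d$. The next round activates $b$. At that moment $b$ has completed only zero-output cycles, namely exactly $j$ of them, since rounds alternate and $b$ was activated in rounds $2,4,\dots,2j$. Hence $b$ is in state $s_j$. Its normalized input is $(e,s_j)$, which is isomorphic to $a$'s input, even though the physical distance is now $\|3d-2d\|=d$. By determinism, $b$ outputs displacement $2$ times the vector to its peer with rotation $0$. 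It therefore lands at $2d+2(3d-2d)=4d$. In the symmetric case, $b$ moves first to $A=0$, and $a$ then lands at $d+2(0-d)=-d$. Both $4d$ and $-d$ lie outside $P$ and outside $[0,3d]$.

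The main obstacle I expect is the bookkeeping of states in the alternating schedule: I must check that the robot activated in the following round has performed exactly as many prior cycles as the first mover did before its nonzero $\M$. If the first mover were $b$ in round $2j$, then $a$ would have done $j$ cycles, and its next activation would be its $(j+1)$-st, so it would be in state $s_j$. This matches, provided $b$'s nonzero output came from $s_{j-1}$ with indices shifted consistently. I will index by each robot's own cycle count so that both cases read uniformly. A second subtlety is that "complies with Lemma~\ref{lem:p1_first_move}" should be used only to fix the landing point, and from it the normalized rule $(2,0)$. This requires the hypothesis that the Look saw the initial configuration, which holds because all earlier moves were zero.
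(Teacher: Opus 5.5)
Your proposal is correct and follows the paper's proof. Both arguments use the same three steps: the robots traverse identical state sequences before the first nonzero Move; Lemma~\ref{lem:p1_first_move} fixes the first landing, and hence the normalized rule $(2,0)$ from the initial snapshot; and the peer's isomorphic normalized input $(e,s_j)$ then forces the landing $2d+2d=4d$ (symmetrically $d-2d=-d$). Your extra state bookkeeping is fine, but the worry about $b$ moving first in round $2j$ under the order $\{a\},\{b\},\dots$ is vacuous, since $a$ would already have produced the same nonzero output from the same state in round $2j-1$; the symmetric case is simply the alternation that starts with $b$.
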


\begin{proof}
As long as no nonzero $\M$ has occurred, the configuration is unchanged,
so under alternating singleton activation the two robots traverse the
same sequence of states. In $\FS$ and $\OB$ the internal state of the
peer is not part of the local input, which consists of the relative
position of the peer together with the robot's own color in $\FS$ and of
the relative position alone in $\OB$.

Suppose the first nonzero $\M$ is performed by $a$. By
Lemma~\ref{lem:p1_first_move} it lands at $B=3d$, while $b$ is still at
$b_0=2d$, so the distance is again $d$. In the next round $b$ is
activated, and since the adversary may supply the normalized frame, the
local input of $b$ is isomorphic to the one that $a$ received. By
determinism $b$ applies the same displacement $(\lambda,\theta)=(2,0)$
and lands at $2d+2d=4d$. We have $4d\notin P$ and $4d\notin[A,B]=[0,3d]$.
The case in which $b$ moves first is symmetric: $b$ lands at $A=0$, $a$
is still at $d$, and $a$ lands at $d-2d=-d\notin P\cup[A,B]$.
\end{proof}

Lemma~\ref{lem:p1_replay} refutes the endpoint requirement
$(\mathrm{P1}\text{-}b)$ of VSS and the interval requirement of \SRDV in
a single step, and it is a safety violation, so it may be placed before
any liveness argument in a concatenated execution
(Lemma~\ref{lem:concat}(1)).

\subsection{Swap--Rendezvous (S-RDV)}
\label{app:s-rdv}
\label{app:s-rdv-bounds}

This subsection gives the formal specification of \SRDV, informally
described in Section~\ref{subsec:basis-proof-obligations}, and proves
Lemma~\ref{lem:s-rdv-bounds}. Recall from
Appendix~\ref{app:shared-exchange-phase} the auxiliary points
$A=2a_0-b_0$ and $B=2b_0-a_0$, the exchange configurations
$C_1=(B,b_0)$, $C_2=(a_0,A)$, $C_3=(B,A)$, the coordinate $A=0$,
$a_0=d$, $b_0=2d$, $B=3d$ on the initial line, and the set
$P=\{0,d,2d,3d\}$ of designated points.

\begin{definition}[Swap--Rendezvous]
\label{def:s-rdv}
An execution solves \SRDV if the following conditions hold.
\begin{enumerate}
\item Throughout the execution, both robots remain in the segment
      $[A,B]$.
\item There exists a time $T_1>0$ at which one of $C_1,C_2,C_3$ is
      formed, with the corresponding cumulative nonzero-move counts
      $(1,0)$, $(0,1)$, or $(1,1)$. Before $T_1$, every robot that is
      not moving is located at one of $A,a_0,b_0,B$.
\item There exists $T_2>T_1$ such that $a(t)=b(t)$ for every
      $t\ge T_2$.
\end{enumerate}
\end{definition}

Conditions~1 and~2 are condition $(\mathrm{P1})$ of
Appendix~\ref{app:shared-exchange-phase}, with
$(\mathrm{P1}\text{-}b)$ imposed before $T_1$, and condition~3 is the
terminal rendezvous requirement that distinguishes \SRDV from \VSS.
Consequently Lemmas~\ref{lem:p1_no_null}, \ref{lem:p1_first_move},
and Lemma~\ref{lem:p1_replay} all apply
to \SRDV.

\subsubsection*{Impossibility}
\begin{lemma}
\label{lem:srdv_not_fc_a}
$\mathrm{S\text{-}RDV}\notin\Sol(\FC^A)$.
\end{lemma}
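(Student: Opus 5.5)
We need to prove S-RDV is not solvable in $\FC^A$. In $\FC$ a robot sees the peer's color, so the replay obstruction of Lemma~\ref{lem:p1_replay} (stated only for $\FS,\OB$) does not apply directly. The natural obstruction is asynchrony: the first Move must be the exact $(2,0)$ jump to $B$ or $A$ (Lemma~\ref{lem:p1_first_move}), so we attack the rendezvous phase or the first move via stale snapshots. I would follow the pattern of Lemma~\ref{lem:mcv_not_fc_a}, starting with a fully synchronous, normalized prefix in which the two robots keep a common color (Lemma~\ref{lem:rigidA2}).

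\emph{Step 1 (prefix).} Run synchronous rounds with normalized frames as long as the common output is zero. If no nonzero output ever occurs, the fair execution never moves, contradicting Lemma~\ref{lem:p1_no_null}. Otherwise let $\gamma$ be the common color at the first round with nonzero output. The configuration is still initial, so by Lemma~\ref{lem:p1_first_move} this output is $(\lambda,\theta)=(2,0)$ for both robots.

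\emph{Step 2 (split the round asynchronously).} Let both robots Look at the initial configuration with color $\gamma$, so both have computed the jump: $a$ toward $B=3d$ and $b$ toward $A=0$. Both Moves are now pending. Under $\ASY$ the Moves can be executed simultaneously with equal progress. The robots then cross at the midpoint $3d/2$ and collide, but S-RDV forbids nothing about collision before $T_1$ except resting positions, and the landing gives $C_3=(B,A)$ with counts $(1,1)$. That is admissible, so the attack must be different. Instead, let $b$ Look and delay its Compute/Move (Lemma~\ref{lem:suspend}). Then let $a$ complete its cycle, landing at $B$, and let $a$ continue further cycles. During these, $b$'s visible color stays $\gamma$ (an $\FC$ robot changes its light only at Compute), so $a$ sees the same peer color at each Look and applies a fixed rule. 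At its second Look $a$ sees $b$ at $2d$ from $B$, i.e.\ a normalized input identical to its first. It therefore jumps again by $(2,0)$, landing at $3d-2d=d$ with $m_a=2$. This leaves $C_1$ unformed and makes $C_3$ unreachable, since $m_a\ge2$.

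\emph{Step 3 (consequences and fair tail).} After $a$'s first landing at $B$, the configuration $(B,b_0)$ is $C_1$ with counts $(1,0)$, so $T_1$ could already be witnessed there. The attack must therefore violate something else. What we can violate is either containment in $[A,B]$ or rendezvous. Releasing $b$'s stale Move sends $b$ to $A=0$; combined with $a$'s second move to $d$, the result is legal. So the real obstacle is the rendezvous phase: after $T_1$ the algorithm must gather. Gathering two robots in $\FC^A$ is possible in general, but here it must be done while remaining in $[A,B]$ with a stale pending jump of length $2d$.

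The plan is therefore to keep $b$'s stale jump pending until $a$ has come close to $b_0$ (rendezvous forces distance to shrink), then release it. Its displacement of length $2d$ overshoots, reaching $0$ while $a$ is near $2d$. Since $b$ does not move until then, the pending displacement toward $A$ stays legal but breaks gathering progress only finitely. To obtain a permanent contradiction, I would instead repeat the delay trick at every nonzero Move of $b$ after $T_1$, as in the MCv argument. Delay $b$'s Move until $a$ is within distance $\|v\|/2$, then release it so that $b$ overshoots. Either a containment violation occurs or the distance never tends to zero, violating condition~3. The main obstacle is this step: proving that along $a$'s solo cycles (fixed rule, constant peer color) the robots actually approach, and that the overshoot leaves $[A,B]$ or prevents eventual coincidence. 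I expect a case split on the solo ratio $c_1(\gamma)$, mirroring Lemma~\ref{lem:mcv_not_fc_a}.
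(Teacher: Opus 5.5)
Your proposal never reaches a contradiction, and the step you leave open is the whole content of the lemma. Your attacks on the exchange phase fail, as you observe yourself. The synchronous $(2,0)$ round forms $C_3$ legally. In the delayed-Compute execution, the first landing of $a$ already forms $C_1$ with counts $(1,0)$, so the claim at the end of Step~2 that $C_1$ stays unformed is wrong. After that, $a$'s second jump to $a_0$ and $b$'s stale jump to $A$ both stay in $[A,B]$.

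This could not have gone otherwise. Algorithm~\ref{alg:VSS_FC_A} already satisfies conditions~1 and~2 of Definition~\ref{def:s-rdv} in $\FC^A$, so no argument that attacks only the exchange phase and the containment requirement can succeed; condition~3 must be refuted. For that you offer only an expectation. The delay trick of Lemma~\ref{lem:mcv_not_fc_a} produces a \emph{safety} violation, namely an increase of the distance. Condition~3 is instead a liveness property: you must exhibit a fair infinite execution in which the robots never coincide permanently, and do so against every color-based protocol. Nothing in your sketch controls this:
\begin{itemize}
\item the rule of $a$ under the frozen color of $b$ need not contract (it may be $\lambda=0$, a deliberate wait);
\item an overshoot can be repaired by later moves;
\item you show neither that the distance stays bounded away from zero nor, what actually suffices, that coincidence is never permanent.
\end{itemize}

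The missing idea is a reduction rather than a construction. Your remark that gathering two robots in $\FC^A$ is possible in general is false under the assumptions of this paper. Stable Rendezvous is impossible in $\FC^A$ under variable disorientation for any number of colors \cite{MSW-rendezvous}. This misconception is what sent you looking for an obstruction in the interaction between $[A,B]$ and the stale jump.

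The paper's proof uses exactly that theorem. The points $A,B$ are computed from the initial positions and are not landmarks, so every Rendezvous instance with distinct positions is an S-RDV instance. Condition~3 is precisely stable rendezvous. Hence an algorithm solving S-RDV in $\FC^A$ would solve stable Rendezvous there, a contradiction. A self-contained version of your approach would amount to reproving that impossibility theorem, which your sketch does not attempt.
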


\begin{proof}
Stable Rendezvous is impossible in $\FC^A$ under variable
disorientation, regardless of the number of colors and without
restricting the algorithm to the $\mathcal L$-class
~\cite{MSW-rendezvous}.

Suppose, for contradiction, that an algorithm $\mathcal A$ solves
$\mathrm{S\text{-}RDV}$ in $\FC^A$. Every initial configuration of
Rendezvous with distinct robot positions is an instance of
$\mathrm{S\text{-}RDV}$, since the auxiliary points $A$ and $B$ are
determined solely from the initial positions and are not landmarks
available to the robots. Moreover, every execution satisfying
$\mathrm{S\text{-}RDV}$ eventually satisfies
\[
  a(t)=b(t)
  \qquad\text{for all sufficiently large }t,
\]
which is exactly the stable Rendezvous requirement.
Hence $\mathcal A$ would solve Rendezvous in $\FC^A$, a contradiction.
\end{proof}

\begin{lemma}
\label{lem:srdv_not_fs_r}
$\SRDV\notin\Sol(\FS^{R})$.
\end{lemma}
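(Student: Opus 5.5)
The plan is to reuse the alternating-singleton obstruction of Appendix~\ref{app:shared-exchange-phase}. Fix an arbitrary deterministic $\FS$ algorithm and the instance with coordinates $A=0$, $a_0=d$, $b_0=2d$, $B=3d$. As the adversary, choose the $\RSY$ execution with an empty fully synchronous prefix followed by alternating singleton activations $\{a\},\{b\},\{a\},\dots$ (Lemma~\ref{lem:altA}). Each cycle is completed before the next starts, and normalized frames are supplied at every Look (Lemma~\ref{lem:norm}(1)). Until the first nonzero $\M$, the configuration is unchanged. Each robot's output and next private state then depend only on its own state, so $a$ in round $2k-1$ and $b$ in round $2k$ traverse the same state sequence $s_0,s_1,\dots$.

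\emph{Case 1: no state in this sequence yields a nonzero displacement.} Then the fair execution has no nonzero $\M$ at all. Condition~2 of Definition~\ref{def:s-rdv} requires cumulative counts with $m_a(T_1)+m_b(T_1)\ge1$, so it fails; equivalently, apply Lemma~\ref{lem:p1_no_null}.

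\emph{Case 2: some state yields a nonzero displacement.} Let $s_j$ be the first such state; then $a$ performs the first nonzero $\M$, alone, in round $2j+1$. We split on where it lands.
\begin{itemize}
\item If it does not land at $B$, Lemma~\ref{lem:p1_first_move} shows that no continuation can satisfy \SRDV. This is because $a$ would need a second nonzero $\M$ to reach $B$ while staying inside $[A,B]$, contradicting the count $m_a(T_1)=1$.
\item If it lands at $B$, Lemma~\ref{lem:p1_replay} applies. In round $2j+2$, $b$ is still in state $s_j$ and receives the isomorphic normalized input $(e,s_j)$, so it applies the same displacement $(2,0)$ and lands at $4d\notin[A,B]$. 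This violates condition~1, a safety condition.
\end{itemize}

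In both finite-violation subcases, continue with the fair alternating activations (time-divergent unit-duration cycles). The safety violation persists under extension, by Lemma~\ref{lem:concat}(1). The resulting admissible fair execution of the algorithm in $\FS^R$ violates $\Phi_{\SRDV}$, so $\SRDV\notin\Sol(\FS^R)$.

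The main obstacle is the subcase where the first move does not land at $B$. There, the failure is not a finite safety violation visible at a single time; it is the impossibility of ever satisfying condition~2. I will handle it by noting that the obstruction in Lemma~\ref{lem:p1_first_move} holds for every execution. Hence any fair continuation fails: either some robot leaves $[A,B]$, or no valid $T_1$ exists.
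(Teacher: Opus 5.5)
Your proof is correct and follows the paper's own argument. You use the same alternating-singleton $\RSY$ execution with an empty synchronous prefix and normalized frames. Lemma~\ref{lem:p1_no_null} handles the case with no nonzero $\M$, Lemma~\ref{lem:p1_first_move} forces the first landing at $B$, and Lemma~\ref{lem:p1_replay} sends $b$ to $4d\notin[A,B]$. The only difference is presentation: you split explicitly on the landing point and observe that a non-$B$ landing dooms every continuation, while the paper assumes the algorithm solves \SRDV{} so that Lemma~\ref{lem:p1_first_move} forces the landing at $B$ and derives the contradiction.
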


\begin{proof}
Suppose an $\FS^{R}$ algorithm $\mathcal{A}$ solves \SRDV. The local
input of an $\FS$ robot is the pair consisting of the relative position
of the peer and its own state; the state of the peer is not included.
The adversary chooses the alternating singleton activation
$\{a\},\{b\},\{a\},\dots$ with an empty fully synchronous prefix
(Lemma~\ref{lem:altA}).

\emph{First stage.} The configuration is unchanged until the first
nonzero $\M$. The initial configuration is symmetric under the rotation
by $\pi$ about the midpoint combined with exchanging $a$ and $b$, and
this rotation preserves chirality, so in the normalized frame
(Lemma~\ref{lem:norm}(1)) both local inputs have the form
$(e,\text{own state})$ and are isomorphic. The two robots start from
the same initial state $s_0$, so $a$ in round $2k-1$ and $b$ in round
$2k$ both pass from $s_{k-1}$ to $s_k$. If no state occurring in
$s_0,s_1,\dots$ produces a nonzero $\M$, the configuration never
changes, contradicting Lemma~\ref{lem:p1_no_null}. Let $j$ be the least
index that produces a nonzero $\M$; then $a$ performs the first nonzero
$\M$ of the whole execution in round $2j+1$, from state $s_j$.

\emph{Second and third stages.} Since $\mathcal{A}$ is assumed to solve
\SRDV, Lemma~\ref{lem:p1_first_move} forces that $\M$ to land at
$B=3d$. The execution is an alternating singleton activation from the
initial configuration under normalized frames whose first nonzero $\M$
complies with Lemma~\ref{lem:p1_first_move}, which is exactly the
hypothesis of Lemma~\ref{lem:p1_replay}. Hence the robot $b$, activated
in the next round, receives an isomorphic local input, performs the
same displacement and lands at $4d$. Since $4d\notin[A,B]$,
condition~1 of Definition~\ref{def:s-rdv} is violated.
\end{proof}

\begin{lemma}
\label{lem:srdv_not_ob_f}
$\SRDV\notin\Sol(\OB^{F})$.
\end{lemma}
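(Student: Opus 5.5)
We prove $\SRDV\notin\Sol(\OB^F)$ by running, for an arbitrary deterministic $\OB$ algorithm $\mathcal A$, the fully synchronous execution of Lemma~\ref{lem:rigidA1} with normalized frames at every $\LK$ at which the robots are distinct. In this execution both robots apply one and the same pair $(\lambda,\theta)$ in every round. The configuration evolves by the similarity of Lemma~\ref{lem:ratios}(ii): the midpoint $c_0$ stays invariant, and $b-a$ is multiplied by the simultaneous ratio $c_2$ and rotated by a fixed angle.

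\emph{Case $\lambda=0$.} No nonzero $\M$ ever occurs. This contradicts Lemma~\ref{lem:p1_no_null}, so condition~2 of Definition~\ref{def:s-rdv} fails.

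\emph{Case $\lambda>0$.} In round~1 both robots observe the initial configuration and perform nonzero $\M$ operations whose completions are simultaneous. Hence each of them is a nonzero $\M$ with earliest completion time. Lemma~\ref{lem:p1_first_move} forces $a$ to land at $B$ and $b$ to land at $A$, each with normalized displacement $(\lambda,\theta)=(2,0)$. If $(\lambda,\theta)\neq(2,0)$, some robot lands outside $\{A,B\}$ in round~1, and the first-move requirement already fails.

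So assume $(\lambda,\theta)=(2,0)$. After round~1 the configuration is $C_3=(B,A)=(3d,0)$, with counts $(1,1)$. In round~2 the robots again apply $(2,0)$ to their peer vectors of length $3d$. Robot $a$ moves from $3d$ by $2(0-3d)$ and lands at $-3d$; symmetrically $b$ lands at $6d$. Both landing points lie outside $[A,B]=[0,3d]$, so condition~1 fails at a finite time. This is a safety violation, so the fair infinite fully synchronous execution witnesses failure.

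\emph{Possible gap.} Condition~1 constrains only geometry, so the case split seems exhaustive. One subtle point remains: when $(\lambda,\theta)\neq(2,0)$, the robots might still satisfy condition~2 later. Lemma~\ref{lem:p1_first_move} excludes this, because it applies to every execution satisfying \SRDV. Round~1's Moves are the earliest nonzero Moves, they observe the initial configuration, and they must therefore have normalized form $(2,0)$. The case split is therefore complete.

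The main step is this reduction of the round-one Move to $(2,0)$ via Lemma~\ref{lem:p1_first_move}. Once that is in place, the computation showing that round~2 overshoots the segment is immediate.
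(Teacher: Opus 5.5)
Your proof is correct and is essentially the paper's argument: Lemmas~\ref{lem:p1_no_null} and~\ref{lem:p1_first_move} force the round-one rule to be $(2,0)$, and the same rule applied in round~$2$ sends the robots to $-3d$ and $6d$, outside $[A,B]$. The paper justifies the repeated round-two rule by indistinguishability from the round-one input of the instance with initial distance $3d$, whereas you invoke the single normalized rule directly.

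One small slip: for $(\lambda,\theta)=(1,\pi)$ the robots land at $(A,B)$, so it is false that ``some robot lands outside $\{A,B\}$''. The correct reason the first-move requirement fails is that $a$ does not land at $B$ specifically, which is exactly what Lemma~\ref{lem:p1_first_move} demands.
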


\begin{proof}
Suppose an $\OB^{F}$ algorithm $\mathcal{A}$ solves \SRDV. The local
input of an $\OB$ robot is its current snapshot alone. In round~$1$ the
local input of both robots is ``the peer at distance $d$''. A zero $\M$
would leave all later inputs unchanged and the configuration fixed
forever, so by Lemma~\ref{lem:p1_no_null} the $\M$ is nonzero, and by
Lemma~\ref{lem:p1_first_move} the robot $a$ lands at $B$ and $b$ lands
at $A$, forming $C_3$ at distance $3d$.

The local input in round~$2$, ``the peer at distance $3d$'', coincides
with the round-$1$ local input of the initial instance of initial
distance $3d$, namely $A'=0$, $a'(0)=3d$, $b'(0)=6d$, $B'=9d$, and an
$\OB$ robot cannot distinguish the two. In that instance the same
argument as above shows that the first $\M$ is nonzero, so by
Lemma~\ref{lem:p1_first_move} the output is twice the observed distance
in the direction of the peer. Applying the same local rule in round~$2$
of the actual execution, $a$ lands at $3d-6d=-3d$ and $b$ lands at
$0+6d=6d$. Both lie outside $[A,B]=[0,3d]$, which violates condition~1
of Definition~\ref{def:s-rdv}.
\end{proof}

\subsubsection*{Solvability}

\begin{algorithm}
\caption{\SRDV for $\FC$ under $\SSY$ (robot $r$)}
\label{alg:S_RDV_FC_S}
\textbf{Model and scheduler:} $\FC$, $\SSY$. \\
\textbf{Light:} $r.\lcolor\in\{0,1,2,3\}$, initially $0$. \\
\textbf{$\LK$:} observe the peer $\overline r$ and obtain the distance
$D=\|\overline r-r\|$ and the color $\overline r.\lcolor$.\\
\textbf{$\CP$:}
\begin{algorithmic}[1]
\If{$D=0$}
  \State $r.\des\gets$ the current position of $r$
  \Comment{rule $\mathrm{R}$: keep the rendezvous}
\ElsIf{$\overline r.\lcolor=0$}
  \State $r.\des\gets$ the point at distance $2D$ from $r$ in the direction of $\overline r$
  \Comment{rule $0$: jump over}
  \State $r.\lcolor\gets1$
\ElsIf{$\overline r.\lcolor=1$}
  \State $r.\des\gets$ the midpoint of $r$ and $\overline r$
  \Comment{rule $1$: to the midpoint}
  \State $r.\lcolor\gets2$
\ElsIf{$\overline r.\lcolor=2$}
  \State $r.\des\gets$ the current position of $r$
  \Comment{rule $2$: stay}
  \State $r.\lcolor\gets3$
\Else
  \State $r.\des\gets$ the current position of $\overline r$
  \Comment{rule $3$: to the peer}
  \State $r.\lcolor\gets1$
\EndIf
\end{algorithmic}
\textbf{$\M$:} move to $r.\des$.
\end{algorithm}

\begin{lemma}
\label{lem:srdv_in_fc_s}
$\SRDV\in\Sol(\FC^{S})$, as witnessed by
Algorithm~\ref{alg:S_RDV_FC_S}.
\end{lemma}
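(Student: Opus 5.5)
The plan is to verify Definition~\ref{def:s-rdv} directly for Algorithm~\ref{alg:S_RDV_FC_S}. Under $\SSY$ every round is atomic, so I will track the configuration and the color pair $(c_a,c_b)$ at round boundaries only. Movements are collinear segments, so all positions stay on the initial line; I use the coordinate $A=0$, $a_0=d$, $b_0=2d$, $B=3d$ from Appendix~\ref{app:shared-exchange-phase}. The first observation is that a color never returns to $0$. Hence rule~$0$ is used by $r$ only while $\bar r$ has never been activated, which means $\bar r$ is still stationary at its initial position.

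\emph{Exchange phase and Condition~2.} I split on the first round. If both robots are activated, both see color $0$ and jump over. This forms $C_3=(B,A)$ with counts $(1,1)$, and I set $T_1$ to be the end of round~$1$. If only $a$ is activated, it lands at $2b_0-a_0=B$, forming $C_1$ with counts $(1,0)$. The case in which only $b$ is activated is symmetric and gives $C_2$. Before $T_1$ the robots are at $a_0,b_0$, so the designated-point clause holds trivially.

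\emph{Condition~1 (containment).} I will prove the invariant that both robots always lie in $[A,B]$. Rules $\mathrm{R}$ and $2$ are zero moves. Rules $1$ and $3$ send the robot to the midpoint or to the peer, which stays within the convex hull of the current positions and hence within $[A,B]$. For rule~$0$, the peer is at its initial point, say $b_0$. A robot using rule~$0$ has so far only applied rule~$0$, so it sits at $a_0$ or $B$, and a jump over $b_0$ swaps these two points. Repeated solo jumps of $a$ therefore oscillate between $d$ and $3d$, and a jump performed in the same round as the peer's first move uses the peer's pre-round position $b_0$, so the same swap applies.

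\emph{Condition~3 (rendezvous).} First, once $D=0$ at a round start, both robots apply rule~$\mathrm{R}$ and never move again. This holds even if they coincide with different colors, since rule $\mathrm{R}$ is checked first. So it suffices to show that a coincidence occurs at some round boundary after $T_1$; this time is $T_2$. After both robots have been activated once, both colors lie in $\{1,2,3\}$, and the color transition depends only on the peer's color: $1\to2$, $2\to3$, $3\to1$ with a move to the peer. I will enumerate the finite graph on color pairs whose edges are labelled by the activation sets $\{a\},\{b\},\{a,b\}$. Each edge is marked as \emph{coinciding} when it forces $D=0$. This happens when both robots apply rule~$1$ simultaneously, since they land on the same midpoint. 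It also happens when one robot applies rule~$3$ while the other is inactive or applies rule~$2$, since the target stays put. For example, from $(1,1)$ the set $\{a,b\}$ coincides, and $\{a\}$ leads to $(2,1)$. From $(2,1)$ the sets $\{b\}$ and $\{a,b\}$ lead to $(2,3)$. From $(2,3)$, any activation of $a$ coincides, and $b$ alone keeps the state at $(2,3)$, so fairness forces coincidence there. I also have to enter the pairs reached right after the exchange phase, such as $(1,0)$ followed by the first activation of $b$.

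The main obstacle is this enumeration. I must show that no fair infinite path in the color-pair graph avoids coinciding edges, and I must handle the non-coinciding cases in which rule~$3$ meets the peer's simultaneous rule~$1$. I will do this by listing the strongly connected components of the graph restricted to non-coinciding edges. Then I check that each component is missing an activation of some robot, so a fair schedule cannot remain in it forever. If some component turns out to be fair, I will need a geometric refinement there, for instance a decreasing distance, but I expect the color argument alone to suffice.
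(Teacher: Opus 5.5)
Your proposal is correct and follows essentially the same route as the paper's proof: rule-by-rule containment in $[A,B]$ (with the rule-$0$ swap between a robot's initial position and the far outer point), $T_1$ at the end of round~$1$, permanence of the rendezvous because rule $\mathrm{R}$ is tested first, and rendezvous by tracking the color pairs $(1,1)\to(1,2)/(2,1)\to(3,2)/(2,3)\to$ coincidence under fairness, which you merely systematize as a strongly-connected-component check. That check succeeds on colors alone, with no geometric refinement needed. The reachable non-coinciding pairs are $(1,0),(0,1),(1,1),(1,2),(2,1),(2,3),(3,2)$, and every nontrivial component is a self-loop that activates a single robot; the feared case of rule~$3$ meeting a simultaneous rule~$1$ never occurs, because $(1,3)$ and $(3,1)$ arise only at a coincidence.
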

\begin{proof}
Each robot carries a light with color set $\{0,1,2,3\}$,
initially $0$. Apart from the coincidence test, every rule
depends on the color of the peer alone.

\emph{(i) Both robots always lie in $[A,B]$.}
Under $\SSY$, all active robots compute from the configuration
at the beginning of the round.
Rules $\mathrm{R}$ and $2$ do not move, while rules $1$ and $3$
choose the midpoint and the peer's position, respectively.
These destinations belong to $[A,B]$ whenever both current
positions do.

For rule $0$, the peer still has color $0$ and remains
at its initial position.
While that color remains $0$, successive applications of
rule $0$ alternate the observing robot between its initial
position and the opposite outer point.
The possible configurations and destinations are
\[
\begin{aligned}
  (a,b)&=(d,2d): &&a\to3d=B, &
  (a,b)&=(3d,2d): &&a\to d=a_0,\\
  (a,b)&=(d,2d): &&b\to0=A, &
  (a,b)&=(d,0): &&b\to2d=b_0.
\end{aligned}
\]
Thus every destination lies in $[A,B]$.
Since each Move follows a straight segment, both robots
remain in $[A,B]$ throughout the execution.

\emph{(ii) The first phase.}
Let $T_1$ be the end of the first round.
If both robots are activated, $C_3$ is formed with cumulative
nonzero-move counts $(1,1)$.
If only $a$ is activated, $C_1$ is formed with counts $(1,0)$;
if only $b$ is activated, $C_2$ is formed with counts $(0,1)$.
Moreover, up to $T_1$, every stationary robot is at its initial
position or at its first destination, all of which belong to $P$.
Hence $T_1$ is a valid witness for the exchange phase.

\emph{(iii) The second phase.}
By fairness, both robots are activated within finite time.
At the end of the first round by which this has happened,
the color pair is $(1,1)$, $(1,2)$, or $(2,1)$.
From $(1,1)$, simultaneous activation brings both robots
to their common midpoint; a singleton activation produces
$(1,2)$ or $(2,1)$.

Consider $(1,2)$, the other case being symmetric.
Until $a$ is activated, only $b$ moves, repeatedly halving
the distance while preserving the color pair.
By fairness, $a$ is eventually activated: it sees color $2$,
stays in place, and changes its color to $3$.
Whether or not $b$ is activated in that round, the resulting
color pair is $(3,2)$.
Until the next activation of $b$, robot $a$ remains stationary.
When $b$ is activated, it sees color $3$ and moves to $a$;
if $a$ is activated simultaneously, it sees color $2$ and stays.
Thus the robots occupy the same point at the end of this round.
Fairness ensures that this occurs at some finite time $T_2>T_1$.

\emph{(iv) Stability of the rendezvous.}
At the round boundary $T_2$, both robots have completed
their Moves and occupy the same point.
Every subsequent Look therefore observes coincidence,
so rule $\mathrm{R}$ applies and neither robot moves again.
Hence $a(t)=b(t)$ for every $t\ge T_2$.
\end{proof}

Lemma~\ref{lem:srdv_not_fc_a}, Lemma~\ref{lem:srdv_not_fs_r},
Lemma~\ref{lem:srdv_not_ob_f} and Lemma~\ref{lem:srdv_in_fc_s} together
prove Lemma~\ref{lem:s-rdv-bounds}.

\subsection{Visit Swap and Stop (VSS)}
\label{app:vss}

\begin{definition}[VSS]
\label{def:six-vss}
Let $a_0=a(0)$ and $b_0=b(0)$ be distinct, and define
\(
  A_0=2a_0-b_0,
  \;
  B_0=2b_0-a_0.
\)
The four points $A_0,a_0,b_0,B_0$ occur in this order at equal
spacing on the initial line. Write $\gamma(t)=(a(t),b(t))$ and set
\[
\begin{aligned}
  \gamma_0&=(a_0,b_0), & \gamma_1&=(B_0,b_0),\\
  \gamma_2&=(a_0,A_0), & \gamma_3&=(B_0,A_0).
\end{aligned}
\]
Every completed Move must end at one of the four designated points.
The execution must first reach one of $\gamma_1,\gamma_2,\gamma_3$,
with cumulative nonzero move counts $(1,0)$, $(0,1)$, or $(1,1)$,
respectively. After this exchange configuration has been formed,
the robots must eventually remain permanently stationary in a
configuration belonging to
$\{\gamma_0,\gamma_1,\gamma_2,\gamma_3\}$.
The restriction to designated points concerns Move endpoints;
straight-line travel between them, including passage through the
peer's position, is allowed.
\end{definition}

\subsubsection*{Impossibility}

\begin{lemma}
\label{lem:vss_not_fs_r}
$\VSS\notin\Sol(\FS^{R})$.
\end{lemma}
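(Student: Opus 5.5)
The argument follows the proof of Lemma~\ref{lem:srdv_not_fs_r}, replacing the interval violation by the endpoint violation. Suppose, for contradiction, that an $\FS^R$ algorithm $\mathcal A$ solves $\VSS$. The adversary chooses the alternating singleton activation $\{a\},\{b\},\{a\},\dots$ with an empty fully synchronous prefix, which is admissible under $R$ by Lemma~\ref{lem:altA}. It also supplies normalized frames (Lemma~\ref{lem:norm}(1)) at every Look while the robots are distinct. This execution is fair and time-divergent, with each cycle given unit duration.

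\emph{First stage.} Until the first nonzero $\M$ the configuration is the initial one. It is symmetric under the chirality-preserving rotation by $\pi$ about the midpoint combined with exchanging $a$ and $b$. An $\FS$ robot's local input is the normalized peer position $e$ together with its own state, and the peer's state is not part of it. Hence $a$ in round $2k-1$ and $b$ in round $2k$ both pass from $s_{k-1}$ to $s_k$. If no state in this common sequence yields a nonzero displacement, the configuration never changes. This contradicts Lemma~\ref{lem:p1_no_null}, which applies because $\VSS$ contains $(\mathrm{P1})$. Otherwise let $j$ be the least index with a nonzero output. Then $a$ performs the first nonzero $\M$ of the execution in round $2j+1$, from state $s_j$, and $b$ is still in state $s_j$ at its next activation.

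\emph{Second and third stages.} Since $\mathcal A$ is assumed to solve $\VSS$, Lemma~\ref{lem:p1_first_move} forces that $\M$ to land at $B_0=3d$. Its Look observed the initial configuration, so its normalized displacement is $(2,0)$. The execution now satisfies exactly the hypotheses of Lemma~\ref{lem:p1_replay}. So $b$, activated in round $2j+2$, receives the isomorphic input $(e,s_j)$, applies the same displacement, and lands at $4d$. Since $4d\notin P=\{0,d,2d,3d\}$, a completed Move ends outside the designated points. This violates $(\mathrm{P1}\text{-}b)$, which for $\VSS$ holds throughout the execution. The violation is a safety violation, so it persists under the fair alternating continuation, contradicting the assumption that $\mathcal A$ solves $\VSS$.

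The only delicate point is confirming that Lemma~\ref{lem:p1_first_move} legitimately applies to $\VSS$, which has no explicit interval condition. That lemma's proof obtains containment in $[A,B]$ for $\VSS$ from straight-line moves whose endpoints lie in $P$. The argument by contradiction is consistent with this: we assume the execution satisfies $\VSS$ up to the first Move and derive the forced landing point, and the replayed move by $b$ then breaks the endpoint condition. Everything else is a direct reuse of the $\SRDV$ argument.
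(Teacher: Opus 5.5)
Your proof is correct and is essentially the paper's own argument: it reuses the alternating singleton execution and first stage of Lemma~\ref{lem:srdv_not_fs_r}, then uses Lemma~\ref{lem:p1_first_move} to force the landing at $3d$ and Lemma~\ref{lem:p1_replay} to send $b$ to $4d\notin P$, which contradicts the endpoint requirement of \VSS. One minor point about your closing remark: Lemma~\ref{lem:p1_first_move} needs the whole execution to satisfy \VSS, not just its part up to the first Move. The standing assumption that $\mathcal A$ solves \VSS supplies exactly this, so the argument is unaffected.
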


\begin{proof}
Use the alternating singleton execution of the proof of
Lemma~\ref{lem:srdv_not_fs_r}. Its first stage holds word for word,
since it uses only $(\mathrm{P1})$, so the first nonzero $\M$ of the
whole execution is performed by $a$ in round $2j+1$ under a normalized
frame. Since the algorithm is assumed to solve \VSS,
Lemma~\ref{lem:p1_first_move} forces that $\M$ to land at $B=3d$, and
Lemma~\ref{lem:p1_replay} then applies: the robot $b$, activated in the
next round, receives an isomorphic local input, performs the same
displacement and lands at $4d\notin P$. This contradicts the
requirement of Definition~\ref{def:six-vss} that every completed $\M$
end at one of the four designated points.
\end{proof}

\begin{lemma}
\label{lem:vss_not_ob_f}
$\VSS\notin\Sol(\OB^{F})$.
\end{lemma}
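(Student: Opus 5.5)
The plan is to follow the $\OB^F$ argument of Lemma~\ref{lem:srdv_not_ob_f}, replacing the interval requirement of \SRDV by the endpoint requirement $(\mathrm{P1}\text{-}b)$ of \VSS. Suppose an $\OB^F$ algorithm $\mathcal A$ solves \VSS. We use the fully synchronous execution with normalized frames of Lemma~\ref{lem:rigidA1}, so that both robots apply one and the same pair $(\lambda,\theta)$ whenever they are at distinct positions.

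\emph{Round~1.} In round~1 both robots receive the local input ``peer at $e$'', which is isomorphic for the two of them.
\begin{itemize}
\item If the output is zero, every later round repeats the same input. The configuration then never changes, which contradicts Lemma~\ref{lem:p1_no_null}.
\item Otherwise both robots perform nonzero Moves that complete simultaneously. These are the earliest nonzero Moves, so Lemma~\ref{lem:p1_first_move} applies to both of them, and each Look observes the initial configuration.
\item Lemma~\ref{lem:p1_first_move} then forces $a$ to land at $B=3d$ and $b$ at $A=0$, with normalized displacement $(\lambda,\theta)=(2,0)$.
\end{itemize}
Hence $C_3=(B,A)$ is formed at the end of round~1, and the distance is $3d>0$.

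\emph{Round~2.} At the start of round~2 the robots are distinct. The adversary again supplies normalized frames, so each robot receives exactly the round-1 input. An $\OB$ robot has no memory, so determinism yields the same displacement, twice the observed distance towards the peer. Thus $a$ lands at $3d-6d=-3d$ and $b$ lands at $0+6d=6d$. Neither point lies in $P=\{0,d,2d,3d\}$.

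This violates the requirement of Definition~\ref{def:six-vss} that every completed Move end at a designated point. The requirement is a safety condition, so the violation persists under the fair fully synchronous continuation. This contradicts the assumption that $\mathcal A$ solves \VSS.

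The main point needing care is the use of Lemma~\ref{lem:p1_first_move} in round~1. That lemma concerns Moves with the earliest completion time, and here the two Moves are simultaneous. The lemma explicitly allows the peer to complete at the same instant, so it applies to each robot separately. A further check is that round~2's input is genuinely indistinguishable from round~1's for an oblivious robot. This holds because normalization removes the change in distance from $d$ to $3d$; the \SRDV proof handled the same point by appeal to scale invariance across instances.
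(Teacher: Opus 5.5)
Your proposal is correct and follows the paper's proof, which also reuses the $\OB^F$ argument for \SRDV: round~1 is forced to $(B,A)$ by Lemmas~\ref{lem:p1_no_null} and~\ref{lem:p1_first_move}, and round~2 reapplies the rule $(2,0)$, so the robots land at $-3d$ and $6d$, both outside $P$. The only difference is cosmetic: you get the round-2 indistinguishability from normalized frames within the same execution, whereas the paper phrases it through the scaled initial instance of distance $3d$, and both are valid under variable disorientation.
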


\begin{proof}
As in the proof of Lemma~\ref{lem:srdv_not_ob_f}, in round~$1$ the
robot $a$ lands at $B$ and $b$ lands at $A$, forming $C_3$, and in
round~$2$ the robots cannot distinguish their input from the round-$1$
input of the initial instance of initial distance $3d$, so $a$ lands at
$-3d$ and $b$ lands at $6d$. Neither point belongs to $P$, so the
endpoint requirement of Definition~\ref{def:six-vss} is violated.
\end{proof}

\subsubsection*{Solvability}

\begin{algorithm}
\caption{\VSS for $\FC$ under $\ASY$ (robot $r$)}
\label{alg:VSS_FC_A}
\textbf{Model and scheduler:} $\FC$, $\ASY$. \\
\textbf{Light:} $r.\lcolor\in\{0,1\}$, initially $0$. \\
\textbf{$\LK$:} observe the peer $\overline r$ and obtain the distance
$D=\|\overline r-r\|$ and the color $\overline r.\lcolor$.\\
\textbf{$\CP$:}
\begin{algorithmic}[1]
\If{$\overline r.\lcolor=0$}
  \State $r.\des\gets$ the point at distance $2D$ from $r$ in the direction of $\overline r$
  \Comment{rule $0$: jump over}
  \State $r.\lcolor\gets1$
\Else
  \State $r.\des\gets$ the current position of $r$
  \Comment{rule $1$: stop}
  \State $r.\lcolor\gets1$
\EndIf
\end{algorithmic}
\textbf{$\M$:} move to $r.\des$.
\end{algorithm}

\begin{lemma}
\label{lem:vss_in_fc_a}
$\VSS\in\Sol(\FC^{A})$, as witnessed by
Algorithm~\ref{alg:VSS_FC_A}.
\end{lemma}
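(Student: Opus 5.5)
We must show that Algorithm~\ref{alg:VSS_FC_A} satisfies every clause of Definition~\ref{def:six-vss} in every admissible fair execution under $\ASY$, with arbitrary Look/Compute/Move delays and stale snapshots. The plan is to track the colors. Both robots start with color $0$, and a robot's own color becomes $1$ at its first Compute and never returns to $0$. Since an $\FC$ robot reads only the peer's light, the rule a robot executes at a given Look depends only on whether the peer has already completed its first Compute at that Look time. Hence each robot applies rule~$0$ at most once, namely at its first cycle and only if the peer is still colored $0$ then. Every later cycle applies rule~$1$, which is a zero Move. So each robot performs at most one nonzero Move, and that Move is computed from a snapshot taken at its first Look.

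The main step, and the expected obstacle, is to show that this snapshot always sees the peer at its initial position, so that the jump lands exactly on $B_0$ for $a$ or $A_0$ for $b$. Suppose $a$ applies rule~$0$. Then at its first Look $t_L(a,1)$ the peer's color is $0$, so $b$ has not yet performed its first Compute. Since $t_C(b,1)<t_B(b,1)$, robot $b$ has not started any Move and is still at $b_0$. Moreover, $a$ itself is at $a_0$, since this is its first cycle. Therefore $a$ moves to $a_0+2(b_0-a_0)=B_0$, and a delayed Compute or Move does not change this destination. Symmetrically, $b$ can only land at $A_0$. Every other Move is zero, so every nonzero Move ends in $\{A_0,B_0\}$, and the endpoint clause holds throughout. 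Passing through the peer during travel is explicitly allowed.

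Next we show that the exchange phase occurs. At least one robot applies rule~$0$: let $r$ be the robot with the earlier first Look, breaking ties arbitrarily. At that time the peer has not computed, so its color is $0$. The cases are then determined by which robots jump. If only $a$ jumps, the configuration becomes $\gamma_1$ with counts $(1,0)$ once its Move ends. If only $b$ jumps, we get $\gamma_2$ with counts $(0,1)$. If both jump, the final configuration is $\gamma_3$ with counts $(1,1)$. In the last case we must check that the intermediate configuration after the first landing, say $(B_0,b_0)$ with counts $(1,0)$, does not spoil the predicate. It is itself a valid $\gamma_1$ exchange configuration, so ``first reach one of $\gamma_1,\gamma_2,\gamma_3$'' is satisfied at the earliest landing time in every case. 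If the two Moves end simultaneously, $\gamma_3$ is formed directly.

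Finally, for termination, fairness gives each robot a completed first cycle, after which all its Moves are zero. Thus after the later of the (at most two) nonzero Moves ends, both robots are permanently stationary. The terminal configuration is $\gamma_1$, $\gamma_2$ or $\gamma_3$, all of which lie in $\{\gamma_0,\ldots,\gamma_3\}$. Time divergence ensures that this happens at a finite time. No unit of length or common orientation is used, since $2D$ toward the peer is frame-independent, so the algorithm is admissible under variable disorientation.
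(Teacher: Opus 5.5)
The first structural claim of your argument is false, and the rest of the proof depends on it. You claim that each robot applies rule~$0$ at most once, ``namely at its first cycle.'' This does not follow from your correct observation that the rule chosen at a Look depends only on whether the peer has already performed its first Compute.

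An \FC{} robot never reads its own light, so its own switch to color~$1$ changes nothing. The adversary may let one robot complete several cycles before the peer's first Compute; this is possible already under \SSY{}, by activating $a$ alone twice. Every such Look sees a peer of color~$0$ resting at its initial position, so rule~$0$ fires each time.

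Two executions show the consequences:
\begin{itemize}
\item Robot $a$ goes $a_0\to B_0\to a_0$. Then $b$'s first Look sees color~$1$ and $b$ never moves, so the execution ends in $\gamma_0$.
\item Robot $b$ Looks first and its Compute is delayed while $a$ jumps twice. Both robots ``jump,'' yet the execution ends in $\gamma_2$ with counts $(2,1)$.
\end{itemize}
Hence the following statements in your proof are false in general: that the firing robot stands at its initial position; that every nonzero Move ends in $\{A_0,B_0\}$; that there are at most two nonzero Moves; that ``both jump'' yields $\gamma_3$; and that the terminal configuration is one of $\gamma_1,\gamma_2,\gamma_3$.

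The lemma is still true, but every step must be argued without the at-most-once claim, as the paper does.
\begin{itemize}
\item \emph{Endpoints.} At any firing of rule~$0$ the peer has not yet computed, so it rests at its initial position. The firing robot rests at its last landing point. By induction the jump $x\mapsto 2b_0-x$ toggles $a$ between $a_0$ and $B_0$, and symmetrically $b$ between $b_0$ and $A_0$. So every endpoint lies in $\{A_0,a_0,b_0,B_0\}$.
\item \emph{Termination.} This needs a genuinely new argument. Rule~$0$ fires only at Looks preceding the peer's first Compute, which occurs at a finite time by fairness. Time divergence then leaves only finitely many firings.
\item \emph{Terminal configuration.} The parity of each robot's number of firings shows that the terminal configuration is one of $\gamma_0,\dots,\gamma_3$, all of which Definition~\ref{def:six-vss} admits.
\item \emph{Exchange phase.} Take the earliest completion of a nonzero Move, say by $a$; it lands at $B_0$. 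If $b$ has not begun a nonzero Move, then $\gamma_1$ holds at that moment with counts $(1,0)$. Otherwise $b$ is in its first Move toward $A_0$ and already carries color~$1$. All later cycles of $a$ are then zero Moves, so $\gamma_3$ with counts $(1,1)$ forms when $b$ lands.
\end{itemize}
Your claim that the earliest landing always yields $\gamma_1$ overlooks this case in which $b$ is in flight when $a$ lands.
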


\begin{proof}
Here $D$ is the distance recorded at the robot's own $\LK$.

\emph{(i) A robot of color $0$ is at its initial position and is not
moving.} The color of a robot is $0$ until its first $\CP$, and since
$t_L<t_C<t_B<t_E$, its first $\M$ starts after that $\CP$. Hence an
observation of a peer of color $0$ is an observation of a robot that
has not begun any $\M$. In particular the observations during a peer's
movement that $\ASY$ permits cannot affect the decision of rule $0$,
and no atomicity assumption is used below.

\emph{(ii) When rule $0$ fires, $D=d$ and the destination belongs to
$P$.} By~(i) the peer is at its initial position, and the observing
robot is itself not moving at its own $\LK$, so by induction it stands
at a landing point of $P$. The possible cases and landing points are
the four listed in part~(i) of the proof of
Lemma~\ref{lem:srdv_in_fc_s}, all of which belong to $P$. In
particular each firing of rule $0$ toggles the robot between its own
initial position and $B$, respectively between its own initial position
and $A$. The destination is computed from the snapshot taken at the
$\LK$, so delaying the $\M$ (Corollary~\ref{cor:stale}) does not change
the landing point.

\emph{(iii) The first phase.} Let $t$ be the earliest completion of a nonzero Move,
say by $a$. By (ii), $a(t)=B$.
If $b$ has not begun a nonzero Move, then $C_1$ holds at $t$.
Otherwise, $b$ is executing its first nonzero Move towards $A$
(or completes it at $t$), and its color is already $1$.
Hence $a$ remains at $B$ until $b$ completes this Move,
when $C_3$ holds with cumulative counts $(1,1)$.
The case in which $b$ completes first is symmetric.


\emph{(iv) Permanent stopping.} A robot applies rule $0$ only if its
$\LK$ took place before the first $\CP$ of its peer. That $\CP$ occurs
at some finite time $T$, and each cycle is finite, so each robot
performs only finitely many $\LK$ operations before $T$ and therefore
fires rule $0$ finitely often. Under either rule the activated robot
sets its color to $1$, and after the last of these finitely many
nonzero $\M$ operations has been completed, only rule $1$ fires and
neither robot moves again. Hence $\Stopped{a}$ and $\Stopped{b}$ hold.

\emph{(v) The terminal configuration.} By~(ii), a robot that fired rule
$0$ an even number of times is back at its initial position, and one
that fired it an odd number of times stands at $B$, respectively at
$A$. The four combinations are exactly $C_0,C_1,C_2,C_3$, and
Definition~\ref{def:six-vss} admits each of them as a terminal
configuration.
\end{proof}

Lemma~\ref{lem:vss_not_fs_r}, Lemma~\ref{lem:vss_not_ob_f} and
Lemma~\ref{lem:vss_in_fc_a} together prove Lemma~\ref{lem:six-bounds}\,(\ref{lem:six-vss-bounds}).

%% file: Auxiliary-Execution-Constructions.tex
\subsection{Normalization and Distance Updates}
\label{app:normalization}

Throughout this appendix we fix two robots $a,b$ at distinct positions
and write $d=\|a-b\|$ for their distance. We call the color-dependent
part of a robot's observation its \emph{visible state} and denote it by
$\gamma$. Thus $\OB$ has no visible state, the visible state of an $\FS$
robot is its own color, that of an $\FC$ robot is the color of its peer,
and that of a $\LU$ robot is the pair of both colors. Each lemma below
carries applicability conditions on the starting state, namely the
positions, the colors, and the pending operations, and on the scheduler
class. When a lemma is applied inside a construction, these conditions
are verified against the state actually reached by the preceding prefix.

\begin{lemma}[Normalization]
\label{lem:norm}
Fix a reference vector $e\neq0$ of the plane. For every algorithm
$\mathcal A$ there are constants $\lambda_\gamma\ge0$ and
$\theta_\gamma$, determined by $\mathcal A$ alone, and a map $g$ on
visible states, such that the following hold.
\begin{enumerate}
\item[(1)] Suppose that at an activation of a robot $r$ the peer is at
  global vector $w\neq0$ and the visible state of $r$ is $\gamma$.
  Then the adversary may choose the local coordinate system of $r$ so
  that the global displacement of $r$ is
  \[
    v=\lambda_\gamma R_{\theta_\gamma}(w),
  \]
  where $R_\theta$ denotes the rotation by angle $\theta$, and the
  color set by $r$ is $g(\gamma)$.
\item[(2)] If $w=0$, the local output is determined by $\gamma$ alone,
  but the adversary may choose the direction and the magnitude of the
  resulting global displacement. In particular, two robots with equal
  visible states can be given equal global displacements.
\end{enumerate}
Unless stated otherwise, we use this coordinate system, called the
\emph{normalized frame}.
\end{lemma}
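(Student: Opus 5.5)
The statement to prove is the Normalization Lemma~\ref{lem:norm}. The plan is to build the constants from the algorithm's output at one fixed canonical input, and then use the adversary's freedom under variable disorientation to transport that output to an arbitrary configuration.

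First, fix the reference vector $e\neq0$. A robot's observation consists of the peer's local position $p_r$ and its visible state $\gamma$, so the deterministic Compute maps the pair $(p_r,\gamma)$ to a local destination $q$ and a new color.
\begin{itemize}
\item Evaluate this map at the input $(e,\gamma)$, obtaining a destination $q_\gamma$ and a new color $g(\gamma)$.
\item Since $e\neq0$, write $q_\gamma=\lambda_\gamma R_{\theta_\gamma}(e)$ with $\lambda_\gamma=\|q_\gamma\|/\|e\|\ge0$ and $\theta_\gamma$ the angle from $e$ to $q_\gamma$.
\item If $q_\gamma=0$, set $\lambda_\gamma=0$ and $\theta_\gamma=0$.
\end{itemize}
These quantities depend only on $\mathcal A$ and $\gamma$.

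Second, for part (1), let the peer sit at global vector $w\neq0$.
\begin{itemize}
\item There is a unique orientation-preserving similarity $S$ (rotation followed by positive scaling) with $S(w)=e$. Choose the robot's local frame at this Look to be $S$, composed with translation to the robot's position.
\item VD allows rotation and scaling to be chosen freely at each activation, and $S$ preserves handedness, so this frame is admissible under chirality.
\item The robot then observes $(e,\gamma)$ and computes $q_\gamma$ and $g(\gamma)$.
\item The global displacement is $S^{-1}(q_\gamma)=\lambda_\gamma S^{-1}R_{\theta_\gamma}(e)$.
\item Because $S^{-1}$ is a rotation times a scaling, it commutes with $R_{\theta_\gamma}$, so this equals $\lambda_\gamma R_{\theta_\gamma}(S^{-1}e)=\lambda_\gamma R_{\theta_\gamma}(w)$, as claimed.
\end{itemize}
The color update $g(\gamma)$ is read off directly.

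Third, for part (2), with $w=0$ the observation is $(0,\gamma)$, a fixed input, so the local destination $q^0_\gamma$ depends on $\gamma$ alone.
\begin{itemize}
\item If $q^0_\gamma=0$, there is nothing to choose.
\item Otherwise, a frame whose rotation and scale are picked by the adversary maps $q^0_\gamma$ to any prescribed nonzero global vector; the observation $0$ is unaffected by the frame.
\item Hence two robots with equal visible states can be given frames producing the same global displacement.
\end{itemize}

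The only delicate point is that the frame must be a similarity fixing handedness. This makes the commutation $S^{-1}R_\theta=R_\theta S^{-1}$ valid, and it is also why one constant angle $\theta_\gamma$ suffices. Since that is immediate for planar similarities, I expect no real obstacle here. The remaining care is in stating that the choice is made per activation, which VD explicitly permits.
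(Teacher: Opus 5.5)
Your proposal is correct and is essentially the paper's own proof. Like the paper, you evaluate the algorithm at the canonical input $(e,\gamma)$, take the orientation-preserving similarity sending $w$ to $e$ as the local frame, and use the fact that planar similarities commute with rotations. Your part (2) likewise matches the paper: the zero observation does not depend on the frame, while the adversary's choice of rotation and scale controls the global image of a nonzero local output.
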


\begin{proof}
(1) Place $r$ at the origin. By chirality there is a unique
orientation-preserving similarity $\psi$ of the plane that maps $w$ to
$e$, and variable disorientation admits $\psi$ as a local coordinate
system. The local input is then $(e,\gamma)$, so by determinism the
output is a fixed pair $(q_\gamma,g(\gamma))$. Writing
$q_\gamma=\lambda_\gamma R_{\theta_\gamma}(e)$ and observing that
$\psi^{-1}$ is a composition of a rotation and a positive scaling,
hence commutes with rotations, the global displacement is
\[
  \psi^{-1}(q_\gamma)
  =\lambda_\gamma R_{\theta_\gamma}\bigl(\psi^{-1}(e)\bigr)
  =\lambda_\gamma R_{\theta_\gamma}(w).
\]
(2) If $w=0$, the geometric part of the input is the origin in every
local coordinate system, so the output $q$ depends only on $\gamma$.
The global displacement $\psi^{-1}(q)$ changes direction and magnitude
with the choice of $\psi$, and choosing the same $\psi$ for two robots
with equal visible states makes the two displacements equal.
\end{proof}

\begin{lemma}[Contraction ratios]
\label{lem:ratios}
Put
\[
  c_1(\gamma):=\sqrt{1-2\lambda_\gamma\cos\theta_\gamma+\lambda_\gamma^{2}},
  \qquad
  c_2(\gamma):=\sqrt{1-4\lambda_\gamma\cos\theta_\gamma+4\lambda_\gamma^{2}}.
\]
Under normalized frames the following hold.
\begin{enumerate}
\item[(i)] If exactly one robot with visible state $\gamma$ completes a
  cycle while its peer is stationary, then $d\mapsto c_1(\gamma)\,d$.
\item[(ii)] If two robots, both with visible state $\gamma$, complete
  cycles simultaneously from the same snapshot, then
  $d\mapsto c_2(\gamma)\,d$ and the midpoint is invariant.
\end{enumerate}
Moreover
\[
  c_1(\gamma)=0\iff(\lambda_\gamma,\theta_\gamma)=(1,0),
  \qquad
  c_2(\gamma)=0\iff(\lambda_\gamma,\theta_\gamma)=(1/2,0).
\]
\begin{enumerate}
\item[(iii)] Identify the plane with the complex plane and $R_\theta$
  with multiplication by $e^{\mathrm i\theta}$. The transformations of
  the difference vector in (i) and (ii) are multiplication by
  \[
    \sigma_1(\gamma):=1-\lambda_\gamma e^{\mathrm i\theta_\gamma},
    \qquad
    \sigma_2(\gamma):=1-2\lambda_\gamma e^{\mathrm i\theta_\gamma},
  \]
  and $c_j(\gamma)=|\sigma_j(\gamma)|$. When $c_j(\gamma)\neq0$, put
  $\psi(\gamma):=\arg\sigma_1(\gamma)$ and
  $\varphi(\gamma):=\arg\sigma_2(\gamma)$. If
  $\psi\not\equiv0\pmod{2\pi}$, then there is $k\ge1$ with
  $\cos(k\psi)\le-\tfrac12$, and for every $\theta_0$ there is $k\ge1$
  with $\cos(\theta_0+k\psi)\le0$. The same statements hold for
  $\varphi$.
\end{enumerate}
\end{lemma}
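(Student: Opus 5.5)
The plan is to work throughout in the complex-plane identification of part~(iii), so that all three parts reduce to elementary identities for the two multipliers $\sigma_1(\gamma),\sigma_2(\gamma)$. I would prove the multiplier form first and then read off (i), (ii) and the zero criteria from it.

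For (i), let $r$ be the moving robot and let $w$ be the global vector from $r$ to its stationary peer. By Lemma~\ref{lem:norm}(1), the displacement of $r$ in the normalized frame is $\lambda_\gamma e^{\mathrm i\theta_\gamma}w$. The new vector from $r$ to its peer is therefore
\[
  w-\lambda_\gamma e^{\mathrm i\theta_\gamma}w=\sigma_1(\gamma)\,w .
\]
Expanding $|1-\lambda e^{\mathrm i\theta}|^2=1-2\lambda\cos\theta+\lambda^2$ gives $c_1(\gamma)=|\sigma_1(\gamma)|$.

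For (ii), put $w=b-a$. Both robots read the same snapshot and have the same visible state $\gamma$. Then $a$ is displaced by $\lambda_\gamma e^{\mathrm i\theta_\gamma}w$, and $b$, which sees its peer at $-w$, is displaced by $-\lambda_\gamma e^{\mathrm i\theta_\gamma}w$. The two displacements sum to zero, so the midpoint is invariant. The new difference vector is
\[
  w-2\lambda_\gamma e^{\mathrm i\theta_\gamma}w=\sigma_2(\gamma)\,w ,
\]
and expanding its modulus yields $c_2(\gamma)$. The zero criteria follow at once: $\sigma_1=0$ iff $\lambda e^{\mathrm i\theta}=1$, that is $(\lambda,\theta)=(1,0)$, and $\sigma_2=0$ iff $2\lambda e^{\mathrm i\theta}=1$, that is $(\lambda,\theta)=(1/2,0)$, with $\theta$ read modulo $2\pi$.

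The only nontrivial step is the orbit statement in (iii), and I expect it to be the main obstacle, though it is routine. I would split according to whether $\psi/2\pi$ is rational, and argue identically for $\varphi$.

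\emph{Irrational case.} If $\psi/2\pi$ is irrational, the points $\{k\psi \bmod 2\pi : k\ge1\}$ are dense in the circle by Kronecker's theorem. They therefore meet the open arc where $\cos<-\tfrac12$, and, for any $\theta_0$, the arc where $\cos(\theta_0+\cdot)<0$.

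\emph{Rational case.} If $\psi=2\pi p/q$ in lowest terms, then $q\ge2$ because $\psi\not\equiv0$. The multiples $k\psi$ for $k=1,\dots,q$ run through every multiple of $2\pi/q$, so consecutive points are spaced $2\pi/q\le\pi$ apart.
\begin{itemize}
\item For the first claim: if $q=2$, then $k=1$ gives $\cos\psi=-1$. If $q\ge3$, the spacing is at most $2\pi/3$. The closed arc $[2\pi/3,4\pi/3]$, on which $\cos\le-\tfrac12$, has length exactly $2\pi/3$, so it contains a grid point.
\item For the second claim: the closed arc of $x$ with $\cos(\theta_0+x)\le0$ has length $\pi\ge2\pi/q$, so it also contains a grid point.
\end{itemize}
In every case the witnessing index satisfies $k\ge1$, as required.
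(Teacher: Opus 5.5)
Your proof is correct and follows the paper's route: the same complex-multiplier computation gives (i), (ii) and the zero criteria, and the same irrational/rational split handles the orbit statements in (iii). The only difference is in the rational case of the bound $\cos(k\psi)\le-\tfrac12$ with $q\ge3$. You use the spacing-versus-arc-length argument on the closed arc $[2\pi/3,4\pi/3]$, whereas the paper splits by the parity of $q$ and locates a grid point within $\pi/q$ of $\pi$; both arguments are valid.
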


\begin{proof}
Let $w:=b-a$ and $\hat u:=w/d$.

(i) By Lemma~\ref{lem:norm}(1) the new difference vector is
$d(\hat u-\lambda R_\theta\hat u)$, whose squared norm is
$d^{2}(1-2\lambda\cos\theta+\lambda^{2})$.

(ii) The displacement of $b$ is the negative of that of $a$, so the new
difference vector is $d(\hat u-2\lambda R_\theta\hat u)$ and $a+b$ is
unchanged. The conditions for vanishing follow immediately.

(iii) The first assertion is the computation in (i) and (ii) read as
multiplication of complex numbers, together with
$|1-\lambda e^{\mathrm i\theta}|^{2}=1-2\lambda\cos\theta+\lambda^{2}$
and
$|1-2\lambda e^{\mathrm i\theta}|^{2}=1-4\lambda\cos\theta+4\lambda^{2}$.

For the second assertion, suppose first that $\psi/2\pi$ is irrational.
Then $\{k\psi\bmod2\pi\}_{k\ge1}$ is dense in $[0,2\pi)$: by the
pigeonhole principle some integer multiple of $\psi$ is arbitrarily
close to $0$, and its own multiples then fill the circle. Both
assertions follow. Suppose next that $\psi/2\pi=p/q$ in lowest terms
with $q\ge2$. Then
$\{k\psi\bmod2\pi\}_{k\ge1}=\{2\pi j/q:0\le j<q\}$. If $q$ is even,
the value $j=q/2$ belongs to this set and $\cos\pi=-1$. If $q$ is odd,
and therefore $q\ge3$, the points are spaced by $2\pi/q$, so one of
them lies within $\pi/q$ of $\pi$, where
$\cos\le-\cos(\pi/q)\le-\cos(\pi/3)=-\tfrac12$. For the version with an
offset, the set $\{\theta_0+k\psi\bmod2\pi\}_{k\ge1}$ consists of $q$
equally spaced points with gap $2\pi/q\le\pi$, so the closed arc
$[\pi/2,3\pi/2]$, which has length $\pi$, contains at least one of
them, and there $\cos\le0$. The argument for $\varphi$ is identical.
\end{proof}

From now on we fix the reference vector of Lemma~\ref{lem:norm} to be
$e=(1,0)$. The choice is only for concreteness: the arguments below use
nothing about $e$ beyond the existence, for each visible state, of the
pair $(\lambda_\gamma,\theta_\gamma)$ supplied by that lemma, and the
displacement of an activation is $\lambda_\gamma R_{\theta_\gamma}(w)$
for the actual peer vector $w$.

\begin{remark}[The rule of $\mathrm{EPG}^{*}$ in a normalized frame]
\label{rem:epgs-normalized-rule}
Write $(\lambda^{\star},\theta^{\star}):=(\tfrac12,\pi)$, the
displacement that moves a robot away from its observed peer by half the
observed distance; this is the rule of Algorithm~\ref{alg:epgs}. In a
round in which both robots are activated from the same configuration
and apply the same rule $(\lambda,\theta)$, the midpoint is preserved
and the difference vector is multiplied by
$\sigma_2=1-2\lambda e^{\mathrm{i}\theta}$
(Lemma~\ref{lem:ratios}(ii),(iii)). The expansion condition
$d_{i+1}=2d_i$ of Definition~\ref{def:epgs} therefore holds at such a
step if and only if $\sigma_2=2$, that is if and only if
$(\lambda,\theta)=(\lambda^{\star},\theta^{\star})$. In that case
$\sigma_1=1-\lambda^{\star}e^{\mathrm{i}\theta^{\star}}=\tfrac32$, so a
solo activation multiplies the distance by $c_1^{\star}=\tfrac32$ and
$\psi^{\star}\equiv0$.
\end{remark}

\subsection{Admissible Execution Segments}
\label{app:admissible-segments}

The constructions of Appendices~\ref{app:cf-two-task-disjunctions}
and~\ref{app:cf-three-task-disjunctions} are concatenations of a small
number of execution segments. This subsection records each segment
together with the scheduler classes that admit it and the state it
requires at its beginning. Recall from
Section~\ref{subsec:schedulers-events} that we write
$F,R,S,A,A_{LC},A_M$ for the scheduler classes, and that the phase
times of the $i$-th cycle of a robot $r$ are denoted
$t_L(r,i)<t_C(r,i)<t_B(r,i)<t_E(r,i)$.

\begin{lemma}[Solo activation interval]
\label{lem:solo}
Let $K\in\{S,A_{LC},A_M,A\}$, and suppose that at the end of a finite
admissible prefix the robot $b$ has no pending operation. For every
$k\ge1$ there is an execution that extends this prefix by an interval
in which $a$ completes $k$ consecutive $\LCM$ cycles while $b$ is never
activated, followed by a fair tail. Throughout this interval the
position and the color of $b$ are unchanged, and $\Stop_b(t)$ holds at
every time $t$ of the interval.
\end{lemma}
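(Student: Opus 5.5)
The plan is to build the required execution explicitly from the formal timing definitions of Appendix~\ref{app:model-conventions-schduler}. The prefix is kept exactly as given. After it, robot $a$ receives $k$ complete cycles placed at times of our choosing, and $b$ receives no phase at all during this interval. Afterwards we append a fair tail in which both robots alternate complete, non-overlapping cycles of unit duration. Then we check, one scheduler class at a time, that the concatenation is admissible.

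First fix the end time $T_0$ of the prefix. Let $a$ finish any operation of its own that is still pending, at finite times consistent with the prefix. Then give $a$ the cycles $i=1,\dots,k$ with phase times
\[
  T_0+i-1<t_L<t_C<t_B<t_E<T_0+i .
\]
This satisfies the local order $t_L<t_C<t_B<t_E<t_L(\text{next})$, and each operation has finite duration. Because $b$ has no pending operation at $T_0$ and performs no Look, Compute or Move in $[T_0,T_0+k]$, its position is constant on that interval. Its color is also constant, since colors change only at $b$'s own Compute. Moreover, $b$ lies in no Move interval there, so $\Stop_b(t)$ holds for every $t$ in the interval, by the definition of $\MoveSet{b}$ in Appendix~\ref{app:model-conventions-predicate}. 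After time $T_0+k$ we append the alternating unit-duration cycles. Each robot is then activated infinitely often, and time diverges.

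The remaining work is to verify the atomicity constraints. For $A$ there is nothing to check. For $A_{LC}$, $A_M$ and $CM$-atomic $A$, every forbidden interval is built from the cycle times of one robot, and the constraint says the other robot may not Look inside it. In the inserted interval $b$ performs no Look at all. When $a$ Looks, $b$'s last cycle ended before $T_0$, so $b$ is not inside any of its protected intervals. In the tail, cycles are disjoint, so no Look falls inside a peer's cycle.

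For $S$, the same argument applies once each inserted cycle of $a$ is read as a round with activation set $\{a\}$. The hypothesis that $b$ has no pending operation is exactly what makes the prefix end at a round boundary, so that the concatenation is a legal sequence $E_1,E_2,\dots$ with the $k$ rounds $\{a\}$ and then alternating singletons. Fairness holds because the tail activates both robots infinitely often.

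The main obstacle is the junction at $T_0$. In the asynchronous classes, $a$ itself might have a pending operation at $T_0$, for example a Look already taken whose Compute or Move is still pending. I would let that cycle complete first and count it among the $k$ cycles only if it is a full cycle; otherwise I would add it before the $k$ cycles. I would also check that finishing it cannot violate atomicity with respect to $b$, which holds because $b$ is idle and performs no Look.
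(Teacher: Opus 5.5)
Your proposal is correct and follows the paper's proof. Under \SSY{} you take $k$ rounds with activation set $\{a\}$; under the asynchronous classes you place $a$'s $k$ cycles inside a delay of $b$'s next $\LK$, so that no atomicity constraint can be triggered; and you append a fair tail. Your extra care about the junction at $T_0$ (completing any pending operation of $a$, or the current $\{a\}$ round under \SSY{}, before the $k$ cycles) and about time divergence only makes explicit details that the paper leaves implicit.
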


\begin{proof}
Under $\SSY$ it suffices to take $\{a\}$ as the activation set of $k$
consecutive rounds. Under $\ASY$ and its subclasses, delay the next
$\LK$ of $b$ and place the $k$ cycles of $a$ one after another inside
that delay. The atomicity constraints only forbid a $\LK$ of one robot
during prescribed phases of the other robot's cycle, and $b$ executes
no cycle during the interval, so no constraint is violated. Appending a
tail that activates both robots infinitely often makes the execution
fair.
\end{proof}

The class $K=R$ is excluded. Indeed, in an $\RSY$ execution the same
robot cannot be activated twice in a row after the fully synchronous
prefix, by Lemma~\ref{lem:altA} below.

\begin{lemma}[Delay before a phase]
\label{lem:suspend}
Under $\ASY$ and its subclasses, the adversary may insert a finite
delay before any of the first three phase times of a cycle of a robot
$r$, that is, before $t_L(r,i)$, inside $(t_L(r,i),t_C(r,i))$, or
inside $(t_C(r,i),t_B(r,i))$, and place events of the other robot
inside that delay, within the limits imposed by the atomicity
constraints. No delay is inserted inside a $\M$ operation: once the
$\M$ of $r$ begins at $t_B(r,i)$, the robot proceeds to the destination
computed at $t_C(r,i)$ and the $\M$ ends at $t_E(r,i)$ without
interruption. The values used by $r$ in the later phase are those
obtained in the earlier one.
\end{lemma}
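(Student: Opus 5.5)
The statement to be proved is Lemma~\ref{lem:suspend}, the delay lemma. The plan is to verify directly against the formal timing definitions of Appendix~\ref{app:model-conventions-schduler} that the proposed retiming yields an admissible execution.

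\textbf{Constraints on unrestricted $\ASY$.} Under $\ASY$ the adversary chooses the phase times $t_L(r,i)<t_C(r,i)<t_B(r,i)<t_E(r,i)<t_L(r,i+1)$ subject only to the following:
\begin{itemize}
\item the strict ordering;
\item finite durations;
\item fairness;
\item continuous rigid motion during $[t_B,t_E]$.
\end{itemize}
Take an admissible prefix up to a phase time $\tau\in\{t_L(r,i),t_C(r,i),t_B(r,i)\}$. Replace $\tau$, and every later event of $r$, by its value shifted by a finite $\Delta>0$. The ordering of $r$'s own phases is preserved. The inserted interval is a finite extension of a gap between consecutive phases, or before $t_L$, so durations stay finite. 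The other robot $\bar r$ may have its events placed in the gap arbitrarily. It simply observes $r$ at rest at its current position, or at its pending destination's start point, since $r$ is not inside a Move then. Fairness is restored by an arbitrary fair tail, as in Lemma~\ref{lem:solo}.

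\textbf{No delay inside a Move.} This clause is a convention of the model rather than a claim needing proof. Rigidity and continuity mean the segment $[p_B,p_E]$ is traversed without interruption during $[t_B,t_E]$. We only ever insert slack strictly before $t_B$, so no nonzero Move is halted. The clause on stale values follows from determinism and the phase semantics. The Compute at $t_C$ uses the snapshot taken at $t_L$, and the Move uses the destination fixed at $t_C$. Delaying a later phase therefore cannot change these values, even if $\bar r$ moves meanwhile; this is the content later cited as Corollary~\ref{cor:stale}.

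\textbf{Atomic subclasses.} Each atomic subclass ($A_{LC}$, $A_M$, $A_{CM}$) only forbids a Look of $\bar r$ inside a specified window of $r$'s cycle, or of $\bar r$'s cycle relative to $r$. Inserting a delay does not create new Looks of $r$, and it moves $r$'s windows only by translation. The events of $\bar r$ that we choose to place in the delay are exactly the ones to be checked against those windows. This is the sense of ``within the limits imposed by the atomicity constraints''. For example, under $A_{LC}$ a delay in $(t_L,t_C)$ may contain no Look of $\bar r$, whereas under $A_M$ it may. The main (mild) obstacle is bookkeeping: ensuring that shifting $r$'s later events does not push some previously placed Look of $\bar r$ into a newly forbidden window. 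To handle this, the other robot's subsequent events will be shifted by the same $\Delta$. This preserves all relative orders outside the inserted gap, so every atomicity condition holding before the insertion continues to hold.
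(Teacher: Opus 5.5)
Your proposal is correct and follows the paper's proof. The phase times are chosen by the adversary subject only to the ordering and finiteness constraints, uninterruptibility of a Move is part of the model so all slack goes before $t_B(r,i)$, and the stale-value clause is just determinism of Compute. Your shifting bookkeeping for the atomic subclasses goes beyond what the paper writes, and strictly the window containing the inserted delay is stretched rather than translated; but since the adversary builds the execution forward, it reduces to checking the events you place inside the delay against the atomicity constraints, which you do.
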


\begin{proof}
The times $t_L,t_C,t_B,t_E$ are chosen by the adversary subject only
to the ordering constraint and to finiteness of each phase
(Section~\ref{subsec:schedulers-events}); in particular the adversary
controls the duration $t_E(r,i)-t_B(r,i)$ of a $\M$ operation, but not
an interruption of it, which is why every delay used here is placed
before $t_B(r,i)$. The second assertion is the determinism of the
algorithm: the destination and the new color computed at $t_C(r,i)$ do
not depend on anything occurring afterwards.
\end{proof}

\begin{corollary}[Delayed move with a stale destination]
\label{cor:stale}
Let $K\in\{A_{LC},A_M,A\}$. The adversary can construct a fair
execution in which $a$ computes its destination at $t_C(a,i)$, is then
delayed, the robot $b$ completes at least one nonzero $\M$ during the
delay, and only afterwards does $a$ move to the destination it had
already computed. The delay is placed before the $\M$ of $a$ begins,
and the $\M$ itself, once begun, is carried out in full. Writing
$w_{\mathrm{old}}$ and $w_{\mathrm{new}}$ for the vector from $a$ to
$b$ at $t_C(a,i)$ and at $t_B(a,i)$ respectively, the difference vector
after the move of $a$ is
\[
  w_{\mathrm{new}}-\lambda_\gamma R_{\theta_\gamma}(w_{\mathrm{old}}).
\]
\end{corollary}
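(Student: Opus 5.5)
The statement is Corollary~\ref{cor:stale}, so I would derive it by specializing Lemma~\ref{lem:suspend} and then doing a short geometric computation.

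Fix a cycle $i$ of $a$ and place a delay inside $(t_C(a,i),t_B(a,i))$, which Lemma~\ref{lem:suspend} allows.

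\textbf{Step 1: fit a cycle of $b$ into the delay.} Assume $b$ has no pending operation at $t_C(a,i)$; this is arranged by the preceding prefix. Inside the delay I schedule one complete cycle of $b$, so that $t_C(a,i)<t_L(b,j)<t_E(b,j)<t_B(a,i)$. I then check each class. For $A$ there is nothing to check. For $A_{LC}$ the only forbidden interval for $b$'s Look is $(t_L(a,i),t_C(a,i)]$, and $t_L(b,j)$ lies after it. For $A_M$ the forbidden interval is $[t_B(a,i),t_E(a,i)]$, which the cycle of $b$ precedes entirely. The symmetric constraints on $a$'s Looks are not affected, because $a$ takes no Look in this window.

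\textbf{Step 2: make $b$'s Move nonzero.} I need the cycle of $b$ to produce a nonzero Move. If $b$'s rule in its current state is nonzero, this is immediate. Otherwise I repeat such cycles of $b$, still inside the delay, until a nonzero output occurs. I expect this to be the main obstacle. The statement reads as existential over the adversary's choices, so I would phrase it under the hypothesis that $b$ has a reachable nonzero output. In the applications this is guaranteed by the surrounding case analysis: the zero-output case is handled separately, as the paper does in Lemmas~\ref{lem:mcv_not_fs_am} and~\ref{lem:zcc_not_fs_am}. After the nonzero Move, Move of $a$ begins at $t_B(a,i)$ and runs without interruption. I then append a tail in which both robots are activated alternately with bounded phase durations, which makes the execution fair and time-divergent.

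\textbf{Step 3: compute the new difference vector.} By Lemma~\ref{lem:suspend}, the destination $a$ uses is the one fixed at $t_C(a,i)$ from the snapshot of its Look. Under the normalized frame of Lemma~\ref{lem:norm}(1), its displacement is therefore $\lambda_\gamma R_{\theta_\gamma}(w_{\mathrm{old}})$. Here $w_{\mathrm{old}}$ is the peer vector at that Look. Because $a$ is stationary between its Look and $t_B(a,i)$, this equals the vector at $t_C(a,i)$, provided $b$ does not move in $[t_L(a,i),t_C(a,i)]$; I would include that in the schedule.

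During $a$'s Move, $b$ is stationary at the position it reached, since its inserted cycles have ended. The vector from $a$ to $b$ is therefore $w_{\mathrm{new}}$ at $t_B(a,i)$. After the Move it becomes $w_{\mathrm{new}}-\lambda_\gamma R_{\theta_\gamma}(w_{\mathrm{old}})$, which is the claimed formula.
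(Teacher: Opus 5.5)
Your proposal is correct and follows the paper's proof. You place the delay in $(t_C(a,i),t_B(a,i))$ via Lemma~\ref{lem:suspend} and fit a whole cycle of $b$ inside it. This window meets neither the $A_{LC}$-forbidden interval $(t_L(a,i),t_C(a,i)]$ nor the $A_M$-forbidden interval $[t_B(a,i),t_E(a,i)]$, and the displacement is read off from Lemma~\ref{lem:norm}(1) applied to the stored snapshot.

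Your extra provisos are legitimate refinements of points the paper's one-line proof leaves implicit. First, $b$ must actually have a reachable nonzero output; the paper assumes this silently, so the statement as written needs such a hypothesis. Second, $b$ must not move between $t_L(a,i)$ and $t_C(a,i)$, so that the snapshot vector coincides with $w_{\mathrm{old}}$.
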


\begin{proof}
Apply Lemma~\ref{lem:suspend} with the delay placed between
$t_C(a,i)$ and $t_B(a,i)$, and place the whole cycle of $b$ inside it.
What $A_{LC}$ forbids is a $\LK$ in $(t_L(a,i),t_C(a,i)]$, and what
$A_M$ forbids is a $\LK$ in $[t_B(a,i),t_E(a,i)]$; the delay
interval $(t_C(a,i),t_B(a,i))$ meets neither, so the segment is
admissible under $A_{LC}$, $A_M$ and $A$ alike. The displacement
formula is Lemma~\ref{lem:norm}(1) applied to the snapshot that $a$
obtained at its $\LK$, which is released unchanged when the $\M$
begins.
\end{proof}

\begin{lemma}[Observation during movement]
\label{lem:mid}
Let $K\in\{A_{LC},A\}$, and suppose that $b$ performs a nonzero $\M$
from $p_B$ to $p_E\neq p_B$ while $a$ has no pending operation. For
every $\mu\in(0,1)$ there is a fair execution in which $a$ performs a
$\LK$ at a time when $b$ is at
\[
  p_\mu=(1-\mu)p_B+\mu p_E .
\]
\end{lemma}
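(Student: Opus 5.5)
The plan is to build the required execution explicitly from a fixed prefix. Fix $K\in\{A_{LC},A\}$, $\mu\in(0,1)$, and a finite admissible prefix in which $b$ has performed its \Look{} and \Compute{} for a cycle whose destination $p_E$ differs from its current position $p_B$, while $a$ has no pending operation. I will schedule the \Move{} of $b$ on $[t_B,t_E]$ with constant speed along $[p_B,p_E]$. This is admissible, since the adversary controls the speed profile within a rigid Move, so $b(t)=p_B+\frac{t-t_B}{t_E-t_B}(p_E-p_B)$. Set $t^\ast:=t_B+\mu(t_E-t_B)\in(t_B,t_E)$, so that $b(t^\ast)=p_\mu$.

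Next I will place the next \Look{} of $a$ exactly at $t^\ast$. Lemma~\ref{lem:suspend} allows a finite delay before $t_L(a,\cdot)$, so this timing is available: $a$ may have had no earlier pending phase, and its next \Look{} can be put at any later time. Then I will check the atomicity constraint for each $K$. Under $A$ there is none. Under $A_{LC}$, the only prohibition is a \Look{} of $a$ inside $(t_L(b,i),t_C(b,i)]$. Since $t^\ast>t_B(b,i)>t_C(b,i)$, the \Look{} lies outside that interval. The constraint with the roles reversed concerns a \Look{} of $b$ during $a$'s \Look--\Compute{} interval. I will avoid it by letting $a$ finish its \Compute{} at some $t^\ast+\epsilon$ and placing $b$'s next \Look{} after that time, using the delay of Lemma~\ref{lem:suspend} again.

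Finally, I will complete both cycles with arbitrary finite durations and append a fair tail. For the tail I will use alternating complete cycles of unit duration, serialized as in Lemma~\ref{lem:solo-drift}(2), which respects every atomicity constraint and is time-divergent.

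The main obstacle is conceptual rather than computational. I must make sure that the prefix leaves $b$ genuinely committed to the Move toward $p_E$; the hypothesis that $b$ ``performs'' the Move supplies this. I must also make sure that no other pending operation of $a$ forces a \Look{} of $a$ earlier than $t^\ast$. The hypothesis that $a$ has no pending operation gives exactly this freedom, so the remaining work is routine timing bookkeeping.
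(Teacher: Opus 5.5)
Your proposal is correct and matches the paper's proof: pick a time strictly inside $b$'s Move interval at which $b$ is at $p_\mu$, place $a$'s Look there, and note that under $A_{LC}$ this time exceeds $t_C(b,i)$. The differences are minor: you fix a constant-speed profile, whereas the paper applies the intermediate value theorem to an arbitrary continuous motion, which needs only continuity and not adversarial control of speed; you also check the reversed $A_{LC}$ constraint and the fair tail, which the paper leaves implicit.
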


\begin{proof}
The movement of $b$ is an uninterrupted continuous motion along the
segment $[p_B,p_E]$, whose duration the adversary may choose, so by the
intermediate value theorem there is
$\tau_\mu\in(t_B(b,i),t_E(b,i))$ with $b(\tau_\mu)=p_\mu$. Setting
$t_L(a,j)=\tau_\mu$ is unconstrained under $A$, and it is admissible
under $A_{LC}$ as well, because $\tau_\mu>t_C(b,i)$.
\end{proof}

\begin{lemma}[Alternating singleton activation]
\label{lem:altA}
Let $K=R$. For every $p\ge0$ there is a fair execution whose first $p$
rounds are fully synchronous and whose tail activates
$\{a\},\{b\},\{a\},\ldots$ or $\{b\},\{a\},\ldots$. In this tail,
\begin{enumerate}
\item[(i)] the distance $d$ is updated by the factor $c_1$ at every
  round;
\item[(ii)] the robot that is not activated remains stationary during
  that round;
\item[(iii)] if the displacement of the moving robot is $v$, then
  $a+b$ is translated by $v$;
\item[(iv)] the two colors may evolve independently.
\end{enumerate}
Once the alternating tail has begun, no fully synchronous round is
admissible.
\end{lemma}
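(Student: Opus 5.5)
The statement is Lemma~\ref{lem:altA}. I will construct the execution explicitly and then read off (i)--(iv) from the model definitions.

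\textbf{Construction.} Fix $p\ge0$ and take the activation sequence $E_1=\dots=E_p=\{a,b\}$, followed by $E_{p+2k+1}=\{a\}$ and $E_{p+2k+2}=\{b\}$ for $k\ge0$ (or the mirror sequence starting with $\{b\}$). By the definition of \RSY{} in Appendix~\ref{app:model-conventions-schduler}, this sequence is admissible. Each robot appears in infinitely many sets, so the execution is fair. Assigning every round unit duration makes it time-divergent. Since each round consists of complete atomic cycles, no robot has a pending operation at any round boundary.

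\textbf{Properties (ii)--(iv).} Property (ii) is immediate: a robot outside $E_j$ executes no phase in round $j$, so $\Move_r(t)$ fails throughout that round. For (iii), only the active robot is displaced. If its displacement is $v$, then exactly one of $a,b$ changes by $v$, so $a+b$ changes by $v$. For (iv), the active robot updates its own color, while the peer's color is frozen during that round. No constraint links the two colors in the singleton tail; in particular, nothing like Lemma~\ref{lem:rigidA2} forces them to agree.

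\textbf{Property (i).} This is the only part that needs care. The factor $c_1$ refers to the visible state $\gamma$ of the robot active in that round, so it may change from round to round. Under normalized frames (Lemma~\ref{lem:norm}(1)), the active robot sees a stationary peer at $w\neq0$, its global displacement is $\lambda_\gamma R_{\theta_\gamma}(w)$, and Lemma~\ref{lem:ratios}(i) gives $d\mapsto c_1(\gamma)d$. I expect the main obstacle to be the case where the robots coincide ($w=0$), where Lemma~\ref{lem:norm}(1) no longer applies. There I will read (i) as applying while the robots are distinct. If they coincide, $d=0$, and the statement is either vacuous or the normalization convention is not in force; I will state this restriction explicitly in the proof.

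\textbf{Final claim.} The claim that no fully synchronous round is admissible once the tail has begun follows directly from the form of \RSY{}. An \RSY{} execution is either entirely fully synchronous or a finite \FSY{} prefix followed by strictly alternating singletons. After the first singleton round, the remaining sequence must therefore continue to alternate, so no round can activate $\{a,b\}$. For the same reason, a robot cannot be activated twice in a row, which is why the case $K=R$ is excluded from Lemma~\ref{lem:solo}.
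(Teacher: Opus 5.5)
Your proposal is correct and follows the paper's own argument: the execution is read off the definition of \RSY{}, properties (ii)--(iv) come from the fact that exactly one displacement is applied per tail round, and (i) comes from Lemma~\ref{lem:ratios}(i) under normalized frames. Two of your points are refinements the paper leaves implicit and are sound: that the factor $c_1$ depends on the visible state of the robot active in that round, and that (i) holds only while the robots are distinct, since a nonzero output at a coincidence separates them.
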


\begin{proof}
This is the definition of $\RSY$ in
Section~\ref{subsec:schedulers-events}, and (i)--(iv) follow from the
fact that exactly one displacement is applied in each round of the
tail; for (i) this is Lemma~\ref{lem:ratios}(i).
\end{proof}

\subsection{Invisibility and State Orbits}
\label{app:invisibility}

Several constructions insert a segment that the other robot cannot
detect, so that its next activation reproduces the behavior it would
have had without the insertion. The following notion isolates exactly
what has to be preserved.

\begin{definition}[$r$-invisible segment]
\label{def:silent}
A finite execution segment $\sigma$ is \emph{$r$-invisible} if the
normalized local input that $r$ receives at its next activation is the
same before and after $\sigma$.
\end{definition}

\begin{lemma}[Invisible transitions]
\label{lem:silent}
Let $M\in\{\FS,\OB\}$. If $r$ is not activated during $\sigma$ and the
peer $\overline r$ occupies a position distinct from that of $r$ both
before and after $\sigma$, then $\sigma$ is $r$-invisible. Consequently,
however the internal state of $\overline r$ changes during $\sigma$,
the robot $r$ computes the same output at its next activation.
\end{lemma}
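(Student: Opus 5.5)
The plan is to compare, step by step, the normalized local input of $r$ at its next activation with and without the insertion of $\sigma$. The only facts used are the shape of the observation map $\operatorname{Obs}_M$ for $M\in\{\FS,\OB\}$ and the normalization of Lemma~\ref{lem:norm}(1).

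First I will write out what $r$ observes at its next Look. In $\OB$ this is $p_r$ alone, and in $\FS$ it is $(p_r,c_r)$. In neither case does the peer's color or internal state enter the observation. Since $r$ is not activated during $\sigma$, its own private state $c_r$ is the same before and after $\sigma$. This uses the fact that $r$ has no pending cycle inside $\sigma$, which is how I will read ``not activated''; I will state this explicitly. Likewise, $r$'s own position does not change.

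Second, I will handle the geometric part. By hypothesis the peer is at a position distinct from $r$ both before and after $\sigma$, so the peer vector $w$ is nonzero in each case. The adversary may therefore choose, at the next Look, the orientation-preserving similarity mapping $w$ to $e$. This choice is admissible under variable disorientation and chirality, exactly as in the proof of Lemma~\ref{lem:norm}(1). The normalized geometric input is then $e$ whether or not $\sigma$ was inserted, even though the actual vector $w$ may have changed in length and direction.

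Combining the two steps, the normalized local input is $(e)$ in $\OB$, or $(e,c_r)$ in $\FS$, in both scenarios. This is precisely the requirement of Definition~\ref{def:silent}, so $\sigma$ is $r$-invisible. Determinism of the algorithm then gives the same normalized output and the same next state. The peer's internal state is absent from the input, which yields the ``however the internal state of $\overline r$ changes'' clause.

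The only delicate point is the requirement that the peer be distinct at the next Look, not merely at the end of $\sigma$. If the peer were at $r$'s position, the input would be $0$ rather than $e$, and Lemma~\ref{lem:norm}(2) would apply instead. I will therefore interpret ``after $\sigma$'' as the configuration at $r$'s next Look, which is the reading the hypothesis is meant to convey.

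One caveat concerns the global output. ``Same output'' means the same normalized output $(\lambda_\gamma,\theta_\gamma)$ together with the same new state. The global displacement $\lambda_\gamma R_{\theta_\gamma}(w)$ will differ whenever $w$ changed, and I will note this so that later applications (for instance Lemma~\ref{lem:p1_replay}) use the rule rather than the global displacement.
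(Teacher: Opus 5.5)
Your proposal is correct and follows the paper's proof essentially step for step. The paper argues exactly as you do: the peer's color is absent from $\operatorname{Obs}_{\FS}$ and $\operatorname{Obs}_{\OB}$, the own color is unchanged because $r$ is not activated, the nonzero peer vector is mapped to $e$ by Lemma~\ref{lem:norm}(1), and determinism finishes. Your extra remarks are sensible clarifications of points the paper leaves implicit, namely:
\begin{itemize}
\item $r$ has no pending cycle during $\sigma$;
\item the peer must be distinct at the next Look;
\item ``same output'' means the same normalized rule and next state, not the same global displacement.
\end{itemize}
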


\begin{proof}
Neither $\operatorname{Obs}_{\FS}(r)=(p_r,c_r)$ nor
$\operatorname{Obs}_{\OB}(r)=p_r$ contains the color of the peer, so
only two components have to be preserved. The own color $c_r$ is
unchanged because $r$ is not activated. The relative position of the
peer may change by a similarity during $\sigma$, but as long as it is
nonzero it is mapped to $e$ in the normalized frame by
Lemma~\ref{lem:norm}(1). The last assertion is then the determinism of
the algorithm.
\end{proof}

\begin{lemma}[Period-wise pumping in $\FS$]
\label{lem:pumpB2}
Let $M=\FS$ and $K\in\{S,A_{LC},A_M,A\}$. Consider a solo activation
interval of $a$ (Lemma~\ref{lem:solo}) in which every $\LK$ of $a$
observes its peer at a position different from its own. Along such an
interval, the color of $a$ evolves as
\[
  c_0,\;g(c_0),\;g^{2}(c_0),\;\ldots,
\]
which is eventually periodic, say with preperiod $p$ and period $q$.
For every $n\ge1$ such that the first $p+nq$ activations of $a$
satisfy this condition, there is a fair execution in which $a$ is
activated alone $p+nq$ times; at the end of that interval the color
of $a$ is $\gamma_p:=g^{p}(c_0)$, and $b$, never being activated,
stays at its position throughout the interval. If moreover $a$ never
reaches the position of $b$ during the interval, then the inserted
interval is $b$-invisible.
\end{lemma}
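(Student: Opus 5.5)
The plan is to combine the normalization lemma with Lemma~\ref{lem:solo} and Lemma~\ref{lem:silent}. First, fix the $\FS$ algorithm and the scheduler $K\in\{S,A_{LC},A_M,A\}$. Throughout the solo interval, $b$ is never activated, so Lemma~\ref{lem:solo} says that its position and color are frozen and that $\Stop_b$ holds. By hypothesis, every $\LK$ of $a$ observes $b$ at a nonzero relative vector. Lemma~\ref{lem:norm}(1) then lets the adversary give $a$ the normalized frame at each $\LK$. In that frame the local input of $a$ is $(e,c)$, where $c$ is its current color. Determinism therefore makes the next color $g(c)$, independent of the actual geometry, and the color sequence along the interval is $c_0,g(c_0),g^2(c_0),\dots$. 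The color set is finite, so the orbit of $c_0$ under $g$ is eventually periodic. I take $p$ as the least preperiod and $q$ as the least period.

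Next, fix $n\ge1$ such that the first $p+nq$ activations of $a$ satisfy the distinctness condition. Apply Lemma~\ref{lem:solo} with $k=p+nq$ to get an admissible interval in which $a$ completes exactly these cycles while $b$ is idle, followed by a fair tail. The color of $a$ after $k$ cycles is $g^{p+nq}(c_0)$. This equals $g^{p}(c_0)=\gamma_p$, because $g^{p+q}(c_0)=g^p(c_0)$, by induction on $n$. The fair tail only ensures fairness and does not affect the interval's endpoint, and $b$'s position is unchanged by Lemma~\ref{lem:solo}.

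For the invisibility claim, assume $a$ never reaches $b$'s position during the interval. Then the positions of $a$ and $b$ are distinct both before and after the interval, and $b$ is not activated. These are exactly the hypotheses of Lemma~\ref{lem:silent} with $r=b$ and $M=\FS$, so the segment is $b$-invisible.

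I expect the main obstacle to be the admissibility of the normalized frames together with the solo interval under the atomic subclasses. This concerns timing rather than geometry. I would settle it by noting that the atomicity constraints only restrict $\LK$ times of the robot that is not currently cycling. Here $b$ performs no $\LK$ inside the interval, as recorded in the proof of Lemma~\ref{lem:solo}, while frame choice is unconstrained under variable disorientation. A second subtle point is that the hypothesis is needed at every $\LK$, not just the first. If $a$ lands on $b$ at some intermediate cycle, Lemma~\ref{lem:norm}(2) applies instead and the color rule may differ. That is why the statement quantifies only over $n$ for which the condition holds throughout.
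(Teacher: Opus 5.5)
Your proposal is correct and follows essentially the same route as the paper: normalized frames from Lemma~\ref{lem:norm}(1) make each activation of $a$ apply $g$, and finiteness of the palette gives eventual periodicity, so $g^{p+nq}(c_0)=g^{p}(c_0)$. The $b$-invisibility then follows from Lemma~\ref{lem:silent}, since $b$ is idle and $a$ stays distinct from it; your extra bookkeeping (Lemma~\ref{lem:solo} with $k=p+nq$ and the induction on $n$) is fine, and taking the least preperiod and period is harmless but unnecessary, since only $g^{p+q}(c_0)=g^{p}(c_0)$ is used.
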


\begin{proof}
The visible state of an $\FS$ robot is its own color. Every $\LK$ of
$a$ in the interval observes a nonzero relative position of its peer,
so by Lemma~\ref{lem:norm}(1) each activation updates the color by $g$.
The color sequence is therefore the orbit of $g$ and is eventually
periodic. After $p+nq$ steps the orbit is back at $\gamma_p$. The robot
$b$ is not activated, so its own color is unchanged, and the color of
$a$ is not part of the input of $b$. The relative position of $a$
changes by a similarity, but as long as it is nonzero the adversary can
present the same normalized input to the next activation of $b$; hence
the interval is $b$-invisible by Lemma~\ref{lem:silent}.
\end{proof}

\subsection{Synchronous Prefixes and Repetition}
\label{app:synchronous}

This subsection collects the segments in which the same rule is applied
repeatedly, either by both robots simultaneously or by one robot alone.

\begin{lemma}[Maintaining coincidence]
\label{lem:keep-coincident}
Suppose that $a(t_0)=b(t_0)$, that neither robot has a pending operation
at $t_0$, and that the scheduler class permits activating both robots
in every round from $t_0$ on. The last condition always holds under
$\FSY$; under $\SSY$ and the subclasses of $\ASY$ it holds for the full
activation set with aligned phase times; under $\RSY$ it holds only
while the execution is still in its fully synchronous prefix. Suppose
moreover that in each such round the two local outputs are either both
zero or both nonzero; this is in particular the case when the two
visible states are equal. Then the adversary can construct a fair
execution with $a(t)=b(t)$ for every $t\ge t_0$.
\end{lemma}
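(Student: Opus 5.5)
The proof is a round-by-round construction by induction from $t_0$. For each round $j\ge1$ after $t_0$ we activate both robots with aligned phase times: equal Look instants, equal Compute instants, equal Move start and end instants. We maintain the invariant that at the start of every round $a=b$ and no operation is pending. This schedule is admissible in each class named in the statement.
\begin{itemize}
\item Under $\FSY$, and in the fully synchronous prefix of $\RSY$, it is literally a full round.
\item Under $\SSY$, the full activation set $\{a,b\}$ is allowed.
\item Under $\ASY$ and its atomic subclasses, the two Looks coincide at a single instant. That instant does not lie in the open or half-open forbidden interval of the other robot, since each forbidden interval starts strictly after that robot's own Look or at its Move. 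Hence none of the $LC$-, $M$- or $CM$-atomicity constraints is violated.
\end{itemize}

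At a Look with $a=b$, each robot observes $p_r=0$ together with its visible state. By Lemma~\ref{lem:norm}(2), the output is then a local vector $q_r$ determined by the visible state alone, and the adversary is free to choose the direction and magnitude of the resulting global displacement. We split into two cases.
\begin{itemize}
\item If both outputs are zero, both robots stay put, so coincidence holds throughout the round.
\item If both are nonzero, choose the two local frames so that both global displacements equal one common vector $v$. This is possible because each $\psi^{-1}$ can map its own nonzero $q_r$ to any prescribed nonzero $v$, even when $q_a\neq q_b$. With identical Move start and end times and equal progress along their (identical) segments, $a(t)=b(t)$ at every instant of the round.
\end{itemize}
The hypothesis that the two outputs are both zero or both nonzero is exactly what excludes the bad case in which one robot moves while the other stays still. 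When the visible states are equal, determinism gives equal outputs, so this hypothesis holds automatically.

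After the round, neither robot has a pending operation and the robots still coincide, so the invariant propagates. Choosing unit round durations makes the execution time-divergent, and activating both robots in every round makes it fair.

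The only delicate point is the asynchronous case: we must check that aligning the phase times exactly is allowed by every atomicity definition. I would verify this directly against the interval conditions in Appendix~\ref{app:model-conventions-schduler}, noting that each condition only forbids the other robot's Look inside an interval that begins strictly after, or at the end of, a robot's own Look.
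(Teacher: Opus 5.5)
Your proof is correct and follows the paper's argument. At a coincidence both robots observe the origin, Lemma~\ref{lem:norm}(2) lets the adversary send two nonzero outputs (even unequal ones) to one common global displacement while two zero outputs produce no motion, and activating both robots in every round therefore preserves the coincidence. Your explicit check that aligned phase times respect each atomicity constraint, and your remarks on fairness and time divergence, fill in details that the paper leaves implicit.
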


\begin{proof}
At a coincidence the geometric part of both local inputs is the origin
(Section~\ref{subsec:robots-capabilities}). By Lemma~\ref{lem:norm}(2)
the direction and the magnitude of the global displacement obtained
from a nonzero local output are chosen by the adversary, so two nonzero
outputs can be mapped to the same global displacement, while two zero
outputs produce no movement. Activating both robots in every round
therefore preserves the coincidence. If one output is zero and the
other is not, the coincidence is destroyed, so the condition on the
outputs cannot be omitted. In $\OB$ the output at a coincidence is
unique, hence that condition always holds.
\end{proof}

\begin{lemma}[Synchronous rigidity in $\OB^{F}$]
\label{lem:rigidA1}
Let $M=\OB$ and $K=F$, so that the displacement rule is a single pair
$(\lambda,\theta)$ by Lemma~\ref{lem:single-rule}(a), and put
$c_2:=c_2(\lambda,\theta)$. The adversary can construct a fair
execution in which, in every round $k$ with $d(k)>0$, the midpoint is
fixed at $m$ and $d(k)=c_2^{\,k}d(0)$, and in which exactly one of the
following occurs.
\begin{enumerate}
\item[(a)] $\lambda=0$: neither robot moves, and
  $\moves{a}=\moves{b}=0$.
\item[(b)] $c_2=0$: in round $1$ the two robots coincide at $m$, and
  the coincidence is maintained by Lemma~\ref{lem:keep-coincident}.
\item[(c)] $0<c_2<1$: the sequence $d(k)$ decreases strictly to $0$ but
  remains positive, and $\moves{a}=\moves{b}=\infty$.
\item[(d)] $c_2\ge1$: $d(k)\ge d(0)$ for every $k$, and
  $\moves{a}=\moves{b}=\infty$.
\end{enumerate}

Moreover the adversary lets the two robots start and complete their
Moves simultaneously, with the same fractional progress along their
segments. The two robots are then simultaneously stationary only
before and after the common Move interval of a round, so every joint
stop of the execution carries one of the round-boundary distances
$d(k)$. In particular the restriction of $d$ to $\Sab$ is monotone.

\end{lemma}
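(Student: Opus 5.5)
The statement to prove is Lemma~\ref{lem:rigidA1}. The plan is to build the execution explicitly and then read off the four cases from Lemma~\ref{lem:ratios}(ii).

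\textbf{Construction.} Under $\FSY$ both robots are activated in every round. In each round with $d(k)>0$ the adversary gives each robot the normalized frame of Lemma~\ref{lem:norm}(1). An $\OB$ robot has no visible state, so both robots receive the identical local input $e$ and output the single pair $(\lambda,\theta)$ of Lemma~\ref{lem:single-rule}(a). The adversary also aligns all phase times: both robots Look and Compute at the same instants, begin their Moves at the same $t_B$, end them at the same $t_E$, and advance with equal fractional progress $u\in[0,1]$. Round lengths are bounded below, which gives time divergence; fairness is immediate.

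\textbf{Round update.} By Lemma~\ref{lem:ratios}(ii), the difference vector at each round boundary is multiplied by $\sigma_2=1-2\lambda e^{\mathrm i\theta}$. The two displacements are negatives of each other, so the midpoint $m$ stays fixed. Inducting on $k$ gives $d(k)=c_2^{\,k}d(0)$ for as long as $d>0$. If $d$ hits $0$, this can only happen at the end of round~$1$, and only when $c_2=0$.

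\textbf{Four cases.} These partition on the pair $(\lambda,\theta)$.
\begin{enumerate}
\item[(a)] If $\lambda=0$, every Move is zero, so no robot ever moves.
\item[(b)] If $c_2=0$, the robots meet at $m$ after round~$1$. From then on, Lemma~\ref{lem:keep-coincident} applies under $\FSY$: in $\OB$ the output at coincidence is unique, so the adversary maps it to a common global displacement.
\item[(c)] If $\lambda>0$ and $0<c_2<1$, the distance decays geometrically but stays positive. Every Move is then nonzero, since $\lambda d>0$, which gives infinitely many moves for each robot.
\item[(d)] If $c_2\ge1$, the distance never decreases, and for the same reason as in (c) both robots move infinitely often. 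Note that $\lambda>0$ holds automatically here, because $c_2=1$ together with $\lambda=0$ falls under case (a).
\end{enumerate}

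\textbf{Joint stops.} Both Moves of a round occupy the same open interval $(t_B,t_E)$, which is the common Move interval. Inside it neither robot is at rest, so no time of $\Sab$ lies there. Outside the Move intervals the configuration is constant and equals a round-boundary configuration. Hence $d$ restricted to $\Sab$ takes only the values $d(k)$, in chronological order. Since the sequence $c_2^{\,k}d(0)$ is monotone, possibly becoming identically $0$ in case (b), the restriction of $d$ to $\Sab$ is monotone.

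\textbf{Main obstacle.} The delicate step is case (b). After the meeting, normalization is no longer available, and one must check that the coincidence-maintaining rounds keep the execution fair and keep the set of joint stops consistent with the claimed monotonicity. I will handle this by invoking Lemma~\ref{lem:keep-coincident}, which keeps $d\equiv0$ after round~$1$, so the monotonicity claim is trivial there.
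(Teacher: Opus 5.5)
Your proposal is correct and follows essentially the same route as the paper: normalized frames in every round so that Lemma~\ref{lem:ratios}(ii) gives $d(k)=c_2^{\,k}d(0)$ with the midpoint fixed, Lemma~\ref{lem:keep-coincident} for case~(b), and a common Move interval per round so that $\Sab$ only sees round-boundary distances. Your remark that case~(d) must exclude $\lambda=0$ (where $c_2=1$) makes explicit what the paper leaves implicit; note only that your ``$d$ hits $0$'' remark holds at round boundaries, since for $\theta=0$, $\lambda>\tfrac12$ the robots cross inside the common Move interval, which is harmless because such instants are not joint stops.
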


\begin{proof}
Give the normalized frame in every round, so that
Lemma~\ref{lem:ratios}(ii) applies to each round. In case~(b) the
execution is fully synchronous and the model is $\OB$, so the output
condition of Lemma~\ref{lem:keep-coincident} is satisfied. In
cases~(c) and~(d) a nonzero $\M$ occurs in every round, and since
$t_E(r,i)<t_L(r,i+1)$ these are counted as distinct moves.

Activating both robots in the same round with a common speed profile
is admissible under $\FSY$. In a round containing a nonzero $\M$ the
two Move intervals then coincide, so the instants at which both robots
are stationary are those preceding the common Move, where the distance
is $d(k)$, and those following it, where it is $d(k+1)$. In a round
containing no nonzero $\M$ the distance is constant and equal to both.
Hence $d$ takes only the values $d(k)=c_2^{\,k}d(0)$ on $\Sab$, and
that sequence is monotone in each of the four cases: constant in~(a),
constant at $0$ from round~$1$ on in~(b), strictly decreasing in~(c),
and non-decreasing in~(d).
\end{proof}

\begin{lemma}[Synchronous rigidity in the colored models]
\label{lem:rigidA2}
Let $M\in\{\LU,\allowbreak\FC,\allowbreak\FS\}$ and let $K$ be any
scheduler class that permits activating both robots in every round;
under $\RSY$ this means during the fully synchronous prefix. The
adversary can construct a fair execution in which, in every round $k$,
the two robots carry the same color $\gamma_k$, where $\gamma_0=c_0$,
and the distance satisfies $d(k+1)=c_2(\gamma_k)\,d(k)$; moreover, in
every round $k$ with $d(k)>0$ the midpoint is fixed at $m$.
While the robots occupy distinct positions, their common color
satisfies $\gamma_{k+1}=g(\gamma_k)$.
After a coincidence, the common color instead follows the transition
map for the zero geometric observation.
In either case the two colors remain equal.
Since coincidence is thereafter maintained, the common color sequence
$(\gamma_k)$ is eventually periodic.
\end{lemma}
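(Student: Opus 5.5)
The plan is to build the execution round by round, by induction, using normalized frames and fully synchronous activations with aligned phase times and common speed profiles. Every scheduler class in the hypothesis admits such rounds; under $\RSY$ this holds only during the fully synchronous prefix. The inductive invariant at the start of round $k$ is that both robots carry a common color $\gamma_k$, neither has a pending operation, and either they are distinct with midpoint $m$ or they coincide. The base case $k=0$ holds because both robots start in the designated initial color $c_0$ with no pending cycle.

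\emph{Distinct step.} Suppose $d(k)>0$. By Lemma~\ref{lem:norm}(1), the normalized frame gives each robot the local input $(e,\cdot)$. The visible state of each robot is then a function of the common color:
\begin{itemize}
\item $\gamma_k$ itself in $\FS$ and $\FC$;
\item the pair $(\gamma_k,\gamma_k)$ in $\LU$.
\end{itemize}
So both robots compute the same normalized pair $(\lambda,\theta)$ and the same new color $g(\gamma_k)$. In $\FC$ one extra check is needed: the option of keeping the current color is itself determined by the observation, so the two robots still update identically.

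Lemma~\ref{lem:ratios}(ii) then gives $d(k+1)=c_2(\gamma_k)d(k)$ with the midpoint invariant. If $c_2(\gamma_k)=0$, the robots coincide at $m$ at the end of the round.

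\emph{Coincident step.} Once $d(k)=0$, both robots receive geometric input $0$ together with the same visible state. They therefore compute identical local outputs, and both colors follow the same transition for zero observation, so the colors stay equal. By Lemma~\ref{lem:norm}(2), the adversary maps the two outputs to the same global displacement. Hence the coincidence is preserved, exactly as in Lemma~\ref{lem:keep-coincident}, whose output condition holds because the visible states are equal. The distance stays $0=c_2(\gamma_k)\cdot 0$. The midpoint clause is required only while $d(k)>0$, so it needs no further check here.

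\emph{Eventual periodicity.} The color set is finite. Before the first coincidence, the common color follows the orbit of $g$, a deterministic self-map of a finite set. After a coincidence, which is permanent, it follows the orbit of the zero-observation transition map, again deterministic on a finite set. Either way the sequence is eventually periodic, since a piecewise definition with a single switch at a finite round does not affect this.

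\emph{Fairness and the main obstacle.} Fairness and time divergence follow from activating both robots every round with unit-length rounds. The step needing the most care is the $\FC$ color update, since the robot does not read its own color. I would argue that the decision to keep or replace the color, and the replacement value, depend only on the observation $(e,\gamma_k)$, which is identical for both robots, and that the kept colors are equal by induction.
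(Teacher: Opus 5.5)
Your proposal is correct and follows essentially the same route as the paper. Both run fully synchronous rounds with normalized frames, use Lemma~\ref{lem:norm}(1) and Lemma~\ref{lem:ratios}(ii) while the robots are distinct, invoke Lemma~\ref{lem:norm}(2) with Lemma~\ref{lem:keep-coincident} after a coincidence, and get eventual periodicity from finiteness of the palette. Your explicit check that the $\FC$ keep-or-replace decision depends only on the common observation, with the kept colors equal by induction, is a detail the paper leaves implicit in its appeal to Lemma~\ref{lem:norm}.
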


\begin{proof}
Activate both robots in every round and give them the normalized
frames at every $\LK$ at which they are distinct. The designated
initial colors are equal, and equal colors give the two robots equal
visible states, so by Lemma~\ref{lem:norm} their next colors are equal
as well; while the robots are distinct, this common next color is
$g(\gamma_k)$. While the robots are distinct, the distance update and
the invariance of the midpoint are Lemma~\ref{lem:ratios}(ii). If
$c_2(\gamma_k)=0$, the robots coincide at $m$ at the end of round $k$;
from then on their visible states are still equal, so the coincidence
is preserved by Lemma~\ref{lem:keep-coincident} and $d(k+1)=0
=c_2(\gamma_k)\,d(k)$ holds trivially, whereas the common position may
move, since by Lemma~\ref{lem:norm}(2) the adversary chooses the
displacement at a coincidence. At every such round both robots observe
the zero geometric configuration with equal visible states, so they
apply the same transition map for that observation and their colors
remain equal. Before a coincidence the common color sequence is an
orbit of $g$; after it, the sequence is an orbit of the transition map
for the zero observation. Either is the orbit of a map on a finite set,
so $(\gamma_k)$ is eventually periodic.
\end{proof}

\begin{lemma}[Solo pumping in the models without private memory]
\label{lem:pumpB1}
Let $M\in\{\FC,\allowbreak\OB\}$ and $K\in\{S,A_{LC},A_M,A\}$. Let $\gamma_\star$
be the visible state of $a$ at the beginning of a solo activation
interval of Lemma~\ref{lem:solo}, and put
$\lambda:=\lambda_{\gamma_\star}$, $\theta:=\theta_{\gamma_\star}$ and
$c:=c_1(\gamma_\star)$. Assume that the robots occupy distinct
positions at the beginning of the interval. Throughout the interval
$\Stop_b(t)$ holds, so
the peer remains stationary. At every cycle whose $\LK$ observes
distinct positions, the vector from $a$ to $b$ is transformed by
$I-\lambda R_\theta$. Hence $d(j)=c^{\,j}d(0)$ up to and including the
first landing at the peer, if such a landing occurs. Moreover:
\begin{enumerate}
\item[(a)] if $\lambda=0$, then $a$ does not move during the interval;
  in particular, if $M=\OB$ then $b$ obeys the same rule, so the
  configuration stays at $\iota$ along every fair infinite execution;
\item[(b)] if $c=0$, the first activation makes $a$ reach its peer.
  The normalized rule does not determine subsequent outputs at
  coincidence; those must be treated separately;
\item[(c)] if $\lambda>0$ and $c>0$, every activation in the solo
  interval is a nonzero $\M$, and any prescribed finite number of such
  $\M$s can be scheduled.
\end{enumerate}
\end{lemma}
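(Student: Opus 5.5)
The argument follows the solo activation interval of Lemma~\ref{lem:solo} and uses Lemma~\ref{lem:norm}(1) at each step. First I fix what stays constant. The robot $b$ is never activated in the interval, so it has no pending operation, its position is fixed, and $\Stop_b(t)$ holds throughout; this is already part of Lemma~\ref{lem:solo}.

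The key point is that the visible state $\gamma_\star$ of $a$ is constant in the interval. In $\OB$ the visible state is trivial. In $\FC$ it is the color of the peer $b$, which does not change because $b$ is never activated, and an $\FC$ robot never reads its own color. Consequently every $\LK$ of $a$ that observes distinct positions can be given the normalized frame, and the displacement is $\lambda R_\theta(w)$ for the current peer vector $w$. The new vector from $a$ to $b$ is therefore $(I-\lambda R_\theta)w$. Its norm is $c\|w\|$ by Lemma~\ref{lem:ratios}(i), with $c=c_1(\gamma_\star)$. Induction on the number of cycles gives $d(j)=c^{\,j}d(0)$, valid for as long as every $\LK$ sees distinct positions, hence up to and including the first landing at the peer.

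The three items then follow.

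\emph{(a)} If $\lambda=0$, every displacement of $a$ is zero, so $a$ does not move. In $\OB$ there is only one normalized rule, and $b$ has the same $\lambda=0$. Hence along every fair infinite execution in which both robots are given normalized frames, neither robot ever performs a nonzero $\M$, and the configuration stays at $\iota$.

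\emph{(b)} By Lemma~\ref{lem:ratios}, $c=0$ holds exactly when $(\lambda,\theta)=(1,0)$. Then the first displacement equals $w$, and $a$ lands on $b$. After that the local input is the origin, and Lemma~\ref{lem:norm}(2) applies instead of (1). This is why the later outputs are left open.

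\emph{(c)} If $\lambda>0$ and $c>0$, then by induction $d(j)>0$ for every $j$. Each $\LK$ therefore sees $w\ne0$, the displacement $\lambda R_\theta w$ is nonzero, and every cycle is a nonzero $\M$. Lemma~\ref{lem:solo} lets the adversary take any $k$, so any prescribed finite number of such moves can be scheduled, followed by a fair tail.

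The only delicate step is the claim in (a) that the configuration stays at $\iota$ along \emph{every} fair infinite execution. This has to be read with the normalized frames in force and with the robots distinct; under those conditions both robots always output zero, no move ever happens, and by induction the distinctness persists. Concurrent phases under $\ASY$ do not matter, because no nonzero $\M$ ever occurs.
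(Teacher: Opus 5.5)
Your proposal is correct and follows essentially the same route as the paper. The visible state of $a$ is constant: trivial in \OB, and in \FC{} the unchanged color of the never-activated $b$. Lemma~\ref{lem:norm}(1) and Lemma~\ref{lem:ratios}(i) then apply at every $\LK$ with distinct positions, and items (a)--(c) follow directly. Your remark that the ``every fair infinite execution'' clause of (a) must be read with normalized frames in force is the right reading, and the paper's proof leaves it implicit.
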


\begin{proof}
While $b$ is not activated, the visible state of $a$ is constant: the
model $\OB$ has no visible state, and in $\FC$ it is the color of $b$,
which does not change, since $a$ cannot observe its own color. Hence,
at every $\LK$ that observes distinct positions, the same pair
$(\lambda,\theta)$ is applied, and the transformation of the vector
from $a$ to $b$ and the distance formula follow from
Lemma~\ref{lem:norm}(1) and Lemma~\ref{lem:ratios}(i). In case~(c) the
distance stays positive, so every $\LK$ observes distinct positions and
every displacement has length $\lambda d(j)>0$. Concerning~(a), when
$M=\FC$ the color of $a$ may change during the interval, so the
subsequent behavior of $b$ is not determined by $\lambda$ alone; this
is why immobility of the whole configuration is asserted only for
$M=\OB$.
\end{proof}

\begin{lemma}[A single displacement rule]
\label{lem:single-rule}
In each of the following two cases, the displacement rule of the
activated robot is a single pair $(\lambda,\theta)$:
\begin{enumerate}
\item[(a)] any algorithm with $M=\OB$;
\item[(b)] an algorithm with $M=\FC$, together with an interval during
  which the peer is not activated.
\end{enumerate}
When these activations are performed one at a time, with
each cycle completed before the next begins, and every
$\LK$ observes distinct positions, successive landing
distances are multiplied by the same factor
$c_1(\lambda,\theta)$, so that $d(k)=c_1^{\,k}d(0)$ after $k$
activations. Their sequence is therefore monotone and cannot alternate
between strict decreases and strict increases. A landing at the peer
ends this normalized argument and must be handled separately. If
$\lambda=0$, the robot does not move at all.
\end{lemma}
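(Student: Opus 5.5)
The plan is to reduce both cases to Lemma~\ref{lem:norm}(1) and to the elementary contraction computation of Lemma~\ref{lem:ratios}(i).

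\textbf{Case (a).} An $\OB$ robot has no visible state. Under normalized frames the peer therefore always appears at $e$, so the local input is the constant $e$. By determinism the output is a fixed point $q$, and we write $q=\lambda R_\theta(e)$; this is exactly Lemma~\ref{lem:norm}(1) with $\gamma$ ranging over a singleton.

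\textbf{Case (b).} The visible state of an $\FC$ robot $r$ is the color of $\bar r$. By assumption, $\bar r$ is not activated during the interval. The colors of an $\FC$ robot change only at its own Compute, so the peer's color is constant, say $\gamma$. In a normalized frame the local input of $r$ is then the constant pair $(e,\gamma)$. The robot's own color may change, but it is not part of its input. Hence every activation in the interval applies the single pair $(\lambda_\gamma,\theta_\gamma)$.

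\textbf{Distance recurrence.} Suppose the activations are performed one at a time, each cycle completing before the next begins. Then at each Look the peer is stationary, and the observed vector is the current difference vector. If that vector is $w\neq0$, the robot is displaced by $\lambda R_\theta(w)$. The peer does not move during the cycle: in (a) we are considering a solo sequence, and in (b) the peer is not activated. The new difference vector is therefore $(I-\lambda R_\theta)w$, which equals $\sigma_1 w$ in complex notation. Its norm is $c_1(\lambda,\theta)\|w\|$ by Lemma~\ref{lem:ratios}(i),(iii). Induction on $k$, valid while every Look observes distinct positions, gives $d(k)=c_1^{\,k}d(0)$.

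\textbf{Monotonicity and the side remarks.} A geometric sequence with nonnegative ratio is monotone: nonincreasing if $c_1\le1$ and nondecreasing if $c_1\ge1$. Consequently the sequence cannot alternate between strict decreases and strict increases. If a landing at the peer occurs (which forces $c_1=0$, i.e.\ $(\lambda,\theta)=(1,0)$), the next Look sees $w=0$. Normalization no longer applies there, and Lemma~\ref{lem:norm}(2) governs instead; this is the case the statement says must be handled separately. If $\lambda=0$, every displacement $\lambda R_\theta(w)$ vanishes, so the robot does not move.

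\textbf{Expected obstacle.} The only delicate point is in (b): justifying that the visible state is truly frozen. This requires that an $\FC$ robot's light changes only at its own Compute and that the robot never reads its own light. I would cite the observation convention of Section~\ref{subsec:robots-capabilities} for this, exactly as in the proof of Lemma~\ref{lem:pumpB1}.
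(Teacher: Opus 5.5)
Your proposal is correct and follows the paper's proof. The visible state is constant: it is unique in $\OB$, and in $\FC$ it is the peer's color, frozen while the peer is not activated. Lemma~\ref{lem:norm}(1) then yields a single pair $(\lambda,\theta)$, and Lemma~\ref{lem:ratios}(i) turns the one-at-a-time cycles into the geometric recurrence $d(k)=c_1^{\,k}d(0)$. Your added remark that a landing at the peer forces $c_1=0$, i.e.\ $(\lambda,\theta)=(1,0)$, is correct, and the paper leaves it implicit.
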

 
\begin{proof}
By Lemma~\ref{lem:norm} the displacement is determined by the pair
$(\lambda_\gamma,\theta_\gamma)$ attached to the visible state $\gamma$.
In case~(a) the model $\OB$ has a unique visible state, so the family of
constants degenerates to a single pair. In case~(b) the visible state of
an $\FC$ robot is the color of its peer alone, since it cannot observe
its own color; as long as the peer is not activated that color is
unchanged, hence the visible state is constant
(Lemma~\ref{lem:pumpB1}). For such singleton cycles, the landing
distance is multiplied by $c_1(\lambda,\theta)$ by
Lemma~\ref{lem:ratios}(i), so the landing distances form a geometric
sequence and are monotone. If $\lambda=0$ the displacement is
identically zero.
\end{proof}

\begin{corollary}[The first nonzero rule along a synchronous prefix]
\label{cor:first-nonzero}
Let $M\in\{\LU,\allowbreak\FC,\allowbreak\FS\}$ with the designated common initial color
$c_0$, and let the execution begin with rounds activating both robots,
as in Lemma~\ref{lem:rigidA2}, so that the common color evolves by
$\gamma_0=c_0$, $\gamma_{k+1}=g(\gamma_k)$ and is eventually periodic.
Exactly one of the following holds.
\begin{enumerate}
\item[(a)] $\lambda_{\gamma_k}=0$ for every $k$. Then this fair
  execution contains no nonzero $\M$ at all, the configuration remains
  $\iota$ forever, and both robots are permanently stationary.
\item[(b)] There is a least $t$ with $\lambda_{\gamma_t}\neq0$. Then the
  configuration at the beginning of round $t$ is still $\iota$, both
  robots are stationary with no pending operation, and their common
  visible state is $\gamma_t$. Hence any segment of
  Appendix~\ref{app:admissible-segments} may be appended there. In
  particular, replacing round $t$ by a solo activation interval of $a$
  (Lemma~\ref{lem:solo}) leaves the local input of $a$ unchanged, so
  the rule applied at its first activation is
  $(\lambda_{\gamma_t},\theta_{\gamma_t})$ with
  $\lambda_{\gamma_t}\neq0$; the visible state then stays equal to
  $\gamma_t$ throughout the interval when $M=\FC$
  (Lemma~\ref{lem:pumpB1}), whereas for $M=\FS$ it follows the orbit of
  $g$ (Lemma~\ref{lem:pumpB2}).
\end{enumerate}
In case~(a) the joint-stop sequence of Definition~\ref{def:epg} has the
single term $C_0$, so the perpetuity condition of $\mathrm{EPG}^{*}$
fails. In case~(b) the configuration is unchanged until round $t$, so
the first activation of $a$ in the solo activation interval is the
first nonzero $\M$ of the execution and it is performed while $b$ is not
activated; Lemma~\ref{lem:solo-drift}(1) then makes the expansion
condition fail at $i=0$.
Since the color sequence is eventually periodic, the dichotomy is
decided by a finite check. For $M=\OB$ the same statement holds with
the single rule $(\lambda,\theta)$ of
Lemma~\ref{lem:single-rule}(a), case~(a) being $\lambda=0$.
\end{corollary}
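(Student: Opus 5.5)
The plan is to build everything on the fully synchronous execution of Lemma~\ref{lem:rigidA2}, using normalized frames at every \Look. Since no nonzero \Move has occurred, the robots remain at the distinct positions of $\iota$. Lemma~\ref{lem:rigidA2} then gives that both robots carry the same color $\gamma_k$ at every round boundary, with $\gamma_0=c_0$ and $\gamma_{k+1}=g(\gamma_k)$. Because $g$ is a map on a finite color set, $(\gamma_k)$ is eventually periodic, say with preperiod $p$ and period $q$. The dichotomy (a)/(b) is then just the question whether some $k<p+q$ has $\lambda_{\gamma_k}\neq0$. Every later term repeats one of $\gamma_0,\dots,\gamma_{p+q-1}$, so this is a finite check. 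For $\OB$ I would replace the orbit by the single pair of Lemma~\ref{lem:single-rule}(a).

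In case (a), every activation produces displacement $\lambda_{\gamma_k}R_{\theta_{\gamma_k}}(w)=0$ by Lemma~\ref{lem:norm}(1). By induction, the configuration stays $\iota$ and the robots stay distinct, so the normalized rule keeps applying. The synchronous execution is fair and time-divergent, and $\MoveSet{a}=\MoveSet{b}=\varnothing$, so $\Sab=[0,\infty)$ and both robots are permanently stationary. The joint-stop sequence therefore has the single term $C_0$, and the perpetuity clause of $\Phi_{\mathrm{EPG}^{*}}$ fails.

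In case (b), let $t$ be the least index with $\lambda_{\gamma_t}\neq0$. The same induction over rounds $0,\dots,t-1$ shows that the configuration is still $\iota$ at the start of round $t$. Every \Move so far was zero, and all cycles completed at the round boundary, so there is no pending operation. The common visible state is $\gamma_t$: for $\FS$ this is the robot's own color, for $\FC$ the peer's color, and for $\LU$ the pair $(\gamma_t,\gamma_t)$. This state satisfies the applicability conditions of every segment in Appendix~\ref{app:admissible-segments}. Replacing round $t$ by a solo interval of $a$ (Lemma~\ref{lem:solo}), the \Look of $a$ receives the normalized input $(e,\gamma_t)$. This is exactly what it would have received in round $t$, so it applies $(\lambda_{\gamma_t},\theta_{\gamma_t})$ with $\lambda_{\gamma_t}\neq0$ and performs a nonzero \Move. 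During the interval, Lemma~\ref{lem:pumpB1} keeps the $\FC$ visible state fixed at $\gamma_t$, and Lemma~\ref{lem:pumpB2} lets the $\FS$ color follow the orbit of $g$.

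For the $\mathrm{EPG}^{*}$ consequence, note that $b$ is not activated during the interval, so $\Stop_b$ holds throughout. At the landing time of $a$ both robots are at rest, so that time lies in $\Sab$. The configuration there is the first configuration different from $C_0$, so it is $C_1$, reached by a move of $a$ alone. Lemma~\ref{lem:solo-drift}(1) then gives $\tfrac12(a_1+b_1)\neq c_0$, which violates the expansion clause at $i=0$.

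The step I expect to need the most care is this identification of $C_1$. I must make sure that no move of $b$ overlaps $a$'s landing and that the round-$t$ replacement is admissible for the scheduler at hand. For $K=R$, Lemma~\ref{lem:solo} does not apply. There I would instead use Lemma~\ref{lem:altA} with a fully synchronous prefix of length $t$ followed by $\{a\},\{b\},\dots$, which gives the same single-mover first nonzero \Move.
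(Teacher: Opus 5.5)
Your proposal is correct and follows the paper's own argument. It uses the common colour orbit of Lemma~\ref{lem:rigidA2}, the induction that zero displacements keep the configuration at $\iota$ with the robots distinct, and the observation that a solo activation of $a$ at round $t$ receives the same normalized input $(e,\gamma_t)$; the finite check comes from eventual periodicity, as in the paper. Your additions are sound refinements of steps the paper leaves implicit: the explicit bound $k<p+q$, the identification of $C_1$ at $a$'s landing time in $\Sab$, and the $\RSY$ variant via Lemma~\ref{lem:altA}, needed because Lemma~\ref{lem:solo} excludes $K=R$.
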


\begin{proof}
By Lemma~\ref{lem:rigidA2} the two robots carry the same color
$\gamma_k$ in round $k$, so by Lemma~\ref{lem:norm}(1) both apply
$(\lambda_{\gamma_k},\theta_{\gamma_k})$, and the displacement is
nonzero exactly when $\lambda_{\gamma_k}\neq0$. In case~(a) no
displacement is ever nonzero, so the configuration never changes. In
case~(b) every round before $t$ applies a zero displacement, hence the
configuration at the beginning of round $t$ is still $\iota$; the robot
$a$ then has the same peer position and the same visible state as it
would under simultaneous activation, so its rule there is
$(\lambda_{\gamma_t},\theta_{\gamma_t})$. The finite check is the
eventual periodicity of the orbit of $g$ on a finite color set.
\end{proof}

\subsection{Preservation and Continuation of Violations}
\label{app:concatenation}

Every common-failure execution in
Appendices~\ref{app:cf-two-task-disjunctions}
and~\ref{app:cf-three-task-disjunctions} is a concatenation of the
segments collected above. The following lemma states when such a
concatenation is admissible and which violations survive it.

\begin{lemma}[Concatenation of segments]
\label{lem:concat}
The violations exhibited by a common-failure construction split into
two kinds. A \emph{safety violation}, namely landing at a forbidden
point, a forbidden overtaking, a forbidden coincidence, a forbidden
transition, or an expansion that moves the center of gravity, occurs at
a finite time and cannot be repaired by any continuation. A \emph{liveness
violation}, namely the failure of infinitely many alternations, of
perpetuity, of reaching, of convergence, or of stable gathering or
stopping, is decided only on an infinite execution. The following hold.
\begin{enumerate}
\item A finite segment that forces a safety violation may be followed
  by any continuation permitted by the scheduler class, in particular
  by a fair tail, and the violation is preserved.
\item Segments are concatenated starting from the state actually
  reached by the preceding prefix. Positions, colors, internal states,
  and pending operations are carried over and are not reset. A segment
  may be appended only if its applicability conditions hold in that
  state and if the concatenation is an execution of the scheduler class
  under consideration. In particular, no fully synchronous round may be
  placed after an alternating singleton tail under $\RSY$
  (Lemma~\ref{lem:altA}); a solo activation interval
  (Lemma~\ref{lem:solo}) requires the robot that is not activated to
  have no pending operation; and a delayed move
  (Corollary~\ref{cor:stale}) requires $K\in\{A_{LC},A_M,A\}$.
\item A liveness violation requires an explicit fair infinite
  continuation along which the violation persists; an arbitrary fair
  tail does not suffice. When a safety violation of one predicate is
  combined with a liveness violation of another, the segment forcing
  the safety violation is placed first, and the infinite continuation
  witnessing the liveness violation is built from the state that this
  segment leaves.
\end{enumerate}
Since the predicate of each task is evaluated relative to the fixed
initial instance $\iota$, the argument for a later segment remains
valid even when earlier segments have already changed the
configuration.
\end{lemma}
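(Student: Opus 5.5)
The lemma collects three assertions about concatenating execution segments, and I would prove them directly from the definitions of the temporal predicates (Appendix~\ref{app:model-conventions-predicate}) and of the scheduler classes (Appendix~\ref{app:model-conventions-schduler}).

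\textbf{Classification of the violations.} For each listed safety violation I would exhibit a finite time witnessing it. A forbidden landing point, a forbidden coincidence or overtaking, a forbidden transition between consecutive joint-stop terms, or a center-of-gravity shift at some step $C_i\to C_{i+1}$ are all statements about the trajectory on a bounded interval $[0,T]$. Among these, the joint-stop sequence needs a short argument. Once a nonzero Move has completed and a later joint stop has occurred, the terms $C_0,\dots,C_{i+1}$ are fixed by the execution up to that joint stop, because $\Sab\cap[0,T]$ and the configurations at those times depend only on the prefix. This is the step I expect to require the most care: a term is only closed when the next distinct configuration appears at a joint stop, so I would phrase the violation as occurring at that later joint stop. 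The predicates then have the form $\forall t$ or $\forall i$, and any continuation agreeing on $[0,T]$ preserves the counterexample. This proves item~(1), given that a fair tail exists under every class. Such a tail can be built by completing pending operations and then alternating cycles, or by using fully synchronous rounds under $F$.

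\textbf{Liveness and item~(3).} For the liveness violations (perpetuity, reaching, convergence, infinitely many alternations, stopping), I would observe that every finite prefix can be extended to a satisfying execution in principle. Hence the violation is a property of the infinite execution, and only a specifically constructed continuation witnesses it. The ordering claim of item~(3) then follows from item~(1). Placing the safety segment first leaves the safety violation intact under the subsequent liveness-witnessing continuation.

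\textbf{Item~(2) and instance dependence.} Item~(2) is essentially definitional: an execution is a single timed sequence of phase events, so the state at the cut is inherited rather than reset. I would check the three listed restrictions against the definitions. Under $R$, Lemma~\ref{lem:altA} rules out a synchronous round after the alternating tail. The solo interval of Lemma~\ref{lem:solo} requires the inactive robot to have no pending operation. The delayed move of Corollary~\ref{cor:stale} requires $K\in\{A_{LC},A_M,A\}$. Finally, I would note that all predicates refer to the fixed $\iota$ through $a(0),b(0)$, for example $c_0$, $A$, $B$ and $D_0$. Therefore intermediate configuration changes do not alter what a later segment must violate.
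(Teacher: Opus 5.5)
Your proposal is correct and follows the paper's argument. Safety violations are witnessed on a finite prefix and survive every extension, item~(2) is definitional with the three restrictions read off the segment lemmas, and liveness violations need an explicit infinite continuation because no finite prefix decides them; the remark about $\iota$ is handled the same way. Your extra care about when the joint-stop terms $C_0,\dots,C_{i+1}$ are fixed by a prefix, and about the existence of a fair tail in each scheduler class, supplies details that the paper's three-line proof leaves implicit.
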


\begin{proof}
A safety violation is witnessed by a finite prefix, and every extension
of that prefix contains it; this gives~(1).

Claim~(2) records that a concatenation is an execution of the algorithm
in the given scheduler class exactly when each stage is admissible from
the state already reached. The conditions listed are precisely those
required by the corresponding segment lemmas.

For~(3), a liveness property is not decided by any finite prefix, so
an infinite continuation violating it has to be exhibited. Finally,
the task predicates refer to $\iota$ and to the trajectory, not to the
configuration at which a segment happens to begin; hence a change of
configuration by an earlier segment does not invalidate the argument
for a later one.
\end{proof}

\subsection{The Solo-Expansion Obstruction}
\label{app:solo-drift}

\begin{proof}[Proof of Lemma~\ref{lem:solo-drift}]
(1) Say $\bar r$ completes no nonzero $\M$, so $\bar r_{i+1}=\bar r_i$.
Consecutive terms of the sequence are distinct, hence $r_{i+1}\neq r_i$ and
\(
  a_{i+1}+b_{i+1}=(a_i+b_i)+(r_{i+1}-r_i)\neq a_i+b_i .
\)

(2) For $K=R$ take the execution of Lemma~\ref{lem:altA} with an empty fully
synchronous prefix; every round activates exactly one robot. For
$K\in\{S,A_{LC},A_M,A\}$ the adversary places the whole cycle of one robot after
the whole cycle of the other, alternately: while one robot runs its cycle the
other performs no operation, so no $\Look$ falls inside a $\M$ interval of the
peer and every atomicity constraint of
Section~\ref{subsec:schedulers-events} is met; both robots are activated
infinitely often, so the execution is fair.

In such an execution the boundaries between consecutive cycles are times of
$\Sab$. If no nonzero $\M$ ever occurs, the configuration never changes and
$(C_i)$ has the single term $C_0$; then the first condition of
$\mathrm{EPG}$ (exactly two terms) and the first condition of
$\mathrm{EPG}^{*}$ (an infinite sequence) both fail. Otherwise consider the
first cycle containing a nonzero $\M$, performed by $r$. Every earlier cycle
left the configuration unchanged, so the joint stop preceding that cycle
realizes $C_0$ and the joint stop following it realizes $C_1\neq C_0$, while
$\bar r$ ran no cycle in between. By (1),
$\tfrac12(a_1+b_1)\neq c_0$, so the second condition of both tasks fails at
$i=0$. This is a safety violation and is preserved by any fair continuation.
\end{proof}

\subsection{The Two-Nonzero-Move Obstruction for SRO}
\label{app:sro-two-moves}

We use the notation of Definition~\ref{def:sro}: $(C_i)_{i\ge0}$ is the
sequence of configurations at the times when neither robot is inside a
nonzero $\M$, with consecutive repetitions removed,
$d_i=b_i-a_i$, and $Q_i$ is the square with diagonal $a_ib_i$, interior
included, with $Q_{-1}=\mathbb R^2$. Condition $\mathbf{C1}$ is
$d_{i+1}=R_{-90^\circ}d_i$, condition $\mathbf{C2}$ is
$d_{i+1}=2^{-1/2}R_{-45^\circ}d_i$, and condition $\mathbf{C3}$ is
$a_{i+1},b_{i+1}\in Q_{i-1}$.

\begin{lemma}[Two solo nonzero moves defeat SRO]
\label{lem:cf-sro-two-moves}
Consider a situation in which only one robot moves while its peer
remains stationary. The displacement that can satisfy $\mathbf{C1}$ is
unique, namely the one rotating the segment by $90^\circ$ clockwise
about the peer, and the displacement that can satisfy $\mathbf{C2}$ is
also unique, namely the one rotating the moving robot by $90^\circ$
clockwise about the midpoint. No other displacement satisfies
$\mathbf{C1}$ or $\mathbf{C2}$. If the same robot performs two
consecutive nonzero $\M$ operations while its peer stays still, the
rotation condition and $\mathbf{C3}$ cannot both be satisfied.
Consequently, two solo nonzero moves of one robot violate
$\Phi_{\mathrm{SRO}}$, irrespective of the internal states. Zero $\M$
operations do not advance the sequence $(C_i)$.
\end{lemma}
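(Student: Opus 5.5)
The plan is to work in coordinates adapted to the stationary peer. Put the peer $p$ at the origin and the moving robot at $x$, so the current difference vector is $d_i=p-x$ or $x-p$ according to which robot moves. I will treat the case where $a$ moves and $b=p$; the other case is symmetric, since swapping the roles of $a$ and $b$ only negates $d$.

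\textbf{Uniqueness of each rotation.} A single nonzero Move of $a$ with $b$ at rest gives $d_{i+1}=b-a_{i+1}$, so $a_{i+1}=b-d_{i+1}$.
\begin{itemize}
\item Under $\mathbf{C1}$, $d_{i+1}=R_{-90^\circ}d_i$, so $a_{i+1}=b-R_{-90^\circ}(b-a_i)$. This is the point obtained by rotating $a_i$ clockwise by $90^\circ$ about $b$.
\item Under $\mathbf{C2}$, $a_{i+1}=b-2^{-1/2}R_{-45^\circ}(b-a_i)$. A short complex computation (write $2^{-1/2}e^{-i\pi/4}=(1-i)/2$) shows that this equals the clockwise $90^\circ$ rotation of $a_i$ about the midpoint $(a_i+b)/2$.
\end{itemize}
Each condition fixes $a_{i+1}$, so each displacement is unique. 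Zero Moves leave the configuration unchanged and are suppressed as repetitions, so only nonzero Moves produce terms of $(C_i)$. Between two consecutive terms exactly one solo nonzero Move has occurred, because the peer is at rest and each Move ends at a joint stop of the solo interval.

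\textbf{Two consecutive solo Moves.} Normalize $b=(0,0)$ and $a_i=(-1,0)$, so $d_i=(1,0)$ and $Q_i$ is the square with diagonal from $(-1,0)$ to $(0,0)$, that is, $Q_i=\{|x+\tfrac12|+|y|\le\tfrac12\}$. The second step only needs $\mathbf{C3}$ at index $i+1$, namely $a_{i+2}\in Q_i$. I would enumerate the four type pairs:
\begin{itemize}
\item C1 then C1: $a$ goes to $(0,1)$, then to $(1,0)$.
\item C2 then C2: $a$ goes to $(-\tfrac12,\tfrac12)$, then to $(0,\tfrac12)$.
\item C1 then C2: $a$ goes to $(0,1)$, then, rotating about the midpoint $(0,\tfrac12)$, to $(\tfrac12,\tfrac12)$.
\item C2 then C1: $a$ goes to $(-\tfrac12,\tfrac12)$, then, rotating about $b$, to $(\tfrac12,\tfrac12)$.
\end{itemize}
Each endpoint has $|x+\tfrac12|+|y|>\tfrac12$, so $a_{i+2}\notin Q_i$ and $\mathbf{C3}$ fails. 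If the first Move already fails the rotation condition, $\Phi_{\mathrm{SRO}}$ fails at index $i$. Scale invariance (all conditions are similarity-equivariant under orientation-preserving maps) extends the normalization to the general case.

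The main obstacle is bookkeeping rather than mathematics. I must make sure the two Moves give two consecutive terms $C_{i+1},C_{i+2}$ with $Q_{i}$ the right square; this holds because the peer stays still and intervening zero Moves are suppressed. I must also note that if the moving robot lands on its peer, $d=0$ and neither rotation condition can hold, which is again a failure. The conclusion is independent of internal states, since only the trajectory is used.
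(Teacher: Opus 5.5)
Your proposal is correct and follows the paper's proof essentially step for step. You use the same reduction by robot-swap symmetry and orientation-preserving similarity, the same normalization $b_i=(0,0)$, $a_i=(-1,0)$ with $Q_i=\{|x+\tfrac12|+|y|\le\tfrac12\}$, and the same four-case enumeration, with identical landing points $(1,0)$, $(\tfrac12,\tfrac12)$, $(\tfrac12,\tfrac12)$, $(0,\tfrac12)$, all outside $Q_i$. Your one addition is the explicit check, via $2^{-1/2}e^{-\mathrm{i}\pi/4}=(1-\mathrm{i})/2$, that the $\mathbf{C2}$ target is the clockwise $90^\circ$ rotation about the midpoint; the paper only asserts this.
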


\begin{proof}
After each nonzero $\M$ of the moving robot, both robots are stationary
until the next $\LK$, so the configurations reached after the two moves
are consecutive terms $C_{i+1},C_{i+2}$ of the sequence $(C_i)$.

Condition $\mathbf{C1}$ requires $\|d_{i+1}\|=\|d_i\|$ together with
$d_{i+1}=R_{-90^\circ}d_i$, so with the peer held fixed the new
position of the moving robot is uniquely determined. Condition
$\mathbf{C2}$ requires $d_{i+1}=2^{-1/2}R_{-45^\circ}d_i$ and
determines it uniquely in the same way. Both conditions are purely
geometric and do not depend on the internal states.

Exchanging the names of the robots replaces each $d_j$ by $-d_j$,
which leaves the conditions unchanged, and they are also invariant
under orientation-preserving similarities. We may therefore assume
that $a$ is the moving robot and that
\[
  b_i=(0,0),\qquad a_i=(-1,0),
\]
so that $d_i=(1,0)$ and
\[
  Q_i=\Bigl\{(x,y):\ \bigl|x+\tfrac12\bigr|+|y|\le\tfrac12\Bigr\}.
\]

If the first move satisfies $\mathbf{C1}$, then $a_{i+1}=(0,1)$, and
the second move gives $a_{i+2}=(1,0)$ under $\mathbf{C1}$ and
$a_{i+2}=(\tfrac12,\tfrac12)$ under $\mathbf{C2}$. If the first move
satisfies $\mathbf{C2}$, then $a_{i+1}=(-\tfrac12,\tfrac12)$, and the
second move gives $a_{i+2}=(\tfrac12,\tfrac12)$ under $\mathbf{C1}$ and
$a_{i+2}=(0,\tfrac12)$ under $\mathbf{C2}$. In each of the four cases
$a_{i+2}\notin Q_i$, so $\mathbf{C3}$ fails at $C_{i+2}$. This is the
computation of~\cite[Lemma~23, Fig.~5]{FSW19}.
\end{proof}

The lemma is applied in two ways. When the two solo nonzero moves are
available, it refutes $\Phi_{\mathrm{SRO}}$ outright. When only one
solo move is available, its first part still shows that a displacement
which is neither the $90^\circ$ clockwise rotation about the peer nor
the $90^\circ$ clockwise rotation about the midpoint already violates
the rotation condition. The degenerate case in which the moving robot
lands on its peer is covered by this observation, since $d_{i+1}=0$
satisfies neither $\mathbf{C1}$, which requires $\|d_{i+1}\|=\|d_i\|\neq0$,
nor $\mathbf{C2}$, whose right-hand side is nonzero.

\subsection{The Overtaking Construction}
\label{app:overtaking}

The delayed move of Corollary~\ref{cor:stale} becomes an obstruction
when the destination computed before the delay is long compared with
the distance that remains when the move is released. The following
lemma records the two forms in which this is used, one for
$\mathrm{MCv}$ and one for $\mathrm{ZCC}$.

\begin{lemma}[Overtaking]
\label{lem:overtake}
Let $K\in\{A_{LC},A_M,A\}$. Suppose that $b$ computes at its $\CP$ a
destination from its saved $\LK$ snapshot, with global displacement
$v\neq0$, that its $\M$ is then
delayed as in Corollary~\ref{cor:stale}, and that only $a$ is activated
during the delay. Let $w$ be the vector from $b$ to $a$ at the moment
at which the $\M$ of $b$ is released. Keep $a$ stationary while $b$
executes this pending $\M$.
\begin{enumerate}
\item If $\|w\|<\|v\|/2$, then the distance after the $\M$ of $b$ is
  larger than $\|w\|$. Hence the second condition of
  $\Phi_{\mathrm{MCv}}$, monotonicity, is violated.
\item If $b(0)$, the current position of $b$, and $a$ lie on a line in
  this order, if $v$ points towards $a$, and if $\|w\|<\|v\|$, then $b$
  lands beyond $a$, and at its landing time $t$ one has
  $\dis{b(0)}{b(t)}>\dis{b(0)}{a(t)}$. Hence the second condition of
  $\Phi_{\mathrm{ZCC}}$, collision freedom, is violated.
\end{enumerate}
The delay is placed between $t_C(b,i)$ and $t_B(b,i)$. By
Lemma~\ref{lem:suspend} and Corollary~\ref{cor:stale}, it is admissible
under $A_{LC}$, $A_M$ and $A$, and any prescribed finite number of
complete cycles of $a$ may be placed inside it. Throughout the delay
$b$ stays at its position, and its pending $\M$ uses the destination
computed at $t_C(b,i)$, independently of these cycles. When
$M\in\{\FC,\OB\}$ the visible state of $a$ is constant during the delay
(for $\FC$ it is the color that $b$ adopted at $t_C(b,i)$); when
$M=\FS$ the color of $a$ follows the orbit of $g$ as long as every
$\LK$ of $a$ observes distinct positions. Hence the distance formulas
of Lemmas~\ref{lem:pumpB1} and~\ref{lem:pumpB2} apply to these cycles,
although $b$ has a pending operation.
\end{lemma}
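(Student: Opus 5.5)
The two parts of the Overtaking lemma are geometric consequences of the displacement $v$ being fixed at $t_C(b,i)$ while the relative position $w$ has shrunk during the delay. I would prove them separately and then verify the admissibility claims from Lemma~\ref{lem:suspend} and Corollary~\ref{cor:stale}.

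\emph{Part (1).} Throughout the pending $\M$, $a$ is stationary by construction. Just before release the distance is $\|w\|$. After the $\M$ the vector from $b$ to $a$ is $w-v$. The reverse triangle inequality gives
\[
\|w-v\|\ge\|v\|-\|w\|>\|w\|,
\]
where the last step uses $\|w\|<\|v\|/2$. The distance at the landing time therefore exceeds the distance at the release time. Since both times lie in the execution and the release precedes the landing, monotonicity in Definition~\ref{def:mcv} fails. No joint-stop bookkeeping is needed, because monotonicity is required at every instant.

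\emph{Part (2).} Place a coordinate $x$ on the common line with $x(b(0))=0$, so that the hypotheses read $0\le x_b\le x_a$ at release. Since $v$ points towards $a$, we have $v=\|v\|\hat u$ with $\hat u$ the positive direction. Hence $x(b(t))=x_b+\|v\|$, and because $\|w\|=x_a-x_b<\|v\|$, this is $>x_a$. With $a$ still stationary, at the landing time $t$,
\[
\dis{b(0)}{b(t)}=x_b+\|v\|>x_a=\dis{b(0)}{a(t)}.
\]
This is a violation of the collision-freedom clause of Definition~\ref{def:zcc} for $r=b$ at a single time, hence a safety violation preserved by Lemma~\ref{lem:concat}(1).

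\emph{Admissibility.} The delay lies in $(t_C(b,i),t_B(b,i))$, which is exactly the situation of Corollary~\ref{cor:stale}, so it is admissible under $A_{LC}$, $A_M$ and $A$.
\begin{itemize}
\item The complete cycles of $a$ placed inside the delay contain no $\LK$ in $(t_L(b,i),t_C(b,i)]$ or in $[t_B(b,i),t_E(b,i)]$. This requires placing them after $t_C(b,i)$, and the corresponding $\LK$s of $b$ do not occur during $a$'s Moves because $b$ performs no $\LK$ in the delay.
\item $b$ stays at its position, since no $\M$ of $b$ has begun.
\item The pending destination is determined at $t_C(b,i)$ by determinism.
\end{itemize}

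\emph{Visible states during the delay.} For $\FC$, the visible state of $a$ is $b$'s color, which is set at $t_C(b,i)$ and unchanged afterwards. For $\OB$ there is no visible state. For $\FS$, $a$'s color advances by $g$ at each $\LK$ observing distinct positions, by Lemma~\ref{lem:norm}(1). The proofs of Lemmas~\ref{lem:pumpB1} and~\ref{lem:pumpB2} only use that the peer is stationary and that its relevant color is fixed, not that it has no pending operation, so those distance formulas carry over to the cycles of $a$ inside the delay.

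The main obstacle is this last point: making precise that the pumping lemmas, stated for solo intervals where $b$ has no pending operation, still apply. I would handle it by checking line by line that their proofs use only stationarity of $b$ and constancy of the observed data.
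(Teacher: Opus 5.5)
Your proposal is correct and follows the paper's proof: the reverse triangle inequality $\|w-v\|\ge\|v\|-\|w\|>\|w\|$ for part (1), the collinear distance computation for part (2), and for the remaining claims the observations that $b$ is stationary during the delay and that its pending $\M$ uses the destination computed at $t_C(b,i)$. The only difference is presentational. You verify the $A_{LC}$ and $A_M$ constraints explicitly and place a coordinate on the line, whereas the paper cites Lemma~\ref{lem:suspend} and Corollary~\ref{cor:stale} and writes the distance identity directly.
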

 
\begin{proof}
(1) After the $\M$ of $b$, the vector from $b$ to $a$ is $w-v$, and
\[
  \|w-v\|\ \ge\ \|v\|-\|w\|\ >\ \|w\| .
\]
 
(2) On the line we have
\[
  \dis{b(0)}{b+v}
  =\dis{b(0)}{b}+\|v\|
  >\dis{b(0)}{b}+\|w\|
  =\dis{b(0)}{a},
\]
which is the stated inequality at the landing time.
 
The admissibility of the delay and of the cycles of $a$ placed inside
it is Lemma~\ref{lem:suspend} together with Corollary~\ref{cor:stale};
the last assertion of Lemma~\ref{lem:suspend} shows that the pending
$\M$ of $b$ is unaffected by them. The distance formulas follow from
Lemma~\ref{lem:norm}(1) and Lemma~\ref{lem:ratios}(i), exactly as in
the proofs of Lemmas~\ref{lem:pumpB1} and~\ref{lem:pumpB2}, since $b$
does not move during the delay.
\end{proof}

\begin{remark}[The reachability hypothesis is not automatic]
\label{rem:overtake-reachability}
Neither $\|w\|<\|v\|/2$ nor $\|w\|<\|v\|$ holds by construction: they
require the solo activations of $a$ during the delay to bring $a$
close enough to $b$. When the solo contraction ratio of $a$ satisfies
$c<1$, we have $d(j)=c^{\,j}d(0)\to0$ by Lemma~\ref{lem:pumpB1}, and
the hypothesis can be met. For $M=\FS$ the relevant quantity is the
product $\Gamma=\prod_j c_1(\gamma_j)$ over one color period, by
Lemma~\ref{lem:pumpB2}, and the hypothesis can be met when
$\Gamma<1$. The reachability condition and, for part~(2), the required
collinearity and order must be verified in each application.
The continuation arguments for $\mathrm{MCv}$ and
$\mathrm{ZCC}$ are given directly in
Lemmas~\ref{lem:mcv_not_fs_am} and~\ref{lem:zcc_not_fs_am}.
\end{remark}

\subsection{Avoiding a Countable Set}
\label{app:continuum}

Lemma~\ref{lem:mid} lets the adversary choose, by a continuous
parameter, the point of the peer's movement at which an observation is
made. Several impossibility arguments combine this freedom with the
fact that the set of outcomes admitted by a task predicate is countable,
or even finite.

\begin{corollary}[Avoiding a countable set]
\label{cor:continuum}
Let $K\in\{A_{LC},A\}$, suppose that $b$ performs a nonzero $\M$, and
suppose that $a$ observes it with parameter $\mu\in(0,1)$ as in
Lemma~\ref{lem:mid}. Let $h:(0,1)\to\mathbb R$ assign to $\mu$ some
quantity of the resulting execution. If $h$ is continuous and
nonconstant, then for every countable set $G\subset\mathbb R$ the
adversary can choose $\mu$ with $h(\mu)\notin G$.
\end{corollary}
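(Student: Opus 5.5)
The idea is to turn the countable set $G$ into a measure-zero obstruction for a continuous non-constant function on an interval. Since $h$ is continuous on the connected set $(0,1)$, its image $h((0,1))$ is an interval of $\mathbb R$. Because $h$ is nonconstant, this interval is nondegenerate and therefore contains two distinct values $h(\mu_1)<h(\mu_2)$.

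By the intermediate value theorem, the whole closed interval $[h(\mu_1),h(\mu_2)]$ lies in the image. That interval is uncountable, while $G$ is countable, so some value $y\in[h(\mu_1),h(\mu_2)]\setminus G$ exists. Choose $\mu\in(0,1)$ with $h(\mu)=y$; then $h(\mu)\notin G$.

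What remains is to confirm that every $\mu\in(0,1)$ is actually available to the adversary. This is exactly Lemma~\ref{lem:mid}: for every $\mu\in(0,1)$ there is a fair execution, admissible under $K\in\{A_{LC},A\}$, in which $a$ performs its $\LK$ while $b$ is at $p_\mu$. The quantity $h(\mu)$ is by hypothesis a function of that execution, so the adversary simply realizes the execution corresponding to the chosen $\mu$.

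The only delicate point is in the hypotheses rather than in this argument. The function $h$ must be well defined on all of $(0,1)$, meaning the rest of the construction is fixed independently of $\mu$. The purely real-analytic part is routine once continuity and nonconstancy are granted, and those properties are verified in each application, for example the affine or quadratic dependence on $\mu$ in Lemmas~\ref{lem:vec_not_fs_a} and~\ref{lem:dmsd_not_fs_a}.
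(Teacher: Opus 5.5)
Your proposal is correct and matches the paper's own one-line proof: the image of a continuous nonconstant function on $(0,1)$ contains a nondegenerate interval, hence is uncountable and cannot lie in the countable set $G$. Your remark that Lemma~\ref{lem:mid} makes every $\mu\in(0,1)$ realizable is a fair clarification. The opening mention of a ``measure-zero obstruction'' is only framing, since your argument, like the paper's, uses cardinality and not measure.
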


\begin{proof}
The image of a nonconstant continuous function on an interval contains
a nondegenerate interval, hence is uncountable, and therefore is not
contained in $G$.
\end{proof}

The corollary is used with the following two observation parameters.

\paragraph*{Impossibility of $\mathrm{VEC}$.}
In the proof that $\mathrm{VEC}\notin\Sol(\FS^{A})$
(Lemma~\ref{lem:vec_not_fs_a}), $h$ is a coordinate of the segment
determined by the two landing points, which varies continuously and
nonconstantly with $\mu$. The set $G$ consists of the finitely many
successor configurations admitted by the transition condition of
$\Phi_{\mathrm{VEC}}$, so a single $\mu$ avoids all of them. In
Lemma~\ref{lem:A2-6-FS-A} the same $h$ is used to avoid one further
value, namely the unique successor admitted by the expansion condition
of $\mathrm{EPG}^{*}$, and the countably many $\mu$ at which the two
robots coincide.

\paragraph*{Impossibility of $\mathrm{DMSD}$.}
In the proof that $\mathrm{DMSD}\notin\Sol(\FS^{A})$
(Lemma~\ref{lem:dmsd_not_fs_a}), the normalization $a(0)=0$,
$b(0)=e_1$, $D_0=1$ gives
\[
  h(\mu)=\|c+\mu w\|,
  \qquad
  c:=e_1-2\lambda Re_1,
  \qquad
  w:=\lambda^{2}R^{2}e_1\neq0 .
\]
Here $\|c+\mu w\|^{2}$ is a quadratic polynomial in $\mu$ whose leading
coefficient $\|w\|^{2}$ is positive, so $h$ is nonconstant on $(0,1)$.
The set $G$ is the countable set $D_0\,\Dyad$, and a value
$h(\mu)\notin G$ witnesses a joint stop at which the ratio is not
dyadic.

In both applications the exceptional set is countable, so several such
requirements can be imposed simultaneously: a finite or countable
union of countable sets is still countable, and a single $\mu$ avoids
all of them at once. This is what allows one execution to refute
several task predicates, as in
Appendices~\ref{app:cf-two-task-disjunctions}
and~\ref{app:cf-three-task-disjunctions}.

%% file: CFE-Two-Task.tex
\label{app:cf-other-pairs}


\noindent
Each row lists one two-element antichain $A$, given by the
corresponding disjunction $B_A$ of principal tasks from
Table~\ref{tab:PP13}, together with
$\mathcal N_A=\Max(\mathcal Q\setminus U_A)$ and, for each
$Y\in\mathcal N_A$, the statement exhibiting one admissible fair
execution of an arbitrary algorithm in $Y$ that violates both task
predicates. The four obligations carrying a reference to
Proposition~\ref{prop:cf-expansion-exchange} or
Proposition~\ref{prop:cf-sro-exchange-family} are proved in the main
text; each of them refutes three predicates on a single execution, which is what lets one statement discharge two obligations at once, namely those of $\mathrm{EPG}^{*}\vee \SRDV$ and $\mathrm{EPG}^{*} \vee \VSS$ for the first and those of $\mathrm{SRO}\vee \SRDV$ and $\mathrm{SRO}\vee \VSS$ for the second. The column sum is the number $47$ of two-task boundary obligations certified in Table~\ref{tab:antichain-enumeration-certificate} (Appendix~\ref{app:antichain-enumeration}) and recorded in Table~\ref{tab:cf-family-coverage}.

\clearpage
\begingroup
  \small
  \setlength{\tabcolsep}{4pt}
  \renewcommand{\arraystretch}{1.15}
  \captionsetup{hypcap=false}
  \captionof{table}{The twenty-four two-element antichains, their maximal
    unsolvable classes $\mathcal N_A$, and the common-failure execution
    that discharges each obligation $(A,Y)$. The rows are grouped by the
    construction families of
    Section~\ref{sec:representative-common-failure} and
    Table~\ref{tab:cf-family-coverage}. Four obligations are proved in the
    main text and are listed here only as references; all the others are
    proved in Appendix~\ref{app:cf-two-task-disjunctions}.}
  \label{tab:PP13-join2}
  \par\medskip
  \noindent
  \begin{tabularx}{\linewidth}{@{}l >{\raggedright\arraybackslash}X c@{}}
    \toprule
    \textbf{Disjunction $B_A$}
      & \textbf{$\mathcal N_A$ and proof reference}
      & \textbf{$|\mathcal N_A|$} \\
    \midrule
    \multicolumn{3}{@{}l}{\emph{Expansion family, $\mathrm{EPG}^{*}\vee P$ with
      $P\in\mathcal C_{\mathrm{exp}}$, Appendix~\ref{app:cf-expansion-family}}}\\
    $\mathrm{EPG}^{*}\vee\mathrm{HET}$
      & $\FC^{S}$ Lem.~\ref{lem:A2-1-FC-S};\quad
        $\FS^{R}$ Lem.~\ref{lem:A2-1-FS-R}
      & 2 \\
    $\mathrm{EPG}^{*}\vee\mathrm{OSPO}$
      & $\FC^{S}$ Lem.~\ref{lem:A2-2-FC-S};\quad
        $\FS^{S}$ Lem.~\ref{lem:A2-2-FS-S};\quad
        $\OB^{R}$ Lem.~\ref{lem:A2-2-OB-R}
      & 3 \\
    $\mathrm{EPG}^{*}\vee \SRDV$
      & $\FC^{A}$ Lem.~\ref{lem:A2-3-FC-A};\quad
        $\FS^{R}$ Prop.~\ref{prop:cf-expansion-exchange}
      & 2 \\
    $\mathrm{EPG}^{*}\vee\mathrm{ZCC}$
      & $\FC^{S}$ Lem.~\ref{lem:A2-4-FC-S};\quad
        $\FS^{A_M}$ Lem.~\ref{lem:A2-4-FS-AM};\quad
        $\OB^{R}$ Lem.~\ref{lem:A2-4-OB-R}
      & 3 \\
    $\mathrm{EPG}^{*}\vee \VSS$
      & $\FS^{R}$ Prop.~\ref{prop:cf-expansion-exchange}
      & 1 \\
    $\mathrm{EPG}^{*}\vee\mathrm{VEC}$
      & $\FC^{S}$ Lem.~\ref{lem:A2-6-FC-S};\quad
        $\FS^{A}$ Lem.~\ref{lem:A2-6-FS-A};\quad
        $\OB^{R}$ Lem.~\ref{lem:A2-6-OB-R}
      & 3 \\
    $\mathrm{EPG}^{*}\vee \CFSM$
      & $\FC^{S}$ Lem.~\ref{lem:A2-7-FC-S};\quad
        $\OB^{R}$ Lem.~\ref{lem:A2-7-OB-R}
      & 2 \\
    \cmidrule(lr){1-3}
    \multicolumn{3}{@{}l}{\emph{Shrinking-rotation family, Appendix~\ref{app:cf-sro-family}}}\\
    $\mathrm{SRO}\vee \SRDV$
      & $\FC^{A}$ Lem.~\ref{lem:A2-10-FC-A};\quad
        $\FS^{S}$ Prop.~\ref{prop:cf-sro-exchange-family}
      & 2 \\
    $\mathrm{SRO}\vee\mathrm{ZCC}$
      & $\FC^{S}$ Lem.~\ref{lem:A2-11-FC-S};\quad
        $\FS^{A_M}$ Lem.~\ref{lem:A2-11-FS-AM}
      & 2 \\
    $\mathrm{SRO}\vee \VSS$
      & $\FS^{S}$ Prop.~\ref{prop:cf-sro-exchange-family}
      & 1 \\
    $\mathrm{SRO}\vee\mathrm{VEC}$
      & $\FC^{S}$ Lem.~\ref{lem:A2-13-FC-S};\quad
        $\FS^{A}$ Lem.~\ref{lem:A2-13-FS-A}
      & 2 \\
    $\mathrm{SRO}\vee \CFSM$
      & $\FC^{S}$ Lem.~\ref{lem:A2-14-FC-S}
      & 1 \\
    \cmidrule(lr){1-3}
    \multicolumn{3}{@{}l}{\emph{Exchange family, Appendix~\ref{app:cf-exchange-pairs}}}\\
    $\mathrm{OSPO}\vee \SRDV$
      & $\FC^{A}$ Lem.~\ref{lem:A2-8-FC-A};\quad
        $\FS^{S}$ Lem.~\ref{lem:A2-8-FS-S};\quad
        $\OB^{F}$ Lem.~\ref{lem:A2-8-OB-F}
      & 3 \\
    $\mathrm{OSPO}\vee \VSS$
      & $\FS^{S}$ Lem.~\ref{lem:A2-9-FS-S};\quad
        $\OB^{F}$ Lem.~\ref{lem:A2-9-OB-F}
      & 2 \\
    $\SRDV \vee\mathrm{ZCC}$
      & $\FC^{A}$ Lem.~\ref{lem:A2-15-FC-A};\quad
        $\FS^{A_M}$ Lem.~\ref{lem:A2-15-FS-AM};\quad
        $\OB^{F}$ Lem.~\ref{lem:A2-15-OB-F}
      & 3 \\
    $\SRDV \vee\mathrm{VEC}$
      & $\FC^{A}$ Lem.~\ref{lem:A2-16-FC-A};\quad
        $\FS^{A}$ Lem.~\ref{lem:A2-16-FS-A};\quad
        $\OB^{F}$ Lem.~\ref{lem:A2-16-OB-F}
      & 3 \\
    $\SRDV \vee \CFSM$
      & $\FC^{A}$ Lem.~\ref{lem:A2-17-FC-A};\quad
        $\OB^{F}$ Lem.~\ref{lem:A2-17-OB-F}
      & 2 \\
    $\mathrm{ZCC}\vee \VSS$
      & $\FS^{A_M}$ Lem.~\ref{lem:A2-18-FS-AM};\quad
        $\OB^{F}$ Lem.~\ref{lem:A2-18-OB-F}
      & 2 \\
    $\VSS \vee\mathrm{VEC}$
      & $\FS^{A}$ Lem.~\ref{lem:A2-22-FS-A};\quad
        $\OB^{F}$ Lem.~\ref{lem:A2-22-OB-F}
      & 2 \\
    $\VSS \vee \CFSM$
      & $\OB^{F}$ Lem.~\ref{lem:A2-23-OB-F}
      & 1 \\
    \cmidrule(lr){1-3}
    \multicolumn{3}{@{}l}{\emph{Monotone-convergence family, Appendix~\ref{app:cf-mcv-pairs}}}\\
    $\mathrm{MCv}\vee \VSS$
      & $\FS^{A_M}$ Lem.~\ref{lem:A2-19-FS-AM}
      & 1 \\
    $\mathrm{MCv}\vee\mathrm{VEC}$
      & $\FC^{A}$ Lem.~\ref{lem:A2-20-FC-A};\quad
        $\FS^{A}$ Lem.~\ref{lem:A2-20-FS-A}
      & 2 \\
    $\mathrm{MCv}\vee \CFSM$
      & $\FC^{A}$ Lem.~\ref{lem:A2-21-FC-A}
      & 1 \\
    \cmidrule(lr){1-3}
    \multicolumn{3}{@{}l}{\emph{Dyadic family, Appendix~\ref{app:cf-dmsd-cfsm}}}\\
    $\DMSD \vee \CFSM$
      & $\OB^{A}$ Lem.~\ref{lem:A2-24-OB-A}
      & 1 \\
    \midrule
    \textbf{Total} & & \textbf{47} \\
    \bottomrule
  \end{tabularx}
\endgroup
\clearpage

\subsection{Disjunctions Containing \texorpdfstring{EPG$^{*}$}{EPG*}}
\label{app:cf-expansion-family}

This subsection treats the seven antichains
$\{\mathrm{EPG}^{*},P\}$ with
$P\in\mathcal C_{\mathrm{exp}}=
\{\mathrm{HET},\allowbreak\mathrm{OSPO},\allowbreak \SRDV,\allowbreak
\mathrm{ZCC},\allowbreak \VSS,\allowbreak\mathrm{VEC},\allowbreak \CFSM\}$,
which carry sixteen boundary obligations $(A,Y)$ in total. The two
$\FS^R$ obligations, for
$\mathrm{EPG}^{*}\vee \SRDV$ and
$\mathrm{EPG}^{*}\vee \VSS$, are proved in the main text by
Proposition~\ref{prop:cf-expansion-exchange} and are not repeated
here.

Every construction below obtains its $\mathrm{EPG}^{*}$ violation from
Lemma~\ref{lem:solo-drift} in one of two ways. If the execution
contains no nonzero $\M$ at all, the joint-stop sequence has the single
term $C_0$ and the perpetuity condition fails, which is a liveness
violation. If it does contain one, and the first one is performed by a
robot while its peer performs none, then the configuration has not
changed before it, so the joint stops surrounding it realize $C_0$ and
$C_1$, and Lemma~\ref{lem:solo-drift}(1) gives
$\tfrac12(a_1+b_1)\neq c_0$: the expansion condition fails at $i=0$,
which is a safety violation. Each proof only has to record which of the
two cases applies and to refute the companion task on the same
execution.

Throughout this subsection and the next, we use the normalized frames of
Lemma~\ref{lem:norm} and the contraction ratios of
Lemma~\ref{lem:ratios}, and we write
$(\lambda^{\star},\theta^{\star})=(\tfrac12,\pi)$ for the rule of
Algorithm~\ref{alg:epgs} in a normalized frame, as recorded in
Remark~\ref{rem:epgs-normalized-rule}: a round in which both robots are
activated from the same configuration and apply the same rule
$(\lambda,\theta)$ satisfies the expansion condition $d_{i+1}=2d_i$ if
and only if $(\lambda,\theta)=(\lambda^{\star},\theta^{\star})$, and a
solo activation under that rule multiplies the distance by
$c_1^{\star}=\tfrac32$ with $\psi^{\star}\equiv0$.

\subsubsection{\texorpdfstring{EPG$^{*}$ $\vee$ HET (EH)}{EPG* or HET}}
\label{ssub:f1-1}

\begin{lemma}
\label{lem:A2-1-FC-S}
$\mathrm{EH}\notin\Sol(\FC^{S})$.
\end{lemma}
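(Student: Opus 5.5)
We need one admissible fair $\FC^S$ execution, for an arbitrary deterministic $\FC$ algorithm, that violates both $\mathrm{EPG}^{*}$ and $\mathrm{HET}$. The plan is to reuse the prefix of the proof of Lemma~\ref{lem:het_not_fc_s} and read off the $\mathrm{EPG}^{*}$ violation from Corollary~\ref{cor:first-nonzero}.

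\emph{Synchronous prefix.} Run fully synchronous rounds under normalized frames, which is admissible under $\SSY$. By Lemma~\ref{lem:rigidA2} the two robots always share a common color $\gamma_k$, and Corollary~\ref{cor:first-nonzero} yields a dichotomy.

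\emph{Case (a): no color in the orbit has $\lambda_\gamma>0$.} Continuing synchronously gives a fair execution with no nonzero $\M$. The joint-stop sequence then has the single term $C_0$, so perpetuity of $\mathrm{EPG}^{*}$ fails. Also $\moves{a}=0$, so perpetuity of $\mathrm{HET}$ fails on the same execution. Both failures are liveness failures, witnessed by this infinite execution.

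\emph{Case (b): there is a first color $\gamma$ with $\lambda_\gamma>0$.} Truncate the prefix where both robots carry $\gamma$; the configuration is still $\iota$. Then activate $a$ alone repeatedly, using Lemma~\ref{lem:solo}, for two complete cycles, placing activations of $b$ far enough later to keep the schedule fair.
\begin{enumerate}
\item The first cycle is the first nonzero $\M$ of the execution, and $b$ is inactive. By Corollary~\ref{cor:first-nonzero}(b) and Lemma~\ref{lem:solo-drift}(1), the expansion condition of $\mathrm{EPG}^{*}$ fails at $i=0$. This is a safety violation.
\item For $\mathrm{HET}$, the color of $b$ stays $\gamma$, so $a$ uses the single pair $(\lambda_\gamma,\theta_\gamma)$ by Lemma~\ref{lem:single-rule}(b). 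If the first landing is not $v_{-1}$, edge traversal already fails.
\item Otherwise, the second step must go from $v_{-1}$ back to $v_0$, retreating from the stationary $b$, whereas the first step approached it ($\sqrt3R<2R$). A single rule scales landing distances geometrically, so it cannot do both, and edge traversal fails. This is also a safety violation.
\end{enumerate}
If $a$ lands on $b$ during the first cycle, it has left its hexagon cycle, since $b(0)=v_3\ne v_{-1}$, so $\mathrm{HET}$ fails immediately.

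\emph{Fair tail.} Both violations in case (b) occur at finite times, so by Lemma~\ref{lem:concat}(1) any fair $\SSY$ tail, for instance alternating activations, preserves them.

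The main point to check is that the two-cycle solo interval of $a$ is admissible, i.e.\ $b$ has no pending operation. Under $\SSY$ this holds at round boundaries. The proof is then a merger of Lemma~\ref{lem:het_not_fc_s} with Corollary~\ref{cor:first-nonzero}.
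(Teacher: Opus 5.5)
Your proposal is correct and follows essentially the same route as the paper's proof. Both use the synchronous zero-move prefix with the same dichotomy, with both perpetuity conditions failing when nothing moves. Otherwise both use a solo interval of at least two cycles of $a$: its first, solo nonzero $\M$ breaks the center-of-gravity condition of $\mathrm{EPG}^{*}$ via Lemma~\ref{lem:solo-drift}(1). The single rule $(\lambda_\gamma,\theta_\gamma)$ cannot realize the approach $v_0\to v_{-1}$ followed by the retreat $v_{-1}\to v_0$; the paper phrases this as $c_1$ having to equal both $\sqrt3/2$ and $2/\sqrt3$.

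The only detail you leave implicit is that the second activation is indeed a nonzero $\M$. This follows from $\lambda_\gamma>0$ and the positive distance (Lemma~\ref{lem:pumpB1}(c)).
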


\begin{proof}
Let $\mathcal A$ be any deterministic algorithm of $\FC$ under $\SSY$.
Normalize the edge length of the hexagon of $\iota$ to $1$, with
$a(0)=v_0$ and $b(0)=v_3$; the initial colors are common. The visible
state of an $\FC$ robot is the color of its peer, so while both robots
are activated in every round their colors stay equal and follow a
common sequence $c_0,g(c_0),\dots$ (Lemma~\ref{lem:rigidA2}), which is
eventually periodic since the palette is finite. During this prefix the
configuration stays at $\iota$.

If $\lambda_{c_j}=0$ for every $j$, this fair execution contains no
nonzero $\M$, so $\moves{a}=0$ and the perpetuity condition of
$\mathrm{HET}$ fails, while the joint-stop sequence has the single term
$C_0$ and the perpetuity condition of $\mathrm{EPG}^{*}$ fails as well
(Lemma~\ref{lem:solo-drift}(2)). Both are liveness violations of the
same fair infinite execution, so Lemma~\ref{lem:concat}(3) is
satisfied.

Otherwise let $t$ be the least $j$ with $\lambda_{c_j}\neq0$, replace
round $t$ by an interval in which $a$ alone is activated $k\ge2$ times
in a row (Lemma~\ref{lem:solo}), and append a fair tail. Since $b$ is
not activated, the visible state of $a$ is constantly $c_t$, so by
Lemma~\ref{lem:single-rule}(b) its rule is the single pair
$(\lambda,\theta)=(\lambda_{c_t},\theta_{c_t})$, and by
Lemma~\ref{lem:pumpB1} the distance satisfies $d(j)=c_1^{\,j}d(0)$.

All rounds before the interval left the configuration at $\iota$, so
the first activation of $a$ in the interval is the first nonzero $\M$
of the execution and it is performed while $b$ is not activated. By
Lemma~\ref{lem:solo-drift}(1) the center of gravity of the next term of
the joint-stop sequence is not $c_0$, so the expansion condition of
$\mathrm{EPG}^{*}$ fails at $i=0$; this is a safety violation and is
preserved by the tail.

For $\mathrm{HET}$, as $b$ stands still at $v_3$, the second condition
requires $a$ to be at $v_0,v_{-1},v_0$ after $0,1,2$ nonzero $\M$
operations, that is $d=2,\sqrt3,2$; this would force $c_1=\sqrt3/2$ and
$c_1=2/\sqrt3$ simultaneously, which is impossible. (If $c_1=0$ then
$a$ reaches $v_3\neq v_{-1}$ in one activation.) Hence the second
condition fails, and being a safety violation it is preserved by the
tail (Lemma~\ref{lem:concat}(1)).

In both cases a single execution violates both predicates, so the
common-failure property holds at $\FC^{S}$ and
Lemma~\ref{lem:boundary-common-failure} gives the claim.
\end{proof}

\begin{lemma}
\label{lem:A2-1-FS-R}
$\mathrm{EH}\notin\Sol(\FS^{R})$.
\end{lemma}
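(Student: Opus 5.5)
We will use a single alternating-singleton $\RSY$ execution with an empty fully synchronous prefix (Lemma~\ref{lem:altA}) and normalized frames. This is the execution already used in Lemma~\ref{lem:het_not_fs_r}. Put the hexagon of $\iota$ in place with $a(0)=v_0$, $b(0)=v_3$, and circumradius $R$.

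While no nonzero $\M$ has occurred, the configuration is unchanged. By the symmetry argument of Lemma~\ref{lem:het_not_fs_r}, $a$ in round $2k-1$ and $b$ in round $2k$ then both pass from $s_{k-1}$ to $s_k$.

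\emph{Case with no nonzero $\M$.} If no state in the orbit $s_0,s_1,\dots$ produces a nonzero $\M$, this fair execution never moves. Then $\moves{a}=0$, so the perpetuity condition of $\mathrm{HET}$ fails. The joint-stop sequence is just $C_0$, so the perpetuity condition of $\mathrm{EPG}^{*}$ also fails (Lemma~\ref{lem:solo-drift}(2)). Both are liveness failures of the same infinite execution, as Lemma~\ref{lem:concat}(3) requires.

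\emph{Case with a first nonzero $\M$.} Otherwise let $j$ be least such that $s_j$ produces a nonzero $\M$. Then $a$ performs the first nonzero $\M$ of the execution in round $2j+1$, and $b$ is inactive in that round. Every round boundary lies in $\Sab$, so the joint stops before and after this round realize $C_0$ and $C_1$. Lemma~\ref{lem:solo-drift}(1) then gives $\tfrac12(a_1+b_1)\ne c_0$. This is a safety violation of $\mathrm{EPG}^{*}$, and it is preserved by any continuation (Lemma~\ref{lem:concat}(1)).

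For $\mathrm{HET}$ on the same execution, we follow stages two and three of Lemma~\ref{lem:het_not_fs_r}. If $a$ does not land at $v_{-1}$, the edge-traversal condition fails at once when $m_a=1$. If $a$ does land at $v_{-1}$, then $\lambda_{s_j}=\tfrac12$. In round $2j+2$, robot $b$ is in state $s_j$ and, after normalization, receives the same local input $(e,s_j)$. So $b$ moves by $\tfrac12\|v_{-1}-v_3\|=\tfrac{\sqrt3}{2}R\neq R$ and cannot land at $v_2$. Either way the edge-traversal condition of $\mathrm{HET}$ fails, and this is also a safety violation.

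\emph{Combining.} Both safety violations occur within the finite prefix through round $2j+2$. We continue with the fair alternating tail, which is admissible under $\RSY$ because no synchronous round is ever inserted. Both violations persist, so the execution witnesses the common-failure property at $\FS^R$, and Lemma~\ref{lem:boundary-common-failure} gives the claim.

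The one point to check is that the $\mathrm{HET}$ argument of Lemma~\ref{lem:het_not_fs_r} needs nothing beyond this same schedule. It does not: the $\mathrm{EPG}^{*}$ failure occurs in round $2j+1$ and places no constraint on round $2j+2$. I therefore do not expect a real obstacle; the rest is bookkeeping of joint stops.
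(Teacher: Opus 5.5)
Your proposal is correct and follows the paper's proof essentially step for step. You use the same alternating-singleton $\RSY$ execution with empty synchronous prefix and normalized frames, and you obtain the joint perpetuity failure when no nonzero Move occurs. Otherwise, the solo first Move of $a$ violates the $\mathrm{EPG}^{*}$ expansion condition by Lemma~\ref{lem:solo-drift}(1), and the $\lambda=\tfrac12$ replay by $b$ from Lemma~\ref{lem:het_not_fs_r} refutes $\mathrm{HET}$ on that same execution.
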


\begin{proof}
Let $\mathcal A$ be any deterministic algorithm of $\FS$ under $\RSY$.
Take the alternating singleton activation with an empty fully
synchronous prefix (Lemma~\ref{lem:altA}) followed by an arbitrary fair
tail, with edge length $1$, $a(0)=v_0$, $b(0)=v_3$, common initial
colors, and normalized frames up to and including the round following
the first nonzero $\M$. Every round of this execution activates exactly
one robot, so Lemma~\ref{lem:solo-drift}(2) applies to it. The initial
configuration is symmetric under the rotation by $\pi$ about the
midpoint combined with exchanging the robots, and in $\FS$ the state of
the peer does not appear in the local input
(Lemma~\ref{lem:silent}), so until the first nonzero $\M$ both robots
follow the same color sequence $\gamma_0,g(\gamma_0),\dots$
(Lemma~\ref{lem:pumpB2}).
 
If $\lambda_{g^j(\gamma_0)}=0$ for every $j$, no nonzero $\M$ ever occurs,
so $\moves{a}=0$ and the perpetuity condition of $\mathrm{HET}$ fails,
while the joint-stop sequence has the single term $C_0$ and the
perpetuity condition of $\mathrm{EPG}^{*}$ fails.
 
Otherwise let $j$ be least with $\lambda_{g^j(\gamma_0)}\neq0$. The first
nonzero $\M$ is the $(j+1)$-st activation of $a$, performed while $b$
is not activated, so by Lemma~\ref{lem:solo-drift}(1) the expansion
condition of $\mathrm{EPG}^{*}$ fails at $i=0$. Write $(\lambda,\theta)$
for the displacement of that activation, with $\lambda\neq0$, and
$a^{*}$ for its landing point. If $a^{*}=b(0)$, then
$a^{*}=v_3\neq v_{-1}$ violates the second condition of $\mathrm{HET}$.
If $a^{*}\neq b(0)$ and $a^{*}\neq v_{-1}$, the second condition fails
at once. If $a^{*}=v_{-1}$, the robot $b$ activated next has the same
visible state as $a$ had, so it applies the same $(\lambda,\theta)$;
since $\lambda R_\theta$ maps $v_3-v_0$, of length $2$, to
$v_{-1}-v_0$, of length $1$, we have $\lambda=1/2$, and the
displacement of $b$ has length $\tfrac12\|v_{-1}-v_3\|=\sqrt3/2$, which
differs from the $\|v_2-v_3\|=1$ required by the second condition
(Lemma~\ref{lem:het_not_fs_r}).
 
In each case a single execution violates both predicates, so
Lemma~\ref{lem:boundary-common-failure} gives the claim.
\end{proof}

\subsubsection{\texorpdfstring{EPG$^{*}$ $\vee$ OSPO (EO)}{EPG* or OSPO}}
\label{ssub:f1-2}

\begin{lemma}
\label{lem:A2-2-FC-S}
$\mathrm{EO}\notin\Sol(\FC^{S})$.
\end{lemma}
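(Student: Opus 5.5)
We need a single admissible fair $\FC^S$ execution, for an arbitrary deterministic $\FC$ algorithm $\mathcal A$, that violates both $\Phi_{\mathrm{EPG}^{*}}$ and $\Phi_{\mathrm{OSPO}}$. The plan follows the proof of Lemma~\ref{lem:A2-1-FC-S}, with the HET refutation replaced by the move-count argument of Lemma~\ref{lem:osp_not_fc_s}. Fix an instance $\iota$ with distinct positions, and use normalized frames (Lemma~\ref{lem:norm}) at every $\LK$ with distinct positions.

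First run fully synchronous rounds, which are admissible under $\SSY$. By Lemma~\ref{lem:rigidA2} and Corollary~\ref{cor:first-nonzero}, the two robots carry a common, eventually periodic color sequence $(\gamma_k)$ while the configuration stays at $\iota$.

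\emph{Case (a): $\lambda_{\gamma_k}=0$ for every $k$.} The infinite synchronous execution is fair and time-divergent and contains no nonzero $\M$.
\begin{itemize}
\item The joint-stop sequence is just $C_0$, so the perpetuity condition of $\mathrm{EPG}^{*}$ fails.
\item Each robot has $m_r\equiv0$, so no $t_1$ with $m_a(t_1)=m_b(t_1)=1$ exists and $\Phi_{\mathrm{OSPO}}$ fails.
\end{itemize}
Both failures are liveness violations of the same fair infinite execution, which is exactly what Lemma~\ref{lem:concat}(3) requires.

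\emph{Case (b): let $t$ be the least index with $\lambda_{\gamma_t}\neq0$.} Replace round $t$ by a solo activation interval of $a$ with two consecutive complete cycles (Lemma~\ref{lem:solo}), then append a fair tail. Because $b$ is not activated, $a$ sees the constant peer color $\gamma_t$. By Lemma~\ref{lem:single-rule}(b) and Lemma~\ref{lem:pumpB1}, $a$ therefore applies the single nonzero rule $(\lambda_{\gamma_t},\theta_{\gamma_t})$. Note that $\lambda_{\gamma_t}\neq0$ means $\lambda_{\gamma_t}>0$, so the first activation is a nonzero $\M$.
\begin{itemize}
\item \emph{$\mathrm{EPG}^{*}$.} This first activation is the first nonzero $\M$ of the execution, and it is performed while $b$ is inactive. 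Lemma~\ref{lem:solo-drift}(1) then makes the expansion condition fail at $i=0$. This is a safety violation.
\item \emph{OSPO, if $c_1(\gamma_t)=0$.} The first $\M$ of $a$ lands on $b$, so $d=0$ at a finite time and the collision-freedom clause of $\Phi_{\mathrm{OSPO}}$ fails.
\item \emph{OSPO, if $c_1(\gamma_t)>0$.} The second $\LK$ of $a$ sees distinct positions, and by Lemma~\ref{lem:pumpB1}(c) the second cycle is again a nonzero $\M$. At its completion time $T$ we have $m_a(T)=2$ and $m_b(T)=0$. The argument of Lemma~\ref{lem:osp_not_fc_s} then rules out any first phase endpoint $t_1$: one with $t_1\le T$ has $m_b(t_1)=0$, and one with $t_1>T$ has $m_a(t_1)\ge2$.
\end{itemize}
All three violations above are safety violations, so they persist under the appended fair, time-divergent tail by Lemma~\ref{lem:concat}(1).

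The one point needing care is that the solo interval begins from the state the synchronous prefix actually reached. There $b$ has no pending operation, which is automatic at $\SSY$ round boundaries and is what Lemma~\ref{lem:solo} requires. The collision subcase must also be handled separately, since the normalized rule no longer describes $a$'s behavior after it lands on $b$. In every case a single admissible fair execution violates both predicates. The common-failure property therefore holds at $\FC^S$, and Lemma~\ref{lem:boundary-common-failure} gives $\mathrm{EO}\notin\Sol(\FC^S)$.
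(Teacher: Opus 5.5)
Your proof is correct and follows the paper's argument. You run a synchronous zero-move prefix, then replace the first round with a nonzero rule by two solo cycles of $a$. Lemma~\ref{lem:solo-drift}(1) then refutes $\mathrm{EPG}^{*}$, and either a collision or the counts $m_a=2$, $m_b=0$ refute $\mathrm{OSPO}$. The differences are cosmetic and both valid: in the no-motion case you refute $\mathrm{OSPO}$ via $m_r\equiv0$ rather than constancy of $d$, and you split the solo case on $c_1(\gamma_t)=0$ rather than on whether $d$ vanishes during the interval.
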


\begin{proof}
Let $\mathcal A$ be any deterministic algorithm of $\FC$ under $\SSY$.
Run fully synchronous rounds with normalized frames. By
Lemma~\ref{lem:rigidA2} the colors follow a common sequence
$c_0,g(c_0),\dots$, and the configuration stays at $\iota$ while every
$\M$ is zero.

If $\lambda_{c_j}=0$ for every $j$, continue forever. Then $d$ is
constant, so $\Phi_{\mathrm{OSPO}}$ has no strictly decreasing phase.
The joint-stop sequence is $(C_0)$, so the perpetuity condition of
$\mathrm{EPG}^{*}$ fails. This fair execution witnesses both liveness
violations.

Otherwise let $t$ be least with $\lambda_{c_t}\neq0$. Replace round $t$
by two consecutive solo activations of $a$ (Lemma~\ref{lem:solo}), and
append a fair tail. Since $b$ is inactive, the visible state of $a$
stays $c_t$, so by Lemma~\ref{lem:single-rule}(b) the robot $a$ applies
$(\lambda_{c_t},\theta_{c_t})$ with $\lambda_{c_t}\neq0$ both times.

The first of these activations is the first nonzero $\M$ of the
execution, and it is solo. Hence the expansion condition of
$\mathrm{EPG}^{*}$ fails at $i=0$ (Lemma~\ref{lem:solo-drift}(1)).

For OSPO: if $d$ vanishes during the interval, the collision-freedom
condition of $\Phi_{\mathrm{OSPO}}$ fails. Otherwise the second
activation is also a nonzero $\M$. At its completion $m_a=2$ and
$m_b=0$, which no execution satisfying $\Phi_{\mathrm{OSPO}}$ admits
(as in Lemma~\ref{lem:osp_not_fc_s}).

In the latter case, all these violations are safety violations, preserved by the tail
(Lemma~\ref{lem:concat}(1)). Lemma~\ref{lem:boundary-common-failure}
gives the claim.
\end{proof}

\begin{lemma}
\label{lem:A2-2-FS-S}
$\mathrm{EO}\notin\Sol(\FS^{S})$.
\end{lemma}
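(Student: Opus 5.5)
We construct, for an arbitrary deterministic $\FS$ algorithm under $\SSY$, one admissible fair execution violating $\Phi_{\mathrm{EPG}^{*}}$ and $\Phi_{\mathrm{OSPO}}$ together. The construction follows the case split already used for $\mathrm{OSPO}\notin\Sol(\FS^S)$ in Lemma~\ref{lem:osp_not_fs_s}, and adds the solo-expansion obstruction of Lemma~\ref{lem:solo-drift}. Use normalized frames at every $\LK$ at which the robots are distinct. By Lemma~\ref{lem:norm}(1), a robot's displacement rule and next private state then depend only on its own state. Let $s_0,s_1,\dots$ be the resulting state orbit from the designated initial state, and let $\lambda_j$ be the displacement magnitude attached to $s_j$. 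The orbit is eventually periodic (Lemma~\ref{lem:pumpB2}).

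\emph{Case 1: the orbit contains at least two indices with $\lambda_j\neq0$.} Activate $a$ alone until it completes its second nonzero $\M$ (Lemma~\ref{lem:solo}); this takes finitely many rounds. The first nonzero $\M$ of the whole execution is then performed by $a$ while $b$ is inactive. By Lemma~\ref{lem:solo-drift}(1), the term $C_1$ has a midpoint different from $c_0$, which is a safety violation of $\mathrm{EPG}^{*}$. For $\mathrm{OSPO}$, there are two possibilities. If $a$ reaches $b$ during the interval, collision freedom fails. Otherwise, at the completion time $T$ of the second $\M$ we have $m_a(T)=2$ and $m_b(T)=0$, and as in Lemma~\ref{lem:osp_not_fs_s} no joint stop $t_1$ with $m_a(t_1)=m_b(t_1)=1$ can exist. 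Both violations are finite and survive any fair $\SSY$ tail (Lemma~\ref{lem:concat}(1)).

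\emph{Case 2: at most one index has $\lambda_j\neq0$.} If no such index exists, run a fair alternating execution. Nothing ever moves, so $(C_i)=(C_0)$ and the perpetuity condition of $\mathrm{EPG}^{*}$ fails. At the same time $d$ is constant, so $\mathrm{OSPO}$ has no strictly decreasing phase. Both are liveness failures on the same infinite execution. If exactly one index $j$ has $\lambda_j\neq0$, again activate $a$ alone through states $s_0,\dots,s_j$. Its nonzero $\M$ is the first of the execution and is solo, so $\mathrm{EPG}^{*}$ fails by Lemma~\ref{lem:solo-drift}(1). From then on $a$ never moves again, and $b$ can perform at most one nonzero $\M$ in total. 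Continue with a fair alternating tail, checking collision freedom along the way; a collision is itself a violation. Each robot then completes at most one nonzero $\M$ over the whole execution, whereas $\mathrm{OSPO}$ requires infinitely many phases, each containing one nonzero $\M$ of each robot. So $\mathrm{OSPO}$ fails on this fair execution, while the earlier $\mathrm{EPG}^{*}$ safety violation persists (Lemma~\ref{lem:concat}(3)).

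The main point to check is the case split. One might instead try to reuse a fully synchronous prefix as in the $\FS^S$ proof, but that would let $\mathrm{EPG}^{*}$ possibly succeed. Solo activation from the start is therefore essential: it guarantees the first nonzero $\M$ is solo in every case. It remains valid because in $\FS$ the state orbit does not depend on the schedule under normalized frames, so solo activation does not alter the orbit analysis. Combining the cases, the common-failure property holds at $\FS^S$, and Lemma~\ref{lem:boundary-common-failure} gives the claim.
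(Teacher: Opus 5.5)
Your proposal is correct and matches the paper's proof essentially step by step. Both split on the set of nonzero-output indices along the schedule-independent normalized state orbit, and both have $a$ perform the first nonzero Move alone, which gives the $\mathrm{EPG}^{*}$ safety violation. When there are at least two nonzero indices, a second solo nonzero Move of $a$ breaks $|m_a-m_b|\le1$. When there is exactly one, each robot moves at most once, which contradicts the unbounded move counts that $\mathrm{OSPO}$ requires. The only difference is that you use alternating singleton activations where the paper uses fully synchronous rounds, both for the no-motion case and for the tail in the one-index case; both are admissible under \SSY{}, so this is cosmetic.
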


\begin{proof}
Let $\mathcal A$ be any deterministic algorithm of $\FS$ under $\SSY$.
Give normalized frames at every $\LK$ at which the robots are distinct.
Then each robot advances along $c_0,g(c_0),\dots$, one step per
activation and independently of the schedule, and its displacement
depends only on its color. Let
$J:=\{i\ge0:\lambda_{g^i(c_0)}\neq0\}$.

A collision violates the collision-freedom condition of
$\Phi_{\mathrm{OSPO}}$ as a safety violation, so we assume none below.

Any witness $(t_i)$ of $\Phi_{\mathrm{OSPO}}$ has
$m_a(t_i)=m_b(t_i)=i$. Hence
\begin{equation}
  |m_a-m_b|\le1 \text{ at all times, and } m_a,m_b \text{ are unbounded.}
  \tag{$\ast$}
\end{equation}

\emph{Case $J=\emptyset$.} Run fully synchronously forever. No nonzero
$\M$ occurs. So $d$ is constant, and the joint-stop sequence is
$(C_0)$. Both $\Phi_{\mathrm{OSPO}}$ and the perpetuity condition of
$\mathrm{EPG}^{*}$ fail, and this execution witnesses both liveness
violations.

\emph{Case $J\neq\emptyset$.} Let $j:=\min J$, and activate $a$ alone
$j+1$ times (Lemma~\ref{lem:solo}). The last of these activations is
the first nonzero $\M$, and it is solo. So the expansion condition of
$\mathrm{EPG}^{*}$ fails at $i=0$ (Lemma~\ref{lem:solo-drift}(1)); this
is a safety violation.

If $|J|\ge2$, let $j'$ be the second element of $J$. Continue
activating $a$ alone up to its $(j'+1)$-st activation, then append a
fair tail. The robot $a$ performs a second nonzero $\M$ while $b$
performs none. This contradicts $(\ast)$, a safety violation.

If $|J|=1$, append a fully synchronous tail. Each robot performs at
most one nonzero $\M$ in the whole execution, contradicting the
unboundedness in $(\ast)$. This tail witnesses the liveness violation,
and the safety violation of $\mathrm{EPG}^{*}$ precedes it.

Lemma~\ref{lem:concat}(1)(3) is satisfied in every case, and
Lemma~\ref{lem:boundary-common-failure} gives the claim.
\end{proof}

\begin{lemma}
\label{lem:A2-2-OB-R}
$\mathrm{EO}\notin\Sol(\OB^{R})$.
\end{lemma}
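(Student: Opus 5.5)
Fix an arbitrary deterministic $\OB$ algorithm under $\RSY$. We use the alternating singleton execution of Lemma~\ref{lem:altA} with an empty fully synchronous prefix and normalized frames at every $\LK$ at which the robots are distinct. By Lemma~\ref{lem:single-rule}(a) the algorithm has a single rule $(\lambda,\theta)$, and every round activates exactly one robot. Lemma~\ref{lem:solo-drift}(2) therefore applies to this execution: either no nonzero $\M$ ever occurs, or the first nonzero $\M$ is solo and the expansion condition of $\mathrm{EPG}^{*}$ fails at $i=0$. The remaining work is to refute $\Phi_{\mathrm{OSPO}}$ on the same execution, splitting into cases on $(\lambda,\theta)$.

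\emph{Case $\lambda=0$.} Nobody ever moves. The joint-stop sequence is $(C_0)$, so perpetuity of $\mathrm{EPG}^{*}$ fails, and $d$ is constant on $\Sab$, so $\Phi_{\mathrm{OSPO}}$ fails by the eventual-monotonicity criterion stated after Definition~\ref{def:osp}. Both violations are liveness violations witnessed by this same fair infinite execution, as required by Lemma~\ref{lem:concat}(3).

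\emph{Case $\lambda>0$.} The $\mathrm{EPG}^{*}$ violation is a safety violation occurring at the first round.
\begin{itemize}
\item If $c_1=0$, i.e.\ $(\lambda,\theta)=(1,0)$, the first moving robot lands on its peer. The resulting collision violates the collision-freedom condition of OSPO at a finite time, and a fair tail preserves it.
\item If $c_1>0$, every round is a nonzero $\M$ of the activated robot while the peer is stationary. By Lemma~\ref{lem:altA}(i) the landing distances are $d(k)=c_1^{\,k}d(0)$, a monotone sequence.
\end{itemize}
In the subcase $c_1>0$, the joint stops of this execution are exactly the round boundaries, together with the stationary intervals between the cycles, whose distances are these same values. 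Hence $d$ restricted to $\Sab$ is monotone on all of $[0,\infty)$, and the same criterion after Definition~\ref{def:osp} refutes $\Phi_{\mathrm{OSPO}}$ as a liveness violation of this infinite execution.

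The step needing care is exactly this identification of joint stops with round-boundary distances in the alternating tail. One has to check that during a round only one robot is inside a nonzero $\M$, so a joint stop can only occur outside that $\M$ interval and carries $d(k)$ or $d(k+1)$. An alternative route is to use the move-count condition, $m_a-m_b$ alternating between $1$ and $0$, but that is consistent with OSPO, so monotonicity is what must be used. Since the safety violation of $\mathrm{EPG}^{*}$ precedes, and the liveness witness is the execution itself, Lemma~\ref{lem:concat} holds and Lemma~\ref{lem:boundary-common-failure} gives the claim.
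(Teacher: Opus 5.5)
Your argument is correct, and it is more uniform than the paper's.

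\textbf{The paper's route.} The paper splits on whether the single rule equals the $\mathrm{EPG}^{*}$ rule $(\lambda^{\star},\theta^{\star})=(\tfrac12,\pi)$. For every other rule it runs the fully synchronous execution of Lemma~\ref{lem:rigidA1}. There it takes the OSPO failure from Lemma~\ref{lem:osp_not_ob_f} and the $\mathrm{EPG}^{*}$ failure from $\sigma_2\neq2$ (Remark~\ref{rem:epgs-normalized-rule}). Only for $(\lambda^{\star},\theta^{\star})$ does it switch to the alternating singleton schedule, where $c_1=\tfrac32$ makes $d$ non-decreasing.

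\textbf{Your route.} You observe two things about the alternating schedule. First, it defeats $\mathrm{EPG}^{*}$ for every algorithm, by Lemma~\ref{lem:solo-drift}(2). Second, with a single rule the landing distances $c_1^{\,k}d(0)$ are monotone for every $(\lambda,\theta)$; this is exactly the content of Lemma~\ref{lem:single-rule}. Your identification of these landing distances as the only values $d$ takes on $\Sab$ is right, since in each round only the activated robot is ever inside a nonzero $\M$. A pass through the peer inside a $\M$, as for the rule $(2,0)$, merely adds a collision. So the paper's second case is already the general argument, and both the case split and the appeal to the $\OB^F$ impossibility of OSPO become unnecessary.

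\textbf{What the paper's split buys.} Its main benefit is downstream reuse. Lemmas~\ref{lem:A3-1-OB-R} and~\ref{lem:A3-2-OB-R} inherit its fully synchronous execution. They read the exchange-task failures off the two-round $(2,0)$ computation of Lemma~\ref{lem:srdv_not_ob_f}. With your execution they would go through the replay Lemma~\ref{lem:p1_replay} instead, exactly as Lemma~\ref{lem:A3-4-OB-R} already does.
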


\begin{proof}
Let $\mathcal A$ be any deterministic algorithm of $\OB$ under $\RSY$.
An $\OB$ robot has a unique visible state, so in a normalized frame the
rule of $\mathcal A$ is a single pair $(\lambda,\theta)$
(Lemma~\ref{lem:single-rule}(a)).

\emph{Case $(\lambda,\theta)\neq(\lambda^{\star},\theta^{\star})$.}
Since $\RSY$ contains the fully synchronous executions, take the
execution of Lemma~\ref{lem:rigidA1}, with both robots activated in
every round and a normalized frame at every $\LK$. By
Lemma~\ref{lem:osp_not_ob_f} it violates $\Phi_{\mathrm{OSPO}}$, the
degenerate cases $\lambda=0$ and $c_2=0$ being treated there as well.
For $\mathrm{EPG}^{*}$: if $\lambda=0$ the configuration never changes,
the joint-stop sequence has the single term $C_0$ and the perpetuity
condition fails; if $\lambda\neq0$ then $C_1\neq C_0$ and, by
Remark~\ref{rem:epgs-normalized-rule}, $d_1=\sigma_2 d_0$ with
$\sigma_2\neq2$, so the expansion condition fails at $i=0$.

\emph{Case $(\lambda,\theta)=(\lambda^{\star},\theta^{\star})$.} Take
the alternating singleton activation with an empty fully synchronous
prefix (Lemma~\ref{lem:altA}) and a normalized frame at every $\LK$.
Since $\lambda^{\star}\neq0$, the first activation of $a$ is the first
nonzero $\M$ of the execution and is performed while $b$ is not
activated, so by Lemma~\ref{lem:solo-drift}(1) the expansion condition
of $\mathrm{EPG}^{*}$ fails at $i=0$. For $\mathrm{OSPO}$: each
activation moves the active robot directly away from its peer by half
the observed distance, so $d$ is non-decreasing throughout the
execution and in fact $d\mapsto\tfrac32 d$ at every activation, by
Lemma~\ref{lem:ratios}(i) with
$\sigma_1=1-\lambda^{\star}e^{\mathrm{i}\theta^{\star}}=\tfrac32$.
Hence no interval on which $d$ strictly decreases exists and
$\Phi_{\mathrm{OSPO}}$ fails.

In both cases a single execution violates both predicates, so
Lemma~\ref{lem:boundary-common-failure} gives the claim.
\end{proof}

\subsubsection{\texorpdfstring{EPG$^{*}$ $\vee$ S-RDV (ES)}{EPG* or S-RDV}}
\label{ssub:f1-3}

\begin{lemma}
\label{lem:A2-3-FC-A}
$\mathrm{ES}\notin\Sol(\FC^{A})$.
\end{lemma}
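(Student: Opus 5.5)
We must show: for every deterministic $\FC$ algorithm under $\ASY$, there is one admissible fair execution that violates both $\mathrm{EPG}^{*}$ and $\SRDV$. By Lemma~\ref{lem:boundary-common-failure} this gives the claim.

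The plan is to reuse the structure of Lemma~\ref{lem:A2-1-FC-S}, adapted to the exchange phase. Fix coordinates $A=0$, $a_0=d$, $b_0=2d$, $B=3d$ and common initial colors. We run fully synchronous rounds with normalized frames; these are admissible under $\ASY$ with aligned phase times. By Lemma~\ref{lem:rigidA2} the colors follow a common sequence $c_0,g(c_0),\dots$, and the configuration stays at $\iota$ while all Moves are zero.

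\emph{Case 1: $\lambda_{c_j}=0$ for every $j$.} Continue forever. Then the joint-stop sequence is $(C_0)$, so the perpetuity condition of $\mathrm{EPG}^{*}$ fails. No nonzero Move ever occurs, so $(\mathrm{P1}\text{-}c)$ never holds and $\SRDV$ fails (Lemma~\ref{lem:p1_no_null}). Both are liveness failures on the same fair execution, as Lemma~\ref{lem:concat}(3) requires.

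\emph{Case 2: some $\lambda_{c_j}\neq0$.} Let $t$ be the least such index and put $(\lambda,\theta)=(\lambda_{c_t},\theta_{c_t})$. Replace round $t$ by a solo activation interval of $a$ (Lemma~\ref{lem:solo}). Because $b$ is inactive, the visible state of $a$ stays $c_t$, so by Lemma~\ref{lem:single-rule}(b) every activation of $a$ applies $(\lambda,\theta)$. The first activation is the first nonzero Move of the execution, and it is solo. By Lemma~\ref{lem:solo-drift}(1) the expansion condition of $\mathrm{EPG}^{*}$ therefore fails at $i=0$, a safety violation.

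For $\SRDV$ there are two subcases.

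\emph{If the first landing is not at $B$.} By Lemma~\ref{lem:p1_first_move}, or directly by the argument in its proof, $\SRDV$ fails. This is irrevocable: either $a$ already sits outside $[A,B]$, or reaching $B$ with $m_a=1$ has become impossible. Append any fair tail.

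\emph{If $a$ lands at $B=3d$, i.e.\ $(\lambda,\theta)=(2,0)$.} Activate $a$ once more while $b$ remains inactive. The peer vector is now $b_0-B=-d$, so $a$ lands at $3d+2(-d)=d$. That is still a point of $P$, so the second activation alone does not yet break anything. Instead we exploit the count requirement. The exchange configurations demand $a(T_1)=B$ with $m_a(T_1)=1$, or $C_2$ with $m_a=0$. After this second solo Move, $m_a=2$ and $b$ has not moved, so no $T_1$ witness existed at the moment $a$ stood at $B$.

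\emph{Main obstacle.} It is unclear whether $\SRDV$ allows $T_1$ to be the first time $C_1$ is formed, with counts $(1,0)$. If so, $T_1$ can be taken between the two moves, and the second move does not refute $\SRDV$. To handle this, I would try a different continuation. Keep $b$ inactive and alternate $a$ between $B$ and $a_0$ forever. This holds off $b$ for only finitely many activations per stretch, which fairness tolerates but does not sustain indefinitely, so it cannot by itself refute rendezvous.

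The fallback is to avoid the issue by reducing directly to the stable-Rendezvous impossibility in $\FC^A$~\cite{MSW-rendezvous}, as in Lemma~\ref{lem:srdv_not_fc_a}. Take a fair execution that already begins with a solo first nonzero Move, so that $\mathrm{EPG}^{*}$ is violated first. Then continue with the adversarial Rendezvous-refuting schedule from the state reached, which is distinct positions with no pending operation. This requires checking that the cited impossibility holds from an arbitrary reachable colored state and not only from the initial one. That check is the step I expect to be hardest.
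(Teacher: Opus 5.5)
Your Case~1 and the non-$(2,0)$ branch of Case~2 are correct. The solo first Move breaks $\mathrm{EPG}^{*}$ by Lemma~\ref{lem:solo-drift}(1), and a first landing other than $B$ makes the exchange phase of \SRDV{} unsatisfiable.

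The gap is the $(\lambda,\theta)=(2,0)$ branch, and your own schedule creates it. After the solo cycle of $a$, the configuration is $(B,b_0)$. Robot $a$ now carries its updated light $\gamma'$, while $b$ still carries $c_t$, and in general $\gamma'\neq c_t$. You correctly note that $C_1$ is then a valid $T_1$ witness with counts $(1,0)$, so only the terminal rendezvous condition remains to be refuted. However, the stable-Rendezvous impossibility of~\cite{MSW-rendezvous} used in Lemma~\ref{lem:srdv_not_fc_a} concerns executions that start from a common designated color with no pending operation. Your reached state has unequal colors, so that theorem does not apply, and no version for arbitrary color assignments is available. As written, this case stays open.

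The missing idea is to schedule the first nonzero round so that the reached state is itself an \emph{initial} state of a modified algorithm.
\begin{enumerate}
\item The rule $(\lambda_{c_t},\theta_{c_t})$ is determined before that round is scheduled. In the $(2,0)$ case, let both robots Look on $\iota$ and both Compute. Both then obtain the rule $(2,0)$ and the same new color $\gamma'$.
\item Delay the Move of $b$ until $a$ has landed (Corollary~\ref{cor:stale}). The landing of $a$ is a joint stop, because $b$ has not begun moving. Its configuration $(3d,2d)$ has midpoint $\tfrac52 d\neq c_0$, so $\mathrm{EPG}^{*}$ still fails at $i=0$.
\item Release $b$'s stale Move to $A$. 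The robots are now at $(B,A)$, both colored $\gamma'$, with no pending cycle. This is exactly an initial configuration of $\mathcal A_{\gamma'}$, the algorithm with the transition rule of $\mathcal A$ and designated color $\gamma'$.
\item Apply the stable-Rendezvous impossibility to $\mathcal A_{\gamma'}$. Transport its counterexecution to $(B,A)$ by an orientation-preserving similarity, transforming the local frames accordingly, which variable disorientation permits. This gives a fair continuation violating the terminal condition of \SRDV, while the earlier $\mathrm{EPG}^{*}$ violation persists.
\end{enumerate}
A single fully synchronous $(2,0)$ round would also reach $(B,A)$ with equal colors. It breaks $\mathrm{EPG}^{*}$ through $d_1=-3d\neq 2d_0$. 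The essential point in either version is never to update one robot's light alone before invoking the rendezvous impossibility.
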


\begin{proof}
Let $\mathcal A$ be any deterministic algorithm of $\FC$ under $\ASY$.
Fix distinct initial positions $a_0,b_0$, and put
$A=2a_0-b_0$, $B=2b_0-a_0$, and $P=\{A,a_0,b_0,B\}$.

Run fully synchronous cycles, choosing local frames that give
the two robots identical normalized observations.
Their colors remain equal.
If every computed Move is zero, this fair, time-divergent execution
keeps the robots at their distinct initial positions.
Thus $\Phi_{\SRDV}$ fails, and the joint-stop sequence consists
only of $C_0$, violating the perpetuity condition of
$\mathrm{EPG}^{*}$.

Otherwise, consider the first round with a nonzero output.
Let both robots perform their Looks before either performs its
Compute, and then let both complete their Computes.
They obtain the same normalized displacement $(\lambda,\theta)$,
with $\lambda>0$, and the same new color $\gamma'$.
Delay the Move of $b$ until after $a$ completes its Move.
At the landing of $a$, both robots are stationary, although
$b$ still has a pending Move.
Writing $p\ne a_0$ for the landing point of $a$, this joint stop
has midpoint
\[
  \frac{p+b_0}{2}\ne\frac{a_0+b_0}{2}.
\]
Hence Lemma~\ref{lem:solo-drift}(1) gives a finite violation
of $\Phi_{\mathrm{EPG}^{*}}$, which persists under every extension.

Suppose first that $(\lambda,\theta)\ne(2,0)$.
Since the Look of $a$ observed the initial configuration,
its landing point satisfies $p\ne B$.
No valid exchange time $T_1$ precedes this first completion,
because both cumulative nonzero-move counts were zero.
If $p\notin P$, the stationary-position requirement of
$\SRDV$ is already violated.
Otherwise $p\in\{A,b_0\}$.
The exchange configuration requiring zero Moves by $a$
is now impossible.
The other two require $a$ to reach $B$ with exactly one
completed nonzero Move.
However, any subsequent arrival at $B$ while respecting
the segment constraint requires a second completed nonzero Move:
the exterior endpoint $B$ cannot lie in the interior of
a straight movement contained in $[A,B]$.
Thus no continuation satisfies the exchange phase.
Complete the pending cycle of $b$ and append an arbitrary
admissible fair, time-divergent continuation.
Both predicates fail on the resulting execution.

It remains to consider $(\lambda,\theta)=(2,0)$.
Then $a$ lands at $B$.
Without reactivating $a$, let $b$ complete its already computed
Move, whose destination is $A$.
The robots are now at the distinct positions $(B,A)$,
both have color $\gamma'$, and neither has a pending cycle.

Let $\mathcal A_{\gamma'}$ have exactly the same transition
rule as $\mathcal A$, but designated initial color $\gamma'$.
Apply the general impossibility of stable $\mathrm{RDV}$
in $\FC^{A}$ used in the proof of
Lemma~\ref{lem:srdv_not_fc_a} to $\mathcal A_{\gamma'}$.
It supplies an admissible fair, time-divergent execution
from some distinct initial positions that fails stable rendezvous.
An orientation-preserving similarity maps those positions
to $(B,A)$.
Transforming the local frames accordingly preserves every
local observation and color transition, so variable
disorientation allows this counterexecution to be reproduced
from the reached configuration.

Append that execution to the finite prefix.
The resulting execution is admissible, fair, and time-divergent.
Its failure of stable rendezvous violates the terminal
condition of $\SRDV$, while the earlier violation of
$\Phi_{\mathrm{EPG}^{*}}$ persists.
Thus a single execution violates both predicates, and
Lemma~\ref{lem:boundary-common-failure} gives the claim.
\end{proof}

%
%
%

The remaining obligation of this pair, the class $\FS^{R}$, is
Proposition~\ref{prop:cf-expansion-exchange}.

\subsubsection{\texorpdfstring{EPG$^{*}$ $\vee$ ZCC (EZ)}{EPG* or ZCC}}
\label{ssub:f1-4}

\begin{lemma}
\label{lem:A2-4-FC-S}
$\mathrm{EZ}\notin\Sol(\FC^{S})$.
\end{lemma}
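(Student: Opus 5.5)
We follow the template of Lemmas~\ref{lem:A2-1-FC-S} and~\ref{lem:A2-2-FC-S}: one fully synchronous prefix, then a solo interval of $a$ at the first nonzero rule, and the continuation property already proved inside Lemma~\ref{lem:zcc_not_fc_s} to obtain the ZCC failure. Fix an arbitrary deterministic $\FC^S$ algorithm $\mathcal A$ and an instance $\iota$ with initial positions $a_0,b_0$. Run fully synchronous rounds with normalized frames. By Lemma~\ref{lem:rigidA2} the colors follow a common sequence $c_0,g(c_0),\dots$, and the configuration stays at $\iota$ while every output is zero.

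\emph{Case (a) of Corollary~\ref{cor:first-nonzero}.} If every $\lambda_{c_j}=0$, the execution never moves. The joint-stop sequence is $(C_0)$, so the perpetuity of $\mathrm{EPG}^{*}$ fails. Since the positions stay fixed and distinct, convergence of ZCC fails on the same fair infinite execution, and Lemma~\ref{lem:concat}(3) is respected.

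\emph{Case (b).} Let $t$ be the first round with $\lambda_{c_t}\neq0$. At the beginning of round $t$ no operation is pending, the robots are at $\iota$, and the colors are equal to $c_t$. Instead of round $t$, begin the construction of Lemma~\ref{lem:zcc_not_fc_s} from this state. Its hypotheses hold: the prefix ends with no pending operation, and the robots lie in $[a_0,b_0]$ in their original order; unequal colors are allowed. That construction uses alternating singleton activations, or solo repetitions of one robot, with normalized frames. The key observation is that its first cycle is a solo activation of some robot $r$ (say $a$) with visible state $c_t$. This is exactly the situation of Corollary~\ref{cor:first-nonzero}(b), so that cycle is the first nonzero $\M$ of the whole execution and it is performed while the peer is inactive. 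Lemma~\ref{lem:solo-drift}(1) then gives $\tfrac12(a_1+b_1)\neq c_0$. This is a safety violation of $\mathrm{EPG}^{*}$ at $i=0$, and it persists under every extension (Lemma~\ref{lem:concat}(1)). The rest of the continuation in Lemma~\ref{lem:zcc_not_fc_s} then yields, on this same execution, either a collision-freedom violation (a safety violation, after which a fair tail is appended) or a fair infinite tail along which $f_r$ is monotone and the zig-zag condition fails. In both outcomes the earlier $\mathrm{EPG}^{*}$ violation is preserved.

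\emph{Main obstacle.} The delicate point is checking that the ZCC continuation really begins with a solo nonzero move. It could instead begin with a coincidence, or the default alternating schedule could start with zero outputs. This needs case-by-case checking of Lemma~\ref{lem:zcc_not_fc_s}:
\begin{itemize}
\item If $\sigma=1-\lambda_{c_t}e^{\mathrm i\theta_{c_t}}\in(0,1]$, the first default cycle of $a$ is already nonzero and solo.
\item If $\sigma\notin(0,1]$, the repeated solo activations of $a$ start immediately, which is again a solo first move.
\item If $\sigma=0$, $a$ lands on $b$. This is still a solo first move, and the coincidence continuation afterwards supplies the ZCC failure.
\end{itemize}
The remaining care is to order the schedule so that $b$ has not acted before $a$'s first cycle, which holds by choosing $a$ to start the alternation. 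With both predicates violated on one admissible fair execution in every case, Lemma~\ref{lem:boundary-common-failure} gives the claim.
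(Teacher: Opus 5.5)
Your proof is correct and follows essentially the paper's route. Both arguments use a fully synchronous prefix up to the first common color $c_t$ with $\lambda_{c_t}\neq0$, a solo first nonzero move of $a$ that breaks $\mathrm{EPG}^{*}$ by Lemma~\ref{lem:solo-drift}(1), and the continuation machinery of Lemma~\ref{lem:zcc_not_fc_s} for ZCC. The only difference is packaging: the paper redoes the first-step case split on $\sigma_1$ inline ($c_1=0$, $\psi\not\equiv0$, $\psi\equiv0$ with $c_1>1$) and calls the continuation property only in the contraction case after one activation, whereas you invoke it directly at round $t$ with $a$ starting the alternation, which is valid because its hypotheses hold at $\iota$ and its first cycle is exactly that solo nonzero move.
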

 
\begin{proof}
Let $\mathcal A$ be any deterministic algorithm of $\FC$ under $\SSY$.
The initial colors being common, while both robots are activated
simultaneously their colors follow a common eventually periodic
sequence $c_0,g(c_0),\dots$ (Lemma~\ref{lem:rigidA2}) and the
configuration stays at $\iota$.
 
If $\lambda_{c_j}=0$ for every $j$, the two robots never move and
remain at distinct positions, so the convergence condition of
$\mathrm{ZCC}$ fails, while the joint-stop sequence has the single term
$C_0$ and the perpetuity condition of $\mathrm{EPG}^{*}$ fails. This
fair execution witnesses both liveness violations.
 
Otherwise let $t$ be the least $j$ with $\lambda_{c_j}\neq0$ and
replace round $t$ by an interval in which $a$ alone is activated
repeatedly (Lemma~\ref{lem:solo}); the interval may be made arbitrarily
long, the activations of $b$ being placed far enough in the future to
keep the execution fair. The configuration was unchanged before the
interval, so the first activation of $a$ in it is the first nonzero
$\M$ of the execution and it is solo; by
Lemma~\ref{lem:solo-drift}(1) the expansion condition of
$\mathrm{EPG}^{*}$ fails at $i=0$, a safety violation.
 
The visible state of $a$ is constantly $c_t$, so by
Lemma~\ref{lem:single-rule}(b) its rule is a single pair
$(\lambda,\theta)$ with $\lambda\neq0$, and $b$ is at rest throughout.
Under $\SSY$ every activation of $a$ is a complete cycle, so no
operation is pending at the end of any of them. Put
$\sigma_1:=1-\lambda e^{\mathrm{i}\theta}$ and $c_1:=|\sigma_1|$, and
write $a_0:=a(0)$, $b_0:=b(0)$ and $d:=\|b_0-a_0\|$.
 
If $c_1=0$, the robot $a$ reaches the position of $b$ at the first
activation, and no operation is pending at that time. Use the
coincidence continuation described in the proof of
Lemma~\ref{lem:zcc_not_fc_s}, with $a_0,b_0$ kept fixed. It either
keeps both positions fixed forever, so that the zig-zag condition
fails, or produces a finite violation of the collision-freedom
condition; in either case $\Phi_{\mathrm{ZCC}}$ fails.
 
Assume $c_1\neq0$, and put $\psi:=\arg\sigma_1$
(Lemma~\ref{lem:ratios}(iii)). The vector $w_j:=b_0-a(j)$ satisfies
$w_j=\sigma_1^{\,j}w_0\neq0$, so every $\LK$ of $a$ in the interval
observes distinct positions, and
\[
  f_a(j)=\|a(j)-a_0\|=\|w_0-\sigma_1^{\,j}w_0\|
        =d\,\bigl|1-c_1^{\,j}e^{\mathrm{i}j\psi}\bigr| .
\]
As $b$ stands still, $\|b(j)-a_0\|=d$, so the collision-freedom
condition requires $f_a(j)\le d$, that is $c_1^{\,j}\le2\cos(j\psi)$.
Note that $\sigma_1=1$ would force $\lambda=0$, so $c_1\neq1$ whenever
$\psi\equiv0$.
\begin{itemize}
\item If $\psi\not\equiv0\pmod{2\pi}$, Lemma~\ref{lem:ratios}(iii)
  provides $k\ge1$ with $\cos(k\psi)\le-\tfrac12<0$; making the
  interval long enough to reach the $k$-th activation, the
  collision-freedom condition fails there, a safety violation.
  Append an arbitrary fair, time-divergent continuation.
\item If $\psi\equiv0$ and $c_1>1$, that is $\theta=\pi$, then
  $f_a(j)=(c_1^{\,j}-1)d$ exceeds $d$ as soon as $c_1^{\,j}>2$, and the
  collision-freedom condition again fails after finitely many
  activations. Append an arbitrary fair, time-divergent continuation.
\item In the remaining contraction case $\psi\equiv0$ and $0<c_1<1$,
  that is $\sigma_1\in(0,1)$, end the solo interval after its first
  activation. Then $a(1)=b_0-\sigma_1(b_0-a_0)$ lies in the open
  segment $(a_0,b_0)$ and $b$ is still at $b_0$, so the robots still
  lie in the original initial segment in their original order, and
  neither has a pending operation. Apply the continuation property
  proved in Lemma~\ref{lem:zcc_not_fc_s}, keeping the original initial
  positions fixed. It supplies a fair, time-divergent continuation
  violating $\Phi_{\mathrm{ZCC}}$.
\end{itemize}
 
In every case the safety violation of $\Phi_{\mathrm{EPG}^{*}}$ occurs
at the first activation of the solo interval, which belongs to the
prefix of each construction above, and every construction is a fair,
time-divergent continuation of that prefix. The violation therefore
persists (Lemma~\ref{lem:concat}(1)), one fair infinite execution
violates both predicates, and
Lemma~\ref{lem:boundary-common-failure} gives the claim.
\end{proof}
 
\begin{lemma}
\label{lem:A2-4-FS-AM}
$\mathrm{EZ}\notin\Sol(\FS^{A_M})$.
\end{lemma}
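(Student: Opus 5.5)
Fix an arbitrary deterministic $\FS$ algorithm under $A_M$ and use the initial instance $\iota$ with normalized frames at every $\LK$ at which the robots are distinct. The plan mirrors Lemma~\ref{lem:A2-4-FC-S}: first obtain a solo first nonzero $\M$, which defeats $\mathrm{EPG}^{*}$ through Lemma~\ref{lem:solo-drift}(1), and then hand the reached state to the continuation statement proved inside Lemma~\ref{lem:zcc_not_fs_am}. That statement takes any state with no pending operation in which either the robots coincide or their normalized colour orbits share an eventual cycle, and produces an admissible fair continuation violating $\Phi_{\mathrm{ZCC}}$. Since the $\mathrm{EPG}^{*}$ violation is a safety violation placed in the prefix, Lemma~\ref{lem:concat}(1),(3) keeps it alive along that continuation.

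\textbf{Step 1 (finding the first nonzero move).} Run alternating complete cycles of $a$ and $b$, which is admissible under $A_M$ and is the execution of Lemma~\ref{lem:solo-drift}(2). Until the first nonzero $\M$, both robots traverse the same orbit $c_0,g(c_0),\dots$.
\begin{itemize}
\item If no state on this orbit has $\lambda\neq0$, continue the alternation forever. Both robots then stay at their distinct initial positions, so convergence of $\mathrm{ZCC}$ fails and perpetuity of $\mathrm{EPG}^{*}$ fails, giving two liveness violations on one execution.
\item Otherwise let $j$ be the least index with $\lambda_{g^j(c_0)}\neq0$. The $(j{+}1)$-st cycle of $a$ is then a solo first nonzero $\M$, and the expansion condition fails at $i=0$. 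Stop immediately after this cycle. At that moment $b$ has completed exactly $j$ cycles and has no pending operation, and $a$ has just finished its cycle.
\end{itemize}

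\textbf{Step 2 (handing over to the $\mathrm{ZCC}$ continuation).} Check the hypothesis of the continuation property in the reached state.
\begin{itemize}
\item If $a$ landed on $b$, the robots coincide with nothing pending, so the coincidence branch applies.
\item Otherwise the robots are distinct. Their colours are $g^{j+1}(c_0)$ and $g^{j}(c_0)$, both on the same orbit of $g$, so they share the eventual cycle.
\end{itemize}
If a safety violation of $\mathrm{ZCC}$ has already occurred, for instance collision freedom was broken by $a$'s move, append any fair tail. Otherwise invoke the continuation of Lemma~\ref{lem:zcc_not_fs_am} with the original $a_0,b_0$ kept fixed. This is valid because $\Phi_{\mathrm{ZCC}}$ refers only to $\iota$ and the trajectory, as noted in the last clause of Lemma~\ref{lem:concat}.

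\textbf{Main obstacle.} The continuation in Lemma~\ref{lem:zcc_not_fs_am} is phrased for a state with distinct positions and a shared eventual colour cycle, and parts of its argument are meant to be independent of the starting configuration. The risk is that it still implicitly uses the projection order $x_a\le x_b$ derived from safety, or the initial segment itself. I would re-verify that its case analysis ($c_1=0$ reachable, $\Gamma\ge1$, $\Gamma<1$ with the delayed-move overtaking) goes through from an arbitrary safe state. Two points need checking:
\begin{itemize}
\item the phase offset of one colour step between the robots only shifts the orbits, and does not change $\Gamma$ or the reachability of $c_1=0$;
\item the delays used are placed before a $\M$ starts, as permitted under $A_M$ by Corollary~\ref{cor:stale}.
\end{itemize}
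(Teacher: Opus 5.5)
Your proposal is correct and is essentially the paper's proof: alternating complete singleton cycles under normalized frames, the solo first nonzero $\M$ defeating $\mathrm{EPG}^{*}$ via Lemma~\ref{lem:solo-drift}(1), the prefix ended at that joint stop with colours $g^{j+1}(c_0)$ and $g^{j}(c_0)$ on one orbit, and then either the coincidence continuation or the continuation statement of Lemma~\ref{lem:zcc_not_fs_am}. The obstacle you flag is already handled. That lemma is explicitly stated as a continuation property from any state with no pending operation in which the robots either coincide or share an eventual colour cycle. Its projection order $x_a\le x_b$ is derived only from safety of the trajectory so far, with $a_0,b_0$ fixed. A one-step colour offset between the robots leaves the per-period product $\Gamma$ unchanged.
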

 
\begin{proof}
Let $\mathcal A$ be any deterministic algorithm of $\FS$ under $A_M$,
which admits synchronous activations of one robot at a time
(Lemma~\ref{lem:solo}). Start with alternating singleton activations,
each cycle being completed before the next one begins, with normalized
frames at every $\LK$ that observes distinct positions. This interval
activates one robot at a time, so Lemma~\ref{lem:solo-drift} applies
to it.
 
If no nonzero $\M$ ever occurs, the two robots stay at $\iota$ at
distinct positions, so the convergence condition of $\mathrm{ZCC}$
fails, and the joint-stop sequence has the single term $C_0$, so the
perpetuity condition of $\mathrm{EPG}^{*}$ fails. This fair execution
witnesses both liveness violations.
 
Assume therefore that a nonzero $\M$ occurs. The first one is performed
while the peer is not activated, so by Lemma~\ref{lem:solo-drift}(1)
the expansion condition of $\mathrm{EPG}^{*}$ fails at $i=0$, a safety
violation. End the prefix at the completion of the cycle containing
this $\M$. This is a joint stop, with no pending operation.
 
Every $\LK$ in the prefix observes distinct positions: before the first
nonzero $\M$ the configuration is $\iota$, and that $\M$ is the last
operation of the prefix. Since the visible state of an $\FS$ robot is
its own color, Lemma~\ref{lem:norm}(1) shows that each robot that has
been activated $k$ times carries $g^{k}(c_0)$. Hence the two colors lie
on the same normalized color orbit, the orbit of $c_0$ under $g$.
 
If the robots coincide, use the coincidence continuation from
Lemma~\ref{lem:zcc_not_fc_s}. Otherwise their colors lie on the same
normalized color orbit, so the continuation statement proved in
Lemma~\ref{lem:zcc_not_fs_am} applies. In either case it supplies a
fair, time-divergent continuation violating $\Phi_{\mathrm{ZCC}}$, with
the original initial positions kept fixed. The safety violation
established in the prefix persists (Lemma~\ref{lem:concat}(1)).
 
A single fair infinite execution therefore violates both predicates,
so Lemma~\ref{lem:boundary-common-failure} gives the claim.
\end{proof}

\begin{lemma}
\label{lem:A2-4-OB-R}
$\mathrm{EZ}\notin\Sol(\OB^{R})$.
\end{lemma}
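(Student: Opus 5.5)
Fix a deterministic $\OB^R$ algorithm $\mathcal A$. By Lemma~\ref{lem:single-rule}(a) its normalized rule is a single pair $(\lambda,\theta)$. I will split on whether this pair equals $(\lambda^\star,\theta^\star)$, exactly as in Lemma~\ref{lem:A2-2-OB-R}. In each case I will produce one execution violating both $\Phi_{\mathrm{EPG}^*}$ and $\Phi_{\mathrm{ZCC}}$, and then conclude with Lemma~\ref{lem:boundary-common-failure}.

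\emph{Case $(\lambda,\theta)\neq(\lambda^\star,\theta^\star)$.} Use the fully synchronous execution of Lemma~\ref{lem:rigidA1}, which is admissible under $\RSY$.
\begin{itemize}
\item For $\mathrm{EPG}^*$: if $\lambda=0$, the joint-stop sequence is just $(C_0)$ and perpetuity fails. Otherwise $C_1\neq C_0$ and $d_1=\sigma_2 d_0$ with $\sigma_2\neq2$ by Remark~\ref{rem:epgs-normalized-rule}, so expansion fails at $i=0$.
\item For $\mathrm{ZCC}$: the proof of Lemma~\ref{lem:zcc_not_ob_f} is written for exactly this execution. It treats the subcases $\lambda=0$, $c_2=0$ (coincidence maintained with the adversarial common direction), $\varphi\not\equiv0$ (a safety violation of collision freedom) and $\varphi\equiv0$ (monotone $f_a$). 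In each subcase that execution violates $\Phi_{\mathrm{ZCC}}$.
\end{itemize}
The execution choices in the two bullets must be the same. The only adversarial freedom is in the $c_2=0$ subcase, and there the $\mathrm{EPG}^*$ violation already occurs at round~$1$ because $d_1=0\neq 2d_0$. So the execution of Lemma~\ref{lem:zcc_not_ob_f} serves both tasks.

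\emph{Case $(\lambda,\theta)=(\lambda^\star,\theta^\star)$.} Take the alternating singleton execution of Lemma~\ref{lem:altA} with an empty synchronous prefix and normalized frames.
\begin{itemize}
\item The first activation of $a$ is a nonzero solo $\M$ because $\lambda^\star\neq0$. By Lemma~\ref{lem:solo-drift}(1), expansion fails at $i=0$.
\item For $\mathrm{ZCC}$: every activation multiplies $d$ by $c_1^\star=\tfrac32$ (Lemma~\ref{lem:ratios}(i)), so the robots diverge and convergence fails. This is a liveness violation witnessed by this very infinite fair execution.
\end{itemize}
A safety violation alone would also suffice. The first move of $a$ takes it away from $b$ to the point at distance $\tfrac12 d$ beyond $a_0$, and $b$'s next move goes to distance $\tfrac34 d$ beyond $b_0$. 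Neither move alone breaks collision freedom, so I will rely on divergence. The safety violation of $\mathrm{EPG}^*$ sits in the prefix of the same execution, so Lemma~\ref{lem:concat}(1),(3) applies.

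The main point to check is the first case: confirming that the adversarial choice used in the $c_2=0$ branch of Lemma~\ref{lem:zcc_not_ob_f}, namely moving the coincident pair in a fixed direction $w$, stays consistent with the fully synchronous execution used for $\mathrm{EPG}^*$. Since the $\mathrm{EPG}^*$ failure is already fixed at round~$1$, that freedom can be spent entirely on $\mathrm{ZCC}$.
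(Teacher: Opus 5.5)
Your proposal is correct and follows the paper's proof essentially verbatim. You use the same split on whether $(\lambda,\theta)=(\lambda^\star,\theta^\star)$; in the first case the fully synchronous execution of Lemma~\ref{lem:rigidA1} together with Lemma~\ref{lem:zcc_not_ob_f} and Remark~\ref{rem:epgs-normalized-rule}, and in the second the alternating singleton execution with Lemma~\ref{lem:solo-drift}(1) and divergence under $c_1^\star=\tfrac32$. Your extra check, that the adversary's displacement choice in the $c_2=0$ branch cannot disturb the $\mathrm{EPG}^*$ violation (already fixed at round~$1$ by $d_1=0\neq 2d_0$), is correct and only makes explicit what the paper leaves implicit.
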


\begin{proof}
Let $\mathcal A$ be any deterministic algorithm of $\OB$ under $\RSY$.
An $\OB$ robot has a unique visible state, so in a normalized frame the
rule is a single pair $(\lambda,\theta)$
(Lemma~\ref{lem:single-rule}(a)).

\emph{Case $(\lambda,\theta)\neq(\lambda^{\star},\theta^{\star})$.}
Since $\RSY$ contains the fully synchronous executions, take the
execution of Lemma~\ref{lem:rigidA1}, with both robots activated in
every round and a normalized frame at every $\LK$. By
Lemma~\ref{lem:zcc_not_ob_f} that execution violates
$\Phi_{\mathrm{ZCC}}$, the degenerate cases $\lambda=0$ and $c_2=0$
being treated there as well. For $\mathrm{EPG}^{*}$: if $\lambda=0$ the
joint-stop sequence has the single term $C_0$ and the perpetuity
condition fails; if $\lambda\neq0$ then $C_1\neq C_0$ and, by
Remark~\ref{rem:epgs-normalized-rule}, $d_1=\sigma_2 d_0$ with
$\sigma_2\neq2$, so the expansion condition fails at $i=0$.

\emph{Case $(\lambda,\theta)=(\lambda^{\star},\theta^{\star})$.} Take
the alternating singleton activation with an empty fully synchronous
prefix (Lemma~\ref{lem:altA}) and a normalized frame at every $\LK$.
Since $\lambda^{\star}\neq0$, the first activation of $a$ is the first
nonzero $\M$ of the execution and is performed while $b$ is not
activated, so by Lemma~\ref{lem:solo-drift}(1) the expansion condition
of $\mathrm{EPG}^{*}$ fails at $i=0$, a safety violation. For ZCC: by Remark~\ref{rem:epgs-normalized-rule}, $c_1=\tfrac32$, so by
Lemma~\ref{lem:ratios}(i) each activation maps
$d\mapsto\tfrac32 d$; hence $d\to\infty$, the two robots do not
converge to a single point, and the convergence condition fails.

In both cases a single execution violates both predicates, the safety
violation being placed first and a fair infinite continuation of the
liveness violation being exhibited, so Lemma~\ref{lem:concat}(1)(3) is
satisfied and Lemma~\ref{lem:boundary-common-failure} gives the claim.
\end{proof}


\subsubsection{\texorpdfstring{EPG$^{*}$ $\vee$ VSS (EV)}{EPG* or VSS}}
\label{ssub:f1-5}

This pair has the single boundary class $\FS^{R}$, and that obligation
is discharged in the main text by
Proposition~\ref{prop:cf-expansion-exchange}, which exhibits one
admissible fair execution violating $\mathrm{EPG}^{*}$,
$\SRDV$ and $\VSS$ simultaneously.

\subsubsection{\texorpdfstring{EPG$^{*}$ $\vee$ VEC (EX)}{EPG* or VEC}}
\label{ssub:f1-6}

\begin{lemma}
\label{lem:A2-6-FC-S}
$\mathrm{EX}\notin\Sol(\FC^{S})$.
\end{lemma}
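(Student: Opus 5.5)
The plan is to follow the template of Lemma~\ref{lem:A2-1-FC-S} and Lemma~\ref{lem:A2-2-FC-S}, combined with the $\FC^S$ impossibility argument for $\mathrm{VEC}$ (Lemma~\ref{lem:vec_not_fc_s}). Fix an arbitrary deterministic $\FC^S$ algorithm and the instance $\iota$. First run fully synchronous rounds under normalized frames. By Lemma~\ref{lem:rigidA2} the colors then follow a common sequence $c_0,g(c_0),\dots$ and the configuration stays at $\iota$. If $\lambda_{c_j}=0$ for every $j$ (Corollary~\ref{cor:first-nonzero}(a)), continue forever. The joint-stop sequence is then the single term $C_0$, so the perpetuity condition of $\mathrm{EPG}^{*}$ fails. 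The attainment condition of $\mathrm{VEC}$ also fails, since $d_0\not\perp d_0$. Both are liveness violations witnessed by the same fair infinite execution, so Lemma~\ref{lem:concat}(3) holds.

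Otherwise let $t$ be least with $\lambda_{c_t}\neq0$. Replace round $t$ by a solo activation interval of $a$ (Lemma~\ref{lem:solo}) and keep $b$ inactive. The first activation is the first nonzero $\M$ of the execution and is solo, so Lemma~\ref{lem:solo-drift}(1) makes the expansion condition of $\mathrm{EPG}^{*}$ fail at $i=0$, a safety violation. Throughout the interval the visible state of $a$ stays $c_t$. By Lemma~\ref{lem:single-rule}(b), $a$ therefore applies the single rule $(\lambda,\theta)=(\lambda_{c_t},\theta_{c_t})$, with $\lambda>0$.

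For $\mathrm{VEC}$ I reproduce the case split of Lemma~\ref{lem:vec_not_fc_s} on this same interval. If $a$ lands on $b$ ($c_1=0$), then $d_1=0$, which no permitted transition allows, so the transition condition fails. If the first move is not one of the admissible solo displacements, the transition condition fails immediately. Each of these is a finite violation, and I append a fair tail. In the remaining case every solo move applies the same rotation: clockwise $90^\circ$, or clockwise $45^\circ$ with scale $2^{\mp1/2}$. After at most two activations the segment is then orthogonal to $d_0$ at a joint stop. I extend the solo interval by exactly one more activation. Since $\lambda>0$ and the distance is positive, that activation is a nonzero $\M$, which violates the stopping condition at a finite time. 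Finally I append a fair alternating tail, so the safety violations of both predicates persist (Lemma~\ref{lem:concat}(1)). Lemma~\ref{lem:boundary-common-failure} then yields the claim.

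The step that needs care is the $\mathrm{VEC}$ case analysis with a single constant rule. One point is that a $45^\circ$ step with scale $2^{1/2}$ or $2^{-1/2}$ repeated twice gives an exact $90^\circ$ rotation. The other is that each intermediate joint stop passes the containment condition $\mathbf{C3}$, or else fails it, which is itself a violation. In either event one of the three $\mathrm{VEC}$ conditions breaks within at most three solo moves. Since all of these are finite safety violations, ordering them with the $\mathrm{EPG}^{*}$ violation causes no difficulty.
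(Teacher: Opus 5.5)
Your proposal is correct and follows essentially the same route as the paper's proof. It uses a fully synchronous zero-output prefix, which yields both liveness failures if no nonzero output ever appears. Otherwise it replaces round $t$ by a solo interval of $a$ of at most three activations: the first move shifts the centre of gravity and breaks $\mathrm{EPG}^{*}$, and the constant rule of Lemma~\ref{lem:single-rule}(b) forces a transition, containment, or stopping violation of $\mathrm{VEC}$. Your explicit treatment of the containment clause $a_{i+1},b_{i+1}\in Q_{i-1}$ in the $2^{\pm1/2}$ cases is a detail the paper passes over, but either way a $\mathrm{VEC}$ violation occurs within three moves.
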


\begin{proof}
Let $\mathcal A$ be any deterministic algorithm of $\FC$ under $\SSY$.
The initial colors being common, while both robots are activated
simultaneously their colors follow a common eventually periodic
sequence $\gamma_0,g(\gamma_0),\dots$ (Lemma~\ref{lem:rigidA2}) and the
configuration stays at $\iota$.
 
If $\lambda_{c_j}=0$ for every $j$, the two robots never move, the
joint-stop sequence reduces to $C_0$, and $d_0\not\perp d_0$, so the
attainment condition of $\mathrm{VEC}$ fails, while the perpetuity
condition of $\mathrm{EPG}^{*}$ fails as well
(Lemma~\ref{lem:solo-drift}(2)).
 
Otherwise let $t$ be the least $j$ with $\lambda_{c_j}\neq0$, replace
round $t$ by an interval in which $a$ alone is activated $k\ge3$ times
in a row (Lemma~\ref{lem:solo}), and append a fair tail. The
configuration was unchanged before the interval, so the first
activation of $a$ in it is the first nonzero $\M$ of the execution and
it is performed while $b$ is not activated; by
Lemma~\ref{lem:solo-drift}(1) the expansion condition of
$\mathrm{EPG}^{*}$ fails at $i=0$, a safety violation.
 
The visible state of $a$ is constantly $c_t$, so by
Lemma~\ref{lem:single-rule}(b) it applies the same displacement rule at
every activation. With the peer at rest, the only displacements that
satisfy the transition condition of $\mathrm{VEC}$ are those rotating
the segment clockwise by $45^\circ$ or by $90^\circ$; in particular if
$c_1=0$ then $d_i=0$, while all the permitted targets are nonzero, so
the transition condition fails. If $a$ departs from those two
displacements the transition condition fails at that point, and
otherwise $a$ reaches a configuration with $d_i\perp d_0$ in at most
two moves. At that moment both robots are at rest, but $a$ repeats the
same displacement at its next activation, so the stopping condition
fails (Lemma~\ref{lem:vec_not_fc_s},
\cite[Lemma~21]{NNW-SSS2025}). All of these are safety violations and
may be concatenated by Lemma~\ref{lem:concat}(1)(2).
 
In each case a single execution violates both predicates, so
Lemma~\ref{lem:boundary-common-failure} gives the claim.
\end{proof}

\begin{lemma}
\label{lem:A2-6-FS-A}
$\mathrm{EX}\notin\Sol(\FS^{A})$.
\end{lemma}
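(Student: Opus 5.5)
Working through Proposition~\ref{prop:atomicity-equivalences} ($\FS^A\equiv\FS^{A_{LC}}$), it suffices to build, for an arbitrary deterministic $\FS$ algorithm $\mathcal A$, one admissible fair $A_{LC}$ execution violating both $\Phi_{\mathrm{EPG}^{*}}$ and $\Phi_{\mathrm{VEC}}$. The plan is to reuse the skeleton of Lemma~\ref{lem:vec_not_fs_a}. Run a fully synchronous prefix with normalized frames (Lemma~\ref{lem:rigidA2}, Corollary~\ref{cor:first-nonzero}); the two robots then share a common state sequence and the configuration stays at $\iota$.

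\emph{Case (a): every state on this orbit has $\lambda_s=0$.} The execution never moves. The joint-stop sequence is just $C_0$, so the perpetuity condition of $\mathrm{EPG}^{*}$ fails. Moreover $d_0\not\perp d_0$, so the attainment condition of $\mathrm{VEC}$ fails. Both are liveness violations witnessed by this single fair execution.

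\emph{Case (b): there is a first state $s$ with $\lambda_s>0$.} Truncate at that round and activate $a$ alone, performing a nonzero $\M$ from $a_0$ to $p_E\ne a_0$. Using Lemma~\ref{lem:mid}, let $b$ Look while $a$ is at $p_\mu$ and compute in state $s$, and start the $\M$ of $b$ before that of $a$ ends. The two Move intervals then overlap, so no joint stop separates the landings, and $C_1$ is the pair of landing points. As in Lemma~\ref{lem:vec_not_fs_a}, the quantity $d_1(\mu)$ is affine in $\mu$ with nonzero slope.

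The key step is to choose $\mu$ so that three conditions hold at once:
\begin{itemize}
\item[(i)] $d_1(\mu)$ avoids the four targets allowed by the VEC transition condition at $i=0$ (where $Q_{-1}=\mathbb R^2$);
\item[(ii)] $C_1$ avoids the unique $\mathrm{EPG}^{*}$ successor $(c_0-d_0,\,c_0+d_0)$;
\item[(iii)] $d_1(\mu)\neq0$, so the robots do not coincide, and more generally $\mu$ avoids the countably many degenerate values.
\end{itemize}
Each condition excludes at most finitely (or countably) many values of a continuous nonconstant function of $\mu$. Corollary~\ref{cor:continuum}, which applies to a countable union of exceptional sets, then yields a single $\mu$ meeting all of them.

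For (ii) the argument is cleaner if done directly. The second landing is determined by $\mu$ through an affine map with nonzero slope, so exact equality to the prescribed $\mathrm{EPG}^{*}$ configuration can hold for at most one $\mu$. With such a $\mu$, both the transition condition of VEC and the expansion condition of $\mathrm{EPG}^{*}$ fail at $i=0$. These are safety violations, so by Lemma~\ref{lem:concat}(1) we may finish by completing all pending operations and appending any fair, time-divergent $A_{LC}$ tail. Lemma~\ref{lem:boundary-common-failure} then gives the claim.

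One detail needs care: if $C_1=C_0$ held for some $\mu$, there would be no step at $i=0$ to violate. This is excluded for all but at most one $\mu$ by the same nonconstancy argument, or one can note that $a$ has moved to $p_E\ne a_0$.

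I expect the main obstacle to be verifying that $C_1$ really consists of the two landing points with no intermediate joint stop. The overlap of the Move intervals under $A_{LC}$ has to be arranged without violating the constraint $t_L(\bar r,j)\notin(t_L(r,i),t_C(r,i)]$. Since $b$'s Look falls inside $a$'s Move, after $t_C(a,\cdot)$, this constraint is respected, exactly as in Lemma~\ref{lem:mid}.
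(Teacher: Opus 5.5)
Your proposal is correct and follows essentially the same route as the paper: a fully synchronous prefix up to the first common state with $\lambda_s>0$; a solo nonzero Move of $a$, observed mid-flight by $b$ via Lemma~\ref{lem:mid}, with overlapping Move intervals so that $C_1$ is the pair of landing points; and a choice of $\mu$ avoiding the finitely many VEC successors, the unique $\mathrm{EPG}^{*}$ successor, and the degenerate values. The differences are cosmetic. You work under $A_{LC}$, which is harmless because every $A_{LC}$ execution is already an $A$ execution, so the equivalence is not even needed. You also make the at-most-one-$\mu$ count explicit through the affine dependence of $b$'s landing point, where the paper appeals to continuity and Corollary~\ref{cor:continuum}.
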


\begin{proof}
Let $\mathcal A$ be any deterministic algorithm of $\FS$ under $\ASY$.
Begin with a fully synchronous prefix, which is admissible under
$\ASY$; by Lemma~\ref{lem:rigidA2} the two robots always carry the same
state $s_0,s_1,\dots$.

If $\lambda_{s_k}=0$ for every $k$, this fair execution contains no
nonzero $\M$, the joint-stop sequence reduces to $C_0$, and
$d_0\not\perp d_0$, so the attainment condition of $\mathrm{VEC}$ fails
and the perpetuity condition of $\mathrm{EPG}^{*}$ fails as well
(Lemma~\ref{lem:solo-drift}(2)).

Otherwise take the least $k$ with $\lambda_{s_k}>0$. All earlier rounds
left the configuration at $\iota$. Activate $a$ alone in round $k$, so
that it performs the first nonzero $\M$ of the execution, and use the
freedom of $\ASY$ in the choice of $\LK$ times: by
Lemma~\ref{lem:mid} the robot $b$ observes $a$ at a point $p_\mu$,
$\mu\in(0,1)$, of its movement segment. Let $b$ begin its $\M$ before
the $\M$ of $a$ ends, so that no time of $\Sab$ lies between the two
landings and the configuration after both landings is the term $C_1$ of
the joint-stop sequence following $C_0$.

An $\FS$ robot cannot consult the internal state of its peer and
retains no previously observed angle or distance, so the displacement
of $b$ is determined by $p_\mu$ alone, and $C_1$ varies continuously
and non-constantly with $\mu$. The transition condition of
$\mathrm{VEC}$ admits only finitely many successors of $C_0$, and the
expansion condition of $\mathrm{EPG}^{*}$ admits exactly one, namely
the configuration with midpoint $c_0$ and difference vector $2d_0$.
Adding the countably many values of $\mu$ at which the two robots
coincide, Corollary~\ref{cor:continuum} provides a $\mu$ avoiding all
of them, and both the transition condition of $\mathrm{VEC}$
(Lemma~\ref{lem:vec_not_fs_a}, \cite[Lemma~22]{NNW-SSS2025}) and the
expansion condition of $\mathrm{EPG}^{*}$ fail at $i=0$.

Both are safety violations witnessed at a finite time, hence preserved
by any fair tail (Lemma~\ref{lem:concat}(1)), and
Lemma~\ref{lem:boundary-common-failure} gives the claim.
\end{proof}

\begin{lemma}
\label{lem:A2-6-OB-R}
$\mathrm{EX}\notin\Sol(\OB^{R})$.
\end{lemma}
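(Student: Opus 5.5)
We follow the pattern of Lemmas~\ref{lem:A2-2-OB-R} and~\ref{lem:A2-4-OB-R}. Fix a deterministic $\OB$ algorithm under $\RSY$. By Lemma~\ref{lem:single-rule}(a), in normalized frames its rule is a single pair $(\lambda,\theta)$. We split on whether this pair equals the $\mathrm{EPG}^{*}$ rule $(\lambda^{\star},\theta^{\star})=(\tfrac12,\pi)$ of Remark~\ref{rem:epgs-normalized-rule}.

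\emph{Case $(\lambda,\theta)\neq(\lambda^{\star},\theta^{\star})$.} Since $\RSY$ contains fully synchronous executions, we take the execution of Lemma~\ref{lem:rigidA1}. By Lemma~\ref{lem:vec_not_ob_f} it violates $\Phi_{\mathrm{VEC}}$, and that proof already covers $\lambda=0$, $c_2=0$ and $c_2>0$. For $\mathrm{EPG}^{*}$ there are two sub-cases. If $\lambda=0$, the joint-stop sequence is $(C_0)$, so perpetuity fails. If $\lambda\neq0$, then $C_1\neq C_0$ and $d_1=\sigma_2d_0$ with $\sigma_2\neq2$, so expansion fails at $i=0$. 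When $c_2=0$ we still have $C_1\neq C_0$ with $d_1=0\neq 2d_0$, so this sub-case gives a safety failure as well. Where VEC fails only through liveness (attainment never reached), the infinite synchronous execution itself witnesses it. Hence Lemma~\ref{lem:concat}(3) is satisfied.

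\emph{Case $(\lambda,\theta)=(\lambda^{\star},\theta^{\star})$.} We take the alternating singleton execution with empty synchronous prefix (Lemma~\ref{lem:altA}) and normalized frames. Because $\lambda^{\star}\neq0$, the first activation of $a$ is the first nonzero $\M$ and it is solo. Lemma~\ref{lem:solo-drift}(1) then breaks the expansion condition at $i=0$.

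For VEC we use $\sigma_1=\tfrac32$ (Remark~\ref{rem:epgs-normalized-rule}). Every solo move keeps the robots on the initial line and scales $d_i$ by $\tfrac32$. The permitted successors are $d_i$ itself, the $90^\circ$ rotation and the $2^{\mp1/2}$-scaled $45^\circ$ rotations. None of these is $\tfrac32 d_i\neq0$, since $\tfrac32\notin\{1,2^{\pm1/2}\}$ and any rotation by a nonzero angle is excluded. So the transition condition fails already at $C_1$, a safety violation. Attainment also fails, because every $d_i$ stays parallel to $d_0$.

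Both violations are finite-time safety violations in this case. They persist along the fair alternating tail by Lemma~\ref{lem:concat}(1), and Lemma~\ref{lem:boundary-common-failure} gives the claim.

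The main point to check is the first case: VEC failure must happen on the same synchronous execution that defeats $\mathrm{EPG}^{*}$. Lemma~\ref{lem:vec_not_ob_f} uses exactly the execution of Lemma~\ref{lem:rigidA1}, so it reuses verbatim, including the coincidence case, where $d_1=0$ breaks both predicates.
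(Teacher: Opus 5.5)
Your proof is correct, but it is organized differently from the paper's.

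The paper does not split on the $\mathrm{EPG}^{*}$ rule. It uses the alternating singleton execution with empty fully synchronous prefix for every rule $(\lambda,\theta)$. Whenever $\lambda\neq0$, the first nonzero Move is solo, so Lemma~\ref{lem:solo-drift}(1) already defeats $\mathrm{EPG}^{*}$ without any reference to $(\tfrac12,\pi)$. $\mathrm{VEC}$ is then defeated by a solo-move analysis in three cases: a landing on the peer if $c_1=0$; a displacement other than the clockwise $45^\circ$ or $90^\circ$ rotations, which breaks the transition condition; or otherwise an orthogonal segment within two moves, followed by another nonzero Move that breaks the stopping condition.

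You instead send every rule other than $(\tfrac12,\pi)$ to the fully synchronous execution. There Lemma~\ref{lem:vec_not_ob_f} handles $\mathrm{VEC}$ verbatim, and $\sigma_2\neq2$ (or $\lambda=0$) handles $\mathrm{EPG}^{*}$. You reserve the alternating execution for the single rule $(\tfrac12,\pi)$, where $\sigma_1=\tfrac32$ makes the $\mathrm{VEC}$ transition fail at once.

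Your route is more modular: it reuses two existing single-task arguments and mirrors Lemmas~\ref{lem:A2-2-OB-R} and~\ref{lem:A2-4-OB-R}. The paper's route avoids the case split and produces one uniform alternating execution. Lemmas~\ref{lem:A3-4-OB-R} and~\ref{lem:A3-7-OB-R} later exploit that uniformity when they apply Lemma~\ref{lem:p1_replay} to the rule $(2,0)$. With your execution, $(2,0)$ falls in the fully synchronous branch, so those reuses would need the round-$2$ argument of Lemma~\ref{lem:A2-8-OB-F} instead, as Lemma~\ref{lem:A3-3-OB-R} does; that still works, but not verbatim.
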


\begin{proof}
Let $\mathcal A$ be any deterministic algorithm of $\OB$ under $\RSY$.
Take the alternating singleton activation of $a$ and $b$ with an empty
fully synchronous prefix (Lemma~\ref{lem:altA}) followed by an
arbitrary fair tail, and supply normalized frames. By
Lemma~\ref{lem:single-rule}(a) the rule of $\mathcal A$ is a single
pair $(\lambda,\theta)$. Every round activates exactly one robot, so
Lemma~\ref{lem:solo-drift} applies to this execution.

If $\lambda=0$, then since $\OB$ has no colors the two robots do not
move under any schedule, the joint-stop sequence reduces to $C_0$, and
$d_0\not\perp d_0$, so the attainment condition of $\mathrm{VEC}$ fails
and the perpetuity condition of $\mathrm{EPG}^{*}$ fails.

Assume $\lambda\neq0$. The activation of $a$ in round~$1$ is then the
first nonzero $\M$ of the execution and $b$ is not activated in that
round, so by Lemma~\ref{lem:solo-drift}(1) the expansion condition of
$\mathrm{EPG}^{*}$ fails at $i=0$, a safety violation.

For $\mathrm{VEC}$: if $c_1=0$, then $a$ lands on the position of $b$ in
round~$1$ and $d_1=0$, while all the targets permitted by the
transition condition are nonzero, so that condition fails. Otherwise
the only single displacements satisfying the transition condition are
those rotating the segment clockwise by $45^\circ$ or by $90^\circ$. If
$a$ departs from them the transition condition fails at that point;
otherwise $a$ reaches a configuration with $d_i\perp d_0$ in at most
two moves, and since the same displacement is applied at every
activation the robots move again immediately afterwards, so the
stopping condition fails. All of these are safety violations.

In every case a single execution violates both predicates, so
Lemma~\ref{lem:boundary-common-failure} gives the claim.
\end{proof}

\subsubsection{\texorpdfstring{EPG$^{*}$ $\vee$ CFSM (EC)}{EPG* or CFSM}}
\label{ssub:f1-7}

\begin{lemma}
\label{lem:A2-7-FC-S}
$\mathrm{EC}\notin\Sol(\FC^{S})$.
\end{lemma}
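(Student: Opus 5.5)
The proof will follow the template of Lemmas~\ref{lem:A2-1-FC-S} and~\ref{lem:A2-2-FC-S}. I fix an arbitrary deterministic $\FC$ algorithm under $\SSY$ and run fully synchronous rounds with normalized frames. By Lemma~\ref{lem:rigidA2} the colors then follow a common eventually periodic sequence $c_0,g(c_0),\dots$, and the configuration stays at $\iota$ as long as every $\M$ is zero.

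\emph{Case 1: $\lambda_{c_j}=0$ for every $j$.} I continue the synchronous execution forever. No nonzero $\M$ occurs, so $\moves{a}=0\neq1$ and $\Phi_{\CFSM}$ fails. The joint-stop sequence is just $(C_0)$, so the perpetuity condition of $\mathrm{EPG}^{*}$ fails as well. Both failures are liveness violations of this one fair execution, which satisfies Lemma~\ref{lem:concat}(3).

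\emph{Case 2: some $\lambda_{c_j}\neq0$.} Let $t$ be the least such $j$ (Corollary~\ref{cor:first-nonzero}(b)). I replace round $t$ by two consecutive solo activations of $a$ using Lemma~\ref{lem:solo}, then append a fair tail.

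\begin{itemize}
\item Since $b$ is not activated, the visible state of $a$ stays $c_t$. By Lemma~\ref{lem:single-rule}(b), $a$ applies the single pair $(\lambda_{c_t},\theta_{c_t})$ with $\lambda_{c_t}\neq0$.
\item The first of these activations is the first nonzero $\M$ of the execution, and it is solo. Lemma~\ref{lem:solo-drift}(1) then shows the expansion condition of $\mathrm{EPG}^{*}$ fails at $i=0$, which is a safety violation.
\item For $\CFSM$, I follow Lemma~\ref{lem:cfsm_not_fc_s}. If $c_1(c_t)=0$, $a$ lands on $b(0)$ and collision-freedom fails. Otherwise the distance stays positive, so the second activation is also a nonzero $\M$. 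Then $\moves{a}\ge2$, and no extension can restore $\moves{a}=1$. I treat this as a finite violation preserved by any tail, as with safety violations in Lemma~\ref{lem:concat}(1).
\end{itemize}

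In both cases a single admissible fair execution violates both predicates, and Lemma~\ref{lem:boundary-common-failure} gives the claim.

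The only delicate point is ordering the two solo activations before any activation of $b$, and keeping fairness by placing $b$'s activations later. Lemma~\ref{lem:solo} handles both under $\SSY$.
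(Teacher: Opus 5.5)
Your proposal is correct and matches the paper's proof: a fully synchronous prefix with common colors, then the first round with $\lambda_{c_t}\neq0$ replaced by two solo activations of $a$. Lemma~\ref{lem:solo-drift}(1) then breaks $\mathrm{EPG}^{*}$, and the split $c_1=0$ (collision) versus $c_1\neq0$ ($\moves{a}\ge2$) breaks $\CFSM$, both as finite violations preserved by the fair tail.
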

 
\begin{proof}
Let $\mathcal A$ be any deterministic algorithm of $\FC$ under $\SSY$.
The initial colors being common, while both robots are activated
simultaneously their colors follow a common eventually periodic
sequence $\gamma_0,g(\gamma_0),\dots$ (Lemma~\ref{lem:rigidA2}) and the
configuration stays at $\iota$.
 
If $\lambda_{c_j}=0$ for every $j$, this fair execution contains no
nonzero $\M$, so $\moves{a}=0\neq1$ and the single-move condition of
$\CFSM$ fails, while the joint-stop sequence reduces to $C_0$
and the perpetuity condition of $\mathrm{EPG}^{*}$ fails
(Lemma~\ref{lem:solo-drift}(2)).
 
Otherwise let $t$ be the least $j$ with $\lambda_{c_j}\neq0$, replace
round $t$ by an interval in which $a$ alone is activated twice in a row
(Lemma~\ref{lem:solo}), and append a fair tail. The configuration was
unchanged before the interval, so the first activation of $a$ in it is
the first nonzero $\M$ of the execution and it is performed while $b$
is not activated; by Lemma~\ref{lem:solo-drift}(1) the expansion
condition of $\mathrm{EPG}^{*}$ fails at $i=0$.
 
Since $b$ is not activated, the visible state of $a$ is constantly
$c_t$, so by Lemma~\ref{lem:single-rule}(b) it applies the same pair
$(\lambda,\theta)$, with $\lambda\neq0$, at both activations. If
$c_1=0$ then $a$ lands on the position of $b$ at the first activation,
so $a(t)=b(t)$ at that time and the collision-freedom condition fails.
If $c_1\neq0$ the distance stays positive, so both activations are
nonzero $\M$ operations and $\moves{a}\ge2\neq1$, which violates the
single-move condition. All of these are safety violations
witnessed at a finite time, hence preserved by the fair tail
(Lemma~\ref{lem:concat}(1)(2)).
 
In each case a single execution violates both predicates, so
Lemma~\ref{lem:boundary-common-failure} gives the claim.
\end{proof}

\begin{lemma}
\label{lem:A2-7-OB-R}
$\mathrm{EC}\notin\Sol(\OB^{R})$.
\end{lemma}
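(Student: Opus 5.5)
The claim is $\mathrm{EC}=\mathrm{EPG}^{*}\vee\CFSM\notin\Sol(\OB^{R})$. The plan follows Lemmas~\ref{lem:A2-2-OB-R} and~\ref{lem:A2-4-OB-R}: exhibit one admissible fair $\RSY$ execution that violates both predicates at once. Fix an arbitrary deterministic $\OB$ algorithm. By Lemma~\ref{lem:single-rule}(a), its rule in normalized frames is a single pair $(\lambda,\theta)$. I would use the alternating singleton execution of Lemma~\ref{lem:altA} with an empty fully synchronous prefix and normalized frames at every $\LK$, and split on $\lambda$ and on $c_1=c_1(\lambda,\theta)$.

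\emph{Case $\lambda=0$.} No robot ever moves under any schedule, so $\moves{a}=0\neq1$ and the single-move condition of $\CFSM$ fails. The joint-stop sequence is just $(C_0)$, so the perpetuity condition of $\mathrm{EPG}^{*}$ fails as well. Both are liveness failures, witnessed by the same fair infinite execution, as Lemma~\ref{lem:concat}(3) requires.

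\emph{Case $\lambda\neq0$.} The activation of $a$ in round~$1$ is the first nonzero $\M$, and $b$ is inactive in that round. Lemma~\ref{lem:solo-drift}(1) then makes the expansion condition of $\mathrm{EPG}^{*}$ fail at $i=0$, which is a safety violation.
\begin{itemize}
\item If $c_1=0$, $a$ lands on $b$, so $a(t)=b(t)$ at that time and collision freedom of $\CFSM$ fails. This is again a safety violation, and any fair tail can be appended.
\item If $c_1>0$, the distance stays positive forever, since every solo step multiplies it by $c_1$ (Lemma~\ref{lem:altA}(i)). Hence every activation of either robot is a nonzero $\M$. In particular $a$ performs its second nonzero $\M$ in round~$3$, so $\moves{a}\ge2$ and the single-move condition fails. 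This is a finite safety violation, since the move count can never return to $1$.
\end{itemize}

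Every violation in the $\lambda\neq0$ case is witnessed by a finite prefix, so Lemma~\ref{lem:concat}(1) preserves them under the alternating fair tail. Lemma~\ref{lem:boundary-common-failure} then gives the claim.

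The only delicate point is checking that the alternating tail really is admissible in $\RSY$ and contains the second activation of $a$. It is the defining tail form, and round~$3$ activates $a$ again, so the plan needs no fully synchronous round and no case on $(\lambda^{\star},\theta^{\star})$.
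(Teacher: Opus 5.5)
Your proposal is correct and follows the paper's proof essentially step for step. Both use the alternating singleton $\RSY$ execution with an empty synchronous prefix and normalized frames, and split into $\lambda=0$, $c_1=0$, and $c_1>0$. In both, Lemma~\ref{lem:solo-drift}(1) refutes $\mathrm{EPG}^{*}$, and the nonzero moves of $a$ in rounds~$1$ and~$3$ (or the landing on $b$ when $c_1=0$) refute $\CFSM$.
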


\begin{proof}
Let $\mathcal A$ be any deterministic algorithm of $\OB$ under $\RSY$.
Take the alternating singleton activation of $a$ and $b$ with an empty
fully synchronous prefix (Lemma~\ref{lem:altA}) followed by an
arbitrary fair tail, and supply normalized frames. The visible state of
an $\OB$ robot is unique, so the rule is a single pair
$(\lambda,\theta)$ (Lemma~\ref{lem:single-rule}(a)). Every round
activates exactly one robot, so Lemma~\ref{lem:solo-drift} applies.

If $\lambda=0$, then since $\OB$ has no colors the two robots do not
move under any schedule, so $\moves{a}=0\neq1$ and the single-move
condition of $\CFSM$ fails, while the joint-stop sequence
reduces to $C_0$ and the perpetuity condition of $\mathrm{EPG}^{*}$
fails.

Assume $\lambda\neq0$. The activation of $a$ in round~$1$ is the first
nonzero $\M$ of the execution and $b$ is not activated in that round,
so by Lemma~\ref{lem:solo-drift}(1) the expansion condition of
$\mathrm{EPG}^{*}$ fails at $i=0$, a safety violation.

For $\CFSM$: if $c_1=0$, then $a$ lands on the position of $b$
in round~$1$, so $a(t)=b(t)$ at that time and the collision-freedom
condition fails. Otherwise each activation maps $d\mapsto c_1d$
(Lemma~\ref{lem:ratios}(i)) and $d$ stays positive, so the activations
of $a$ in rounds~$1$ and~$3$ are both nonzero $\M$ operations and
$\moves{a}\ge2\neq1$, which violates the single-move condition.

In every case a single execution violates both predicates, so
Lemma~\ref{lem:boundary-common-failure} gives the claim.
\end{proof}

\subsection{Disjunctions Containing SRO}
\label{app:cf-sro-family}

This subsection treats the five antichains $\{\mathrm{SRO},P\}$ with
$P\in\mathcal E\cup\mathcal P_3
=\{\SRDV,\VSS\}\cup\{\mathrm{ZCC},\mathrm{VEC},\CFSM\}$,
which carry eight boundary obligations in total. Every construction
below uses the geometric fact of Lemma~\ref{lem:cf-sro-two-moves}: with
the peer at rest, two nonzero $\M$ operations of the same robot cannot
both satisfy the rotation condition and $\mathbf{C3}$, and a landing on
the peer satisfies neither $\mathbf{C1}$, which requires
$\|d_{i+1}\|=\|d_i\|\neq0$, nor $\mathbf{C2}$, whose right-hand side is
nonzero. Zero $\M$ operations do not advance the sequence $(C_i)$.

Two of the eight obligations are discharged by
Proposition~\ref{prop:cf-sro-exchange-family}, stated in
Section~\ref{subsec:cf-sro-exchange}: for every algorithm in $\FS^{S}$
it exhibits one admissible fair execution violating $\mathrm{SRO}$ and
both exchange tasks simultaneously, which settles the whole maximal
negative boundary of $\mathrm{SRO}\vee\VSS$ and the $\FS^{S}$
obligation of $\mathrm{SRO}\vee\SRDV$. We give its proof first; the
remaining six obligations, including the $\FC^{A}$ obligation of
$\mathrm{SRO}\vee\SRDV$, are proved afterwards.

\begin{proof}[Proof of Proposition~\ref{prop:cf-sro-exchange-family}]
Fix an algorithm and initial positions $a_0=d$, $b_0=2d$, with outer
points $0$ and $3d$. Under normalized observations, write the state
transition as $s\mapsto g(s)$ and the global displacement as
$\Lambda_s w$, where $w$ points to the peer and
$\Lambda_s=\lambda_sR_{\theta_s}$.

If the state orbit from the designated initial state contains no
nonzero output, fair alternating activations leave both robots
stationary. Neither the exchange phase nor the perpetual-motion
requirement of $\mathrm{SRO}$ is satisfied. Otherwise, let $s$ be the
first state in that orbit with a nonzero output. Activate only $a$
until it completes that first nonzero $\M$; all preceding $\M$
operations are zero. We distinguish two cases.

\emph{Case 1: $\Lambda_s=2I$.}
Robot $a$ moves to $3d$. The relative segment reverses direction
without shrinking, violating both permitted rotation types of
$\mathrm{SRO}$. Keep $a$ inactive and activate $b$ until it reaches
state $s$; its preceding normalized outputs are zero, so the
configuration is unchanged and $b$ sees a peer at distance $d$.
By Lemma~\ref{lem:p1_replay} the robot $b$ then lands at $4d$,
outside the interval allowed by $\SRDV$ and outside the endpoint set
allowed by $\VSS$. All three violations are irreversible, so append a
fair suffix.

\emph{Case 2: $\Lambda_s\neq2I$.}
The first nonzero $\M$ makes the common exchange phase unattainable, so
both exchange tasks have already failed. If it also violates
$\mathrm{SRO}$, append a fair suffix. Otherwise the robots remain
distinct. If the normalized state orbit of $a$ after this $\M$ contains
another nonzero output, keep $b$ inactive until that output is
executed; Lemma~\ref{lem:cf-sro-two-moves} then violates $\mathrm{SRO}$,
and a fair suffix again completes the construction.

If no further nonzero output occurs in that orbit, keep $a$ inactive
and let $b$ traverse the same initial state sequence and perform its
first nonzero $\M$. If this violates $\mathrm{SRO}$, we are done.
Otherwise the robots are still distinct and both are now in state
$g(s)$, from which every subsequent normalized output is zero. Fair
alternating activations therefore leave the configuration stationary
forever, violating the perpetual-motion requirement of $\mathrm{SRO}$.
This specifies a fair infinite continuation rather than an infinite
solo execution.
\end{proof}

\subsubsection{\texorpdfstring{SRO $\vee$ S-RDV (RS)}{SRO or S-RDV}}
\label{ssub:f2-1}

\begin{lemma}
\label{lem:A2-10-FC-A}
$\mathrm{RS}\notin\Sol(\FC^{A})$.
\end{lemma}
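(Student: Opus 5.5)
The plan mirrors the proof of Lemma~\ref{lem:A2-3-FC-A}: I would build one admissible fair $\ASY$ execution of an arbitrary deterministic $\FC$ algorithm $\mathcal A$ that first violates $\Phi_{\mathrm{SRO}}$ at a finite time, and then hands the reached state to the stable-rendezvous impossibility in $\FC^A$ used for Lemma~\ref{lem:srdv_not_fc_a}.

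\emph{Synchronous prefix.} Fix $a_0=d$, $b_0=2d$, so that $A=0$ and $B=3d$. I would run fully synchronous cycles with normalized frames; by Lemma~\ref{lem:rigidA2} the colors stay equal. If every output is zero, this fair execution never moves. Then $(C_i)$ has the single term $C_0$, so SRO's perpetuity fails. The exchange phase of $\SRDV$ also fails by Lemma~\ref{lem:p1_no_null}, since a valid $T_1$ needs a nonzero move.

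\emph{First nonzero round.} Otherwise take the first round whose common rule $(\lambda,\theta)$ has $\lambda>0$. Let both robots Look and Compute on the initial configuration, then delay $b$'s Move (Lemma~\ref{lem:suspend}) until $a$ has landed. The landing of $a$ is a joint stop at which only $a$ has moved. By Lemma~\ref{lem:cf-sro-two-moves}, the SRO transition from $C_0$ is then valid only if $a$ performs the clockwise $90^\circ$ rotation about $b$ or about the midpoint.

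\emph{Case $(\lambda,\theta)=(2,0)$.} Here $a$ lands at $B$ and the segment is reversed without shrinking. This satisfies neither $\mathbf{C1}$ nor $\mathbf{C2}$, so SRO fails at once. Now release $b$'s pending Move, which goes to $A$. Both robots then stand at $(B,A)$ with the common color $\gamma'$ and no pending cycle. As in Lemma~\ref{lem:A2-3-FC-A}, I would apply the rendezvous impossibility to $\mathcal A_{\gamma'}$ and transport its counterexecution to $(B,A)$ by an orientation-preserving similarity. This yields a fair continuation that fails stable rendezvous, so $\SRDV$ fails while the SRO violation persists.

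\emph{Case $(\lambda,\theta)\neq(2,0)$.} Then $a$ does not land at $B$. The argument of Lemma~\ref{lem:A2-3-FC-A} shows that the exchange phase of $\SRDV$ can no longer be completed: either the landing point lies outside $P$, or reaching $B$ would need a second completed move of $a$. So $\SRDV$ has already failed by a finite witness. It remains to violate SRO on the same run.
\begin{enumerate}
\item If the first move is not one of the two admissible rotations (including a landing on $b$), SRO fails immediately.
\item If it is one of them, keep $b$'s Move pending and activate $a$ once more. Since $b$'s color is still $\gamma'$, $a$ applies the same rule by Lemma~\ref{lem:single-rule}(b). Its second move is a second solo nonzero move with the peer at rest, so Lemma~\ref{lem:cf-sro-two-moves} gives a violation. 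If instead $a$ lands on $b$, neither $\mathbf{C1}$ nor $\mathbf{C2}$ can hold.
\end{enumerate}
After this, release $b$ and append an arbitrary fair, time-divergent tail; both finite violations persist by Lemma~\ref{lem:concat}(1).

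\emph{Main obstacle.} The delicate step is the colors and pending operations. While $b$ has computed but not yet moved, $b$'s light already shows $\gamma'$, so $a$'s second activation uses the rule $(\lambda_{\gamma'},\theta_{\gamma'})$ rather than $(\lambda,\theta)$. Lemma~\ref{lem:cf-sro-two-moves} needs only that $a$ makes two consecutive solo nonzero moves, so the rule change is harmless provided $a$'s second move is nonzero. If $\lambda_{\gamma'}=0$, I would instead keep activating $a$, or release $b$'s stale move. A stale move makes a solo displacement unrelated to the rotation targets (the overtaking idea of Lemma~\ref{lem:overtake}). It therefore breaks the rotation condition unless it happens to hit one of the two admissible displacements, and in that case a further solo move of the same robot triggers the two-move obstruction. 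This needs a short case check.
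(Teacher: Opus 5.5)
Your overall architecture matches the paper's. You run a synchronous zero-output prefix, then in the first nonzero round both robots Look and Compute on $\iota$ and $b$'s Move is delayed until $a$ lands. The $(2,0)$ case is closed by transporting a stable-rendezvous counterexecution to $(B,A)$. In the $(\lambda,\theta)\neq(2,0)$ case the exchange phase of $\SRDV$ is already unsatisfiable.

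The gap is in the last subcase, where $a$'s first move is one of the two admissible rotations.
\begin{itemize}
\item Item~2 claims that $a$ ``applies the same rule'' by Lemma~\ref{lem:single-rule}(b). That is false as written. Both Looks preceded both Computes, so the first rule used peer color $\gamma$, while every later activation of $a$ sees $\gamma'$. You notice this yourself afterwards.
\item If $\lambda_{\gamma'}>0$, your two-solo-moves argument is fine.
\item If $\lambda_{\gamma'}=0$, ``keep activating $a$'' cannot help. While $b$'s Move is pending, $a$'s visible state is frozen at $\gamma'$ and its normalized input is unchanged, so every output is zero.
\item Your fallback after releasing $b$'s stale Move (``a further solo move of the same robot triggers the two-move obstruction'') is not available. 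Nothing forces $b$ to make another nonzero move.
\end{itemize}
So this subcase is not closed; it rests entirely on the ``short case check'' you deferred.

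That check is exactly the paper's argument, and it succeeds unconditionally, so reactivating $a$ is unnecessary. Work in complex coordinates with $a_0=0$ and $b_0=1$, and let $z=\lambda e^{\mathrm i\theta}$ be the common normalized coefficient. Then $a$ lands at $z$, and $b$'s stored destination is $1-z$. Admissibility of the first transition forces $d_1=1-z\in\{-\mathrm i,(1-\mathrm i)/2\}$, that is, $z\in\{1+\mathrm i,(1+\mathrm i)/2\}$. Releasing $b$'s stale Move while $a$ stays at rest gives $d_2=1-2z$. Hence $d_2/d_1=2-\mathrm i$ or $1-\mathrm i$, of modulus $\sqrt5$ or $\sqrt2$, whereas $\mathbf{C1}$ and $\mathbf{C2}$ permit only the ratios $1$ and $1/\sqrt2$. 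So SRO fails at the second transition in every subcase, whatever $\lambda_{\gamma'}$ is. With this computation replacing item~2 and your fallback, your proof becomes the paper's.
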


\begin{proof}
Let $\mathcal A$ be any deterministic algorithm of $\FC$ under $\ASY$.
Run fully synchronous cycles with identical normalized observations
until the first nonzero output.
Throughout this zero-Move prefix, the positions remain unchanged
and the colors remain equal.
If no nonzero output occurs, the resulting fair, time-divergent
execution violates both the perpetuity condition of $\mathrm{SRO}$
and the terminal rendezvous condition of $\SRDV$.

Otherwise, identify the plane with $\mathbb C$ using
orientation-preserving coordinates in which $a_0=0$ and $b_0=1$.
The outer exchange points are then $A=-1$ and $B=2$.
In the first round with a nonzero output, let both robots perform
their Looks before either performs its Compute, and then let both
complete their Computes.
Write $z=\lambda e^{\mathrm i\theta}\ne0$ for their common normalized
displacement coefficient and $\gamma'$ for their common updated color.
Their computed destinations are $z$ for $a$ and $1-z$ for $b$.

Delay the Move of $b$ until strictly after $a$ has landed,
and do not reactivate $a$ in the meantime.
There is therefore a joint stop with positions $(z,1)$,
whose relative-position vector is
\[
  d_1=1-z,\qquad d_0=1.
\]

\emph{Case $z\ne2$.}
The first nonzero Move of $a$ does not land at $B$.
Since its Look observed the initial configuration,
the first-move obstruction of Lemma~\ref{lem:p1_first_move}
applies, so no extension can satisfy the exchange phase of $\SRDV$.

If
\[
  d_1\notin\left\{-\mathrm i,\frac{1-\mathrm i}{2}\right\},
\]
then neither $\mathbf{C1}$ nor $\mathbf{C2}$ holds at the first
transition, and $\mathrm{SRO}$ already fails.
Otherwise
\[
  z\in\left\{1+\mathrm i,\frac{1+\mathrm i}{2}\right\}.
\]
Let $b$ complete its stored Move while $a$ remains at rest.
The next joint-stop vector is $d_2=1-2z$, and
\[
  \frac{d_2}{d_1}=
  \begin{cases}
    2-\mathrm i, & z=1+\mathrm i,\\
    1-\mathrm i, & z=(1+\mathrm i)/2.
  \end{cases}
\]
Its modulus is respectively $\sqrt5$ or $\sqrt2$.
Thus the distance increases, whereas $\mathbf{C1}$ and
$\mathbf{C2}$ permit only factors $1$ and $1/\sqrt2$.
Hence $\mathrm{SRO}$ fails at this second transition.
Once the pending cycle of $b$ is complete, append an arbitrary
admissible fair, time-divergent continuation.
Both finite violations persist.

\emph{Case $z=2$.}
The first joint-stop vector is $d_1=-1$, which satisfies neither
$\mathbf{C1}$ nor $\mathbf{C2}$.
Thus $\mathrm{SRO}$ has already failed.
Without reactivating $a$, let $b$ complete its stored Move.
The robots are now at $(B,A)=(2,-1)$, both have color $\gamma'$,
and neither has a pending cycle.

Let $\mathcal A_{\gamma'}$ have the same transition rule as
$\mathcal A$, but designated initial color $\gamma'$.
Apply the general impossibility of stable $\mathrm{RDV}$
in $\FC^{A}$ used in the proof of
Lemma~\ref{lem:srdv_not_fc_a}.
It gives a fair, time-divergent counterexecution of
$\mathcal A_{\gamma'}$ from distinct initial positions.
As in the proof of Lemma~\ref{lem:A2-3-FC-A},
an orientation-preserving similarity and corresponding choices
of local frames reproduce this counterexecution from $(B,A)$.
The reached colors and absence of pending cycles make it
a valid continuation of our prefix.

This continuation violates the terminal rendezvous condition
of $\SRDV$, while preserving the earlier violation of
$\mathrm{SRO}$.
In every case, a single admissible fair, time-divergent execution
violates both predicates.
Lemma~\ref{lem:boundary-common-failure} gives the claim.
\end{proof}

The remaining obligation of this pair, the class $\FS^{S}$, is
Proposition~\ref{prop:cf-sro-exchange-family}.

\subsubsection{\texorpdfstring{SRO $\vee$ ZCC (RZ)}{SRO or ZCC}}
\label{ssub:f2-2}

\begin{lemma}
\label{lem:A2-11-FC-S}
$\mathrm{RZ}\notin\Sol(\FC^{S})$.
\end{lemma}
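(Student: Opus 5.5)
Fix an arbitrary deterministic $\FC$ algorithm under $\SSY$. I plan to reuse the structure of Lemmas~\ref{lem:sro-not-fc-s} and~\ref{lem:zcc_not_fc_s}. I will start with a fully synchronous prefix under normalized frames. By Lemma~\ref{lem:rigidA2} the colors stay common, and the configuration stays at $\iota$ while every output is zero.

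If $\lambda_{c_j}=0$ for every $j$, continuing forever gives one fair execution with no movement. On it the joint-stop sequence is $(C_0)$, so SRO's perpetuity fails, and the robots stay at distinct points, so ZCC's convergence fails. Both are liveness violations on the same infinite execution, as Lemma~\ref{lem:concat}(3) requires.

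Otherwise let $t$ be the first index with $\lambda_{c_t}\ne0$, and replace round $t$ by a solo activation interval of $a$ (Lemma~\ref{lem:solo}). By Lemma~\ref{lem:single-rule}(b), $a$ applies one fixed pair $(\lambda,\theta)$ while $b$ stands still; write $\sigma_1=1-\lambda e^{\mathrm i\theta}$. I then split into cases on $\sigma_1$, mirroring Lemma~\ref{lem:A2-4-FC-S}.
\begin{itemize}
\item[(i)] If $\sigma_1=0$, then $a$ lands on $b$. This satisfies neither $\mathbf{C1}$ nor $\mathbf{C2}$, so SRO fails as a safety violation, and the coincidence continuation of Lemma~\ref{lem:zcc_not_fc_s} then defeats ZCC.
\item[(ii)] If $\psi=\arg\sigma_1\not\equiv0$, or if $\psi\equiv0$ with $c_1>1$, I run the solo interval long enough. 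After the first two nonzero moves SRO fails by Lemma~\ref{lem:cf-sro-two-moves}, and after finitely many more the collision-freedom condition of ZCC fails by the computation $c_1^{\,j}>2\cos(j\psi)$ in Lemma~\ref{lem:A2-4-FC-S}. Both are safety violations, so a fair tail preserves them.
\item[(iii)] If $0<\sigma_1<1$, the first move of $a$ is a pure contraction along the segment. This is neither a $90^\circ$ nor a $45^\circ$ rotation, so SRO fails immediately by the first part of Lemma~\ref{lem:cf-sro-two-moves}. I stop the solo interval after that single move. The robots then lie in $[a_0,b_0]$ in their original order with no pending operation, which are exactly the hypotheses of the continuation property proved in Lemma~\ref{lem:zcc_not_fc_s}. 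That property yields a fair tail violating $\Phi_{\mathrm{ZCC}}$, and the SRO safety violation persists along it.
\end{itemize}

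The main obstacle is combining a finite SRO violation with a ZCC failure that is sometimes only a liveness failure. I handle it by always placing the SRO violation inside the finite prefix. The ZCC continuation lemmas must apply from the state actually reached, so in case (i) I must check that no operation is pending at the coincidence (true under $\SSY$), and in case (iii) that the order and segment hypotheses hold after one contraction. Colors need not be equal there, and the continuation property of Lemma~\ref{lem:zcc_not_fc_s} explicitly allows this. With these checks, Lemma~\ref{lem:boundary-common-failure} gives the claim.
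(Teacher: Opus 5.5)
Your proof is correct and follows essentially the same route as the paper's: the same synchronous prefix, the same solo interval of $a$ under the frozen peer color, and the same case split on $\sigma_1$, using the coincidence continuation and the continuation property of Lemma~\ref{lem:zcc_not_fc_s} for $\mathrm{ZCC}$. The only difference is minor. In the contraction case $0<\sigma_1<1$ you stop after one move, since $d_1=\sigma_1 d_0$ already violates $\mathbf{C1}$ and $\mathbf{C2}$, whereas the paper runs $k\ge2$ solo moves and invokes Lemma~\ref{lem:cf-sro-two-moves} before applying the same continuation; the case $\sigma_1=1$, which your split omits, is excluded by $\lambda\ne0$.
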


\begin{proof}
Let $\mathcal A$ be any deterministic algorithm of $\FC$ under $\SSY$.
As above the colors follow a common eventually periodic sequence
$c_0,g(c_0),\dots$ (Lemma~\ref{lem:rigidA2}) while both robots are
activated simultaneously, and the configuration stays at $\iota$.

If $\lambda_{c_j}=0$ for every $j$, no nonzero $\M$ ever occurs, so the
sequence of configurations reduces to $C_0$ and the perpetuity
condition of $\mathrm{SRO}$ fails, while the robots remain at distinct
positions and the convergence condition of $\mathrm{ZCC}$ fails. This
fair execution witnesses both liveness violations.

Otherwise let $t$ be the least $j$ with $\lambda_{c_j}\neq0$ and
replace round $t$ by an interval in which $a$ alone is activated $k$
times in a row (Lemma~\ref{lem:solo}, with $k$ large enough for what
follows). Since $b$ is not activated, the visible state of $a$ is
constantly $c_t$, so by Lemma~\ref{lem:single-rule}(b) the rule is the
single pair $(\lambda,\theta)$ with $\lambda\neq0$, and $b$ is at rest
throughout.

For $\mathrm{SRO}$: if $c_1=0$ the robot $a$ lands on the position of
$b$ at the first activation, and $d_{i+1}=0$ satisfies neither
$\mathbf{C1}$ nor $\mathbf{C2}$, so the second condition of
$\mathrm{SRO}$ fails. If $c_1\neq0$, taking $k\ge2$ makes $a$ perform
two nonzero $\M$ operations with its peer at rest, so by
Lemma~\ref{lem:cf-sro-two-moves} the second condition of $\mathrm{SRO}$
fails. Either way this is a safety violation in the prefix.

For $\mathrm{ZCC}$: put $\sigma_1:=1-\lambda e^{\mathrm{i}\theta}$ and
$c_1:=|\sigma_1|$, as in the proof of Lemma~\ref{lem:A2-4-FC-S}.
Under $\SSY$ every activation of $a$ is a complete cycle, so neither
robot has a pending operation at the end of any of them.
If $c_1=0$, end the solo interval after its first activation; the
robots then coincide. Use the coincidence continuation described in
the proof of Lemma~\ref{lem:zcc_not_fc_s}. This either keeps both
positions fixed forever, so that the zig-zag condition fails, or
produces a finite violation of the collision-freedom condition.
If $c_1\neq0$ and $\sigma_1\notin(0,1)$, the cases $\psi\not\equiv0$
and $\psi\equiv0$ with $c_1>1$ in the proof of
Lemma~\ref{lem:A2-4-FC-S} apply verbatim: for $k$ large enough the
collision-freedom condition fails at a finite time, and an arbitrary
fair, time-divergent continuation is appended.
In the remaining contraction case $\sigma_1\in(0,1)$, after the $k$
activations we have $a(k)=b(0)-\sigma_1^{\,k}\bigl(b(0)-a(0)\bigr)$
and $b$ is still at $b(0)$. Thus the robots still lie in the original
initial segment in their original order, and neither has a pending
operation. Apply the continuation property proved in
Lemma~\ref{lem:zcc_not_fc_s}, keeping the original initial positions
fixed. It supplies a fair, time-divergent continuation violating
$\Phi_{\mathrm{ZCC}}$, while preserving the safety violation of
$\Phi_{\mathrm{SRO}}$ already obtained in the prefix.

The safety violation of $\Phi_{\mathrm{SRO}}$ occurs in the prefix and
precedes every continuation used above, so
Lemma~\ref{lem:concat}(1)(3) is satisfied and
Lemma~\ref{lem:boundary-common-failure} gives the claim.
\end{proof}

\begin{lemma}
\label{lem:A2-11-FS-AM}
$\mathrm{RZ}\notin\Sol(\FS^{A_M})$.
\end{lemma}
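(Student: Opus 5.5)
We need to show that for every deterministic $\FS$ algorithm under $A_M$ there is one admissible fair execution violating both $\Phi_{\mathrm{SRO}}$ and $\Phi_{\mathrm{ZCC}}$. The plan mirrors Lemma~\ref{lem:A2-4-FS-AM}: first force an $\mathrm{SRO}$ safety violation by solo activations, then hand the reached state to the $\mathrm{ZCC}$ continuation of Lemma~\ref{lem:zcc_not_fs_am}.

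Since $A_M$ admits solo activation intervals (Lemma~\ref{lem:solo}), activate $a$ alone with normalized frames and let its color follow the orbit $c_0,g(c_0),\dots$ (Lemma~\ref{lem:pumpB2}). If the normalized orbit of $c_0$ contains no nonzero output, use alternating singleton activations forever. No nonzero $\M$ occurs, so the sequence $(C_i)$ is the single term $C_0$ and the perpetuity condition of $\mathrm{SRO}$ fails. The robots stay at distinct fixed positions, so convergence of $\mathrm{ZCC}$ fails. This one fair execution witnesses both liveness violations.

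Otherwise, keep $a$ solo until its first nonzero $\M$.
\begin{itemize}
\item If it lands on $b$, then $d_1=0$ violates $\mathbf{C1}$ and $\mathbf{C2}$.
\item If the orbit contains a second nonzero output, continue $a$ alone (still observing distinct positions, or landing on $b$, which already violates $\mathrm{SRO}$) until it is executed. Lemma~\ref{lem:cf-sro-two-moves} then gives an $\mathrm{SRO}$ safety violation.
\item If the orbit has exactly one nonzero output, the first $\M$ of $a$ either already violates the rotation condition, or it is one of the two admissible solo displacements. In that case let $b$ traverse the same state sequence solo (the intervening $a$-segment is $b$-invisible by Lemma~\ref{lem:silent}). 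Now $b$ either violates $\mathrm{SRO}$ at its $\M$, or both robots reach the same zero-output tail. In the latter case every later output is zero, so $(C_i)$ is finite and perpetuity fails on any fair continuation.
\end{itemize}
End the prefix at the completion of the relevant cycle, so it ends at a joint stop with no pending operation.

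Next comes the continuation for $\mathrm{ZCC}$. In the prefix every $\LK$ observed distinct positions, so each robot's color is $g^k(c_0)$ for its activation count $k$. Both colors therefore lie on the orbit of $c_0$ and share its eventual cycle. If the robots coincide, use the coincidence continuation from Lemma~\ref{lem:zcc_not_fc_s}. Otherwise the hypotheses of the continuation statement proved in Lemma~\ref{lem:zcc_not_fs_am} hold with $a_0,b_0$ fixed, and it yields a fair, time-divergent $A_M$ continuation violating $\Phi_{\mathrm{ZCC}}$.

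In the zero-output-tail case we need the $\mathrm{ZCC}$ continuation to be stationary so that $\mathrm{SRO}$ perpetuity still fails. The continuation of Lemma~\ref{lem:zcc_not_fs_am} then reduces to the branch where every rule is zero, which keeps the robots fixed, so both failures hold on the same execution. The $\mathrm{SRO}$ safety violations are preserved by Lemma~\ref{lem:concat}(1), and the ordering requirement of Lemma~\ref{lem:concat}(3) is met because the safety segment comes first. Lemma~\ref{lem:boundary-common-failure} then gives the claim.

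The main obstacle is checking this stationary-tail compatibility, together with the requirement that the colors share an eventual cycle when the two robots have made different numbers of activations.
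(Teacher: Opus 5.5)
Your proposal is correct and follows essentially the paper's proof. Like the paper, you run solo activations of $a$ until it lands on $b$ or completes a second nonzero Move, let $b$ replay the single nonzero output, and then use either the coincidence continuation or the continuation of Lemma~\ref{lem:zcc_not_fs_am}, starting from colors on the common orbit of $c_0$.

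The main difference is the zero-output-tail case. There the paper skips Lemma~\ref{lem:zcc_not_fs_am} and directly appends the stationary alternating tail, which violates both SRO perpetuity and the ZCC zig-zag condition. If you route through that lemma instead, its ``safety already failed, append any fair tail'' branch can apply here: after a $\mathbf{C1}$ move of $a$ one has $f_a=\sqrt2\,d>d$. So you must explicitly choose the stationary tail; the lemma's continuation is not automatically stationary.
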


\begin{proof}
Let $\mathcal A$ be any deterministic algorithm of $\FS$ under $A_M$,
which admits synchronous activations of one robot at a time
(Lemma~\ref{lem:solo}). Begin with an interval in which $b$ is never
activated and $a$ alone is activated, one complete cycle at a time,
with normalized frames at every $\LK$ at which the robots are
distinct. Since in $\FS$ activations that only update the color may be
interleaved, the length of this interval is measured in nonzero $\M$
operations and not in activations. Continue it until $a$ lands on the
position of $b$ or completes its second nonzero $\M$, whichever comes
first. As long as $a$ has not landed on $b$, every $\LK$ of $a$
observes distinct positions, so its color follows
$c_0,g(c_0),g^2(c_0),\dots$ (Lemma~\ref{lem:pumpB2}).
 
If in this interval $a$ lands on the position of $b$, then
$d_{i+1}=0$ and neither $\mathbf{C1}$ nor $\mathbf{C2}$ can hold, so
the second condition of $\mathrm{SRO}$ fails at that point, a safety
violation.
 
Suppose next that $a$ never lands on $b$ and performs at most one
nonzero $\M$ along its whole solo orbit, which includes the case in
which no state produces a nonzero $\M$. Then the periodic part of the
orbit of $c_0$ contains no nonzero rule. Stop activating $a$ once its
color has entered that periodic part, and activate $b$ alone the same
number of times. The color of $b$ follows the same finite orbit as
long as its $\LK$s observe distinct positions. If $b$ lands on the
position of $a$, the second condition of $\mathrm{SRO}$ fails as in
the previous case, and the prefix is ended there as described below.
Otherwise both robots now carry colors in the periodic part, at
distinct positions, and neither has a pending operation. Continue with
alternating singleton activations forever. Every $\LK$ observes
distinct positions and every output is zero, so the positions remain
fixed. This fair, time-divergent execution contains only finitely many
nonzero $\M$ operations, so the sequence $(C_i)$ is finite and the
perpetuity condition of $\mathrm{SRO}$ fails, while $f_r$ is eventually
constant and the zig-zag condition of $\mathrm{ZCC}$ fails.
 
In the remaining case $a$ does not land on $b$ and performs two nonzero
$\M$ operations with its peer at rest, so by
Lemma~\ref{lem:cf-sro-two-moves} the rotation condition and
$\mathbf{C3}$ cannot both hold, whatever the internal states, and the
second condition of $\mathrm{SRO}$ fails.
 
In each case with a safety violation of $\Phi_{\mathrm{SRO}}$, the
violation is produced by the last $\M$ of the prefix constructed so
far, and every $\LK$ in that prefix observed distinct positions. By
Lemma~\ref{lem:norm}(1), each robot that has been activated $k$ times
carries $g^{k}(c_0)$.
End the prefix at this joint stop, with no pending operation.
If the robots coincide, use the coincidence continuation
from Lemma~\ref{lem:zcc_not_fc_s}. Otherwise their colors lie on the
same normalized color orbit, so the continuation statement
proved in Lemma~\ref{lem:zcc_not_fs_am} applies.
In either case it supplies a fair, time-divergent continuation
violating $\Phi_{\mathrm{ZCC}}$, with the original initial
positions kept fixed. The safety violation established
in the prefix persists.
 
All the blocks may be concatenated by Lemma~\ref{lem:concat}(1)(2)(3),
and a single execution violates both predicates, so
Lemma~\ref{lem:boundary-common-failure} gives the claim.
\end{proof}

\subsubsection{\texorpdfstring{SRO $\vee$ VSS (RV)}{SRO or VSS}}
\label{ssub:f2-3}

The whole maximal negative boundary of this pair is the single class
$\FS^{S}$, and that obligation is discharged in the main text by
Proposition~\ref{prop:cf-sro-exchange-family}.

\subsubsection{\texorpdfstring{SRO $\vee$ VEC (RX)}{SRO or VEC}}
\label{ssub:f2-4}

\begin{lemma}
\label{lem:A2-13-FC-S}
$\mathrm{RX}\notin\Sol(\FC^{S})$.
\end{lemma}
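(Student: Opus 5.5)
We need one admissible fair $\FC^S$ execution that violates both $\Phi_{\mathrm{SRO}}$ and $\Phi_{\mathrm{VEC}}$. The template is the one already used for $\mathrm{RZ}$ (Lemma~\ref{lem:A2-11-FC-S}) and for $\mathrm{EX}$ (Lemma~\ref{lem:A2-6-FC-S}). Fix an arbitrary deterministic $\FC$ algorithm, supply normalized frames, and begin with fully synchronous rounds. By Lemma~\ref{lem:rigidA2} the colors then follow a common eventually periodic sequence $c_0,g(c_0),\dots$, and the configuration stays at $\iota$ as long as every output is zero.

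\emph{No nonzero output.} If $\lambda_{c_j}=0$ for every $j$, continue synchronously forever. The joint-stop sequence is $(C_0)$, so the perpetuity condition of $\mathrm{SRO}$ fails. Since $d_0\not\perp d_0$, the attainment condition of $\mathrm{VEC}$ also fails. Both are liveness violations witnessed by the same fair execution.

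\emph{A nonzero output occurs.} Otherwise let $t$ be least with $\lambda_{c_t}\neq0$. Replace round $t$ by a solo interval (Lemma~\ref{lem:solo}) in which $a$ is activated $k\ge3$ times, then append a fair tail. Because $b$ is inactive, its color stays $c_t$, so by Lemma~\ref{lem:single-rule}(b) robot $a$ applies the single pair $(\lambda,\theta)$, with $\lambda\neq0$, at every activation.

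If $c_1=0$, the first move lands on $b$. Then $d_1=0$, which satisfies neither $\mathbf{C1}$ nor $\mathbf{C2}$, so $\mathrm{SRO}$ fails. It is also not among the nonzero successors permitted by the transition condition of $\mathrm{VEC}$, so $\mathrm{VEC}$ fails too. Both are safety violations.

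If $c_1\neq0$, every activation is a nonzero $\M$ with the peer at rest. Two such moves already violate $\mathrm{SRO}$ by Lemma~\ref{lem:cf-sro-two-moves}, which is a safety violation.

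For $\mathrm{VEC}$ in the case $c_1\neq0$, the argument is the one of Lemma~\ref{lem:vec_not_fc_s}. With the peer fixed, a single move satisfies the transition condition only if it rotates the segment clockwise by $90^\circ$ or by $45^\circ$ with scale $2^{\mp1/2}$ (and never equals $d_i$, since the move is nonzero). If the first or second move departs from these, the transition condition fails. Otherwise the same rule is repeated, so $d_i\perp d_0$ is reached within two moves, and the next nonzero move of $a$ violates the stopping condition. The choice $k\ge3$ guarantees that this next move exists inside the interval.

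\emph{Concatenation.} All violations in the second case occur at finite times inside the solo interval, so they persist under the fair tail (Lemma~\ref{lem:concat}(1)). Lemma~\ref{lem:boundary-common-failure} then gives the claim.

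The only delicate point is the $\mathrm{VEC}$ case analysis when the rule is a $45^\circ$ rotation. Here the scale must be checked against the $\mathbf{C3}$ containment of $\mathrm{VEC}$ rather than only the rotation. This is handled exactly as in Lemma~\ref{lem:vec_not_fc_s}: either the transition condition fails at the second move, or orthogonality is reached and the third move breaks stopping.
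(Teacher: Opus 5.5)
Your proposal is correct and follows essentially the same route as the paper's proof. Both use the synchronous prefix, the dichotomy on whether some $\lambda_{c_j}\neq0$, and a solo interval of $a$ with $k\ge3$ under the single rule of Lemma~\ref{lem:single-rule}(b). Both then treat the $c_1=0$ landing-on-peer case separately, and for $c_1\neq0$ combine the two-move obstruction of Lemma~\ref{lem:cf-sro-two-moves} with the transition/stopping argument for VEC from Lemma~\ref{lem:A2-6-FC-S}.
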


\begin{proof}
Let $\mathcal A$ be any deterministic algorithm of $\FC$ under $\SSY$.
The colors follow a common sequence $\gamma_0,g(\gamma_0),\dots$
(Lemma~\ref{lem:rigidA2}) while both robots are activated
simultaneously, and the configuration stays at $\iota$.
 
If $\lambda_{c_j}=0$ for every $j$, no nonzero $\M$ ever occurs, so the
sequence of configurations reduces to $C_0$ and both the perpetuity
condition of $\mathrm{SRO}$ and the attainment condition of
$\mathrm{VEC}$, which would require $d_0\perp d_0$, fail.
 
Otherwise let $t$ be the least $j$ with $\lambda_{c_j}\neq0$, replace
round $t$ by an interval in which $a$ alone is activated $k\ge3$ times
in a row (Lemma~\ref{lem:solo}), and append a fair tail. The visible
state of $a$ is constantly $c_t$, so by
Lemma~\ref{lem:single-rule}(b) it applies the same pair
$(\lambda,\theta)$, with $\lambda\neq0$, at every activation.
 
If $c_1=0$, the robot $a$ lands on the position of $b$ at the first
activation and $d_{i+1}=0$; both $\mathbf{C1}$, $\mathbf{C2}$ and all
the targets permitted by the transition condition of $\mathrm{VEC}$ are
nonzero, so the second conditions of both tasks fail simultaneously.
 
If $c_1\neq0$, the same robot performs two nonzero $\M$ operations with
its peer at rest, so by Lemma~\ref{lem:cf-sro-two-moves} the second
condition of $\mathrm{SRO}$ fails. For $\mathrm{VEC}$ the argument of
Lemma~\ref{lem:A2-6-FC-S} applies: the only single displacements
satisfying the transition condition rotate the segment clockwise by
$45^\circ$ or by $90^\circ$, so if $a$ departs from them the transition
condition fails, and otherwise $a$ reaches a configuration with
$d_i\perp d_0$ in at most two moves and then repeats the same
displacement, so the stopping condition fails. All of these are safety
violations.
 
In each case a single execution violates both predicates, so
Lemma~\ref{lem:boundary-common-failure} gives the claim.
\end{proof}

\begin{lemma}
\label{lem:A2-13-FS-A}
$\mathrm{RX}\notin\Sol(\FS^{A})$.
\end{lemma}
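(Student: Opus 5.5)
We have to show that $\mathrm{RX}=\mathrm{SRO}\vee\mathrm{VEC}\notin\Sol(\FS^{A})$. The plan is to combine the observation-during-movement construction of Lemma~\ref{lem:vec_not_fs_a} with the rotation obstruction for $\mathrm{SRO}$, in the same way Lemma~\ref{lem:A2-6-FS-A} combined it with $\mathrm{EPG}^{*}$. By Proposition~\ref{prop:atomicity-equivalences} it suffices to work in $A_{LC}$, which admits Lemma~\ref{lem:mid}. Fix an arbitrary deterministic $\FS$ algorithm and start with a fully synchronous prefix under normalized frames. By Lemma~\ref{lem:rigidA2} both robots carry a common state sequence $s_0,s_1,\dots$.

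\emph{Degenerate case.} If $\lambda_{s_k}=0$ for every $k$, the fair execution never moves. The joint-stop sequence is then $(C_0)$, so the perpetuity condition of $\mathrm{SRO}$ fails. The attainment condition of $\mathrm{VEC}$ also fails, since $d_0\not\perp d_0$. Both are liveness violations witnessed by this single infinite execution.

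\emph{Main case.} Otherwise let $k$ be least with $\lambda_{s_k}>0$. Truncate the prefix at round $k$; the configuration is still $\iota$. Activate $a$ so that it performs the first nonzero $\M$ from $a_0$ to some $p_E\neq a_0$. By Lemma~\ref{lem:mid}, let $b$ (state $s_k$, no pending operation) perform its $\LK$ when $a$ is at $p_\mu$. Start the $\M$ of $b$ before the $\M$ of $a$ ends, so that no joint stop separates the two landings. The resulting configuration is then $C_1$. As in Lemma~\ref{lem:vec_not_fs_a}, we have
\[
  d_1(\mu)=b_0+\lambda_{s_k}R_{\theta_{s_k}}(p_\mu-b_0)-p_E ,
\]
which is affine and nonconstant in $\mu$. $\mathrm{VEC}$ admits only the four successors $d_0$, $R_{-90^\circ}d_0$, $2^{\mp1/2}R_{-45^\circ}d_0$. $\mathrm{SRO}$ admits only two of these (the $\mathbf{C3}$ clause is vacuous at $i=0$). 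Choose a real coordinate $h(\mu)$ of $d_1$ that is nonconstant. Corollary~\ref{cor:continuum} then gives a $\mu$ whose $h(\mu)$ avoids the at most six corresponding values. For that $\mu$, the second condition of $\mathrm{SRO}$ and the transition condition of $\mathrm{VEC}$ both fail at $i=0$.

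\emph{Ensuring $C_1$ is a genuine new term.} One must also check that $C_1\neq C_0$, since otherwise the sequence would not advance and the transition would not be tested. I will add the single value of $\mu$ giving $d_1=d_0$ to the excluded set; this works because $a$ has moved, so $C_1\neq C_0$ is automatic once $d_1\neq d_0$ or $a$'s position differs. Coincidence, $d_1=0$, is already excluded because $0$ is not among the admissible targets of either task.

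Both failures are safety violations at a finite time. So after completing the pending cycles I append an arbitrary fair, time-divergent tail, and Lemma~\ref{lem:concat}(1) preserves both violations. Lemma~\ref{lem:boundary-common-failure} then yields the claim.

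The main obstacle is verifying that the overlapping-Move scheduling is admissible in $A_{LC}$ together with the chosen $\LK$ time. This is already handled in the proof of Lemma~\ref{lem:vec_not_fs_a}, and I would cite it.
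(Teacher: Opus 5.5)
Your proof is correct, but it schedules the key step differently from the paper. The paper delays the Move of $b$ until $a$ has landed, so the solo landing of $a$ is itself a joint stop $C_1$. It then handles SRO by a case split. Either that solo move is neither of the two admissible solo rotations, and SRO fails at once by Lemma~\ref{lem:cf-sro-two-moves}. Or it is one of them, and the stale Move of $b$ computed from $p_\mu$ yields a term $C_2$ whose distance varies nonconstantly with $\mu$, so it can avoid the distances SRO permits. VEC is handled by letting the segment after both landings avoid finitely many targets.

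You instead keep the overlapping schedule of Lemma~\ref{lem:vec_not_fs_a}, so no joint stop separates the two landings and only the transition $C_0\to C_1$ has to be tested. The two SRO targets $R_{-90^\circ}d_0$ and $2^{-1/2}R_{-45^\circ}d_0$ are among the four VEC targets, and $\mathbf{C3}$ is vacuous at $i=0$. Hence one exclusion set of at most four vectors defeats both predicates, with no case split and no appeal to the two-move obstruction. Your route is shorter and more uniform: it is the VEC impossibility proof with one observation added.

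The paper's sequential schedule makes the Move of $a$ the nonzero Move of earliest completion by construction. Its three-task extensions (Lemmas~\ref{lem:A3-10-FS-A} and~\ref{lem:A3-13-FS-A}) rely on this when they add an exchange task. With your schedule you would additionally arrange $t_E(b)>t_E(a)$ for that reuse.

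Two small points. First, the affine formula for $d_1(\mu)$ needs $p_\mu\neq b_0$. At the at most one $\mu$ where $b$ observes a coincidence, its output is governed by Lemma~\ref{lem:norm}(2) rather than (1), so exclude that $\mu$ as well. Second, your paragraph on $C_1\neq C_0$ is unnecessary: $p_E\neq a_0$ already forces it, and $d_1=d_0$ is anyway one of the excluded VEC targets.
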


\begin{proof}
Let $\mathcal A$ be any deterministic algorithm of $\FS$ under $\ASY$.
Define the execution by starting, as in the first stage of
Lemma~\ref{lem:dmsd_not_fs_a}, with a finite fully synchronous prefix
that freezes the common state of the two robots, which $\ASY$ admits
(Lemma~\ref{lem:rigidA2}), then placing an observation-during-movement
interval (Lemma~\ref{lem:mid}), and finally appending an arbitrary fair
tail.

If $\lambda_s=0$ for every state $s$, the configuration never changes,
so the sequence $(C_i)$ is finite and $d_0\not\perp d_0$; both the
perpetuity condition of $\mathrm{SRO}$ and the attainment condition of
$\mathrm{VEC}$ fail.

Otherwise, in the round of the first common state $s$ with
$\lambda_s>0$, activate $a$ alone so that it performs a nonzero $\M$,
and let the robot $b$, in state $s$, observe a point $p_\mu$,
$\mu\in(0,1)$, of that movement segment (Lemma~\ref{lem:mid}, which is
admissible for $K=A$).
Delay the Move of $b$ until after $a$ has landed, so that the
landing configuration of $a$ is a joint stop.
Do not reactivate $a$ before $b$ completes this Move.
If this $\M$ of $a$ is neither the clockwise
$90^\circ$ rotation about the peer ($\mathbf{C1}$) nor the clockwise
$90^\circ$ rotation about the midpoint ($\mathbf{C2}$), then by
Lemma~\ref{lem:cf-sro-two-moves} the second condition of
$\mathrm{SRO}$ already fails.
If it is one of them, the robot $b$ performs the displacement
it computes from $p_\mu$, and the distance at the time both robots
are at rest is a non-constant continuous function of $\mu$.
The distances permitted by $\mathrm{SRO}$ belong to the
countable set generated from $d$ by the ratios of $\mathbf{C1}$ and
$\mathbf{C2}$, so by Corollary~\ref{cor:continuum} some $\mu$ avoids
that set and $\Phi_{\mathrm{SRO}}$ fails.

In the same observation, an $\FS$ robot $b$ cannot consult the internal
state of its peer and retains no previously observed angle or distance,
so its displacement is determined by $p_\mu$ alone. The segment
obtained after both landings varies continuously and non-constantly
with $\mu$, whereas the transition condition of $\mathrm{VEC}$ admits
only finitely many targets, so by Corollary~\ref{cor:continuum} some
$\mu$ avoids them. Both excluded sets being countable, a single
$\mu\in(0,1)$ avoids both, and the transition condition of
$\mathrm{VEC}$ fails in the same execution.

Both are safety violations, so they may be concatenated by
Lemma~\ref{lem:concat}(1)(2), and
Lemma~\ref{lem:boundary-common-failure} gives the claim.
\end{proof}

\subsubsection{\texorpdfstring{SRO $\vee$ CFSM (RC)}{SRO or CFSM}}
\label{ssub:f2-5}

\begin{lemma}
\label{lem:A2-14-FC-S}
$\mathrm{RC}\notin\Sol(\FC^{S})$.
\end{lemma}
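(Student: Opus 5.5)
The plan is to follow the template of Lemmas~\ref{lem:A2-7-FC-S} and~\ref{lem:A2-13-FC-S}. Fix an arbitrary deterministic $\FC$ algorithm under $\SSY$ and run a fully synchronous prefix with normalized frames. By Lemma~\ref{lem:rigidA2} the two robots carry a common color sequence $c_0,g(c_0),\dots$, and the configuration stays at $\iota$ as long as every output is zero. Corollary~\ref{cor:first-nonzero} then splits the argument into two cases.

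\emph{No nonzero output.} If $\lambda_{c_j}=0$ for every $j$, the infinite synchronous execution is fair and contains no nonzero $\M$. Hence $(C_i)$ reduces to $C_0$ and the perpetuity condition of $\mathrm{SRO}$ fails. At the same time $\moves{a}=0\neq1$, so $\Phi_{\CFSM}$ fails. Both violations are witnessed on this one fair infinite execution, so Lemma~\ref{lem:concat}(3) is satisfied.

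\emph{A first nonzero color exists.} Let $t$ be least with $\lambda_{c_t}\neq0$. Replace round $t$ by a solo interval in which $a$ alone is activated twice (Lemma~\ref{lem:solo}), then append a fair tail. Since $b$ is inactive, the visible state of $a$ stays $c_t$, so by Lemma~\ref{lem:single-rule}(b) the robot $a$ applies the single pair $(\lambda,\theta)$ with $\lambda\neq0$ at both activations.

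If $c_1=0$, robot $a$ lands on $b$ at its first activation. This collision violates $\CFSM$, and $d_{i+1}=0$ satisfies neither $\mathbf{C1}$ nor $\mathbf{C2}$, so $\mathrm{SRO}$ fails as well.

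If $c_1\neq0$, the distance stays positive, so both activations are nonzero $\M$ operations with the peer at rest. This gives $\moves{a}\ge2$, violating the single-move condition of $\CFSM$. By Lemma~\ref{lem:cf-sro-two-moves}, these two solo moves also violate the rotation condition or $\mathbf{C3}$ of $\mathrm{SRO}$.

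In this case every violation is a safety violation occurring in the finite prefix, so it is preserved by the fair tail (Lemma~\ref{lem:concat}(1)). One execution therefore refutes both predicates, and Lemma~\ref{lem:boundary-common-failure} finishes the proof.

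The main point to check is the $c_1\neq0$ case. The two $\M$ operations must be consecutive terms of $(C_i)$, which holds because under $\SSY$ each solo cycle completes and is followed by a joint stop. The two-move obstruction must also apply from the initial configuration; it does, since Lemma~\ref{lem:cf-sro-two-moves} applies at any index $i$ and $Q_i$ is defined for $i=0$.
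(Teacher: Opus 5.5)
Your proposal is correct and follows essentially the same route as the paper's proof. Both use a fully synchronous zero-move prefix with a common color sequence, then the liveness case in which no nonzero output ever occurs, and otherwise two solo activations of $a$ under the single rule $(\lambda,\theta)$, split on $c_1=0$ (collision plus zero segment) versus $c_1\neq0$ (two nonzero moves, refuting the single-move condition and, via Lemma~\ref{lem:cf-sro-two-moves}, $\mathrm{SRO}$). Your extra check that the two solo moves give consecutive terms of $(C_i)$ starting from $C_0$ is a detail the paper leaves implicit.
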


\begin{proof}
Let $\mathcal A$ be any deterministic algorithm of $\FC$ under $\SSY$.
The colors follow a common sequence $\gamma_0,g(\gamma_0),\dots$
(Lemma~\ref{lem:rigidA2}) while both robots are activated
simultaneously, and the configuration stays at $\iota$.
 
If $\lambda_{c_j}=0$ for every $j$, no nonzero $\M$ ever occurs, so
$\moves{a}=0\neq1$ and the single-move condition of $\CFSM$
fails, while the sequence of configurations reduces to $C_0$ and the
perpetuity condition of $\mathrm{SRO}$ fails.
 
Otherwise let $t$ be the least $j$ with $\lambda_{c_j}\neq0$, replace
round $t$ by an interval in which $a$ alone is activated twice in a row
(Lemma~\ref{lem:solo}), and append a fair tail. The visible state of
$a$ is constantly $c_t$, so by Lemma~\ref{lem:single-rule}(b) it
applies the same pair $(\lambda,\theta)$, with $\lambda\neq0$, at both
activations. If $c_1=0$, the robot $a$ lands on the position of $b$ at
the first activation, so $a(t)=b(t)$ at that time and the
collision-freedom condition of $\CFSM$ fails, while
$d_{i+1}=0$ satisfies neither $\mathbf{C1}$ nor $\mathbf{C2}$ and the
second condition of $\mathrm{SRO}$ fails. If $c_1\neq0$, both
activations are nonzero $\M$ operations, so $\moves{a}\ge2\neq1$ and
the single-move condition fails, while
Lemma~\ref{lem:cf-sro-two-moves} makes the second condition of
$\mathrm{SRO}$ fail as well. All of these are safety violations
witnessed at a finite time, hence preserved by the fair tail
(Lemma~\ref{lem:concat}(1)).
 
In each case a single execution violates both predicates, so
Lemma~\ref{lem:boundary-common-failure} gives the claim.
\end{proof}

\subsection{Disjunctions of the Remaining Exchange Tasks}
\label{app:cf-exchange-pairs}

This subsection treats the eight antichains $\{E,P\}$ with
$E\in\mathcal E=\{\SRDV, \VSS\}$ and
$P\in\mathcal P_4=\{\mathrm{OSPO},\mathrm{ZCC},\mathrm{VEC},
\CFSM\}$, which carry eighteen boundary obligations in total.
The exchange component is always supplied by the common first phase of
Appendix~\ref{app:shared-exchange-phase}: either no nonzero $\M$ ever
occurs, so that Lemma~\ref{lem:p1_no_null} is contradicted, or the
first nonzero $\M$ does not land where Lemma~\ref{lem:p1_first_move}
requires, or it does and Lemma~\ref{lem:p1_replay} carries the peer out
of the admissible region. The companion component is obtained from a
solo activation interval, a delayed-move interval, an
observation-during-movement interval, or a violation of a move-count or
stopping requirement, according to the companion task and the boundary
class.

\subsubsection{\texorpdfstring{OSPO $\vee$ S-RDV (OS)}{OSPO or S-RDV}}
\label{ssub:f3-1}

\begin{lemma}
\label{lem:A2-8-FC-A}
$\mathrm{OS}\notin\Sol(\FC^{A})$.
\end{lemma}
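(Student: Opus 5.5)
Fix an arbitrary deterministic $\FC$ algorithm $\mathcal A$ under $\ASY$. We will build one admissible fair execution that violates both $\Phi_{\mathrm{OSPO}}$ and $\Phi_{\SRDV}$, and then conclude by Lemma~\ref{lem:boundary-common-failure}. The construction follows the template of Lemmas~\ref{lem:A2-3-FC-A} and~\ref{lem:A2-10-FC-A}.

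\textbf{Stage 1: synchronous prefix.} Use coordinates $A=0$, $a_0=d$, $b_0=2d$, $B=3d$. Run fully synchronous cycles with identical normalized observations; the colors remain equal by Lemma~\ref{lem:rigidA2}. If no nonzero output ever occurs, the robots stay at their distinct initial positions forever. Then $d$ is constant, so $\mathrm{OSPO}$ fails by the monotonicity criterion stated after Definition~\ref{def:osp}, and $\SRDV$ fails by Lemma~\ref{lem:p1_no_null}. Both are liveness failures on the same fair execution.

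\textbf{Stage 2: the first nonzero round.} Otherwise, in the first round with nonzero output, let both robots Look and then Compute. They obtain the same coefficient $z=\lambda e^{\mathrm i\theta}\neq0$ and the same new color $\gamma'$. Delay the Move of $b$ (Lemma~\ref{lem:suspend}) and let $a$ complete its Move alone. The plan is to reuse the idea of Lemma~\ref{lem:osp_not_fc_s}: push $m_a$ to $2$ while $m_b=0$. This is a finite safety violation of OSPO's move-count requirement.

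Keep $b$ pending after its Compute, with its color $\gamma'$ fixed. Activate $a$ for further complete cycles. Its visible state is now $\gamma'$, not the old color, and its rule $(\lambda_{\gamma'},\theta_{\gamma'})$ may be zero. This is the main obstacle, handled by two subcases.
\begin{itemize}
\item If $\lambda_{\gamma'}\neq0$ and no collision occurs, the second Move of $a$ completes before $b$ moves. This gives $m_a=2$ while $m_b=0$, violating OSPO.
\item If $a$ lands on $b$, collision-freedom of OSPO fails.
\item If $\lambda_{\gamma'}=0$, use a different OSPO failure. Release $b$'s stored Move right after $a$ lands, but start it before $a$ completes, so that no joint stop separates the landings. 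Then use the phase structure: the first phase must end at a joint stop with both counts equal to $1$ and $d$ strictly smaller. Compare $d$ at that joint stop, $|1-2z|\,d$, with $d$, and use the continuation below to show that no later alternation with one move per robot per phase can occur.
\end{itemize}
I expect this last subcase to need the most care.

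\textbf{Stage 3: the $\SRDV$ failure.} For $\SRDV$ we use the exchange obstruction.
\begin{itemize}
\item If $z\neq2$, the first completed Move of $a$ does not land at $B$. Since its Look observed the initial configuration, Lemma~\ref{lem:p1_first_move} shows that no extension satisfies the exchange phase. Any fair tail then keeps both violations.
\item If $z=2$, the robots can be brought to $(B,A)$ with common color $\gamma'$ and no pending cycle. As in Lemma~\ref{lem:A2-3-FC-A}, append the counterexecution from the impossibility of stable rendezvous in $\FC^A$, applied to $\mathcal A_{\gamma'}$ and transported by an orientation-preserving similarity. This kills $\SRDV$'s terminal condition.
\end{itemize}

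\textbf{Ordering.} Whichever OSPO safety violation was forced, it is placed in the prefix before this continuation, as Lemma~\ref{lem:concat}(1)(3) requires. Where OSPO's failure is instead a liveness failure, check it on the same continuation. For the rendezvous continuation this is immediate: stable rendezvous or its failure aside, a collision or eventual coincidence already violates $d(t)>0$. Otherwise, the continuation's distances at joint stops must be monotone, or the move counts must go unbalanced.
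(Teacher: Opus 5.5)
There is a genuine gap in Stage~2, in the subcase $\lambda_{\gamma'}=0$ with $z\neq2$, and you flagged it yourself without closing it. Because you always let both robots Look and Compute in the first nonzero round, $b$'s light has already switched to $\gamma'$ when $a$ is reactivated, so nothing forces a second nonzero Move of $a$. Your fallback does not produce a violation either. Take $z=\tfrac14$ and release $b$'s stored Move. The resulting joint stop is $(\tfrac54 d,\tfrac74 d)$: the distance has decreased monotonically from $d$ to $d/2$, each robot moved exactly once, and no collision occurred. That is a perfectly valid first approach phase of $\mathrm{OSPO}$. So comparing $|1-2z|\,d$ with $d$ proves nothing. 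Moreover, your ``continuation below'' for $z\neq2$ is an arbitrary fair tail, which by Lemma~\ref{lem:concat}(3) cannot witness a liveness failure. The phrase ``release $b$ right after $a$ lands, but start it before $a$ completes'' is also self-contradictory, since $a$'s landing is its completion.

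The missing idea is that, $\mathcal A$ being deterministic, the normalized rule $z$ for the initial configuration with peer color $\gamma$ is known \emph{before} the first nonzero round is scheduled. The adversary can therefore choose the schedule by cases.
\begin{itemize}
\item If $z\neq2$, do not let $b$ Look at all. Replace the round by two complete solo cycles of $a$ (Lemma~\ref{lem:solo}). Then $b$ keeps color $\gamma$, so $a$ applies $z$ twice (Lemma~\ref{lem:single-rule}(b)). This gives either a collision or $m_a=2$, $m_b=0$. Meanwhile the first landing is not at $B$, which kills \SRDV{} by Lemma~\ref{lem:p1_first_move}.
\item Only if $z=2$ use your pattern (both Compute, $b$'s Move delayed). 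Then $a$ travels from $d$ to $3d$ through the resting $b$ at $2d$, so the collision-freedom clause of $\mathrm{OSPO}$ fails at once. Your proposal misses this, since it only tests for $a$ \emph{landing} on $b$. Your rendezvous continuation from $(B,A)$ then handles \SRDV, and the vague OSPO check in your ``Ordering'' paragraph becomes unnecessary. That check is also unjustified as stated, because the stable-rendezvous counterexecution need not contain any coincidence.
\end{itemize}
Alternatively, you could keep your schedule for $z\neq2$. In that case, release $b$ immediately without reactivating $a$. Then rerun the argument of Lemma~\ref{lem:osp_not_fc_s} from the reached state, which has common color $\gamma'$, no pending operation, and counts $(1,1)$. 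Two further solo Moves of $a$ then give a collision or counts $(3,1)$. But that continuation would have to be written out explicitly.
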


\begin{proof}
Let $\mathcal A$ be any deterministic algorithm of $\FC$ under $\ASY$.
Use coordinates $A=0$, $a_0=d$, $b_0=2d$, $B=3d$ on the initial line.
Run fully synchronous cycles with identical normalized observations
(Lemma~\ref{lem:rigidA2}). While every $\M$ is zero, the positions stay
fixed and the colors stay equal.

\emph{No nonzero output.} If no nonzero output occurs, continue
forever. The robots stay distinct at constant distance. So
$\Phi_{\SRDV}$ fails (no rendezvous), and $\Phi_{\mathrm{OSPO}}$ fails
(no decreasing phase). This execution witnesses both liveness
violations.

\emph{A nonzero output occurs.} Otherwise let $\gamma$ be the common
color at the start of the first round with a nonzero output. Let
$(\lambda,\theta)$, with $\lambda>0$, be the normalized displacement
for the initial configuration with peer color $\gamma$. Since
$\mathcal A$ is deterministic, $(\lambda,\theta)$ is known before that
round is scheduled.

\emph{Case $(\lambda,\theta)\neq(2,0)$.} Replace the round by two solo
activations of $a$ (Lemma~\ref{lem:solo}), followed by a fair tail. The
peer color stays $\gamma$, so $a$ applies $(\lambda,\theta)$ twice
(Lemma~\ref{lem:single-rule}(b)). The first of these $\M$ operations
observes $\iota$ and does not land at $B$. So the exchange phase of
$\SRDV$ is unsatisfiable (Lemma~\ref{lem:p1_first_move}).

As in Lemma~\ref{lem:A2-2-FC-S}, either a collision occurs or
$m_a=2,m_b=0$; either way $\Phi_{\mathrm{OSPO}}$ is violated. All these
violations are safety violations.

\emph{Case $(\lambda,\theta)=(2,0)$.} Let both robots Look, and then
both Compute; their colors become a common $\gamma'$. Delay the $\M$ of
$b$ so that it begins strictly after the $\M$ of $a$ ends
(Corollary~\ref{cor:stale}). The robot $a$ moves from $d$ to $3d$ while
$b$ rests at $2d$, so they collide, and $\Phi_{\mathrm{OSPO}}$ fails.

Then let $b$ complete its $\M$ to $A$. The robots are now at $(B,A)$
with common color $\gamma'$, and neither has a pending operation. Let
$\mathcal A_{\gamma'}$ be $\mathcal A$ with initial color $\gamma'$. The
impossibility of stable $\mathrm{RDV}$ in $\FC^{A}$ used in
Lemma~\ref{lem:srdv_not_fc_a} gives a counterexecution of
$\mathcal A_{\gamma'}$. Transport it to $(B,A)$ by an
orientation-preserving similarity, as in Lemma~\ref{lem:A2-3-FC-A}, and
append it. This continuation violates the terminal rendezvous
condition of $\SRDV$ and witnesses that liveness violation. The safety
violation of $\Phi_{\mathrm{OSPO}}$ precedes it.

Lemma~\ref{lem:concat}(1)(2)(3) is satisfied in every case, and
Lemma~\ref{lem:boundary-common-failure} gives the claim.
\end{proof}

\begin{lemma}
\label{lem:A2-8-FS-S}
$\mathrm{OS}\notin\Sol(\FS^{S})$.
\end{lemma}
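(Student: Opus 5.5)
We must show that for every deterministic $\FS^{S}$ algorithm there is one admissible fair execution violating both $\Phi_{\mathrm{OSPO}}$ and $\Phi_{\SRDV}$. The plan is to reuse the normalized state-orbit analysis of Lemma~\ref{lem:osp_not_fs_s} together with the replay obstruction behind Proposition~\ref{prop:cf-sro-exchange-family}. We take coordinates $A=0$, $a_0=d$, $b_0=2d$, $B=3d$ and give normalized frames at every $\LK$ with distinct positions. Each robot then advances along the orbit $s_0,g(s_0),\dots$ independently of the schedule, and its displacement depends only on its current state. Let $J$ be the set of orbit indices carrying a nonzero output. A collision is a safety violation of $\Phi_{\mathrm{OSPO}}$, so whenever one occurs we only need to secure the $\SRDV$ violation on the same execution.

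\emph{Case $J=\emptyset$.} A fair alternating (or fully synchronous) execution never moves. Lemma~\ref{lem:p1_no_null} shows that $\SRDV$ fails, and $d$ is constant, so $\Phi_{\mathrm{OSPO}}$ has no decreasing phase. Both failures are liveness failures of this one execution.

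\emph{Case $J\neq\emptyset$.} Let $j=\min J$ and activate $a$ alone until it completes this first nonzero $\M$, which is solo and observes $\iota$.

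If the displacement is not $(2,0)$, Lemma~\ref{lem:p1_first_move} makes the exchange phase of $\SRDV$ unsatisfiable. We then split on $|J|$.
\begin{itemize}
\item If $|J|\ge2$, we keep activating $a$ alone until its second nonzero $\M$. Either a collision occurs, or $m_a=2$ while $m_b=0$. In both cases $\Phi_{\mathrm{OSPO}}$ is violated, and we append a fair tail.
\item If $|J|=1$, we append a fair tail, for instance an alternating one. Each robot makes at most one nonzero $\M$ in total, which contradicts the unbounded move counts that $\Phi_{\mathrm{OSPO}}$ requires.
\end{itemize}
All these are safety or liveness violations compatible with Lemma~\ref{lem:concat}.

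If the displacement is $(2,0)$, $a$ moves from $d$ to $3d$ and passes through $b_0=2d$ while $b$ rests there. This is a collision, so $\Phi_{\mathrm{OSPO}}$ fails immediately. Next we keep $a$ inactive and activate $b$ through its zero-output states until it reaches state $s_j$. The configuration is then $(3d,2d)$ at distance $d$, and Lemma~\ref{lem:p1_replay} gives that $b$ lands at $4d\notin[A,B]$. This violates condition~1 of Definition~\ref{def:s-rdv}, so both violations are finite and we append any fair tail.

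The main point to check is the non-$(2,0)$ branch with $|J|=1$. There the $\SRDV$ failure is a finite obstruction: by Lemma~\ref{lem:p1_first_move} no extension can satisfy the exchange phase. The $\mathrm{OSPO}$ failure, however, is liveness, so we must verify that the chosen fair tail under $\SSY$ truly leaves only finitely many nonzero moves. Normalization is lost if the robots coincide, but a coincidence is itself an $\mathrm{OSPO}$ violation, so that branch is also closed.
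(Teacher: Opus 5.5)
Your proof is correct and uses the same ingredients as the paper's: Lemma~\ref{lem:p1_no_null}, the first-move obstruction of Lemma~\ref{lem:p1_first_move}, the collision through $b_0$ plus replay to $4d$ in the $(2,0)$ case, and the OSPO move-count constraint ($|m_a-m_b|\le1$ with unbounded counts). The only difference is scheduling: the paper runs alternating singletons, then lets $b$ make its first nonzero Move to bring the counts to $(1,1)$, and splits on the orbit of the common state $g^{k+1}(c_0)$ (reaching counts $(3,1)$ or using a fully synchronous tail), whereas you activate $a$ alone and split directly on $|J|$ as in Lemma~\ref{lem:osp_not_fs_s}, which is slightly more direct.
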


\begin{proof}
Let $\mathcal A$ be any deterministic algorithm of $\FS$ under $\SSY$.
Use coordinates $A=0$, $a_0=d$, $b_0=2d$, $B=3d$ on the initial line,
and normalized frames at every $\LK$ at which the robots are distinct.
The next color and the displacement of a robot then depend only on its
own color.
 
\emph{First part.} Activate $\{a\},\{b\},\{a\},\dots$, which $\SSY$
admits. By the $\pi$-symmetry of $\iota$ and Lemma~\ref{lem:silent},
both robots follow $c_0,g(c_0),\dots$ until the first nonzero $\M$.
 
If there is no nonzero $\M$, continue forever. The robots stay at
$\iota$, so they never meet; $\Phi_{\SRDV}$ fails by
Lemma~\ref{lem:p1_no_null}, and $\Phi_{\mathrm{OSPO}}$ fails because
$d$ is constant. This execution witnesses both liveness violations.
 
Otherwise let $k$ be least with $\lambda_{g^k(c_0)}\neq0$. The robot
$a$ performs the first nonzero $\M$ at its $(k+1)$-st activation; let
that $\M$ complete, whether or not the two robots meet during it.
\begin{itemize}
  \item If its landing point is not $B$, $\Phi_{\SRDV}$ is
  unsatisfiable (Lemma~\ref{lem:p1_first_move}).
  \item If it is $B$, let $b$ be activated next; it lands at
  $4d\notin[A,B]$ (Lemma~\ref{lem:p1_replay}), so $\Phi_{\SRDV}$
  fails.
\end{itemize}
Either way $\Phi_{\SRDV}$ suffers a safety violation at a finite time.
This argument uses only the landing points, so it does not assume that
the two robots stay distinct.
 
If a collision has occurred in the prefix constructed so far,
the collision-freedom condition of $\mathrm{OSPO}$ has already
failed. Both predicates have then suffered safety violations,
so append an arbitrary fair, time-divergent continuation.
 
Otherwise the first move of $a$ did not land at $B$, since
a move from $d$ to $3d$ would pass through the stationary
robot at $2d$. The safety violation of $\Phi_{\SRDV}$ has
therefore already occurred, and $b$ has not yet performed
its $(k+1)$-st activation.
 
Activate $b$ once. If a collision occurs during this cycle,
both predicates again have safety violations and an arbitrary
fair tail completes the construction. Otherwise let $T$ be
the end of this cycle. Both robots have observed distinct
positions, so they now carry
$\gamma=g^{k+1}(c_0)$, neither has a pending operation,
and $m_a(T)=m_b(T)=1$.
 
Every execution satisfying $\Phi_{\mathrm{OSPO}}$ has
$m_a(t_j)=m_b(t_j)=j$ at its phase boundaries.
Consequently $|m_a-m_b|\le1$ at all times, and both counts
are unbounded.
 
Let $J_\gamma=\{i\ge0:\lambda_{g^i(\gamma)}\ne0\}$.
If $|J_\gamma|\ge2$, let $i'$ be its second element and
activate $a$ alone for $i'+1$ cycles, stopping this
construction at the first collision if one occurs.
A collision violates $\Phi_{\mathrm{OSPO}}$; otherwise
the resulting counts are $m_a=3$, $m_b=1$, which also
violate it. Append an arbitrary fair tail.
 
If $|J_\gamma|\le1$, use a fully synchronous fair tail.
A collision again violates $\Phi_{\mathrm{OSPO}}$.
If no collision occurs, each robot performs at most one
further nonzero move, contradicting the required
unboundedness of the counts.
 
Thus one fair infinite execution violates both predicates,
and Lemma~\ref{lem:boundary-common-failure} gives the claim.
\end{proof}

\begin{lemma}
\label{lem:A2-8-OB-F}
$\mathrm{OS}\notin\Sol(\OB^{F})$.
\end{lemma}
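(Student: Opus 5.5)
We must show that every deterministic $\OB^F$ algorithm has one fair execution violating both $\Phi_{\mathrm{OSPO}}$ and $\Phi_{\SRDV}$. The plan combines the arguments of Lemma~\ref{lem:osp_not_ob_f} and Lemma~\ref{lem:srdv_not_ob_f} on the fully synchronous execution of Lemma~\ref{lem:rigidA1}. Under $\FSY$ the schedule is forced, so the only freedom is in the frames and the movement timings. By Lemma~\ref{lem:single-rule}(a) the rule is a single pair $(\lambda,\theta)$. We use coordinates $A=0$, $a_0=d$, $b_0=2d$, $B=3d$, give normalized frames at every $\LK$ with distinct positions, and let both robots move simultaneously with equal fractional progress.

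\emph{Case $\lambda=0$.} No robot ever moves. Then $d$ is constant on $\Sab$, so $\Phi_{\mathrm{OSPO}}$ fails by the monotonicity criterion stated after Definition~\ref{def:osp}. The robots never meet, so the rendezvous condition of $\SRDV$ fails; equivalently, Lemma~\ref{lem:p1_no_null} is contradicted. This single infinite execution witnesses both liveness violations.

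\emph{Case $\lambda\neq0$.} The argument of Lemma~\ref{lem:osp_not_ob_f} already refutes $\Phi_{\mathrm{OSPO}}$ on this very execution, whatever $(\lambda,\theta)$ is. Either the robots collide, which is a finite safety violation, or $d$ restricted to $\Sab$ is monotone by Lemma~\ref{lem:rigidA1}, which is a liveness violation of the same infinite execution. It remains to violate $\SRDV$ on the same run.

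\emph{Subcase $(\lambda,\theta)\neq(2,0)$.} The first nonzero $\M$s are simultaneous and observe $\iota$. By Lemma~\ref{lem:p1_first_move}, $a$ does not land at $B$, so no continuation can satisfy the exchange phase. This is decided at a finite time.

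\emph{Subcase $(\lambda,\theta)=(2,0)$.} After round~1 the configuration is $(B,A)=(3d,0)$. In round~2 the same rule sends $a$ to $-3d$ and $b$ to $6d$, as in Lemma~\ref{lem:srdv_not_ob_f}. Both points lie outside $[A,B]$, which is a safety violation. In this subcase OSPO also fails outright: the robots pass through each other in round~1, since with equal fractional progress $a$ moves $d\to3d$ and $b$ moves $2d\to0$, so they meet at time fraction $1/4$, giving $d=0$. Even without this, the general OSPO argument applies.

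By Lemma~\ref{lem:concat}(1)(3), the finite safety violations persist on the infinite fully synchronous execution that witnesses any OSPO liveness failure. Lemma~\ref{lem:boundary-common-failure} then yields the claim.

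The main point of care is ensuring that the OSPO refutation and the $\SRDV$ refutation use literally the same execution. This works because $\FSY$ leaves no scheduling choice, and both prior arguments use normalized frames with synchronous equal-progress moves. The one check needed is that Lemma~\ref{lem:srdv_not_ob_f}'s round-2 reasoning does not depend on the choice of frame, which holds because the $\OB$ rule is a single normalized pair.
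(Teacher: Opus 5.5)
Your proposal is correct and follows essentially the same route as the paper: it runs the fully synchronous execution of Lemma~\ref{lem:rigidA1} with normalized frames and the single rule $(\lambda,\theta)$, refutes $\Phi_{\mathrm{OSPO}}$ by a collision or by monotonicity of $d$ on $\Sab$, and refutes $\Phi_{\SRDV}$ through the three cases $\lambda=0$, $(\lambda,\theta)\neq(2,0)$ and $(\lambda,\theta)=(2,0)$. Your observation that the robots cross at progress fraction $1/4$ in the $(2,0)$ case is correct but not needed, since the paper's general collision clause already covers it.
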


\begin{proof}
Let $\mathcal A$ be any deterministic algorithm of $\OB$ under $\FSY$.
Use coordinates $A=0$, $a_0=d$, $b_0=2d$, $B=3d$ on the initial line
and take the infinite fully synchronous execution of
Lemma~\ref{lem:rigidA1} from this instance, with a normalized frame at
every $\LK$. The visible state of an $\OB$ robot is unique, so the rule
is a single pair $(\lambda,\theta)$ (Lemma~\ref{lem:single-rule}(a)),
and by Lemma~\ref{lem:ratios}(ii) the distance at the end of round~$k$
is $d(k)=c_2^{\,k}d(0)$.
 
\emph{Violation of $\Phi_{\mathrm{OSPO}}$.}
If the two robots ever occupy the same point, the collision-freedom
condition fails at a finite time; this is a safety violation. It covers
$c_2=0$, where they coincide at the end of round~$1$ and the
coincidence is maintained by Lemma~\ref{lem:keep-coincident}, and also
the case in which they meet while travelling inside a round.
 
Otherwise $d(t)>0$ throughout. By Lemma~\ref{lem:rigidA1} every joint
stop of the execution carries one of the round-boundary distances
$d(k)$, so the restriction of $d$ to $\Sab$ is monotone. The turning
times required by $\mathrm{OSPO}$ are joint stops, so by the criterion
stated after Definition~\ref{def:osp} no strictly decreasing phase can
be followed by a strictly increasing one, and $\Phi_{\mathrm{OSPO}}$
fails. This is a liveness violation, and the fully synchronous
execution is itself the fair infinite continuation along which it
persists.
 
\emph{Violation of $\Phi_{\SRDV}$.}
If $\lambda=0$ no nonzero $\M$ ever occurs, so the first nonzero $\M$
demanded by Lemma~\ref{lem:p1_no_null} does not exist. This is a
progress violation, witnessed by the same infinite fully synchronous
execution. If $\lambda\neq0$, both robots perform a nonzero $\M$ in
round~$1$, so an admissible exchange configuration must be
$(B,A)=(3d,0)$ with move counts $(1,1)$. When
$(\lambda,\theta)\neq(2,0)$ the landing of round~$1$ is not this
configuration, so the requirement of Lemma~\ref{lem:p1_first_move} is
unmet. When $(\lambda,\theta)=(2,0)$ round~$1$ does land at $(3d,0)$,
but round~$2$ applies the same rule and sends $a$ to $3d+2(0-3d)=-3d$
and $b$ to $0+2(3d-0)=6d$, both outside $[A,B]=[0,3d]$
(Lemma~\ref{lem:srdv_not_ob_f}). These are safety violations witnessed
at a finite time.
 
In each case a single execution violates both predicates, so
Lemma~\ref{lem:concat}(1)(3) is satisfied and
Lemma~\ref{lem:boundary-common-failure} gives the claim.
\end{proof}

\subsubsection{\texorpdfstring{OSPO $\vee$ VSS (OV)}{OSPO or VSS}}
\label{ssub:f3-2}

\begin{lemma}
\label{lem:A2-9-FS-S}
$\mathrm{OV}\notin\Sol(\FS^{S})$.
\end{lemma}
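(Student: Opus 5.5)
We will adapt the proof of Lemma~\ref{lem:A2-8-FS-S} essentially verbatim, replacing the rendezvous and interval requirements of $\SRDV$ by the endpoint requirement $(\mathrm{P1}\text{-}b)$ of $\VSS$.

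\emph{Setup.} Fix a deterministic $\FS$ algorithm under $\SSY$ and place coordinates $A=0$, $a_0=d$, $b_0=2d$, $B=3d$ on the initial line. Use normalized frames at every $\LK$ at which the robots are distinct, so that each robot's next color and displacement depend only on its own color. Then activate $\{a\},\{b\},\{a\},\dots$. By the $\pi$-symmetry of $\iota$ and Lemma~\ref{lem:silent}, both robots follow the orbit $c_0,g(c_0),\dots$ until the first nonzero $\M$.

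\emph{Case with no nonzero $\M$.} Continue forever. Lemma~\ref{lem:p1_no_null} is contradicted, so $\Phi_{\VSS}$ fails. The distance $d$ is constant, so $\Phi_{\mathrm{OSPO}}$ has no decreasing phase. This one execution witnesses both liveness failures.

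\emph{Case with a nonzero $\M$.} Let $k$ be least with $\lambda_{g^k(c_0)}\neq0$. Then $a$ performs the first nonzero $\M$ at its $(k+1)$-st activation.
\begin{itemize}
\item If $a$ does not land at $B$, Lemma~\ref{lem:p1_first_move} makes $\VSS$ unsatisfiable. More concretely, either the landing point lies outside $P$, or the exchange configuration can no longer be formed.
\item If $a$ lands at $B$, the path from $d$ to $3d$ crosses the stationary robot at $2d$. This is a collision, which violates $\mathrm{OSPO}$. Now activate $b$: by Lemma~\ref{lem:p1_replay} it lands at $4d\notin P$, a safety violation of $\VSS$. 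Both predicates have now failed, so append any fair tail.
\end{itemize}
In both sub-cases the $\VSS$ failure is a safety violation, so by Lemma~\ref{lem:concat} it is preserved by every continuation.

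\emph{Completing the $\mathrm{OSPO}$ violation.} This remains only in the sub-case where $a$ did not land at $B$ and no collision has occurred. Activate $b$ once. If it collides, we are done. Otherwise both robots carry $\gamma=g^{k+1}(c_0)$, neither has a pending operation, and $m_a=m_b=1$. Set $J_\gamma=\{i:\lambda_{g^i(\gamma)}\ne0\}$.
\begin{itemize}
\item If $|J_\gamma|\ge2$, activate $a$ alone until its second further nonzero $\M$. This gives $m_a=3$ and $m_b=1$, violating $|m_a-m_b|\le1$ (a collision also suffices).
\item If $|J_\gamma|\le1$, append a fully synchronous tail. Either a collision occurs, or each robot makes at most one further nonzero move, contradicting the unboundedness of the move counts.
\end{itemize}
Either way one fair execution violates both predicates, and Lemma~\ref{lem:boundary-common-failure} finishes the proof.

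\emph{Main obstacle.} The delicate point is that $\VSS$ permits straight-line passage through the peer. Its violation therefore has to come from endpoints alone, and this is exactly what Lemmas~\ref{lem:p1_first_move} and~\ref{lem:p1_replay} provide. The $\mathrm{OSPO}$ continuation is unchanged from Lemma~\ref{lem:A2-8-FS-S}.
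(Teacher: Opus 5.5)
Your proposal is correct and follows the paper's proof. The paper likewise reuses the execution of Lemma~\ref{lem:A2-8-FS-S}, whose $\mathrm{OSPO}$ failures never refer to the exchange task, and checks $\VSS$ by the same three-way split on the first nonzero $\M$ of $a$: no nonzero move (Lemma~\ref{lem:p1_no_null}), a landing other than $B$ (Lemma~\ref{lem:p1_first_move}), or a landing at $B$ followed by $b$ replaying the move to $4d\notin P$ (Lemma~\ref{lem:p1_replay}); the only difference is that you rebuild that execution rather than citing it.
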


\begin{proof}
Let $\mathcal A$ be any deterministic algorithm of $\FS$ under $\SSY$.
Use the execution of Lemma~\ref{lem:A2-8-FS-S}. It violates
$\Phi_{\mathrm{OSPO}}$: in that proof the violation comes from a
collision up to $T$, from the move-count requirement $(\ast)$ in the
second part, or, if no nonzero $\M$ occurs, from the constancy of $d$,
and none of these refers to the exchange task. We show that the same
execution also violates $\Phi_{\VSS}$, where $P=\{0,d,2d,3d\}$.

If no nonzero $\M$ occurs, $(\mathrm{P1})$ fails by
Lemma~\ref{lem:p1_no_null}; the execution of
Lemma~\ref{lem:A2-8-FS-S} itself is the fair continuation witnessing
this liveness violation.

Otherwise, as in that lemma, the two robots traverse the same states
until the first nonzero $\M$ and $a$ is activated first, so that $\M$
is performed by $a$ and its $\LK$ observes $\iota$.
\begin{itemize}
  \item If it lands outside $P$, $(\mathrm{P1\text{-}b})$ fails.
  \item If it lands in $P\setminus\{B\}$, $(\mathrm{P1\text{-}c})$ is
  unsatisfiable (Lemma~\ref{lem:p1_first_move}).
  \item If it lands at $B$, then $b$ lands at $4d\notin P$
  (Lemma~\ref{lem:p1_replay}), and $(\mathrm{P1\text{-}b})$ fails.
\end{itemize}
In each case this is a safety violation at a finite time, whether or
not a collision occurs, since $\VSS$ permits passing.

In either case a single execution violates both predicates;
Lemma~\ref{lem:concat}(1)(3) is satisfied, and
Lemma~\ref{lem:boundary-common-failure} gives the claim.
\end{proof}

\begin{lemma}
\label{lem:A2-9-OB-F}
$\mathrm{OV}\notin\Sol(\OB^{F})$.
\end{lemma}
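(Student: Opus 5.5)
The plan is to reuse the single fully synchronous execution from the proof of Lemma~\ref{lem:A2-8-OB-F} and check that it also violates $\Phi_{\VSS}$. Let $\mathcal A$ be any deterministic $\OB$ algorithm under $\FSY$. Place the initial instance with coordinates $A=0$, $a_0=d$, $b_0=2d$, $B=3d$ and $P=\{0,d,2d,3d\}$. Use the infinite fully synchronous execution of Lemma~\ref{lem:rigidA1} with a normalized frame at every $\LK$, so that by Lemma~\ref{lem:single-rule}(a) both robots apply one and the same pair $(\lambda,\theta)$ in every round.

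\emph{The $\mathrm{OSPO}$ component.} This is copied verbatim from Lemma~\ref{lem:A2-8-OB-F}. Either the robots meet at some time, which is a safety violation of collision freedom; this covers $c_2=0$ together with Lemma~\ref{lem:keep-coincident}. Or $d>0$ throughout, and then by Lemma~\ref{lem:rigidA1} the restriction of $d$ to $\Sab$ is monotone. The criterion stated after Definition~\ref{def:osp} then gives a liveness violation, and the execution itself is the fair infinite witness.

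\emph{The $\VSS$ component.} Split on $(\lambda,\theta)$.
\begin{itemize}
\item If $\lambda=0$, no nonzero $\M$ ever occurs. This contradicts Lemma~\ref{lem:p1_no_null}, a liveness violation witnessed by the same execution.
\item If $\lambda\neq0$, both robots perform a nonzero $\M$ in round~$1$ from the observed initial configuration. By Lemma~\ref{lem:p1_first_move}, each earliest-completing move must have normalized rule $(2,0)$. So if $(\lambda,\theta)\neq(2,0)$, the condition $(\mathrm{P1})$ is unsatisfiable after round~$1$, a finite violation.
\item If $(\lambda,\theta)=(2,0)$, round~$1$ reaches $(B,A)=(3d,0)$. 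Round~$2$ then sends $a$ to $-3d$ and $b$ to $6d$, exactly as in Lemma~\ref{lem:vss_not_ob_f}. Both points lie outside $P$, violating the endpoint requirement $(\mathrm{P1}\text{-}b)$ at a finite time.
\end{itemize}
The one subtlety is whether a collision during round~$1$ matters for $\VSS$. It does not, because $\VSS$ explicitly permits passing through the peer's position. So the $\VSS$ analysis uses only landing points and is independent of the $\mathrm{OSPO}$ case split.

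\emph{Combining.} In every combination of cases, at least one violation is a safety violation at a finite time, or both violations are liveness violations witnessed by the same fully synchronous execution. Lemma~\ref{lem:concat}(1)(3) is therefore satisfied, and Lemma~\ref{lem:boundary-common-failure} gives $\mathrm{OV}\notin\Sol(\OB^{F})$.

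The main point to check is that in the $\lambda=0$ case the same infinite execution witnesses both liveness failures. This holds because the configuration is constant, so $d$ is constant on $\Sab$ and no nonzero $\M$ occurs.
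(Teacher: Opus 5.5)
Your proposal is correct and follows essentially the same route as the paper: one fully synchronous execution from Lemma~\ref{lem:rigidA1} with normalized frames, and the same collision-or-monotone argument for $\mathrm{OSPO}$. It also uses the same three-way split $\lambda=0$ / $(\lambda,\theta)\neq(2,0)$ / $(\lambda,\theta)=(2,0)$ for $\VSS$, with the last two violations finite, so the single execution violates both predicates.
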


\begin{proof}
Let $\mathcal A$ be any deterministic algorithm of $\OB$ under $\FSY$.
Take the fully synchronous execution of Lemma~\ref{lem:rigidA1} with
normalized frames. By Lemma~\ref{lem:single-rule}(a), both robots apply
one and the same pair $(\lambda,\theta)$ in every round. Use $A=0$,
$a_0=d$, $b_0=2d$, $B=3d$ and $P=\{0,d,2d,3d\}$.

\emph{OSPO.} If a collision occurs, $\Phi_{\mathrm{OSPO}}$ fails.
Otherwise the joint stops lie between rounds when $\lambda>0$, and the
distance there is $c_2^{\,k}d(0)$ (Lemma~\ref{lem:ratios}(ii)). This is
monotone, so no decreasing phase can be followed by an increasing one.
The execution itself witnesses this liveness violation.

\emph{VSS.} We distinguish three cases.
\begin{itemize}
  \item If $\lambda=0$, $(\mathrm{P1})$ fails by
  Lemma~\ref{lem:p1_no_null}. This is a liveness violation of the same
  execution.
  \item If $\lambda>0$ and $(\lambda,\theta)\neq(2,0)$, the round-$1$
  $\M$ of $a$ completes earliest and does not land at $B$. So
  $\Phi_{\VSS}$ is unsatisfiable (Lemma~\ref{lem:p1_first_move}).
  \item If $(\lambda,\theta)=(2,0)$, round $1$ yields $(3d,0)$ and
  round $2$ yields $(-3d,6d)$. Neither point lies in $P$, so
  $(\mathrm{P1\text{-}b})$ fails.
\end{itemize}
The last two violations are safety violations.

Lemma~\ref{lem:concat}(1)(3) is satisfied, and
Lemma~\ref{lem:boundary-common-failure} gives the claim.
\end{proof}

\subsubsection{\texorpdfstring{S-RDV $\vee$ ZCC (SZ)}{S-RDV or ZCC}}
\label{ssub:f3-3}

\begin{lemma}
\label{lem:A2-15-FC-A}
$\mathrm{SZ}\notin\Sol(\FC^{A})$.
\end{lemma}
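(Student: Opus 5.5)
We will exhibit, for an arbitrary deterministic $\FC$ algorithm $\mathcal A$ under $\ASY$, a single admissible fair execution violating both $\Phi_{\SRDV}$ and $\Phi_{\mathrm{ZCC}}$. The template is that of Lemmas~\ref{lem:A2-3-FC-A} and~\ref{lem:A2-8-FC-A}. Fix coordinates $A=0$, $a_0=d$, $b_0=2d$, $B=3d$ on the initial line. Run fully synchronous cycles with identical normalized observations (Lemma~\ref{lem:rigidA2}), so that the colors stay equal and the configuration stays at $\iota$ while every output is zero.

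\textbf{Case 1: no nonzero output ever occurs.} Continue forever. The robots remain distinct and fixed, so the convergence condition of $\mathrm{ZCC}$ fails. The rendezvous condition of $\SRDV$ also fails, as does its exchange phase by Lemma~\ref{lem:p1_no_null}. This single execution witnesses both liveness violations.

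Otherwise let $\gamma$ be the common color at the first round with a nonzero output, and let $(\lambda,\theta)$, with $\lambda>0$, be the corresponding normalized rule.

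\textbf{Case 2: $(\lambda,\theta)\neq(2,0)$.} Replace that round by a solo activation interval of $a$ (Lemma~\ref{lem:solo}). Its first nonzero $\M$ observes $\iota$ and does not land at $B$, so by Lemma~\ref{lem:p1_first_move} the exchange phase of $\SRDV$ is unsatisfiable. This is a safety violation in the prefix. For $\mathrm{ZCC}$, we reuse verbatim the case analysis of Lemma~\ref{lem:A2-4-FC-S}, which applies since $b$ is frozen and the peer color is constantly $\gamma$ (Lemma~\ref{lem:single-rule}(b)):
\begin{itemize}
\item If $c_1=0$, apply the coincidence continuation of Lemma~\ref{lem:zcc_not_fc_s}.
\item If $\psi\not\equiv0$, or if $\psi\equiv0$ with $c_1>1$, enough solo activations violate collision freedom.
\item If $\sigma_1\in(0,1)$, stop after one activation. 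The robots are then in $[a_0,b_0]$ in their original order with no pending operation, and the continuation property proved in Lemma~\ref{lem:zcc_not_fc_s} yields a fair tail violating $\Phi_{\mathrm{ZCC}}$.
\end{itemize}
That continuation property is stated for an arbitrary reached color pair, so applying it after our prefix is legitimate. Every $\SSY$-style step used is also admissible under $\ASY$.

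\textbf{Case 3: $(\lambda,\theta)=(2,0)$.} Let both robots Look and then Compute, obtaining a common color $\gamma'$. Delay the $\M$ of $b$ until after $a$ lands (Corollary~\ref{cor:stale}). Robot $a$ moves from $d$ to $3d$ while $b$ rests at $2d$, so at the landing time $\|a(t)-a_0\|=2d>d=\|b(t)-a_0\|$. This violates collision freedom of $\mathrm{ZCC}$, a safety violation at a finite time. Then let $b$ complete its stored $\M$ to $A$, reaching $(B,A)$ with common color $\gamma'$ and no pending operation. As in Lemma~\ref{lem:A2-3-FC-A}, append the stable-$\mathrm{RDV}$ counterexecution for $\mathcal A_{\gamma'}$ supplied by the argument of Lemma~\ref{lem:srdv_not_fc_a}, transported to $(B,A)$ by an orientation-preserving similarity. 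This tail violates the rendezvous condition of $\SRDV$, and the earlier $\mathrm{ZCC}$ violation persists.

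By Lemma~\ref{lem:concat}, every case yields one fair execution violating both predicates, and Lemma~\ref{lem:boundary-common-failure} concludes. The main obstacle is Case 2 in the contraction subcase. There we must check that the hypotheses of the $\mathrm{ZCC}$ continuation are met from the reached state, namely the order in the original segment and the absence of pending operations, and that the resulting tail remains admissible under $\ASY$ rather than only under $\SSY$.
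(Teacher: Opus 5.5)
Your proposal is correct and follows the paper's proof essentially step for step: the same case split into no nonzero output, $(\lambda,\theta)\neq(2,0)$ handled by a solo interval of $a$ with the ZCC subcases of Lemma~\ref{lem:A2-4-FC-S}, and $(\lambda,\theta)=(2,0)$ handled by the delayed Move of $b$ followed by the transported stable-RDV counterexecution from $(B,A)$. The only difference is cosmetic: in the no-move case you cite the failure of ZCC's convergence condition, whereas the paper cites the failure of the zig-zag condition, and both failures hold there.
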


\begin{proof}
Let $\mathcal A$ be any deterministic algorithm of $\FC$ under $\ASY$.
Use coordinates $A=0$, $a_0=d$, $b_0=2d$, $B=3d$ on the initial line.
Run fully synchronous cycles with identical normalized observations
(Lemma~\ref{lem:rigidA2}). While every Move is zero, the positions stay
fixed and the colors stay equal.
 
\emph{No nonzero output.} Continue forever. The robots stay at $\iota$,
so $\Phi_{\SRDV}$ fails, while $f_r\equiv0$ for both robots, so the
zig-zag condition admits no strictly increasing pair and
$\Phi_{\mathrm{ZCC}}$ fails. This execution witnesses both liveness
violations.
 
\emph{A nonzero output occurs.} Let $\gamma$ be the common color at the
start of the first round with a nonzero output, and let
$(\lambda,\theta)$, with $\lambda>0$, be the normalized displacement
for the initial configuration with peer color $\gamma$. Since
$\mathcal A$ is deterministic, $(\lambda,\theta)$ is known before that
round is scheduled.
 
\emph{Case $(\lambda,\theta)=(2,0)$.} Let both robots Look, and then
both Compute; their colors become a common $\gamma'$. Delay the Move of
$b$ so that it begins strictly after the Move of $a$ ends
(Corollary~\ref{cor:stale}). The robot $a$ moves from $d$ to $3d$ while
$b$ rests at $2d$, so at the landing
\[
  \|a-a_0\|=2d>d=\|b-a_0\| ,
\]
and the collision-freedom condition of $\mathrm{ZCC}$ fails at a finite
time, a safety violation.
 
Then let $b$ complete its Move to $A$. The robots are at $(B,A)$ with
common color $\gamma'$, and neither has a pending operation. As in
Lemma~\ref{lem:A2-8-FC-A}, the impossibility of stable RDV in $\FC^A$
used in Lemma~\ref{lem:srdv_not_fc_a}, applied to $\mathcal A$ with
initial color $\gamma'$, supplies a counterexecution that an
orientation-preserving similarity transports to $(B,A)$. Appending it
violates the terminal rendezvous condition of $\SRDV$. The safety
violation of $\mathrm{ZCC}$ precedes that liveness violation.
 
\emph{Case $(\lambda,\theta)\neq(2,0)$.} Replace the round by a solo
activation interval of $a$ (Lemma~\ref{lem:solo}), during which $b$ is
never activated and keeps the color $\gamma$; by
Lemma~\ref{lem:single-rule}(b) the robot $a$ applies $(\lambda,\theta)$
at every activation and $b$ is at rest throughout. The interval may be
made arbitrarily long, the activations of $b$ being placed far enough
in the future to keep the execution fair. Every activation of $a$ in
it is a complete cycle, and $b$ has no pending operation.
 
The first of these Move operations observes $\iota$ and does not land
at $B$, so the exchange phase of $\SRDV$ is unsatisfiable
(Lemma~\ref{lem:p1_first_move}), a safety violation at a finite time.
 
For $\Phi_{\mathrm{ZCC}}$, put
$\sigma_1:=1-\lambda e^{\mathrm{i}\theta}$ and $c_1:=|\sigma_1|$.
 
If $c_1=0$, the robot $a$ reaches the position of $b$ at the first
activation, and neither robot has a pending operation. The robots
coincide; use the coincidence continuation described in the proof of
Lemma~\ref{lem:zcc_not_fc_s}; it activates one robot at a time with
complete cycles, which $\ASY$ admits. This either keeps both positions
fixed forever, so that the zig-zag condition fails, or produces a
finite violation of the collision-freedom condition.
 
Assume $c_1>0$, and put $\psi:=\arg\sigma_1$
(Lemma~\ref{lem:ratios}(iii)). Writing $w_0$ for the vector from $a$
to $b$ at the beginning of the interval, the vector after $j$
activations is $\sigma_1^{\,j}w_0\neq0$, so every $\LK$ of $a$ in the
interval observes distinct positions, and
\[
  f_a(j)=\|a(j)-a(0)\|=d\,\bigl|1-c_1^{\,j}e^{\mathrm{i}j\psi}\bigr| .
\]
Since $b$ stands still, $\|b-a_0\|=d$, and the collision-freedom
condition requires $f_a(j)\le d$, that is $c_1^{\,j}\le2\cos(j\psi)$.
Note that $\sigma_1=1$ would force $\lambda=0$, so $c_1\neq1$ whenever
$\psi\equiv0$.
\begin{itemize}
\item If $\psi\not\equiv0\pmod{2\pi}$, Lemma~\ref{lem:ratios}(iii)
  provides $k\ge1$ with $\cos(k\psi)\le-\tfrac12<0$. Making the
  interval long enough to reach the $k$-th activation, the
  collision-freedom condition fails there, a safety violation.
\item If $\psi\equiv0$ and $c_1>1$, that is $\theta=\pi$, then
  $f_a(j)=(c_1^{\,j}-1)d$ exceeds $d$ as soon as $c_1^{\,j}>2$, so the
  collision-freedom condition again fails after finitely many
  activations.
\item In the remaining contraction case $\psi\equiv0$ and $0<c_1<1$,
  that is $\sigma_1\in(0,1)$, end the solo interval after its first
  activation. Then $a$ lies in the open segment $(a_0,b_0)$ and $b$ is
  still at $b_0$, so the robots still lie in the original initial
  segment in their original order, and neither has a pending
  operation. Apply the continuation property proved in
  Lemma~\ref{lem:zcc_not_fc_s}, keeping the original initial positions
  fixed; its schedule activates one robot at a time with complete
  cycles, which $\ASY$ admits. It supplies a fair, time-divergent
  continuation violating $\Phi_{\mathrm{ZCC}}$, while preserving the
  safety violation already obtained in the prefix.
\end{itemize}
 
In every case a single execution violates both predicates, the safety
violations preceding the liveness ones, so
Lemma~\ref{lem:concat}(1)(3) is satisfied and
Lemma~\ref{lem:boundary-common-failure} gives the claim.
\end{proof}

\begin{lemma}
\label{lem:A2-15-FS-AM}
$\mathrm{SZ}\notin\Sol(\FS^{A_M})$.
\end{lemma}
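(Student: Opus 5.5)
We must show $\mathrm{SZ}=\SRDV\vee\mathrm{ZCC}\notin\Sol(\FS^{A_M})$. The plan is to combine the alternating-singleton prefix of Lemma~\ref{lem:A2-4-FS-AM} with the replay obstruction of Lemma~\ref{lem:p1_replay}, and then hand over to the ZCC continuation of Lemma~\ref{lem:zcc_not_fs_am}.

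We fix an arbitrary deterministic $\FS$ algorithm under $A_M$ and use coordinates $A=0$, $a_0=d$, $b_0=2d$, $B=3d$. We run alternating singleton activations $\{a\},\{b\},\dots$, each cycle complete before the next begins, which $A_M$ admits through Lemma~\ref{lem:solo}. We give normalized frames at every $\LK$ with distinct positions. By the $\pi$-symmetry of $\iota$ and Lemma~\ref{lem:silent}, both robots follow the orbit $c_0,g(c_0),\dots$ until the first nonzero $\M$.

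\emph{No nonzero $\M$.} If none ever occurs, the robots stay at $\iota$ forever. Then $\Phi_{\SRDV}$ fails by Lemma~\ref{lem:p1_no_null}, and convergence of ZCC fails. This single execution witnesses both liveness violations.

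\emph{A nonzero $\M$ occurs.} The first one is performed by $a$, and its $\LK$ observes $\iota$. If it does not land at $B$, the exchange phase is unsatisfiable by Lemma~\ref{lem:p1_first_move}. If it lands at $B$, we activate $b$ next; by Lemma~\ref{lem:p1_replay} it lands at $4d\notin[A,B]$. Either way $\Phi_{\SRDV}$ has suffered a finite safety violation. We end the prefix at the completion of that cycle, which is a joint stop with no pending operation.

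Every $\LK$ in this prefix observed distinct positions, except possibly none, so by Lemma~\ref{lem:norm}(1) a robot activated $k$ times carries $g^k(c_0)$. Hence the two colors lie on a common normalized orbit and share its eventual cycle. If the robots coincide, we apply the coincidence continuation of Lemma~\ref{lem:zcc_not_fc_s}. Otherwise the continuation statement proved in Lemma~\ref{lem:zcc_not_fs_am} applies, with $a_0,b_0$ kept fixed as the reference positions of the predicate. Either way we obtain a fair, time-divergent continuation violating $\Phi_{\mathrm{ZCC}}$. The earlier safety violation of $\SRDV$ persists by Lemma~\ref{lem:concat}(1)(3), and Lemma~\ref{lem:boundary-common-failure} concludes.

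The main obstacle is checking that the ZCC continuation lemma really applies from the reached state rather than from $\iota$. The positions may now lie outside $[a_0,b_0]$, for instance at $(B,4d)$, and the robots may have been at a collision point mid-move. I expect this to be fine because Lemma~\ref{lem:zcc_not_fs_am} is stated for arbitrary distinct positions with equal eventual color cycles and no pending operations. If a ZCC safety violation has already occurred, any fair tail suffices. I would also verify that the trajectory of $b$ to $4d$ does not invalidate the common-orbit hypothesis. It does not, since $b$'s $\LK$ saw distinct positions.
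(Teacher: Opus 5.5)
Your proposal is correct and follows the paper's proof essentially step for step: you use the alternating singleton prefix under normalized frames, the first-move/replay obstruction (Lemmas~\ref{lem:p1_first_move} and~\ref{lem:p1_replay}) for S-RDV, and you end the prefix at a joint stop where both colors lie on the common orbit of $c_0$ under $g$, followed by the coincidence continuation or the continuation statement of Lemma~\ref{lem:zcc_not_fs_am}. The one point you assert without justification is that every Look in the prefix sees distinct positions; the paper justifies it by three facts: the configuration is $\iota$ before the first nonzero Move, in the replay case $b$ sees $a$ at $B\neq 2d$, and a landing on the peer can only be the last Move of the prefix.
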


\begin{proof}
Let $\mathcal A$ be any deterministic algorithm of $\FS$ under $A_M$.
Synchronous activations of one robot at a time are admissible under
$A_M$ (Lemma~\ref{lem:solo}), so $a$ and $b$ may be activated
alternately, each cycle being completed before the next one begins,
and, under normalized frames, the argument of
Lemma~\ref{lem:p1_replay} applies unchanged.
 
If no nonzero $\M$ ever occurs, which in $\FS$ is equivalent to
$\lambda_{g^j(c_0)}=0$ for every $j$ since the color sequence is
eventually periodic (Lemma~\ref{lem:pumpB2}), the two robots stay at
$\iota$ at distinct positions, so $\Phi_{\SRDV}$ fails
and the convergence condition of $\mathrm{ZCC}$ fails as well.
 
Otherwise the first nonzero $\M$ is performed by $a$. Either it does
not land at $B=3d$, so that the first-phase clauses become
unsatisfiable by the argument of Lemma~\ref{lem:p1_first_move}, or it
does and Lemma~\ref{lem:p1_replay} makes the robot $b$, activated next,
land at $4d\notin[A,B]$; in either case
$\Phi_{\SRDV}$ fails, a safety violation. Let the prefix end with the
cycle that establishes this violation, that of $a$ in the first case
and that of $b$ in the second. Every $\LK$ in this prefix observes
distinct positions: before the first nonzero $\M$ the configuration is
$\iota$, in the second case $b$ observes $a$ at $B\neq2d$, and a
landing on the peer can only be the last $\M$ of the prefix. Since the
visible state of an $\FS$ robot is its own color,
Lemma~\ref{lem:norm}(1) shows that each robot that has been activated
$k$ times carries $g^{k}(c_0)$.
 
End the prefix at this joint stop, with no pending operation.
If the robots coincide, use the coincidence continuation
from Lemma~\ref{lem:zcc_not_fc_s}. Otherwise their colors lie on the
same normalized color orbit, so the continuation statement
proved in Lemma~\ref{lem:zcc_not_fs_am} applies.
In either case it supplies a fair, time-divergent continuation
violating $\Phi_{\mathrm{ZCC}}$, with the original initial
positions kept fixed. The safety violation established
in the prefix persists (Lemma~\ref{lem:concat}(1)).
 
A single fair infinite execution therefore violates both predicates,
and Lemma~\ref{lem:boundary-common-failure} gives the claim.
\end{proof}

\begin{lemma}
\label{lem:A2-15-OB-F}
$\mathrm{SZ}\notin\Sol(\OB^{F})$.
\end{lemma}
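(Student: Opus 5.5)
We plan to follow the template of Lemmas~\ref{lem:A2-8-OB-F} and~\ref{lem:A2-9-OB-F}: a single fully synchronous execution of Lemma~\ref{lem:rigidA1} from the instance with coordinates $A=0$, $a_0=d$, $b_0=2d$, $B=3d$, using a normalized frame at every $\LK$. By Lemma~\ref{lem:single-rule}(a) both robots apply one pair $(\lambda,\theta)$ in every round. By Lemma~\ref{lem:ratios}(ii) the midpoint $m=3d/2$ is fixed and the difference vector is multiplied by $\sigma_2=1-2\lambda e^{\mathrm i\theta}$ in each round. We argue the $\SRDV$ failure and the $\mathrm{ZCC}$ failure on this same execution, case by case on $(\lambda,\theta)$.

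For $\Phi_{\SRDV}$ we reuse the argument of Lemma~\ref{lem:A2-8-OB-F} verbatim. If $\lambda=0$, no nonzero $\M$ occurs, and Lemma~\ref{lem:p1_no_null} gives a liveness failure witnessed by this very execution. If $\lambda>0$ and $(\lambda,\theta)\neq(2,0)$, then the simultaneous round-$1$ Moves do not land at $(B,A)$, and Lemma~\ref{lem:p1_first_move} makes the exchange phase unsatisfiable; this is a safety violation. If $(\lambda,\theta)=(2,0)$, round~$2$ sends the robots to $-3d$ and $6d$, which lie outside $[A,B]$; this is again a safety violation.

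For $\Phi_{\mathrm{ZCC}}$ we invoke the proof of Lemma~\ref{lem:zcc_not_ob_f}, which is built on exactly the execution of Lemma~\ref{lem:rigidA1}. If $\lambda=0$, then $f_r\equiv0$ and the zig-zag condition fails. If $c_2=0$, the robots coincide and the coincidence is maintained by Lemma~\ref{lem:keep-coincident}. At coincidence we must keep the adversary's choice of a fixed common direction $w$ with $\langle w,v\rangle<0$; this choice does not affect $\SRDV$, because the exchange phase has already failed in round~$1$ since $(1/2,0)\neq(2,0)$. With that direction $f_a$ is eventually monotone. If $\varphi\not\equiv0$, collision freedom fails at some finite round $k$. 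If $\varphi\equiv0$, $f_a$ is monotone throughout.

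The only step that needs care is checking that the two analyses use the same execution. The $\mathrm{ZCC}$ construction prescribes additional adversarial choices only in the coincidence branch ($c_2=0$), and the $\SRDV$ argument uses only round-$1$ and round-$2$ landings, which are fixed by $(\lambda,\theta)$. So the two are compatible. In the liveness cases ($\lambda=0$, or $\varphi\equiv0$ together with the case $(2,0)$, where $\sigma_2=-3$ gives $\varphi=\pi\not\equiv0$ anyway) the infinite fully synchronous execution itself witnesses the violation, and every safety violation occurs at a finite round. Lemma~\ref{lem:concat}(1)(3) is then satisfied, and Lemma~\ref{lem:boundary-common-failure} gives the claim.
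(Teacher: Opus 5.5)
Your proposal is correct and follows the paper's proof: a single fully synchronous execution of Lemma~\ref{lem:rigidA1}, with $\Phi_{\mathrm{ZCC}}$ failing by Lemma~\ref{lem:zcc_not_ob_f} and $\Phi_{\SRDV}$ failing by the case split of Lemma~\ref{lem:A2-8-OB-F}. Your explicit check that the adversary's extra choice in the coincidence branch is harmless is a detail the paper leaves implicit: when $c_2=0$ the rule is $(1/2,0)\neq(2,0)$, so the exchange phase has already failed in round~$1$.
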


\begin{proof}
Let $\mathcal A$ be any deterministic algorithm of $\OB$ under $\FSY$
and take the infinite fully synchronous execution of
Lemma~\ref{lem:rigidA1} with a normalized frame at every $\LK$. The
visible state of an $\OB$ robot is unique, so the rule is a single pair
$(\lambda,\theta)$ (Lemma~\ref{lem:single-rule}(a)).

For this execution $\Phi_{\mathrm{ZCC}}$ fails by
Lemma~\ref{lem:zcc_not_ob_f}, which exhausts the degenerate cases
$\lambda=0$ and $c_2=0$, the violation of the collision-freedom
condition when $\varphi\not\equiv0$, and the violation of the zig-zag
condition when $\varphi\equiv0$.

For $\Phi_{\SRDV}$ the argument is that of
Lemma~\ref{lem:A2-8-OB-F}: if $\lambda=0$ no nonzero $\M$ occurs and
the first nonzero $\M$ demanded by Lemma~\ref{lem:p1_no_null} does not
exist; if $\lambda\neq0$ and $(\lambda,\theta)\neq(2,0)$, the nonzero
$\M$ of round~$1$ does not meet the requirement of
Lemma~\ref{lem:p1_first_move}; and if $(\lambda,\theta)=(2,0)$, then in
round~$2$ the robots land at $-3d$ and $6d$, leaving
$[A,B]=[0,3d]$ (Lemma~\ref{lem:srdv_not_ob_f}).

A single execution violates both predicates, so
Lemma~\ref{lem:boundary-common-failure} gives the claim.
\end{proof}


\subsubsection{\texorpdfstring{S-RDV $\vee$ VEC (SX)}{S-RDV or VEC}}
\label{ssub:f3-4}

\begin{lemma}
\label{lem:A2-16-FC-A}
$\mathrm{SX}\notin\Sol(\FC^{A})$.
\end{lemma}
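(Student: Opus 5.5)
The plan follows the template of Lemmas~\ref{lem:A2-8-FC-A} and~\ref{lem:A2-15-FC-A}, replacing the companion task by VEC. Fix an arbitrary deterministic algorithm $\mathcal A$ of $\FC$ under $\ASY$ and use coordinates $A=0$, $a_0=d$, $b_0=2d$, $B=3d$ on the initial line. Run fully synchronous cycles with identical normalized observations (Lemma~\ref{lem:rigidA2}), so the positions stay at $\iota$ and the colors stay equal while every Move is zero.

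\emph{No nonzero output ever occurs.} The robots remain at $\iota$ forever. Then $\Phi_{\SRDV}$ fails by Lemma~\ref{lem:p1_no_null}. The joint-stop sequence reduces to $C_0$ with $d_0\not\perp d_0$, so the attainment condition of VEC fails. This one fair execution witnesses both liveness violations.

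\emph{A nonzero output occurs.} Let $\gamma$ be the common color before the first nonzero round, and let $(\lambda,\theta)$, with $\lambda>0$, be the normalized rule on $\iota$ with peer color $\gamma$. This rule is known in advance by determinism.

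\emph{Case $(\lambda,\theta)\neq(2,0)$.} Replace the round by a solo interval of $a$ with $k\ge3$ activations (Lemma~\ref{lem:solo}). Since the peer color stays $\gamma$, $a$ applies the same rule each time (Lemma~\ref{lem:single-rule}(b)). The first Move observes $\iota$ and does not land at $B$, so $\SRDV$ is unsatisfiable by Lemma~\ref{lem:p1_first_move}. For VEC, the argument of Lemma~\ref{lem:A2-6-FC-S} applies verbatim:
\begin{itemize}
\item a landing on the peer gives $d_i=0$, which is not a permitted target;
\item a displacement other than the two admissible clockwise rotations breaks the transition condition;
\item otherwise $a$ reaches $d_i\perp d_0$ within two moves and then moves again, violating the stopping condition.
\end{itemize}
All of these are safety violations, so a fair tail preserves them.

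\emph{Case $(\lambda,\theta)=(2,0)$.} Let both robots Look and then both Compute, so that both take color $\gamma'$. Delay $b$'s Move until $a$ has landed at $3d$ (Corollary~\ref{cor:stale}). The resulting joint stop has $d_1=(2d-3d)=-d_0$, which is neither $d_0$, nor $R_{-90^\circ}d_0$, nor either scaled $45^\circ$ rotation. Hence the transition condition of VEC fails at $i=0$, which is a safety violation. Then let $b$ complete its Move to $A$, reaching $(B,A)$ with common color $\gamma'$ and no pending operation. As in Lemma~\ref{lem:A2-3-FC-A}, apply the impossibility of stable rendezvous in $\FC^A$ (Lemma~\ref{lem:srdv_not_fc_a}) to $\mathcal A_{\gamma'}$, transport the counterexecution to $(B,A)$ by an orientation-preserving similarity, and append it. This violates the terminal condition of $\SRDV$ after the VEC safety violation has already occurred.

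Lemma~\ref{lem:concat} then assembles one fair execution violating both predicates in every case, and Lemma~\ref{lem:boundary-common-failure} concludes.

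The main obstacle is the $(2,0)$ case. The bookkeeping must confirm that $(a,b)=(3d,2d)$ really is the joint-stop term following $C_0$, i.e.\ that $b$ is still at rest and no other term intervenes. It must also confirm that the transported rendezvous counterexecution is a legitimate continuation from the reached colored state.
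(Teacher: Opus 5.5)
Your proposal is correct and follows the paper's proof essentially step by step: the same zero-output synchronous prefix, the same split on whether $(\lambda,\theta)=(2,0)$, the delayed Move of $b$ giving the joint stop $(3d,2d)$ with $d_1=-d_0$ followed by the transported rendezvous counterexecution, and otherwise the same $k\ge3$ solo interval of $a$ that reuses the argument of Lemma~\ref{lem:A2-6-FC-S}. The bookkeeping you flag is settled exactly as the paper does it: $b$ has not begun its Move and $a$'s Move interval is open, so $a$'s landing time is a joint stop, and it realizes $C_1$ because no joint stop occurs while $a$ is in flight.
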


\begin{proof}
Let $\mathcal A$ be any deterministic algorithm of $\FC$ under $\ASY$.
Use coordinates $A=0$, $a_0=d$, $b_0=2d$, $B=3d$ on the initial line.
The initial colors being common, while both robots are activated
simultaneously their colors follow a common eventually periodic
sequence $c_0,g(c_0),\dots$ (Lemma~\ref{lem:rigidA2}) and the
configuration stays at $\iota$.

If $\lambda_{c_j}=0$ for every $j$, continue forever. This fair
execution contains no nonzero $\M$, so the two robots never occupy one
and the same point permanently and $\Phi_{\SRDV}$ fails, while the
sequence of joint-stop configurations reduces to $C_0$ and
$d_0\not\perp d_0$, so the attainment condition of $\mathrm{VEC}$ fails
as well. This execution witnesses both liveness violations.

Otherwise let $\gamma$ be the common color at the start of the first
round with a nonzero output, and let $(\lambda,\theta)$, with
$\lambda\neq0$, be the normalized displacement for the initial
configuration with peer color $\gamma$. Since $\mathcal A$ is
deterministic, $(\lambda,\theta)$ is known before that round is
scheduled.

\emph{Case $(\lambda,\theta)=(2,0)$.} Let both robots Look, and then
both Compute; their colors become a common $\gamma'$. Delay the Move of
$b$ so that it begins strictly after the Move of $a$ ends
(Corollary~\ref{cor:stale}). At the landing of $a$ neither robot is
inside a $\M$, although $b$ still has a pending one, so that instant is
a joint stop and realizes $C_1=(3d,2d)$, whence $d_1=-d_0$. This is
neither the clockwise rotation by $45^\circ$ nor the one by $90^\circ$,
so the transition condition of $\mathrm{VEC}$ fails at $i=0$, a safety
violation at a finite time.

Then let $b$ complete its Move to $A$. The robots are at $(B,A)$ with
common color $\gamma'$, and neither has a pending operation. As in
Lemma~\ref{lem:A2-3-FC-A}, the impossibility of stable RDV in $\FC^A$
used in Lemma~\ref{lem:srdv_not_fc_a}, applied to $\mathcal A$ with
initial color $\gamma'$, supplies a counterexecution that an
orientation-preserving similarity transports to $(B,A)$. Appending it
violates the terminal rendezvous condition of $\SRDV$. The safety
violation of $\mathrm{VEC}$ precedes that liveness violation.

\emph{Case $(\lambda,\theta)\neq(2,0)$.} Replace the round by an
interval in which $a$ alone is activated $k\ge3$ times in a row
(Lemma~\ref{lem:solo}), followed by an arbitrary admissible fair,
time-divergent tail. The visible state of $a$ is constantly $\gamma$,
so by Lemma~\ref{lem:single-rule}(b) it applies the same pair
$(\lambda,\theta)$ at every activation.

The first of these Move operations observes $\iota$, and the cumulative
nonzero-move counts being $(1,0)$, an admissible exchange configuration
would require $a$ to land at $B$. It does not, so the exchange phase of
$\SRDV$ is unsatisfiable (Lemma~\ref{lem:p1_first_move}), a safety
violation at a finite time.

For $\mathrm{VEC}$: if $c_1=0$, the robot $a$ lands on the position of
$b$ at the first activation and $d_1=0$, while all the targets
permitted by the transition condition are nonzero, so that condition
fails. If $c_1\neq0$, then as in Lemma~\ref{lem:A2-6-FC-S} the only
single displacements satisfying the transition condition rotate the
segment clockwise by $45^\circ$ or by $90^\circ$; if $a$ departs from
them the transition condition fails, and otherwise $a$ reaches a
configuration with $d_i\perp d_0$ in at most two moves and then repeats
the same displacement, so the stopping condition fails. Either way this
is a safety violation in the prefix, preserved by the tail.

In every case a single execution violates both predicates, so
Lemma~\ref{lem:concat}(1)(3) is satisfied and
Lemma~\ref{lem:boundary-common-failure} gives the claim.
\end{proof}

\begin{lemma}
\label{lem:A2-16-FS-A}
$\mathrm{SX}\notin\Sol(\FS^{A})$.
\end{lemma}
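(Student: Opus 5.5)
We prove $\mathrm{SX}\notin\Sol(\FS^{A})$ by following the pattern of Lemmas~\ref{lem:A2-6-FS-A} and~\ref{lem:A2-13-FS-A}. We build a single $\ASY$ execution that combines a fully synchronous zero-move prefix with one observation-during-movement interval. We then choose the observation parameter $\mu$ by Corollary~\ref{cor:continuum} so that it defeats $\mathrm{VEC}$ and $\SRDV$ at once. Fix any deterministic $\FS$ algorithm, the coordinates $A=0$, $a_0=d$, $b_0=2d$, $B=3d$, and normalized frames. Run fully synchronous rounds (Lemma~\ref{lem:rigidA2}); the two robots then share a state $s_0,s_1,\dots$ and the configuration stays at $\iota$. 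If $\lambda_{s_k}=0$ for every $k$, continue forever. The joint-stop sequence is then $(C_0)$ and $d_0\not\perp d_0$, so attainment in $\mathrm{VEC}$ fails. Lemma~\ref{lem:p1_no_null} is contradicted, so $\SRDV$ fails. This single execution witnesses both liveness violations.

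Otherwise let $k$ be least with $\lambda_{s_k}>0$. In round $k$, activate $a$ alone. Its $\LK$ sees $\iota$, so it performs the first nonzero $\M$ with displacement $(\lambda,\theta)=(\lambda_{s_k},\theta_{s_k})$, landing at some $p_E\neq a_0$. By Lemma~\ref{lem:mid}, $b$ (in state $s_k$, no pending operation) looks at $p_\mu$, $\mu\in(0,1)$. We let $b$ begin its $\M$ before $a$ lands, so no joint stop separates the landings and the configuration after both is the successor $C_1$ of $C_0$, exactly as in Lemma~\ref{lem:vec_not_fs_a}. The landing of $b$ is $b_0+\lambda R_\theta(p_\mu-b_0)$, which is affine and nonconstant in $\mu$.

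For $\mathrm{VEC}$, the transition condition at $i=0$ admits only four successors. For $\SRDV$, we argue as follows. The moves of $a$ and $b$ have overlapping time intervals, and the earliest completion is that of $a$ (or simultaneous), landing at $p_E$. Lemma~\ref{lem:p1_first_move} requires $p_E=B$; if $p_E\ne B$ the exchange phase is already unsatisfiable, independently of $\mu$. If instead $p_E=B$, i.e.\ $(\lambda,\theta)=(2,0)$, then $b$ must land in $P$. This is required either because its completed stationary position before $T_1$ must lie in $\{A,a_0,b_0,B\}$, or, more robustly, because it must remain in $[A,B]$: its landing $2d+2(p_\mu-2d)=2p_\mu-2d$ with $p_\mu\in(d,3d)$ ranges over $(0,4d)$. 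The main obstacle is therefore this case. We have to choose $\mu$ so that $b$'s landing lies outside the finite set $P$, while also avoiding the four $\mathrm{VEC}$ targets and the countably many coincidence values. The map $\mu\mapsto$ landing of $b$ is a nonconstant continuous function, so Corollary~\ref{cor:continuum} supplies such a $\mu$, and the union of all excluded sets is countable.

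With this $\mu$, the landing of $b$ lies off $P$ before any exchange configuration can have been formed. This holds because $b$ now has $m_b=1$ at a non-designated point, while $C_1$ would need $m_b=0$, and $C_2$ and $C_3$ would need $b$ at $A$. Hence the stationary-position clause of $\SRDV$ fails, and no valid $T_1$ exists afterwards without exceeding the move count, by the argument of Lemma~\ref{lem:p1_first_move}. Both violations are safety violations at a finite time. We complete the pending cycles and append any fair, time-divergent tail (Lemma~\ref{lem:concat}(1)), and Lemma~\ref{lem:boundary-common-failure} then yields the claim.
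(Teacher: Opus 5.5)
Your proof is correct, but it is organized differently from the paper's, and one step needs an extra sentence.

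\textbf{How the two routes differ.} The paper starts from the alternating singleton execution $E_0$ and splits on the landing point $p_E$ of the first nonzero $\M$ of $a$.
\begin{itemize}
\item If $p_E=B$, it simply follows $E_0$. Then $C_1=(B,b_0)$ is a joint stop with $d_1=-d_0$, which defeats $\mathrm{VEC}$, and Lemma~\ref{lem:p1_replay} sends $b$ to $4d\notin[A,B]$, which defeats \SRDV.
\item Only if $p_E\neq B$ does it use an observation during movement. There it \emph{delays} the $\M$ of $b$ until $a$ has landed, so Lemma~\ref{lem:p1_first_move} disposes of \SRDV and only the finitely many $\mathrm{VEC}$ successors must be avoided.
\end{itemize}
You instead use one construction for both cases: the overlapping-move execution of Lemma~\ref{lem:vec_not_fs_a}, with the case $p_E=B$ absorbed into the choice of $\mu$. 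This avoids the replay lemma and any case split in the schedule. The price is a $\mu$-dependent argument for \SRDV when $p_E=B$.

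\textbf{The missing sentence.} Your stated reason is that $b$ has $m_b=1$ at a non-designated point and that $C_1$ needs $m_b=0$. That only excludes exchange times $T_1$ at or after the landing of $b$. You must also exclude $T_1$ before that landing, when $m_b=0$ and only $C_1$ is possible. This does hold in your schedule: $a$ reaches $B$ only at the end of its $\M$, and by then $b$ has already begun moving. So $b$ is away from $b_0$ whenever $a$ is at $B$ with $m_b=0$, and $C_1$ is never formed.

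State this explicitly, because the overlap is essential here. In the delayed variant the paper uses for Lemma~\ref{lem:A3-10-FS-A}, $C_1$ is formed at the landing of $a$. A later landing of $b$ off $P$ but inside $[A,B]$ would then not violate \SRDV.

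\textbf{A simpler fix.} Use the option you mention yourself: take $\mu\in(3/4,1)$. Then $b$ lands at $4\mu d>3d$, and condition~1 of Definition~\ref{def:s-rdv} fails regardless of $T_1$. This is exactly what the paper does in Lemma~\ref{lem:A3-10-FS-A}, and it remains compatible with excluding the finitely many $\mathrm{VEC}$ values and the coincidence value.
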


\begin{proof}
Let $\mathcal A$ be any deterministic algorithm of $\FS$ under $\ASY$,
and let $E_0$ be the execution of alternating singleton activations
$\{a\},\{b\},\dots$ under normalized frames (Lemma~\ref{lem:solo});
$E_0$ is fair and is determined by $\mathcal A$.

If $E_0$ contains no nonzero $\M$, the two robots stay at $\iota$, so
$\Phi_{\SRDV}$ fails, while the sequence of configurations reduces to
$C_0$ and $d_0\not\perp d_0$, so the attainment condition of
$\mathrm{VEC}$ fails as well; Lemma~\ref{lem:boundary-common-failure}
gives the claim.

Otherwise, as long as no nonzero $\M$ has occurred the two robots
traverse the same sequence of states, so the first nonzero $\M$ is
performed by $a$, which is activated first, from some state $s$ with
$\lambda_s>0$, while $b$ is in the same state $s$ and has no pending
operation. Let $p_E$ be its landing point.

\emph{Case $p_E=B$.} Follow $E_0$ up to the round following this $\M$.
When $a$ has landed, $d_1=b_0-B=-d_0$, which is none of the successors
admitted by the transition condition of $\mathrm{VEC}$; and
Lemma~\ref{lem:p1_replay} makes $b$, activated next, land at
$4d\notin[A,B]$, so $\Phi_{\SRDV}$ fails.

\emph{Case $p_E\neq B$.} Follow $E_0$ up to the start of this $\M$, and
by Lemma~\ref{lem:mid} let $b$ perform its $\LK$ while $a$ is at
$p_\mu=(1-\mu)a_0+\mu p_E$, $\mu\in(0,1)$, delaying the $\M$ of $b$
until $a$ has landed. The $\M$ of $a$ is then the nonzero $\M$ with
earliest completion time and it does not land at $B$, so
$\Phi_{\SRDV}$ fails by Lemma~\ref{lem:p1_first_move}. By
Lemma~\ref{lem:norm}(1) the robot $b$ lands at
$b_0+\lambda_s R_{\theta_s}(p_\mu-b_0)$, an affine function of $\mu$
with nonzero slope $\lambda_s R_{\theta_s}(p_E-a_0)$, so a suitable
coordinate of the segment after both landings is continuous and
non-constant in $\mu$. The transition condition of $\mathrm{VEC}$
admits only finitely many successors, so by
Corollary~\ref{cor:continuum} some $\mu$ avoids them and the
transition condition fails (cf.\ Lemma~\ref{lem:vec_not_fs_a}).

In both cases the two violations are safety violations forced by a
finite prefix, which extends to a fair execution by
Lemma~\ref{lem:concat}(1)(2); Lemma~\ref{lem:boundary-common-failure}
gives the claim.
\end{proof}

\begin{lemma}
\label{lem:A2-16-OB-F}
$\mathrm{SX}\notin\Sol(\OB^{F})$.
\end{lemma}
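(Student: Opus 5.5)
We will reuse the infinite fully synchronous execution of Lemma~\ref{lem:rigidA1} from the instance $A=0$, $a_0=d$, $b_0=2d$, $B=3d$, with a normalized frame at every $\LK$. By Lemma~\ref{lem:single-rule}(a) both robots apply one pair $(\lambda,\theta)$ in every round. We then check that this one execution violates both $\Phi_{\SRDV}$ and $\Phi_{\mathrm{VEC}}$. Lemma~\ref{lem:concat}(1)(3) will justify combining the violations, and Lemma~\ref{lem:boundary-common-failure} then gives the claim.

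\emph{The $\SRDV$ part} is copied from Lemma~\ref{lem:A2-8-OB-F}, with three cases. If $\lambda=0$, no nonzero $\M$ ever occurs; this contradicts Lemma~\ref{lem:p1_no_null}, and the execution itself witnesses the liveness failure. If $\lambda\neq0$ and $(\lambda,\theta)\neq(2,0)$, round~$1$ does not produce $(B,A)$ with counts $(1,1)$, so Lemma~\ref{lem:p1_first_move} makes the exchange phase unattainable. If $(\lambda,\theta)=(2,0)$, round~$2$ sends the robots to $-3d$ and $6d$, which lie outside $[A,B]$ (Lemma~\ref{lem:srdv_not_ob_f}).

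\emph{The $\mathrm{VEC}$ part} follows Lemma~\ref{lem:vec_not_ob_f} on the same execution.
\begin{itemize}
\item If $\lambda=0$, the joint-stop sequence is just $C_0$, and since $d_0\not\perp d_0$ the attainment condition fails. This is a liveness failure witnessed by the same execution.
\item If $c_2=0$, the robots coincide after round~$1$, so $d_1=0$. Every successor permitted by the transition condition is nonzero, so the transition condition fails at a finite time.
\item If $c_2>0$ and $\lambda>0$, both robots perform a nonzero $\M$ in every round. By Lemma~\ref{lem:rigidA1} the two $\M$ intervals of a round coincide, so the joint-stop sequence is exactly the sequence of round-boundary configurations. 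If some $d_k\perp d_0$, the next round moves the robots again and the stopping condition fails at a finite time. Otherwise the attainment condition fails along the infinite execution.
\end{itemize}
In the case $(\lambda,\theta)=(2,0)$ one can say more: $d_1=-3d_0$, which is not a permitted successor, so the transition condition fails already at $i=0$.

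\emph{Main obstacle.} The only delicate point is the bookkeeping of safety versus liveness when the two predicates fail in different ways. The case $\lambda=0$ gives two liveness failures, both witnessed by the same fair infinite execution. In the case $c_2>0$ with no perpendicular configuration ever reached, VEC fails by liveness, while $\SRDV$ has already failed by a safety violation in round~$1$ or~$2$. Since the execution is the fixed fully synchronous one, no continuation needs to be constructed. It remains only to note that the two failures are compatible on this one trajectory, which holds because each is verified directly on it.
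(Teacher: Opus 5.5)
Your proposal is correct and follows the paper's proof essentially verbatim. The paper uses the same fully synchronous $\OB^F$ execution with normalized frames, takes the $\SRDV$ violation from Lemma~\ref{lem:A2-8-OB-F}, and takes the $\mathrm{VEC}$ violation from the case split of Lemma~\ref{lem:vec_not_ob_f}; your extra observation that $(\lambda,\theta)=(2,0)$ already gives the forbidden successor $d_1=-3d_0$ is correct but not needed.
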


\begin{proof}
Let $\mathcal A$ be any deterministic algorithm of $\OB$ under $\FSY$
and take the fully synchronous execution in which both robots are
activated in every round (Lemmas~\ref{lem:rigidA1}
and~\ref{lem:rigidA2}), with a normalized frame at every $\LK$. The
violation of $\Phi_{\SRDV}$ is that of
Lemma~\ref{lem:A2-8-OB-F}. For $\Phi_{\mathrm{VEC}}$ the argument of
Lemma~\ref{lem:vec_not_ob_f} applies to the same execution: if
$\lambda=0$ the attainment condition fails, if $c_2=0$ the transition
condition fails, and otherwise the robots move again immediately after
any orthogonal configuration, so the stopping condition fails, or no
such configuration is ever reached and the attainment condition fails.

A single execution violates both predicates, so
Lemma~\ref{lem:boundary-common-failure} gives the claim.
\end{proof}

\subsubsection{\texorpdfstring{S-RDV $\vee$ CFSM (SC)}{S-RDV or CFSM}}
\label{ssub:f3-5}

\begin{lemma}
\label{lem:A2-17-FC-A}
$\mathrm{SC}\notin\Sol(\FC^{A})$.
\end{lemma}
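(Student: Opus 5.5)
We follow the template of Lemmas~\ref{lem:A2-8-FC-A}, \ref{lem:A2-15-FC-A} and~\ref{lem:A2-16-FC-A}. Fix an arbitrary deterministic $\FC$ algorithm under $\ASY$ and use coordinates $A=0$, $a_0=d$, $b_0=2d$, $B=3d$. Start with fully synchronous cycles under identical normalized observations (Lemma~\ref{lem:rigidA2}); the positions stay at $\iota$ and the colors stay equal while every output is zero. If no nonzero output ever occurs, continuing forever gives a fair execution with $\moves{a}=0$, violating $\CFSM$, and with the robots permanently distinct, violating the rendezvous condition of $\SRDV$.

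Otherwise, let $\gamma$ be the common color at the first round with a nonzero output, and let $(\lambda,\theta)$ with $\lambda>0$ be the corresponding normalized rule. This rule is determined before the round is scheduled. We split into two cases on this rule.

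\emph{Case $(\lambda,\theta)\neq(2,0)$.} We replace the round by two consecutive solo activations of $a$ (Lemma~\ref{lem:solo}), followed by a fair tail. Since $b$ is inactive, $a$ sees the constant peer color $\gamma$, so it applies the same rule twice (Lemma~\ref{lem:single-rule}(b)). The first Move observes $\iota$ and does not land at $B$, so the exchange phase of $\SRDV$ is unsatisfiable by Lemma~\ref{lem:p1_first_move}. For $\CFSM$ there are two sub-cases.
\begin{itemize}
\item If $c_1=0$, then $a$ lands on $b$, which is a collision.
\item Otherwise the distance stays positive, both Moves are nonzero, and $\moves{a}\ge2$.
\end{itemize}
All of these are safety violations, preserved by the tail (Lemma~\ref{lem:concat}(1)).

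\emph{Case $(\lambda,\theta)=(2,0)$.} Let both robots Look and then Compute, so that both adopt the common color $\gamma'$. Delay the Move of $b$ until after $a$ lands (Corollary~\ref{cor:stale}). Then $a$ moves from $d$ to $3d$ while $b$ rests at $2d$, so the robots collide and $\CFSM$ fails at a finite time. Next let $b$ complete its Move to $A$. This yields $(B,A)$ with common color $\gamma'$ and no pending operation. As in Lemma~\ref{lem:A2-3-FC-A}, we append the stable-RDV counterexecution of $\mathcal A_{\gamma'}$ from Lemma~\ref{lem:srdv_not_fc_a}, transported to $(B,A)$ by an orientation-preserving similarity. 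This continuation violates the terminal rendezvous condition of $\SRDV$, and the earlier safety violation of $\CFSM$ persists.

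The only delicate point is that the continuation in the $(2,0)$ case must be a legitimate extension of the reached state. We handle this exactly as in Lemma~\ref{lem:A2-3-FC-A}: equal colors, no pending cycle, and the frame transport preserving every local observation. Lemma~\ref{lem:concat}(1)(3) then holds in every case, and Lemma~\ref{lem:boundary-common-failure} gives the claim.
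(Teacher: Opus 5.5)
Your proposal is correct and follows the paper's proof essentially verbatim. Both arguments use the same zero-output synchronous prefix and split on whether the first nonzero normalized rule is $(2,0)$. When it is not, two solo activations of $a$ give a collision if $c_1=0$ and otherwise $\moves{a}\ge2$, while Lemma~\ref{lem:p1_first_move} makes the exchange phase of $\SRDV$ fail. When it is, a simultaneous Look--Compute with the Move of $b$ delayed makes $a$ pass through $b_0$, and the transported stable-rendezvous counterexecution is appended from $(B,A)$. The only difference is that you treat the two cases in the opposite order, which is immaterial.
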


\begin{proof}
Let $\mathcal A$ be any deterministic algorithm of $\FC$ under $\ASY$.
Use coordinates $A=0$, $a_0=d$, $b_0=2d$, $B=3d$ on the initial line.
The colors follow a common sequence $c_0,g(c_0),\dots$
(Lemma~\ref{lem:rigidA2}) while both robots are activated
simultaneously, and the configuration stays at $\iota$.

If $\lambda_{c_j}=0$ for every $j$, continue forever. This fair
execution contains no nonzero $\M$, so $\moves{a}=0\neq1$ and the
single-move condition of $\CFSM$ fails, while the two robots never
occupy one and the same point permanently and $\Phi_{\SRDV}$ fails as
well. This execution witnesses both liveness violations.

Otherwise let $\gamma$ be the common color at the start of the first
round with a nonzero output, and let $(\lambda,\theta)$, with
$\lambda\neq0$, be the normalized displacement for the initial
configuration with peer color $\gamma$.

\emph{Case $(\lambda,\theta)=(2,0)$.} Let both robots Look, and then
both Compute; their colors become a common $\gamma'$. Delay the Move of
$b$ so that it begins strictly after the Move of $a$ ends
(Corollary~\ref{cor:stale}). The robot $a$ moves from $d$ to $3d$ while
$b$ rests at $2d$, so the two robots occupy the same point while $a$
travels and the collision-freedom condition of $\CFSM$ fails, a safety
violation at a finite time.

Then let $b$ complete its Move to $A$. The robots are at $(B,A)$ with
common color $\gamma'$, and neither has a pending operation. As in
Lemma~\ref{lem:A2-3-FC-A}, the impossibility of stable RDV in $\FC^A$
used in Lemma~\ref{lem:srdv_not_fc_a}, applied to $\mathcal A$ with
initial color $\gamma'$, supplies a counterexecution that an
orientation-preserving similarity transports to $(B,A)$. Appending it
violates the terminal rendezvous condition of $\SRDV$. The safety
violation of $\CFSM$ precedes that liveness violation.

\emph{Case $(\lambda,\theta)\neq(2,0)$.} Replace the round by an
interval in which $a$ alone is activated twice in a row
(Lemma~\ref{lem:solo}), followed by an arbitrary admissible fair,
time-divergent tail. The visible state of $a$ is constantly $\gamma$,
so by Lemma~\ref{lem:single-rule}(b) it applies the same pair
$(\lambda,\theta)$ at both activations.

The first of these Move operations observes $\iota$, and the cumulative
nonzero-move counts being $(1,0)$, an admissible exchange configuration
would require $a$ to land at $B$. It does not, so the exchange phase of
$\SRDV$ is unsatisfiable (Lemma~\ref{lem:p1_first_move}), a safety
violation at a finite time.

For $\CFSM$: if $c_1=0$, the robot $a$ lands on the position of $b$ at
the first activation, so $a(t)=b(t)$ there and the collision-freedom
condition fails. If $c_1\neq0$, both activations are nonzero $\M$
operations, so $\moves{a}\ge2\neq1$ and the single-move condition
fails. Either way this is a safety violation in the prefix, preserved
by the tail.

In every case a single execution violates both predicates, so
Lemma~\ref{lem:concat}(1)(3) is satisfied and
Lemma~\ref{lem:boundary-common-failure} gives the claim.
\end{proof}

\begin{lemma}
\label{lem:A2-17-OB-F}
$\mathrm{SC}\notin\Sol(\OB^{F})$.
\end{lemma}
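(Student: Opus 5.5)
We plan to follow the pattern of Lemmas~\ref{lem:A2-8-OB-F} and~\ref{lem:A2-16-OB-F} and use a single execution. Fix an arbitrary deterministic algorithm $\mathcal A$ of $\OB$ under $\FSY$. Put coordinates $A=0$, $a_0=d$, $b_0=2d$, $B=3d$ on the initial line. Take the infinite fully synchronous execution of Lemma~\ref{lem:rigidA1}, with a normalized frame at every $\LK$. By Lemma~\ref{lem:single-rule}(a), both robots apply one pair $(\lambda,\theta)$ in every round. The $\SRDV$ violation on this execution is exactly the one already established in Lemma~\ref{lem:A2-8-OB-F}:
\begin{itemize}
\item if $\lambda=0$, the first nonzero $\M$ demanded by Lemma~\ref{lem:p1_no_null} never occurs, a liveness violation witnessed by this execution itself;
\item if $\lambda\neq0$ and $(\lambda,\theta)\neq(2,0)$, round~$1$ violates Lemma~\ref{lem:p1_first_move};
\item if $(\lambda,\theta)=(2,0)$, round~$2$ sends the robots to $-3d$ and $6d$, outside $[A,B]$.
\end{itemize}
We therefore only need to show that the same execution violates $\Phi_{\CFSM}$.

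For $\CFSM$ we reuse the trichotomy of Lemma~\ref{lem:cfsm_not_ob_f}.
\begin{itemize}
\item If $\lambda=0$, then $\moves{a}=0\neq1$. This is a liveness violation on the same infinite execution that witnesses the $\SRDV$ liveness failure, so both fail on one fair run.
\item If $c_2=0$, the robots coincide at the midpoint at the end of round~$1$, so collision freedom fails at a finite time.
\item If $\lambda\neq0$ and $c_2>0$, the distance stays positive, so both robots perform a nonzero $\M$ in every round. Hence $\moves{a}\ge2$ after round~$2$, a finite safety violation.
\end{itemize}
The case $(\lambda,\theta)=(2,0)$ has $c_2=|1-4|=3>0$, so it falls in the last case. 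Its $\SRDV$ violation at round~$2$ is likewise finite.

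Finally we combine the two case analyses. The cases are indexed by the same pair $(\lambda,\theta)$, so each case yields a single execution violating both predicates. Whenever the violations are liveness ones (only when $\lambda=0$), the fully synchronous execution is itself the required fair infinite continuation. All other violations are finite safety violations preserved under the fully synchronous tail. Lemma~\ref{lem:concat}(1)(3) therefore holds, and Lemma~\ref{lem:boundary-common-failure} gives the claim.

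We expect no real obstacle. The only point needing care is checking that the move-count violation and the $\SRDV$ failure occur on the identical schedule without requiring a different scheduler choice; this holds because $\FSY$ leaves the adversary no such choice.
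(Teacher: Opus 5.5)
Your proposal is correct and follows essentially the same route as the paper. Both use the single fully synchronous execution of Lemma~\ref{lem:rigidA1} with normalized frames, take the $\SRDV$ violation directly from Lemma~\ref{lem:A2-8-OB-F}, and refute $\CFSM$ via the $\lambda=0$ / $c_2=0$ / $c_2>0$ trichotomy, which the paper phrases as $\moves{r}=0$, permanent coincidence, or $\moves{r}=\infty$.
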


\begin{proof}
Let $\mathcal A$ be any deterministic algorithm of $\OB$ under $\FSY$
and take the fully synchronous execution in which both robots are
activated in every round (Lemmas~\ref{lem:rigidA1}
and~\ref{lem:rigidA2}), with a normalized frame at every $\LK$. The
violation of $\Phi_{\SRDV}$ is that of
Lemma~\ref{lem:A2-8-OB-F}. For $\Phi_{\CFSM}$,
Lemma~\ref{lem:rigidA1} leaves only three possibilities, namely
$\moves{r}=0$, a permanent coincidence of the two robots, and
$\moves{r}=\infty$, each of which violates the single-move condition or
the collision-freedom condition of $\CFSM$.

A single execution violates both predicates, so
Lemma~\ref{lem:boundary-common-failure} gives the claim.
\end{proof}

\subsubsection{\texorpdfstring{ZCC $\vee$ VSS (ZV)}{ZCC or VSS}}
\label{ssub:f3-6}

\begin{lemma}
\label{lem:A2-18-FS-AM}
$\mathrm{ZV}\notin\Sol(\FS^{A_M})$.
\end{lemma}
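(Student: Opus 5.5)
The construction copies the template of Lemma~\ref{lem:A2-15-FS-AM}, with $\VSS$ in place of $\SRDV$. Fix an arbitrary deterministic $\FS$ algorithm under $A_M$ and the coordinates $A=0$, $a_0=d$, $b_0=2d$, $B=3d$, $P=\{0,d,2d,3d\}$. Activate $a$ and $b$ alternately, each cycle completed before the next begins; $A_M$ admits this by Lemma~\ref{lem:solo}. Give normalized frames at every $\LK$ with distinct positions. By the $\pi$-symmetry of $\iota$ and Lemma~\ref{lem:silent}, both robots traverse $c_0,g(c_0),\dots$ until the first nonzero $\M$, and $a$ performs that $\M$ because it is activated first.

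\emph{No nonzero $\M$.} If no nonzero $\M$ ever occurs (a finite check on the eventually periodic orbit), continue forever. Then $(\mathrm{P1})$ fails by Lemma~\ref{lem:p1_no_null}. The robots also stay at distinct fixed positions, so the convergence condition of $\mathrm{ZCC}$ fails. This single fair execution witnesses both liveness violations.

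\emph{A nonzero $\M$ occurs.} I would split into three cases according to where $a$ lands, each giving a safety violation of $\Phi_{\VSS}$.
\begin{itemize}
\item[(i)] If $a$ lands outside $P$, $(\mathrm{P1}\text{-}b)$ fails immediately.
\item[(ii)] If $a$ lands in $P\setminus\{B\}$, $(\mathrm{P1}\text{-}c)$ is unsatisfiable by Lemma~\ref{lem:p1_first_move}.
\item[(iii)] If $a$ lands at $B$, let $b$ run its next cycle. By Lemma~\ref{lem:p1_replay}, $b$ lands at $4d\notin P$.
\end{itemize}
End the prefix at the completion of the cycle producing the violation. This is a joint stop with no pending operation, since cycles never overlap.

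Next I would check that every $\LK$ in the prefix observes distinct positions. Before the first nonzero $\M$ the configuration is $\iota$. In case (iii), $b$ observes $a$ at $3d\neq2d$. A landing on the peer can only be the final $\M$ of the prefix. Hence, by Lemma~\ref{lem:norm}(1), a robot activated $k$ times carries $g^k(c_0)$, so both colors lie on the same normalized orbit.

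\emph{Continuation for $\mathrm{ZCC}$.} If the robots coincide, apply the coincidence continuation of Lemma~\ref{lem:zcc_not_fc_s}. Otherwise the hypotheses of the continuation statement in Lemma~\ref{lem:zcc_not_fs_am} hold: distinct positions, a common eventual color cycle, and no pending operation. That statement supplies an admissible fair, time-divergent $A_M$ continuation violating $\Phi_{\mathrm{ZCC}}$, with $a_0,b_0$ kept as the reference initial positions. The $\VSS$ safety violation persists by Lemma~\ref{lem:concat}(1), and Lemma~\ref{lem:boundary-common-failure} concludes.

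The main obstacle is confirming that the hypotheses of the continuation statement are really met. Safety of $\mathrm{ZCC}$ may already have failed in the prefix; for example, in case (iii) $a$ passes through $b$, giving $f_a=2d>d$. The continuation statement explicitly allows this and then appends any fair tail. The real work is checking that the orbit condition survives case (iii), where $b$ has moved once and both robots have been activated different numbers of times. Equal-orbit membership rather than equal colors is exactly what that statement requires, so I expect this to go through.
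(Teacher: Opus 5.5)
Your proposal is correct and follows the paper's proof essentially step for step. It uses the same alternating-singleton prefix under normalized frames and the same \VSS case split via Lemma~\ref{lem:p1_first_move} and Lemma~\ref{lem:p1_replay}; your separation of landings outside $P$ from landings in $P\setminus\{B\}$ merely refines the paper's ``does not land at $B$''. It then hands off to the coincidence continuation or to the continuation statement of Lemma~\ref{lem:zcc_not_fs_am}, exactly as the paper does.

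One small slip in your closing remark: in case (iii) the prefix ends after $b$'s cycle, so both robots have been activated the same number of times and actually carry equal colors. The unequal activation counts occur in cases (i)--(ii), where the prefix ends after $a$'s cycle, and same-orbit membership covers both situations anyway.
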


\begin{proof}
Let $\mathcal A$ be any deterministic algorithm of $\FS$ under $A_M$.
Synchronous activations of one robot at a time are admissible
(Lemma~\ref{lem:solo}), so begin with alternating singleton
activations, each cycle being completed before the next one begins,
under normalized frames.
 
If no nonzero $\M$ ever occurs, condition $(\mathrm{P1})$ of
$\VSS$ fails by Lemma~\ref{lem:p1_no_null}, and the two robots
remain at distinct positions, so the convergence condition of
$\mathrm{ZCC}$ fails as well.
 
Otherwise the first nonzero $\M$ is performed by $a$. Either it does
not land at $B=3d$, so that the first-phase clauses become
unsatisfiable by the argument of Lemma~\ref{lem:p1_first_move}, or it
does and Lemma~\ref{lem:p1_replay} makes the robot $b$, activated next,
land at $4d\notin P$; in either case the endpoint condition of
$\VSS$ fails, a safety violation. Let the prefix end with the cycle
that establishes this violation, that of $a$ in the first case and
that of $b$ in the second. As in the proof of
Lemma~\ref{lem:A2-15-FS-AM}, every $\LK$ in this prefix observes
distinct positions, so by Lemma~\ref{lem:norm}(1) each robot that has
been activated $k$ times carries $g^{k}(c_0)$.
 
End the prefix at this joint stop, with no pending operation.
If the robots coincide, use the coincidence continuation
from Lemma~\ref{lem:zcc_not_fc_s}. Otherwise their colors lie on the
same normalized color orbit, so the continuation statement
proved in Lemma~\ref{lem:zcc_not_fs_am} applies.
In either case it supplies a fair, time-divergent continuation
violating $\Phi_{\mathrm{ZCC}}$, with the original initial
positions kept fixed. The safety violation established
in the prefix persists (Lemma~\ref{lem:concat}(1)).
 
A single fair infinite execution therefore violates both predicates,
and Lemma~\ref{lem:boundary-common-failure} gives the claim.
\end{proof}

\begin{lemma}
\label{lem:A2-18-OB-F}
$\mathrm{ZV}\notin\Sol(\OB^{F})$.
\end{lemma}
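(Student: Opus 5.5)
The plan is to reuse the single infinite fully synchronous execution of Lemma~\ref{lem:rigidA1}, with a normalized frame at every $\LK$, exactly as in Lemmas~\ref{lem:A2-9-OB-F} and~\ref{lem:A2-15-OB-F}. On this one execution we check that both $\Phi_{\mathrm{ZCC}}$ and $\Phi_{\VSS}$ fail. Because an $\OB$ robot has a unique visible state, Lemma~\ref{lem:single-rule}(a) reduces $\mathcal A$ to a single pair $(\lambda,\theta)$ applied by both robots in every round. We fix coordinates $A=0$, $a_0=d$, $b_0=2d$, $B=3d$ and $P=\{0,d,2d,3d\}$.

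For $\mathrm{ZCC}$, we cite Lemma~\ref{lem:zcc_not_ob_f} verbatim for this execution. Its cases are exhaustive:
\begin{itemize}
\item $\lambda=0$ gives constant $f_r$;
\item $c_2=0$ gives coincidence, maintained by Lemma~\ref{lem:keep-coincident}, after which the adversary chooses the common displacement so that $f_a$ is eventually monotone;
\item $\varphi\not\equiv0$ gives a finite violation of collision freedom;
\item $\varphi\equiv0$ gives a monotone $f_a$ and hence failure of the zig-zag condition.
\end{itemize}
The one point to check is that the coincidence case in that lemma uses the same fully synchronous schedule. It does, since the execution stays fully synchronous and the adversary chooses only the frames at coincidence.

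For $\VSS$, we repeat the three-case split of Lemma~\ref{lem:A2-9-OB-F}.
\begin{itemize}
\item If $\lambda=0$, no nonzero $\M$ occurs, so $(\mathrm{P1})$ fails by Lemma~\ref{lem:p1_no_null}. This is a liveness failure witnessed by the same infinite execution.
\item If $\lambda>0$ and $(\lambda,\theta)\neq(2,0)$, the round-$1$ Move of $a$ observes $\iota$, completes earliest, and does not land at $B$. Lemma~\ref{lem:p1_first_move} then makes $\Phi_{\VSS}$ unsatisfiable.
\item If $(\lambda,\theta)=(2,0)$, round~$1$ yields $(3d,0)$ and round~$2$ yields $(-3d,6d)$, as in Lemma~\ref{lem:srdv_not_ob_f}. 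This violates the endpoint requirement $(\mathrm{P1}\text{-}b)$.
\end{itemize}
The latter two are safety violations at a finite time.

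The main thing to verify is the compatibility of the cases when one task fails by liveness and the other by safety, as required by Lemma~\ref{lem:concat}(1)(3). We handle this by never truncating: we always take the full infinite synchronous execution, so every safety violation occurs in its prefix and every liveness violation is witnessed by the execution itself. The case $c_2=0$ in $\mathrm{ZCC}$ overlaps with $(\lambda,\theta)=(1/2,0)\neq(2,0)$ in $\VSS$. There the $\VSS$ failure occurs in round~$1$, and the adversary's later frame choices at coincidence do not affect it. A single execution therefore violates both predicates, and Lemma~\ref{lem:boundary-common-failure} gives the claim.
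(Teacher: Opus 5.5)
Your proposal is correct and matches the paper's proof: the paper takes the same infinite fully synchronous $\OB^F$ execution with normalized frames. It obtains the $\mathrm{ZCC}$ failure through Lemma~\ref{lem:A2-15-OB-F}, which rests on Lemma~\ref{lem:zcc_not_ob_f}, and the $\VSS$ failure from the three-case split of Lemma~\ref{lem:A2-9-OB-F}. Your explicit check that the $c_2=0$ coincidence continuation leaves the round-$1$ $\VSS$ violation intact is a detail the paper leaves implicit.
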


\begin{proof}
Let $\mathcal A$ be any deterministic algorithm of $\OB$ under $\FSY$
and take the fully synchronous execution in which both robots are
activated in every round (Lemmas~\ref{lem:rigidA1}
and~\ref{lem:rigidA2}), with a normalized frame at every $\LK$. The
violation of $\Phi_{\mathrm{ZCC}}$ is that of
Lemma~\ref{lem:A2-15-OB-F}, and the violation of $\Phi_{\VSS}$
is that of Lemma~\ref{lem:A2-9-OB-F}, which rests on
Lemma~\ref{lem:vss_not_ob_f}. A single execution violates both
predicates, so Lemma~\ref{lem:boundary-common-failure} gives the claim.
\end{proof}

\subsubsection{\texorpdfstring{VSS $\vee$ VEC (VX)}{VSS or VEC}}
\label{ssub:f3-7}

\begin{lemma}
\label{lem:A2-22-FS-A}
$\mathrm{VX}\notin\Sol(\FS^{A})$.
\end{lemma}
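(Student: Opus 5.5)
The plan is to imitate Lemma~\ref{lem:A2-16-FS-A}, which refuted $\SRDV\vee\mathrm{VEC}$ in $\FS^{A}$, and to replace each appeal to the rendezvous/interval clause of $\SRDV$ by the corresponding endpoint or exchange clause $(\mathrm{P1})$ of $\VSS$. Fix an arbitrary deterministic $\FS$ algorithm. Use coordinates $A=0$, $a_0=d$, $b_0=2d$, $B=3d$, and $P=\{0,d,2d,3d\}$. Let $E_0$ be the fair execution of alternating singleton activations $\{a\},\{b\},\dots$ under normalized frames; it is admissible under $\ASY$ by Lemma~\ref{lem:solo}.

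If $E_0$ contains no nonzero $\M$, then $E_0$ itself is the witness. $(\mathrm{P1})$ fails by Lemma~\ref{lem:p1_no_null}. The joint-stop sequence reduces to $C_0$, and since $d_0\not\perp d_0$, the attainment condition of $\mathrm{VEC}$ fails. Both failures are liveness violations on the same fair infinite execution.

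Otherwise, by the $\pi$-symmetry and Lemma~\ref{lem:silent}, the two robots run through the same state sequence until the first nonzero $\M$. That move is therefore made by $a$ from some state $s$ with $\lambda_s>0$. At that moment $b$ is in the same state $s$ and has no pending operation. Let $p_E$ be the landing point. There are two cases.

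\emph{Case $p_E=B$.} Follow $E_0$ one further round. At the landing of $a$ we get $d_1=-d_0$, which is not an admissible $\mathrm{VEC}$ successor, so the transition condition fails. Lemma~\ref{lem:p1_replay} then sends $b$ to $4d\notin P$, so the endpoint clause $(\mathrm{P1}\text{-}b)$ of $\VSS$ fails.

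\emph{Case $p_E\ne B$.} Stop $E_0$ at the start of $a$'s move. Using Lemma~\ref{lem:mid}, let $b$ Look while $a$ is at $p_\mu$, and delay $b$'s Move until $a$ lands. Then $a$'s move is the earliest-completing nonzero $\M$, its Look saw $\iota$, and it does not land at $B$. By Lemma~\ref{lem:p1_first_move}, the exchange phase of $\VSS$ is therefore unsatisfiable. This is also immediate if $p_E\notin P$, by $(\mathrm{P1}\text{-}b)$. For $\mathrm{VEC}$, the landing of $b$ is affine in $\mu$ with nonzero slope, so a coordinate of $C_1$ (or $C_2$) varies continuously and non-constantly. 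Corollary~\ref{cor:continuum} then gives a $\mu$ avoiding the finitely many successors admitted by the transition condition.

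The step I expect to be the main obstacle is the joint-stop bookkeeping for $\mathrm{VEC}$ in the second case. Depending on whether a joint stop falls between the two landings, the relevant transition is $C_0\to C_1$ or $C_1\to C_2$. If $a$'s own move was already inadmissible for $\mathrm{VEC}$, we are done. Otherwise $C_1$ is one of the two admissible rotated configurations, and the non-constant dependence on $\mu$ must be checked for $d_2$, relative to the finitely many targets generated from $d_1$. Every violation produced is a safety violation at a finite time. Any fair tail then completes the execution by Lemma~\ref{lem:concat}(1), and Lemma~\ref{lem:boundary-common-failure} concludes.
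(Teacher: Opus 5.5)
Your proposal is correct and follows the paper's proof step for step. It uses the same alternating-singleton execution $E_0$ and the same no-move case. It then splits into $p_E=B$, where $d_1=-d_0$ and Lemma~\ref{lem:p1_replay} give both failures, and $p_E\neq B$, where $b$ observes $a$ during its move, Lemma~\ref{lem:p1_first_move} handles $\VSS$, and Corollary~\ref{cor:continuum} handles $\mathrm{VEC}$. The joint-stop bookkeeping you flag resolves as you suggest, and the paper disposes of it with the phrase ``whether or not the configuration $(p_E,b_0)$ is counted'': because $b$'s Move starts only after $a$ lands, $C_1=(p_E,b_0)$ does not depend on $\mu$, so the admissible successors form a finite $\mu$-independent set, and $b$'s affine landing avoids it for all but finitely many $\mu$.
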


\begin{proof}
Let $\mathcal A$ be any deterministic algorithm of $\FS$ under $\ASY$,
and let $E_0$ be the execution of alternating singleton activations
$\{a\},\{b\},\dots$ under normalized frames (Lemma~\ref{lem:solo});
$E_0$ is fair and is determined by $\mathcal A$.

If $E_0$ contains no nonzero $\M$, condition $(\mathrm{P1})$ of $\VSS$
fails by Lemma~\ref{lem:p1_no_null}, while the sequence of
configurations reduces to $C_0$ and $d_0\not\perp d_0$, so the
attainment condition of $\mathrm{VEC}$ fails as well;
Lemma~\ref{lem:boundary-common-failure} gives the claim.

Otherwise, as in the proof of Lemma~\ref{lem:A2-16-FS-A}, the first
nonzero $\M$ is performed by $a$ from some state $s$ with
$\lambda_s>0$, while $b$ is in the same state $s$ and has no pending
operation. Let $p_E$ be its landing point.

\emph{Case $p_E=B$.} Follow $E_0$ up to the round following this $\M$.
When $a$ has landed, $d_1=b_0-B=-d_0$, which is none of the successors
admitted by the transition condition of $\mathrm{VEC}$; and
Lemma~\ref{lem:p1_replay} makes $b$, activated next, land at
$4d\notin P$, so the endpoint condition of $\VSS$ fails.

\emph{Case $p_E\neq B$.} Construct the execution as in the
corresponding case of Lemma~\ref{lem:A2-16-FS-A}: $b$ performs its
$\LK$ while $a$ is at $p_\mu$, $\mu\in(0,1)$ (Lemma~\ref{lem:mid}), and
its $\M$ is delayed until $a$ has landed. The $\M$ of $a$ is the
nonzero $\M$ with earliest completion time and does not land at $B$,
so by Lemma~\ref{lem:p1_first_move} no valid $T_1$ exists and
$\Phi_{\VSS}$ fails. The landing point of $b$ is affine in $\mu$ with
nonzero slope, while the successors admitted by the transition
condition of $\mathrm{VEC}$ are finitely many and independent of
$\mu$, whether or not the configuration $(p_E,b_0)$ is counted; so by
Corollary~\ref{cor:continuum} some $\mu$ avoids them and the
transition condition fails.

In both cases the two violations are safety violations forced by a
finite prefix, which extends to a fair execution by
Lemma~\ref{lem:concat}(1)(2); Lemma~\ref{lem:boundary-common-failure}
gives the claim.
\end{proof}

\begin{lemma}
\label{lem:A2-22-OB-F}
$\mathrm{VX}\notin\Sol(\OB^{F})$.
\end{lemma}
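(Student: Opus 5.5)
We reuse the single fully synchronous execution of Lemma~\ref{lem:rigidA1} for an arbitrary $\OB^F$ algorithm. It is taken with normalized frames at every $\LK$, so that by Lemma~\ref{lem:single-rule}(a) both robots apply one pair $(\lambda,\theta)$ in every round. We place coordinates $A=0$, $a_0=d$, $b_0=2d$, $B=3d$ and write $P=\{0,d,2d,3d\}$. The plan is to check that this one execution violates both predicates, exactly as in Lemmas~\ref{lem:A2-16-OB-F} and~\ref{lem:A2-9-OB-F}, and then invoke Lemma~\ref{lem:boundary-common-failure}.

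\emph{Violation of $\VSS$.} We copy Lemma~\ref{lem:A2-9-OB-F} verbatim, splitting into three cases.
\begin{enumerate}
\item If $\lambda=0$, no nonzero $\M$ occurs, so $(\mathrm{P1})$ fails by Lemma~\ref{lem:p1_no_null}. This is a liveness violation witnessed by the execution itself.
\item If $\lambda>0$ and $(\lambda,\theta)\neq(2,0)$, the round-$1$ $\M$ of $a$ is among the earliest completed nonzero moves and does not land at $B$. Then Lemma~\ref{lem:p1_first_move} makes $\Phi_{\VSS}$ unsatisfiable.
\item If $(\lambda,\theta)=(2,0)$, round~$1$ gives $(3d,0)$ and round~$2$ gives $(-3d,6d)$, as in Lemma~\ref{lem:vss_not_ob_f}. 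These points lie outside $P$, so $(\mathrm{P1}\text{-}b)$ fails.
\end{enumerate}

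\emph{Violation of $\mathrm{VEC}$.} We use Lemma~\ref{lem:vec_not_ob_f} on the same execution. If $\lambda=0$, the joint-stop sequence is $(C_0)$ and $d_0\not\perp d_0$, so attainment fails. If $c_2=0$, the robots coincide in round~$1$, giving $d_1=0$, which is not a permitted transition target. If $c_2>0$, both robots perform a nonzero $\M$ in every round. Then either an orthogonal configuration is reached and the next round moves both robots again, so stopping fails, or no orthogonal configuration is ever reached and attainment fails.

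The only point needing care is that the two liveness-type violations are witnessed by the same infinite execution. Since the execution is fixed before the case analysis and every violation is read off it, Lemma~\ref{lem:concat}(1)(3) holds trivially. Checking this compatibility is the only mild obstacle. The safety violations of case~(2) and case~(3) persist along the remaining rounds, so Lemma~\ref{lem:boundary-common-failure} yields the claim.
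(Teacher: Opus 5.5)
Your proposal is correct and follows the paper's proof essentially verbatim: it uses the same fully synchronous execution with normalized frames, takes the $\VSS$ violation from Lemma~\ref{lem:A2-9-OB-F}, and takes the $\mathrm{VEC}$ violation from Lemma~\ref{lem:vec_not_ob_f}, which is exactly what the paper's cited Lemma~\ref{lem:A2-16-OB-F} uses. It then combines the two on that single execution via Lemma~\ref{lem:boundary-common-failure}.
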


\begin{proof}
Let $\mathcal A$ be any deterministic algorithm of $\OB$ under $\FSY$
and take the fully synchronous execution in which both robots are
activated in every round (Lemmas~\ref{lem:rigidA1}
and~\ref{lem:rigidA2}), with a normalized frame at every $\LK$. The
violation of $\Phi_{\VSS}$ is that of
Lemma~\ref{lem:A2-9-OB-F} and the violation of $\Phi_{\mathrm{VEC}}$ is
that of Lemma~\ref{lem:A2-16-OB-F}. A single execution violates both
predicates, so Lemma~\ref{lem:boundary-common-failure} gives the claim.
\end{proof}

\subsubsection{\texorpdfstring{VSS $\vee$ CFSM (VC)}{VSS or CFSM}}
\label{ssub:f3-8}

\begin{lemma}
\label{lem:A2-23-OB-F}
$\mathrm{VC}\notin\Sol(\OB^{F})$.
\end{lemma}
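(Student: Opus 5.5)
Fix an arbitrary deterministic $\OB$ algorithm under $\FSY$ and use the same execution as in the sibling lemmas: the infinite fully synchronous execution of Lemma~\ref{lem:rigidA1}, with both robots activated in every round and a normalized frame at every $\LK$. In that execution the rule is a single pair $(\lambda,\theta)$ by Lemma~\ref{lem:single-rule}(a). We work on the instance $A=0$, $a_0=d$, $b_0=2d$, $B=3d$ and $P=\{0,d,2d,3d\}$. The aim is to show that this one execution violates $\Phi_{\VSS}$ and $\Phi_{\CFSM}$ simultaneously. We then conclude by Lemma~\ref{lem:boundary-common-failure}, since $\OB^F$ is the only element of $\mathcal N_A$ for this antichain.

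For $\VSS$ we reuse the three-case argument of Lemma~\ref{lem:A2-9-OB-F}, which applies verbatim to this execution.
\begin{itemize}
\item If $\lambda=0$, condition $(\mathrm{P1})$ fails by Lemma~\ref{lem:p1_no_null}, and the infinite execution itself witnesses this liveness failure.
\item If $\lambda>0$ and $(\lambda,\theta)\neq(2,0)$, the round-$1$ Moves complete earliest but do not land where Lemma~\ref{lem:p1_first_move} requires, so no valid $T_1$ exists.
\item If $(\lambda,\theta)=(2,0)$, round~$2$ sends the robots to $-3d$ and $6d$, both outside $P$, so the endpoint condition fails (Lemma~\ref{lem:vss_not_ob_f}).
\end{itemize}

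For $\CFSM$ we reuse Lemma~\ref{lem:cfsm_not_ob_f} on the same execution.
\begin{itemize}
\item If $\lambda=0$, then $\moves{a}=0$.
\item If $c_2=0$, the robots meet at the midpoint in round~$1$, which breaks collision freedom.
\item If $c_2>0$ and $\lambda>0$, every round is a nonzero Move for both robots, so $\moves{a}=\infty$.
\end{itemize}
In the second and third cases the violation is a safety violation at a finite time. In the first case the $\CFSM$ failure is a liveness failure, witnessed by the same infinite execution.

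The only step needing care is the pairing of the two case analyses. It may look as if the $\VSS$ and $\CFSM$ arguments could demand different executions, but both are carried out on the single fixed fully synchronous execution, so no conflict arises. The mixed case $\lambda=0$, where both failures are liveness failures, is covered by that one infinite execution, so the conditions of Lemma~\ref{lem:concat}(1)(3) hold. When $(\lambda,\theta)=(2,0)$ we have $c_2=3\neq0$, so the $\CFSM$ violation there comes from infinitely many moves, and it is compatible with the round-$2$ safety violation of $\VSS$.
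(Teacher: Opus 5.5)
Your proposal is correct and follows the paper's proof essentially verbatim. It uses the same fully synchronous normalized execution of Lemma~\ref{lem:rigidA1}, draws the $\VSS$ failure from Lemmas~\ref{lem:p1_no_null}, \ref{lem:p1_first_move} and~\ref{lem:vss_not_ob_f}, and draws the $\CFSM$ failure from the trichotomy $\lambda=0$, $c_2=0$, $\moves{a}=\infty$. The only cosmetic difference is in how the cases are grouped. The paper uses a single split ($\lambda=0$ / $c_2=0$ / otherwise) and, in the $c_2=0$ case, refutes $\VSS$ directly because both robots land at the midpoint $1.5d\notin P$. You instead absorb that case into your $(\lambda,\theta)\neq(2,0)$ branch via Lemma~\ref{lem:p1_first_move}.
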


\begin{proof}
Let $\mathcal A$ be any deterministic algorithm of $\OB$ under $\FSY$
and take the fully synchronous execution in which both robots are
activated in every round (Lemmas~\ref{lem:rigidA1}
and~\ref{lem:rigidA2}), with a normalized frame at every $\LK$.
 
By Lemma~\ref{lem:rigidA1}, if $\lambda=0$ then $\moves{r}=0$ and the
single-move condition of $\CFSM$ fails, while by
Lemma~\ref{lem:p1_no_null} the first phase of $\VSS$ cannot be
achieved. If $c_2=0$ the two robots coincide in round~$1$, so the
collision-freedom condition of $\CFSM$ fails, and their landing
point is the midpoint $1.5d\notin P$, so $\Phi_{\VSS}$ fails as
well. Otherwise both robots perform a nonzero $\M$ in every round, so
$\moves{r}=\infty$ and the single-move condition fails, while as in
Lemma~\ref{lem:vss_not_ob_f} the landings of round~$2$ leave the set
$P$ of designated points.
 
In each case a single execution violates both predicates, so
Lemma~\ref{lem:boundary-common-failure} gives the claim.
\end{proof}

\subsection{Disjunctions Containing MCv}
\label{app:cf-mcv-pairs}

This subsection treats the three antichains $\{\mathrm{MCv},P\}$ with
$P\in\{\VSS,\mathrm{VEC},\CFSM\}$, which carry four
boundary obligations in total.
For $\FC$, the constructions use the delayed-$\CP$
argument of Lemma~\ref{lem:mcv_not_fc_a}. For $\FS$, after establishing the
companion safety violation, we apply the continuation
statement of Lemma~\ref{lem:mcv_not_fs_am} from a configuration with distinct
positions, a common internal color, and no pending operation.

\subsubsection{\texorpdfstring{MCv $\vee$ VSS (MV)}{MCv or VSS}}
\label{ssub:f4-1}

\begin{lemma}
\label{lem:A2-19-FS-AM}
$\mathrm{MV}\notin\Sol(\FS^{A_M})$.
\end{lemma}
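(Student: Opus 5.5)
We fix an arbitrary deterministic $\FS$ algorithm under $A_M$ and follow the template of Lemmas~\ref{lem:A2-15-FS-AM} and~\ref{lem:A2-18-FS-AM}. The idea is to first force a safety violation of $\Phi_{\VSS}$ in a finite prefix using the exchange obstruction. We then hand the reached configuration to the continuation statement proved in Lemma~\ref{lem:mcv_not_fs_am}, which violates $\Phi_{\mathrm{MCv}}$. Throughout we use coordinates $A=0$, $a_0=d$, $b_0=2d$, $B=3d$ on the initial line and normalized frames at every $\LK$ at which the robots are distinct.

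\emph{Step 1 (prefix).} We start with alternating singleton activations, each cycle completed before the next begins; $A_M$ admits this by Lemma~\ref{lem:solo}. Until the first nonzero $\M$ both robots traverse the same color orbit $c_0,g(c_0),\dots$. We then split into cases.

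\begin{enumerate}
\item If no nonzero $\M$ ever occurs, $(\mathrm{P1})$ fails by Lemma~\ref{lem:p1_no_null}. The robots also stay at distinct fixed positions, so convergence in $\Phi_{\mathrm{MCv}}$ fails on the same fair execution.
\item Otherwise $a$ performs the first nonzero $\M$. If it does not land at $B$, the exchange phase is unsatisfiable by Lemma~\ref{lem:p1_first_move}.
\item If it does land at $B$, we activate $b$ next. By Lemma~\ref{lem:p1_replay}, $b$ lands at $4d\notin P$.
\end{enumerate}

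In cases 2 and 3 this is a safety violation of $\VSS$. We end the prefix at the completion of the cycle that establishes it; this is a joint stop with no pending operation.

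\emph{Step 2 (hypotheses of the continuation).} The continuation statement of Lemma~\ref{lem:mcv_not_fs_am} needs three things: distinct positions, a \emph{common} internal color, and no pending operation. The common-color requirement is the delicate point. In case 3 both robots have performed the same number of activations along the same orbit, so they share a color. In case 2 the robot $a$ is one activation ahead, so before handing over we activate $b$ once more. This keeps the $\VSS$ violation, since it is a safety violation and is preserved by Lemma~\ref{lem:concat}(1). If no collision occurs, both robots then carry the same $g^{k+1}(c_0)$, because every $\LK$ observed distinct positions (Lemma~\ref{lem:norm}(1)).

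\emph{Step 3 (coincidences).} If the robots ever coincide, we need a separate argument, since the continuation statement assumes distinct positions. There are two subcases.

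\begin{itemize}
\item If the first nonzero move of $a$ passes through the stationary $b$ (for instance $a$ moving from $d$ to $3d$ while $b$ sits at $2d$), the distance drops to $0$ and then increases again within that move. This already violates monotonicity of $\Phi_{\mathrm{MCv}}$.
\item If instead a robot lands exactly on its peer, we could look for a contradiction with convergence. The simpler route is to reorder the construction so the landing robot's partner moves first. Failing that, we note that a landing at the peer is a landing outside $P$ unless the peer is at a designated point. We then use a coincidence-continuation analogous to Lemma~\ref{lem:zcc_not_fc_s}: at the first nonzero output from coincidence, the adversary chooses a displacement (Lemma~\ref{lem:norm}(2)), which makes the distance increase from $0$. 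If every output from coincidence is zero, the configuration is frozen at a coincidence.
\end{itemize}

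The frozen-coincidence case is the main obstacle. There MCv is satisfied rather than violated, so we cannot simply stop. We would therefore aim to exclude it via the replay structure: a landing on the peer at $b_0$ or $a_0$ during the exchange prefix means a first move of normalized rule $(1,0)$. In that case we switch to a different schedule that yields an overtaking, namely a delayed move as in Corollary~\ref{cor:stale}, in which $b$ computes the same $(1,0)$ displacement while $a$ has already moved. That overtaking violates MCv, and the $\VSS$ violation still persists.
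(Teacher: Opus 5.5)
Your proposal is correct and follows essentially the paper's route: the exchange obstruction of Lemmas~\ref{lem:p1_first_move} and~\ref{lem:p1_replay} forces a finite violation of $\VSS$, activating $b$ once after the first nonzero $\M$ of $a$ equalizes the colors, the continuation statement of Lemma~\ref{lem:mcv_not_fs_am} then defeats $\mathrm{MCv}$, and the rule $(1,0)$, the only case in which a robot lands on its peer, is treated separately. For that rule the paper activates both robots together so that they swap (distance $d\to0\to d$, landing $(2d,d)\neq(3d,0)$), whereas you let $b$ compute the $(1,0)$ move from the initial snapshot in state $g^k(c_0)$ and release it only after $a$ has landed on $b$, which works equally well; the other options you try first in Step~3 (reordering the movers, a landing outside $P$, the coincidence continuation) do not help and can be dropped.
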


\begin{proof}
Let $\mathcal A$ be any deterministic algorithm of $\FS$ under $A_M$,
which admits both the simultaneous activation of the two robots and
synchronous activations of one robot at a time
(Lemma~\ref{lem:solo}). Use coordinates $A=0$, $a_0=d$, $b_0=2d$,
$B=3d$ on the initial line, with $P=\{0,d,2d,3d\}$, and normalized
frames at every $\LK$ at which the robots are distinct.
 
Activate both robots simultaneously while every output is zero. By
Lemma~\ref{lem:rigidA2} they carry the same color at every round
boundary and the configuration stays at $\iota$.
 
If every output along this execution is zero, continue forever.
Condition $(\mathrm{P1})$ of $\VSS$ fails by
Lemma~\ref{lem:p1_no_null}, and the two robots remain at distinct
positions, so the convergence condition of $\mathrm{MCv}$ fails as
well. This execution witnesses both liveness violations.
 
Otherwise let $\gamma$ be the common color at the start of the first
round with a nonzero output, let $(\lambda,\theta)$, with
$\lambda\neq0$, be the corresponding normalized displacement, and put
$c_1:=c_1(\gamma)$. Since $\mathcal A$ is deterministic, these are
known before that round is scheduled.
 
\emph{Case $c_1=0$, that is $(\lambda,\theta)=(1,0)$.} Activate both
robots in that round. Each moves to the position of its peer, so they
exchange positions and the distance returns to $d$ after passing
through $0$; the monotonicity condition of $\mathrm{MCv}$ therefore
fails, a safety violation at a finite time. The cumulative nonzero-move
counts are $(1,1)$, so an admissible exchange configuration would have
to be $(B,A)=(3d,0)$, whereas the landing is $(2d,d)$; no valid $T_1$
exists and $\VSS$ fails by Lemma~\ref{lem:p1_first_move}. Append an
arbitrary admissible fair, time-divergent tail, which preserves both
violations.
 
\emph{Case $c_1\neq0$.} Define the execution as an interval of
alternating singleton activations carried up to the round following the
first nonzero $\M$, each cycle being completed before the next one
begins, followed by the continuation described below.
 
The first nonzero $\M$ is performed by the robot $a$, which is
activated first. Either it does not land at $B$, so that the
first-phase clauses become unsatisfiable by the argument of
Lemma~\ref{lem:p1_first_move}, or it does and
Lemma~\ref{lem:p1_replay} makes the robot $b$, activated next, land at
$4d\notin P$; in either case the endpoint condition of $\VSS$ fails, a
safety violation. Since $c_1\neq0$, neither of these two Move
operations lands on the peer, so the two robots are at distinct
positions at the end of this interval.
 
Both robots have now been activated equally often, and every $\LK$ so
far has observed distinct positions, so by Lemma~\ref{lem:norm}(1)
they carry the same color.
At this point the robots occupy distinct positions, carry
the same internal color, and have no pending operation.
Apply the continuation statement proved in
Lemma~\ref{lem:mcv_not_fs_am}.
It supplies a fair, time-divergent continuation violating
$\Phi_{\mathrm{MCv}}$, while all safety violations established
in the prefix persist.
 
In every case a single execution violates both predicates, the
violations being finite except where stated, so
Lemma~\ref{lem:concat}(1)(2)(3) is satisfied and
Lemma~\ref{lem:boundary-common-failure} gives the claim.
\end{proof}

\subsubsection{\texorpdfstring{MCv $\vee$ VEC (MX)}{MCv or VEC}}
\label{ssub:f4-2}

\begin{lemma}
\label{lem:A2-20-FC-A}
$\mathrm{MX}\notin\Sol(\FC^{A})$.
\end{lemma}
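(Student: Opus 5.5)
We fix an arbitrary deterministic $\FC$ algorithm under $\ASY$ and build one admissible fair execution that violates both $\Phi_{\mathrm{MCv}}$ and $\Phi_{\mathrm{VEC}}$. The skeleton is the delayed-Compute construction of Lemma~\ref{lem:mcv_not_fc_a}. We put in front of it a prefix in which the $\mathrm{VEC}$ violation is forced before the $\mathrm{MCv}$ continuation is released.

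\emph{Synchronous prefix.} First run fully synchronous cycles with identical normalized observations. By Lemma~\ref{lem:rigidA2} the colors stay equal and the configuration stays at $\iota$ while every output is zero. If no nonzero output ever occurs, the robots remain at distinct fixed positions forever. Then convergence of $\mathrm{MCv}$ fails, and the joint-stop sequence reduces to $C_0$ with $d_0\not\perp d_0$, so the attainment condition of $\mathrm{VEC}$ fails. This one execution witnesses both liveness violations.

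\emph{Nonzero output.} Otherwise let $\gamma$ be the common color at the first round with a nonzero output, and let $(\lambda,\theta)$, $\lambda>0$, be its rule, with $c=c_1(\gamma)$. Let $b$ perform its $\LK$ and delay its $\CP$ (Lemma~\ref{lem:suspend}); its saved snapshot fixes a nonzero displacement $v$. During this delay $b$ is stationary with visible color $\gamma$, so $a$ may be run solo for any finite number of cycles with the single rule $(\lambda,\theta)$ (Lemmas~\ref{lem:single-rule}(b), \ref{lem:pumpB1}). The robot $b$ has performed only a $\LK$, so it is not inside a nonzero $\M$ and these solo landings produce genuine joint stops. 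The $\mathrm{VEC}$ argument of Lemma~\ref{lem:A2-6-FC-S} then applies to $a$'s solo moves, with one case split.

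If $c=0$, the first move lands on $b$. This gives $d_1=0$, so the transition condition fails. The distance then stays $0$ at $a$'s landing, and releasing $b$'s pending nonzero move $v$ strictly increases the distance, so monotonicity fails.

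If the displacement is not one of the two admissible rotations, the transition condition of $\mathrm{VEC}$ fails at the first solo move.

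If it is an admissible rotation, $a$ reaches an orthogonal configuration within two moves and then, activated again, moves again, so the stopping condition fails. The $\mathrm{MCv}$ side is settled by the cases of Lemma~\ref{lem:mcv_not_fc_a}. If $c\ge1$, a single solo move of $a$ already breaks monotonicity. The $90^\circ$ rotation about the peer has $c=1$, which falls under this case, and the $45^\circ$ rotation has $c=1/\sqrt2$. If $0\le c<1$, continue the solo interval until $\|w_k\|<\|v\|/2$. Then keep $a$ stationary, let $b$ complete its $\CP$ and its Move, and Lemma~\ref{lem:overtake}(1) yields a distance increase.

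In every branch both violations are safety violations, apart from the no-move branch, so we complete any pending cycle and append an arbitrary fair, time-divergent tail (Lemma~\ref{lem:concat}(1)), and Lemma~\ref{lem:boundary-common-failure} concludes. The step I expect to need most care is the ordering of the $\mathrm{VEC}$ stopping violation against the extended solo interval in the contraction case. The orthogonal configuration and the subsequent re-move must both occur among $a$'s solo moves while $b$ is still merely suspended, so that these joint stops are legitimate terms of $(C_i)$. We therefore take the solo interval long enough for both purposes, at least three moves and enough moves to reach $\|w_k\|<\|v\|/2$, before releasing $b$.
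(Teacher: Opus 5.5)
Your proof is correct and follows essentially the same route as the paper's. You run a zero-output synchronous prefix, let $b$ perform its Look with its Compute suspended so that $a$ runs solo under the single rule $(\lambda_\gamma,\theta_\gamma)$ (one activation if $c=0$, otherwise at least three and enough to shrink the distance), and take the $\mathrm{VEC}$ violation from Lemma~\ref{lem:A2-6-FC-S} and the $\mathrm{MCv}$ violation from the $c$-case split of Lemma~\ref{lem:mcv_not_fc_a}.

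One small slip: the transition condition of $\mathrm{VEC}$ also admits $d_{i+1}=2^{1/2}R_{-45^\circ}d_i$, which has $c=\sqrt2$. This is harmless, because your $\mathrm{MCv}$ argument splits on $c$ itself and that case falls under $c\ge1$.
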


\begin{proof}
Let $\mathcal A$ be any deterministic algorithm of $\FC$ under $\ASY$.
Run fully synchronous cycles with normalized frames; by
Lemma~\ref{lem:rigidA2} the two robots carry the same color at every
round boundary, and the configuration stays at $\iota$ while every
output is zero.

If every output along this execution is zero, continue forever. No
nonzero $\M$ occurs and the two robots remain at distinct positions, so
the convergence condition of $\mathrm{MCv}$ fails, while the sequence
of joint-stop configurations reduces to $C_0$ and $d_0\not\perp d_0$,
so the attainment condition of $\mathrm{VEC}$ fails as well. This
execution witnesses both liveness violations.

Otherwise stop the zero-Move prefix just before the first round with a
nonzero output, let $\gamma$ be the common color at this point, so that
$\lambda_\gamma>0$, and put $c:=c_1(\gamma)$. Build from here the
delayed-Comp interval of Lemma~\ref{lem:mcv_not_fc_a}: let $b$ perform
its $\LK$ and delay its $\CP$, the saved snapshot determining the
nonzero displacement $v=\lambda_\gamma R_{\theta_\gamma}(w_{\mathrm{old}})$,
where $w_{\mathrm{old}}$ is the vector from $b$ to $a$ at that $\LK$.
Throughout the delay the position of $b$ and its visible color $\gamma$
are unchanged, so by Lemma~\ref{lem:single-rule}(b) the robot $a$ may
be activated alone for any prescribed finite number of complete cycles,
applying the same pair $(\lambda_\gamma,\theta_\gamma)$, with
$\lambda_\gamma\neq0$, at each of them. Activate $a$ alone $k$ times,
where $k=1$ if $c=0$ and otherwise $k\ge3$ is large enough for what
follows, and append an arbitrary admissible fair, time-divergent tail.

\emph{Violation of $\Phi_{\mathrm{VEC}}$.}
If $c=0$, the robot $a$ lands on the position of $b$ at the first
activation and $d_1=0$, while all the targets permitted by the
transition condition are nonzero, so that condition fails. If
$c\neq0$, then as in Lemma~\ref{lem:A2-6-FC-S} the only single
displacements satisfying the transition condition rotate the segment
clockwise by $45^\circ$ or by $90^\circ$; if $a$ departs from them the
transition condition fails, and otherwise $a$ reaches a configuration
with $d_i\perp d_0$ in at most two moves and then repeats the same
displacement, so the stopping condition fails. Either way this is a
safety violation inside the interval.

\emph{Violation of $\Phi_{\mathrm{MCv}}$.}
If $c>1$, the first of these Move operations already increases the
distance between its starting and ending times. If $c=1$, its starting
and ending positions are distinct points at the same distance from the
stationary robot $b$, so the midpoint of the movement segment is
strictly closer to $b$ than either endpoint and the distance increases
between the time $a$ passes that midpoint and the completion of the
Move. In both cases the monotonicity condition fails.

If $c<1$, choose $k$ above so large that the vector $w_k$ from $b$ to
$a$ after $k$ complete cycles satisfies
$\|w_k\|=c^{\,k}\|w_{\mathrm{old}}\|<\|v\|/2$; for $c=0$ this already
holds with $k=1$. Keep $a$ stationary and let $b$ complete its $\CP$
using the saved snapshot, update its color, and perform the resulting
$\M$ with displacement $v$. The distance immediately before that $\M$
is $\|w_k\|$, whereas at its completion it satisfies
\[
  \|w_k-v\|\ \ge\ \|v\|-\|w_k\|\ >\ \|w_k\| ,
\]
so the monotonicity condition fails again.

Both violations are finite and occur inside the interval, so they may
be concatenated by Lemma~\ref{lem:concat}(1) and are preserved by the
fair tail. Lemma~\ref{lem:boundary-common-failure} gives the claim.
\end{proof}

\begin{lemma}
\label{lem:A2-20-FS-A}
$\mathrm{MX}\notin\Sol(\FS^{A})$.
\end{lemma}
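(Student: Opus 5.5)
We need one admissible fair execution of an arbitrary $\FS$ algorithm under $\ASY$ that violates both $\Phi_{\mathrm{MCv}}$ and $\Phi_{\mathrm{VEC}}$. By Proposition~\ref{prop:atomicity-equivalences} it suffices to work under $A_{LC}$, which permits observation during movement (Lemma~\ref{lem:mid}) and delayed moves (Corollary~\ref{cor:stale}). The construction follows Lemma~\ref{lem:vec_not_fs_a}. Begin with a fully synchronous prefix under normalized frames, so that by Lemma~\ref{lem:rigidA2} the two robots share a state sequence $s_0,s_1,\dots$.

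If $\lambda_{s_k}=0$ for every $k$, continue the fully synchronous execution forever. The robots never move and stay distinct, so the convergence condition of $\mathrm{MCv}$ fails. The joint-stop sequence reduces to $C_0$, and since $d_0\not\perp d_0$ the attainment condition of $\mathrm{VEC}$ fails. This single fair execution witnesses both liveness violations.

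Otherwise let $s$ be the first common state with $\lambda_s>0$, and truncate the prefix there; the configuration is still $\iota$. Activate $a$ alone so that it performs a nonzero $\M$ from $a_0$ to $p_E$. Using Lemma~\ref{lem:mid}, let $b$, also in state $s$, perform its $\LK$ while $a$ is at $p_\mu=(1-\mu)a_0+\mu p_E$. Let $b$ start its $\M$ before $a$ lands, so that no joint stop separates the two landings, exactly as in Lemma~\ref{lem:vec_not_fs_a}. Then $d_1(\mu)$ is affine in $\mu$ with nonzero slope, and Corollary~\ref{cor:continuum} lets us avoid the four successors admitted by the transition condition of $\mathrm{VEC}$ and also the countably many values at which the robots coincide. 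This gives a safety violation of $\Phi_{\mathrm{VEC}}$ at a finite time.

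The main obstacle is forcing the $\mathrm{MCv}$ violation on the same execution without losing the freedom in $\mu$. My first attempt is to read it off directly from the two overlapping moves. During $a$'s solo move the distance is $\|(1-u)\,1 - u\lambda e^{\mathrm i\theta}\cdots\|$, a convex function along the segment, so it is monotone non-increasing only if $c_1(s)<1$. If $c_1(s)\ge1$, the solo prefix of $a$'s move already increases the distance, or at $c_1=1$ increases it after the segment's midpoint as in Lemma~\ref{lem:mcv_not_fc_a}, and that violation occurs before $b$'s Look for suitable $\mu$.

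If $c_1(s)<1$, I would fall back on an escape. Once the $\mathrm{VEC}$ violation is fixed, complete all pending operations. Check whether the reached state satisfies the hypotheses of the continuation statement of Lemma~\ref{lem:mcv_not_fs_am}: distinct positions, which $\mu$ guarantees; a common internal color; and no pending operation. The colors are common, because both robots performed exactly one cycle from $s$ after observing distinct positions, so each now carries $g(s)$. That continuation then yields a fair, time-divergent tail violating $\Phi_{\mathrm{MCv}}$. The remaining point to verify is that this continuation is admissible under $A_{LC}$ as well as $A_M$, since it uses only solo activations and delayed moves, which Corollary~\ref{cor:stale} permits in both.

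Finally, concatenate by Lemma~\ref{lem:concat}(1)(3): the $\mathrm{VEC}$ safety violation comes first, and the $\mathrm{MCv}$ violation, whether finite or realized by the continuation, comes after it. Lemma~\ref{lem:boundary-common-failure} then gives the claim.
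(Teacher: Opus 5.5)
Your proposal is correct and follows the paper's proof: a fully synchronous prefix up to the first state $s$ with $\lambda_s>0$, an observation of $a$'s move by $b$ (Lemma~\ref{lem:mid}) so that Corollary~\ref{cor:continuum} defeats the finitely many successors admitted by $\mathrm{VEC}$, and then the continuation of Lemma~\ref{lem:mcv_not_fs_am} from distinct positions with common color $g(s)$ and no pending operation. The differences are cosmetic. The paper delays $b$'s Move until $a$ has landed rather than overlapping the two Moves. It also works directly under \ASY, so your $A_{LC}$ admissibility check is unnecessary, and your $c_1(s)\ge1$ split is subsumed by the continuation. Do state explicitly that $\mu$ also avoids the single value with $p_\mu=b_0$, since your common-color claim needs $b$'s Look to see distinct positions.
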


\begin{proof}
Let $\mathcal A$ be any deterministic algorithm of $\FS$ under $\ASY$.
Run a fully synchronous prefix, which $\ASY$ admits; by
Lemma~\ref{lem:rigidA2} the two robots carry the same state at every
round boundary.
 
If $\lambda_s=0$ for every state $s$ of that common sequence, this fair
execution contains no nonzero $\M$: the robots stay at their distinct
initial positions, so the convergence condition of $\mathrm{MCv}$
fails, and the sequence of configurations reduces to $C_0$ with
$d_0\not\perp d_0$, so the attainment condition of $\mathrm{VEC}$ fails
as well.
 
Otherwise truncate the prefix at the round in which both robots carry
the first state $s$ with $\lambda_s>0$; the configuration is still
$\iota$. Activate $a$ alone, so that it performs a nonzero $\M$ from
$a_0$ to some $p_E\neq a_0$, and by Lemma~\ref{lem:mid} let $b$, in
state $s$, perform its $\LK$ while $a$ is at
$p_\mu=(1-\mu)a_0+\mu p_E$, $\mu\in(0,1)$, delaying the $\M$ of $b$
until $a$ has landed. By Lemma~\ref{lem:norm}(1) the displacement of $b$ is
$v=\lambda_s R_{\theta_s}(p_\mu-b_0)$, so its landing point is affine
in $\mu$ with nonzero slope $\lambda_s R_{\theta_s}(p_E-a_0)$. We
exclude the at most one $\mu$ with $p_\mu=b_0$, so that $v\neq0$, and
the at most one $\mu$ for which $b$ lands at $p_E$.
 
\emph{VEC.} The successors admitted by the transition condition of
$\mathrm{VEC}$ are finitely many and independent of $\mu$, whether or
not the configuration $(p_E,b_0)$ is counted, so by
Corollary~\ref{cor:continuum} some $\mu$ avoids them as well, and the
transition condition fails at a finite time (cf.\
Lemma~\ref{lem:vec_not_fs_a}).
 
\emph{MCv.} If $p_E=b_0$, then $a$ lands on $b$, and the $\M$ of $b$
raises the distance from $0$ to $\|v\|>0$, so the monotonicity
condition fails. Otherwise the robots are at distinct positions after
both landings, and we use the following dichotomy, applied to every
nonzero $\M$ whose peer is at rest. Along such a $\M$ the distance to
the stationary peer is non-increasing and then non-decreasing. If the
landing point is not the point of the segment closest to the peer, the
distance strictly increases at the end and the monotonicity condition
fails; we may therefore assume that every such $\M$ ends at its closest
approach.
 
Both robots have performed the same number of activations, and every
$\LK$ so far has observed distinct positions, so by
Lemma~\ref{lem:norm}(1) they carry the same color.
At this point the robots occupy distinct positions, carry
the same internal color, and have no pending operation.
Apply the continuation statement proved in
Lemma~\ref{lem:mcv_not_fs_am}.
It supplies a fair, time-divergent continuation violating
$\Phi_{\mathrm{MCv}}$, while all safety violations established
in the prefix persist.
 
In every case a single execution violates both predicates.
Whenever a nonzero output occurs, the violation of
$\Phi_{\mathrm{VEC}}$ is a safety violation at a finite time.
In the case without nonzero $\M$, both liveness violations are
witnessed by the explicit fair execution given there. Otherwise
$\Phi_{\mathrm{VEC}}$ has already suffered a safety violation.
The continuation from Lemma~\ref{lem:mcv_not_fs_am} witnesses either
a finite monotonicity violation or a failure of convergence.
Thus all predicates fail on the same fair infinite execution. Hence
Lemma~\ref{lem:concat}(1)(2)(3) is satisfied, and
Lemma~\ref{lem:boundary-common-failure} gives the claim.
\end{proof}

\subsubsection{\texorpdfstring{MCv $\vee$ CFSM (MC)}{MCv or CFSM}}
\label{ssub:f4-3}

\begin{lemma}
\label{lem:A2-21-FC-A}
$\mathrm{MC}\notin\Sol(\FC^{A})$.
\end{lemma}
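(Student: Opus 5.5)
We want a single admissible fair $\FC^{A}$ execution violating both $\Phi_{\mathrm{MCv}}$ and $\Phi_{\CFSM}$. We copy the construction of Lemma~\ref{lem:A2-20-FC-A}, which uses the delayed-$\CP$ interval of Lemma~\ref{lem:mcv_not_fc_a}, and replace the $\mathrm{VEC}$ argument by the move-count and collision argument of Lemma~\ref{lem:A2-7-FC-S}.

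\emph{Zero prefix.} Run fully synchronous cycles with normalized frames. By Lemma~\ref{lem:rigidA2} the colors stay equal and the configuration stays at $\iota$ while every output is zero. If no nonzero output ever occurs, continue forever. The robots stay at distinct positions, so convergence of $\mathrm{MCv}$ fails. Also $\moves{a}=0\neq1$, so $\CFSM$ fails. This single fair execution witnesses both liveness violations.

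\emph{Nonzero output.} Otherwise, stop just before the first round with a nonzero output. Let $\gamma$ be the common color, with $\lambda_\gamma>0$, and put $c:=c_1(\gamma)$.
\begin{enumerate}
\item Let $b$ perform its $\LK$ and delay its $\CP$ (Lemma~\ref{lem:suspend}). Its saved snapshot fixes the nonzero displacement $v=\lambda_\gamma R_{\theta_\gamma}(w_{\mathrm{old}})$.
\item During this delay, $b$ stays put and its visible color stays $\gamma$. By Lemma~\ref{lem:single-rule}(b), $a$ can be activated alone for $k$ complete cycles with the fixed rule $(\lambda_\gamma,\theta_\gamma)$. Take $k=1$ if $c=0$ and $k\ge2$ otherwise, with $k$ large as needed below.
\end{enumerate}

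\emph{CFSM violation.} If $c=0$, $a$ lands on $b$ at its first activation, so collision-freedom fails. If $c\neq0$, the distance stays positive, so $a$ makes at least two nonzero moves and $\moves{a}\ge2$. Either case is a safety violation inside the interval.

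\emph{MCv violation.} This is the case split of Lemma~\ref{lem:mcv_not_fc_a}.
\begin{enumerate}
\item If $c>1$, the first move of $a$ already increases the distance.
\item If $c=1$, the distance increases after $a$ passes the midpoint of its segment.
\item If $c<1$, choose $k$ with $c^k\|w_{\mathrm{old}}\|<\|v\|/2$; for $c=0$, $k=1$ already works. Keep $a$ stationary and release $b$'s $\CP$ and $\M$ with the stale displacement $v$. Lemma~\ref{lem:overtake}(1) gives $\|w_k-v\|>\|w_k\|$, so monotonicity fails.
\end{enumerate}

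\emph{Conclusion.} Complete any pending cycle and append an arbitrary admissible fair, time-divergent tail. Both violations are finite, so Lemma~\ref{lem:concat}(1) preserves them. Lemma~\ref{lem:boundary-common-failure} then gives the claim.

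\emph{Main obstacle.} The delicate point is admissibility. We need $a$'s repeated cycles to run while $b$ holds a pending $\CP$, and $a$'s rule to stay fixed throughout. This is exactly what Lemma~\ref{lem:mcv_not_fc_a} relies on: $a$ sees only $b$'s color, which does not change until $b$'s $\CP$.
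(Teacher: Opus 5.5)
Your proposal is correct and follows the paper's proof essentially step for step: the same zero prefix, the same delayed-$\CP$ interval of Lemma~\ref{lem:mcv_not_fc_a} with $k=1$ for $c=0$ and $k\ge2$ otherwise, and the same case splits for $\CFSM$ (collision or $\moves{a}\ge2$) and for $\mathrm{MCv}$ ($c>1$, $c=1$, $c<1$). The only cosmetic difference is that you cite Lemma~\ref{lem:overtake}(1) for the final inequality, whereas that lemma's setting delays the $\M$ rather than the $\CP$; since you use only its triangle-inequality conclusion, this is harmless.
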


\begin{proof}
Let $\mathcal A$ be any deterministic algorithm of $\FC$ under $\ASY$.
Run fully synchronous cycles with normalized frames; by
Lemma~\ref{lem:rigidA2} the two robots carry the same color at every
round boundary, and the configuration stays at $\iota$ while every
output is zero.

If every output along this execution is zero, continue forever. No
nonzero $\M$ occurs, so $\moves{a}=0\neq1$ and the single-move
condition of $\CFSM$ fails, while the two robots remain at distinct
positions and the convergence condition of $\mathrm{MCv}$ fails as
well. This execution witnesses both liveness violations.

Otherwise stop the zero-Move prefix just before the first round with a
nonzero output, let $\gamma$ be the common color at this point, so that
$\lambda_\gamma>0$, and put $c:=c_1(\gamma)$. Build from here the
delayed-Comp interval of Lemma~\ref{lem:mcv_not_fc_a}, exactly as in
Lemma~\ref{lem:A2-20-FC-A}: the robot $b$ performs its $\LK$ and delays
its $\CP$, the saved snapshot determining the nonzero displacement
$v=\lambda_\gamma R_{\theta_\gamma}(w_{\mathrm{old}})$, and during the
delay the robot $a$ is activated alone $k$ times, applying the same
pair $(\lambda_\gamma,\theta_\gamma)$ at each of them
(Lemma~\ref{lem:single-rule}(b)). Take $k=1$ if $c=0$ and otherwise
$k\ge2$ large enough for what follows, and append an arbitrary
admissible fair, time-divergent tail.

\emph{Violation of $\Phi_{\CFSM}$.}
If $c=0$, the robot $a$ lands on the position of $b$ at the first
activation, so $a(t)=b(t)$ at that time and the collision-freedom
condition fails. If $c\neq0$, the $k\ge2$ activations are all nonzero
$\M$ operations, so $\moves{a}\ge2\neq1$ and the single-move condition
fails. Either way this is a safety violation inside the interval.

\emph{Violation of $\Phi_{\mathrm{MCv}}$.}
The three cases $c>1$, $c=1$ and $c<1$ are those of
Lemma~\ref{lem:A2-20-FC-A} and the arguments there apply verbatim: a
single Move of $a$ already breaks monotonicity when $c\ge1$, and when
$c<1$ the choice $\|w_k\|<\|v\|/2$ makes the distance grow from
$\|w_k\|$ to $\|w_k-v\|\ge\|v\|-\|w_k\|>\|w_k\|$ when $b$ releases its
$\M$.

Both violations are finite and occur inside the interval, so they may
be concatenated by Lemma~\ref{lem:concat}(1) and are preserved by the
fair tail. Lemma~\ref{lem:boundary-common-failure} gives the claim.
\end{proof}

\subsection{\texorpdfstring{DMSD $\vee$ CFSM (DC)}{DMSD or CFSM}}
\label{app:cf-dmsd-cfsm}
\label{ssub:f5}

This subsection treats the single antichain
$\{\DMSD,\CFSM\}$, whose maximal negative boundary is
the one class $\OB^{A}$. The construction uses an observation made
during the movement of the peer to produce a non-dyadic distance at a
joint stop, and then one further nonzero $\M$ of a robot that has
already moved.

\begin{lemma}
\label{lem:A2-24-OB-A}
$(\DMSD \vee \CFSM)\notin\Sol(\OB^{A})$.
\end{lemma}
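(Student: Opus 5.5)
We will show that the single displacement rule of an $\OB$ algorithm cannot avoid both failures, by reusing the observation-during-movement construction of Lemma~\ref{lem:dmsd_not_fs_a} and then appending one further nonzero $\M$. Fix an arbitrary deterministic $\OB$ algorithm. By Lemma~\ref{lem:single-rule}(a), in the normalized frame of Lemma~\ref{lem:norm}(1) its rule is a single pair $(\lambda,\theta)$. Normalize $a(0)=0$, $b(0)=e_1$ and $D_0=1$.

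\emph{Case $\lambda=0$.} No robot ever performs a nonzero $\M$, under any schedule. Take any fair, time-divergent execution, for instance alternating complete cycles. At every joint stop the distance equals $D_0$, so (D2) fails. Moreover $\moves{a}=0\neq1$, so the single-move condition of $\CFSM$ fails. Both are liveness violations of this one explicit fair execution, as Lemma~\ref{lem:concat}(3) requires.

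\emph{Case $\lambda>0$.} Start from $\iota$ with no pending operation, and let $b$ alone perform a nonzero $\M$ to $p_E=e_1-\lambda R_\theta e_1\neq b(0)$. By Lemma~\ref{lem:mid}, which is admissible for $K=A$, let $a$ perform its $\LK$ while $b$ is at $p_\mu=e_1-\mu\lambda R_\theta e_1$, with $\mu\in(0,1)$. Then $a$ lands at $\lambda R_\theta p_\mu$, with the same rule, since $\OB$ has no states. After both $\M$ operations complete, delay the next $\LK$ of each robot by a finite amount. That delay interval lies in $\Sab$, and on it the distance is
\[
  h(\mu)=\|c+\mu w\|,\qquad c=e_1-2\lambda R_\theta e_1,\qquad w=\lambda^2R_\theta^2e_1\neq0,
\]
exactly as in Lemma~\ref{lem:dmsd_not_fs_a}. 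The function $h$ is continuous and nonconstant on $(0,1)$. By Corollary~\ref{cor:continuum}, choose $\mu$ outside two sets simultaneously:
\begin{enumerate}
\item the countable set of $\mu$ with $h(\mu)\in D_0\Dyad$;
\item the set of $\mu$ with $h(\mu)=0$, which has at most two elements because $h^2$ is a nonconstant quadratic.
\end{enumerate}
This gives a joint stop with non-dyadic ratio, a safety violation of (D1), and therefore of $\Phi_{\DMSD}$. It also leaves the robots at distinct positions after both landings.

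\emph{The $\CFSM$ violation.} If the robots met at some instant during the two overlapping moves, collision-freedom already fails; this is a safety violation. Otherwise, from the reached state (distinct positions, no pending operation), activate $a$ once more with a complete cycle. Its $\LK$ observes distinct positions and $\lambda>0$, so this is a second nonzero $\M$ of $a$. Hence $\moves{a}\ge2$, a safety violation of the single-move condition. Finally append an arbitrary fair, time-divergent tail. By Lemma~\ref{lem:concat}(1)(2), both finite violations persist. Lemma~\ref{lem:boundary-common-failure} then yields the claim for the unique boundary class $\OB^A$.

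The main point needing care is the choice of $\mu$. A single $\mu$ must give a non-dyadic joint-stop distance and also keep the robots apart. Only the second property guarantees that the extra activation produces a nonzero $\M$ rather than an output at coincidence, which the adversary could not control. Both exceptional sets are countable, so Corollary~\ref{cor:continuum} handles them together.
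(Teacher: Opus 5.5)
Your proof is correct and follows the paper's construction: a solo nonzero $\M$ of $b$, observed by $a$ mid-movement via Lemma~\ref{lem:mid} with $\mu$ chosen to make the joint-stop distance $h(\mu)$ non-dyadic, followed by one further solo activation of $a$ giving $\moves{a}\ge2$. The paper's synchronous prefix is vacuous for $\OB$, and your separate exclusion of $h(\mu)=0$ is redundant because $0\in\Dyad$.

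One small point needs tightening. You should also exclude the at most one $\mu$ with $p_\mu=a(0)$, which occurs when $\theta\equiv0$ and $\lambda>1$. At that $\mu$, the $\LK$ of $a$ sees a coincidence, so neither the landing formula $\lambda R_\theta p_\mu$ (and hence $h$) nor the nonzeroness of $a$'s first $\M$ is guaranteed. The paper treats this case explicitly.
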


\begin{proof}
Let $\mathcal A$ be any deterministic algorithm of $\OB$ under $\ASY$.
Define the execution as a finite fully synchronous prefix freezing the
common state (Lemma~\ref{lem:rigidA1}), then an
observation-during-movement interval (Lemma~\ref{lem:mid}), then an
interval consisting of one solo activation of $a$, and finally an
arbitrary fair tail. The first prefix is the first stage of
Lemma~\ref{lem:dmsd_not_fs_a}; an $\OB$ algorithm is an $\FS$ algorithm
with $|C|=1$, so that lemma applies here and yields
Corollary~\ref{cor:dmsd_not_ob_a}. Normalize $a(0)=0$, $b(0)=e_1$ and
$D_0=1$, and write $\lambda:=\lambda_s$ and $R:=R_{\theta_s}$ for the
frozen common state $s$.

If $\lambda=0$ for every state, no nonzero $\M$ ever occurs, so
$\moves{a}=0\neq1$ and the single-move condition of $\CFSM$
fails, while the distance stays equal to $D_0$ at every joint stop and
the progress condition (D2) of $\DMSD$ fails as well.

Otherwise, in the observation interval the robot $b$ performs a nonzero
$\M$ from $b(0)$ to $p_E=e_1-\lambda Re_1$, and by
Lemma~\ref{lem:mid} the robot $a$ performs a $\LK$ at a time when $b$
is at $p_\mu=(1-\mu)b(0)+\mu p_E$ for a prescribed $\mu\in(0,1)$. If
$p_\mu$ ever coincides with the position of $a$, the two robots occupy
the same point and the collision-freedom condition of $\CFSM$
fails at once, so we may assume that this does not happen; then
$\lambda>0$ makes the resulting $\M$ of $a$ nonzero, and $a$ lands at
$\lambda Rp_\mu$.

Let both $\M$ operations complete and delay the next $\LK$ of each
robot; throughout that delay neither robot is inside a nonzero $\M$, so
the whole interval is contained in $\Sab$. The distance there is the
non-constant continuous function
\[
  h(\mu)=\|c+\mu w\|,\qquad
  c:=e_1-2\lambda Re_1,\qquad
  w:=\lambda^2R^2e_1\neq0 ,
\]
of $\mu$, as in the third stage of Lemma~\ref{lem:dmsd_not_fs_a}. Since
$\Dyad$ is countable, Corollary~\ref{cor:continuum} provides a
$\mu\in(0,1)$ with $h(\mu)/D_0\notin\Dyad$, so the dyadic condition
(D1) of $\DMSD$ fails. As $0\in\Dyad$, this $\mu$ also keeps
the two robots apart, so the distance stays positive.

At that point $a$ has already performed one nonzero $\M$. In the last
interval $a$ is activated alone; the distance being positive, in the
normalized frame it receives a local input isomorphic to the earlier
one, and $\lambda>0$ makes this $\M$ nonzero as well, so
$\moves{a}\ge2\neq1$ and the single-move condition of $\CFSM$
fails.

Both are safety violations, so they may be concatenated by
Lemma~\ref{lem:concat}(1)(2) and are preserved by the fair tail, and
Lemma~\ref{lem:boundary-common-failure} gives the claim.
\end{proof}

\medskip
This completes the forty-seven boundary obligations of the
two-element antichains: fourteen in
Appendix~\ref{app:cf-expansion-family}, six in
Appendix~\ref{app:cf-sro-family}, eighteen in
Appendix~\ref{app:cf-exchange-pairs}, four in
Appendix~\ref{app:cf-mcv-pairs}, one here, and four in the main text,
namely the two $\FS^{R}$ obligations of
Proposition~\ref{prop:cf-expansion-exchange} and the two $\FS^{S}$
obligations of Proposition~\ref{prop:cf-sro-exchange-family}. Together
they establish the case $|A|=2$ of
Proposition~\ref{prop:all-common-failure-boundaries}.

%% file: CFE-Three-Task.tex

\noindent
Each row lists one three-element antichain $A$, given by the
corresponding disjunction $B_A$ of principal tasks from
Table~\ref{tab:PP13}, together with
$\mathcal N_A=\Max(\mathcal Q\setminus U_A)$ and, for each
$Y\in\mathcal N_A$, the statement exhibiting one admissible fair
execution of an arbitrary algorithm in $Y$ that violates all three task
predicates. Every $Y\in\mathcal N_A$ also satisfies $Y\notin U_{A'}$ for
each two-element antichain $A'\subset A$, so with two exceptions the
execution is the one already constructed for $A'$ in
Table~\ref{tab:PP13-join2}, and the proof only adds the failure of the
third predicate on that execution. The exceptions are the boundary
classes $\OB^{S}$ and $\OB^{A_M}$, which occur in no row of
Table~\ref{tab:PP13-join2}: for them the execution is constructed from
scratch, once in Lemma~\ref{lem:A3-9-OB-S} and once in
Lemma~\ref{lem:A3-15-OB-AM}, and reused by the later rows of the same
family. The column sum is the number $34$ of three-task boundary
obligations certified in
Table~\ref{tab:antichain-enumeration-certificate}
(Appendix~\ref{app:antichain-enumeration}) and recorded in
Table~\ref{tab:cf-family-coverage}; together with
Table~\ref{tab:PP13-join2} this discharges all $47+34=81$ boundary
obligations, and Lemma~\ref{lem:boundary-common-failure} then gives
$\Prof(B_A)=U_A$ for every $A\in\Ant(\mathcal Q)$.

\clearpage
\begingroup
  \small
  \setlength{\tabcolsep}{4pt}
  \renewcommand{\arraystretch}{1.15}
  \captionsetup{hypcap=false}
  \captionof{table}{The sixteen three-element antichains, their maximal
    unsolvable classes $\mathcal N_A$, and the common-failure execution
    that discharges each obligation $(A,Y)$. The rows are grouped by
    the construction families of
    Section~\ref{sec:representative-common-failure} and
    Table~\ref{tab:cf-family-coverage}, and within each group they
    follow the order of Table~\ref{tab:PP13-join2}. All thirty-four
    obligations are proved in
    Appendix~\ref{app:cf-three-task-disjunctions}.}
  \label{tab:PP13-join3}
  \par\medskip
  \noindent
  \begin{tabularx}{\linewidth}{@{}l >{\raggedright\arraybackslash}X c@{}}
    \toprule
    \textbf{Disjunction $B_A$}
      & \textbf{$\mathcal N_A$ and proof reference}
      & \textbf{$|\mathcal N_A|$} \\
    \midrule
    \multicolumn{3}{@{}l}{\emph{Expansion family, $\mathrm{EPG}^{*}\vee E\vee P$ with $E\in\mathcal E$,
      $P\in\mathcal P_4$, Appendix~\ref{app:cf3-expansion}}}\\
    $\mathrm{EPG}^{*}\vee\mathrm{OSPO}\vee \SRDV$
      & $\FC^{A}$ Lem.~\ref{lem:A3-1-FC-A};\quad
        $\FS^{S}$ Lem.~\ref{lem:A3-1-FS-S};\quad
        $\OB^{R}$ Lem.~\ref{lem:A3-1-OB-R}
      & 3 \\
    $\mathrm{EPG}^{*}\vee\mathrm{OSPO}\vee\mathrm{VSS}$
      & $\FS^{S}$ Lem.~\ref{lem:A3-2-FS-S};\quad
        $\OB^{R}$ Lem.~\ref{lem:A3-2-OB-R}
      & 2 \\
    $\mathrm{EPG}^{*}\vee \SRDV \vee\mathrm{ZCC}$
      & $\FC^{A}$ Lem.~\ref{lem:A3-3-FC-A};\quad
        $\FS^{A_M}$ Lem.~\ref{lem:A3-3-FS-AM};\quad
        $\OB^{R}$ Lem.~\ref{lem:A3-3-OB-R}
      & 3 \\
    $\mathrm{EPG}^{*}\vee \SRDV \vee\mathrm{VEC}$
      & $\FC^{A}$ Lem.~\ref{lem:A3-4-FC-A};\quad
        $\FS^{A}$ Lem.~\ref{lem:A3-4-FS-A};\quad
        $\OB^{R}$ Lem.~\ref{lem:A3-4-OB-R}
      & 3 \\
    $\mathrm{EPG}^{*}\vee \SRDV \vee \CFSM$
      & $\FC^{A}$ Lem.~\ref{lem:A3-5-FC-A};\quad
        $\OB^{R}$ Lem.~\ref{lem:A3-5-OB-R}
      & 2 \\
    $\mathrm{EPG}^{*}\vee\mathrm{ZCC}\vee\mathrm{VSS}$
      & $\FS^{A_M}$ Lem.~\ref{lem:A3-6-FS-AM};\quad
        $\OB^{R}$ Lem.~\ref{lem:A3-6-OB-R}
      & 2 \\
    $\mathrm{EPG}^{*}\vee\mathrm{VSS}\vee\mathrm{VEC}$
      & $\FS^{A}$ Lem.~\ref{lem:A3-7-FS-A};\quad
        $\OB^{R}$ Lem.~\ref{lem:A3-7-OB-R}
      & 2 \\
    $\mathrm{EPG}^{*}\vee\mathrm{VSS}\vee \CFSM$
      & $\OB^{R}$ Lem.~\ref{lem:A3-8-OB-R}
      & 1 \\
    \cmidrule(lr){1-3}
    \multicolumn{3}{@{}l}{\emph{Shrinking-rotation family, $\mathrm{SRO}\vee E\vee P$ with $E\in\mathcal E$, $P\in\mathcal P_3$, Appendix~\ref{app:cf3-sro}}}\\
    $\mathrm{SRO}\vee \SRDV \vee\mathrm{ZCC}$
      & $\FC^{A}$ Lem.~\ref{lem:A3-9-FC-A};\quad
        $\FS^{A_M}$ Lem.~\ref{lem:A3-9-FS-AM};\quad
        $\OB^{S}$ Lem.~\ref{lem:A3-9-OB-S}
      & 3 \\
    $\mathrm{SRO}\vee \SRDV \vee\mathrm{VEC}$
      & $\FC^{A}$ Lem.~\ref{lem:A3-10-FC-A};\quad
        $\FS^{A}$ Lem.~\ref{lem:A3-10-FS-A};\quad
        $\OB^{S}$ Lem.~\ref{lem:A3-10-OB-S}
      & 3 \\
    $\mathrm{SRO}\vee \SRDV \vee \CFSM$
      & $\FC^{A}$ Lem.~\ref{lem:A3-11-FC-A};\quad
        $\OB^{S}$ Lem.~\ref{lem:A3-11-OB-S}
      & 2 \\
    $\mathrm{SRO}\vee\mathrm{ZCC}\vee\mathrm{VSS}$
      & $\FS^{A_M}$ Lem.~\ref{lem:A3-12-FS-AM};\quad
        $\OB^{S}$ Lem.~\ref{lem:A3-12-OB-S}
      & 2 \\
    $\mathrm{SRO}\vee\mathrm{VSS}\vee\mathrm{VEC}$
      & $\FS^{A}$ Lem.~\ref{lem:A3-13-FS-A};\quad
        $\OB^{S}$ Lem.~\ref{lem:A3-13-OB-S}
      & 2 \\
    $\mathrm{SRO}\vee\mathrm{VSS}\vee \CFSM$
      & $\OB^{S}$ Lem.~\ref{lem:A3-14-OB-S}
      & 1 \\
    \cmidrule(lr){1-3}
    \multicolumn{3}{@{}l}{\emph{Monotone-convergence family, $\mathrm{MCv}\vee\mathrm{VSS}\vee P$, Appendix~\ref{app:cf3-mcv}}}\\
    $\mathrm{MCv}\vee\mathrm{VSS}\vee\mathrm{VEC}$
      & $\FS^{A}$ Lem.~\ref{lem:A3-15-FS-A};\quad
        $\OB^{A_M}$ Lem.~\ref{lem:A3-15-OB-AM}
      & 2 \\
    $\mathrm{MCv}\vee\mathrm{VSS}\vee \CFSM$
      & $\OB^{A_M}$ Lem.~\ref{lem:A3-16-OB-AM}
      & 1 \\
    \midrule
    \textbf{Total} & & \textbf{34} \\
    \bottomrule
  \end{tabularx}
\endgroup
\clearpage

\subsection{Three-Task Disjunctions Containing \texorpdfstring{EPG$^{*}$}{EPG*}}
\label{app:cf3-expansion}

Throughout Appendix~\ref{app:cf-three-task-disjunctions} we use the
following principle. If $A=\{X_1,X_2,X_3\}$ is a three-element
antichain and $Y\in\mathcal N_A$, then for every two-element antichain
$A'\subset A$ we have $U_{A'}\subseteq U_A$ and hence $Y\notin U_{A'}$.
In every triple treated below, except for the boundary classes
$\OB^{S}$ and $\OB^{A_M}$, some such $A'$ satisfies $Y\in\mathcal
N_{A'}$, so the execution constructed for $A'$ at $Y$ in
Appendix~\ref{app:cf-two-task-disjunctions} applies to the algorithms
of $Y$ and we only have to show that the same execution violates the
third predicate as well. For $\OB^{S}$ and $\OB^{A_M}$ a new execution
is constructed; those cases occur in
Appendices~\ref{app:cf3-sro} and~\ref{app:cf3-mcv}.

This subsection treats the eight antichains
$\{\mathrm{EPG}^{*},E,P\}$ with
$E\in\mathcal E=\{\SRDV,\mathrm{VSS}\}$ and
$P\in\mathcal P_4=\{\mathrm{OSPO},\mathrm{ZCC},\mathrm{VEC},
\CFSM\}$, which carry eighteen boundary obligations in total.
When the reused pair is one of $\mathcal C_{\mathrm{exp}}$, the
$\mathrm{EPG}^{*}$ component is the violation already recorded in
Appendix~\ref{app:cf-expansion-family}. Otherwise no nonzero $\M$ other than the first one begins in the reused execution before the first nonzero $\M$ has been completed. Hence either that execution contains no nonzero $\M$ at all,
in which case the joint-stop sequence is $(C_0)$ and the perpetuity
condition fails, or the completion of the first nonzero $\M$ is a joint
stop realizing $C_1$ at which only one robot has moved, so by
Lemma~\ref{lem:solo-drift}(1) the centre of gravity has shifted and the
expansion condition fails at $i=0$. The peer may still have a pending
$\M$ at that instant; this does not affect the argument, since a robot
that has not begun to move is stationary. We keep the notation
$(\lambda^{\star},\theta^{\star})=(\tfrac12,\pi)$ of
Remark~\ref{rem:epgs-normalized-rule}.

\subsubsection{\texorpdfstring{EPG$^{*}$ $\vee$ OSPO $\vee$ S-RDV (EOS)}{EPG* or OSPO or S-RDV}}
\label{ssub:g1-1}

\begin{lemma}
\label{lem:A3-1-FC-A}
$\mathrm{EOS}\notin\Sol(\FC^{A})$.
\end{lemma}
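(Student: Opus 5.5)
We reuse the execution built for the two-element antichain $\{\mathrm{OSPO},\SRDV\}$ at $\FC^{A}$ in Lemma~\ref{lem:A2-8-FC-A}. The principle stated at the start of this subsection applies because $\FC^{A}\in\mathcal N_{\{\mathrm{OSPO},\SRDV\}}$. That execution already violates $\Phi_{\mathrm{OSPO}}$ and $\Phi_{\SRDV}$, so it remains to check that it also violates $\Phi_{\mathrm{EPG}^{*}}$. Since the reused pair is not of the form $\mathrm{EPG}^{*}\vee P$, we argue through the joint-stop structure, as described in the paragraph preceding this subsubsection.

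We follow the three branches of Lemma~\ref{lem:A2-8-FC-A}.
\begin{itemize}
\item \emph{No nonzero output.} The execution contains no nonzero $\M$. The joint-stop sequence is therefore $(C_0)$ and the perpetuity condition of $\mathrm{EPG}^{*}$ fails. This is a liveness violation witnessed by the same fair infinite execution that already witnesses the other two liveness violations.
\item \emph{$(\lambda,\theta)\neq(2,0)$.} The round is replaced by two solo activations of $a$. The first of these is the first nonzero $\M$ of the execution, and $b$ is not activated around it. Lemma~\ref{lem:solo-drift}(1) then shifts the centre of gravity at $C_1$, so the expansion condition fails at $i=0$. This is a safety violation and is preserved by the fair tail (Lemma~\ref{lem:concat}(1)).
\item \emph{$(\lambda,\theta)=(2,0)$.} Both robots Look and Compute, and the $\M$ of $b$ is delayed until strictly after $a$ lands (Corollary~\ref{cor:stale}). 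At the landing time of $a$, $b$ has not begun moving, so it is stationary. That instant is therefore a joint stop realizing $C_1=(3d,2d)$. Its midpoint is $5d/2\ne c_0=3d/2$, which gives a safety violation of $\mathrm{EPG}^{*}$ (again Lemma~\ref{lem:solo-drift}(1)). The appended transported RDV counterexecution is unaffected, and this violation persists under it.
\end{itemize}

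The only delicate point is the last branch. There $a$ passes through $b$'s position, and $b$ has a pending $\M$, so we must confirm that a collision does not destroy the joint-stop reading. It does not. $\Sab$ is defined only through membership in nonzero $\M$ intervals, and $b$ is not yet inside one. Hence the landing of $a$ lies in $\Sab$ and follows $C_0$ with no intermediate distinct joint-stop configuration. In every branch, one fair execution violates all three predicates, and Lemma~\ref{lem:boundary-common-failure} yields the claim.
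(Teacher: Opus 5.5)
Your proposal is correct and follows the paper's proof essentially line by line. Like the paper, you reuse the execution of Lemma~\ref{lem:A2-8-FC-A} and check that $\Phi_{\mathrm{EPG}^{*}}$ fails in each of its three branches: the joint-stop sequence reduces to $(C_0)$; or the first move is solo, so Lemma~\ref{lem:solo-drift}(1) applies; or the landing of $a$ is the joint stop $(3d,2d)$, whose midpoint $\tfrac52 d\neq c_0$. Your extra remark that $a$ passing through $b$'s position does not disturb the joint-stop reading is correct; the paper leaves it implicit.
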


\begin{proof}
Let $\mathcal A$ be any deterministic algorithm of $\FC$ under $\ASY$.
Use the execution of Lemma~\ref{lem:A2-8-FC-A}, which violates
$\Phi_{\mathrm{OSPO}}$ and $\Phi_{\SRDV}$. We show that it also violates
$\Phi_{\mathrm{EPG}^{*}}$. Recall $c_0=\tfrac32 d$.

\emph{No nonzero output.} The joint-stop sequence is $(C_0)$, so the
perpetuity condition fails. The same infinite execution witnesses this.

\emph{Case $(\lambda,\theta)\neq(2,0)$.} The first solo activation of
$a$ is the first nonzero $\M$. The joint stops before and after it
realize $C_0$ and $C_1$, and $b$ does not move in between. So the
expansion condition fails at $i=0$ (Lemma~\ref{lem:solo-drift}(1)).

\emph{Case $(\lambda,\theta)=(2,0)$.} The $\M$ of $b$ begins strictly
after that of $a$ ends. So the landing time of $a$ is a joint stop,
with configuration $(3d,2d)$ and midpoint $\tfrac52 d\neq c_0$. The
expansion condition fails at $i=0$, as in
Lemma~\ref{lem:A2-3-FC-A}.

In each case a single execution violates all three predicates; in
the last two cases the violation of $\Phi_{\mathrm{EPG}^{*}}$ is a
safety violation in the prefix, so Lemma~\ref{lem:concat}(1)(3) is
satisfied, and Lemma~\ref{lem:boundary-common-failure} gives the claim.
\end{proof}

\begin{lemma}
\label{lem:A3-1-FS-S}
$\mathrm{EOS}\notin\Sol(\FS^{S})$.
\end{lemma}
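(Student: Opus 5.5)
We reuse the execution constructed in the proof of Lemma~\ref{lem:A2-8-FS-S}, which already violates $\Phi_{\mathrm{OSPO}}$ and $\Phi_{\SRDV}$ for an arbitrary deterministic algorithm $\mathcal A$ of $\FS$ under $\SSY$. Since $\FS^{S}\in\mathcal N_{\{\mathrm{OSPO},\SRDV\}}$ and $\FS^S\notin U_A$, this is legitimate by the reuse principle stated at the start of Appendix~\ref{app:cf3-expansion}. It then remains to show that the same execution violates $\Phi_{\mathrm{EPG}^{*}}$. We do not change the schedule, so every earlier violation is preserved.

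We split along the cases of that proof. In the case where no nonzero $\M$ ever occurs, the execution is the infinite alternating singleton execution $\{a\},\{b\},\dots$ with the robots frozen at $\iota$. Its joint-stop sequence is $(C_0)$, so the perpetuity condition of $\mathrm{EPG}^{*}$ fails. The same fair infinite execution witnesses all three liveness violations, as Lemma~\ref{lem:concat}(3) requires.

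Otherwise the prefix is an alternating singleton schedule in which each cycle is completed before the next begins. The first nonzero $\M$ is performed by $a$ at its $(k+1)$-st activation, while $b$ performs no operation during that cycle. The joint stop preceding this cycle realizes $C_0$, because all earlier outputs were zero. The joint stop at its completion realizes $C_1\neq C_0$, with only $a$ having moved. By Lemma~\ref{lem:solo-drift}(1), $\tfrac12(a_1+b_1)\neq c_0$, so the expansion condition fails at $i=0$. This is a safety violation occurring before every later segment of the construction: the activation of $b$, the further solo activations of $a$, the synchronous tail, and any collision. By Lemma~\ref{lem:concat}(1) it persists in every branch of the construction.

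The one point needing care is the possibility of a collision during that first $\M$, since $a$ may pass through $b$. Even then, the completion of the $\M$ is still a joint stop at which $b$ has not moved, so the argument is unaffected. Lemma~\ref{lem:boundary-common-failure} then gives the claim.
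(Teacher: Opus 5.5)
Your proposal is correct and follows the paper's proof essentially step for step. You reuse the execution of Lemma~\ref{lem:A2-8-FS-S}, treat the case with no nonzero $\M$ as a perpetuity failure of $\mathrm{EPG}^{*}$ witnessed by that same infinite execution, and otherwise apply Lemma~\ref{lem:solo-drift}(1) to the solo first nonzero $\M$ of $a$ in the alternating singleton prefix, which gives a finite safety violation preserved by the rest of the construction. Your added remark, that a collision during that $\M$ leaves the joint stop at its completion unaffected, is correct and consistent with the paper's argument.
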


\begin{proof}
Let $\mathcal A$ be any deterministic algorithm of $\FS$ under $\SSY$.
Use the execution of Lemma~\ref{lem:A2-8-FS-S}, which violates
$\Phi_{\mathrm{OSPO}}$ and $\Phi_{\SRDV}$. To verify the failure of
$\mathrm{EPG}^{*}$, only its first part, the alternating singleton
activation, is needed.

If no nonzero $\M$ occurs, the joint-stop sequence is $(C_0)$ and the
perpetuity condition of $\mathrm{EPG}^{*}$ fails; the execution of
Lemma~\ref{lem:A2-8-FS-S} itself is the fair continuation witnessing
this.

Otherwise, as in that lemma, the two robots traverse the same states
until the first nonzero $\M$, and $a$ is activated first, so that $\M$
is performed by $a$ in a round that activates $a$ alone. The
surrounding round boundaries are joint stops realizing $C_0$ and
$C_1$, and $b$ does not move in between, so the expansion condition
fails at $i=0$ (Lemma~\ref{lem:solo-drift}(1)). This is a safety
violation at a finite time, and it precedes the continuation
witnessing any liveness violation of $\Phi_{\mathrm{OSPO}}$.

In either case a single execution violates all three predicates;
Lemma~\ref{lem:concat}(1)(3) is satisfied, and
Lemma~\ref{lem:boundary-common-failure} gives the claim.
\end{proof}

\begin{lemma}
\label{lem:A3-1-OB-R}
$\mathrm{EOS}\notin\Sol(\OB^{R})$.
\end{lemma}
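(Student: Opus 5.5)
We reuse the execution built for a two-element sub-antichain at the same boundary class, as the principle at the start of Appendix~\ref{app:cf3-expansion} prescribes. Since $\OB^R\in\mathcal N_{\{\mathrm{EPG}^*,\mathrm{OSPO}\}}$, we take the execution of Lemma~\ref{lem:A2-2-OB-R}. It already violates $\Phi_{\mathrm{EPG}^*}$ and $\Phi_{\mathrm{OSPO}}$, so we only have to show that the same execution also violates $\Phi_{\SRDV}$. We use coordinates $A=0$, $a_0=d$, $b_0=2d$, $B=3d$ on the initial line. By Lemma~\ref{lem:single-rule}(a), the $\OB$ rule is a single pair $(\lambda,\theta)$, and we follow the case split of Lemma~\ref{lem:A2-2-OB-R}.

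\emph{Case $(\lambda,\theta)\neq(\lambda^\star,\theta^\star)$.} The execution is fully synchronous, as in Lemma~\ref{lem:rigidA1}, so the argument of Lemma~\ref{lem:A2-8-OB-F} applies verbatim.
If $\lambda=0$, no nonzero $\M$ ever occurs, which contradicts Lemma~\ref{lem:p1_no_null}; this is a liveness failure witnessed by the same infinite execution.
If $\lambda\neq0$ and $(\lambda,\theta)\neq(2,0)$, the round-$1$ landings observe $\iota$ and do not produce $(B,A)$, so Lemma~\ref{lem:p1_first_move} rules out every exchange time $T_1$.
If $(\lambda,\theta)=(2,0)$, round~$2$ sends the robots to $-3d$ and $6d$, both outside $[A,B]$.
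This case split is compatible with the outer one: $(2,0)\neq(\tfrac12,\pi)$, so the subcase $(2,0)$ really lies in this branch.

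\emph{Case $(\lambda,\theta)=(\lambda^\star,\theta^\star)$.} The execution consists of alternating singleton activations. The first activation of $a$ moves it directly away from $b$ by $d/2$, so $a$ lands at $d/2$. This point is not $B$, and the move is the nonzero $\M$ with earliest completion, made from a $\LK$ that observed $\iota$. By Lemma~\ref{lem:p1_first_move}, the exchange phase of $\SRDV$ is therefore unsatisfiable. This is a safety violation at a finite time, and the direct check $d/2\notin\{A,a_0,b_0,B\}$ confirms it: condition~2 of Definition~\ref{def:s-rdv} fails because the robot stops at $d/2$, which is not a designated point, before any $T_1$.

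The main point needing care is ordering the violations so that Lemma~\ref{lem:concat}(1)(3) applies. In both cases the $\SRDV$ violation is either a finite safety violation, or (when $\lambda=0$) a liveness violation witnessed by the same infinite fair execution that witnesses the other liveness failures. No new continuation is needed, so a single execution violates all three predicates. Lemma~\ref{lem:boundary-common-failure} then gives the claim.
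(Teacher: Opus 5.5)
Your proposal is correct and follows the paper's proof essentially line for line. You reuse the execution of Lemma~\ref{lem:A2-2-OB-R} and split into $\lambda=0$, the fully synchronous branch $(\lambda,\theta)\neq(\lambda^{\star},\theta^{\star})$ with subcases $(2,0)$ and not $(2,0)$, and the alternating branch where $a$ lands at $d/2\neq B$. The \SRDV violation comes, as in the paper, from Lemmas~\ref{lem:p1_no_null} and~\ref{lem:p1_first_move} and the round-$2$ escape to $-3d,6d$; your extra direct check that $d/2\notin\{A,a_0,b_0,B\}$ is a harmless redundancy.
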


\begin{proof}
Let $\mathcal A$ be any deterministic algorithm of $\OB$ under $\RSY$
and use the execution of Lemma~\ref{lem:A2-2-OB-R}, which violates
$\Phi_{\mathrm{EPG}^{*}}$ and $\Phi_{\mathrm{OSPO}}$. The visible state
of an $\OB$ robot is unique, so in a normalized frame the rule is a
single pair $(\lambda,\theta)$ (Lemma~\ref{lem:single-rule}(a)).

If $\lambda=0$ no nonzero $\M$ occurs, so the first nonzero $\M$
demanded by Lemma~\ref{lem:p1_no_null} does not exist and
$\Phi_{\SRDV}$ fails.

If $\lambda\neq0$ and
$(\lambda,\theta)\neq(\lambda^{\star},\theta^{\star})$, the execution of
Lemma~\ref{lem:A2-2-OB-R} is the fully synchronous one. When
$(\lambda,\theta)\neq(2,0)$, the nonzero $\M$ of round~$1$ does not meet
the requirement of Lemma~\ref{lem:p1_first_move}, namely a landing of
$a$ at $B=3d$ or of $b$ at $A=0$, so $\Phi_{\SRDV}$
fails at that landing. When $(\lambda,\theta)=(2,0)$, round~$1$ takes
$a$ to $3d$ and $b$ to $0$, and round~$2$ takes $a$ to $-3d$ and $b$ to
$6d$, both outside $[A,B]=[0,3d]$
(Lemma~\ref{lem:srdv_not_ob_f}), so $\Phi_{\SRDV}$
fails.

If $(\lambda,\theta)=(\lambda^{\star},\theta^{\star})$, the execution of
Lemma~\ref{lem:A2-2-OB-R} is the alternating singleton one. Since
$(\tfrac12,\pi)\neq(2,0)$, the robot $a$ moves from $d$ to $d/2$ at its
first activation, which is not the landing at $B=3d$ required by
Lemma~\ref{lem:p1_first_move}, so $\Phi_{\SRDV}$ fails.

In each case, a single execution violates all three predicates, so
Lemma~\ref{lem:boundary-common-failure} gives the claim.
\end{proof}

\subsubsection{\texorpdfstring{EPG$^{*}$ $\vee$ OSPO $\vee$ VSS (EOV)}{EPG* or OSPO or VSS}}
\label{ssub:g1-2}
 
\begin{lemma}
\label{lem:A3-2-FS-S}
$\mathrm{EOV}\notin\Sol(\FS^{S})$.
\end{lemma}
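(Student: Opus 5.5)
The plan follows the reuse principle stated at the start of Appendix~\ref{app:cf3-expansion}. Since $\FS^{S}\in\mathcal N_{\{\mathrm{OSPO},\VSS\}}$, I will start from the execution of Lemma~\ref{lem:A2-9-FS-S}. That execution is itself the execution of Lemma~\ref{lem:A2-8-FS-S}, and it already violates $\Phi_{\mathrm{OSPO}}$ and $\Phi_{\VSS}$ on a single admissible fair run. What remains is to show that the same run also violates $\Phi_{\mathrm{EPG}^{*}}$.

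The check proceeds exactly as in Lemma~\ref{lem:A3-1-FS-S}, because only the first part of the construction matters: the alternating singleton activation $\{a\},\{b\},\ldots$ under normalized frames.

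\emph{First case: no nonzero $\M$ occurs.} The joint-stop sequence is $(C_0)$, so the perpetuity condition of $\mathrm{EPG}^{*}$ fails. The infinite execution of Lemma~\ref{lem:A2-8-FS-S} itself witnesses this liveness failure, and it simultaneously witnesses the liveness failures of OSPO (constant $d$) and of VSS (Lemma~\ref{lem:p1_no_null}).

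\emph{Second case: a nonzero $\M$ occurs.} By the $\pi$-symmetry and Lemma~\ref{lem:silent}, both robots traverse the same state sequence, and $a$ is activated first, so the first nonzero $\M$ is performed by $a$ in a round activating $a$ alone. The round boundaries around it are joint stops realizing $C_0$ and $C_1$, and $b$ does not move in between. Lemma~\ref{lem:solo-drift}(1) then gives a shift of the centre of gravity, so the expansion condition fails at $i=0$.

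The step needing the most care is the ordering of violations demanded by Lemma~\ref{lem:concat}. The EPG$^{*}$ violation in the second case is a safety violation located at the first nonzero $\M$, so it lies inside the finite prefix. The later parts of the construction (the collision check, the extra solo activations of $a$, or the fully synchronous fair tail used for the move-count/unboundedness argument of OSPO) are admissible $\SSY$ continuations of that prefix, so the violation persists. The VSS violation (a landing in $P\setminus\{B\}$, a landing outside $P$, or the replay to $4d$) is likewise a finite safety violation established in that prefix, and collisions are harmless for VSS. Hence Lemma~\ref{lem:concat}(1)(3) holds, a single execution violates all three predicates, and Lemma~\ref{lem:boundary-common-failure} yields the claim.
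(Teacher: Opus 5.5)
Your proposal is correct and follows the paper's own proof: you reuse the execution of Lemma~\ref{lem:A2-9-FS-S}, which is the execution of Lemma~\ref{lem:A2-8-FS-S}, for the $\mathrm{OSPO}$ and $\VSS$ violations, and you get the $\mathrm{EPG}^{*}$ violation on that same run by the argument of Lemma~\ref{lem:A3-1-FS-S} applied to its alternating singleton prefix. The extra bookkeeping you give on placing the safety violations before the liveness continuations agrees with Lemma~\ref{lem:concat}.
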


\begin{proof}
The execution of Lemma~\ref{lem:A2-9-FS-S} is that of
Lemma~\ref{lem:A2-8-FS-S} and violates $\Phi_{\mathrm{OSPO}}$ and
$\Phi_{\VSS}$. The argument of Lemma~\ref{lem:A3-1-FS-S} also gives a
violation of $\Phi_{\mathrm{EPG}^{*}}$ on that same execution.
Hence Lemma~\ref{lem:boundary-common-failure} gives the claim.
\end{proof}

\begin{lemma}
\label{lem:A3-2-OB-R}
$\mathrm{EOV}\notin\Sol(\OB^{R})$.
\end{lemma}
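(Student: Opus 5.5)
The plan is to reuse the execution built for the two-element antichain $\{\mathrm{EPG}^{*},\mathrm{OSPO}\}$ at $\OB^R$ in Lemma~\ref{lem:A2-2-OB-R}. That execution already violates $\Phi_{\mathrm{EPG}^{*}}$ and $\Phi_{\mathrm{OSPO}}$. It then remains to show that the same execution also violates $\Phi_{\VSS}$, following the pattern of Lemma~\ref{lem:A3-1-OB-R}.

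Fix an arbitrary deterministic $\OB^R$ algorithm. By Lemma~\ref{lem:single-rule}(a), in normalized frames its rule is a single pair $(\lambda,\theta)$. Use coordinates $A=0$, $a_0=d$, $b_0=2d$, $B=3d$ and $P=\{0,d,2d,3d\}$, and split into the same cases as Lemma~\ref{lem:A2-2-OB-R}.

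\emph{Case $\lambda=0$.} No nonzero $\M$ ever occurs, so $(\mathrm{P1})$ fails by Lemma~\ref{lem:p1_no_null}. This is a liveness violation, witnessed by the same infinite execution that witnesses the perpetuity failure of $\mathrm{EPG}^{*}$ and the constant-distance failure of OSPO.

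\emph{Case $\lambda\neq0$ and $(\lambda,\theta)\neq(\lambda^{\star},\theta^{\star})$.} Here the execution is fully synchronous. If $(\lambda,\theta)\neq(2,0)$, the round-$1$ moves of both robots complete together and do not form $(B,A)$ with counts $(1,1)$. Either they land outside $P$, violating $(\mathrm{P1}\text{-}b)$, or they violate Lemma~\ref{lem:p1_first_move}, so no valid $T_1$ exists. If $(\lambda,\theta)=(2,0)$, round $2$ yields $(-3d,6d)$, both outside $P$, exactly as in Lemma~\ref{lem:vss_not_ob_f} and Lemma~\ref{lem:A2-9-OB-F}. In both subcases the violation is a safety violation.

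\emph{Case $(\lambda,\theta)=(\lambda^{\star},\theta^{\star})=(\tfrac12,\pi)$.} Here the execution is alternating singleton. The first nonzero $\M$ is by $a$ and observes $\iota$. It moves $a$ from $d$ to $d/2\notin P$, which breaks the endpoint requirement $(\mathrm{P1}\text{-}b)$ of $\VSS$ immediately, a safety violation. (It also differs from the landing at $B$ that Lemma~\ref{lem:p1_first_move} requires.)

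The only point needing care is the bookkeeping required by Lemma~\ref{lem:concat}(1)(3). Whenever the $\VSS$ failure is a safety violation, it occurs in a finite prefix of the very execution whose infinite continuation witnesses any liveness failure of OSPO or $\mathrm{EPG}^{*}$. In the $\lambda=0$ case, all three failures are liveness failures witnessed by one and the same fair infinite execution. Either way, a single admissible fair execution violates all three predicates, and Lemma~\ref{lem:boundary-common-failure} gives the claim.
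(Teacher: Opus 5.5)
Your proposal is correct and matches the paper's proof essentially step for step. You reuse the execution of Lemma~\ref{lem:A2-2-OB-R} and check the failure of $\Phi_{\VSS}$ in the same three cases: $\lambda=0$ via Lemma~\ref{lem:p1_no_null}; the fully synchronous case, whose $(2,0)$ subcase gives the round-$2$ landings $-3d,6d\notin P$; and the alternating-singleton case $(\tfrac12,\pi)$, where $a$ lands at $d/2\notin P$. Your extra safety/liveness bookkeeping via Lemma~\ref{lem:concat} simply spells out the paper's one-line conclusion.
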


\begin{proof}
Let $\mathcal A$ be any deterministic algorithm of $\OB$ under $\RSY$
and use the execution of Lemma~\ref{lem:A2-2-OB-R}, which violates
$\Phi_{\mathrm{EPG}^{*}}$ and $\Phi_{\mathrm{OSPO}}$. By
Lemma~\ref{lem:single-rule}(a) the rule is a single pair
$(\lambda,\theta)$.

If $\lambda=0$ no nonzero $\M$ occurs and condition $(\mathrm{P1})$ of
$\mathrm{VSS}$ fails by Lemma~\ref{lem:p1_no_null}.

If $\lambda\neq0$ and
$(\lambda,\theta)\neq(\lambda^{\star},\theta^{\star})$, the execution is
the fully synchronous one. When $(\lambda,\theta)\neq(2,0)$, the
nonzero $\M$ of round~$1$ does not meet the requirement of
Lemma~\ref{lem:p1_first_move}, so no valid $T_1$ exists and
$\Phi_{\mathrm{VSS}}$ fails. When $(\lambda,\theta)=(2,0)$, round~$2$
takes $a$ to $-3d$ and $b$ to $6d$, neither of which lies in
$P=\{0,d,2d,3d\}$, so the endpoint condition
$(\mathrm{P1}\text{-}b)$ fails.

If $(\lambda,\theta)=(\lambda^{\star},\theta^{\star})$, the execution is
the alternating singleton one and $a$ moves from $d$ to $d/2\notin P$
at its first activation, so $(\mathrm{P1}\text{-}b)$ fails there.

In each case, a single execution violates all three predicates, so
Lemma~\ref{lem:boundary-common-failure} gives the claim.
\end{proof}

\subsubsection{\texorpdfstring{EPG$^{*}$ $\vee$ S-RDV $\vee$ ZCC (ESZ)}{EPG* or S-RDV or ZCC}}
\label{ssub:g1-3}
 
\begin{lemma}
\label{lem:A3-3-FC-A}
$\mathrm{ESZ}\notin\Sol(\FC^{A})$.
\end{lemma}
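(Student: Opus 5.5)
We reuse the execution built for the pair $\mathrm{SZ}$ at $\FC^{A}$ in Lemma~\ref{lem:A2-15-FC-A}. It already violates $\Phi_{\SRDV}$ and $\Phi_{\mathrm{ZCC}}$. What remains is to check, case by case, that the same execution also violates $\Phi_{\mathrm{EPG}^{*}}$. This follows the principle stated at the start of Appendix~\ref{app:cf3-expansion}. Since $\{\SRDV,\mathrm{ZCC}\}$ is not an expansion pair, we argue through the first nonzero $\M$ and Lemma~\ref{lem:solo-drift}(1).

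\emph{No nonzero output.} Both robots stay at $\iota$ forever, so the joint-stop sequence is $(C_0)$. The perpetuity condition of $\mathrm{EPG}^{*}$ therefore fails, and the same infinite execution witnesses all three liveness violations.

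\emph{Case $(\lambda,\theta)=(2,0)$.} The $\M$ of $b$ is delayed until strictly after $a$ lands at $3d$, so the landing time of $a$ is a joint stop with configuration $(3d,2d)$. Its midpoint is $\tfrac52 d\neq c_0=\tfrac32 d$, so the expansion condition fails at $i=0$, exactly as in Lemma~\ref{lem:A3-1-FC-A}. This is a safety violation inside the finite prefix. It is placed before the transported stable-RDV counterexecution, which witnesses the liveness failure of $\SRDV$.

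\emph{Case $(\lambda,\theta)\neq(2,0)$.} The first activation of $a$ in the solo interval is the first nonzero $\M$ of the execution, and $b$ is never activated during it. The joint stops around it realize $C_0$ and $C_1$, so Lemma~\ref{lem:solo-drift}(1) gives a safety violation of the expansion condition at $i=0$. We must check that this holds in every subcase of the ZCC construction: the coincidence continuation when $c_1=0$, the finite collision-freedom violations when $\psi\not\equiv0$ or $c_1>1$, and the contraction case, which ends the solo interval after one activation and then applies the continuation of Lemma~\ref{lem:zcc_not_fc_s}. In each subcase the first activation belongs to the prefix, and every later continuation extends that prefix. The violation therefore persists by Lemma~\ref{lem:concat}(1).

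The main obstacle is bookkeeping rather than new geometry. We have to confirm that in each branch of Lemma~\ref{lem:A2-15-FC-A}, the first nonzero $\M$ to complete is solo in the sense of Lemma~\ref{lem:solo-drift}(1). In particular, for $(\lambda,\theta)=(2,0)$ the pending Compute of $b$ must not count as a move. That holds because $\Stop_b$ holds until $b$'s Move begins. With this checked, Lemma~\ref{lem:concat}(1)(3) and Lemma~\ref{lem:boundary-common-failure} give the claim.
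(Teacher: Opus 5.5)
Your proposal is correct and follows the paper's proof. You reuse the execution of Lemma~\ref{lem:A2-15-FC-A}, note that perpetuity fails when no nonzero Move occurs, and otherwise observe that $b$ is still at $b_0$ when $a$ completes the first nonzero Move, so that landing is a joint stop whose midpoint differs from $c_0$ and the violation persists through the continuation; your case-by-case check is simply a more explicit version of the paper's single observation. The only slip is calling $b$'s outstanding operation in the $(2,0)$ branch a pending Compute: both robots have already computed there, and it is $b$'s Move that is pending, which is exactly why $\Stop_b$ holds at $a$'s landing.
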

 
\begin{proof}
Use the execution of Lemma~\ref{lem:A2-15-FC-A}, which violates
$\Phi_{\SRDV}$ and $\Phi_{\mathrm{ZCC}}$.
If it contains no nonzero move, the perpetuity condition
of $\mathrm{EPG}^{*}$ fails.
 
Otherwise consider the landing of the first nonzero move
of $a$. In both branches of that construction, $b$ remains
at its initial position until after this landing, even when
it already has a pending move. Thus the landing time is a
joint stop, and the center of gravity has changed by half
the nonzero displacement of $a$. The expansion condition
of $\mathrm{EPG}^{*}$ therefore fails at this first
transition.
 
This safety violation persists along the continuation
constructed in Lemma~\ref{lem:A2-15-FC-A}. Hence the same fair execution
violates all three predicates, and
Lemma~\ref{lem:boundary-common-failure} gives the claim.
\end{proof}
 
\begin{lemma}
\label{lem:A3-3-FS-AM}
$\mathrm{ESZ}\notin\Sol(\FS^{A_M})$.
\end{lemma}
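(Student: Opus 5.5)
We reuse the execution already built for a two-element sub-antichain, following the principle stated at the beginning of Appendix~\ref{app:cf3-expansion}. Since $\FS^{A_M}\in\mathcal N_{\{\SRDV,\mathrm{ZCC}\}}$, take the execution of Lemma~\ref{lem:A2-15-FS-AM} for an arbitrary deterministic algorithm $\mathcal A$ of $\FS$ under $A_M$. That execution begins with alternating singleton activations, each cycle completed before the next one starts, under normalized frames. It then ends its prefix at the joint stop following the cycle that establishes the $\SRDV$ violation. Finally it appends the ZCC continuation of Lemma~\ref{lem:zcc_not_fs_am}, or the coincidence continuation of Lemma~\ref{lem:zcc_not_fc_s}. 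This already violates $\Phi_{\SRDV}$ and $\Phi_{\mathrm{ZCC}}$, so it only remains to check $\Phi_{\mathrm{EPG}^{*}}$ on the same execution.

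\emph{Case 1: no nonzero $\M$ ever occurs.} Then the execution is the fair alternating singleton execution forever. The joint-stop sequence is $(C_0)$, so the perpetuity condition of $\mathrm{EPG}^{*}$ fails. This liveness violation is witnessed by the same infinite execution that witnesses the failures of $\SRDV$ and of ZCC convergence.

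\emph{Case 2: a nonzero $\M$ occurs.} The first one is performed by $a$ inside a solo cycle, while $b$ performs no operation. The preceding cycle boundary is a joint stop realizing $C_0$, and the completion of $a$'s cycle is a joint stop realizing $C_1\neq C_0$. Lemma~\ref{lem:solo-drift}(1) then gives $\tfrac12(a_1+b_1)\neq c_0$, so the expansion condition fails at $i=0$. This is exactly the situation covered by Lemma~\ref{lem:solo-drift}(2) for $K=A_M$.

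This violation is a safety violation located inside the prefix, and in fact no later than the prefix's first landing. By Lemma~\ref{lem:concat}(1) it persists along the ZCC continuation, and that continuation starts from the state actually reached, as Lemma~\ref{lem:concat}(2) requires. Hence a single admissible fair execution violates all three predicates, and Lemma~\ref{lem:boundary-common-failure} gives the claim.

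The only point needing care is the second case. We must confirm that in the reused execution no Move of $b$ begins before $a$'s first nonzero Move completes. This holds because the prefix is strictly serialized, so the joint stop realizing $C_1$ really separates a solo step. The later ZCC continuation may involve delayed moves, but it occurs after $C_1$ and cannot repair a safety violation already incurred.
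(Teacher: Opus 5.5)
Your proposal is correct and follows the paper's proof: you reuse the execution of Lemma~\ref{lem:A2-15-FS-AM}, obtain the $\mathrm{EPG}^{*}$ failure either from perpetuity when no nonzero $\M$ occurs or from Lemma~\ref{lem:solo-drift}(1) at the first nonzero $\M$ (which is solo in the serialized prefix), and carry it through the ZCC continuation by Lemma~\ref{lem:concat}(1). One cosmetic point: your aside citing Lemma~\ref{lem:solo-drift}(2) does not apply verbatim, since that part concerns executions whose cycles are \emph{all} disjoint and the appended ZCC continuation need not be; part~(1), which is the one you actually use, suffices.
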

 
\begin{proof}
Let $\mathcal A$ be any deterministic algorithm of $\FS$ under $A_M$
and use the execution of Lemma~\ref{lem:A2-15-FS-AM}, where
$\Phi_{\SRDV}$ and $\Phi_{\mathrm{ZCC}}$ were shown to
fail. Its first phase is an interval of alternating singleton
activations, which activates one robot at a time, and the configuration
is unchanged before the first nonzero $\M$.
 
In the branch in which no nonzero $\M$ ever occurs the joint-stop
sequence has the single term $C_0$ and the perpetuity condition of
$\mathrm{EPG}^{*}$ fails. Otherwise the first nonzero $\M$ occurs in
that first phase and is performed while the peer is not activated, so
by Lemma~\ref{lem:solo-drift}(1) the expansion condition fails at
$i=0$, a safety violation. The prefix of
Lemma~\ref{lem:A2-15-FS-AM} is then followed by the continuation used
there, namely the coincidence continuation from
Lemma~\ref{lem:zcc_not_fc_s} or the continuation from
Lemma~\ref{lem:zcc_not_fs_am}, and the violation persists along it
(Lemma~\ref{lem:concat}(1)).
 
In each branch a single fair execution violates all three predicates,
so Lemma~\ref{lem:boundary-common-failure} gives the claim.
\end{proof}
 
\begin{lemma}
\label{lem:A3-3-OB-R}
$\mathrm{ESZ}\notin\Sol(\OB^{R})$.
\end{lemma}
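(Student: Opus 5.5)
We reuse the execution of Lemma~\ref{lem:A2-4-OB-R}, which already violates $\Phi_{\mathrm{EPG}^{*}}$ and $\Phi_{\mathrm{ZCC}}$ for an arbitrary deterministic $\OB^R$ algorithm. It then suffices to show that the same execution also violates $\Phi_{\SRDV}$. This follows the pattern of Lemma~\ref{lem:A3-1-OB-R}, since the companion pair $\{\mathrm{EPG}^{*},\mathrm{ZCC}\}$ has $\OB^R$ in its maximal negative boundary. By Lemma~\ref{lem:single-rule}(a) the rule is a single normalized pair $(\lambda,\theta)$, and we fix coordinates $A=0$, $a_0=d$, $b_0=2d$, $B=3d$ on the initial line. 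We then split into the same cases that Lemma~\ref{lem:A2-4-OB-R} uses to choose its schedule.

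\emph{Case $\lambda=0$.} No nonzero $\M$ ever occurs, so the nonzero $\M$ required by Lemma~\ref{lem:p1_no_null} does not exist. This is a liveness failure of $\SRDV$, witnessed by the same infinite fully synchronous execution along which the other two predicates fail.

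\emph{Case $\lambda\neq0$ and $(\lambda,\theta)\neq(\lambda^{\star},\theta^{\star})$.} The reused execution is the fully synchronous one of Lemma~\ref{lem:rigidA1}. If $(\lambda,\theta)\neq(2,0)$, the earliest completed nonzero $\M$ (that of $a$ in round~$1$) starts from the initial configuration and does not land at $B$. Lemma~\ref{lem:p1_first_move} then makes the exchange phase unsatisfiable. If $(\lambda,\theta)=(2,0)$, round~$1$ yields $(3d,0)$ and round~$2$ yields $(-3d,6d)$, which leaves $[A,B]$, exactly as in Lemma~\ref{lem:srdv_not_ob_f}. We must also check that a coincidence ($c_2=0$) does not interfere. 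With $c_2=0$ we have $(\lambda,\theta)=(1/2,0)\neq(2,0)$, so round~$1$ already fails the first-move requirement.

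\emph{Case $(\lambda,\theta)=(\lambda^{\star},\theta^{\star})$.} The reused execution is the alternating singleton one. At its first activation $a$ moves from $d$ to $d/2$, which is not $B$, so Lemma~\ref{lem:p1_first_move} again rules out the exchange phase.

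In every branch with $\lambda\neq0$ the $\SRDV$ violation is a safety violation in the prefix. Lemma~\ref{lem:concat}(1)(3) then applies: the liveness continuation used for $\mathrm{ZCC}$ in Lemma~\ref{lem:A2-4-OB-R} preserves it, and Lemma~\ref{lem:boundary-common-failure} concludes. The only delicate point is making sure that the execution actually used in each branch of Lemma~\ref{lem:A2-4-OB-R} is the one analysed here, in particular that its Looks before the first move see $\iota$. Once that is checked, everything else is routine.
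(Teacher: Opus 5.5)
Your proposal is correct and follows the paper's proof essentially step for step. You reuse the execution of Lemma~\ref{lem:A2-4-OB-R} and split into the same three cases: $\lambda=0$; $\lambda\neq0$ with $(\lambda,\theta)\neq(\lambda^{\star},\theta^{\star})$, subdivided by whether $(\lambda,\theta)=(2,0)$; and $(\lambda,\theta)=(\lambda^{\star},\theta^{\star})$. You then obtain the \SRDV failure from Lemmas~\ref{lem:p1_no_null}, \ref{lem:p1_first_move} and~\ref{lem:srdv_not_ob_f} exactly as the paper does; your extra check of the coincidence case $c_2=0$ is harmless, since $(1/2,0)\neq(2,0)$ already places it in the first-move branch.
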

 
\begin{proof}
Let $\mathcal A$ be any deterministic algorithm of $\OB$ under $\RSY$
and use the execution of Lemma~\ref{lem:A2-4-OB-R}, which violates
$\Phi_{\mathrm{EPG}^{*}}$ and $\Phi_{\mathrm{ZCC}}$. By
Lemma~\ref{lem:single-rule}(a) the rule is a single pair
$(\lambda,\theta)$.
 
If $\lambda=0$ no nonzero $\M$ occurs, so the first nonzero $\M$
demanded by Lemma~\ref{lem:p1_no_null} does not exist and
$\Phi_{\SRDV}$ fails.
 
If $\lambda\neq0$ and
$(\lambda,\theta)\neq(\lambda^{\star},\theta^{\star})$, the execution of
Lemma~\ref{lem:A2-4-OB-R} is the fully synchronous one. When
$(\lambda,\theta)\neq(2,0)$, the nonzero $\M$ of round~$1$ does not meet
the requirement of Lemma~\ref{lem:p1_first_move} and
$\Phi_{\SRDV}$ fails at that landing. When
$(\lambda,\theta)=(2,0)$, the robots land at $3d$ and $0$ in round~$1$
and at $-3d$ and $6d$ in round~$2$, leaving $[A,B]=[0,3d]$
(Lemma~\ref{lem:A2-8-OB-F}, Lemma~\ref{lem:srdv_not_ob_f}), so
$\Phi_{\SRDV}$ fails.
 
If $(\lambda,\theta)=(\lambda^{\star},\theta^{\star})$, the execution is
the alternating singleton one and $a$ moves from $d$ to $d/2$ at its
first activation, which is not the landing at $B=3d$ required by
Lemma~\ref{lem:p1_first_move}, so $\Phi_{\SRDV}$ fails.
 
In each case a single execution violates all three predicates, so
Lemma~\ref{lem:boundary-common-failure} gives the claim.
\end{proof}


\subsubsection{\texorpdfstring{EPG$^{*}$ $\vee$ S-RDV $\vee$ VEC (ESX)}{EPG* or S-RDV or VEC}}
\label{ssub:g1-4}
 
\begin{lemma}
\label{lem:A3-4-FC-A}
$\mathrm{ESX}\notin\Sol(\FC^{A})$.
\end{lemma}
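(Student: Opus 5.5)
The plan follows the reuse principle stated at the start of Appendix~\ref{app:cf3-expansion}. Since $\FC^{A}\in\mathcal N_{\{\SRDV,\mathrm{VEC}\}}$, I would take the execution built in Lemma~\ref{lem:A2-16-FC-A} for an arbitrary deterministic $\FC^{A}$ algorithm. That execution already violates $\Phi_{\SRDV}$ and $\Phi_{\mathrm{VEC}}$. What remains is to check, branch by branch, that the same execution also violates $\Phi_{\mathrm{EPG}^{*}}$, exactly as was done for $\mathrm{ESZ}$ in Lemma~\ref{lem:A3-3-FC-A}. Throughout I use coordinates $A=0$, $a_0=d$, $b_0=2d$, $B=3d$, so that $c_0=\tfrac32 d$.

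\emph{No nonzero output.} The fully synchronous zero-Move execution runs forever, so the joint-stop sequence is $(C_0)$ and the perpetuity condition of $\mathrm{EPG}^{*}$ fails. The same fair infinite execution already witnesses the two liveness violations of Lemma~\ref{lem:A2-16-FC-A}, so all three failures hold on one execution.

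\emph{Case $(\lambda,\theta)=(2,0)$.} Both robots Look and Compute, and the Move of $b$ is delayed until strictly after $a$ lands at $3d$. The landing instant of $a$ is therefore a joint stop with configuration $(3d,2d)$, whose midpoint is $\tfrac52 d\neq c_0$, and $b$ has not moved between $C_0$ and this stop. By Lemma~\ref{lem:solo-drift}(1) the expansion condition fails at $i=0$. This is a safety violation in the prefix, so it persists through the appended transported stable-RDV counterexecution (Lemma~\ref{lem:concat}(1)). That counterexecution is what supplies the liveness failure of $\SRDV$, while the $\mathrm{VEC}$ transition failure at $d_1=-d_0$ arises at the same joint stop.

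\emph{Case $(\lambda,\theta)\neq(2,0)$.} Here round $t$ is replaced by a solo interval of $a$, and $b$ is not activated. The first activation of $a$ is the first nonzero $\M$ of the execution, and the joint stops before and after it realize $C_0$ and $C_1$ with only $a$ having moved. Lemma~\ref{lem:solo-drift}(1) again gives a safety violation of the expansion condition at $i=0$, which is preserved by the fair tail.

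The only delicate point is the $(2,0)$ branch: I must confirm that the landing of $a$ really is a joint stop even though $b$ holds a pending Move. It is, because $\Sab$ is defined through nonzero Move intervals only, and a robot that has not begun its Move is stationary. With that checked, a single execution violates all three predicates in every case, and Lemma~\ref{lem:boundary-common-failure} gives the claim.
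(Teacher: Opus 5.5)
Your proposal is correct and matches the paper's proof essentially line for line. Like the paper, you reuse the execution of Lemma~\ref{lem:A2-16-FC-A} and split it into the three branches (no nonzero output; $(\lambda,\theta)=(2,0)$ with $b$'s Move delayed past $a$'s landing at $3d$; and the solo interval of $a$ otherwise), obtaining the $\mathrm{EPG}^{*}$ failure either from perpetuity or from Lemma~\ref{lem:solo-drift}(1) at $i=0$, including the same observation that a pending but unstarted Move does not prevent a joint stop.
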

 
\begin{proof}
Let $\mathcal A$ be any deterministic algorithm of $\FC$ under $\ASY$
and use the execution of Lemma~\ref{lem:A2-16-FC-A}, which violates
$\Phi_{\SRDV}$ and $\Phi_{\mathrm{VEC}}$. We show that it also violates
$\Phi_{\mathrm{EPG}^{*}}$. The initial centre of gravity is
$\tfrac12(a_0+b_0)=\tfrac32 d$.

\emph{No nonzero output.} The execution contains no nonzero $\M$, so
the joint-stop sequence has the single term $C_0$ and the perpetuity
condition of $\mathrm{EPG}^{*}$ fails. The same infinite execution
witnesses this, together with the two liveness violations of
Lemma~\ref{lem:A2-16-FC-A}.

\emph{Case $(\lambda,\theta)=(2,0)$.} The $\M$ of $b$ begins strictly
after that of $a$ ends, so at the landing of $a$ neither robot is
inside a $\M$, although $b$ still has a pending one. That instant is a
joint stop and realizes $C_1=(3d,2d)$, of centre of gravity
$\tfrac52 d\neq\tfrac32 d$; equivalently, $b$ completes no nonzero $\M$
between $C_0$ and $C_1$, so Lemma~\ref{lem:solo-drift}(1) applies. The
expansion condition of $\mathrm{EPG}^{*}$ fails at $i=0$, a safety
violation in the prefix.

\emph{Case $(\lambda,\theta)\neq(2,0)$.} The configuration is unchanged
before the solo activation interval of $a$, so the first activation of
$a$ in it is the first nonzero $\M$ of the execution and $b$ is not
activated during it. The maximal joint-stop intervals before and after
it realize $C_0$ and $C_1$, so by Lemma~\ref{lem:solo-drift}(1) the
expansion condition fails at $i=0$, a safety violation in the prefix.

In the first case all three violations are liveness violations of one
infinite execution. In the other two the violation of
$\Phi_{\mathrm{EPG}^{*}}$ is finite and precedes the continuations
supplied by Lemma~\ref{lem:A2-16-FC-A}. In either case
Lemma~\ref{lem:concat}(1)(3) is satisfied, and
Lemma~\ref{lem:boundary-common-failure} gives the claim.
\end{proof}

\begin{lemma}
\label{lem:A3-4-FS-A}
$\mathrm{ESX}\notin\Sol(\FS^{A})$.
\end{lemma}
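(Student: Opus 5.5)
We reuse the execution of Lemma~\ref{lem:A2-16-FS-A}, which already violates $\Phi_{\SRDV}$ and $\Phi_{\mathrm{VEC}}$ for an arbitrary deterministic $\FS$ algorithm $\mathcal A$ under $\ASY$. It remains to check that the same execution violates $\Phi_{\mathrm{EPG}^{*}}$. The case split of that lemma carries over. Recall that the execution starts with the alternating singleton execution $E_0$ under normalized frames, and that, as long as no nonzero $\M$ has occurred, the two robots traverse the same states. Hence the first nonzero $\M$ is performed by $a$.

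\emph{No nonzero $\M$ in $E_0$.} The configuration stays at $\iota$, so the joint-stop sequence is $(C_0)$ and the perpetuity condition of $\mathrm{EPG}^{*}$ fails. The fair execution $E_0$ itself also witnesses the two liveness violations of Lemma~\ref{lem:A2-16-FS-A}, so Lemma~\ref{lem:concat}(3) is met.

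\emph{Case $p_E=B$.} We follow $E_0$, where each cycle is completed before the next begins. The landing of $a$ at $B$ is therefore a joint stop at which $b$ has not moved, and it realizes $C_1=(B,b_0)$. By Lemma~\ref{lem:solo-drift}(1) its centre of gravity is not $c_0$, so the expansion condition fails at $i=0$.

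\emph{Case $p_E\neq B$.} Here $b$ looks during the movement of $a$ and its $\M$ is delayed until after $a$ has landed. This is the main point to check. We do not let the two $\M$ intervals overlap, as was done in Lemma~\ref{lem:A2-6-FS-A}. Instead, the landing instant of $a$ is a joint stop: $b$ has a pending $\M$ but has not begun it, so it is stationary. That instant realizes $C_1=(p_E,b_0)$ with $b$ unmoved, and Lemma~\ref{lem:solo-drift}(1) again gives the failure of the expansion condition at $i=0$, exactly as argued in the preamble of Appendix~\ref{app:cf3-expansion}.

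This requires that the choice of $\mu$ in Lemma~\ref{lem:A2-16-FS-A} (via Corollary~\ref{cor:continuum}) is unaffected. It is, because the $\mathrm{EPG}^{*}$ violation holds for every $\mu$.

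In both nondegenerate cases the $\mathrm{EPG}^{*}$ violation is a safety violation occurring inside the finite prefix. By Lemma~\ref{lem:concat}(1)(2) it persists into the same fair tail that preserves the $\SRDV$ and $\mathrm{VEC}$ violations. Lemma~\ref{lem:boundary-common-failure} then gives the claim.
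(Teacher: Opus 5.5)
Your proposal is correct and follows the paper's proof essentially step by step. You reuse the execution of Lemma~\ref{lem:A2-16-FS-A}, get the perpetuity failure when no nonzero Move occurs, and otherwise note that in both cases $b$ completes no nonzero Move before $a$ lands. The landing of $a$ is therefore a joint stop realizing $C_1$, and Lemma~\ref{lem:solo-drift}(1) breaks the expansion condition at $i=0$. Your remark that this violation holds for every $\mu$, so the choice of $\mu$ in Lemma~\ref{lem:A2-16-FS-A} is unaffected, is a check the paper leaves implicit.
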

 
\begin{proof}
Let $\mathcal A$ be any deterministic algorithm of $\FS$ under $\ASY$
and use the execution of Lemma~\ref{lem:A2-16-FS-A}, which violates
$\Phi_{\SRDV}$ and $\Phi_{\mathrm{VEC}}$.

If no nonzero $\M$ occurs, the joint-stop sequence has the single term
$C_0$ and the perpetuity condition of $\mathrm{EPG}^{*}$ fails.
Otherwise, in both cases of that proof the first nonzero $\M$ is
performed by $a$ from the initial configuration, and $b$ completes no
nonzero $\M$ before $a$ has landed: for $p_E=B$ it is not activated,
and for $p_E\neq B$ its $\M$ is delayed until after the landing. The
landing of $a$ is therefore a joint stop realizing $C_1$, and by
Lemma~\ref{lem:solo-drift}(1) the expansion condition of
$\mathrm{EPG}^{*}$ fails at $i=0$, a safety violation preserved by
the rest of the execution (Lemma~\ref{lem:concat}(1)(2)).

In each case a single execution violates all three predicates, so
Lemma~\ref{lem:boundary-common-failure} gives the claim.
\end{proof}
 
\begin{lemma}
\label{lem:A3-4-OB-R}
$\mathrm{ESX}\notin\Sol(\OB^{R})$.
\end{lemma}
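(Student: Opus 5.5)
We reuse, as in Lemmas~\ref{lem:A3-1-OB-R} and~\ref{lem:A3-3-OB-R}, the execution of a two-task expansion pair at $\OB^R$. The natural choice is the execution of Lemma~\ref{lem:A2-6-OB-R}, which already violates $\Phi_{\mathrm{EPG}^{*}}$ and $\Phi_{\mathrm{VEC}}$. That execution is the alternating singleton one, while the exchange argument of Lemma~\ref{lem:A3-1-OB-R} splits on whether the rule equals $(\lambda^{\star},\theta^{\star})$. So we should first confirm that Lemma~\ref{lem:A2-6-OB-R} indeed handles every rule $(\lambda,\theta)$ with the alternating schedule, which it does. It then remains only to add the failure of $\Phi_{\SRDV}$ on that same execution.

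By Lemma~\ref{lem:single-rule}(a), the rule of an $\OB$ algorithm in normalized frames is a single pair $(\lambda,\theta)$. We use coordinates $A=0$, $a_0=d$, $b_0=2d$, $B=3d$.

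\emph{Case $\lambda=0$.} No nonzero $\M$ occurs. Lemma~\ref{lem:p1_no_null} is contradicted, so $\Phi_{\SRDV}$ fails as a liveness violation on the same infinite execution that already witnesses the perpetuity failure of $\mathrm{EPG}^{*}$ and the attainment failure of $\mathrm{VEC}$.

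\emph{Case $\lambda\neq0$.} The first nonzero $\M$ is the solo round-$1$ activation of $a$, and its $\LK$ observes $\iota$. If $(\lambda,\theta)\neq(2,0)$, that $\M$ does not land at $B$, and Lemma~\ref{lem:p1_first_move} makes the exchange phase unsatisfiable. If $(\lambda,\theta)=(2,0)$, then $a$ lands at $B$, and in round~$2$ the robot $b$ receives an isomorphic normalized input. By Lemma~\ref{lem:p1_replay}, $b$ then lands at $4d\notin[A,B]$, violating condition~1 of Definition~\ref{def:s-rdv}. Both are safety violations at a finite time. They coexist with the safety violations of $\mathrm{EPG}^{*}$ (solo drift) and $\mathrm{VEC}$, and all are preserved by the fair tail (Lemma~\ref{lem:concat}(1)).

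The only delicate point is to check that the alternating schedule of Lemma~\ref{lem:A2-6-OB-R} uses normalized frames up to and including round~$2$, so that Lemma~\ref{lem:p1_replay} applies. The schedule is fixed in advance and does not depend on $(\lambda,\theta)$, so this holds. Lemma~\ref{lem:boundary-common-failure} then gives the claim.
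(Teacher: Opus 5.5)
Your proposal is correct and matches the paper's proof essentially step for step. Like the paper, you reuse the alternating-singleton execution of Lemma~\ref{lem:A2-6-OB-R}, which already defeats $\mathrm{EPG}^{*}$ and $\mathrm{VEC}$ for every rule, and add the failure of $\Phi_{\SRDV}$ through the three-way split $\lambda=0$, $(\lambda,\theta)\neq(2,0)$, $(\lambda,\theta)=(2,0)$, handled respectively by Lemmas~\ref{lem:p1_no_null}, \ref{lem:p1_first_move} and~\ref{lem:p1_replay}. In the last case the paper additionally records $c_1=|1-2|=1\neq0$, so that the robots are still distinct when $b$ looks; your argument covers this implicitly, since $a$ lands at $3d\neq 2d$.
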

 
\begin{proof}
Let $\mathcal A$ be any deterministic algorithm of $\OB$ under $\RSY$
and use the execution of Lemma~\ref{lem:A2-6-OB-R}, which violates
$\Phi_{\mathrm{EPG}^{*}}$ and $\Phi_{\mathrm{VEC}}$ and which is in
every case the alternating singleton one with an empty fully
synchronous prefix and normalized frames. By
Lemma~\ref{lem:single-rule}(a) the rule is a single pair
$(\lambda,\theta)$.
 
If $\lambda=0$ no nonzero $\M$ occurs, so the first nonzero $\M$
demanded by Lemma~\ref{lem:p1_no_null} does not exist and
$\Phi_{\SRDV}$ fails. If $\lambda\neq0$ and
$(\lambda,\theta)\neq(2,0)$, the landing of round~$1$ is not the
$B=3d$ required by Lemma~\ref{lem:p1_first_move}, so
$\Phi_{\SRDV}$ fails at that landing. If
$(\lambda,\theta)=(2,0)$, then $c_1=|1-2|=1\neq0$ and
Lemma~\ref{lem:p1_replay} applies: the robot $b$, activated immediately
after $a$ lands at $3d$, receives an isomorphic local input and lands
at $4d\notin[A,B]$, so $\Phi_{\SRDV}$ fails.
 
In each case a single execution violates all three predicates, so
Lemma~\ref{lem:boundary-common-failure} gives the claim.
\end{proof}

\subsubsection{\texorpdfstring{EPG$^{*}$ $\vee$ S-RDV $\vee$ CFSM (ESC)}{EPG* or S-RDV or CFSM}}
\label{ssub:g1-5}
 
\begin{lemma}
\label{lem:A3-5-FC-A}
$\mathrm{ESC}\notin\Sol(\FC^{A})$.
\end{lemma}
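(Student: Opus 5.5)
We reuse the execution of Lemma~\ref{lem:A2-17-FC-A}, which already refutes $\Phi_{\SRDV}$ and $\Phi_{\CFSM}$ for an arbitrary deterministic $\FC^{A}$ algorithm $\mathcal A$. By the reuse principle stated at the start of this subsection, it then suffices to show that the same fair execution also violates $\Phi_{\mathrm{EPG}^{*}}$. We keep the coordinates $A=0$, $a_0=d$, $b_0=2d$, $B=3d$, so that $c_0=\tfrac32 d$, and we follow the three branches of that proof.

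\emph{No nonzero output.} The execution contains no nonzero $\M$, so the joint-stop sequence is $(C_0)$ and the perpetuity condition of $\mathrm{EPG}^{*}$ fails. This is a liveness violation. It is witnessed by the same infinite fully synchronous execution that already witnesses the liveness violations of $\CFSM$ (zero moves) and $\SRDV$ (no rendezvous).

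\emph{Case $(\lambda,\theta)=(2,0)$.} Both robots Look and Compute, and the $\M$ of $b$ is delayed until strictly after the $\M$ of $a$ ends. At the landing of $a$ neither robot is inside a nonzero $\M$, so that instant is a joint stop; $b$ has a pending $\M$, but it is stationary. The joint stop realizes $C_1=(3d,2d)$, whose midpoint is $\tfrac52 d\neq c_0$. Equivalently, $b$ completes no nonzero $\M$ between $C_0$ and $C_1$, so Lemma~\ref{lem:solo-drift}(1) applies and the expansion condition fails at $i=0$. This is a safety violation in the prefix.

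One point needs checking here. The collision of $a$ with $b$ at $2d$ happens during the $\M$ of $a$, but that is not a joint stop, so it does not insert an earlier term into the joint-stop sequence; the first term after $C_0$ is still realized at the landing. The transported stable-RDV counterexecution is appended afterwards, and the violation persists along it by Lemma~\ref{lem:concat}(1).

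\emph{Case $(\lambda,\theta)\neq(2,0)$.} The configuration is $\iota$ until the solo interval of $a$ begins. The first activation of $a$ in that interval is therefore the first nonzero $\M$ of the execution, and $b$ is not activated during it. The surrounding joint stops realize $C_0$ and $C_1$, so Lemma~\ref{lem:solo-drift}(1) again makes the expansion condition fail at $i=0$. The two moves of $a$ are complete cycles, and the first one ends at a genuine joint stop, including the subcase $c_1=0$ where $a$ lands on $b$ and the configuration becomes $(2d,2d)\neq C_0$. The violation is preserved by the fair tail.

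The main obstacle is confirming, in the delayed-move branch, that the landing of $a$ really is the first joint stop after $C_0$ even though $b$ has a pending $\M$ and a collision occurs in flight. This is handled by the definition of $\Sab$, which counts only instants inside nonzero $\M$ intervals as non-stops, exactly as in Lemma~\ref{lem:A3-4-FC-A}. In all three branches a single execution violates all three predicates, so Lemma~\ref{lem:concat}(1)(3) is satisfied and Lemma~\ref{lem:boundary-common-failure} gives the claim.
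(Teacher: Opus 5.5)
Your proof is correct and follows the paper's own argument. It reuses the execution of Lemma~\ref{lem:A2-17-FC-A} and, in each of its three branches, obtains the $\mathrm{EPG}^{*}$ violation exactly as in Lemma~\ref{lem:A3-4-FC-A}: perpetuity fails when there is no nonzero output, and Lemma~\ref{lem:solo-drift}(1) applies at the landing of $a$ in the other two branches. Your extra checks are also correct and only make explicit what the paper leaves implicit: the in-flight collision does not create an intermediate joint stop, and the $c_1=0$ subcase still gives $C_1\neq C_0$.
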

 
\begin{proof}
Let $\mathcal A$ be any deterministic algorithm of $\FC$ under $\ASY$
and use the execution of Lemma~\ref{lem:A2-17-FC-A}, which violates
$\Phi_{\CFSM}$ and $\Phi_{\SRDV}$. That execution has the same three
cases as the one of Lemma~\ref{lem:A2-16-FC-A}, and the argument of
Lemma~\ref{lem:A3-4-FC-A} applies verbatim to each of them: with no
nonzero output the joint-stop sequence has the single term $C_0$ and
the perpetuity condition of $\mathrm{EPG}^{*}$ fails; when
$(\lambda,\theta)=(2,0)$ the landing of $a$ is a joint stop realizing
$C_1=(3d,2d)$ and Lemma~\ref{lem:solo-drift}(1) makes the expansion
condition fail at $i=0$; and otherwise the first nonzero $\M$ is the
first activation of $a$ in its solo interval, with $b$ not activated,
so Lemma~\ref{lem:solo-drift}(1) applies again.

In the first case all three violations are liveness violations of one
infinite execution; in the other two the violation of
$\Phi_{\mathrm{EPG}^{*}}$ is finite and precedes the continuations
supplied by Lemma~\ref{lem:A2-17-FC-A}. In either case
Lemma~\ref{lem:concat}(1)(3) is satisfied, and
Lemma~\ref{lem:boundary-common-failure} gives the claim.
\end{proof}
 
\begin{lemma}
\label{lem:A3-5-OB-R}
$\mathrm{ESC}\notin\Sol(\OB^{R})$.
\end{lemma}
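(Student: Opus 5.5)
The plan is to reuse the execution of Lemma~\ref{lem:A2-7-OB-R}, which already violates $\Phi_{\mathrm{EPG}^{*}}$ and $\Phi_{\CFSM}$ for every $\OB^{R}$ algorithm. We then check that the same execution also violates $\Phi_{\SRDV}$, exactly as was done in Lemmas~\ref{lem:A3-1-OB-R} and~\ref{lem:A3-4-OB-R}. That execution is in every case the alternating singleton activation with an empty fully synchronous prefix and normalized frames (Lemma~\ref{lem:altA}). By Lemma~\ref{lem:single-rule}(a) the rule is a single pair $(\lambda,\theta)$, and we use coordinates $A=0$, $a_0=d$, $b_0=2d$, $B=3d$.

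We split into three cases on $(\lambda,\theta)$.
\begin{enumerate}
\item If $\lambda=0$, no nonzero $\M$ ever occurs. Then the first nonzero $\M$ demanded by Lemma~\ref{lem:p1_no_null} does not exist, and $\Phi_{\SRDV}$ fails. This is a liveness violation witnessed by the same infinite execution, which already witnesses the perpetuity failure of $\mathrm{EPG}^{*}$ and the single-move failure of $\CFSM$.
\item If $\lambda\neq0$ and $(\lambda,\theta)\neq(2,0)$, the first nonzero $\M$ is the round-$1$ activation of $a$, and its Look observes $\iota$. It therefore lands elsewhere than $B$, and Lemma~\ref{lem:p1_first_move} makes the exchange phase unsatisfiable. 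This is a safety violation at a finite time.
\item If $(\lambda,\theta)=(2,0)$, then $c_1=1\neq0$, so $a$ lands at $3d$ without coinciding with $b$. Lemma~\ref{lem:p1_replay} then applies to the alternating execution: $b$, activated next, lands at $4d\notin[A,B]$, so $\Phi_{\SRDV}$ fails. The moving robot $a$ also passes through $b$'s position $2d$, which is already a $\CFSM$ collision. This does not interfere with the $\SRDV$ argument, because Lemma~\ref{lem:p1_replay} uses only landing points and normalized inputs at distinct positions.
\end{enumerate}

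The main point to verify is that the construction of Lemma~\ref{lem:A2-7-OB-R} is indeed alternating singleton in every branch, unlike Lemma~\ref{lem:A2-2-OB-R}, which switches to full synchrony. Reading its proof confirms this. Hence no fully synchronous subcase arises, and all safety violations sit in the finite prefix before any liveness witness. Lemma~\ref{lem:concat}(1)(3) is therefore satisfied, and Lemma~\ref{lem:boundary-common-failure} gives the claim.
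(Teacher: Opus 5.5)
Your proposal is correct and matches the paper's proof essentially verbatim: you reuse the alternating singleton execution of Lemma~\ref{lem:A2-7-OB-R} and split into the cases $\lambda=0$, $(\lambda,\theta)\neq(2,0)$ and $(\lambda,\theta)=(2,0)$, invoking Lemmas~\ref{lem:p1_no_null}, \ref{lem:p1_first_move} and~\ref{lem:p1_replay} in turn. Your extra remark that $a$ passes through $b$ at $2d$ is correct but unnecessary, since that execution already violates $\CFSM$ through the move count.
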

 
\begin{proof}
Let $\mathcal A$ be any deterministic algorithm of $\OB$ under $\RSY$
and use the execution of Lemma~\ref{lem:A2-7-OB-R}, which violates
$\Phi_{\mathrm{EPG}^{*}}$ and $\Phi_{\CFSM}$ and which is the
alternating singleton one with an empty fully synchronous prefix and
normalized frames. By Lemma~\ref{lem:single-rule}(a) the rule is a
single pair $(\lambda,\theta)$.
 
If $\lambda=0$ no nonzero $\M$ occurs, so the first nonzero $\M$
demanded by Lemma~\ref{lem:p1_no_null} does not exist. If
$\lambda\neq0$ and $(\lambda,\theta)\neq(2,0)$, the landing of
round~$1$ is not the $B=3d$ required by
Lemma~\ref{lem:p1_first_move}. If $(\lambda,\theta)=(2,0)$, then
$c_1=1\neq0$, which is the third case of
Lemma~\ref{lem:A2-7-OB-R}, and Lemma~\ref{lem:p1_replay} makes $b$ land
at $4d\notin[A,B]$. In every case $\Phi_{\SRDV}$ fails.
 
In each case a single execution violates all three predicates, so
Lemma~\ref{lem:boundary-common-failure} gives the claim.
\end{proof}

\subsubsection{\texorpdfstring{EPG$^{*}$ $\vee$ ZCC $\vee$ VSS (EZV)}{EPG* or ZCC or VSS}}
\label{ssub:g1-6}
 
\begin{lemma}
\label{lem:A3-6-FS-AM}
$\mathrm{EZV}\notin\Sol(\FS^{A_M})$.
\end{lemma}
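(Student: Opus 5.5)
We reuse the execution built in Lemma~\ref{lem:A2-18-FS-AM} for the pair $\mathrm{ZCC}\vee\VSS$ at $\FS^{A_M}$. That lemma already shows, for an arbitrary deterministic algorithm $\mathcal A$ of $\FS$ under $A_M$, that this single fair execution violates both $\Phi_{\mathrm{ZCC}}$ and $\Phi_{\VSS}$. By the principle stated at the beginning of Appendix~\ref{app:cf3-expansion}, since $\FS^{A_M}\in\mathcal N_{\{\mathrm{ZCC},\VSS\}}$, it remains only to verify that the same execution also violates $\Phi_{\mathrm{EPG}^{*}}$. This mirrors Lemma~\ref{lem:A3-3-FS-AM}, where the $\FS^{A_M}$ execution of Lemma~\ref{lem:A2-15-FS-AM} had exactly the same shape.

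We split along the two branches of Lemma~\ref{lem:A2-18-FS-AM}.

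\emph{No nonzero Move occurs in the alternating singleton phase.} Then the configuration stays at $\iota$ forever. The joint-stop sequence reduces to $C_0$, so the perpetuity condition of $\mathrm{EPG}^{*}$ fails. This is a liveness violation, witnessed by the same infinite fair execution that witnesses the liveness failures of $\VSS$ (Lemma~\ref{lem:p1_no_null}) and of the convergence condition of $\mathrm{ZCC}$.

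\emph{A nonzero Move occurs.} The prefix consists of alternating singleton activations, each cycle completed before the next begins, and the configuration is unchanged before the first nonzero Move. That Move is performed by $a$ while $b$ is not activated. The round boundaries around it are joint stops realizing $C_0$ and $C_1$, with $b$ completing no nonzero Move in between. Lemma~\ref{lem:solo-drift}(1) then gives $\tfrac12(a_1+b_1)\neq c_0$, so the expansion condition fails at $i=0$. This is a safety violation at a finite time, inside the prefix. The prefix is then followed by the continuation used in Lemma~\ref{lem:A2-18-FS-AM}, namely either the coincidence continuation of Lemma~\ref{lem:zcc_not_fc_s} or the continuation statement of Lemma~\ref{lem:zcc_not_fs_am}. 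By Lemma~\ref{lem:concat}(1) the $\mathrm{EPG}^{*}$ violation persists along it.

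The only point needing care is that the continuation is built from the state reached after the $\VSS$ violation, which may be the cycle of $b$ landing at $4d$. The $\mathrm{EPG}^{*}$ violation, however, is already established at the first landing of $a$, which is earlier. So the ordering required by Lemma~\ref{lem:concat}(3) holds: safety violations come first, and the liveness violation of $\mathrm{ZCC}$ is witnessed by the explicit continuation.

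In each branch a single fair execution violates all three predicates, and Lemma~\ref{lem:boundary-common-failure} gives the claim.
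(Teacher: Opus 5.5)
Your proposal is correct and follows the paper's proof essentially verbatim: you reuse the execution of Lemma~\ref{lem:A2-18-FS-AM}, take the perpetuity failure of $\mathrm{EPG}^{*}$ when no nonzero Move occurs, and otherwise take the safety violation of the expansion condition at $i=0$ from the solo first nonzero Move via Lemma~\ref{lem:solo-drift}(1), which persists along the $\mathrm{ZCC}$ continuation by Lemma~\ref{lem:concat}(1). Your remark that this violation precedes the landing of $b$ at $4d$ only makes explicit an ordering the paper leaves implicit.
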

 
\begin{proof}
Let $\mathcal A$ be any deterministic algorithm of $\FS$ under $A_M$
and use the execution of Lemma~\ref{lem:A2-18-FS-AM}, which violates
$\Phi_{\mathrm{VSS}}$ and $\Phi_{\mathrm{ZCC}}$. Its first phase is an
interval of alternating singleton activations, and the configuration is
unchanged before the first nonzero $\M$.
 
In the branch in which no nonzero $\M$ ever occurs, the joint-stop
sequence has the single term $C_0$ and the perpetuity condition of
$\mathrm{EPG}^{*}$ fails. Otherwise the first nonzero $\M$ occurs in
that first phase and is performed while the peer is not activated, so
by Lemma~\ref{lem:solo-drift}(1) the expansion condition fails at
$i=0$, a safety violation. It persists along the continuation from
Lemma~\ref{lem:zcc_not_fc_s} or Lemma~\ref{lem:zcc_not_fs_am} used in
Lemma~\ref{lem:A2-18-FS-AM} (Lemma~\ref{lem:concat}(1)).
 
In each case a single execution violates all three predicates, so
Lemma~\ref{lem:boundary-common-failure} gives the claim.
\end{proof}
 
\begin{lemma}
\label{lem:A3-6-OB-R}
$\mathrm{EZV}\notin\Sol(\OB^{R})$.
\end{lemma}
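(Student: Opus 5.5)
We reuse the execution of Lemma~\ref{lem:A2-4-OB-R}, which already violates $\Phi_{\mathrm{EPG}^{*}}$ and $\Phi_{\mathrm{ZCC}}$ for an arbitrary deterministic $\OB$ algorithm under $\RSY$. The only new work is to show that the same execution also violates $\Phi_{\VSS}$. This follows the pattern of Lemma~\ref{lem:A3-2-OB-R} and Lemma~\ref{lem:A3-3-OB-R}.

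By Lemma~\ref{lem:single-rule}(a) the rule is a single normalized pair $(\lambda,\theta)$. We use coordinates $A=0$, $a_0=d$, $b_0=2d$, $B=3d$ and $P=\{0,d,2d,3d\}$, and split into cases as Lemma~\ref{lem:A2-4-OB-R} does.

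\emph{Case $\lambda=0$.} No nonzero $\M$ occurs. Hence $(\mathrm{P1})$ fails by Lemma~\ref{lem:p1_no_null}. This is a liveness violation witnessed by the same infinite execution that already witnesses the perpetuity failure of $\mathrm{EPG}^{*}$ and the ZCC failure.

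\emph{Case $\lambda\neq0$ and $(\lambda,\theta)\neq(\lambda^{\star},\theta^{\star})$.} Here the execution is the fully synchronous one of Lemma~\ref{lem:rigidA1}.
\begin{itemize}
\item If $(\lambda,\theta)\neq(2,0)$, the round-$1$ landing is not the configuration $(B,A)$ with counts $(1,1)$. By Lemma~\ref{lem:p1_first_move}, no valid $T_1$ exists.
\item If $(\lambda,\theta)=(2,0)$, round~$2$ sends $a$ to $-3d$ and $b$ to $6d$, both outside $P$, so $(\mathrm{P1}\text{-}b)$ fails. This is the computation of Lemma~\ref{lem:vss_not_ob_f}.
\end{itemize}
Both are safety violations.

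\emph{Case $(\lambda,\theta)=(\lambda^{\star},\theta^{\star})=(\tfrac12,\pi)$.} The execution is the alternating singleton one. At its first activation $a$ moves away from $b$ by half the distance, from $d$ to $d/2\notin P$. So $(\mathrm{P1}\text{-}b)$ fails immediately, again a safety violation.

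The only point needing care is ordering, which Lemma~\ref{lem:concat}(1)(3) requires. Every finite violation of $\VSS$ lies in a prefix of the very infinite execution that witnesses the liveness violations of ZCC and, where relevant, of $\mathrm{EPG}^{*}$. So all three predicates fail on one fair execution, and Lemma~\ref{lem:boundary-common-failure} gives the claim. I expect no real obstacle beyond checking that the case split of Lemma~\ref{lem:A2-4-OB-R} matches the VSS case split above.
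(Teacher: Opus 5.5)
Your proposal is correct and follows the paper's proof essentially step for step. You reuse the execution of Lemma~\ref{lem:A2-4-OB-R} and split into the same three cases: $\lambda=0$; the fully synchronous case, where the $(2,0)$ subcase sends the robots to $-3d$ and $6d$ in round~$2$; and the alternating singleton case $(\lambda^{\star},\theta^{\star})$, where $a$ lands at $d/2\notin P$.
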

 
\begin{proof}
Let $\mathcal A$ be any deterministic algorithm of $\OB$ under $\RSY$
and use the execution of Lemma~\ref{lem:A2-4-OB-R}, which violates
$\Phi_{\mathrm{EPG}^{*}}$ and $\Phi_{\mathrm{ZCC}}$. By
Lemma~\ref{lem:single-rule}(a) the rule is a single pair
$(\lambda,\theta)$.
 
If $\lambda=0$ no nonzero $\M$ occurs and condition $(\mathrm{P1})$ of
$\mathrm{VSS}$ fails by Lemma~\ref{lem:p1_no_null}.
 
If $\lambda\neq0$ and
$(\lambda,\theta)\neq(\lambda^{\star},\theta^{\star})$, the execution of
Lemma~\ref{lem:A2-4-OB-R} is the fully synchronous one. When
$(\lambda,\theta)\neq(2,0)$, the nonzero $\M$ of round~$1$ does not meet
the requirement of Lemma~\ref{lem:p1_first_move}, so no valid $T_1$
exists and $\Phi_{\mathrm{VSS}}$ fails. When $(\lambda,\theta)=(2,0)$,
the landings $3d$ and $0$ of round~$1$ both belong to
$P=\{0,d,2d,3d\}$, but in round~$2$ the robots land at $-3d$ and $6d$,
leaving $P$, so the endpoint condition $(\mathrm{P1}\text{-}b)$ fails.
 
If $(\lambda,\theta)=(\lambda^{\star},\theta^{\star})$, the execution is
the alternating singleton one and $a$ moves from $d$ to $d/2\notin P$
at its first activation, so $(\mathrm{P1}\text{-}b)$ fails there.
 
In each case a single execution violates all three predicates, so
Lemma~\ref{lem:boundary-common-failure} gives the claim.
\end{proof}

\subsubsection{\texorpdfstring{EPG$^{*}$ $\vee$ VSS $\vee$ VEC (EVX)}{EPG* or VSS or VEC}}
\label{ssub:g1-7}
 
\begin{lemma}
\label{lem:A3-7-FS-A}
$\mathrm{EVX}\notin\Sol(\FS^{A})$.
\end{lemma}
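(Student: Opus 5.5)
We reuse the execution constructed for the two-element antichain $\{\VSS,\mathrm{VEC}\}$ at $\FS^{A}$, namely that of Lemma~\ref{lem:A2-22-FS-A}. That execution already violates $\Phi_{\VSS}$ and $\Phi_{\mathrm{VEC}}$. Here $\FS^{A}\in\mathcal N_{\{\VSS,\mathrm{VEC}\}}$, so this reuse is the one prescribed by the principle stated at the start of Appendix~\ref{app:cf3-expansion}. It then remains only to check that the same execution also violates $\Phi_{\mathrm{EPG}^{*}}$. We follow the template of Lemma~\ref{lem:A3-4-FS-A}, whose underlying execution (Lemma~\ref{lem:A2-16-FS-A}) has exactly the same structure.

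We split into the three branches of Lemma~\ref{lem:A2-22-FS-A}.

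\emph{No nonzero $\M$ occurs in $E_0$.} Then the joint-stop sequence is $(C_0)$ and the perpetuity condition of $\mathrm{EPG}^{*}$ fails. The fair execution $E_0$ itself witnesses this liveness violation together with the two already recorded.

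\emph{Case $p_E=B$.} The first nonzero $\M$ is performed by $a$ alone in an alternating singleton round, while $b$ is not activated. The round boundaries before and after it are joint stops realizing $C_0$ and $C_1=(3d,2d)$, so Lemma~\ref{lem:solo-drift}(1) shows that the centre of gravity moves from $\tfrac32d$ to $\tfrac52d$.

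\emph{Case $p_E\neq B$.} The first nonzero $\M$ is again performed by $a$ from $\iota$. The $\M$ of $b$, computed from $p_\mu$, is delayed until after $a$ lands. Hence $b$ has begun no movement at that landing time, so the landing time lies in $\Sab$ and realizes $C_1=(p_E,b_0)$, with only $a$ having moved. Lemma~\ref{lem:solo-drift}(1) again gives $\tfrac12(a_1+b_1)\neq c_0$.

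The main point to verify is precisely this joint-stop claim in the delayed case. A pending but unstarted $\M$ of $b$ does not place $b$ in $\MoveSet{b}$, so $\Stop_b$ holds at the landing time of $a$. The term following $C_0$ is therefore $(p_E,b_0)$ and not the configuration after both landings. This matters because it makes the $\mathrm{EPG}^{*}$ violation independent of the choice of $\mu$ used to defeat $\mathrm{VEC}$, so no additional exclusion via Corollary~\ref{cor:continuum} is needed.

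In both nondegenerate cases the $\mathrm{EPG}^{*}$ violation is a safety violation in the finite prefix. By Lemma~\ref{lem:concat}(1)(2) it persists under the fair extension already used in Lemma~\ref{lem:A2-22-FS-A}. One execution thus violates all three predicates, and Lemma~\ref{lem:boundary-common-failure} yields the claim.
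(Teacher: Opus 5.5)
Your proposal is correct and matches the paper's proof: reuse the execution of Lemma~\ref{lem:A2-22-FS-A}, handle the no-move branch via perpetuity, and in both cases $p_E=B$ and $p_E\neq B$ note that $b$ has not begun a nonzero $\M$ when $a$ lands. That landing is therefore a joint stop realizing $C_1$, and Lemma~\ref{lem:solo-drift}(1) breaks the expansion condition at $i=0$. Your remark that this violation does not depend on the choice of $\mu$ is correct, though the paper leaves it implicit.
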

 
\begin{proof}
Let $\mathcal A$ be any deterministic algorithm of $\FS$ under $\ASY$
and use the execution of Lemma~\ref{lem:A2-22-FS-A}, which violates
$\Phi_{\mathrm{VSS}}$ and $\Phi_{\mathrm{VEC}}$.
 
If no nonzero $\M$ occurs, the joint-stop sequence has the single term
$C_0$ and the perpetuity condition of $\mathrm{EPG}^{*}$ fails.
Otherwise, in both cases of that proof the first nonzero $\M$ is
performed by $a$ from the initial configuration, and $b$ completes no
nonzero $\M$ before $a$ has landed: for $p_E=B$ it is not activated,
and for $p_E\neq B$ its $\M$ is delayed until after the landing. The
landing of $a$ is therefore a joint stop realizing $C_1$, and by
Lemma~\ref{lem:solo-drift}(1) the expansion condition of
$\mathrm{EPG}^{*}$ fails at $i=0$, a safety violation preserved by
the rest of the execution (Lemma~\ref{lem:concat}(1)(2)).
 
In each case a single execution violates all three predicates, so
Lemma~\ref{lem:boundary-common-failure} gives the claim.
\end{proof}
 
\begin{lemma}
\label{lem:A3-7-OB-R}
$\mathrm{EVX}\notin\Sol(\OB^{R})$.
\end{lemma}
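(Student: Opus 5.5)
We reuse an execution already built for a two-element sub-antichain at $\OB^{R}$ and check the third predicate on it. The natural choice is the execution of Lemma~\ref{lem:A2-6-OB-R} (for $\mathrm{EPG}^{*}\vee\mathrm{VEC}$). In every case it is the alternating singleton execution with an empty fully synchronous prefix and normalized frames, and it already violates $\Phi_{\mathrm{EPG}^{*}}$ and $\Phi_{\mathrm{VEC}}$. By Lemma~\ref{lem:single-rule}(a) the rule of the $\OB$ algorithm is a single pair $(\lambda,\theta)$. It then remains only to show that this same execution violates $\Phi_{\VSS}$, with coordinates $A=0$, $a_0=d$, $b_0=2d$, $B=3d$ and $P=\{0,d,2d,3d\}$.

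The argument for $\VSS$ follows that of Lemma~\ref{lem:A3-4-OB-R}, with the endpoint condition replacing the interval condition of $\SRDV$. We split into three cases.
\begin{enumerate}
\item If $\lambda=0$, no nonzero $\M$ ever occurs. Then $(\mathrm{P1})$ fails by Lemma~\ref{lem:p1_no_null}; this is a liveness failure witnessed by the same infinite fair execution.
\item If $\lambda\neq0$ and $(\lambda,\theta)\neq(2,0)$, the first nonzero $\M$ is that of $a$ in round~$1$, made from the initial configuration. Its landing point is not $B$, so by Lemma~\ref{lem:p1_first_move} no valid exchange time $T_1$ exists. (If that landing lies outside $P$, $(\mathrm{P1}\text{-}b)$ already fails.)
\item If $(\lambda,\theta)=(2,0)$, then $c_1=1\neq0$, $a$ lands at $B$, and Lemma~\ref{lem:p1_replay} applies. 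The robot $b$, activated next with an isomorphic normalized input, lands at $4d\notin P$, so $(\mathrm{P1}\text{-}b)$ fails.
\end{enumerate}
The last two are safety violations, so by Lemma~\ref{lem:concat}(1) they persist under the fair alternating tail, together with the earlier violations.

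The only delicate point is confirming that Lemma~\ref{lem:A2-6-OB-R} really uses the alternating singleton schedule in every case, with no fully synchronous branch. If it had one, as Lemma~\ref{lem:A2-2-OB-R} does, we would instead handle the fully synchronous $(2,0)$ case via round~$2$ landing at $-3d,6d\notin P$, exactly as in Lemma~\ref{lem:A3-6-OB-R}. Once this is checked, a single execution violates all three predicates, and Lemma~\ref{lem:boundary-common-failure} gives the claim.
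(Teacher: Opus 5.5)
Your proposal is correct and follows the paper's proof essentially verbatim. It reuses the alternating-singleton execution of Lemma~\ref{lem:A2-6-OB-R} and refutes $\Phi_{\VSS}$ by the same three-way split ($\lambda=0$, then $(\lambda,\theta)\neq(2,0)$, then $(\lambda,\theta)=(2,0)$) via Lemmas~\ref{lem:p1_no_null}, \ref{lem:p1_first_move} and~\ref{lem:p1_replay}; that lemma has no fully synchronous branch, so your fallback is never needed.
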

 
\begin{proof}
Let $\mathcal A$ be any deterministic algorithm of $\OB$ under $\RSY$
and use the execution of Lemma~\ref{lem:A2-6-OB-R}, which violates
$\Phi_{\mathrm{EPG}^{*}}$ and $\Phi_{\mathrm{VEC}}$ and which is the
alternating singleton one with normalized frames. By
Lemma~\ref{lem:single-rule}(a) the rule is a single pair
$(\lambda,\theta)$.
 
If $\lambda=0$ no nonzero $\M$ occurs and $(\mathrm{P1})$ fails by
Lemma~\ref{lem:p1_no_null}. If $\lambda\neq0$ and
$(\lambda,\theta)\neq(2,0)$, the landing of round~$1$ is not the
$B=3d$ required by Lemma~\ref{lem:p1_first_move} and no valid $T_1$
exists. If $(\lambda,\theta)=(2,0)$, then $c_1=|1-2|=1\neq0$ and
Lemma~\ref{lem:p1_replay} applies: the robot $b$, activated immediately
after $a$ lands at $3d$, lands at $4d\notin P=\{0,d,2d,3d\}$. In every
case $\Phi_{\mathrm{VSS}}$ fails.
 
In each case a single execution violates all three predicates, so
Lemma~\ref{lem:boundary-common-failure} gives the claim.
\end{proof}

\subsubsection{\texorpdfstring{EPG$^{*}$ $\vee$ VSS $\vee$ CFSM (EVC)}{EPG* or VSS or CFSM}}
\label{ssub:g1-8}
 
\begin{lemma}
\label{lem:A3-8-OB-R}
$\mathrm{EVC}\notin\Sol(\OB^{R})$.
\end{lemma}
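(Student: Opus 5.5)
Following the reuse principle of this appendix, I will not build a new execution. I will take the execution of Lemma~\ref{lem:A2-7-OB-R}, which already violates $\Phi_{\mathrm{EPG}^{*}}$ and $\Phi_{\CFSM}$ for an arbitrary deterministic algorithm $\mathcal A$ of $\OB$ under $\RSY$. That execution is the alternating singleton activation $\{a\},\{b\},\dots$ with an empty fully synchronous prefix, normalized frames at every $\LK$, and an arbitrary fair tail. It then remains to show that the same execution also violates $\Phi_{\VSS}$. By Lemma~\ref{lem:single-rule}(a) the rule is a single pair $(\lambda,\theta)$. I use the coordinates $A=0$, $a_0=d$, $b_0=2d$, $B=3d$ and $P=\{0,d,2d,3d\}$.

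The case split is the one already used in Lemmas~\ref{lem:A3-5-OB-R} and~\ref{lem:A3-7-OB-R}.
\begin{itemize}
\item If $\lambda=0$, no nonzero $\M$ occurs. Then $(\mathrm{P1})$ fails by Lemma~\ref{lem:p1_no_null}. This is a liveness failure, witnessed by the same infinite execution, which also witnesses the perpetuity failure of $\mathrm{EPG}^{*}$ and the move-count failure of $\CFSM$.
\item If $\lambda\neq0$ and $(\lambda,\theta)\neq(2,0)$, the first nonzero $\M$ is performed by $a$ in round~$1$. Its $\LK$ observes the initial configuration, and it does not land at $B$. Lemma~\ref{lem:p1_first_move} then shows that no valid exchange time $T_1$ exists, so $\Phi_{\VSS}$ fails.
\item If $(\lambda,\theta)=(2,0)$, then $c_1=|1-2|=1\neq0$. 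Robot $a$ lands at $3d$ without reaching its peer, so the hypotheses of Lemma~\ref{lem:p1_replay} hold. Robot $b$, activated in round~$2$, lands at $4d\notin P$, and the endpoint condition $(\mathrm{P1}\text{-}b)$ fails.
\end{itemize}

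The only point needing care is the overlap between the branches above and the sub-case $c_1=0$ of Lemma~\ref{lem:A2-7-OB-R}, where $a$ lands on $b$. In that sub-case $(\lambda,\theta)=(1,0)\neq(2,0)$, so it falls under the second branch, and the $\VSS$ failure there depends only on the landing point of the first $\M$, not on any coincidence. The other thing to check is that the violations combine correctly in one execution. When $\lambda\neq0$, the $\VSS$ violations are safety violations at finite times and are preserved by the fair tail. Since the $\mathrm{EPG}^{*}$ and $\CFSM$ violations are also finite in those cases, Lemma~\ref{lem:concat}(1)(3) is satisfied. Lemma~\ref{lem:boundary-common-failure} then gives the claim.
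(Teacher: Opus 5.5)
Your proposal is correct and is essentially the paper's own proof. You reuse the alternating-singleton execution of Lemma~\ref{lem:A2-7-OB-R} and split on $\lambda=0$, on $\lambda\neq0$ with $(\lambda,\theta)\neq(2,0)$, and on $(\lambda,\theta)=(2,0)$, invoking Lemmas~\ref{lem:p1_no_null}, \ref{lem:p1_first_move} and~\ref{lem:p1_replay} respectively to break $\Phi_{\VSS}$. Your extra bookkeeping on the $c_1=0$ sub-case and on combining the safety and liveness violations is accurate and makes explicit what the paper leaves implicit. One wording nit: in the $(2,0)$ case $a$ does pass through $b$'s position on its way to $3d$; what you need is only that it does not land there, so $b$'s next $\LK$ sees distance $d>0$.
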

 
\begin{proof}
Let $\mathcal A$ be any deterministic algorithm of $\OB$ under $\RSY$
and use the execution of Lemma~\ref{lem:A2-7-OB-R}, which violates
$\Phi_{\mathrm{EPG}^{*}}$ and $\Phi_{\CFSM}$ and which is the
alternating singleton one with normalized frames. By
Lemma~\ref{lem:single-rule}(a) the rule is a single pair
$(\lambda,\theta)$.
 
If $\lambda=0$ no nonzero $\M$ occurs and $(\mathrm{P1})$ fails by
Lemma~\ref{lem:p1_no_null}. If $\lambda\neq0$ and
$(\lambda,\theta)\neq(2,0)$, the landing of round~$1$ is not the
$B=3d$ required by Lemma~\ref{lem:p1_first_move} and no valid $T_1$
exists. If $(\lambda,\theta)=(2,0)$, then $c_1=1\neq0$ and
Lemma~\ref{lem:p1_replay} makes $b$ land at $4d\notin P=\{0,d,2d,3d\}$.
In every case $\Phi_{\mathrm{VSS}}$ fails.
 
In each case a single execution violates all three predicates, so
Lemma~\ref{lem:boundary-common-failure} gives the claim.
\end{proof}


\subsection{Three-Task Disjunctions Containing SRO}
\label{app:cf3-sro}
 
This subsection treats the six antichains $\{\mathrm{SRO},E,P\}$ with
$E\in\mathcal E$ and $P\in\mathcal P_3=\{\mathrm{ZCC},\allowbreak\mathrm{VEC},\allowbreak\CFSM\}$, which carry thirteen boundary obligations in total.
The class $\OB^{S}$ occurs here as a boundary class for five of these
antichains, and no corresponding lemma exists in
Appendix~\ref{app:cf-two-task-disjunctions}; the execution for it is
constructed from scratch in Lemma~\ref{lem:A3-9-OB-S} and reused by the
four later $\OB^{S}$ lemmas.
 
\subsubsection{\texorpdfstring{SRO $\vee$ S-RDV $\vee$ ZCC (RSZ)}{SRO or S-RDV or ZCC}}
\label{ssub:g2-1}
 
\begin{lemma}
\label{lem:A3-9-FC-A}
$\mathrm{RSZ}\notin\Sol(\FC^{A})$.
\end{lemma}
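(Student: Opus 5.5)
The plan follows the reuse principle stated at the start of Appendix~\ref{app:cf3-expansion}. We take the two-task execution already built for a pair $A'\subset\{\mathrm{SRO},\SRDV,\mathrm{ZCC}\}$ with $\FC^{A}\in\mathcal N_{A'}$, and check only the third predicate on it. The natural choice is the pair $\SRDV\vee\mathrm{ZCC}$: its $\FC^A$ execution, Lemma~\ref{lem:A2-15-FC-A}, already violates $\Phi_{\SRDV}$ and $\Phi_{\mathrm{ZCC}}$. It remains to show that this same execution violates $\Phi_{\mathrm{SRO}}$. The alternative would be the execution of Lemma~\ref{lem:A2-10-FC-A} for $\mathrm{SRO}\vee\SRDV$ plus a ZCC check. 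That would force us to verify a liveness failure of ZCC along the appended stable-RDV counterexecution, which we cannot control, so we prefer the first route.

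We go through the three branches of Lemma~\ref{lem:A2-15-FC-A}.

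\emph{No nonzero output.} The robots never move, the joint-stop sequence is $(C_0)$, and the perpetuity condition of $\mathrm{SRO}$ fails on the same infinite execution.

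\emph{Case $(\lambda,\theta)=(2,0)$.} The Move of $b$ is delayed until $a$ has landed at $B=3d$. That landing time is a joint stop realizing $C_1=(3d,2d)$, so $d_1=-d_0$. This is neither $\mathbf{C1}$ ($R_{-90^\circ}d_0$) nor $\mathbf{C2}$ ($2^{-1/2}R_{-45^\circ}d_0$), and it gives a safety violation in the prefix, exactly as in Case $z=2$ of Lemma~\ref{lem:A2-10-FC-A}. By Lemma~\ref{lem:concat}(1) it persists along the appended RDV counterexecution.

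\emph{Case $(\lambda,\theta)\neq(2,0)$.} The execution has a solo interval of $a$ with its peer at rest. We use Lemma~\ref{lem:cf-sro-two-moves} through its one-move remark, which says that the first nonzero solo Move satisfies $\mathbf{C1}$ or $\mathbf{C2}$ only if it is one of two specific displacements; a landing on the peer satisfies neither. If the first Move is not one of these, SRO fails at once. If it is, the Move is a genuine rotation, so $\psi=\arg\sigma_1\not\equiv0$. Then the subcases of Lemma~\ref{lem:A2-15-FC-A} that extend the interval ($\psi\not\equiv0$) already include a second solo Move, or can be made to include one, so Lemma~\ref{lem:cf-sro-two-moves} refutes SRO.

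The main obstacle is the contraction subcase $\sigma_1\in(0,1)$, where the solo interval stops after one activation and control passes to the continuation of Lemma~\ref{lem:zcc_not_fc_s}. There $\sigma_1$ is real, so the first Move is a pure approach along the segment. That Move is neither a $90^\circ$ nor a $45^\circ$ rotation, and $\mathbf{C1}$/$\mathbf{C2}$ already fail at $i=0$, a safety violation. So the problematic subcase resolves by the one-move observation.

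It remains to make sure that the $\psi\not\equiv0$ subcase genuinely has two solo nonzero moves before any collision-freedom violation ends the interval. If the ZCC violation occurs at the first activation, we simply extend the solo interval by one more activation. Since $\lambda\neq0$ and $c_1\neq0$, that extra activation is nonzero, and safety violations persist under extension. For $c_1=0$, the landing on the peer violates SRO directly.

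Finally, all three violations sit on one fair execution, with the safety violations preceding the liveness continuations. So Lemma~\ref{lem:concat}(1)(3) is satisfied, and Lemma~\ref{lem:boundary-common-failure} gives the claim.
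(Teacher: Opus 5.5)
Your proposal is correct and follows the paper's route: you reuse the execution of Lemma~\ref{lem:A2-15-FC-A} and check $\Phi_{\mathrm{SRO}}$ branch by branch, treating the no-output and $(2,0)$ branches exactly as the paper does. The only difference is in the branch $(\lambda,\theta)\neq(2,0)$. There the paper lengthens the solo interval to two nonzero moves in every subcase and remarks that, for $\sigma_1\in(0,1)$, the continuation property of Lemma~\ref{lem:zcc_not_fc_s} still applies after any number of solo activations. You instead dispatch every first move that is not a $\mathbf{C1}$/$\mathbf{C2}$ displacement, the contraction case included, by the one-move remark after Lemma~\ref{lem:cf-sro-two-moves}, and you need the second move only when $\psi\not\equiv0$, where Lemma~\ref{lem:A2-15-FC-A} extends the interval anyway.
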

 
\begin{proof}
Use the construction of Lemma~\ref{lem:A2-15-FC-A}.
If no nonzero output occurs, the perpetuity condition
of $\mathrm{SRO}$ fails as well.
 
If the first nonzero rule is $(2,0)$, the first solo move
of $a$ reverses the segment and satisfies neither
$\mathbf{C1}$ nor $\mathbf{C2}$.
Thus $\Phi_{\mathrm{SRO}}$ has a safety violation before
the continuation used in Lemma~\ref{lem:A2-15-FC-A}.
 
Otherwise keep the peer inactive during the initial
solo interval of $a$. If the first move lands on the peer,
the resulting zero segment violates the rotation condition
of $\mathrm{SRO}$. If it does not, choose the interval
to contain at least two nonzero moves of $a$.
Lemma~\ref{lem:cf-sro-two-moves} then gives a safety
violation of $\Phi_{\mathrm{SRO}}$.
 
For the $\mathrm{ZCC}$ component, use the finite safety
violation or the continuation property of Lemma~\ref{lem:zcc_not_fc_s},
as in the proof of Lemma~\ref{lem:A2-15-FC-A}.
In the contraction case $\sigma_1\in(0,1)$ the robots remain in the
original initial segment in their original order after any number of
solo activations of $a$, so the continuation property applies at the
end of the interval chosen above.
The earlier safety violations persist.
 
Thus one fair infinite execution violates all three
predicates, and Lemma~\ref{lem:boundary-common-failure}
gives the claim.
\end{proof}
 
\begin{lemma}
\label{lem:A3-9-FS-AM}
$\mathrm{RSZ}\notin\Sol(\FS^{A_M})$.
\end{lemma}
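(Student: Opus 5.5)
The plan is to reuse a two-task execution at $\FS^{A_M}$ and add the third predicate. The natural candidate is the execution of Lemma~\ref{lem:A2-15-FS-AM}, which already violates $\Phi_{\SRDV}$ and $\Phi_{\mathrm{ZCC}}$. The alternative is Lemma~\ref{lem:A2-11-FS-AM}, which violates $\Phi_{\mathrm{SRO}}$ and $\Phi_{\mathrm{ZCC}}$. The key structural fact is that both constructions consist of a finite prefix of solo, one-cycle-at-a-time activations under normalized frames. That prefix ends at a joint stop with no pending operation, both colors on the common orbit of $c_0$ under $g$. The prefix is followed by the ZCC continuation of Lemma~\ref{lem:zcc_not_fc_s} (coincidence) or of Lemma~\ref{lem:zcc_not_fs_am}. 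So it suffices to arrange a prefix that forces safety violations of both $\Phi_{\mathrm{SRO}}$ and $\Phi_{\SRDV}$ while leaving the state in the form required by those continuations. A liveness failure of both in the no-move branch is also acceptable.

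I would start from the prefix of Lemma~\ref{lem:A2-15-FS-AM}: alternating singleton activations, with $a_0=d$, $b_0=2d$. If no nonzero $\M$ occurs, all three tasks fail their liveness requirements: the sequence $(C_i)$ is finite, no exchange occurs, and the robots stay apart, so convergence fails. Otherwise let $a$ perform the first nonzero $\M$ and split into two cases.

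\emph{Case 1: $a$ lands at $B=3d$.} Lemma~\ref{lem:p1_replay} sends $b$ to $4d$, violating $\SRDV$. Meanwhile $d_1=-d_0$ satisfies neither $\mathbf{C1}$ nor $\mathbf{C2}$, so SRO already fails at the first transition.

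\emph{Case 2: $a$ lands elsewhere.} $\SRDV$ is dead by Lemma~\ref{lem:p1_first_move}. For SRO, I would instead continue as in Lemma~\ref{lem:A2-11-FS-AM}, which is admissible because $\FS^{A_M}$ permits solo intervals. Keep $b$ inactive and activate $a$ alone until either it lands on $b$ or it completes a second nonzero $\M$. The first outcome violates the rotation condition with $d_{i+1}=0$. The second violates SRO by Lemma~\ref{lem:cf-sro-two-moves}. If $a$ has no further nonzero output on its orbit, run $b$ through the same number of activations, as in that lemma. Either $b$'s move violates SRO, or both robots sit in the periodic all-zero part. In that situation alternating activations keep everything fixed, which kills SRO perpetuity and the ZCC zig-zag simultaneously.

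The main obstacle is the bookkeeping needed to invoke Lemma~\ref{lem:zcc_not_fs_am} at the end of the prefix, since its hypothesis asks that the two colors lie on a common eventual cycle. I would argue as in Lemma~\ref{lem:A2-11-FS-AM}. Every $\LK$ in the prefix observes distinct positions, because a landing on the peer is always the last move and is then handled by the coincidence continuation. Hence by Lemma~\ref{lem:norm}(1) each robot activated $k$ times carries $g^k(c_0)$, and the hypothesis is met even though the activation counts differ. The continuation then yields a fair, time-divergent violation of $\Phi_{\mathrm{ZCC}}$. The earlier safety violations persist by Lemma~\ref{lem:concat}(1). With all three predicates failing on one execution, Lemma~\ref{lem:boundary-common-failure} gives the claim.
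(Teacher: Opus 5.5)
Your proof is correct and follows essentially the paper's argument: the paper also splits on whether the first nonzero $\M$ of $a$ lands at $B=3d$, using the solo construction of Lemma~\ref{lem:A2-11-FS-AM} when it does not, and a replay by $b$ to $4d$ together with $d_1=-d_0$ when it does. The differences are organizational only: the paper takes Lemma~\ref{lem:A2-11-FS-AM} as the base execution and inserts a solo interval of $b$ in the $B$ case, rather than starting from the alternating prefix of Lemma~\ref{lem:A2-15-FS-AM}; and in that case it reads off the $\mathrm{ZCC}$ collision-freedom failure directly from $\|a-a_0\|=2d>d=\|b-a_0\|$, so no $\mathrm{ZCC}$ continuation is needed there.
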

 
\begin{proof}
Let $\mathcal A$ be any deterministic algorithm of $\FS$ under $A_M$
and modify the execution of Lemma~\ref{lem:A2-11-FS-AM} as follows.
 
If no nonzero $\M$ ever occurs, $\Phi_{\SRDV}$ fails by
Lemma~\ref{lem:p1_no_null}, and the first condition of $\mathrm{SRO}$
and the third condition of $\mathrm{ZCC}$ fail as shown in
Lemma~\ref{lem:A2-11-FS-AM}.
 
If the first nonzero $\M$ of $a$ does not land at the $B=3d$ required
by Lemma~\ref{lem:p1_first_move}, then
$\Phi_{\SRDV}$ fails at that point, and the execution
of Lemma~\ref{lem:A2-11-FS-AM} may be used unchanged, so
$\Phi_{\mathrm{SRO}}$ and $\Phi_{\mathrm{ZCC}}$ fail as shown there.
 
If it does land at $B$, then $(\lambda_s,\theta_s)=(2,0)$, and this
displacement is neither the $90^\circ$ clockwise rotation about the
peer required by $\mathbf{C1}$ nor the one about the midpoint required
by $\mathbf{C2}$, so by Lemma~\ref{lem:cf-sro-two-moves} the second
condition of $\mathrm{SRO}$ fails already at this move and no second
nonzero $\M$ of $a$ is needed. Insert, immediately after this move, a
solo activation interval of $b$: activating $b$ alone until it reaches
the state $s$ leaves the configuration unchanged, since in all earlier
states $\lambda=0$ and the state transition of an $\FS$ robot depends
only on its own state; then $b$ in state $s$ observes a peer at
distance $d$ and, in the normalized frame, receives a local input
isomorphic to that of $a$, so it applies the same $(2,0)$ and, by
Lemma~\ref{lem:p1_replay}, lands at $4d\notin[A,B]$, which violates
$\Phi_{\SRDV}$. Moreover, the first move of $a$ alone takes it
from $d$ to $3d$ while $b$ rests at $b_0=2d$, so
\[
  \|a-a_0\|=2d>d=\|b-a_0\|
\]
and the collision-freedom condition of $\mathrm{ZCC}$ has already
failed. No additional delayed $\M$ is needed; append an arbitrary fair,
time-divergent tail.
 
In every case a single fair execution violates all three predicates,
and the safety violations established in its prefix persist
(Lemma~\ref{lem:concat}(1)), so
Lemma~\ref{lem:boundary-common-failure} gives the claim.
\end{proof}
 
\begin{lemma}
\label{lem:A3-9-OB-S}
$\mathrm{RSZ}\notin\Sol(\OB^{S})$.
\end{lemma}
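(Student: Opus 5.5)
We must construct, for an arbitrary $\OB$ algorithm under $\SSY$, one fair execution violating SRO, S-RDV and ZCC simultaneously. Since $\OB$ has a unique visible state, Lemma~\ref{lem:single-rule}(a) gives a single normalized rule $(\lambda,\theta)$. We use coordinates $A=0$, $a_0=d$, $b_0=2d$, $B=3d$, and the schedule is chosen after a case split on $(\lambda,\theta)$, since $\SSY$ allows both synchronous and solo rounds.

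\emph{Case $\lambda=0$.} No robot ever moves under any schedule. Take fair alternating activations. Perpetuity of SRO fails, Lemma~\ref{lem:p1_no_null} is contradicted for S-RDV, and the robots stay distinct, so convergence of ZCC fails. All three are liveness violations of the same infinite execution.

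\emph{Case $\lambda\neq0$, $(\lambda,\theta)\neq(2,0)$.} Activate $a$ alone for two rounds (Lemma~\ref{lem:solo}). Its first nonzero $\M$ observes $\iota$ and does not land at $B$, so the exchange phase of S-RDV is unsatisfiable by Lemma~\ref{lem:p1_first_move}. For SRO, a landing on the peer ($c_1=0$) gives $d_1=0$, which violates the rotation condition; otherwise two solo nonzero moves violate SRO by Lemma~\ref{lem:cf-sro-two-moves}. Both are safety violations. For ZCC, we reuse the argument of Lemma~\ref{lem:A2-4-FC-S} with $\sigma_1=1-\lambda e^{\mathrm i\theta}$, lengthening the solo interval as needed:
\begin{itemize}
\item if $\psi\not\equiv0$, or if $\psi\equiv0$ with $c_1>1$, collision freedom fails after finitely many solo steps;
\item if $c_1=0$, we use the coincidence continuation of Lemma~\ref{lem:zcc_not_fc_s};
\item if $\sigma_1\in(0,1)$, the robots remain ordered inside $[a_0,b_0]$ with no pending operation, so the continuation property of Lemma~\ref{lem:zcc_not_fc_s} applies. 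That construction uses only singleton rounds with complete cycles, so it is admissible under $\SSY$ and valid for $\OB$ (an $\FC$ algorithm with one color).
\end{itemize}
The SRO and S-RDV safety violations lie in the prefix and persist by Lemma~\ref{lem:concat}(1).

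\emph{Case $(\lambda,\theta)=(2,0)$.} Activate $a$ alone once: it jumps to $3d$ across $b$. The reversed segment $d_1=-d_0$ satisfies neither $\mathbf{C1}$ nor $\mathbf{C2}$, so SRO fails. Moreover $\|a-a_0\|=2d>d=\|b-a_0\|$, so ZCC collision freedom fails. Next activate $b$ alone: by Lemma~\ref{lem:p1_replay} it lands at $4d\notin[A,B]$, so S-RDV fails. All three violations are safety violations, and we append any fair tail.

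The main obstacle is the contraction subcase $\sigma_1\in(0,1)$. There the SRO violation needs two solo moves, while ZCC needs its liveness continuation started from an ordered, pending-free configuration. We must check that after two (or more) solo contractions the robots are still in original order inside the initial segment, which holds because the displacement points toward the stationary peer with ratio in $(0,1)$. We must also check that the continuation of Lemma~\ref{lem:zcc_not_fc_s} transfers to $\OB^S$ unchanged. Once these are verified, Lemma~\ref{lem:boundary-common-failure} yields the claim, and the same execution can be reused for the later $\OB^{S}$ lemmas.
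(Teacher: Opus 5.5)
Your proposal is correct and follows essentially the same route as the paper: the same three-way case split on the single $\OB$ rule. For $\lambda=0$ you take fair alternating activations. For $(\lambda,\theta)=(2,0)$ you use one solo move of $a$, which breaks $\mathrm{SRO}$ and $\mathrm{ZCC}$ collision freedom, followed by the replay of $b$ to $4d$. Otherwise you use a sufficiently long solo interval of $a$ followed by the $\mathrm{ZCC}$ continuation of Lemma~\ref{lem:zcc_not_fc_s}, with the algorithm viewed as a one-color $\FC$ algorithm. The only detail worth stating explicitly is that in the subcase $c_1=0$ the solo interval should end at the first landing on the peer (or its second activation be read as the first step of the coincidence continuation), as the paper does.
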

 
\begin{proof}
Let $\mathcal A$ be any deterministic algorithm of $\OB$ under $\SSY$.
No lemma of Appendix~\ref{app:cf-two-task-disjunctions} has $\OB^{S}$
as a boundary class, so we construct the execution directly. The
visible state of an $\OB$ robot is unique, so in a normalized frame the
rule is a single pair $(\lambda,\theta)$
(Lemma~\ref{lem:single-rule}(a)).
 
\emph{Case $\lambda=0$.} Since $\OB$ has no colors, the two robots do
not move under any schedule. Hence the sequence of configurations
reduces to $C_0$ and the perpetuity condition of $\mathrm{SRO}$ fails,
the robots remain at distinct positions and the convergence condition
of $\mathrm{ZCC}$ fails, and no nonzero $\M$ exists, so
$\Phi_{\SRDV}$ fails by Lemma~\ref{lem:p1_no_null}. All
three are independent of the schedule, hence persist along any fair
infinite continuation.
 
\emph{Case $(\lambda,\theta)=(2,0)$.} Define the execution by one solo
activation of $a$, then one solo activation of $b$, then alternating
singleton activations and an arbitrary fair tail, all admissible under
$\SSY$. The displacement of $a$ is neither the $90^\circ$ clockwise
rotation about the peer ($\mathbf{C1}$) nor the one about the midpoint
($\mathbf{C2}$), so by Lemma~\ref{lem:cf-sro-two-moves} the second
condition of $\mathrm{SRO}$ fails at this move. Moreover
$\sigma_1=1-2=-1$, so $c_1=1$ and $\psi\equiv\pi$
(Lemma~\ref{lem:ratios}(iii)), whence
$f_a(1)=d\,|1-e^{\mathrm i\pi}|=2d>d=\|b(0)-a(0)\|$ and the
collision-freedom condition of $\mathrm{ZCC}$ fails. Finally $a$ has
landed at $B=3d$, so the robot $b$, activated immediately afterwards,
receives in the normalized frame an isomorphic local input, applies the
same $(2,0)$ and, by Lemma~\ref{lem:p1_replay}, lands at
$4d\notin[A,B]$, so $\Phi_{\SRDV}$ fails.
 
\emph{Case $\lambda\neq0$ and $(\lambda,\theta)\neq(2,0)$.} Define the
execution by an interval in which $a$ alone is activated $k$ times in a
row (Lemma~\ref{lem:solo}, with $k$ large enough for what follows),
then alternating singleton activations and an arbitrary fair tail. The
first nonzero $\M$ of $a$ is not the landing at $B$ required by
Lemma~\ref{lem:p1_first_move}, so
$\Phi_{\SRDV}$ fails at that point. For
$\Phi_{\mathrm{SRO}}$: if that move is neither of type $\mathbf{C1}$
nor of type $\mathbf{C2}$, the second condition fails at once, which
includes the case $c_1=0$, since then $d_{i+1}=0$ while both types
require a nonzero value; if it is of one of the two types, the second
solo nonzero $\M$ of $a$ makes the second condition fail by
Lemma~\ref{lem:cf-sro-two-moves}. For $\Phi_{\mathrm{ZCC}}$, put $\sigma_1:=1-\lambda e^{\mathrm{i}\theta}$
and regard $\mathcal A$ as an $\FC$ algorithm with a single color, so
that the constructions in the proof of Lemma~\ref{lem:zcc_not_fc_s}
apply. Every activation is a complete cycle under $\SSY$, so no
operation is pending at the end of the solo interval. If $c_1=0$,
the robots coincide after the first activation; end the solo interval
there and use the coincidence continuation described in the proof of
Lemma~\ref{lem:zcc_not_fc_s}. This either keeps both positions fixed
forever or produces a finite violation of the collision-freedom
condition. If $c_1\neq0$ and $\sigma_1\notin(0,1)$, the
collision-freedom condition fails at a finite time for $k$ large
enough, as in the proof of Lemma~\ref{lem:A2-15-FC-A}. In the remaining
contraction case $\sigma_1\in(0,1)$, the robots still lie in the
original initial segment in their original order after the $k$
activations, and neither has a pending operation; the continuation
property proved in Lemma~\ref{lem:zcc_not_fc_s} supplies a fair,
time-divergent continuation violating $\Phi_{\mathrm{ZCC}}$, while
preserving the safety violations already obtained in the prefix.
 
In every case a single execution violates all three predicates, so
Lemma~\ref{lem:boundary-common-failure} gives the claim.
\end{proof}

\subsubsection{\texorpdfstring{SRO $\vee$ S-RDV $\vee$ VEC (RSX)}{SRO or S-RDV or VEC}}
\label{ssub:g2-2}
 
\begin{lemma}
\label{lem:A3-10-FC-A}
$\mathrm{RSX}\notin\Sol(\FC^{A})$.
\end{lemma}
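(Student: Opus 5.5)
The plan is to reuse the execution of Lemma~\ref{lem:A2-16-FC-A}, which already violates $\Phi_{\SRDV}$ and $\Phi_{\mathrm{VEC}}$ for an arbitrary deterministic $\FC$ algorithm under $\ASY$. I will check that the same execution also violates $\Phi_{\mathrm{SRO}}$. This is the principle stated at the start of Appendix~\ref{app:cf3-expansion}: $\FC^A\in\mathcal N_{\{\SRDV,\mathrm{VEC}\}}$, so only the third predicate needs checking. I will go through the three branches of that construction in order, using coordinates $A=0$, $a_0=d$, $b_0=2d$, $B=3d$.

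\emph{No nonzero output.} The joint-stop sequence reduces to $C_0$, so the perpetuity condition of $\mathrm{SRO}$ fails. This liveness violation is witnessed by the same infinite execution that already witnesses the liveness failures of $\SRDV$ and $\mathrm{VEC}$.

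\emph{Case $(\lambda,\theta)=(2,0)$.} The Move of $b$ is delayed until after $a$ lands, so the landing of $a$ is a joint stop realizing $C_1=(3d,2d)$ with $d_1=-d_0$. This satisfies neither $\mathbf{C1}$ nor $\mathbf{C2}$, exactly as in the $z=2$ case of Lemma~\ref{lem:A2-10-FC-A}. The result is a safety violation of $\Phi_{\mathrm{SRO}}$ in the prefix, and it precedes the transported stable-RDV counterexecution appended for $\SRDV$.

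\emph{Case $(\lambda,\theta)\neq(2,0)$.} Here $a$ is activated alone $k\ge3$ times while $b$ keeps color $\gamma$, so by Lemma~\ref{lem:single-rule}(b) every activation uses the same rule. If $c_1=0$, then $d_1=0$, which violates the rotation condition of $\mathrm{SRO}$. Otherwise $a$ performs at least two consecutive nonzero Moves with its peer at rest, and Lemma~\ref{lem:cf-sro-two-moves} yields a safety violation of $\Phi_{\mathrm{SRO}}$ inside the same solo interval. That interval already produces the $\SRDV$ and $\mathrm{VEC}$ violations.

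The main obstacle is bookkeeping rather than new geometry. I must confirm that in each branch the $\mathrm{SRO}$ violation is a finite safety violation placed before any liveness continuation, or else that it is witnessed by the same infinite execution. Once that is checked, Lemma~\ref{lem:concat}(1)(3) applies, and Lemma~\ref{lem:boundary-common-failure} gives the claim.
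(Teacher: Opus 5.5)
Your proposal is correct and follows the paper's own proof essentially step for step. The paper also reuses the execution of Lemma~\ref{lem:A2-16-FC-A} under the same three-way case distinction. It gets the $\mathrm{SRO}$ failure from perpetuity when there is no nonzero output, from $d_1=-d_0$ in the $(2,0)$ case, and from $d_1=0$ or Lemma~\ref{lem:cf-sro-two-moves} in the remaining case, and it closes with the same appeal to Lemma~\ref{lem:concat}(1)(3) and Lemma~\ref{lem:boundary-common-failure}.
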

 
\begin{proof}
Let $\mathcal A$ be any deterministic algorithm of $\FC$ under $\ASY$
and use the execution of Lemma~\ref{lem:A2-16-FC-A}, which violates
$\Phi_{\SRDV}$ and $\Phi_{\mathrm{VEC}}$. We show that it also violates
$\Phi_{\mathrm{SRO}}$, under the same case distinction.

\emph{No nonzero output.} No nonzero $\M$ ever occurs, so the sequence
of joint-stop configurations reduces to $C_0$ and the perpetuity
condition of $\mathrm{SRO}$ fails. The same infinite execution
witnesses this, together with the two liveness violations of
Lemma~\ref{lem:A2-16-FC-A}.

\emph{Case $(\lambda,\theta)=(2,0)$.} The landing of $a$ is a joint
stop realizing $C_1=(3d,2d)$, so $d_1=-d_0$. Neither $\mathbf{C1}$,
which requires $d_1=R_{-90^\circ}d_0$, nor $\mathbf{C2}$, which
requires $d_1=2^{-1/2}R_{-45^\circ}d_0$, is satisfied by a rotation by
$180^\circ$ of the same norm. The second condition of $\mathrm{SRO}$
therefore fails, a safety violation in the prefix.

\emph{Case $(\lambda,\theta)\neq(2,0)$.} In the solo activation
interval of $a$ the robot applies the same pair $(\lambda,\theta)$,
with $\lambda\neq0$, at every activation
(Lemma~\ref{lem:single-rule}(b)). If $c_1=0$ then $a$ lands on the
position of $b$ at the first activation and $d_1=0$, while
$\mathbf{C1}$ preserves the norm and $\mathbf{C2}$ scales it by
$2^{-1/2}$, so neither can hold with $d_0\neq0$; the second condition
of $\mathrm{SRO}$ fails. If $c_1\neq0$, the interval has length
$k\ge3$, so $a$ performs two nonzero $\M$ operations with its peer at
rest, and by Lemma~\ref{lem:cf-sro-two-moves} the rotation condition
and $\mathbf{C3}$ cannot both hold, so the second condition fails
again. Either way this is a safety violation in the prefix.

In the first case all three violations are liveness violations of one
infinite execution. In the other two the violation of
$\Phi_{\mathrm{SRO}}$ is finite and precedes the continuations
supplied by Lemma~\ref{lem:A2-16-FC-A}. In either case
Lemma~\ref{lem:concat}(1)(3) is satisfied, and
Lemma~\ref{lem:boundary-common-failure} gives the claim.
\end{proof}
 
\begin{lemma}
\label{lem:A3-10-FS-A}
$\mathrm{RSX}\notin\Sol(\FS^{A})$.
\end{lemma}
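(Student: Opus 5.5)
We reuse the execution of Lemma~\ref{lem:A2-16-FS-A}, which for an arbitrary $\FS^{A}$ algorithm already violates $\Phi_{\SRDV}$ and $\Phi_{\mathrm{VEC}}$. It then suffices to show that the same execution, possibly after a further restriction of the free observation parameter $\mu$, also violates $\Phi_{\mathrm{SRO}}$. We follow the case split of that proof, built on the alternating singleton execution $E_0$ under normalized frames.

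If $E_0$ contains no nonzero $\M$, the joint-stop sequence reduces to $C_0$. The perpetuity condition of $\mathrm{SRO}$ then fails as a liveness violation on the same fair infinite execution that witnesses the other two failures.

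Otherwise the first nonzero $\M$ is performed by $a$ from state $s$ and lands at $p_E$.
\begin{itemize}
\item \emph{Case $p_E=B$.} The landing of $a$ is a joint stop realizing $C_1=(3d,2d)$, so $d_1=-d_0$. This is a rotation by $180^\circ$ that preserves the norm, so neither $\mathbf{C1}$ nor $\mathbf{C2}$ holds and the second condition of $\mathrm{SRO}$ fails at $i=0$. This is a safety violation in the prefix, exactly as in Lemma~\ref{lem:A3-10-FC-A}.
\item \emph{Case $p_E\neq B$.} Here $b$ observes $a$ at $p_\mu$ and its $\M$ is delayed until $a$ has landed, so the landing of $a$ is a joint stop. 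If that landing is neither the $\mathbf{C1}$ nor the $\mathbf{C2}$ displacement of Lemma~\ref{lem:cf-sro-two-moves}, the second condition of $\mathrm{SRO}$ already fails at $C_1$, whatever $\mu$ is.
\end{itemize}

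If instead the landing of $a$ is one of those two displacements, we argue as in Lemma~\ref{lem:A2-13-FS-A}. The configuration after $b$ lands is the next term $C_2$, and $b$'s landing point $b_0+\lambda_sR_{\theta_s}(p_\mu-b_0)$ is affine in $\mu$ with nonzero slope. Hence $d_2$ is continuous and nonconstant in $\mu$. $\mathrm{SRO}$ admits only two successors of $d_1$, so the admissible values of $d_2$ form a finite set. We also exclude the at most one $\mu$ at which $b$ lands on $a$; at that $\mu$ the value $d_2=0$ is already excluded. The $\mathrm{VEC}$ transition condition likewise excludes only finitely many values of $\mu$. By Corollary~\ref{cor:continuum}, a single $\mu\in(0,1)$ avoids all these countable exceptional sets simultaneously. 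For that $\mu$, both the $\mathrm{SRO}$ rotation condition at $i=1$ and the $\mathrm{VEC}$ transition condition fail. The $\SRDV$ failure from Lemma~\ref{lem:p1_first_move} is independent of $\mu$.

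The main obstacle is checking that the joint-stop sequence really has $C_1$ and then $C_2$ as consecutive terms when $b$'s $\M$ is delayed past $a$'s landing. If $b$ lands where the segment coincides with an earlier one, the repetition would be suppressed. This is handled by also excluding from the choice of $\mu$ the at most one value for which $b$'s displacement is zero.

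All violations are then safety violations at finite times, so any fair tail preserves them (Lemma~\ref{lem:concat}(1)(2)). Lemma~\ref{lem:boundary-common-failure} gives the claim.
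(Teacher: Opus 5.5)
Your argument is correct, but it combines the pieces in the opposite way from the paper.

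The paper starts from the $\mathrm{SRO}\vee\mathrm{VEC}$ execution of Lemma~\ref{lem:A2-13-FS-A} (a fully synchronous prefix, then $b$ observing $a$ in mid-move) and adds $\SRDV$. If $a$ does not land at $B$, $\SRDV$ fails for free by Lemma~\ref{lem:p1_first_move}. If it does, then $(\lambda,\theta)=(2,0)$ and $\mathrm{SRO}$ fails at once. The paper then computes that $b$, observing $p_\mu=(1+2\mu)d$, lands at $4\mu d$, so choosing $\mu\in(3/4,1)$ outside the $\mathrm{VEC}$ exceptions sends $b$ out of $[A,B]$.

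You start instead from the $\SRDV\vee\mathrm{VEC}$ execution of Lemma~\ref{lem:A2-16-FS-A} and add $\mathrm{SRO}$. Your $p_E=B$ branch is plain replay, where $d_1=-d_0$ kills $\mathrm{SRO}$ without any observation parameter. Your freedom in $\mu$ is spent in the $p_E\neq B$ branch instead: the same affine-in-$\mu$ landing of $b$ that defeats $\mathrm{VEC}$ also keeps $d_2$ off the two $\mathrm{SRO}$-admissible successors of $d_1$.

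The two proofs are thus mirror images. Each gets one failure for free in one branch and pays with a restriction on $\mu$ in the other. Yours has only finite exceptional sets, stated for the vector $d_2$ rather than for a distance; the paper's has an explicit closed-form landing point.

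Two small remarks. First, you should say explicitly that $a$ is not reactivated before $b$'s delayed $\M$ completes, so that the term after $C_1$ really is $(p_E,b_2)$; the paper states this in Lemma~\ref{lem:A2-13-FS-A}. Second, excluding the $\mu$ at which $b$ lands on $a$ is unnecessary, since $d_2=0$ already violates the rotation condition.
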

 
\begin{proof}
Let $\mathcal A$ be any deterministic algorithm of $\FS$ under $\ASY$
and use the execution of Lemma~\ref{lem:A2-13-FS-A}, which violates
$\Phi_{\mathrm{SRO}}$ and $\Phi_{\mathrm{VEC}}$.
 
If $\lambda_s=0$ for every state $s$, no nonzero $\M$ occurs and
$\Phi_{\SRDV}$ fails by Lemma~\ref{lem:p1_no_null},
while $\Phi_{\mathrm{SRO}}$ and $\Phi_{\mathrm{VEC}}$ fail in the first
case of Lemma~\ref{lem:A2-13-FS-A}.
 
Otherwise the robot $a$ performs the first nonzero $\M$ in the round of
the first common state $s$ with $\lambda_s>0$. If its landing is not
the $B=3d$ required by Lemma~\ref{lem:p1_first_move}, then
$\Phi_{\SRDV}$ fails at that point. If it is, then
$(\lambda,\theta)=(2,0)$, which is neither the $\mathbf{C1}$ rotation
about the peer nor the $\mathbf{C2}$ rotation about the midpoint, so by
Lemma~\ref{lem:cf-sro-two-moves} $\Phi_{\mathrm{SRO}}$ fails at once.
In this case the points of the movement segment of $a$ are
$p_\mu=(1+2\mu)d$ with $\mu\in(0,1)$, and the robot $b$, observing
$p_\mu$ in state $s$, applies the same $(2,0)$ in the normalized frame
and lands at $2d+2\bigl((1+2\mu)d-2d\bigr)=4\mu d$. Taking
$\mu\in(3/4,1)$ gives $4\mu d>3d$, so the robots leave
$[A,B]=[0,3d]$ and $\Phi_{\SRDV}$ fails. The set of
parameters to be excluded for the $\mathrm{VEC}$ violation is
countable, so such a $\mu$ can be chosen inside $(3/4,1)$ and all three
violations occur in the same execution.
 
All three are safety violations, so they may be concatenated by
Lemma~\ref{lem:concat}(1)(2), and
Lemma~\ref{lem:boundary-common-failure} gives the claim.
\end{proof}
 
\begin{lemma}
\label{lem:A3-10-OB-S}
$\mathrm{RSX}\notin\Sol(\OB^{S})$.
\end{lemma}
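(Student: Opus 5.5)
The plan is to reuse the execution constructed in Lemma~\ref{lem:A3-9-OB-S} for $\mathrm{RSZ}$ at $\OB^{S}$, as the preamble of this subsection announces. That execution already violates $\Phi_{\mathrm{SRO}}$ and $\Phi_{\SRDV}$. What remains is to check that the same execution, possibly with a different length of solo interval or a different tail, also violates $\Phi_{\mathrm{VEC}}$. Throughout, the rule of an $\OB$ algorithm in a normalized frame is a single pair $(\lambda,\theta)$ (Lemma~\ref{lem:single-rule}(a)). I keep the three cases of Lemma~\ref{lem:A3-9-OB-S}, with an arbitrary fair tail replacing the ZCC-specific continuations, since no ZCC liveness witness is now needed.

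\emph{Case $\lambda=0$.} No robot ever moves under any schedule. The joint-stop sequence reduces to $C_0$, and $d_0\not\perp d_0$, so the attainment condition of $\mathrm{VEC}$ fails. The $\mathrm{SRO}$ perpetuity and the $\SRDV$ progress failures (Lemma~\ref{lem:p1_no_null}) hold on the same fair execution.

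\emph{Case $(\lambda,\theta)=(2,0)$.} The solo move of $a$ to $B=3d$ produces the joint stop $C_1=(3d,2d)$ with $d_1=-d_0$. This is none of $d_0$, $R_{-90^\circ}d_0$ or $2^{\mp1/2}R_{-45^\circ}d_0$, so the transition condition of $\mathrm{VEC}$ fails at $i=0$. The same move breaks $\mathbf{C1}$ and $\mathbf{C2}$ for $\mathrm{SRO}$. The subsequent solo activation of $b$ lands at $4d$ by Lemma~\ref{lem:p1_replay}, which violates $\SRDV$.

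\emph{Case $\lambda\neq0$, $(\lambda,\theta)\neq(2,0)$.} Activate $a$ alone $k\ge3$ times (Lemma~\ref{lem:solo}), then continue with alternating singleton activations and a fair tail. $\SRDV$ fails because the first move of $a$ does not land at $B$ (Lemma~\ref{lem:p1_first_move}). $\mathrm{SRO}$ fails either at once or at the second solo move (Lemma~\ref{lem:cf-sro-two-moves}); the case $c_1=0$ is included, since $d_1=0$. For $\mathrm{VEC}$ I follow Lemma~\ref{lem:A2-6-FC-S}. If $c_1=0$ then $d_1=0$, while every admitted successor is nonzero. Otherwise a solo move with the peer at rest satisfies the transition condition only if it is the clockwise $45^\circ$ or $90^\circ$ rotation. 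If $a$ deviates from these, the transition fails. If not, $a$ reaches an orthogonal segment within two moves, which is a joint stop, and its next solo activation repeats the same nonzero displacement, so the stopping condition fails. Three activations suffice.

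All violations outside the case $\lambda=0$ are safety violations in a finite prefix, so Lemma~\ref{lem:concat}(1) lets me append any fair $\SSY$ tail. Lemma~\ref{lem:boundary-common-failure} then gives the claim.

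The step needing the most care is the $\mathrm{VEC}$ clause in the last case. It works because $\SSY$, unlike $\RSY$, allows three consecutive solo activations of $a$ with $b$ at rest, so the stopping violation follows directly once the orthogonal configuration is reached.
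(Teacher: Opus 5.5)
Your proposal is correct and follows the paper's proof. You reuse the execution of Lemma~\ref{lem:A3-9-OB-S} for the SRO and S-RDV violations and add the VEC failure through the single-rule case split: attainment fails for $\lambda=0$, and $d_1=0$ breaks the transition condition for $c_1=0$. Otherwise either some transition fails or the orthogonal configuration is reached and a repeated move breaks the stopping condition.

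The only differences are cosmetic. You take $k\ge3$ solo activations, so the stopping violation comes from a third solo move of $a$; the paper lets the subsequent alternating activations supply that extra move. You also treat the case $(\lambda,\theta)=(2,0)$ separately via $d_1=-d_0$, which the paper absorbs into its ``departs from the rotations'' branch.
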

 
\begin{proof}
Let $\mathcal A$ be any deterministic algorithm of $\OB$ under $\SSY$
and use the execution of Lemma~\ref{lem:A3-9-OB-S}, in which $a$ alone
is activated and then $a$ and $b$ are activated alternately. The
violations of $\Phi_{\mathrm{SRO}}$ and
$\Phi_{\SRDV}$ were shown there. By
Lemma~\ref{lem:single-rule}(a) the rule is a single pair
$(\lambda,\theta)$.
 
If $\lambda=0$ the robots do not move, the sequence of configurations
reduces to $C_0$, and $d_0\not\perp d_0$, so the attainment condition
of $\mathrm{VEC}$ fails. If $c_1=0$, the robot $a$ lands on the
position of $b$ at its first activation and $d_{i+1}=0$, while all the
targets permitted by the transition condition of $\mathrm{VEC}$ are
nonzero, so that condition fails. Otherwise the only single
displacements satisfying the transition condition rotate the segment
clockwise by $45^\circ$ or by $90^\circ$; if $a$ departs from them the
transition condition fails at that point, and otherwise $a$ reaches a
configuration with $d_i\perp d_0$ within its two activations, at which
moment both robots are at rest, but the same displacement is repeated
in the alternating singleton activations that follow, so the robots
move again and the stopping condition fails.
 
All of these are safety violations, and a single execution violates all
three predicates, so Lemma~\ref{lem:boundary-common-failure} gives the
claim.
\end{proof}

\subsubsection{\texorpdfstring{SRO $\vee$ S-RDV $\vee$ CFSM (RSC)}{SRO or S-RDV or CFSM}}
\label{ssub:g2-3}
 
\begin{lemma}
\label{lem:A3-11-FC-A}
$\mathrm{RSC}\notin\Sol(\FC^{A})$.
\end{lemma}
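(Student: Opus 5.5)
We reuse the execution of Lemma~\ref{lem:A2-17-FC-A}, which already violates $\Phi_{\SRDV}$ and $\Phi_{\CFSM}$ for an arbitrary deterministic $\FC^{A}$ algorithm $\mathcal A$. The class $\FC^{A}$ lies in $\mathcal N_{\{\SRDV,\CFSM\}}$, so that construction applies verbatim. It only remains to check that the same execution also violates $\Phi_{\mathrm{SRO}}$. We do this along the three-way case distinction of that proof, exactly as Lemma~\ref{lem:A3-10-FC-A} did for the companion execution of Lemma~\ref{lem:A2-16-FC-A}. The two executions share the zero-output prefix and the $(2,0)$ branch; they differ only in the solo interval length, which is $2$ here and $k\ge3$ there.

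\emph{No nonzero output.} No nonzero $\M$ ever occurs, so the joint-stop sequence is $(C_0)$ and the perpetuity condition of $\mathrm{SRO}$ fails. The same infinite execution also witnesses the two liveness violations already recorded.

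\emph{Case $(\lambda,\theta)=(2,0)$.} The $\M$ of $b$ is delayed until after $a$ lands, so that landing is a joint stop realizing $C_1=(3d,2d)$ with $d_1=-d_0$. Neither $\mathbf{C1}$ nor $\mathbf{C2}$ allows a $180^\circ$ reversal of equal norm, so the second condition of $\mathrm{SRO}$ fails in the prefix. This violation precedes the transported rendezvous counterexecution.

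\emph{Case $(\lambda,\theta)\neq(2,0)$.} Here $a$ is activated alone twice with the same pair $(\lambda,\theta)$, by Lemma~\ref{lem:single-rule}(b).
\begin{itemize}
\item If $c_1=0$, then $d_1=0$, which satisfies neither $\mathbf{C1}$ nor $\mathbf{C2}$.
\item If $c_1\neq0$, then both activations are nonzero $\M$ operations with the peer at rest, and Lemma~\ref{lem:cf-sro-two-moves} gives the failure.
\end{itemize}
Either way this is a safety violation preserved by the tail (Lemma~\ref{lem:concat}(1)).

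The only step needing care is the $c_1\neq0$ subcase. We must confirm that the two solo activations of Lemma~\ref{lem:A2-17-FC-A} are consecutive joint-stop transitions. They are, because the cycles are completed one after another while $b$ is never activated, so the hypotheses of Lemma~\ref{lem:cf-sro-two-moves} hold without lengthening the interval. Lemma~\ref{lem:boundary-common-failure} then concludes.
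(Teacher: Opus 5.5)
Your proposal is correct and is essentially the paper's proof. Like the paper, you reuse the execution of Lemma~\ref{lem:A2-17-FC-A} and transfer the three-case $\mathrm{SRO}$ argument of Lemma~\ref{lem:A3-10-FC-A}, observing that a solo interval of length two still supplies the two consecutive nonzero $\M$ operations required by Lemma~\ref{lem:cf-sro-two-moves}.
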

 
\begin{proof}
Let $\mathcal A$ be any deterministic algorithm of $\FC$ under $\ASY$
and use the execution of Lemma~\ref{lem:A2-17-FC-A}, which violates
$\Phi_{\CFSM}$ and $\Phi_{\SRDV}$. The three cases are those of
Lemma~\ref{lem:A3-10-FC-A}, and the argument there applies verbatim,
the solo activation interval of $a$ now having length two, which still
supplies the two nonzero $\M$ operations required by
Lemma~\ref{lem:cf-sro-two-moves}.

In the first case all three violations are liveness violations of one
infinite execution; in the other two the violation of
$\Phi_{\mathrm{SRO}}$ is finite and precedes the continuations supplied
by Lemma~\ref{lem:A2-17-FC-A}. In either case
Lemma~\ref{lem:concat}(1)(3) is satisfied, and
Lemma~\ref{lem:boundary-common-failure} gives the claim.
\end{proof}
 
\begin{lemma}
\label{lem:A3-11-OB-S}
$\mathrm{RSC}\notin\Sol(\OB^{S})$.
\end{lemma}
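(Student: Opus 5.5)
We reuse the execution built in Lemma~\ref{lem:A3-9-OB-S}, just as Lemma~\ref{lem:A3-10-OB-S} did, and check that it also violates $\Phi_{\CFSM}$. The rule of $\mathcal A$ in a normalized frame is a single pair $(\lambda,\theta)$ by Lemma~\ref{lem:single-rule}(a). That execution has three cases: $\lambda=0$; $(\lambda,\theta)=(2,0)$; and $\lambda\neq0$ with $(\lambda,\theta)\neq(2,0)$. In each case it already violates $\Phi_{\mathrm{SRO}}$ and $\Phi_{\SRDV}$, so only the third predicate needs checking.

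\emph{Case $\lambda=0$.} Since $\OB$ has no colors, neither robot ever moves under any schedule. Hence $\moves{a}=0\neq1$ and the single-move condition of $\CFSM$ fails. This is a liveness violation, but it holds on every fair infinite continuation. It therefore coexists with the liveness violations of $\mathrm{SRO}$ and $\SRDV$ on the same execution.

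\emph{Case $(\lambda,\theta)=(2,0)$.} Here the robot $a$ moves from $d$ to $3d$ while $b$ rests at $2d$. The segment of $a$ passes through $b$, so $a(t)=b(t)$ at some instant and the collision-freedom condition fails at a finite time. If we want a second, independent witness, the alternating tail activates $a$ again. With $c_1=1$ the distance stays positive, so that Move is nonzero and gives $\moves{a}\ge2$.

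\emph{Case $\lambda\neq0$, $(\lambda,\theta)\neq(2,0)$.} If $c_1=0$, then $a$ lands on $b$ at its first activation and collision-freedom fails. Otherwise the distance stays positive, and the solo interval with $k\ge2$ gives $a$ two nonzero Moves, so $\moves{a}\ge2\neq1$. In Lemma~\ref{lem:A3-9-OB-S} the solo interval may have been ended after one activation, namely in the coincidence subcase $c_1=0$. That subcase is already covered by the collision. When $c_1\neq0$ and $\sigma_1\in(0,1)$, we take $k\ge2$ before invoking the continuation of Lemma~\ref{lem:zcc_not_fc_s}; ZCC is irrelevant here anyway, so only the SRO and S-RDV parts of that construction are needed.

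All added violations are safety violations in the prefix, except in the case $\lambda=0$, where a single infinite execution witnesses every failure. So Lemma~\ref{lem:concat}(1)(3) applies, and Lemma~\ref{lem:boundary-common-failure} gives the claim.

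The one step that needs care is confirming that the solo interval in the reused execution contains at least two nonzero Moves of $a$ when $c_1\neq0$. The choice of $k$ in Lemma~\ref{lem:A3-9-OB-S} already requires two solo nonzero Moves for the SRO argument via Lemma~\ref{lem:cf-sro-two-moves}, so this requirement is already met.
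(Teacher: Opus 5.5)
Your proposal is correct and follows the paper's proof. You reuse the execution of Lemma~\ref{lem:A3-9-OB-S} with its three-way case split and check $\Phi_{\CFSM}$ in each case: via $\moves{a}=0$, via coincidence when $c_1=0$, and via two solo nonzero moves otherwise. The only difference is in the case $(\lambda,\theta)=(2,0)$. There the paper uses $\moves{a}\ge2$ from $a$'s second activation in the alternating tail, which is your backup witness. Your primary witness, the passage of $a$ through $b_0=2d$ on its way from $d$ to $3d$, is equally valid and more immediate.
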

 
\begin{proof}
Let $\mathcal A$ be any deterministic algorithm of $\OB$ under $\SSY$
and use the execution of Lemma~\ref{lem:A3-9-OB-S}, where the
violations of $\Phi_{\mathrm{SRO}}$ and
$\Phi_{\SRDV}$ were shown; $\Phi_{\CFSM}$ fails
under the same case distinction.
 
If $\lambda=0$, the two robots do not move under any schedule, so
$\moves{a}=0\neq1$ and the single-move condition fails.
 
If $(\lambda,\theta)=(2,0)$, then $c_1=\sqrt{1-4+4}=1\neq0$, so the
distance remains $d$ and $a$ performs a nonzero $\M$ both in round~$1$
and at its second activation in the alternating singleton part; hence
$\moves{a}\ge2\neq1$ and the single-move condition fails.
 
If $\lambda\neq0$ and $(\lambda,\theta)\neq(2,0)$, then for $c_1=0$ the
robot $a$ lands on the position of $b$ at its first activation, so
$a(t)=b(t)$ and the collision-freedom condition fails, while for
$c_1\neq0$ both of its two consecutive solo activations are nonzero
$\M$ operations and $\moves{a}\ge2$ violates the single-move condition.
All of these are safety violations.
 
In each case a single execution violates all three predicates, so
Lemma~\ref{lem:boundary-common-failure} gives the claim.
\end{proof}

\subsubsection{\texorpdfstring{SRO $\vee$ ZCC $\vee$ VSS (RZV)}{SRO or ZCC or VSS}}
\label{ssub:g2-4}
 
\begin{lemma}
\label{lem:A3-12-FS-AM}
$\mathrm{RZV}\notin\Sol(\FS^{A_M})$.
\end{lemma}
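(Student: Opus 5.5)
The plan is to reuse the execution already built for the two-element sub-antichain $\{\mathrm{ZCC},\VSS\}$ at $\FS^{A_M}$, namely that of Lemma~\ref{lem:A2-18-FS-AM}. This follows the principle stated at the start of Appendix~\ref{app:cf3-expansion}: $\FS^{A_M}\in\mathcal N_{\{\mathrm{ZCC},\VSS\}}$, and that execution already violates $\Phi_{\mathrm{ZCC}}$ and $\Phi_{\VSS}$. It then remains to check that the same execution violates $\Phi_{\mathrm{SRO}}$. Recall its shape: an alternating-singleton prefix under normalized frames, each cycle completed before the next begins. The prefix ends at the cycle establishing the $\VSS$ safety violation and is followed by the coincidence continuation of Lemma~\ref{lem:zcc_not_fc_s} or the continuation of Lemma~\ref{lem:zcc_not_fs_am}.

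The $\mathrm{SRO}$ failure comes from a case split on that prefix.

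\emph{No nonzero $\M$ ever occurs.} Then $(C_i)$ reduces to $C_0$, and the perpetuity condition of $\mathrm{SRO}$ fails on the same infinite execution that witnesses the other two liveness failures.

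\emph{The first nonzero $\M$ of $a$ lands at $B=3d$.} Then $d_1=-d_0$, a $180^\circ$ reversal of the same norm. This satisfies neither $\mathbf{C1}$ nor $\mathbf{C2}$, so $\mathrm{SRO}$ fails at that joint stop. This is a safety violation, and it happens before $b$ replays the move to $4d$.

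\emph{The first nonzero $\M$ lands elsewhere.} If it lands on the peer, or is neither of the two admissible solo displacements of Lemma~\ref{lem:cf-sro-two-moves}, the rotation condition fails immediately. The remaining subcase is a first move of type $\mathbf{C1}$ or $\mathbf{C2}$; this is where the prefix of Lemma~\ref{lem:A2-18-FS-AM} alone does not suffice. Here I would modify the prefix before it is handed to the ZCC continuation, borrowing the structure of Lemma~\ref{lem:A2-11-FS-AM}. Keep $b$ inactive and run $a$ alone, which $A_M$ permits by Lemma~\ref{lem:solo}, until one of three things happens: it lands on $b$, or it completes a second nonzero $\M$ (then Lemma~\ref{lem:cf-sro-two-moves} gives a violation), or its color enters a periodic part with no nonzero rule. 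In the last case, let $b$ traverse the same orbit while $a$ stays idle. Either $b$'s moves violate $\mathrm{SRO}$, or both robots end with finitely many nonzero moves in total. The ensuing fair alternating continuation then makes $(C_i)$ finite, so perpetuity fails; in addition $f_r$ is eventually constant, so the zig-zag condition of $\mathrm{ZCC}$ also fails on it. The $\VSS$ violation in this branch is already fixed by the first move not landing at $B$, via Lemma~\ref{lem:p1_first_move}, and no later extension repairs it.

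In all the finite-violation branches, end the prefix at a joint stop with no pending operation. Every $\LK$ has observed distinct positions, so each robot carries $g^k(c_0)$ and both colors lie on the same normalized orbit. Then apply the coincidence continuation of Lemma~\ref{lem:zcc_not_fc_s} or the continuation statement of Lemma~\ref{lem:zcc_not_fs_am}, exactly as in Lemma~\ref{lem:A2-11-FS-AM}, to obtain the $\mathrm{ZCC}$ failure. The earlier safety violations of $\mathrm{SRO}$ and $\VSS$ persist by Lemma~\ref{lem:concat}(1). Lemma~\ref{lem:boundary-common-failure} then gives the claim.

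The main obstacle is the third case: making the modified prefix still meet the hypotheses of the ZCC continuation lemmas. The colors must lie on a common eventual cycle or the robots must coincide, and no operation may be pending. I would check this by noting that every branch ends with each robot having been activated some number of times along the single normalized orbit of $g$.
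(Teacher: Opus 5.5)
Your proposal is correct and essentially matches the paper's proof. Both split on whether the first nonzero $\M$ of $a$ lands at $B$. In the $B$ case, the $180^\circ$ reversal kills $\mathrm{SRO}$ and $b$'s replay to $4d$ kills $\mathrm{VSS}$. Otherwise, Lemma~\ref{lem:p1_first_move} kills $\mathrm{VSS}$ and the solo structure of Lemma~\ref{lem:A2-11-FS-AM} handles $\mathrm{SRO}$ and $\mathrm{ZCC}$.

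The only difference is that the paper starts from the $\mathrm{SRO}\vee\mathrm{ZCC}$ execution (via Lemma~\ref{lem:A3-9-FS-AM}) rather than the $\mathrm{ZCC}\vee\mathrm{VSS}$ one. It also observes that in the $B$ case the move $d\to3d$ past $b$ already breaks collision freedom of $\mathrm{ZCC}$, so no $\mathrm{ZCC}$ continuation is needed there.
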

 
\begin{proof}
Let $\mathcal A$ be any deterministic algorithm of $\FS$ under $A_M$
and use the execution of Lemma~\ref{lem:A3-9-FS-AM} with
$\Phi_{\SRDV}$ replaced by $\Phi_{\mathrm{VSS}}$.
 
If no nonzero $\M$ ever occurs, condition $(\mathrm{P1})$ of
$\mathrm{VSS}$ fails, and the first condition of $\mathrm{SRO}$ and the
third condition of $\mathrm{ZCC}$ fail as in
Lemma~\ref{lem:A2-11-FS-AM}.
 
If the first nonzero $\M$ of $a$ does not land at the $B=3d$ required
by Lemma~\ref{lem:p1_first_move}, no valid $T_1$ exists and
$\Phi_{\mathrm{VSS}}$ fails at that point; the execution of
Lemma~\ref{lem:A2-11-FS-AM} may then be used unchanged.
 
If it does land at $B$, then $(\lambda_s,\theta_s)=(2,0)$, which is
neither the $\mathbf{C1}$ rotation about the peer nor the
$\mathbf{C2}$ rotation about the midpoint, so by
Lemma~\ref{lem:cf-sro-two-moves} the second condition of
$\mathrm{SRO}$ fails at this move. Insert immediately afterwards a
solo activation interval of $b$: activating $b$ alone until it reaches
the state $s$ leaves the configuration unchanged, since
$\lambda=0$ in all earlier states, and then $b$ in state $s$ receives,
in the normalized frame, a local input isomorphic to that of $a$,
applies the same $(2,0)$ and by Lemma~\ref{lem:p1_replay} lands at
$4d\notin P=\{0,d,2d,3d\}$, so $\Phi_{\mathrm{VSS}}$ fails.
Moreover, the first move of $a$ alone takes it from $d$ to $3d$ while
$b$ rests at $b_0=2d$, so
\[
  \|a-a_0\|=2d>d=\|b-a_0\|
\]
and the collision-freedom condition of $\mathrm{ZCC}$ has already
failed. No additional delayed $\M$ is needed; append an arbitrary fair,
time-divergent tail.
 
In every case a single fair execution violates all three predicates,
and the safety violations established in its prefix persist
(Lemma~\ref{lem:concat}(1)), so
Lemma~\ref{lem:boundary-common-failure} gives the claim.
\end{proof}
 
\begin{lemma}
\label{lem:A3-12-OB-S}
$\mathrm{RZV}\notin\Sol(\OB^{S})$.
\end{lemma}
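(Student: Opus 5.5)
We reuse the execution of Lemma~\ref{lem:A3-9-OB-S}. That lemma already shows this execution violates $\Phi_{\mathrm{SRO}}$ and $\Phi_{\mathrm{ZCC}}$, and neither argument refers to the exchange task. What remains is to show that the same execution also violates $\Phi_{\VSS}$, with the same coordinates $A=0$, $a_0=d$, $b_0=2d$, $B=3d$ and $P=\{0,d,2d,3d\}$. By Lemma~\ref{lem:single-rule}(a) the rule is a single pair $(\lambda,\theta)$, and we follow the three cases of that lemma.

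\emph{Case $\lambda=0$.} No nonzero $\M$ ever occurs under any schedule, so $(\mathrm{P1})$ fails by Lemma~\ref{lem:p1_no_null}. This is a liveness violation. It is witnessed by the same infinite fair execution that already witnesses the perpetuity failure of $\mathrm{SRO}$ and the convergence failure of $\mathrm{ZCC}$.

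\emph{Case $(\lambda,\theta)=(2,0)$.} Robot $a$ moves alone from $d$ to $B=3d$. Robot $b$, activated next in the normalized frame, receives an isomorphic local input and applies the same rule. By Lemma~\ref{lem:p1_replay} it lands at $4d\notin P$, so $(\mathrm{P1}\text{-}b)$ fails at a finite time.

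\emph{Case $\lambda\neq0$, $(\lambda,\theta)\neq(2,0)$.} The first nonzero $\M$ of the execution is the first solo activation of $a$; its $\LK$ observes $\iota$, and $b$ is not moving. That $\M$ does not land at $B$, so by Lemma~\ref{lem:p1_first_move} no valid exchange time $T_1$ exists, and $\Phi_{\VSS}$ fails.

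The only delicate point is the contraction case $\sigma_1\in(0,1)$, where the $\mathrm{ZCC}$ violation is a liveness violation obtained through the continuation of Lemma~\ref{lem:zcc_not_fc_s}. There we must check that the $\VSS$ failure is a safety violation in the prefix, so that Lemma~\ref{lem:concat}(1) carries it through that continuation. This holds, because the failure of Lemma~\ref{lem:p1_first_move} is witnessed at the completion of the first nonzero $\M$ and cannot be repaired by any extension. We expect this to be the main (mild) obstacle.

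In every case a single fair execution violates all three predicates, with the safety violations placed before the liveness continuations, so Lemma~\ref{lem:concat}(1)(3) is satisfied. Lemma~\ref{lem:boundary-common-failure} then gives the claim.
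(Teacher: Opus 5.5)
Your proposal is correct and follows the paper's proof essentially verbatim. Like the paper, you reuse the execution of Lemma~\ref{lem:A3-9-OB-S} and, under its case split $\lambda=0$ / $(\lambda,\theta)=(2,0)$ / otherwise, obtain the $\VSS$ failure from Lemma~\ref{lem:p1_no_null}, Lemma~\ref{lem:p1_replay} and Lemma~\ref{lem:p1_first_move}, respectively. Your extra remark goes slightly beyond the paper: you note that the first-move failure is fixed by the prefix, so it survives the $\mathrm{ZCC}$ continuation in the contraction case. This is a correct detail that the paper leaves implicit.
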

 
\begin{proof}
Let $\mathcal A$ be any deterministic algorithm of $\OB$ under $\SSY$
and use the execution of Lemma~\ref{lem:A3-9-OB-S}, where the
violations of $\Phi_{\mathrm{SRO}}$ and $\Phi_{\mathrm{ZCC}}$ were
shown; $\Phi_{\mathrm{VSS}}$ fails under the same case distinction.
 
If $\lambda=0$, the two robots do not move, so no nonzero $\M$ exists
and condition $(\mathrm{P1})$ of $\mathrm{VSS}$ fails.
 
If $(\lambda,\theta)=(2,0)$, then in that execution $a$ is activated
alone in round~$1$ and $b$ in round~$2$, both under normalized frames.
The robot $a$ lands at $B=3d$, and $c_1=1\neq0$, so
Lemma~\ref{lem:p1_replay} applies and the robot $b$, activated
immediately afterwards, receives an isomorphic local input and lands at
$4d\notin P=\{0,d,2d,3d\}$, so $\Phi_{\mathrm{VSS}}$ fails.
 
If $\lambda\neq0$ and $(\lambda,\theta)\neq(2,0)$, the first nonzero
$\M$ of $a$ is not the landing at $B$ required by
Lemma~\ref{lem:p1_first_move}, so no valid $T_1$ exists and
$\Phi_{\mathrm{VSS}}$ fails.
 
In each case a single execution violates all three predicates, so
Lemma~\ref{lem:boundary-common-failure} gives the claim.
\end{proof}

\subsubsection{\texorpdfstring{SRO $\vee$ VSS $\vee$ VEC (RVX)}{SRO or VSS or VEC}}
\label{ssub:g2-5}
 
\begin{lemma}
\label{lem:A3-13-FS-A}
$\mathrm{RVX}\notin\Sol(\FS^{A})$.
\end{lemma}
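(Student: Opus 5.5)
The plan is to reuse the execution of Lemma~\ref{lem:A3-10-FS-A} with $\Phi_{\SRDV}$ replaced by $\Phi_{\mathrm{VSS}}$, just as Lemma~\ref{lem:A3-12-FS-AM} adapted Lemma~\ref{lem:A3-9-FS-AM}. Concretely, fix an $\FS$ algorithm under $\ASY$ and take the execution of Lemma~\ref{lem:A2-13-FS-A}. That execution consists of three parts: a fully synchronous prefix that freezes the common state; then the round in which $a$ alone performs the first nonzero $\M$, while $b$, in the same state $s$, observes $a$ at $p_\mu$ (Lemma~\ref{lem:mid}) with its $\M$ delayed until $a$ lands; and finally a fair tail. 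Lemma~\ref{lem:A2-13-FS-A} already shows that this execution violates $\Phi_{\mathrm{SRO}}$ and $\Phi_{\mathrm{VEC}}$ for all but countably many $\mu$, so it remains only to add the failure of $\Phi_{\mathrm{VSS}}$ on the same execution.

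The argument for $\Phi_{\mathrm{VSS}}$ splits into three cases.

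\emph{Case 1: $\lambda_s=0$ for every state.} No nonzero $\M$ occurs, so $(\mathrm{P1})$ fails by Lemma~\ref{lem:p1_no_null}. This is the same infinite execution that witnesses the liveness failures of $\mathrm{SRO}$ and $\mathrm{VEC}$.

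\emph{Case 2: the first nonzero $\M$ of $a$ does not land at $B=3d$.} This move is the $\M$ with earliest completion time, because $b$'s move is delayed. Lemma~\ref{lem:p1_first_move} then shows that no valid $T_1$ exists, a safety violation.

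\emph{Case 3: it lands at $B$, so $(\lambda_s,\theta_s)=(2,0)$.} The move of $a$ violates SRO immediately, as in Lemma~\ref{lem:A3-10-FS-A}. For VSS, use the observation parameter exactly as there. We have $p_\mu=(1+2\mu)d$, and $b$ applies $(2,0)$ to the observed vector, so it lands at $4\mu d$. This point lies in $P=\{0,d,2d,3d\}$ only for $\mu\in\{1/4,1/2,3/4\}$, and excluding these three values violates the endpoint condition $(\mathrm{P1}\text{-}b)$.

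The main obstacle is making a single $\mu$ serve all three predicates at once. The plan is to use Corollary~\ref{cor:continuum}: the VSS exclusions are finitely many, the VEC exclusions are finite, and the SRO exclusions (distances generated by the $\mathbf{C1}$ and $\mathbf{C2}$ ratios, which arise in the branch where $a$'s move satisfies a rotation type) are countable. A countable union of countable exceptional sets is countable, so some $\mu\in(0,1)$ avoids all of them. In Case 3 it is simpler still, since SRO has already failed regardless of $\mu$. One also has to check that in the $(2,0)$ branch the VEC argument of Lemma~\ref{lem:A2-13-FS-A} still applies. It does: $b$'s landing is affine in $\mu$ with nonzero slope, so the countable-avoidance argument goes through unchanged.

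All violations other than those in Case 1 are safety violations that occur in a finite prefix. By Lemma~\ref{lem:concat}(1)(2) they are preserved by the appended fair tail, and Lemma~\ref{lem:boundary-common-failure} then gives the claim.
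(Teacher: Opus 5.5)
Your proposal is correct and follows the paper's proof essentially step for step. The paper likewise reuses the execution of Lemma~\ref{lem:A2-13-FS-A} and splits into three cases: no nonzero move; a first move of $a$ that misses $B$, where Lemma~\ref{lem:p1_first_move} defeats VSS; and the $(2,0)$ case, where $b$ lands at $4\mu d$ and finitely many $\mu$ are excluded together with the VEC exceptions. Your explicit exclusion of $\mu\in\{1/4,1/2,3/4\}$, and your remark that SRO fails independently of $\mu$ in that last case, only make the same argument slightly more detailed.
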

 
\begin{proof}
Let $\mathcal A$ be any deterministic algorithm of $\FS$ under $\ASY$
and use the execution of Lemma~\ref{lem:A2-13-FS-A}, which violates
$\Phi_{\mathrm{SRO}}$ and $\Phi_{\mathrm{VEC}}$.
 
If $\lambda_s=0$ for every state $s$, no nonzero $\M$ occurs and
condition $(\mathrm{P1})$ of $\mathrm{VSS}$ fails, while
$\Phi_{\mathrm{SRO}}$ and $\Phi_{\mathrm{VEC}}$ fail in the first case
of Lemma~\ref{lem:A2-13-FS-A}.
 
Otherwise the robot $a$ performs the first nonzero $\M$ in the round of
the first common state $s$ with $\lambda_s>0$. If its landing is not
the $B=3d$ required by Lemma~\ref{lem:p1_first_move}, no valid $T_1$
exists and $\Phi_{\mathrm{VSS}}$ fails at that point. If it is, then
$(\lambda,\theta)=(2,0)$ and, as in Lemma~\ref{lem:A3-10-FS-A}, the
points of the movement segment of $a$ are $p_\mu=(1+2\mu)d$ with
$\mu\in(0,1)$, and the robot $b$ observing $p_\mu$ in state $s$ applies
the same $(2,0)$ in the normalized frame and lands at $4\mu d$. The set
$P=\{0,d,2d,3d\}$ being finite, only finitely many $\mu$ give
$4\mu d\in P$, so together with the countable set to be excluded for
the $\mathrm{VEC}$ violation a $\mu\in(0,1)$ avoiding all of them can
be chosen. For that $\mu$ we have $4\mu d\notin P$, so
$\Phi_{\mathrm{VSS}}$ fails, and all three violations occur in the same
execution.
 
All three are safety violations, so they may be concatenated by
Lemma~\ref{lem:concat}(1)(2), and
Lemma~\ref{lem:boundary-common-failure} gives the claim.
\end{proof}
 
\begin{lemma}
\label{lem:A3-13-OB-S}
$\mathrm{RVX}\notin\Sol(\OB^{S})$.
\end{lemma}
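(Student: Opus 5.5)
Following the pattern of the earlier $\OB^{S}$ lemmas, I would reuse the execution of Lemma~\ref{lem:A3-9-OB-S}. That execution already violates $\Phi_{\mathrm{SRO}}$ in every case, so only $\Phi_{\mathrm{VSS}}$ and $\Phi_{\mathrm{VEC}}$ remain to be checked on it, under the same case split on the single normalized rule $(\lambda,\theta)$ given by Lemma~\ref{lem:single-rule}(a). The $\mathrm{VSS}$ component can be copied essentially verbatim from Lemma~\ref{lem:A3-12-OB-S}, which used the same execution. The $\mathrm{VEC}$ component can be copied from Lemma~\ref{lem:A3-10-OB-S}, which also used it. The work is therefore only to confirm that the two arguments refer to the same execution in each case and that they combine correctly.

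The cases are as follows.
\begin{itemize}
\item \emph{Case $\lambda=0$.} Nothing ever moves. Condition $(\mathrm{P1})$ of $\mathrm{VSS}$ fails by Lemma~\ref{lem:p1_no_null}. The attainment condition of $\mathrm{VEC}$ fails because $d_0\not\perp d_0$. Both are liveness failures of the same schedule-independent execution.
\item \emph{Case $(\lambda,\theta)=(2,0)$.} Robot $a$ lands at $B$ and robot $b$ then replays the move to $4d\notin P$ by Lemma~\ref{lem:p1_replay}, so $\mathrm{VSS}$ fails. For $\mathrm{VEC}$, the first joint stop after $a$ lands gives $d_1=-d_0$, which is not among the admissible successors, so the transition condition fails at $i=0$.
\item \emph{Case $\lambda\neq0$, $(\lambda,\theta)\neq(2,0)$.} The first nonzero move of $a$ does not land at $B$, so $\mathrm{VSS}$ fails by Lemma~\ref{lem:p1_first_move}. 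For $\mathrm{VEC}$ I would use the solo-interval argument of Lemma~\ref{lem:A3-10-OB-S}. If $c_1=0$, the resulting zero segment violates the transition condition. If $a$ deviates from the $45^\circ$ or $90^\circ$ displacements, the transition condition fails at that step. Otherwise an orthogonal configuration is reached within two moves, and the repeated identical displacement then breaks the stopping condition.
\end{itemize}

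The main point needing care is the solo-interval length $k$ in the last case. Lemma~\ref{lem:A3-9-OB-S} chose $k$ large for the sake of $\mathrm{ZCC}$, whereas here $\mathrm{ZCC}$ is not a disjunct. I would take $k\ge3$ so that $a$ both reaches an orthogonal configuration and moves again, which gives the $\mathrm{VEC}$ stopping failure; two nonzero solo moves already suffice for $\mathrm{SRO}$ via Lemma~\ref{lem:cf-sro-two-moves}.

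With that choice every violation in the nondegenerate cases is a safety violation in a finite prefix, so Lemma~\ref{lem:concat}(1) preserves all of them under an arbitrary fair tail. In the $\lambda=0$ case the three liveness failures are witnessed by the same infinite execution. Lemma~\ref{lem:boundary-common-failure} then gives the claim.
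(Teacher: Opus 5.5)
Your proposal is correct and follows essentially the same approach as the paper: reuse the execution of Lemma~\ref{lem:A3-9-OB-S} with the same case split $\lambda=0$ / $(\lambda,\theta)=(2,0)$ / otherwise, take the $\mathrm{SRO}$ failure from that lemma, the $\mathrm{VSS}$ failure from Lemma~\ref{lem:A3-12-OB-S}, and the $\mathrm{VEC}$ failure from Lemma~\ref{lem:A3-10-OB-S}, with $d_1=-d_0$ in the $(2,0)$ case. Your explicit choice $k\ge3$ is harmless tidying that the paper leaves implicit; it is not strictly needed, because the subsequent alternating singleton activations already produce a nonzero move that breaks the stopping condition of $\mathrm{VEC}$.
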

 
\begin{proof}
Let $\mathcal A$ be any deterministic algorithm of $\OB$ under $\SSY$
and use the execution of Lemma~\ref{lem:A3-9-OB-S} with the same case
distinction $\lambda=0$ / $(\lambda,\theta)=(2,0)$ / otherwise. The
violation of $\Phi_{\mathrm{SRO}}$ is that of
Lemma~\ref{lem:A3-9-OB-S} and the violation of $\Phi_{\mathrm{VSS}}$ is
that of Lemma~\ref{lem:A3-12-OB-S}.
 
For $\Phi_{\mathrm{VEC}}$: if $\lambda=0$ the robots do not move, the
sequence of configurations reduces to $C_0$, and $d_0\not\perp d_0$, so
the attainment condition fails. If $(\lambda,\theta)=(2,0)$ then
$d_{i+1}=-d_i$, which is none of $d_i$, $R_{-90^\circ}d_i$ and
$2^{\mp1/2}R_{-45^\circ}d_i$, so the transition condition fails.
Otherwise, if $c_1=0$ then $d_{i+1}=0$ while all the permitted targets
are nonzero and the transition condition fails, and if $c_1\neq0$ then,
as in Lemma~\ref{lem:A3-10-OB-S}, the only single displacements
satisfying the transition condition rotate the segment clockwise by
$45^\circ$ or by $90^\circ$; departing from them makes the transition
condition fail, and otherwise an orthogonal configuration is reached
within the two activations of $a$ and the same displacement is repeated
afterwards, so the stopping condition fails.
 
In each case a single execution violates all three predicates, so
Lemma~\ref{lem:boundary-common-failure} gives the claim.
\end{proof}

\subsubsection{\texorpdfstring{SRO $\vee$ VSS $\vee$ CFSM (RVC)}{SRO or VSS or CFSM}}
\label{ssub:g2-6}
 
\begin{lemma}
\label{lem:A3-14-OB-S}
$\mathrm{RVC}\notin\Sol(\OB^{S})$.
\end{lemma}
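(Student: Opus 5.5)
We reuse the execution of Lemma~\ref{lem:A3-9-OB-S}, as the surrounding lemmas do. Fix an arbitrary deterministic algorithm of $\OB$ under $\SSY$; by Lemma~\ref{lem:single-rule}(a) its normalized rule is a single pair $(\lambda,\theta)$. We keep the three-way case distinction $\lambda=0$, $(\lambda,\theta)=(2,0)$, and the remaining case, which has $\lambda\neq0$ and $(\lambda,\theta)\neq(2,0)$.

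Lemma~\ref{lem:A3-9-OB-S} already establishes the failure of $\Phi_{\mathrm{SRO}}$ on this execution. Lemma~\ref{lem:A3-12-OB-S} establishes the failure of $\Phi_{\VSS}$ under the same case distinction. That argument only uses the landing points of the first one or two moves, so it transfers without change. It then remains to check $\Phi_{\CFSM}$ on the same execution, and Lemma~\ref{lem:A3-11-OB-S} supplies this argument verbatim:
\begin{itemize}
\item If $\lambda=0$, no robot ever moves, so $\moves{a}=0$.
\item If $(\lambda,\theta)=(2,0)$, then $c_1=1$, so the distance stays positive. The robot $a$ moves nonzero in round~$1$ and again at its next activation in the alternating part, so $\moves{a}\ge2$.
\item In the remaining case, $c_1=0$ gives a collision at the first activation. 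Otherwise $c_1\neq0$, and the two solo activations of $a$ are both nonzero moves, so $\moves{a}\ge2$.
\end{itemize}

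The only point needing care is that the three violations live on one and the same execution, including its continuation. In the case $\lambda=0$, all three failures are schedule-independent liveness or progress failures of the stationary fair execution. In the other cases, the $\CFSM$ and $\VSS$ failures are safety violations in the finite prefix. The $\mathrm{SRO}$ failure is likewise a safety violation: it comes from Lemma~\ref{lem:cf-sro-two-moves}, or from the non-rotation $(2,0)$, or from the landing on the peer. These are therefore preserved by the appended fair alternating tail via Lemma~\ref{lem:concat}(1).

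I expect the main obstacle to be the $\mathrm{SRO}$ part in the remaining case with $c_1\neq0$. There we must make sure the solo interval of $a$ contains at least two nonzero moves before $b$ is activated. Lemma~\ref{lem:A3-9-OB-S} arranges this by choosing $k\ge2$, and that choice also gives the $\CFSM$ move count. Lemma~\ref{lem:boundary-common-failure} then yields the claim.
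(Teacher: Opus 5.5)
Your proposal is correct and is essentially the paper's own proof. Like the paper, you reuse the execution of Lemma~\ref{lem:A3-9-OB-S} with the same three-way case split, and take the $\mathrm{SRO}$, $\VSS$ and $\CFSM$ violations from Lemmas~\ref{lem:A3-9-OB-S}, \ref{lem:A3-12-OB-S} and~\ref{lem:A3-11-OB-S}, respectively; your explicit check that every failure outside the case $\lambda=0$ is a finite safety violation preserved by the tail spells out bookkeeping the paper leaves implicit.
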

 
\begin{proof}
Let $\mathcal A$ be any deterministic algorithm of $\OB$ under $\SSY$
and use the execution of Lemma~\ref{lem:A3-9-OB-S}. All three
violations follow the same case distinction
$\lambda=0$ / $(\lambda,\theta)=(2,0)$ / otherwise: the violation of
$\Phi_{\mathrm{SRO}}$ is that of Lemma~\ref{lem:A3-9-OB-S}, the
violation of $\Phi_{\mathrm{VSS}}$ is that of
Lemma~\ref{lem:A3-12-OB-S}, and the violation of
$\Phi_{\CFSM}$ is that of Lemma~\ref{lem:A3-11-OB-S}. A single
execution violates all three predicates, so
Lemma~\ref{lem:boundary-common-failure} gives the claim.
\end{proof}


\subsection{Three-Task Disjunctions Containing MCv}
\label{app:cf3-mcv}

This subsection treats the two antichains
$\{\mathrm{MCv},\mathrm{VSS},P\}$ with $P\in\{\mathrm{VEC},
\CFSM\}$, which carry three boundary obligations in total. The
class $\OB^{A_M}$ occurs here as a boundary class for both antichains,
and no lemma of Appendix~\ref{app:cf-two-task-disjunctions} has it as a
boundary class; the execution for it is constructed from scratch in
Lemma~\ref{lem:A3-15-OB-AM} and reused in Lemma~\ref{lem:A3-16-OB-AM}.

We record once the two activation patterns used there. Under $A_M$ the
adversary may activate one robot at a time
(Lemma~\ref{lem:solo}, whose statement includes $K=A_M$), and it may
also activate both robots in lockstep: giving the two robots identical
phase times $t_L<t_C<t_B<t_E$ in each round places every $\Look$ before
both $\M$ intervals, so no $\Look$ falls inside the $\M$ interval of the
peer and $M$\nobreakdash-atomicity is respected. In such a round both
robots compute from the same snapshot, so
Lemma~\ref{lem:ratios}(ii) applies and gives $d\mapsto c_2 d$ with the
midpoint fixed. We refer to the latter pattern as a \emph{fully
synchronous} round; note that Lemma~\ref{lem:rigidA1} is stated for
$K=F$ and is not invoked below.

\subsubsection{\texorpdfstring{MCv $\vee$ VSS $\vee$ VEC (MVX)}{MCv or VSS or VEC}}
\label{ssub:g3-1}
 
\begin{lemma}
\label{lem:A3-15-FS-A}
$\mathrm{MVX}\notin\Sol(\FS^{A})$.
\end{lemma}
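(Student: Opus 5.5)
Following the principle stated at the start of Appendix~\ref{app:cf3-expansion}, the plan is to reuse the two-task execution of Lemma~\ref{lem:A2-20-FS-A}. The boundary class is $Y=\FS^{A}$, and $\{\mathrm{MCv},\mathrm{VEC}\}\subset\{\mathrm{MCv},\mathrm{VSS},\mathrm{VEC}\}$ has $\FS^{A}$ in its boundary set. That execution already violates $\Phi_{\mathrm{MCv}}$ and $\Phi_{\mathrm{VEC}}$, so it only remains to show that it also violates $\Phi_{\mathrm{VSS}}$, keeping the same case distinction. I will use coordinates $A=0$, $a_0=d$, $b_0=2d$, $B=3d$ and $P=\{0,d,2d,3d\}$.

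\emph{Case: no nonzero $\M$.} If $\lambda_s=0$ along the common state sequence of the fully synchronous prefix, no nonzero $\M$ ever occurs. Then $(\mathrm{P1})$ fails by Lemma~\ref{lem:p1_no_null}, a liveness violation witnessed by the same infinite execution that already witnesses the failures of convergence and attainment.

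\emph{Case: a first state $s$ with $\lambda_s>0$.} Here $a$ performs the first nonzero $\M$ solo from $\iota$, landing at $p_E$. Meanwhile $b$ Looks at $p_\mu$ and its $\M$ is delayed until after $a$ lands, so $a$'s $\M$ has the earliest completion time.
\begin{itemize}
\item If $p_E\neq B$, then Lemma~\ref{lem:p1_first_move} shows that no valid $T_1$ exists, so $\Phi_{\mathrm{VSS}}$ fails as a safety violation.
\item If $p_E=B$, then $(\lambda_s,\theta_s)=(2,0)$ and $p_\mu=(1+2\mu)d$. As in Lemma~\ref{lem:A3-10-FS-A}, $b$ applies $(2,0)$ and lands at $4\mu d$. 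Only finitely many $\mu$ give $4\mu d\in P$.
\end{itemize}
I add those finitely many values to the countable exceptional set already excluded in Lemma~\ref{lem:A2-20-FS-A}, namely the $\mathrm{VEC}$ successors, the $\mu$ with $p_\mu=b_0$, and the $\mu$ with landing at $p_E$. By Corollary~\ref{cor:continuum} a single $\mu$ avoids all of them, and for that $\mu$ the endpoint condition $(\mathrm{P1}\text{-}b)$ fails.

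The main obstacle is compatibility with the $\mathrm{MCv}$ continuation. Lemma~\ref{lem:A2-20-FS-A} ends the prefix at a configuration with distinct positions, a common color and no pending operation, and then appends the continuation of Lemma~\ref{lem:mcv_not_fs_am}. I must check that the enlarged exclusion set does not break these hypotheses: distinctness was already secured by excluding landing at $p_E$, and the common color holds because each robot performed the same number of activations, all with distinct observations. The $\mathrm{VSS}$ and $\mathrm{VEC}$ violations are safety violations that occur in the prefix, so by Lemma~\ref{lem:concat}(1)(3) they persist along that continuation. One fair execution therefore violates all three predicates, and Lemma~\ref{lem:boundary-common-failure} gives the claim.
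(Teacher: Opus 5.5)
Your proof is correct, but it is assembled differently from the paper's. You take the $\mathrm{MCv}\vee\mathrm{VEC}$ execution of Lemma~\ref{lem:A2-20-FS-A} as the base and add $\mathrm{VSS}$. The paper instead starts from the $\mathrm{VSS}\vee\mathrm{VEC}$ execution of Lemma~\ref{lem:A2-22-FS-A}, preceded by a zero-output synchronous prefix, and adds $\mathrm{MCv}$.

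Because of that choice, the paper's case split differs from yours:
\begin{itemize}
\item It treats $c_1=0$ separately, as a simultaneous round in which the robots exchange positions and all three predicates fail at once.
\item For $p_E=B$ it lets $b$ look only after $a$ has landed, so Lemma~\ref{lem:p1_replay} sends $b$ to $4d$. The $\mathrm{MCv}$ failure comes from the Move of $a$ passing through $b_0=2d$.
\item The mid-flight observation and the continuation of Lemma~\ref{lem:mcv_not_fs_am} are used only when $p_E\neq B$.
\end{itemize}

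In your version the mid-flight observation is used in every nonzero case. The case $c_1=0$ is absorbed into the branch $p_E=b_0$ of Lemma~\ref{lem:A2-20-FS-A}. For $p_E=B$ you compute the landing point $4\mu d$ of $b$ and discard $\mu\in\{\tfrac14,\tfrac12,\tfrac34\}$, which is exactly the device the paper uses in Lemmas~\ref{lem:A3-10-FS-A} and~\ref{lem:A3-13-FS-A}.

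Your route gives one uniform construction with no extra exchange case. The paper's route makes the $p_E=B$ case independent of $\mu$ and rests on the replay lemma.

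One point you leave implicit: for $p_E=B$ the $\mathrm{MCv}$ violation comes from the closest-approach dichotomy in Lemma~\ref{lem:A2-20-FS-A}. The distance from $a$ to the resting $b$ drops to $0$ and returns to $d$. Hence the continuation of Lemma~\ref{lem:mcv_not_fs_am} is invoked only when $p_E\notin\{B,b_0\}$ and both Moves end at their closest approach.
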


\begin{proof}
Let $\mathcal A$ be any deterministic algorithm of $\FS$ under $\ASY$.
Use coordinates $A=0$, $a_0=d$, $b_0=2d$, $B=3d$ on the initial line,
with $P=\{0,d,2d,3d\}$, and normalized frames at every $\LK$ at which
the robots are distinct. Activate both robots simultaneously while
every output is zero; they then carry the same state at every round
boundary and the configuration stays at $\iota$.
 
\emph{No nonzero output.} Continue forever. Condition $(\mathrm{P1})$
fails by Lemma~\ref{lem:p1_no_null}, so $\Phi_{\mathrm{VSS}}$ fails;
the sequence of joint-stop configurations reduces to $C_0$ and
$d_0\not\perp d_0$, so the attainment condition of $\mathrm{VEC}$
fails; and the two robots remain at distinct positions, so the
convergence condition of $\mathrm{MCv}$ fails as well. This execution
witnesses all three liveness violations.
 
Otherwise let $\sigma$ be the common state at the start of the first
round with a nonzero output, let $(\lambda,\theta)$, with
$\lambda\neq0$, be the corresponding normalized displacement, and put
$c_1:=c_1(\sigma)$.
 
\emph{Case $c_1=0$, that is $(\lambda,\theta)=(1,0)$.} Activate both
robots in that round. Each moves to the position of its peer, so they
exchange positions and the distance returns to $d$ after passing
through $0$; the monotonicity condition of $\mathrm{MCv}$ fails. The
cumulative nonzero-move counts are $(1,1)$, so an admissible exchange
configuration would have to be $(B,A)=(3d,0)$, whereas the landing is
$(2d,d)$; no valid $T_1$ exists and $\Phi_{\mathrm{VSS}}$ fails by
Lemma~\ref{lem:p1_first_move}. Finally $C_1=(2d,d)$ gives $d_1=-d_0$,
which is neither the clockwise rotation by $45^\circ$ nor the one by
$90^\circ$, so the transition condition of $\mathrm{VEC}$ fails at
$i=0$. All three are safety violations at a finite time; append an
arbitrary admissible fair, time-divergent tail.
 
\emph{Case $c_1\neq0$.} At the round boundary at which both robots
carry $\sigma$ the configuration is still $\iota$; from there continue
as in Lemma~\ref{lem:A2-22-FS-A}, activating $a$ first, so that $a$
performs the first nonzero $\M$ from $\sigma$ while $b$ is in $\sigma$
with no pending operation. Let $p_E$ be its landing point.
 
If $p_E=B$, then as in that lemma $d_1=-d_0$ violates the transition
condition of $\mathrm{VEC}$, and $b$ lands at $4d\notin P$ by
Lemma~\ref{lem:p1_replay}, so $\Phi_{\mathrm{VSS}}$ fails. Moreover
the $\M$ of $a$ from $d$ to $3d$ passes through $b_0=2d$, so the
distance to the stationary peer drops to $0$ and returns to $d$, and
the monotonicity condition of $\mathrm{MCv}$ fails. All three are
safety violations; append an arbitrary fair tail.
 
If $p_E\neq B$, then as in that lemma $b$ observes $p_\mu$ and its
$\M$ is delayed until $a$ has landed; $\Phi_{\mathrm{VSS}}$ fails by
Lemma~\ref{lem:p1_first_move}, and the transition condition of
$\mathrm{VEC}$ fails for all $\mu$ outside a finite set. The landing
point of $b$ is affine in $\mu$ with nonzero slope, so it equals $p_E$
for at most one $\mu$; excluding it as well, and since $c_1\neq0$
gives $p_E\neq b_0$, the robots are at distinct positions after both
landings. It remains to violate $\mathrm{MCv}$ in this case.
 
For $\mathrm{MCv}$ we use the following dichotomy, applied to every
nonzero $\M$ of the execution whose peer is at rest. Along such a $\M$
the distance to the stationary peer is the distance from a fixed point
to a point travelling along a segment, hence non-increasing and then
non-decreasing. If the landing point is not the point of that segment
closest to the peer, the distance strictly increases on a nonempty
final part of the $\M$, so the monotonicity condition of
$\mathrm{MCv}$ fails at a finite time and we are done. We may therefore
assume that every such $\M$ ends exactly at its closest approach, so
that the distance is non-increasing along the execution so far.
 
Both robots have now performed the same number of activations, and
every $\LK$ so far has observed distinct positions (excluding also the
at most one $\mu$ with $p_\mu=b_0$), so by Lemma~\ref{lem:norm}(1)
they carry the same color.
At this point the robots occupy distinct positions, carry
the same internal color, and have no pending operation.
Apply the continuation statement proved in
Lemma~\ref{lem:mcv_not_fs_am}; its schedules are admissible under
$A_M$, hence under $\ASY$.
It supplies a fair, time-divergent continuation violating
$\Phi_{\mathrm{MCv}}$, while all safety violations established
in the prefix persist.
 
In the case without nonzero output, all three liveness violations are
witnessed by the explicit fair execution given there. In the cases $c_1=0$ and $p_E=B$, and whenever the
dichotomy above yields a monotonicity violation, all three violations
are safety violations at a finite time. In the remaining case, the
companion predicates have already suffered safety violations. The continuation from Lemma~\ref{lem:mcv_not_fs_am} witnesses either
a finite monotonicity violation or a failure of convergence.
Thus all predicates fail on the same fair infinite execution.
Hence Lemma~\ref{lem:concat}(1)(2)(3) is satisfied and
$\{\mathrm{MCv},\mathrm{VSS},\mathrm{VEC}\}$ has the common failure
property on $\FS^{A}$. Lemma~\ref{lem:boundary-common-failure} gives
the claim.
\end{proof}
 
\begin{lemma}
\label{lem:A3-15-OB-AM}
$\mathrm{MVX}\notin\Sol(\OB^{A_M})$.
\end{lemma}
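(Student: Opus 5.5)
Fix an arbitrary deterministic $\OB$ algorithm under $A_M$. Since $\OB$ has a unique visible state, the normalized rule is a single pair $(\lambda,\theta)$ (Lemma~\ref{lem:single-rule}(a)). Use coordinates $A=0$, $a_0=d$, $b_0=2d$, $B=3d$ and $P=\{0,d,2d,3d\}$. No two-task lemma has $\OB^{A_M}$ as a boundary class, so I will build the execution from scratch. I will use only the two patterns recorded at the start of this subsection: singleton activations one complete cycle at a time, and lockstep (fully synchronous) rounds. The construction splits on $(\lambda,\theta)$, following the structure of Lemma~\ref{lem:A3-15-FS-A}.

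\emph{Case $\lambda=0$.} No robot ever moves under any schedule, so I take any fair alternating schedule. Then $(\mathrm{P1})$ fails by Lemma~\ref{lem:p1_no_null}, the joint-stop sequence is $(C_0)$ with $d_0\not\perp d_0$ so the attainment condition of $\mathrm{VEC}$ fails, and the robots stay distinct so the convergence condition of $\mathrm{MCv}$ fails. These are three liveness violations on one explicit fair execution.

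\emph{Case $(\lambda,\theta)=(1,0)$, i.e.\ $c_1=0$.} I take one lockstep round. The robots swap positions and pass through each other, so monotonicity of $\mathrm{MCv}$ fails. The landing $(2d,d)$ with counts $(1,1)$ is not $(B,A)$, so $\mathrm{VSS}$ fails by Lemma~\ref{lem:p1_first_move}. Also $d_1=-d_0$, so the transition condition of $\mathrm{VEC}$ fails. I then append a fair tail.

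\emph{Case $\lambda\ne0$, $c_1\ne0$.} I use alternating singleton activations from $\iota$. The first move of $a$ is solo.
\begin{itemize}
\item[(i)] If $(\lambda,\theta)=(2,0)$, then $a$ crosses $b_0$, so the distance to the resting peer drops to $0$ and returns, and $\mathrm{MCv}$ monotonicity fails. Also $d_1=-d_0$ fails the $\mathrm{VEC}$ transition, and Lemma~\ref{lem:p1_replay} sends $b$ to $4d\notin P$, so $\mathrm{VSS}$ fails.
\item[(ii)] Otherwise the first landing is not $B$, so $\mathrm{VSS}$ fails by Lemma~\ref{lem:p1_first_move}. For $\mathrm{VEC}$, a single solo displacement from $C_0$ must be one of the two clockwise rotations. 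If it is not, the transition condition fails. If it is, repeated identical displacements (the rule is fixed) reach an orthogonal configuration within two moves and then move again, so the stopping condition fails. This mirrors Lemma~\ref{lem:A3-10-OB-S}.
\end{itemize}

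The step I expect to be the main obstacle is violating $\mathrm{MCv}$ in the remaining subcase. Here I cannot use a delayed Compute, since $\OB^{A_M}$ solves $\mathrm{MCv}$ only as far as $\OB^S$ permits, and $\OB^{A_M}\ngeq\OB^S$ is exactly what must be exploited. Following Lemma~\ref{lem:mcv_not_fs_am} with a single color, I apply a dichotomy. If a solo move with the peer at rest does not end at its closest approach, the distance strictly increases on its final part and monotonicity fails at once. Otherwise $0<c_1<1$, and I use Corollary~\ref{cor:stale}, which is admissible under $A_M$. I let $b$ Look and Compute a nonzero displacement $v$ while distinct, and delay its Move. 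During the delay I activate $a$ solo $k$ times, giving $\|w\|=c_1^k\|w_{\mathrm{old}}\|<\|v\|/2$. I then release $b$'s Move; Lemma~\ref{lem:overtake}(1) gives an increase in distance, so monotonicity fails.

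To keep this compatible with the $\mathrm{VEC}$ violation above, I will place the delayed Look of $b$ right after $a$'s first solo move, so the same prefix carries everything. Both the $\mathrm{VSS}$ and $\mathrm{VEC}$ violations occur before or within this prefix and are safety violations, so by Lemma~\ref{lem:concat}(1) any fair tail preserves them. Lemma~\ref{lem:boundary-common-failure} then concludes.
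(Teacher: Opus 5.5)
Your proof is correct. Its skeleton matches the paper's: the cases $\lambda=0$ and $c_1=0$ are identical, and for $0<c_1<1$ both arguments delay the Move of $b$ (Corollary~\ref{cor:stale}) while $a$ is pumped solo until $\|w\|<\|v\|/2$.

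The real difference is the regime $c_1\ge1$. The paper leaves the singleton schedule there and runs infinitely many lockstep rounds. It derives $c_2\ge1$ from $c_1\ge1$ to obtain a convergence failure of $\mathrm{MCv}$, and it reuses the $\OB^F$ arguments for $\mathrm{VEC}$ and $\mathrm{VSS}$ (Lemma~\ref{lem:vec_not_ob_f} and the round-2 landings at $-3d,6d$). You instead keep the solo first move of $a$ and observe, as in Lemma~\ref{lem:mcv_not_fc_a}, that for $c_1\ge1$ this move already breaks monotonicity; for $(2,0)$ it does so by passing through $b_0$.

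What each route buys:
\begin{itemize}
\item Yours treats all of $c_1\neq0$ with one schedule, and every violation there is finite, so an arbitrary fair tail suffices.
\item The paper's needs no fresh $\mathrm{VEC}$ analysis in the regime $c_1\ge1$.
\end{itemize}

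Three small points.
\begin{itemize}
\item You place $b$'s Look immediately after $a$'s first move, so your $\mathrm{VEC}$ argument must cover repeated moves of $a$ alone rather than alternation. It does, because each solo move multiplies $d$ by $\sigma_1$ whichever robot moves. You should say so, and take $k\ge2$ so that a further move of $a$ follows the orthogonal joint stop.
\item The step ``otherwise $0<c_1<1$'' rests on strict convexity of the distance along a nonzero segment. That deserves one line.
\item Your remark that a delayed Compute is unavailable is inaccurate. Lemma~\ref{lem:suspend} permits a delay in $(t_L,t_C)$ under $A_M$, and for $\OB$ this is equivalent to delaying the Move. Nothing in your argument depends on the remark, since you use the delayed Move.
\end{itemize}
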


\begin{proof}
Let $\mathcal A$ be any deterministic algorithm of $\OB$ under $A_M$.
No lemma of Appendix~\ref{app:cf-two-task-disjunctions} has $\OB^{A_M}$
as a boundary class, so we construct the execution directly, using the
two activation patterns recorded above. Use coordinates $A=0$,
$a_0=d$, $b_0=2d$, $B=3d$ on the initial line, with
$P=\{0,d,2d,3d\}$. The visible state of an $\OB$ robot is unique, so in
a normalized frame the rule is a single pair $(\lambda,\theta)$
(Lemma~\ref{lem:single-rule}(a)). Put $c_1:=c_1(\lambda,\theta)$ and
$c_2:=c_2(\lambda,\theta)$ as in Lemma~\ref{lem:ratios}.

\emph{Case $\lambda=0$.} Neither robot ever moves, under any schedule.
Hence $(\mathrm{P1})$ fails by Lemma~\ref{lem:p1_no_null} and
$\Phi_{\mathrm{VSS}}$ fails; the sequence of joint-stop configurations
reduces to $C_0$ and $d_0\not\perp d_0$, so the attainment condition of
$\mathrm{VEC}$ fails; and the robots remain at distinct positions, so
the convergence condition of $\mathrm{MCv}$ fails. All three are
independent of the schedule and persist along any fair infinite
continuation.

\emph{Case $c_1=0$, that is $(\lambda,\theta)=(1,0)$.} Activate both
robots simultaneously in round~$1$, which $A_M$ permits. Each moves to
the position of its peer, so they exchange positions and the distance
returns to $d$ after passing through $0$; the monotonicity condition of
$\mathrm{MCv}$ therefore fails. The cumulative nonzero-move counts are
$(1,1)$, so an admissible exchange configuration would have to be
$(B,A)=(3d,0)$, whereas the landing is $(2d,d)$; no valid $T_1$ exists
and $\Phi_{\mathrm{VSS}}$ fails by Lemma~\ref{lem:p1_first_move}.
Finally $C_1=(2d,d)$ gives $d_1=-d_0$, which is neither the clockwise
rotation by $45^\circ$ nor the one by $90^\circ$, so the transition
condition of $\mathrm{VEC}$ fails at $i=0$. All three are safety
violations at a finite time; append an arbitrary admissible fair,
time-divergent tail, which preserves them.

\emph{Case $0<c_1<1$.} Define the execution by an interval of
alternating singleton activations of $a$ and $b$, carried up to the
round following the first nonzero $\M$ and given normalized frames
(Lemma~\ref{lem:solo}), followed by a delayed-$\M$ interval of $b$
(Corollary~\ref{cor:stale}; the suspension is placed between
$t_C(b,i)$ and $t_B(b,i)$ and is therefore admissible under $A_M$),
followed by an arbitrary fair tail.

In the first interval, if the landing point of the first nonzero $\M$
is not the point required by Lemma~\ref{lem:p1_first_move}, no valid
$T_1$ exists; and if it is that point, the robot activated in the next
round receives an isomorphic local input and lands at $4d\notin P$ by
Lemma~\ref{lem:p1_replay}. Either way $\Phi_{\mathrm{VSS}}$ fails.
For $\Phi_{\mathrm{VEC}}$ the argument of Lemma~\ref{lem:A2-6-OB-R}
applies verbatim: the only single displacements that the transition
condition admits while the peer stands still are the $45^\circ$ and
$90^\circ$ clockwise rotations of the segment, so if $a$ departs from
them that condition fails at that move, and if it does not, a
perpendicular configuration is reached within at most two moves, after
which the same displacement is applied again and the stopping condition
fails. Since $c_1\neq0$, neither of the two Move operations of this
interval lands on the peer, so the robots are at distinct positions at
its end.

In the second interval, if no robot performs a nonzero $\M$ from that
point on, the two robots stay at these distinct positions and the
convergence condition of $\mathrm{MCv}$ fails. Otherwise $b$ computes a
destination with displacement $v\neq0$ approaching $a$, suspends its
$\M$, and in the meantime $a$ alone is activated repeatedly. Since
$\OB$ has no colors, every such activation applies the same
$(\lambda,\theta)$, so $d(j)=c_1^{\,j}d\to0$ by
Lemma~\ref{lem:ratios}(i) and the vector $w$ from $b$ to $a$ at the
release can be made to satisfy $\|w\|<\|v\|/2$. The distance
immediately before the $\M$ of $b$ is $\|w\|$, whereas at its
completion it satisfies
\[
  \|w-v\|\ \ge\ \|v\|-\|w\|\ >\ \|w\| ,
\]
so the monotonicity condition of $\mathrm{MCv}$ fails
\cite[Lemma~4]{CP}.

\emph{Case $c_1\ge1$.} The solo activations of $a$ no longer bring the
distance below $\|v\|/2$, so define the execution to be an infinite
sequence of fully synchronous rounds. For $\lambda>0$ the inequality
$c_1\ge1$ is equivalent to $\lambda\ge2\cos\theta$, which implies
$\lambda\ge\cos\theta$, that is $c_2\ge1$. Hence
Lemma~\ref{lem:ratios}(ii) gives $d(k)=c_2^{\,k}d(0)\ge d(0)>0$ for
every $k$, so the two robots do not converge to a single point and the
convergence condition of $\mathrm{MCv}$ fails. For
$\Phi_{\mathrm{VEC}}$ the argument of Lemma~\ref{lem:vec_not_ob_f}
applies verbatim to this execution, since it uses only that both robots
are activated simultaneously in every round and that the displacement
rule is the single pair $(\lambda,\theta)$. For
$\Phi_{\mathrm{VSS}}$: if $(\lambda,\theta)\neq(2,0)$, the nonzero $\M$
of round~$1$ does not meet the requirement of
Lemma~\ref{lem:p1_first_move}; and if $(\lambda,\theta)=(2,0)$, then
round~$1$ takes $a$ to $3d$ and $b$ to $0$, so in round~$2$ the robot
$a$ lands at $-3d$ and $b$ at $6d$, both outside $P$, which violates
$(\mathrm{P1}\text{-}b)$.

In every case a single execution violates all three predicates, so
$\{\mathrm{MCv},\mathrm{VSS},\mathrm{VEC}\}$ has the common failure
property on $\OB^{A_M}$ and
Lemma~\ref{lem:boundary-common-failure} gives the claim.
\end{proof}
 
\subsubsection{\texorpdfstring{MCv $\vee$ VSS $\vee$ CFSM (MVC)}{MCv or VSS or CFSM}}
\label{ssub:g3-2}
 
\begin{lemma}
\label{lem:A3-16-OB-AM}
$\mathrm{MVC}\notin\Sol(\OB^{A_M})$.
\end{lemma}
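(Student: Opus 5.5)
The plan is to reuse the execution of Lemma~\ref{lem:A3-15-OB-AM} verbatim, with its case split $\lambda=0$ / $c_1=0$ / $0<c_1<1$ / $c_1\ge1$. That execution already violates $\Phi_{\mathrm{MCv}}$ and $\Phi_{\mathrm{VSS}}$ in every case, and it fixes the single rule $(\lambda,\theta)$ supplied by Lemma~\ref{lem:single-rule}(a). The only new work is to check, case by case, that the same execution also violates $\Phi_{\CFSM}$. Each such violation will turn out to be a safety violation, or the schedule-independent absence of moves, so Lemma~\ref{lem:concat}(1)(3) lets it coexist with the liveness witnesses for $\mathrm{MCv}$ already built there.

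The cases go as follows.
\begin{itemize}
\item If $\lambda=0$, no robot ever moves under any schedule, so $\moves{a}=0\neq1$.
\item If $c_1=0$, both robots move simultaneously in round~$1$ to each other's positions. On the line they cross, so the distance passes through $0$; at that instant $a(t)=b(t)$ and collision freedom fails. This is the same crossing that broke monotonicity of $\mathrm{MCv}$.
\item If $0<c_1<1$, the alternating singleton interval is carried up to the round after the first nonzero $\M$. The second interval then activates $a$ repeatedly while $b$ is suspended. Since $\OB$ has a single rule with $\lambda>0$ and the distance stays positive ($c_1\neq0$), every activation of $a$ is a nonzero $\M$. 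Hence $\moves{a}\ge2$ as soon as $a$ is activated twice in total, which the construction already requires in order to shrink $\|w\|$ below $\|v\|/2$.

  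If the branch ``no robot performs a nonzero $\M$ from that point on'' occurs instead, I will simply require the delayed-$\M$ interval to include at least one further solo activation of $a$. This is admissible under $A_M$ by Lemma~\ref{lem:solo}, and it is nonzero because the rule is fixed. So this branch in fact cannot arise for $\OB$, and the reduction to a move-count violation still holds.
\item If $c_1\ge1$, the execution consists of fully synchronous rounds with $c_2\ge1>0$. Both robots therefore perform a nonzero $\M$ in every round, so $\moves{a}=\infty$.
\end{itemize}

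The main point to verify is the $0<c_1<1$ case. I must check that the extra solo activations of $a$, inserted to force $\moves{a}\ge2$, do not interfere with the Look/Compute of $b$ whose $\M$ is delayed, and that they keep the positions distinct, so that the later release still yields $\|w-v\|>\|w\|$. Both properties follow from Corollary~\ref{cor:stale} and from $d(j)=c_1^{\,j}d>0$.

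Having a single execution violate all three predicates in every case, Lemma~\ref{lem:boundary-common-failure} gives the claim.
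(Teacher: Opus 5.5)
Your proposal is correct and follows the paper's proof. Like the paper, you reuse the execution of Lemma~\ref{lem:A3-15-OB-AM} and check the $\CFSM$ violation case by case: $\moves{a}=0$ when $\lambda=0$, the crossing collision when $c_1=0$, $\moves{a}\ge2$ via the repeated solo activations of $a$ in the delayed-move interval when $0<c_1<1$, and $\moves{a}=\infty$ in the fully synchronous rounds when $c_1\ge1$. Your extra observation is a harmless refinement that the paper leaves implicit: for a single-rule $\OB$ algorithm with $\lambda>0$ and distinct positions, the ``no further nonzero move'' branch cannot occur.
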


\begin{proof}
Let $\mathcal A$ be any deterministic algorithm of $\OB$ under $A_M$
and use the execution of Lemma~\ref{lem:A3-15-OB-AM}, where the
violations of $\Phi_{\mathrm{MCv}}$ and $\Phi_{\mathrm{VSS}}$ were
established. We show that $\Phi_{\CFSM}$ fails under the same case
distinction.

\emph{Case $\lambda=0$.} Neither robot ever moves, so
$\moves{a}=0\neq1$ and the single-move condition fails.

\emph{Case $c_1=0$.} The two robots exchange positions in round~$1$ and
therefore occupy the same point while they travel, so the
collision-freedom condition fails.

\emph{Case $0<c_1<1$.} The robot $a$ performs its first nonzero $\M$ in
the interval of alternating singleton activations, and in the
delayed-$\M$ interval of $b$ the distance stays positive, so each of
the repeated solo activations of $a$ is again a nonzero $\M$; hence
$\moves{a}\ge2\neq1$ and the single-move condition fails.

\emph{Case $c_1\ge1$.} The execution consists of fully synchronous
rounds, and $c_2\ge1$ gives $d(k)\ge d(0)>0$ for every $k$, as shown in
Lemma~\ref{lem:A3-15-OB-AM}. Since $\lambda\neq0$, the displacement of
each robot in round~$k$ has norm $\lambda\,d(k)>0$, so both robots
perform a nonzero $\M$ in every round, and these are counted as
distinct $\M$ operations because $t_E(r,i)<t_L(r,i+1)$. Hence
$\moves{a}=\infty\neq1$ and the single-move condition fails.

In every case a single execution violates all three predicates, so
$\{\mathrm{MCv},\mathrm{VSS},\CFSM\}$ has the common failure property
on $\OB^{A_M}$ and Lemma~\ref{lem:boundary-common-failure} gives the
claim.
\end{proof}

%% file: arxiv.bbl
\newcommand{\etalchar}[1]{$^{#1}$}
\begin{thebibliography}{DDFN18}

\bibitem[BFK{\etalchar{+}}21]{BuFKPSW21}
K.~Buchin, P.~Flocchini, I.~Kostitsyna, T.~Peters, N.~Santoro, and K.~Wada.
\newblock Autonomous mobile robots: Refining the computational landscape.
\newblock In {\em 23rd IPDPS Workshop on Advances in Parallel and Distributed Computational Models (APDCM)}, pages 576--585, 2021.

\bibitem[BFK{\etalchar{+}}25]{BuFKPSW22}
K.~Buchin, P.~Flocchini, I.~Kostitsyna, T.~Peters, N.~Santoro, and K.~Wada.
\newblock On the computational power of energy-constrained mobile robots.
\newblock {\em Information and Computation}, 303:105280, 2025.

\bibitem[CP05]{CP}
R.~Cohen and D.~Peleg.
\newblock Convergence properties of the gravitational algorithms in asynchronous robot systems.
\newblock {\em {SIAM} Journal on Computing}, 34(6):1516--1528, 2005.

\bibitem[DDFN18]{GraphLandscape}
M.~D'Emidio, G.~{Di Stefano}, D.~Frigioni, and A.~Navarra.
\newblock Characterizing the computational power of mobile robots on graphs and implications for the {Euclidean} plane.
\newblock {\em Information and Computation}, 263:57--74, 2018.

\bibitem[DFP{\etalchar{+}}16]{DFPSY}
S.~Das, P.~Flocchini, G.~Prencipe, N.~Santoro, and M.~Yamashita.
\newblock Autonomous mobile robots with lights.
\newblock {\em Theoretical Computer Science}, 609:171--184, 2016.

\bibitem[DGS{\etalchar{+}}24]{VisibilityLandscape}
A.~Das, S.~Ghosh, A.~Sharma, P.~Goswami, and B.~Sau.
\newblock The computational landscape of autonomous mobile robots: The visibility perspective.
\newblock In {\em Distributed Computing and Intelligent Technology (ICDCIT)}, volume 14501 of {\em Lecture Notes in Computer Science}, pages 85--100, 2024.

\bibitem[FFS26a]{SequentialMFCS}
C.~Feletti, P.~Flocchini, and N.~Santoro.
\newblock Computational power of energy-constrained autonomous robots under sequential schedulers.
\newblock In {\em 51st International Symposium on Mathematical Foundations of Computer Science (MFCS)}, volume 386 of {\em Leibniz International Proceedings in Informatics (LIPIcs)}, pages 22:1--22:17, 2026.

\bibitem[FFS26b]{SequentialTCS}
C.~Feletti, P.~Flocchini, and N.~Santoro.
\newblock On the computational power of mobile robots under sequential schedulers.
\newblock {\em Theoretical Computer Science}, 1083:116172, 2026.

\bibitem[FSSW24]{FSSW23}
P.~Flocchini, N.~Santoro, Y.~Sudo, and K.~Wada.
\newblock On asynchrony, memory, and communication: Separations and landscapes.
\newblock In {\em 27th International Conference on Principles of Distributed Systems (OPODIS)}, volume 286 of {\em Leibniz International Proceedings in Informatics (LIPIcs)}, pages 28:1--28:23, 2024.

\bibitem[FSW20]{FSW19}
P.~Flocchini, N.~Santoro, and K.~Wada.
\newblock On memory, communication, and synchronous schedulers when moving and computing.
\newblock In {\em 23rd International Conference on Principles of Distributed Systems (OPODIS)}, volume 153 of {\em Leibniz International Proceedings in Informatics (LIPIcs)}, pages 25:1--25:17, 2020.

\bibitem[KSW25]{KSW-2-robot}
N.~Kitamura, Y.~Sudo, and K.~Wada.
\newblock Two-robot computational landscape: A complete characterization of model power in minimal mobile robot systems.
\newblock {\em CoRR}, abs/2512.22770, 2025.

\bibitem[MSW26]{MSW-rendezvous}
T.~Masuzawa, Y.~Sudo, and K.~Wada.
\newblock Rendezvous under variable disorientation: The algorithmic power of fixed unit distance.
\newblock {\em CoRR}, abs/2610.09713, 2026.

\bibitem[NNW26]{NNW-SSS2025}
S.~Naito, T.~Ninomiya, and K.~Wada.
\newblock Beyond pairwise comparisons: Unveiling structural landscape of mobile robot models.
\newblock In {\em Stabilization, Safety, and Security of Distributed Systems (SSS 2025)}, Lecture Notes in Computer Science, pages 408--424, 2026.

\bibitem[SY99]{SY}
I.~Suzuki and M.~Yamashita.
\newblock Distributed anonymous mobile robots: Formation of geometric patterns.
\newblock {\em {SIAM} Journal on Computing}, 28(4):1347--1363, 1999.

\end{thebibliography}
